\title{Drawn Tree Decomposition: New Approach for Graph Drawing Problems} %TODO Please add
\author{Siddharth Gupta}{University of Warwick, UK}{siddharth.gupta.1@warwick.ac.uk}{https://orcid.org/0000-0003-4671-9822}{Supported by Engineering and Physical Sciences Research Council (EPSRC) grant EP/V007793/1.}
\author{Guy Sa'ar}{Ben Gurion University of the Negev, Israel}{saag@post.bgu.ac.il}{}{Supported in part by the Israeli Smart Transportation Research Center and by the Lynne and William Frankel Center for Computing Science at Ben-Gurion University.}
\author{Meirav Zehavi}{Ben Gurion University of the Negev, Israel}{meiravze@bgu.ac.il}{https://orcid.org/0000-0002-3636-5322}{Supported by the European Research Council (ERC) grant titled PARAPATH.}
\authorrunning{S. Gupta, G. Sa'ar, and M. Zehavi} %TODO mandatory. First: Use abbreviated first/middle names. Second (only in severe cases): Use first author plus 'et al.'
\keywords{Graph Drawing, Parameterized Complexity, Tree decomposition} %TODO mandatory; please add comma-separated list of keywords
\begin{document}
	
\maketitle
	
%TODO mandatory: add short abstract of the document
\begin{abstract}
%!TEX root =Main-Movement.tex

Over the past decade, we witness an increasing amount of interest in the design of exact exponential-time and parameterized algorithms for problems in Graph Drawing. Unfortunately, we still lack knowledge of general methods to develop such algorithms. An even more serious issue is that, here, ``standard'' parameters very often yield intractability. In particular, for the most common structural parameter, namely, treewidth, we frequently observe NP-hardness already when the input graphs are restricted to have constant (often, being just $1$ or $2$) treewidth. 

Our work deals with both drawbacks simultaneously. We introduce a novel form of tree decomposition that, roughly speaking, does not decompose (only) a graph, but an entire drawing. As such, its bags and separators are of geometric (rather than only combinatorial) nature. While the corresponding parameter---like treewidth---can be arbitrarily smaller than the height (and width) of the drawing, we show that---unlike treewidth---it gives rise to efficient algorithms. Specifically, we get slice-wise polynomial (XP) time algorithms parameterized by our parameter. We present a general scheme for the design of such algorithms, and apply it to several central problems in Graph Drawing, including the recognition of grid graphs, minimization of crossings and bends, and compaction. 
Other than for the class of problems we discussed in the paper, we believe that our decomposition and scheme are of independent interest and can be further extended or generalized to suit even a wider class of problems.
Additionally, we discuss classes of drawings where our parameter is bounded by $\OO(\sqrt{n})$ (where $n$ is the number of vertices of the graph), yielding subexponential-time algorithms. Lastly, we prove which relations exist between drawn treewidth and other width measures, including treewidth, pathwidth, (dual) carving-width and embedded-width.
\end{abstract}

\newpage

%!TEX root =Main-Movement.tex

\section{Introduction}\label{sec:intro}

Over the past decade, we witness an increasing amount of interest in the design of exact exponential-time and parameterized algorithms for problems in Graph Drawing. For a few illustrative examples, let us mention that this includes studies of crossing minimization~\cite{grohe2004computing,DBLP:conf/gd/HlinenyS19,kawarabayashi2007computing}, recognition of planar graph families such as upward planarity testing~\cite{chan2004parameterized,healy2006two} and grid graph recognition~\cite{DBLP:conf/isaac/0002SZ21}, as well as recognition of beyond planar graph families~\cite{DBLP:journals/jgaa/BannisterCE18}, turn-minimization~\cite{DBLP:conf/wg/FellowsGKPRWY10}, linear layouts such as books embeddings~\cite{DBLP:journals/jgaa/BannisterE18,bhore2019parameterized}, clustered planarity and hybrid
planarity~\cite{DBLP:conf/wg/LozzoEG018,LiRuTa-PCGPRCO-22,DBLP:journals/algorithmica/LozzoEGG21}, and bend minimization \cite{didimo1998computing,di2019sketched}. For more information on recent progress on these and other topics, we refer to the report \cite{DBLP:journals/dagstuhl-reports/GanianMNZ21} and surveys such as~\cite{suvey/zehavi}. Unfortunately, still, we have very limited knowledge of general methods to develop exact exponential-time and parameterized algorithms for problems in Graph Drawing.

An even more serious issue is that, for Graph Drawing problems, ``standard'' parameters very often yield intractability. In particular, for the most common structural parameter, namely, treewidth,\footnote{Definitions of standard terms and notations used in the Introduction can be found in Section \ref{sec:prelims}.} we frequently observe NP-hardness already when the input graphs are restricted to have constant (often, being just $1$ or $2$) treewidth. The same result holds even for the larger parameter pathwidth. For example, {\sc Grid Recognition} is NP-hard on graphs of treewidth $1$ (being trees)~\cite{DBLP:journals/ipl/BhattC87} or pathwidth $2$~\cite{DBLP:conf/isaac/0002SZ21}, {\sc Orthogonal Compaction} is NP-hard even on cycles~\cite{DBLP:journals/comgeo/EvansFKSSW22} and hence on graphs of pathwidth (and treewidth) $2$, {\sc Min-Area Planar Straight-line Drawing} is NP-complete on outerplanar graphs and hence on graphs of treewidth $2$~\cite{DBLP:conf/icalp/Biedl14, DBLP:conf/gd/KrugW07}, and {\sc Grid Upward Drawing} is NP-complete on graphs of treewidth $1$ (being trees)~\cite{DBLP:conf/gd/AkitayaLP18,DBLP:conf/gd/BiedlM17a}. In light of this, we must seek parameterizations that are larger (or incomparable) to treewidth. Due to the nature of the problems at hand, it is natural to seek parameters of geometric flavors. Here, perhaps, the first choice that comes to mind is the {\em height} (or, rather, the minimum among the height and width) of the sought (or given) drawing. In particular, we can easily observe that this parameter for planar orthogonal grid drawings is bounded by $\Omega(\mathsf{tw})$, where $\mathsf{tw}$ is the treewidth of the drawn graph, and that it gives rise to the use of dynamic programming. However, denoting the number of vertices by $n$, we can also easily observe that this parameter can be as large as $\Omega(n)$ for ridiculously simple planar orthogonal grid drawings (and graphs)! For example, consider the path drawn in Figure~\ref{fig:simplePathA}---here, already, both height and width are equal to (roughly) $n/2$.

Our work deals with both drawbacks mentioned above simultaneously. We introduce a novel form of tree decomposition that, roughly speaking, does not decompose (only) a graph, but an entire drawing. As such, its bags and separators are of geometric (rather than only combinatorial) nature. We further discuss this concept (still informally but in more detail) in Section \ref{sec:introConcept} ahead. While the corresponding parameter---like treewidth---can be arbitrarily smaller than the height (and width) of the drawing (e.g., for the aforementioned example in Figure~\ref{fig:simplePathA}, our parameter is a fixed constant), we show that---unlike treewidth---it gives rise to efficient (that is, XP) algorithms. Specifically, we present a general scheme for the design of such algorithms (described in Section \ref{sec:introScheme}), and apply it to several central problems in Graph Drawing, including the recognition of grid graphs, minimization of crossings and bends, and compaction (see Section \ref{sec:introApplications}). 
We believe that our new concept of geometric tree decomposition is interesting on its own, and exploring the connections between it and notions concerning (classical) tree decompositions is a promising research direction (see Section~\ref{sec:conclusion}). Furthermore, we believe that this concept and our scheme can be further extended or generalized to be applicable to problems other than those discussed in this paper.
%and the outline of the rest of the paper is given in Section \ref{sec:roadmap}.

\subsection{The Concept of Drawn Tree Decomposition}\label{sec:introConcept}

Here, we discuss (informally) our main conceptual contribution: the introduction and study of the concepts of {\em drawn tree decomposition} and {\em drawn treewidth}, which we believe to be of independent interest. A formal definition of these concepts can be found in Section~\ref{sec:drawnSep}. Then, in Section~\ref{sec:introComparison}, we compare our parameter with several seemingly related graph parameters. Later, in Sections~\ref{sec:introScheme} and~\ref{sec:introApplications}, we discuss our main technical contribution (which has been our initial motivation for these concepts): our general algorithmic scheme and its applications to problems in Graph Drawing. Our focus is on a class of rather general drawings of graphs on the Euclidean plane (allowing drawings of edges to have both crossings and bends, as well as to consist of segments that are not necessarily parallel to the axes), called {\em polyline grid drawings}. Roughly speaking, a polyline grid drawing $d$ of a graph $G$ is a mapping of the vertices of $G$ to distinct grid points (being points of the form $(i,j)$ where $i,j\in\mathbb{Z}$) and edges to {\em straight-line paths} between their endpoints.
%, where the drawings of different edges may intersect only in a finite number of points. 
That is, the drawing of an edge is a simple curve that is the concatenation of straight-line segments (e.g, see Figure~\ref{fi:DrawEx5} in Section~\ref{sec:prelims2}). Towards the (informal) definition of a drawn tree decomposition ahead, we first introduce three critical terms: {\em frame}, {\em cutter}, and {\em rectangular}.

\begin{figure}[t]
	\centering
	\begin{subfigure}{0.3\textwidth}
		\includegraphics[page=69, width=\textwidth]{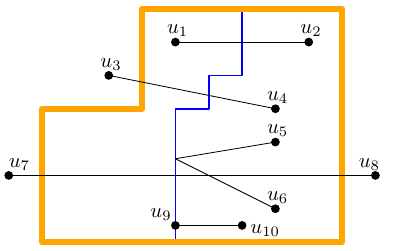}
		\subcaption{}
		\label{fig:simplePathA}
	\end{subfigure}
	\hfil
		\begin{subfigure}{0.3\textwidth}
		\includegraphics[page=70, width=\textwidth]{figures/drawnTreewidth}
		\subcaption{}
		\label{fig:simplePathB}
	\end{subfigure}
	\caption{(a) A drawing of a path on $n$ vertices with height and width $(n-1)/2$. However, the drawn treewidth is $16$. (b) An illustration of a frame shown in orange with width $16$.}
	\label{fig:simplePath}
\end{figure}

\medskip\noindent{\bf Frame.} A {\em frame} is, simply, a straight-line cycle (defined analogously to a straight-line path above) whose segments are axis-parallel (see the orange polygon in Figure~\ref{fig:TP}). In other words, it is a simple rectilinear polygon whose vertices lie on grid points.

For the definition of the width of our decomposition (presented later), we define the {\em width of a frame}. Roughly speaking, the width of a frame $f$, denoted by $\mathsf{width}(f)$, is the sum of measures of the complexities of {\em (i)} the frame itself, and {\em (ii)} the ``way'' in which the drawing ``traverses'' the frame.  For {\em (i)}, we simply count the number of vertices of the frame (ignoring ``superfluous'' vertices, being those where the angle between incident edges is of 180 degrees). For {\em (ii)}, we regard the drawings of vertices and edges separately (and sum up the two corresponding numbers). Specifically, for vertices, we simple count the number of vertices drawn on the frame. However, for edges, the measure is somewhat more complex, based on the notion of {\em turning points} (defined immediately); for each edge, we count the number of its turning points on the frame, and, then, the measure is the sum (over all edges) of these counters. We remark that some points on the plane might be counted multiple times---at the extreme case, the same point might be {\em (a)} a vertex of the frame, {\em (b)} a point on which a vertex of the graph is drawn, and {\em (c)} a turning point for one (or more) edges. We find this multi-count to be justified: the more complicated the frame and the drawing are at a certain point, the more that point ``contributes'' to the complexity of~the~measure.

Now, let us define the notion of a turning point. For this purpose, consider some edge $e=\{u,v\}$ of the graph and some point $p$ on the frame. Then, roughly speaking, we refer to $(p,e)$ as a turning point if, when we traverse the drawing of $e$ from $u$ to $v$ or from $v$ to $u$, we encounter $p$, and ``immediately'' before this encounter, we were in the strict interior or exterior of the frame. Additionally, we refer to $(p,e)$ as a turning point if $u$ or $v$ themselves are drawn on $p$.  An illustrative example is given in Figure~\ref{fig:TP}.

\begin{figure}[t]
	\centering
	\includegraphics[page=8, width=0.35\textwidth]{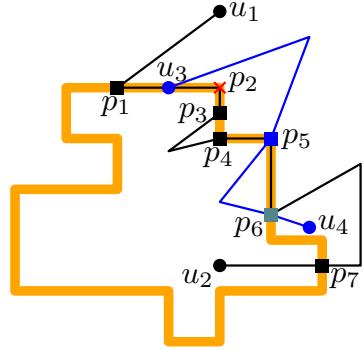}
	\caption{The turning points of the black edge $e=\{u_1,u_2\}$ in the orange frame $f$ are $(p_1,e),(p_3,e),(p_4,e)$, $(p_6,e)$ and $(p_7,e)$. Note that, $(p_2,e)$ and $(p_5,e)$ are not turning points in $f$. Similarly, the turning points of the blue edge $e'=\{u_3,u_4\}$ in the frame $f$ are $(d(u_3),e'),(p_5,e')$ and $(p_6,e')$. Note that $(p_5,e')$ is a turning point in $f$, but $(p_5,e)$ is not a turning point in $f$.}
	\label{fig:TP}
\end{figure} 

For an example of the definition of the width of a frame, we refer to Figure~\ref{fig:simplePathB}. Here, the frame itself (being a rectangle) consists of exactly $4$ vertices. Second, the path contains exactly $4$ vertices that are drawn on the frame. Third, every  point on which one of these vertices, say, $v$, is drawn is a turning point of $2$ edges, being the two edges incident to $v$. So, the width of the frame is $4+4+4\cdot 2=16$.

\medskip\noindent{\bf Cutter.} A {\em cutter of a frame} is, simply, a straight-line path whose segments are axis-parallel and which intersects the frame in exactly two points, which are the endpoints of the cutter. Later, we discuss the ``futility'' of two simpler definitions for a cutter. The utility of a cutter of a frame $f$ is, as its name suggests, in ``cutting'' $f$ into (exactly) two frames $f_1$ and $f_2$. Roughly speaking, we obtain one of $f_1$ and $f_2$ by the concatenation of the cutter with one path among the two subpaths of $f$ between the endpoints of the cutter, and we obtain the other of $f_1$ and $f_2$ by the concatenation of the cutter with the other path among the two subpaths of $f$ between the endpoints of the cutter. For more intuition, we refer the reader to Figure~\ref{fig:cutterIntro}.

\begin{figure}[t]
	\centering
	\includegraphics[page=71, width=0.35\textwidth]{figures/drawnTreewidth}
	\caption{An illustration for a cutter $c$, shown in blue, of a frame $f$, shown in orange, and its associated frames $f_1(c)$ and $f_2(c)$.}
	\label{fig:cutterIntro}
\end{figure} 

\medskip\noindent{\bf Rectangular.} For the sake of intuition, the construction of a drawn tree decomposition may be thought of as a recursive process where, for a given frame, we compute a cutter that cuts it into two, and then proceed (recursively) with each of these two resulting frames. Then, two questions arise: What is the initial frame, and when does this process terminate? For the first question, the answer is simply the {\em rectangular} of the drawing (defined immediately). For the second question, the answer is even simpler---we stop when the current frame does not contain any grid point in its strict interior.  Roughly speaking, the rectangular of a drawing is the (unique) frame whose interior is minimized among all frames whose ``shape'' is a rectangle and which contain the given drawing in their strict interior (see Figure~\ref{fig:Rd}).

\begin{figure}[t]
	\centering
	\includegraphics[page=7, width=0.31\textwidth]{figures/drawnTreewidth}
	\caption{The frame shown in purple is $R_d$ where $d$ is the drawing inside the frame.}
	\label{fig:Rd}
\end{figure} 

\medskip\noindent{\bf Drawn Tree Decomposition.} At the heart of the concept of a drawn tree decomposition, lies our definition of a {\em frame-tree} (abbreviation for {\em tree of frames}). Informally, for a graph $G$ and a polyline grid drawing $d$ of $G$, a frame-tree is a pair $({\cal T},\alpha)$ where $\cal T$ is a binary rooted tree and $\alpha$ maps each vertex of $\cal T$ to a frame, such that: {\em (i)} the root is mapped to the rectangular of $d$; {\em (ii)} for every internal vertex $v$ of $\cal T$, there exists a (unique) cutter $c_v$ of $\alpha(v)$ so that the frames mapped to the children of $v$ are those obtained by cutting $\alpha(v)$ by $c_v$; {\em (iii)} the leaves of $\cal T$ (and none of the internal vertices of $\cal T$) are mapped to frames whose strict interior does not contain any grid point. For an illustrative example, see Figure~\ref{fi:proofDTD}. 

\begin{figure}[!t]
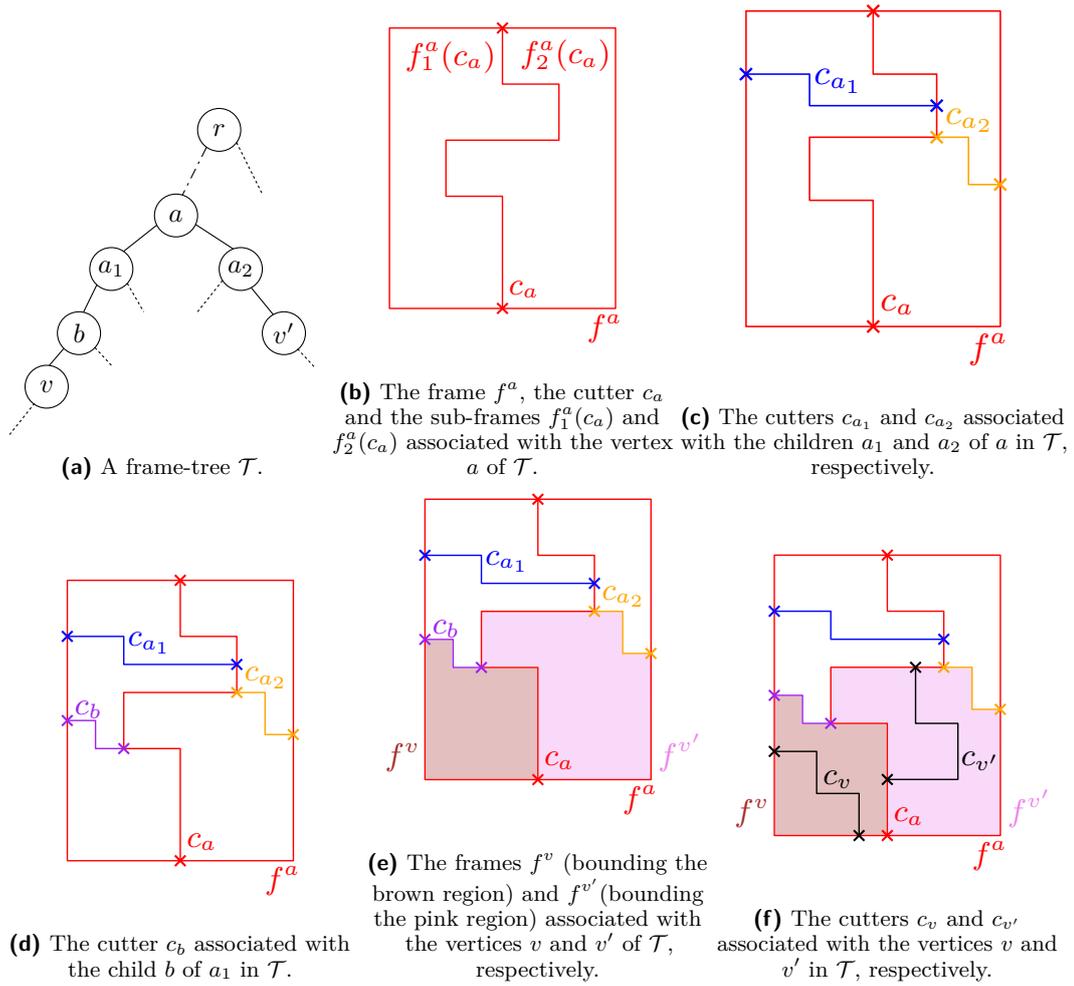

	\centering
	\begin{subfigure}{0.3\textwidth}
		\includegraphics[width = \textwidth, page = 18]{figures/drawnTreewidth}
		\subcaption{A frame-tree $\cal T$.}
		\label{fi:DTD}
	\end{subfigure}
	\hfil
	\begin{subfigure}{0.32\textwidth}
		\includegraphics[width = \textwidth, page = 10]{figures/drawnTreewidth}
		\subcaption{The frame $f^a$, the cutter $c_a$ and the sub-frames $f^a_1(c_a)$ and $f^a_2(c_a)$ associated with the vertex $a$ of $\cal T$.}
		\label{d1}
	\end{subfigure}
	\begin{subfigure}{0.36\textwidth}
		\includegraphics[width = \textwidth, page = 11]{figures/drawnTreewidth}
		\subcaption{The cutters $c_{a_1}$ and $c_{a_2}$ associated with the children $a_1$ and $a_2$ of $a$ in $\cal T$, respectively.}
		\label{d2}
	\end{subfigure}
	\begin{subfigure}{0.32\textwidth}
		\includegraphics[width = \textwidth, page = 12]{figures/drawnTreewidth}
		\subcaption{The cutter $c_{b}$ associated with the child $b$ of $a_1$ in $\cal T$.}
		\label{d3}
	\end{subfigure}
	\hfil
	\begin{subfigure}{0.32\textwidth}
		\includegraphics[width = \textwidth, page = 13]{figures/drawnTreewidth}
		\subcaption{The frames $f^v$ (bounding the brown region) and $f^{v'}$(bounding the pink region) associated with the vertices $v$ and $v'$ of $\cal T$, respectively.}
		\label{d4}
	\end{subfigure}
	\begin{subfigure}{0.32\textwidth}
		\includegraphics[width = \textwidth, page = 14]{figures/drawnTreewidth}
		\subcaption{The cutters $c_v$ and $c_{v'}$ associated with the vertices $v$ and $v'$ in $\cal T$, respectively.}
		\label{d5}
	\end{subfigure}
	
	\caption{Example of frames and cutters of a frame-tree. For clarity, the polyline grid drawing is not shown.} 
	\label{fi:proofDTD}
\end{figure}

Now, for the definition of a drawn tree decomposition, we consider a frame-tree $({\cal T},\alpha)$. Then, we ``enrich'' the frame-tree by the introduction of an additional mapping, $\beta$, from the vertex set of $\cal T$ to subsets of vertices of $G$. In particular, we define $\beta$ so that we can: {\em (P1)} prove that $({\cal T},\beta)$ is a tree decomposition (this proof is slightly technical, based on case analysis); {\em (P2)} prove that, for every vertex $v$ of $\cal T$, $|\beta(v)|$ is at most twice the sum of the widths of the frames of $v$ and its two children (if they exist). For the definition of $\beta$, we (next) define the {\em  set of vertices associated with a frame}, and the {\em set of vertices associated with a cutter} of a frame. Then, for a vertex $v$ of $\cal T$, $\beta(v)$ is simply the union of the set of vertices associated with $\alpha(v)$, and the set of vertices associated with the cutter $c_v$ of $\alpha(v)$. Correspondingly, the triple $({\cal T},\alpha,\beta)$ is a drawn tree decomposition.

So, consider a graph $G$, a polyline grid drawing $d$ of $G$, a frame $f$ and a cutter $c$ of $f$. Then, the {\em set of  vertices associated with $f$} is the union of the set of vertices of $G$ that $d$ draws on $f$  and the set of endpoints of edges of $G$ whose drawing (by $d$) is {\em separated} by $f$---that is, edges having one endpoint in the strict interior of $f$ and the other endpoint in the strict exterior of $f$ (see Figure~\ref{fig:assFr}). Similarly, the {\em set of  vertices associated with $c$} is the union of the set of vertices of $G$ that $d$ draws on $c$  and the set of endpoints of edges of $G$ whose drawing (by $d$) is {\em separated} by $c$---that is, edges having one endpoint in the strict interior of one of the frames obtained by cutting $f$ by $c$, and the other endpoint  in the strict interior of the other frame obtained by cutting $f$ by $c$ (see Figure~\ref{fig:assCu}).

\begin{figure}[t]
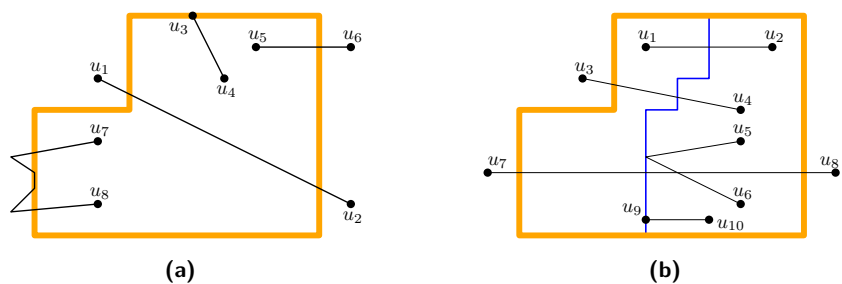

	\centering
	\begin{subfigure}{0.35\textwidth}
		\includegraphics[page=2, width=\textwidth]{figures/drawnTreewidth}
		\subcaption{}
		\label{fig:assFr}
	\end{subfigure}
	\hfil
	\begin{subfigure}{0.35\textwidth}
		\includegraphics[page=1, width=\textwidth]{figures/drawnTreewidth}
		\subcaption{}
		\label{fig:assCu}
	\end{subfigure}
	\caption{Example of vertices associated with a frame and a cutter. (a) The edge $\{u_5,u_6\}$ is the only edge separated by the orange frame. The vertices associated with the orange frame are $u_3,u_5$ and $u_6$. (b) The vertices associated with the blue cutter of the orange frame are $u_1,u_2$ and $u_9$.}
	\label{fig:assVerIntro}
\end{figure}

\medskip\noindent{\bf Drawn Treewidth.} The {\em  width} of a drawn tree decomposition $({\cal T}=(V_T,E_T),\alpha,\beta)$ is the maximum width of its frames, that is, $\max_{v\in V_T}\mathsf{width}(\alpha(v))$. Accordingly, the drawn treewidth of a polyline grid drawing $d$ of a graph $G$ is the minimum width of a drawn tree decomposition of $d$. Notably, due to {\em (P1)} and {\em (P2)} mentioned above (proved in Section~\ref{sec:relToTreewidth}), we can easily conclude that the treewidth of $G$ is at most $6$ times its drawn treewidth.

We remark that the usage of frames bears similarity to that of {\em cycle separators of planar graphs} (being a central player in proofs of the planar separator theorem; see, e.g.,~\cite{DBLP:journals/siamdm/AlonST94,DBLP:journals/jcss/Miller86}). However, the corresponding widths (drawn treewidth versus treewidth) can be critically different: While treewidth is bounded from above by the order of drawn treewidth, we have already pointed out that for various problems where treewidth yields intractability, drawn treewidth does not---this, of course, implies that treewidth can, often, be arbitrarily smaller than drawn treewidth; for a concrete example, see Figure~\ref{fig:pathBadDTWIntro}. Further, treewidth depends only on the graph, while drawn treewidth depends (as desired) on the drawing; for example, notice that Figures~\ref{fig:simplePathA} and~\ref{fig:pathBadDTWIntro1} depict the same graph, but the corresponding drawings have radically different drawn treewidths.

Besides its above-mentioned relation to treewidth, drawn treewidth for planar orthogonal grid drawings can also be related to height (and width). On the one hand, we prove (in Section~\ref{sec:uppBound}) the desirable property that---like treewidth---drawn treewidth is bounded from above by the order of the minimum among the height and width of the drawing. Notably, various central graph width measures do {\em not} have this property. For example, one of the most commonly used relaxations of pathwidth is  {\em treedepth} (see, e.g.,~\cite{DBLP:books/sp/CyganFKLMPPS15} for information on  treedepth); however, the treedepth of an $n$-vertex path is  $\lceil\log_2(n+1)\rceil$, while it can be easily drawn so that the height (or, symmetrically, width) of the drawing is $1$. On the other hand, we have already observed that the drawn treewidth can be arbitrarily smaller than the minimum among the height and width of a drawing (see Figure~\ref{fig:simplePathA}).

\begin{figure}[t]
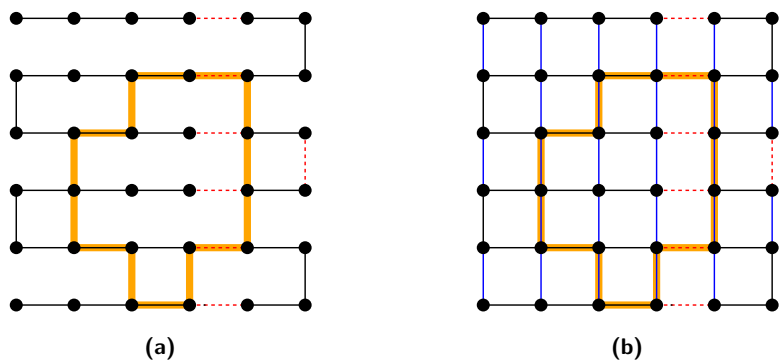

	\centering
	\begin{subfigure}{0.3\textwidth}
		\includegraphics[page=72, width=\textwidth]{figures/drawnTreewidth}
		\subcaption{}
		\label{fig:pathBadDTWIntro1}
	\end{subfigure}
	\hfil
	\begin{subfigure}{0.3\textwidth}
		\includegraphics[page=73, width=\textwidth]{figures/drawnTreewidth}
		\subcaption{}
		\label{fig:pathBadDTWIntro2}
	\end{subfigure}
	\caption{(a) A path $P$ on $n$ vertices and a frame $f$ shown in orange. (b) A grid  graph $G$ on the same set of vertices and the frame $f$ shown in orange. Consider a frame, say $f$, in $P$ with width $w$. Observe that $f$ is also a frame in $G$. Moreover, the width of $f$ in $G$ is at most $3w$ as every vertex has exactly $2$ more edges in $G$ compared to $P$ so the vertex may be counted $2$ more times in the width of $f$ in $G$ as the turning points of those $2$ extra edges. As treewidth is a lower bound for drawn treewidth and the treewidth of a grid graph is $\sqrt{n}$, the drawn treewidth of $P$ is $\Omega(\sqrt{n})$ (while its treewidth is $1$).}
	\label{fig:pathBadDTWIntro}
\end{figure}

\medskip\noindent{\bf Bounds for Specific Types of Drawings.}  For some classes of drawings (being subclasses of polyline grid drawings), we are able to prove that drawn treewidth is bounded by a sublinear function of $n$ (the number of vertices of the graph). For example, for grid drawings---which are mappings of vertices to distinct grid points and of edges to unit-length straight lines between their endpoints (see Figure~\ref{fi:DrawEx2} in Section~\ref{sec:prelims2})---we prove that the drawn treewidth (and even the {\em straight-line drawn treewidth}, defined ahead) is bounded by $\OO(\sqrt{n})$. More generally, we prove (in Section~\ref{sec:uppBound}) that given a graph $G$ and an orthogonal grid drawing $d$ of $G$, drawn treewidth of $d$ is $\OO (\Delta \cdot \sqrt{\Delta\cdot \ell\cdot n}\cdot \mathsf{maxInt})$, where (i) $\Delta$ is the maximum degree in $G$, (ii) $\ell$ is the average length of the edges of $G$ in $d$, and (iii) $\mathsf{maxInt}$ is the maximum number of edges and vertices intersected in a grid point in $d$.

At this point, a short discussion is in order. 
One of the most well-known results in graph theory about planar graphs is that every $n$-vertex planar graph has pathwidth (and hence treewidth) bounded by $\OO(\sqrt{n})$~\cite{DBLP:journals/tcs/Bodlaender98}. In particular, this result and generalizations thereof have found impactful applications in algorithm design, particularly of parameterized and approximation algorithms. In fact, (almost) all subexponential-time algorithms for problems on planar graphs rely on it. Here, a central component in several proofs is the planar separator theorem~\cite{DBLP:journals/siamdm/AlonST94,lipton1979separator,DBLP:journals/jcss/Miller86} (briefly mentioned earlier), which states that every $n$-vertex planar graph contains an $\OO(\sqrt{n})$-sized subset of vertices (called separator) whose removal from the graph yields connected components that are each of size at most $2n/3$.
 Thus, due to the above-mentioned sub-quadratic bound on drawn treewidth for grid drawings, the following {\em conjecture} seems tempting: the drawn treewidth of any {\em planar} polyline grid drawing is $\OO(\sqrt{n})$. However, we observe (in Section~\ref{sec:uppBound}) that the statement analogous to the planar separator theorem does not hold in our case, where our notion of a separator is that of a cutter and their sizes is, in particular,  bounded from below by the size of the set of vertices associated with the cutter.

\medskip\noindent{\bf Drawbacks of Simpler Definitions for a Cutter.} Lastly, we present and discuss two alternative restricted forms of cutters: {\em horizontal (or vertical) cutters} and {\em straight-line cutters}. A horizontal cutter (resp., vertical cutter) of a frame is a cutter of that frame where all vertices have the same $y$-coordinate (resp., $x$-coordinate). Then, a straight-line cutter is a cutter that is either horizontal or vertical. The replacement of cutters by horizontal/vertical cutters or straight-line cutters  yields corresponding definitions of horizontal/vertical drawn tree decompositions and straight-line drawn tree decompositions, and, accordingly, of horizontal/vertical drawn treewidth and straight-line drawn treewidth. In particular, when we use these restricted forms of cutters, every frame has the shape of a rectangle. In turn, this significantly simplifies the visualization (and, possibly, also the use) of these concepts.

\begin{figure}[t]
	\centering
	\includegraphics[page=66, width=0.4\textwidth]{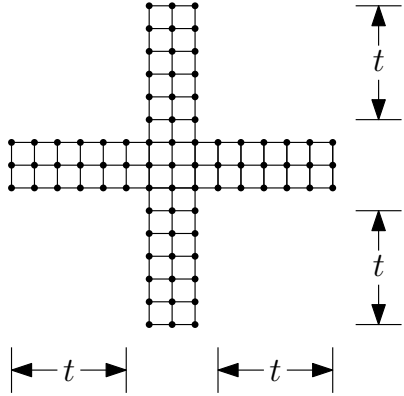}
	\caption{Example of a rectilinear drawing of a graph on $n$ vertices with $t = \Omega(n)$. The horizontal/vertical drawn treewidth of this drawing is $\Omega(n)$.}
	\label{fig:simpleCutterIntro1}
\end{figure} 

Unfortunately, horizontal/vertical drawn treewidth and even straight-line drawn treewidth can be arbitrarily larger than drawn treewidth. To see this, let us first consider horizontal cutters (or, symmetrically, vertical cutters), and the graph depicted in Figure~\ref{fig:simpleCutterIntro1}. Notably, this graph, in fact, admits {\em exactly one} grid drawing (up to isomorphism)---the one depicted in the figure. Now, notice that the horizontal drawn treewidth of this drawing is $\Omega(n)$. To see this, notice that, for any horizontal tree decomposition and for each of the three horizontal straight lines in the ``middle'' of the drawing, the rooted tree will have to contain a vertex whose associate cutter ``coincides'' with that line. However, the drawn treewidth of this drawing is only $\OO(1)$, and, more generally, recall that we prove that for any grid drawing, the straight-line drawn treewidth (and hence also the drawn treewidth) is  $\OO(\sqrt{n})$. So, for example, by using only horizontal (or vertical) cutters, we will not be able to attain the subexponential-time algorithm for {\sc Grid Recognition} mentioned in Section \ref{sec:introScheme}.

\begin{figure}[t]
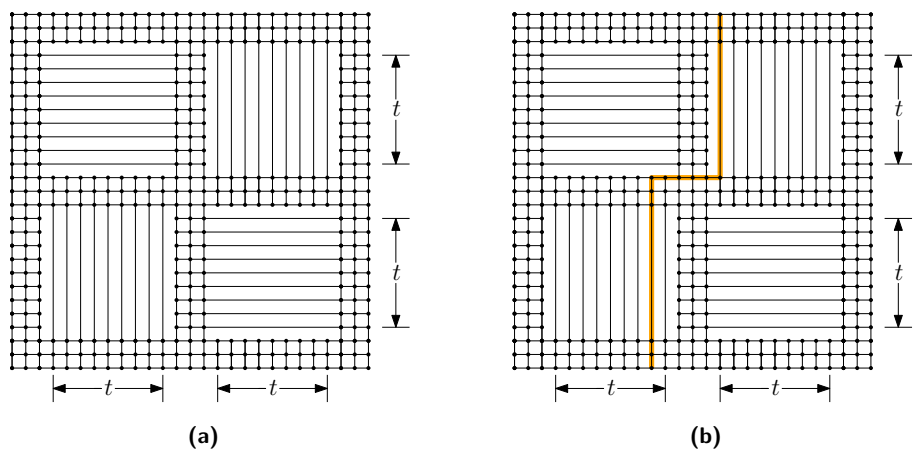

	\centering
	\begin{subfigure}{0.4\textwidth}
		\includegraphics[page=67, width=\textwidth]{figures/drawnTreewidth}
		\subcaption{}
		\label{fig:simpleCutterIntro2_1}
	\end{subfigure}
	\hfil
	\begin{subfigure}{0.4\textwidth}
		\includegraphics[page=68, width=\textwidth]{figures/drawnTreewidth}
		\subcaption{}
		\label{fig:simpleCutterIntro2_2}
	\end{subfigure}
	\caption{(a) Example of a rectilinear drawing of a graph on $n$ vertices with $t = \Omega(n)$. The straight-line drawn treewidth of this drawing is $\Omega(n)$. (b) Example of a cutter used in the drawn tree decomposition of width $O(1)$.}
	\label{fig:simpleCutterIntro2}
\end{figure}

Nevertheless, the straight-line treewidth of the drawing in Figure~\ref{fig:simpleCutterIntro1} can be seen to be bounded by $\OO(1)$ as well. However, regarding straight-line cutters, we consider the graph depicted in Figure~\ref{fig:simpleCutterIntro2_1}. Notably, every {\em rectilinear grid drawing} of this graph (being a generalization of a grid drawing, where edges are straight-lines of arbitrary lengths) can be obtained from the one depicted in the figure by ``stretching'' the drawings of some of its edges (and up to isomorphism). Now, notice that the straight-line drawn treewidth of this drawing is $\Omega(n)$.  To see this, notice that every axis-parallel straight-line that intersects this graph, intersects the drawings of at least $\Omega(n)$ distinct vertices and edges of this graph. However, the drawn treewidth of this drawing is only $\OO(1)$. To see this, consider the usage of cutters as the one depicted in Figure~\ref{fig:simpleCutterIntro2_2}.

%!TEX root =Main-Movement.tex

\subsection{Comparison with Other Graph Width Parameters}\label{sec:introComparison}

Recall that, drawn tree decomposition is based on decomposing a given polyline grid drawing of a graph. Therefore, the drawn treewidth is dependent on the polyline grid drawing of the graph. For e.g.,  Figures~\ref{fig:simplePathA} and~\ref{fig:pathBadDTWIntro1} depicts two different drawings of the same path which have different drawn treewidth. As path has a unique embedding, this also shows that {\bf different drawings of the same embedded graph may have different drawn treewidth. To the best of our knowledge, our parameter is the only one that depends on the drawing (rather than the embedding or just the graph).} Thus, we compare and discuss the differences between the drawn treewidth of a given polyline drawing of the graph and some seemingly related graph width parameters, namely: treewidth, pathwidth, carving-width, dual carving-width and embedded-width. Note that, the dual carving-width and the embedded-width is only defined when the given graph is a plane graph. Specifically, we prove the following theorem.
\begin{theorem}
	Given a graph $G$ and a polyline drawing $d$ of $G$, we have the following.
	\begin{enumerate}[(a)]
		\item The treewidth of $G$ is at most $6$ times the drawn treewidth of $d$. Moreover, the drawn treewidth of $d$ might be arbitrary larger than the treewidth of $G$.
		\item The pathwidth of $G$ and the drawn treewidth of $d$ are incomparable.
		\item The drawn treewidth of $d$ might be arbitrary larger than the carving-width of $G$.
		\item If $G$ is a plane graph, the dual carving-width and the embedded-width of $G$ might be arbitrary larger than the drawn treewidth of $d$.
	\end{enumerate}
\end{theorem}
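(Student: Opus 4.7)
The plan is to handle the four parts of the theorem separately; only part (a) requires a structural argument, while (b), (c) and (d) each reduce to exhibiting concrete families of graphs together with polyline grid drawings realising the desired gap.

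For part (a), the upper bound $\tw(G)\le 6\cdot\dtw(d)$ should follow directly from the defining properties (P1) and (P2) of a drawn tree decomposition recalled in Section~\ref{sec:introConcept}. Concretely, I would take a drawn tree decomposition $(\mathcal{T},\alpha,\beta)$ of $d$ of width $w$, invoke (P1) to get that $(\mathcal{T},\beta)$ is a tree decomposition of $G$, and then use (P2) together with $\mathsf{width}(\alpha(u))\le w$ at every vertex $u$ of $\mathcal{T}$ to bound $|\beta(v)|\le 2(w+w+w)=6w$ for each internal $v$ (and trivially for leaves), giving $\tw(G)\le 6w-1\le 6\cdot\dtw(d)$. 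For the reverse separation, I would reuse the construction in Figure~\ref{fig:pathBadDTWIntro1}: the path has $\tw=1$, yet the argument recalled in the caption of Figure~\ref{fig:pathBadDTWIntro} (embedding the serpentine drawing inside a square grid and using that treewidth is a lower bound on drawn treewidth) yields $\dtw=\Omega(\sqrt{n})$.

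For part (b), the direction $\dtw(d)>\pw(G)$ is immediate from the same serpentine path example, since the pathwidth of a path equals $1$. The direction $\pw(G)>\dtw(d)$ is where I expect the work to lie. The natural candidate is a tree $T$ of pathwidth $\Theta(\log n)$, such as the complete binary tree, equipped with a recursive ``H-layout'' polyline grid drawing in which every subtree is enclosed in its own axis-aligned box. A frame aligned with such a box crosses only $\OO(1)$ edges and meets $\OO(1)$ vertices, so the frame-tree mirroring the recursive layout witnesses $\dtw(d)=\OO(1)$, while $\pw(T)=\Theta(\log n)$. Verifying that every frame in this recursive frame-tree has width $\OO(1)$, and in particular that it contains a valid cutter at every step, will be the main technical step.

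Part (c) follows without new ingredients: the carving-width of any graph is bounded by its maximum degree, so any path has $\cw=2$, while the serpentine drawing from part (a) already realizes $\dtw=\Omega(\sqrt{n})$.

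Part (d), being two separations simultaneously on plane graphs, is the most delicate; this is where I expect the main obstacle. My plan is, for every $k$, to construct a plane graph $G_k$ together with a polyline grid drawing $d_k$ of drawn treewidth $\OO(1)$ such that the fixed embedding forces both $\dcw(G_k)=\Omega(k)$ and $\ew(G_k)=\Omega(k)$. A template I would try is a nested arrangement of cycles glued so that many inner faces become mutually adjacent via shared edges (producing a dual with large carving-width) and with enough nesting depth to force embedded-width growth; the drawing itself would consist of concentric rectangles whose frame-tree peels off a single ring at a time using cutters of constant width, as in the pattern of Figure~\ref{fig:simpleCutterIntro2_2}. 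The bookkeeping step I expect to be most laborious is ensuring that the chosen gluing yields the claimed dual-carving-width and embedded-width lower bounds while remaining compatible with the constant-width frame-tree.
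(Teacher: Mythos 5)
Parts (a) and (b) of your proposal are sound and essentially follow the paper's route: the upper bound $\tw(G)\le 6\cdot\dtw(d)$ is exactly the combination of Corollary~\ref{col:TWBoundedBYdraTW} and Lemma~\ref{lem:dtwIsLBoundedbyTw}, and the recursive boxed layout of a binary tree is precisely the paper's Figure~\ref{fig:pathwidthCom} argument (the frame-cost verification you anticipate is handled there by Observation~\ref{obs:costF}, using that the cutters are straight and the degree is at most $3$). For the direction $\dtw>\pw$ you use the serpentine path directly, which is cleaner than the paper's indirect argument via the NP-hardness of {\sc Grid Recognition} on pathwidth-$2$ graphs; both work.

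Two problems remain. First, in part (c) your justification ``the carving-width of any graph is bounded by its maximum degree'' is false: carving-width is bounded \emph{below} by the maximum degree (the leaf edge at a vertex $v$ in any carving crosses exactly $\deg(v)$ edges), and constant-degree expanders have carving-width $\Omega(n)$. The correct general bound is $\cw\le\Delta(\tw+1)$, which for a path gives $\cw\le 4$ (and a caterpillar carving shows $\cw=2$ directly), so your example survives, but the stated principle must be replaced. Second, and more seriously, part (d) is left as an unexecuted plan, and the construction you sketch is far more elaborate than necessary. The paper's argument is one line: for a plane graph, the maximum face size $\ell$ satisfies $\ell\le\dcw$ and $\ell\le\ew$; a path drawn in the plane has a single face of size $2(n-1)$, so both parameters are $\Omega(n)$, while the serpentine drawing of that same path has drawn treewidth at most $16$ (Figure~\ref{fig:simplePathB}). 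Your nested-cycles construction, besides requiring all the bookkeeping you acknowledge, is not needed; as written, part (d) of your proposal does not constitute a proof.
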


We now give the proof of the above theorem. Let $\Delta$, $\tw$, $\pw$, and $\cw$ be the maximum degree, treewidth, pathwidth and the carving-width of $G$, respectively. Further, if $G$ is a plane graph, let $\ell$, $\dcw$ and $\ew$ be the maximum face size, the dual carving-width (the carving width of the dual graph), and the embedded-width of $G$, respectively.

\medskip\noindent{\bf Comparison with Treewidth.} As mentioned earlier in Section~\ref{sec:introConcept}, we prove that given a graph and a polyline drawing of it, $\tw$ is at most $6$ times the drawn treewidth (in Section~\ref{sec:relToTreewidth}). Moreover, we also show that given a graph and a polyline drawing of it, the drawn treewidth of the drawing might be arbitrary larger than the treewidth of the graph (see Figure~\ref{fig:pathBadDTWIntro}).

\begin{figure}[!t]
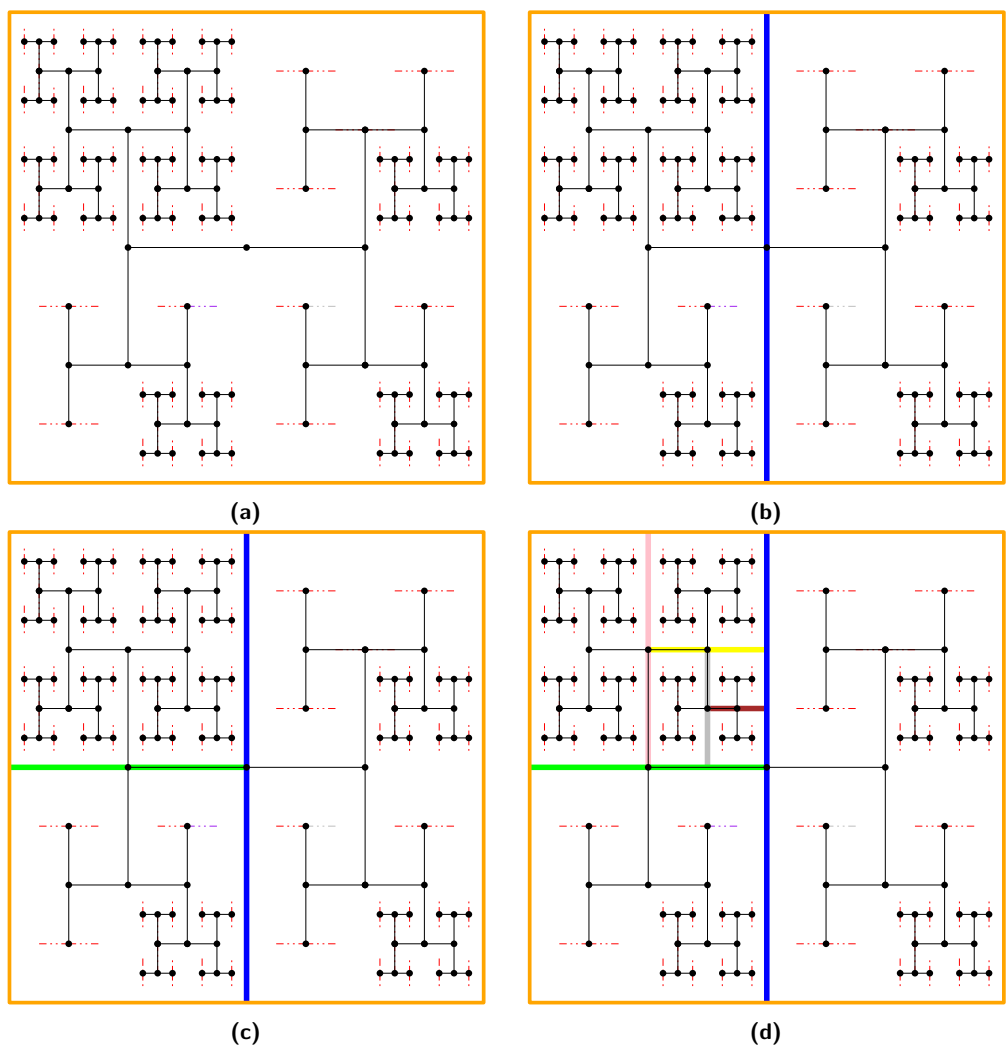

	\centering
	\begin{subfigure}{0.45\textwidth}
		\includegraphics[width = \textwidth, page = 79]{figures/drawnTreewidth}
		\subcaption{}
		\label{fig:pathwidthComA}
	\end{subfigure}
	\hfil
	\centering
	\begin{subfigure}{0.45\textwidth}
		\includegraphics[width = \textwidth, page = 80]{figures/drawnTreewidth}
		\subcaption{}
		\label{fig:pathwidthComB}
	\end{subfigure}
	\hfil
	\centering
	\begin{subfigure}{0.45\textwidth}
		\includegraphics[width = \textwidth, page = 81]{figures/drawnTreewidth}
		\subcaption{}
		\label{fig:pathwidthComC}
	\end{subfigure}
	\hfil
	\centering
	\begin{subfigure}{0.45\textwidth}
		\includegraphics[width = \textwidth, page = 82]{figures/drawnTreewidth}
		\subcaption{}
		\label{fig:pathwidthComD}
	\end{subfigure}
%	\centering
%	\includegraphics[page=76, width=0.5\textwidth]{figures/drawnTreewidth}
	\caption{Example of a rectilinear drawing (in black) of a binary tree on $n$ vertices. The rectangular is shown in orange. Examples of cutters are shown in blue, green, pink, yellow, grey and brown. Each one of them intersects $O(1)$ vertices and edges. Overall, the pathwidth of the tree is $\Omega(\log n)$, while the drawn treewidth of this drawing is $O(1)$.}
	\label{fig:pathwidthCom}
\end{figure} 

\medskip\noindent{\bf Comparison with Pathwidth.} In Figure~\ref{fig:pathwidthCom}, we have a rectilinear grid drawing of a binary tree. By using cutters as illustrated in the figure (in orange), we can get a drawn tree decomposition of constant width. In particular, one can see that each cutter intersects a constant number of edges and vertices. Since we use only straight cutters, and the maximum degree of the graph is $3$, by Observation~\ref{obs:costF}, we conclude that the width of each frame in such a drawn tree frame is bounded by a constant. Therefore, we get that the drawn treewidth of the drawing is bounded by a constant. Observe that this example can be expanded to a binary tree of any size. Furthermore, the pathwidth of a binary tree with $n$ vertices is $\Omega(\mathsf{log}_2(n))$. So, given a graph and a polyline drawing of it, the pathwidth of the graph might be arbitrary larger than the drawn treewidth of the drawing.

On the other hand, {\sc Grid Recognition} is NP-hard on graphs of pathwidth $2$, and we show in this paper that the problem is XP with respect to drawn treewidth. So, given a graph and a polyline drawing of it, the drawn treewidth of the drawing might be arbitrary larger than the pathwidth of the graph. Thus, we conclude that the two parameters, pathwidth and drawn treewidth, are incomparable.

\medskip\noindent{\bf Comparison with Carving-width, Dual Carving-width and Embedded-width.} It is known that $\cw \leq \Delta(\tw +1)$~\cite{DBLP:journals/ijcga/BiedlV13}. As the {\sc Grid Recognition} problem is NP-hard even for binary trees, we get that it is NP-hard even for graphs of carving-width at most $6$. In this paper, we show that the problem is XP with respect to drawn treewidth. So, given a graph and a polyline drawing of it, the drawn treewidth of the drawing might be arbitrary larger than the carving-width of the graph.

If the given graph is plane, it is known that $\ell \leq \dcw$ and $\ell \leq \ew$~\cite{DBLP:journals/algorithmica/LozzoEGG21}. Therefore, we get that both the dual carving-width and the embedded-width of a path are at least the size of its vertex set. In this paper, we show that there exists a drawing of any path with drawn treewidth at most $16$ (see Figure~\ref{fig:simplePathB}). So, given a plane graph and a polyline drawing of it, the dual carving-width and the embedded-width of the graph might be arbitrary larger than the drawn treewidth of the drawing.
Thus, we conclude that drawn treewidth differs from carving-width, dual carving-width and embedded-width.

%!TEX root =Main-Movement.tex

\subsection{Our Scheme}\label{sec:introScheme}

Here, we present (informally) our general scheme for the design of algorithms for problems in Graph Drawing parameterized by the drawn treewidth of the sought drawing (that should be, in particular, a polyline grid drawing), based on dynamic programming. Formal definitions and further details can be found in Section~\ref{sec:ProbSche}. For the clarity of the discussion, we first introduce the four main definitions required for the scheme and its proof of correctness. Then, we discuss the usage of our scheme---specifically, which two procedures the user should design in order to apply the scheme as a black box. Afterwards, we specify the properties that a problem should satisfy so that our scheme will solve it correctly, and the running time that will be attained. Lastly, we present some technical details concerning the scheme itself, that is, how it is executed.

\medskip\noindent{\bf Key Players: Info-Frames, Info-Cutters, Splitting and Glueing.} {\bf\em Info-Frames.} The most basic definition required for our scheme is that of an {\em info-frame}. Briefly, an info-frame encodes information about the ``behaviour'' of the restriction of some drawing $d$ to the interior of some particular frame $f$ (see Figure~\ref{fig:infFandD}). For that purpose, the info-frame consists of five components, where the first one is, simply, the frame $f$. The second component is a drawing $d_f$ that specifies the drawings of the vertices and edges (by $d$) of the graph on $f$ itself. More precisely, $d_f$ specifies which vertices of the graph are drawn on $f$ and where are they drawn on $f$. Additionally, for every edge $e$ of the graph, it specifies which are the turning points of $e$ on $f$ and where are these turning points drawn on $f$. Moreover, for the aforementioned turning points, it specifies the order in which they are encountered (when we ``walk'' along the drawing of $e$ from one end to the other), and for each maximal subcurve of the drawing of $e$ that does not contain a turning point internally, it specifies whether this subcurve is drawn on $f$ (i.e., being a subcurve of $f$ as well), and if yes, then it specifies the drawing of this subcurve (for which, knowing the drawings of its endpoints, we have only two options).

The third and fourth components, denoted by $U_f$ and $E_f$, concern the strict interior of $f$. Specifically, $U_f$ specifies which vertices of the graph are drawn strictly inside $f$. As for $E_f$, for every edge $e$ of the graph and for each maximal subcurve of the drawing of $e$ that does not contain a turning point internally, it specifies whether this subcurve is drawn in the strict interior of $f$ (except for, possibly, the endpoints of the curve). We remark that the number of ``sensible'' choices for $U_f$ and $E_f$ is much smaller than it might appear to be at first glance, supposing that the graph at hand is connected. The fifth component, roughly speaking, describes the ``angles'' in which drawings of edges cross $f$ using straight line segments attached to turning points. Such information is necessary, for example, to ensure that some subcurves corresponding to the drawings of the same edge lie in a single straight line, so that no bend---if forbidden by the problem at hand---occurs.

Importantly, the definition of an info-frame is independent of a specific drawing, being an ``abstract'' tuple of five components. Every drawing that can be described by the tuple (as discussed above) is said to be a drawing of the info-frame. So, one info-frame may describe multiple drawings, or none at all. We note that for an ``abstract'' five-component tuple to be an info-frame, it should satisfy various (considerably technical) properties, which, in particular, any info-frame that does describe at least one drawing must satisfy. On the one hand, these properties bound the number of possible info-frames, and, on the other hand, they are also used in the proof of correctness of our scheme.

Lastly, observe that the restriction of some drawing $d$ to the interior of some particular frame $f$ is {\em not} a drawing of a {\em graph}. Indeed, some edges are drawn (by $d$) partially in the interior of $f$ and partially in the strict exterior of $f$. However, if we ``enrich'' the graph by placing ``virtual'' vertices on turning points, then the restriction of $d$ to the interior of $f$ will be a drawing of a graph (being a subgraph of the enriched graph). So, for technical reasons, this is exactly what we do. For this purpose, we define and work with so-called $G^\star$-drawings; however, to keep the overview short and simple, we will not discuss $G^\star$-drawings and related technical terms in this overview.

\medskip\noindent{\bf\em Info-Cutters.} Just as we use an info-frame to encode information about the ``behaviour'' of a drawing $d$ with respect to a frame $f$, we use an {\em info-cutter} of an info-frame to encode information about the ``behaviour'' of $d$ with respect to a cutter $c$ of $f$. Rather than directly describing how $d$ is drawn on $c$ and how $d$ is ``split'' by $c$ inside $f$, we find it easier to indirectly describe this information by defining an info-cutter based on two info-frames corresponding to the frames obtained by cutting $f$ with $c$ (later, for the dynamic programming implementation, we can thus immediately know to which already computed entries to refer). Observe that, in particular, the two frames being part of these two info-frames contain $c$, and, thus, these two info-frames capture the aforementioned information.

To be more precise, an info-cutter $C$ of an info-frame $F$, where the first component of $F$ is some frame $f$, is a triple $(c,F_1,F_2)$, where, in particular, $c$ is a cutter of $f$, and $F_1$ and $F_2$ are info-frames for the two frames obtained by cutting $f$ with $c$. Additionally, for such a triple to be an info-cutter, it should satisfy (considerably technical) properties, which, in particular, any info-cutter that does describe at least one drawing must satisfy. Very briefly, these properties validate consistency between the information described by $F$, $F_1$ and $F_2$. This is more complicated than it might appear to be at first glance, since, even on the cutter $c$, $F_1$ and $F_2$ might describe the existence of different virtual vertices (having different turning points).  For the sake of simplicity, we do not discuss these details in the overview.

\medskip\noindent{\bf\em Splitting and Glueing.} First, let us consider the {\em splitter function}, which, for our scheme, is used only for the proof of correctness (where its input is assumed to contain a subdrawing of a hypothetical solution drawing). Given an info-frame $F$ whose first component (being a frame) is $f$, a drawing $d$ restricted to the interior of $f$ that is compatible with the description encoded by $F$, and a cutter $c$ of  $f$, the splitter function returns an info-cutter $C=(c,F_1,F_2)$ and two drawings, $d_1$ and $d_2$. Let $f_1$ ($f_2$) be the first component of $F_1$ ($F_2$). Briefly, we define the output such that $d_1$ and $d_2$ would be the subdrawings of $d$ restricted to the interiors of $f_1$ and $f_2$, respectively, and $F_1$ and $F_2$ would be the info-frames that describe $d_1$ and $d_2$, respectively.

The {\em glue function} is, intuitively, the ``inverse'' of the split function, and it is used algorithmically in our scheme. Its input consists of an info-frame $F$, an info-cutter $C = (c, F_1, F_2)$ of $F$, a drawing $d_1$ of $F_1$ and a drawing $d_2$ of $F_2$. Roughly speaking, this function aims to ``glue'' $d_1$ and $d_2$ into a single drawing $d$ that is restricted to the interior of $f$, being the first component of $F$, and that should be compatible with the description encoded by $F$; of course, this operation might be impossible, and then the function simply announces that. We refer to Figures~\ref{fig:splitter} and~\ref{fig:glue} for a high-level illustrative description of splitting and glueing. Among other proofs concerning these functions, we show, in particular, that the specific way in which we define the splitter and glue functions (not described in the overview)  ensures that, if we apply the glue function on an output of the splitter function, we are able to reconstruct the drawing given as input to the splitter function.

\medskip\noindent{\bf The User's Point of View.} For the execution of the scheme, we expect the user to provide four components: some universe denoted by $\mathsf{INF}$, and three algorithmic procedures (that will be defined immediately). All of these components are problem-dependent.
\begin{itemize}
\item The first procedure, termed {\em classifier} and denoted by $\mathsf{Classifer}$, is given an info-frame $F$ and a corresponding drawing $d$, and it returns an element from $\mathsf{INF}$. Intuitively, this element describes the equivalence class of $d$. So, we say that two drawings corresponding to the same info-frame are {\em equivalent} if the classifier associates them with the same element. 
\item The second procedure, termed {\em classifier algorithm},   is given an info-frame $F$, an info-cutter $C = (c, F_1 ,F_2)$ of $F$ and  $I_1 , I_2 \in \mathsf{INF}$, and it returns  $I'\in \mathsf{INF}$ such that: For any two drawings $d_1$ and $d_2$ corresponding to $F_1$ and $F_2$, respectively, such that $\mathsf{Classifier}(F_1, d_1) = I_1$ and $\mathsf{Classifier}(F_2, d_2) = I_2$, we have $\mathsf{Classifier}(F, d) = I'$ where $d = \mathsf{Glue}(F, C, d_1, d_2)$. In particular, notice that any two drawings of the same two equivalence classes always yield (when being glued) a drawing of the same equivalence class---this justifies our usage of the term {\em equivalence} in this context.
\item The third procedure, termed {\em leaf solver}, is given an info-frame $  F$ whose frame does not contain any grid point in its strict interior, and for every $I'\in\mathsf{INF}$,  it returns ``yes'' if and only if there exists a drawing $d$ corresponding to $F$ such that $\mathsf{Classifier}(F, d) = I'$. Practically, we require this procedure to solve the basis of our dynamic programming computation, corresponding to info-frames whose frames do not contain any grid point in their strict interiors.
\end{itemize}

The scheme, once given these components, can be executed in a black box fashion. For the sake of simplicity of the overview, we do not discuss the technical details of the execution itself (as a white box) here.

\medskip\noindent{\bf To Which Type of Problems Does Our Scheme Apply?} Roughly speaking, we prove that our scheme can be applied to any graph drawing problem $\Pi$ such that: 
\begin{enumerate}
\item Every instance of $\Pi$ contains, in particular, a connected graph $G$, dimensions $h$ and $w$ for the sought drawing (which are, usually, bounded from above by the number of vertices $n$ of $G$), and the parameter $k$ (being any non-negative integer).
\item The objective is to determine whether $G$ admits a polyline grid drawing bounded by rectangle of dimensions $h\times w$, whose drawn treewidth is at most $k$, and that satisfies various problem-specific properties (for some examples, see Section \ref{sec:introApplications}).
\item The user can design the three algorithmic procedures discussed above.
\end{enumerate}

For any such problem $\Pi$, we prove that the runtime of the scheme is bounded by
\[\OO(k\cdot h\cdot w\cdot n)^{\OO(k)}\cdot|\mathsf{INF}|^{\OO(1)}\cdot\left(2^{\OO(\Delta\cdot k)}\cdot\mathsf{T2} + \mathsf{T3}\right),\]
where $\mathsf{T2}$ and $\mathsf{T3}$ bound the runtimes of the second and third procedures provided by the user, and $\Delta$ is the maximum degree of $G$. In particular, if $h,w,|\mathsf{INF}|,\mathsf{T2}$  and $\mathsf{T3}$ can be bounded by $n^{\OO(1)}$ (which is the case for many applications, such as grid recognition and orthogonal compaction), then the runtime above simplifies to $n^{\OO(k)}$, that is, we obtain an XP-algorithm.

%!TEX root =Main-Movement.tex

\subsection{Applications of Our Scheme to Problems in Graph Drawing}\label{sec:introApplications}

For most of the problems considered in this paper, the time complexity of our scheme can be bounded by $n^{\OO(k)}$, where $k$ is the input parameter that upper bounds the drawn treewidth of the output drawing (the refined upper bound on the running time of the scheme can be found in Theorem~\ref{the:AlgSch}). We remark that the formal definitions of these problems are relegated to Section~\ref{sec:problemDef}. 

%and a discussion of related literature is relegated to Section~\ref{sec:relatedWorks}.
\bigskip\noindent{\bf Grid Recognition.} We first consider the relatively simple {\sc Grid Recognition} problem in order to demonstrate the application of our scheme. Here, given a (connected) graph $G$, the objective is to determine whether $G$ is a grid graph, that is, whether it admits a grid drawing. The {\sc Grid Recognition} problem was first proved to be NP-hard in 1987, on ternary trees of pathwidth $3$~\cite{DBLP:journals/ipl/BhattC87}. Two years later in 1989, the problem was proved to be NP-hard even on binary trees~\cite{gregori1989unit}. Recently in 2021, the problem was proved to be NP-hard even on trees of pathwidth $2$~\cite{DBLP:conf/isaac/0002SZ21}. In the same paper, it was also proved that the problem is polynomial time solvable on graphs of pathwidth $1$. A year later in 2022, it was proved that even if we require all the internal faces of the drawing to be rectangles, the problem is still NP-hard even for biconnected graphs~\cite{alegria2022unit}. In the same paper, it was also proved that if we require all the faces of the drawing to be rectangles (including outer face), the problem is cubic time solvable.

As we deal with the parameterized version of this problem where the parameter is the drawn treewidth of the sought drawing (or, more precisely, an upper bound on it), we are also given $k$ as input. We prove the following result.

\begin{theorem}\label{the:gridRecRunTime}
	There exists an algorithm that solves the {\sc Grid Recognition} problem in time $n^{\OO(k)}$.
\end{theorem}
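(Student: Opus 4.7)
The plan is to instantiate the general algorithmic scheme of Section~\ref{sec:introScheme} for the \textsc{Grid Recognition} problem as a black box. First I would verify the preconditions: we may assume $G$ is connected (otherwise handle each component separately); every grid drawing of an $n$-vertex graph fits in an $n \times n$ axis-aligned rectangle, so we set $h = w = n$; and the maximum degree $\Delta$ of any grid graph is at most $4$. Crucially, edges of a grid drawing are unit-length axis-parallel segments, so they have no bends and no turning points strictly inside any frame — the family of drawings compatible with any given info-frame is therefore very constrained, which keeps all bookkeeping simple.

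Next I would specify the four user-provided components of the scheme. Take $\mathsf{INF} = \{\star\}$ to be a singleton: since we only need to decide existence of a drawing, all realizations compatible with a given info-frame collapse into one equivalence class. The \textbf{classifier} returns $\star$ on every input, and the \textbf{classifier algorithm}, given $(F, C, F_1, F_2, \star, \star)$, simply returns $\star$ (so the dynamic program tracks only the boolean ``this info-frame is realizable''). The \textbf{leaf solver}, on an info-frame $F$ whose frame $f$ contains no grid points in its strict interior, returns ``yes'' iff $F$ is realizable by some partial grid drawing; since no grid points lie strictly inside $f$, this check amounts to: $U_f = \emptyset$; no maximal subcurve is labelled ``strictly inside'' in $E_f$; $d_f$ describes only unit-length axis-parallel edge pieces supported on segments of $f$; pairs of vertices on $f$ drawn at coincident grid points are rejected; and the angular data of the fifth component is consistent with axis-parallel directions at every turning point. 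Each condition is a local check computable in $n^{\OO(1)}$ time, so $\mathsf{T3} = n^{\OO(1)}$, and likewise $\mathsf{T2} = n^{\OO(1)}$.

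Plugging $|\mathsf{INF}| = 1$, $\Delta \leq 4$, $h = w \leq n$, and $\mathsf{T2}, \mathsf{T3} = n^{\OO(1)}$ into the master bound stated after the scheme yields
\[
\OO(k \cdot n \cdot n \cdot n)^{\OO(k)} \cdot 1^{\OO(1)} \cdot \left( 2^{\OO(4 k)} \cdot n^{\OO(1)} + n^{\OO(1)} \right) \;=\; n^{\OO(k)},
\]
which is the claimed running time. Correctness is inherited from Theorem~\ref{the:AlgSch}: the scheme returns ``yes'' iff $G$ admits a polyline grid drawing of drawn treewidth at most $k$ satisfying the local constraints encoded in the three user procedures, and here those constraints together force the drawing to be an actual grid drawing.

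The main obstacle I anticipate is the \textbf{leaf solver}: although its combinatorial content is light, one must carefully audit each consistency condition built into the info-frame definition (especially the fifth, angular component, and the interactions between $d_f$, $U_f$, and $E_f$) against the rigid hard constraints ``edges are unit-length axis-parallel with no internal turning points'' and ``vertex images are distinct grid points.'' Once these local checks are written out explicitly, the remainder of the proof is a direct instantiation of the scheme.
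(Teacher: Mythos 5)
There is a genuine gap in your instantiation: taking $\mathsf{INF}=\{\star\}$ with a constant classifier does not meet the scheme's contract. A ``drawing of an info-frame'' (Definition~\ref{def:infFrDr}) is only constrained on the boundary frame $f$ (plus which vertices/edge-pieces lie strictly inside); the interior may be \emph{any} polyline grid drawing consistent with that boundary data, so the drawings of a given info-frame are emphatically \emph{not} all grid drawings. With a constant classifier, the table entry for $(F,\star)$ semantically means ``$F$ admits \emph{some} polyline drawing of drawn treewidth $\le k$,'' and Theorem~\ref{the:AlgSch} would then decide the wrong problem. Your attempted repair---making the \emph{leaf solver} reject leaf info-frames whose unique drawing is not a grid drawing---violates Definition~\ref{def:probLeGen}, which requires the leaf solver to answer ``yes'' \emph{iff} there exists a drawing $d$ of $F$ with $\mathsf{Classifier}(F,d)=I'$; under the constant classifier that is ``iff any drawing exists.'' Once the contract is broken, correctness is no longer ``inherited'' from Theorem~\ref{the:AlgSch}: you would have to re-prove the inductive step with your modified semantics, and in doing so you would need exactly the fact that the classifier is supposed to carry, namely that gluing preserves and reflects grid-drawing-ness.

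The paper's instantiation fixes this by letting the classifier do the work: $\mathsf{INF}=\{0,1\}$, $\mathsf{GridClassifier}(F,d)=1$ iff $d$ is a grid drawing, the classifier-algorithm $A_{\mathsf{Grid}}$ returns $I_1\wedge I_2$ (Lemma~\ref{lem:A} verifies this is a valid $\mathsf{Classifier}$-algorithm per Definition~\ref{def:probsolv}, i.e.\ that $\mathsf{Glue}$ of two grid drawings is a grid drawing and that gluing a non-grid drawing never produces a grid drawing), the leaf solver tests the unique leaf drawing against both values of $I$, and ${\cal I}_{\mathsf{yes}}=\{1\}$. The rest of your proposal (connectivity, $\Delta\le 4$, $h,w=\OO(n)$---the paper uses $n+2$ for a strictly bounding rectangle---and the plug-in of the running-time bound) matches the paper and would go through once the classifier is corrected. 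One further detail the paper makes explicit and you should too: since edges of a grid drawing have no turning points other than at their endpoints, one may restrict attention to info-frames with no $V^*$ vertices, which is what keeps the leaf solver and the direction bookkeeping trivial.
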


Since for grid drawings, we also prove that $k\leq\OO(\sqrt{n})$ (see Corollary~\ref{cor:uppBouGr}), we get the following corollary. 

\begin{corollary}
	There exists an algorithm that solves the {\sc Grid Recognition} problem in time $n^{\OO(\sqrt{n})}$.
\end{corollary}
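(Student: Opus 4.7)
The plan is to derive this as an immediate consequence of Theorem~\ref{the:gridRecRunTime} combined with the stated upper bound on drawn treewidth for grid drawings (Corollary~\ref{cor:uppBouGr}). Let $C$ be the (absolute) constant hidden by the $\OO(\cdot)$ notation in Corollary~\ref{cor:uppBouGr}, so that every grid drawing of an $n$-vertex graph has drawn treewidth at most $C\sqrt{n}$.

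Given an $n$-vertex connected graph $G$ as input, the algorithm simply sets $k := \lceil C\sqrt{n}\rceil$ and invokes the algorithm promised by Theorem~\ref{the:gridRecRunTime} on the instance with this choice of parameter. Correctness follows from a two-sided observation. On the one hand, if the invoked algorithm outputs ``yes'', then $G$ admits a grid drawing (in particular, one whose drawn treewidth is at most $k$), so $G$ is a grid graph. On the other hand, if $G$ is a grid graph, then it admits some grid drawing $d$, and by Corollary~\ref{cor:uppBouGr} the drawn treewidth of $d$ is at most $C\sqrt{n} \leq k$, so $d$ is a valid witness for the parameterized problem and the algorithm must output ``yes''. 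Thus our call returns the correct answer to {\sc Grid Recognition} on $G$.

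For the running time, plugging $k = \lceil C\sqrt{n}\rceil = \OO(\sqrt{n})$ into the bound $n^{\OO(k)}$ of Theorem~\ref{the:gridRecRunTime} immediately yields an overall running time of $n^{\OO(\sqrt{n})}$, as required. There is no genuine obstacle here: the only point that deserves a line of verification is that the algorithm of Theorem~\ref{the:gridRecRunTime} is guaranteed to answer ``yes'' whenever \emph{some} grid drawing of drawn treewidth at most $k$ exists, which is precisely the guarantee provided by our scheme, and that Corollary~\ref{cor:uppBouGr} applies to every grid drawing of $G$ (not just a specific one), so using the universal upper bound $C\sqrt{n}$ is sound regardless of which grid drawing (if any) realizes $G$.
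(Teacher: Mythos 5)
Your argument is correct and is exactly the paper's route: combine the $n^{\OO(k)}$ algorithm of Theorem~\ref{the:gridRecRunTime} with the universal bound $\mathsf{dtw}(d)\leq\OO(\sqrt{n})$ for grid drawings from Corollary~\ref{cor:uppBouGr}, instantiating $k=\OO(\sqrt{n})$. The two-sided correctness check you spell out is left implicit in the paper but is the right (and only) point needing verification.
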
  

Thus, we obtain a subexponential-time algorithm for {\sc Grid Recognition}, matching the running time of the current best known algorithm for this problem~\cite{damaschke2020enumerating}.

\bigskip\noindent{\bf Crossing and Bend Minimization.} For our second application, we study a variant of the {\sc Crossing Minimization} problem. The {\sc Crossing Minimization} problem  is one of the most fundamental graph layout problems. It was shown to be NP-complete by Garey and Johnson in 1983~\cite{garey1983crossing}. Later, it was proved to be NP-complete even on graph of maximum degree 3~\cite{hlinveny2006crossing} and also on \emph{almost planar graphs} which are graphs that can be made planar by  removing a single edge~\cite{cabello2013adding}. It was also shown that the problem remains NP-hard even if the cyclic order of the neighbours around each vertex is fixed and to be respected by the resulting drawing~\cite{DBLP:journals/algorithmica/PelsmajerSS11}. On the positive side, it is known the problem is FPT with respect to the number of crossings~\cite{grohe2004computing,kawarabayashi2007computing} and also with respect to the vertex cover~\cite{DBLP:conf/gd/HlinenyS19}.
There are many other variants of this problem which are studied in the literature. One of them concerns with minimizing the number of pairwise crossing edges in any straight-line drawing of the graph. This problem is known to be NP-hard~\cite{bienstock1991some} (and even $\exists \mathbb{R}$-complete~\cite{schaefer2009complexity}). For more information about the crossing minimization and its variants, we refer to the survey~\cite{suvey/zehavi}.

A related problem is the {\sc Bend Minimization} problem. Given a graph $G$, the {\sc Bend Minimization} problem asks for an orthogonal grid drawing of $G$ with minimum number of total bends. The problem was proved to be NP-complete in 2001, even when there are no bends~\cite{DBLP:journals/siamcomp/GargT01}. On the positive side, if the input graph is plane, the problem can be solved in polynomial time~\cite{DBLP:journals/siamcomp/Tamassia87}.  When the input graph is not planar, there are polynomial time algorithms for subclasses of planar graphs, namely planar graphs with maximum degree 3~\cite{DBLP:conf/compgeom/ChangY17,DBLP:journals/siamcomp/BattistaLV98,DBLP:conf/gd/DidimoLP18,DBLP:journals/ieicet/RahmanEN05} and series-parallel graphs~\cite{DBLP:journals/siamdm/ZhouN08}.

We study the {\sc Straight-line Grid Crossing Minimization} problem where the sought drawing should be a straight-line grid drawing. Here, given a (connected) graph $G$ and $h,w\in\mathbb{N}$, the objective is to determine a straight-line grid drawing of $G$ bounded by a rectangle of dimension $h \times w$ with minimum number of crossings, if one exists. Similar to the previous example, as we study the parameterized version of this problem, we are also given $k$ as input. We prove the following result.

\begin{theorem}\label{the:crossingMinTime}
	There exists an algorithm that solves {\sc Straight-line Grid Crossing Minimization} problem in time $\OO ((k\cdot h\cdot w\cdot n)^{\OO (k)}\cdot 2^{\OO (\Delta\cdot k)})$, where $\Delta$ is the maximum degree of the input graph.
\end{theorem}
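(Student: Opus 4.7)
The plan is to apply the general algorithmic scheme summarized in Theorem~\ref{the:AlgSch}, so the task reduces to designing the universe $\mathsf{INF}$ together with the three problem-specific procedures (classifier, classifier algorithm, leaf solver) for straight-line grid crossing minimization, and then bounding their complexity so that the claimed runtime follows by substitution. Optimization is handled by running the scheme once per candidate value $I'\in\mathsf{INF}$ and returning the smallest $I'$ that is realized at the root; this adds only a factor of $|\mathsf{INF}|$.

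\textbf{Designing the ingredients.} Let $\mathsf{INF} = \{0, 1, \ldots, \binom{|E(G)|}{2}\}$, so $|\mathsf{INF}| = n^{\OO(1)}$; the intended semantics is that an element of $\mathsf{INF}$ records the number of edge-pair crossings of a drawing inside a frame. The classifier $\mathsf{Classifier}(F, d)$ returns the number of edge pairs whose drawings cross strictly inside the frame of $F$ (by convention, crossings on the frame itself are not charged to its interior). The classifier algorithm, given $F$, an info-cutter $C=(c,F_1,F_2)$, and values $I_1, I_2$, returns $I_1 + I_2 + X(C)$, where $X(C)$ counts the edge-pair crossings that occur on the cutter $c$ and is computed directly from the turning-point data on $c$ together with the angle information stored in the fifth components of $F_1$ and $F_2$: two edges with a common turning point on $c$ contribute a crossing to $X(C)$ iff the recorded angles on the two sides indicate that the corresponding straight-line segments cross transversally rather than merely meet. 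This computation takes $k^{\OO(1)}$ time. The leaf solver receives an info-frame $F$ whose frame has no grid point in its strict interior---so any realization is essentially determined by $F$ itself---and a target $I'$; it checks feasibility as a straight-line subdrawing and verifies that its crossing count equals $I'$, again in $k^{\OO(1)}$ time.

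\textbf{Straight-line enforcement.} To restrict the scheme to straight-line drawings, we discard every info-frame and info-cutter that is inconsistent with the straight-line requirement, namely those in which (i) some edge has a turning point that is not a graph vertex but whose incident angles on the two sides do not form a straight line, or (ii) the ordered sequence of turning points of a single edge is not collinear with its endpoints. Condition (i) is checked locally from the fifth component, and condition (ii) is checked during the classifier algorithm by verifying consistency of the angles reported across $c$ by $F_1$ and $F_2$. Infeasible info-frames correspond to no drawings, so they are handled transparently by the scheme without affecting the asymptotic bounds.

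\textbf{Runtime and main obstacle.} Substituting $|\mathsf{INF}| = n^{\OO(1)}$ and $\mathsf{T2}, \mathsf{T3} = k^{\OO(1)}$ into Theorem~\ref{the:AlgSch} yields
\[
\OO(k\cdot h\cdot w\cdot n)^{\OO(k)}\cdot n^{\OO(1)}\cdot\bigl(2^{\OO(\Delta\cdot k)}\cdot k^{\OO(1)} + k^{\OO(1)}\bigr) \;=\; \OO\bigl((k\cdot h\cdot w\cdot n)^{\OO(k)}\cdot 2^{\OO(\Delta\cdot k)}\bigr),
\]
matching the claimed bound. The hard part will be verifying the well-definedness condition for the classifier algorithm: one must show that $X(C)$ depends only on $(F,C)$, not on the particular drawings $d_1,d_2$, despite the fact that a common turning point on $c$ may be shared by several edges whose detailed geometries in the interiors of $f_1$ and $f_2$ differ across realizations. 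The key point is that the fifth component of an info-frame already freezes the slope at which each edge meets $c$ at every turning point, so transversality---and hence the crossing count on $c$---is combinatorial data fully determined by $F_1$ and $F_2$.
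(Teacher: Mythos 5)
Your overall architecture matches the paper's: instantiate the scheme of Theorem~\ref{the:AlgSch}, let the classifier track crossing counts and straight-line validity, and have the classifier algorithm add the two counts plus the crossings occurring on the cutter (which, as in the paper, are indeed determined by the $d_{f_1}$ component of the info-frame rather than by the particular drawings). Placing the crossing count directly into $\mathsf{INF}$ rather than into a separate $\mathsf{MinCross}$ table is a variation the paper itself explicitly sanctions, so that part is fine.

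However, there is a genuine gap in your straight-line enforcement, precisely at the point you flag as ``the hard part.'' You claim that ``the fifth component of an info-frame already freezes the slope at which each edge meets $c$ at every turning point.'' It does not: by Condition~\ref{definfFraCon4} of Definition~\ref{def:infFr}, $\mathsf{V^*Dir}_f(uv_i)$ is defined only for turning points drawn on $\gis(\fin)\setminus\gps(\fin)$, i.e., on non-grid intersection points (where a polyline edge cannot bend anyway). At a turning point of $c$ that lies on a \emph{grid point} --- which is the generic case, e.g., an edge with one endpoint strictly inside $f_1(c)$ crossing $c$ at a grid point $p$ --- the info-frame records nothing about the slope; two drawings $d_1,d_1'$ of the same $F_1$ with the same crossing count may approach $p$ at different slopes (the interior endpoint's position is not part of $F_1$). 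Glued with the same $d_2$, one yields a straight segment and the other a bend, so with $\mathsf{INF}$ consisting only of crossing counts your classifier algorithm is not well-defined (it violates Condition~\ref{def:probSolvCon2} of Definition~\ref{def:probsolv}), and your ``discard inconsistent info-frames/info-cutters'' filter cannot detect the problem since it is not visible at the info-frame level. The paper's fix is to enlarge $\mathsf{INF}$ to pairs $(\mathsf{dir},\mathsf{Indicator})$, where $\mathsf{dir}$ assigns a direction to every on-frame endpoint of every edge in $E_f$; this makes both the collinearity check across $c$ and the glued indicator computable from $(F,C,I_1,I_2)$ alone. Since a frame of cost at most $k$ has at most $k$ such endpoints, each with $(hw)^2$ candidate directions, $|\mathsf{INF}|$ grows to $n^{\OO(k)}$ rather than $n^{\OO(1)}$, which still fits inside the claimed bound $\OO((k\cdot h\cdot w\cdot n)^{\OO(k)}\cdot 2^{\OO(\Delta\cdot k)})$ but breaks your stated substitution $|\mathsf{INF}|=n^{\OO(1)}$.
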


More generally, our scheme can be applied to a very wide class of problems of such flavor; in particular, every problem where:
\begin{itemize}
	\item The input consists of (some or all of)  the following: a graph $G$; $\mathsf{cross}: E(G)\rightarrow\mathbb{N}_0\cup{\infty}$; $\mathsf{bend}: E(G)\rightarrow\mathbb{N}_0\cup{\infty},$ and $C,B,k\in\mathbb{N}_0\cup\{\infty\}$. Here, $E(G)$ is the edge set of $G$, and $\mathbb{N}_0=\mathbb{N}\cup\{0\}$.
	\item The objective is to determine whether $G$ admits a drawing that is {\em (i)} a grid drawing, or  {\em (ii)} a rectilinear drawing, or {\em (iii)} an orthogonal grid drawing, or {\em (iv)} a straight-line grid drawing, or {\em (v)} a polyline grid drawing, such that:
	\begin{itemize}
		\item For every edge $e\in E(G)$, the drawing of $e$ has at most $\mathsf{cross}(e)$ crossings and at most $\mathsf{bend}(e)$ bends.
		\item In total, we have at most $C$ crossings and at most $B$ bends.
	\end{itemize}
\end{itemize}
Further, the scheme can be applied to various variants of the above generic problem that were studied in the literature. For example, we can specify, for every edge, whether it should be crossed an even or odd number of times. Similarly, we can also consider the weighted crossing number.

\bigskip\noindent{\bf Orthogonal Compaction.} Lastly, we note that our scheme can also be applied to problems of flavors quite different than the above. As an example, we consider the {\sc Orthogonal Compaction} problem. Here, given a planar orthogonal representation $H$ of a connected planar graph $G$, the objective is to compute a minimum-area drawing of $H$. The {\sc Orthogonal Compaction} problem was first proved to be NP-hard on general graphs in 2001~\cite{patrignani2001complexity}. Later, it was shown that the problem is NP-hard even on cycles~\cite{DBLP:journals/comgeo/EvansFKSSW22}, ruling out an FPT algorithm with respect to treewidth, unless P=NP. On the positive side, it was proved that the problem is linear time solvable for a restricted class of planar orthogonal representation~\cite{bridgeman2000turn}. Recently, it was also shown that the problem is FPT with respect to number of ``kitty corner vertices", a parameter central to the problem~\cite{DBLP:conf/sofsem/DidimoGKLWZ23}.

Similar to the previous examples, as we study the parameterized version of this problem, we are also given $k$ as input. We prove the following result.

\begin{theorem}\label{the:orthoComTime}
	There exists an algorithm that solves the {\sc Orthogonal Compaction} problem in time $n^{\OO(k)}$.
\end{theorem}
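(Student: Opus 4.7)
The plan is to invoke Theorem~\ref{the:AlgSch} as a black box. Since {\sc Orthogonal Compaction} is an optimization problem that minimizes the area $h \cdot w$ of the drawing, I would iterate over all $(h, w) \in \{1, \ldots, n\}^2$, and for each pair invoke the scheme to decide whether $H$ admits an orthogonal grid drawing of drawn treewidth at most $k$ whose bounding box has dimensions $h \times w$. The optimum is then the minimum such $h \cdot w$ for which the scheme returns \yes. Since every orthogonal representation admits a drawing whose coordinates lie in $\{0, \ldots, n\}$ up to normalization, these $n^2$ iterations capture an optimal solution.

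For each fixed $(h, w)$, I apply the scheme to $G$, the underlying graph of $H$, with $\mathsf{INF} = \{0, 1\}$, where $1$ is interpreted as ``the partial drawing is orthogonal and locally realizes $H$.'' The classifier, given $(F, d)$, returns $1$ iff every segment used by $d$ inside the frame of $F$ is axis-parallel, the angle at each vertex and the direction at each bend lying strictly inside the frame match those prescribed by $H$, and the induced sub-sequence of bends along each edge agrees with the corresponding portion of the bend sequence of that edge in $H$; otherwise it returns $0$. The leaf solver is immediate: a leaf's frame contains no grid point in its strict interior, so the set of drawings compatible with the info-frame is small, and each can be verified against the classifier's conditions in polynomial time.

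The core of the reduction is the classifier algorithm. Given $F$, an info-cutter $C = (c, F_1, F_2)$, and values $I_1, I_2$, it returns $0$ whenever $I_1 = 0$ or $I_2 = 0$, and otherwise verifies the additional consistency introduced at the interface $c$: at every vertex of $G$ or bend of an edge placed on $c$, the combined angular data drawn from the fifth components of $F_1$ and $F_2$ must yield an angle or bend matching $H$; every maximal subcurve of an edge lying on $c$ must be axis-parallel; and the bend indices contributed by $F_1$ and $F_2$ must concatenate consistently into the bend sequence of each edge as prescribed by $H$. Because the info-frames encode exactly the geometric information present at the frame boundaries, the output is a function of $F, C, I_1, I_2$ alone and matches the classifier's verdict on any glued drawing, as required by the scheme.

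For the runtime, an orthogonal drawing has maximum degree $\Delta \leq 4$, so $2^{\OO(\Delta \cdot k)} = 2^{\OO(k)}$. In addition $h, w \leq n$, $|\mathsf{INF}| = \OO(1)$, and both $\mathsf{T2}$ and $\mathsf{T3}$ are polynomial in $n$ since each local check inspects only the $\OO(k)$ geometric features recorded on the cutter or inside a leaf frame. Substituting into the bound of Theorem~\ref{the:AlgSch} gives $n^{\OO(k)}$ per $(h, w)$ pair, and multiplying by the $n^2$ outer iterations yields the overall runtime $n^{\OO(k)}$. The main technical obstacle I expect is verifying that two subdrawings with identical classifications always glue into a drawing whose $H$-consistency can be read off from $F$ and $C$ alone; this should reduce to showing that the ordering of turning points on a frame's boundary, together with the edge-direction data already stored in the info-frame, pins down the exact position of every boundary bend within its edge's bend sequence in $H$, so that the classifier algorithm can certify global agreement with $H$ through strictly local checks at the interface.
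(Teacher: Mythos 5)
Your overall architecture matches the paper's: iterate over $h,w\leq n+1$, and for each pair invoke Theorem~\ref{the:AlgSch} with a constant-size universe $\mathsf{INF}=\{0,1\}$, a classifier meaning ``this partial drawing is a legal piece of a drawing realizing $H$,'' a trivial leaf solver (a leaf info-frame admits at most one drawing), and a classifier algorithm that returns $1$ only when both sides return $1$ and the interface data is consistent; with $\Delta\leq 4$ and $h,w\leq n+1$ this gives $n^{\OO(k)}$, exactly as in Section~\ref{sec:ExampleScheme3}.

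However, the obstacle you flag at the end is a genuine gap, and the resolution you hope for does not hold. If edges of $H$ carry bends and you let the sought drawing be an orthogonal grid drawing, then ``the induced sub-sequence of bends along each edge agrees with the corresponding portion of the bend sequence in $H$'' is not a well-defined function of $(F,d)$: which portion of the global bend sequence a piece of an edge must realize depends on how many bends the pieces \emph{outside} the frame consume, and neither the info-frame (which records turning points, their order, and the attached segment directions, but not bend counts of interior sub-curves) nor $I_1,I_2\in\{0,1\}$ carries that information. Two given turning points with given attached directions admit axis-parallel connections realizing bend sub-sequences of different lengths, so two drawings of $F_1$ can both be ``locally consistent'' while committing to different splits of an edge's bend budget, and the classifier algorithm cannot then certify concatenation from $(F,C,I_1,I_2)$ alone, violating Definition~\ref{def:probsolv}. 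The paper sidesteps this entirely by its formulation of the problem (Definition~\ref{def:OrtCom}): every bend of the orthogonal representation is pre-subdivided into a degree-two vertex and each resulting edge carries a fixed direction in $\{\mathsf{U},\mathsf{D},\mathsf{L},\mathsf{R}\}$, so the target is a planar \emph{rectilinear} grid drawing with no bends, and ``respects the directions'' becomes a condition checkable locally from the drawing of each piece (all $V^*$-vertices of an edge on a common row or column, correct relative order). You either need this preprocessing step, or you must enrich $\mathsf{INF}$ to record, per edge crossing the frame, how many bends each piece has consumed (which would still keep the runtime at $n^{\OO(k)}$ but contradicts your claim that $|\mathsf{INF}|=\OO(1)$).
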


\section{Preliminaries}\label{sec:prelims}
In this paper, we only consider finite simple undirected graphs, unless stated otherwise. Moreover, we refer to straight line segments as line segments, unless stated otherwise. 
Let $\mathbb{N}_0=\mathbb{N}\cup \{ 0 \}$. 
For $k,i,j \in \mathbb{N}$, we denote $[k] = \{1,2,\ldots k\}$ and $[i,j] = \{i, i+1, \ldots, j\}$.  

\subsection{\bf{Graph Notation and Decompositions}}\label{sec:prelims1}

For a graph $G=(V,E)$ and a subset of vertices $U\subseteq V$, we denote by $G[U]$ the subgraph of $G$ induced by $U$.  
For a given subset $V'\subseteq V$ of vertices, we define the {\em boundary of $V'$} as the set of vertices in $V'$ that are adjacent to a vertex in $V\setminus V'$:

\begin{definition} [{\bf Boundary}] \label{def:Boundary}
Let $G=(V,E)$ be a graph. Let $V'\subseteq V$. Then the {\em boundary} of $V'$ in $G$, denoted by $B_G(V')$, is the set of vertices of $V'$ that have a neighbor in $V\setminus V'$, i.e., $B_G(V')=\{v'\in V'~|~$ there exists $v\in V\setminus V'$ such that  $\{v,v'\} \in E \}$.   
\end{definition}

When the graph $G$ is clear from the context, we drop it from the subscript. Given a path $P$, we represent $P$ as a sequence of vertices $v_1,v_2,\ldots, v_k$, such that $\{v_i,v_{i+1}\}$ is an edge in $P$ for every $1\leq i\leq k-1$. Similarly, given a cycle $C$, we represent $C$ as a sequence of vertices $v_1,v_2,\ldots, v_k$, such that $v_1=v_k$ and $\{v_i,v_{i+1}\}$ is an edge in $C$ for every $1\leq i\leq k-1$. Note that we use the terms path and cycle to refer to simple path and cycles.
We now define the concepts of a {\em tree decomposition} and a {\em path decomposition}.

\begin{definition} [{\bf Tree Decomposition}] \label{def:treeDeco}
A {\em tree decomposition} of a graph $G=(V,E)$ is a pair $({\cal T}=(V_T,E_T),\beta: V_T\rightarrow 2^V)$ where $\cal T$ is a tree such that:
\begin{enumerate}
\item For every $v\in V$, the subgraph of $\cal T$ induced by $\{x\in V_T~|~ v\in \beta(x)\}$ is non-empty and connected. \label{def:treeDecocon1}
\item For every $\{u,v\}\in E$, there exists $x\in V_T$ such that $\{u,v\}\subseteq \beta(x)$. \label{def:treeDecocon2}
\end{enumerate} 
The {\em width} of $({\cal T},\beta)$ is defined to be $\max_{x\in V_T}|\beta(x)|-1$. For every $x\in V_T$, $\beta(x)$ is called a {\em bag}. The {\em treewidth} of a graph $G$ is the minimum width of any tree decomposition of $G$.
\end{definition}
 
 \begin{definition} [{\bf Path Decomposition}] \label{def:pathDeco}
A {\em path decomposition} of a graph $G=(V,E)$ is a pair $({\cal P}=(V_P,E_P),\beta: V_P\rightarrow 2^V)$ where $\cal P$ is a path such that:
\begin{enumerate}
\item For every $v\in V$, the subgraph of $\cal P$ induced by $\{x\in V_P~|~ v\in \beta(x)\}$ is non-empty and connected. \label{def:pathDecocon1}
\item For every $\{u,v\}\in E$, there exists $x\in V_P$ such that $\{u,v\}\subseteq \beta(x)$. \label{def:pathDecocon2}
\end{enumerate} 
The {\em width} of $({\cal P},\beta)$ is defined to be $\max_{x\in V_P}|\beta(x)|-1$. For every $x\in V_P$, $\beta(x)$ is called a {\em bag}. The {\em pathwidth} of a graph $G$ is the minimum width of any path decomposition of $G$.
\end{definition}

%Let $f:V(G) \rightarrow \mathbb{N} \times \mathbb{N}$ be an injection that maps each vertex $v$ of $G$ to a point $(i,j)$ of an integer grid; then, $i$ and $j$ are also denoted as $\fr(v)$ and $\fc(v)$, respectively, that is, $f(v) = (\fr(v), \fc(v))$.

\subsection{\bf{Graph Drawing}}\label{sec:prelims2}
For a given graph $G$, a {\em drawing of} $G$ on the plane is a mapping of the vertices to distinct points of $\mathbb{R}^2$ and of the edges to simple curves in $\mathbb{R}^2$, connecting the images of their endpoints. A drawing of a graph is {\em planar} if no pair of edges, or an edge and a vertex, cross except at a common endpoint. Two planar drawings of the same graph are {\em equivalent} if they determine the same {\em rotation} at each vertex, that is, the same circular ordering for the edges around each vertex. An {\em embedding} is an equivalence class of planar drawings.

Given a drawing $d$ of $G$, we represent $d$ as a pair of functions $(d_V,d_E)$ as follows. The function $d_V:V\rightarrow \mathbb{R}\times \mathbb{R}$ is an injection, which maps each vertex $v$ of $G$ to a point $(i,j)$ in the plane; then, $i$ and $j$ are also denoted as $\fr(v)$ and $\fc(v)$, respectively, that is, $d_V(v) = (\fr(v), \fc(v))$. The function $d_E:E\rightarrow \calC$, where $\calC$ is the set of all simple curves in the plane, maps each edge $\{u,v\}\in E$ to a simple curve $c\in \calC$ between $d_V(u)$ and $d_V(v)$. For simplicity, we refer to $(d_V,d_E)$ as one function, $d:V\cup E \rightarrow \{\mathbb{R}\times \mathbb{R}\} \cup \calC$, such that $d(v)=d_V(v)$ for every $v\in V$, and $d(\{u,v\})=d_E(\{u,v\})$ for every $\{u,v\}\in E$.  We call $V$ and $E$ the {\em vertex set} and the {\em edge set associated with $d$}, respectively. Let $d$ be a drawing of a graph $G$, and let $p\in\mathbb{R}^2$ be a point. We say that {\em $p$ is on $d$} if $p$ is on the image of an edge of $G$ in $d$ or $p$ is the image of a vertex of $G$ in $d$. We denote by $\pp(d)$ the set of points on $d$.

For two points $p_1=(x_1,y_1)$ and $p_2=(x_2,y_2)$ in the plane, we denote the line segment joining the points by $\ell(p_1,p_2)$. 
For four points, $p_i=(x_i,y_i)\in \mathbb{R}^2$ for every $1\leq i\leq 4$, we say that $\ell(p_1,p_2)$ {\em crosses} $\ell(p_3,p_4)$ if the line segments $\ell(p_1,p_2)$ and $\ell(p_3,p_4)$ cross except at $p_i=(x_i,y_i)$ for every $1\leq i\leq 4$. 
Let $a$ and $b$ be two points in $\mathbb{R}^2$ and let $\epsilon > 0$. We denote $\ell(a,a_\epsilon)$ by $\mathsf{line}_\epsilon(a,b)$, where $a_\epsilon$ is the point on the line $\ell(a,b)$ at distance $\epsilon$ from $a$ if it exists.
For a pair of points $(p_1,p_2)$, and a point $p'$, where $p_1,p_2,p'\in \mathbb{R}^2$, we say that {\em $\ell(p_1,p_2)$ intersects $p'$} if $p'$ is on the line $\ell(p_1,p_2)$, including its endpoints.  
We use the term {\em grid points} to refer to the infinite set of points $(x,y)\in \mathbb{R}^2$ where $x,y\in \mathbb{N}_0$. 
Given two distinct grid points $p_1=(x_1,y_1)$ and $p_2=(x_2,y_2)$, we say that $p_1<p_2$ if $x_1<x_2$ or $x_1=x_2$ and $y_1<y_2$.

%\begin{definition} [{\bf Axis Parallel Drawn Path}] \label{def:DrawnWalk}
%	An {\em axis parallel drawn path} is a drawn path, with a prescribed orthogonal grid drawing.
%\end{definition}

%Let $a$ and $b$ be two points in $\mathbb{R}^2$ and let $\epsilon > 0$. We denote by $\mathsf{point}_\epsilon(\ell(a,b))$ the point on the line $\ell(a,b)$ at distance $\epsilon$ from $a$ if it exists.

A {\em drawn graph} is a graph with a prescribed drawing. A {\em plane graph} is a drawn graph whose prescribed drawing is planar. A drawing of a graph is called a {\em straight-line} drawing if the edges are mapped to line segments, connecting the images of their endpoints. We define a {\em straight-line path (cycle)} as a plane path (cycle), where the vertices are mapped to grid points and edges are mapped to line segments connecting the images of their endpoints. We denote by $\cal{P}\subset \calC$ the (infinite) set of straight-line paths in $\mathbb{R}^2$. Moreover, we alternatively denote any path $P=(v_1,\ldots,v_k)\in\cal{P}$ by the sequence $(p_1,\dots,p_k)$, where $p_i\in \mathbb{R}^2$ is the image of the vertex $v_i$ in $P$, for every $1\leq i\leq k$.
We define an {\em axis-parallel path (cycle)} as a straight-line path (cycle), where every edge of the path is  parallel to the $X$- or $Y$- axis. For an axis-parallel path $P=(p_1,\ldots, p_k)$, we denote by $|P|$ the (Euclidean) {\em length} of $P$, that is, $|P|=|p_2-p_1|+|p_3-p_2|+\ldots+|p_k-p_{k-1}|$. Next, we define a {\em grid drawing} of a graph $G$ as a straight-line drawing of $G$ where the vertices are mapped to grid points and the edges are mapped to (axis-parallel) unit length line segments (e.g., see Figure~\ref{fi:DrawEx2}):

%$\mathsf{point}_\epsilon(\ell(a,b))$.
%nd the edges are mapped to line segments connecting its endpoints, which are non-intersecting, except at their endpoints. 

\begin{definition}[{\bf Straight-Line Grid Drawing}]
	Let $G$ be a graph. A {\em straight-line grid} drawing $d$ of $G$ is a straight-line drawing $d$ of $G$ such that (i) for every $u\in V$, $d(u)$ is a grid point (ii) For every $\{u,v\},\{u',v'\}\in E$, $d(\{u,v\})$ and $d(\{u',v'\})$ are intersected in at most one point.
\end{definition}

\begin{definition} [{\bf Grid Drawing}] \label{def:Grid graph}
Let $G=(V,E)$ be a graph. A {\em grid drawing} $d$ of $G$ is a drawing $d:V\cup E\rightarrow \mathbb{N}_0\times \mathbb{N}_0\cup \cal{P}$ such that if $\{u,v\} \in E$ then $|\fr(u)-\fr(v)|+|\fc(u)-\fc(v)|=1$.
%, is a function, that maps each vertex $v$ of $G$ to a point $(i,j)$ of an integer grid; then, $i$ and $j$ are also denoted as $\fr(v)$ and $\fc(v)$, respectively, that is, $d_V(v) = (\fr(v), \fc(v))$. The function $d_E:E\rightarrow \cal{P}$, maps every edge $\{u,v\}\in E$ to a path $p\in \cal{P}$ between $d_V(u)$ and $d_V(v)$. 
%Let $G=(V,E)$ be a graph. A {\em grid drawing} $d$ of $G$ is a straight-line drawing, where $d:V\rightarrow \mathbb{N}_0\times \mathbb{N}_0$ of $G$ 
\end{definition}

We now extend the concept of a grid drawing to a {\em rectilinear grid drawing}, where the edges are mapped to variable length line segments parallel to the axes (e.g., see Figure~\ref{fi:DrawEx3}):

\begin{definition} [{\bf Rectilinear Grid Drawing}] \label{def:rect draw}
	Let $G=(V,E)$ be a graph. A {\em rectilinear grid drawing} $d$ of $G$ is a drawing $d:V\cup E\rightarrow \mathbb{N}_0\times \mathbb{N}_0\cup \cal{P}$ of $G$, such that for every edge $\{u,v\}\in E$, $d(\{u,v\})$ is a line segment between $d(u)$ and $d(v)$ such that $\fr(u)=\fr(v)$ or $\fc(u)=\fc(v)$. 
\end{definition}

Further, we extend the concept of a rectilinear grid drawing to an {\em orthogonal grid drawing}, where the edges are mapped to straight-line paths, such that the edges of these paths are mapped to line segments parallel to the axes (e.g., see Figure~\ref{fi:DrawEx4}):

\begin{definition} [{\bf Orthogonal Grid Drawing}] \label{def:orthDraw}
	Let $G=(V,E)$ be a graph. An {\em orthogonal grid drawing} $d$ of $G$ is a drawing $d:V\cup E\rightarrow \mathbb{N}_0\times \mathbb{N}_0\cup \cal{P}$ of $G$, such that for every edge $\{u,v\}\in E$, $d(\{u,v\})$ is an axis-parallel path between $d(u)$ and $d(v)$. 
\end{definition}

Finally, we extend the concept of an orthogonal grid drawing to a {\em polyline grid drawing}, where the edges are mapped to straight-line paths instead of axis-parallel paths (e.g., see Figure~\ref{fi:DrawEx5}).

\begin{definition} [{\bf Polyline Grid Drawing}] \label{def:StrightD}
	Let $G=(V,E)$ be a graph. A {\em polyline grid drawing} $d$ of $G$ is a drawing $d:V\cup E\rightarrow \mathbb{N}_0\times \mathbb{N}_0\cup \cal{P}$ of $G$. 
\end{definition}

\begin{figure}[!t]
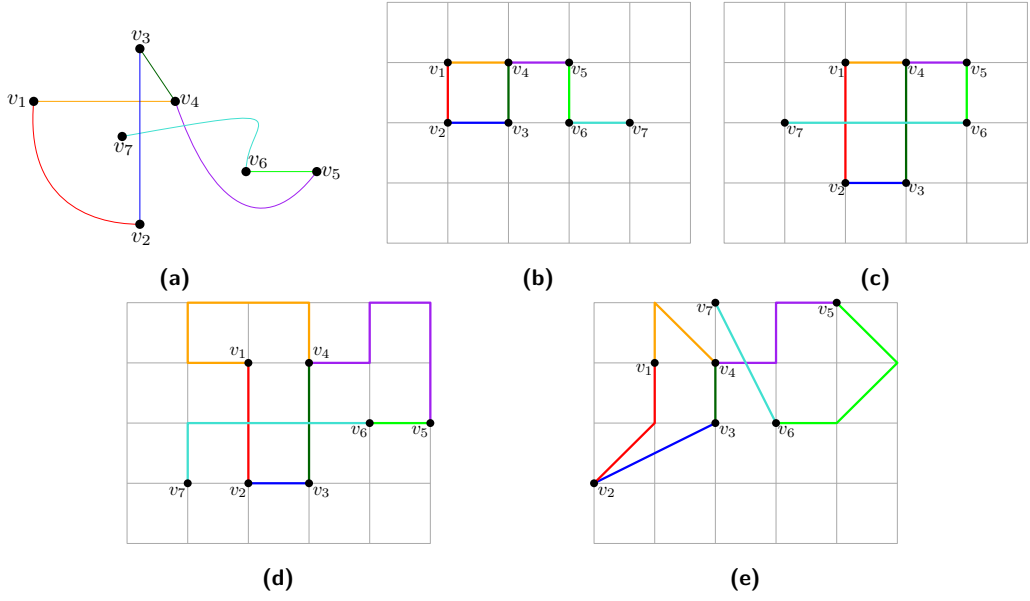

	\centering
	\begin{subfigure}{0.35\textwidth}
		\includegraphics[width = \textwidth, page = 32]{figures/drawnTreewidth}
		\subcaption{}
		\label{fi:DrawEx1}
	\end{subfigure}
	\hfil
	\begin{subfigure}{0.3\textwidth}
		\includegraphics[width = \textwidth, page = 33]{figures/drawnTreewidth}
		\subcaption{}
		\label{fi:DrawEx2}
	\end{subfigure}
\hfil
	\begin{subfigure}{0.3\textwidth}
		\includegraphics[width = \textwidth, page = 34]{figures/drawnTreewidth}
		\subcaption{}
		\label{fi:DrawEx3}
	\end{subfigure}
	\begin{subfigure}{0.3\textwidth}
		\includegraphics[width = \textwidth, page = 35]{figures/drawnTreewidth}
		\subcaption{}
		\label{fi:DrawEx4}
	\end{subfigure}
\hfil
	\begin{subfigure}{0.3\textwidth}
		\includegraphics[width = \textwidth, page = 36]{figures/drawnTreewidth}
		\subcaption{}
		\label{fi:DrawEx5}
	\end{subfigure}
	
	\caption{Different drawings (defined in Definitions \ref{def:Grid graph}-\ref{def:StrightD}) of the graph $G$ shown in (a). A grid, a rectilinear grid, an orthogonal grid and a polyline grid drawings of $G$ are shown in (b), (c), (d) and (e), respectively.} 
	\label{fi:DrawEx}
\end{figure}

\subsection{Problem Definitions}\label{sec:problemDef}
In this subsection, we give the definitions for the problems we will solve in Section~\ref{sec:ExampleScheme3} using our new concept.

%{\sc Grid Recognition} problem with runtime $n^{\OO (k)}$ where $k$, given as input, bounds the drawn treewidth of the sought realization (if one exists). In turn, this will also yield a runtime of $n^{\OO(\sqrt{n)}}$. Recall that in the {\sc Grid Recognition} problem parameterized $k$, given a graph $G=(V,E)$, the goal is to determine whether $G$ has a grid drawing of drawn treewidth at most $k$. We assume that $G$ is connected; otherwise, we apply the algorithm on each of the different connected components separately. 
%Additionally, recall that a drawing $d$ is a grid drawing if for every $u\in V(d)$, $d(u)\in \mathbb{N}_0\times \mathbb{N}_0$ and for every $\{u,v\} \in E(d)$, $|\fr(u)-\fr(v)|+|\fc(u)-\fc(v)|=1$. 

\begin{definition}[{\bf Grid Recognition Problem}]\label{def:gridRec}
	The {\sc Grid Recognition} problem is, given a graph $G$, to determine whether $G$ has a grid drawing.
\end{definition}

\begin{definition}[{\bf Crossing Minimization Problem on Straight-Line Grid Drawings}]\label{def:crossMin}
	The {\sc Straight-line Grid Crossing Minimization} problem is, given a graph $G$ and $h,w\in \mathbb{N}$, to construct a straight-line grid drawing $d$ of $G$ (if one exists) such that: (i) $d$ is strictly bounded by $R_{h,w}$,(ii) $d$ has minimum number of crossings out of all the straight-line grid drawings of $G$ which are strictly bounded by $R_{h,w}$. If such a drawing does not exists, return ``no-instance''.
\end{definition}

In the {\sc Orthogonal Compaction} problem we get a connected graph $G$. We assume to have an order on the vertices, that is, for every $u,v\in V$ such that $u\neq v$, either $u>v$ or $v<u$. In addition to $G$, we have, for every $\{u,v\}\in E$ where $u>v$, the relative position of $v$ compered to $u$, that is, the {\em direction} of the $\{u,v\}$ from $u$ to $v$. We denote these directions by $\mathsf{U},\mathsf{D},\mathsf{L}$ and $\mathsf{R}$; this stands for ``up'', ``down'', ``left'' and ``right'', respectively. We assume that there exists a planar rectilinear grid drawing of $G$ such that for every $\{u,v\}\in E$, the relative position of $v$ compered to $u$ is as given as input. Our goal is to find such a drawing of minimum area. We start by defining the problem formally. For this purpose, we first have the following definition:

%\begin{definition}[{\bf Vertex Environment}]
%	Let $G$ be a connected graph with maximum degree $4$ and let $u\in V$. A function $\mathsf{Env}_u:\{\mathsf{N},\mathsf{S},\mathsf{E},\mathsf{W}\}\rightarrow N(u)\cup \{\mathsf{Null}\}$ is a {\em vertex environment} of $u$.
%\end{definition}

%\begin{definition}[{\bf Drawing Respects a Vertex Environment}]
%	Let $G$ be a connected graph with maximum degree $4$, let $u\in V$, let $\mathsf{Env}_u$ be a vertex environment of $u$ and let $d$ be an orthogonal grid drawing of $G$. We say that $d$ {\em respects $\mathsf{Env}_u$} if the following conditions are satisfied:
%\begin{enumerate}
%\item If $\mathsf{Env}_u(\mathsf{N})\neq \mathsf{Null}$, then $(d(u)+(1,0))\in d(\{u,\mathsf{Env}_u(N)\})$.
%\item If $\mathsf{Env}_u(\mathsf{S})\neq \mathsf{Null}$, then $(d(u)+(-1,0))\in d(\{u,\mathsf{Env}_u(S)\})$.
%\item If $\mathsf{Env}_u(\mathsf{W})\neq \mathsf{Null}$, then $(d(u)+(0,-1))\in d(\{u,\mathsf{Env}_u(W)\})$.
%\item If $\mathsf{Env}_u(\mathsf{E})\neq \mathsf{Null}$, then $(d(u)+(0,1))\in d(\{u,\mathsf{Env}_u(E)\})$.
%\end{enumerate}
%\end{definition}

%\begin{definition}[{\bf Edge Turning Description}]
%	Let $G$ be a connected graph with maximum degree $4$ and let $\{u,v\}\in E$. A sequence $T_{\{u,v\}}=(t_1\ldots,t_\ell)$ where $t_i\in \{\mathsf{U},\mathsf{D},\mathsf{L},\mathsf{R}\}$ for every $1\leq i\leq \ell$ is an {\em edge turning description} of $\{u,v\}$.
%\end{definition}

\begin{definition}[{\bf Drawing Respects an Edge Direction}]\label{def:dir}
	Let $G$ be a connected graph, let $\{u,v\}\in E$ such that $u>v$, and let $\mathsf{dir}_{\{u,v\}}\in \{\mathsf{U},\mathsf{D},\mathsf{L},\mathsf{R}\}$. Let $d$ be a rectilinear grid drawing of $G$. We say that $d$ {\em respects $\mathsf{dir}_{\{u,v\}}$} if the following conditions are satisfied
	\begin{enumerate}
		\item If $\mathsf{dir}_{\{u,v\}}=\mathsf{U}$, then $\fr(v)=\fr(u)$ and $\fc(v)>\fc(u)$.
		\item If $\mathsf{dir}_{\{u,v\}}=\mathsf{D}$, then $\fr(v)=\fr(u)$ and $\fc(v)<\fc(u)$.
		\item If $\mathsf{dir}_{\{u,v\}}=\mathsf{L}$, then $\fc(v)=\fc(u)$ and $\fr(v)<\fr(u)$.
		\item If $\mathsf{dir}_{\{u,v\}}=\mathsf{R}$, then $\fc(v)=\fc(u)$ and $\fr(v)>\fr(u)$.
	\end{enumerate}  
\end{definition}

Now, we define the problem {\sc Orthogonal Compaction} as follows:

\begin{definition}[{\bf Orthogonal Compaction Problem}]\label{def:OrtCom}
	Let $G$ be a connected graph. For every $\{u,v\}\in E$ let $\mathsf{dir}_{\{u,v\}}\in \{\mathsf{U},\mathsf{D},\mathsf{L},\mathsf{R}\}$. The {\sc Orthogonal Compaction} problem is to find a planar rectilinear grid drawing $d$ of $G$ such that (i) for every $\{u,v\}\in E$, $d$ respects $\mathsf{dir}_{\{u,v\}}$, and (ii) $d$ is strictly bounded by $R_{h,w}$ such that $(h-1)\cdot (w-1)$ is minimum. 
\end{definition}

%!TEX root =Main-Movement.tex

\section{The Concept of Drawn Decomposition}\label{sec:drawnSep}

%\begin{definition} [{\bf Interior of a Cycle}] \label{def:Frame}
%Let $G=(V,E)$ be a graph and let $f$ be a rectilinear drawing of $G$. Let $H\subseteq V$ is a {\em frame} of $G\setminus H$ in $f$ if  . Then the {\em boundary} of $V'$, denoted by $B(V')$, is the set of vertices of $V'$ that have a neighbor in $V\setminus V'$, i.e., $B(V')=\{v'\in V'~|~$ there exists $v\in V\setminus V'$ such that  $\{v,v'\} \in E \}$.   
%\end{definition}

%\begin{definition} [{\bf Drawn Walk}] \label{def:DrawnWalk}
%Let $G=(V,E)$ be a graph and let $f$ be a rectilinear drawing of $G$. A sequence of points $(p_1,\dots, p_k)$ is a {\em drawn walk} in $f$ if for every $0\leq i\leq k-1$, $\fr(p_i)=\fr(p_{i+1})$ or $\fc(p_i)=\fc(p_{i+1})$.
%\end{definition}

\begin{figure}[t]
	\centering
	\includegraphics[page=37, width=0.35\textwidth]{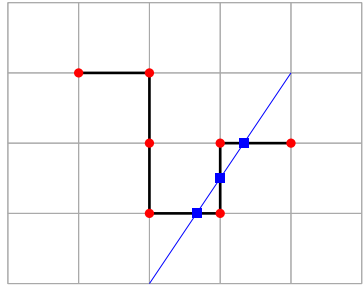}
	\caption{An example for $\gp(P)$ (drawn in red) and some of the points in $\gi(P)$ (drawn in blue) for the axis-parallel path $P$ drawn in black.}
	\label{fig:GI}
\end{figure}

 \medskip\noindent{\bf Frames.} To define a drawn decomposition of a given graph, we first define the concept of a {\em frame}, which is simply an axis-parallel cycle. Still, we will use the term frame since cycles identified as frames will play a special role in our decomposition. 

\begin{definition} [{\bf Frame}] \label{def:Frame}
A {\em frame} is an axis-parallel cycle $f=(p_1,\ldots, p_k)$.  
\end{definition}

Observe that every frame $f$ is a Jordan curve, so it divides the plane into an interior region $\ir(f)$, bounded by $f$, and an exterior region $\er(f)$, which is unbounded.  Moreover, $f$ is the boundary of both $\ir(f)$ and $\er(f)$. Let $f$ be a frame and let $p\in \mathbb{R}^2$ be a point.  We say that $p$ is {\em inside} (resp., {\em outside}) $f$ if $p\in \ir(f)$ ($p\in \er(f)$). We say that $p$ is {\em strictly inside} (resp., {\em strictly outside}) $f$ if $p\in \ir(f)$ ($p\in\er(f)$) but not on $f$. Similarly, given a frame $f$ and a drawn graph $G$ with a prescribed drawing $d$, we say that $d$ is {\em inside} (resp., {\em outside}) $f$ if every point $p\in\mathbb{R}^2$ on $d$ is inside (resp., {\em outside}) $f$. Moreover, we say that $d$ is {\em strictly inside} (resp., {\em strictly outside}) $f$ if every point $p\in\mathbb{R}^2$ on $d$ is strictly inside (resp., {\em strictly outside}) $f$.  

We denote by $\mathsf{Frames}$ the set of all frames.

Let $P$ be an axis-parallel path and let $f$ be a frame. We denote by $\gp(P)$ the set of intersection points of $P$ with the set of grid points (e.g., see the red points in Figure~\ref{fig:GI}). We denote by $\gps(f)$ the set of grid points that are strictly inside $f$. We denote by $\gi(P)$ the set of intersection points of every edge $e$ of $P$ with every line segment connecting two grid points that is not parallel to $e$ (e.g., see the blue points in Figure~\ref{fig:GI}). %In particular, see the line segment connecting two grid points (drawn in blue), and its intersection points (drawn in blue) with edges of the path (drawn in black).  
Observe that $\gp(P)\subseteq \gi(P)$. 

We denote by $\gis(f)$ the set of all the points in $\gi(P)$ for any axis-parallel path $P$ inside $f$.   

%We denote by $V_d(f)$ the set of vertices of $G$ that are drawn inside the interior (excluding the boundary) of $f$ in $d$. When $d$ is clear from the context, we refer to $V_d(f)$ as $V(f)$.    

%Let $f=(p_1,\ldots, p_k)$ be a frame, let $G=(V,E)$ be a graph and let $d$ be a segmented grid drawing of $G$. 

%We say that $p$ is {\em on} $f$, if $p\in \ir(f)\cap \er(f)$.

%Let $P=(p_1,\ldots, p_k)$ be an axis-parallel path, let $p\in \gi(P)$ and let $0<\epsilon<|(p_1,\ldots p)|$. We denote by $[p-\epsilon,p]$ the drawn walk of length $\epsilon$ that is on $w$ and $p$ is the endpoint that is closer to $p_k$. Similarly, for $0<\epsilon<|(p,\ldots p_k)|$, we denote by $[p,p+\epsilon]$ the drawn walk of length $\epsilon$ that is on $w$ and $p$ is the endpoint that is closer to $p_1$.       

%$\mathsf{point}_\epsilon(\ell(a,b))$.

%Let $a$ and $b$ be two points in $\mathbb{R}^2$ and let $\epsilon > 0$. We denote $\ell(a,a_\epsilon)$ by $\mathsf{line}_\epsilon(a,b)$, where $a_\epsilon$ is the point on the line $\ell(a,b)$ at distance $\epsilon$ from $a$ if it exists.

Let $G$ be a graph, let $d$ be a polyline grid drawing of $G$, let $f$ be a frame, and let $\{u,v\}$ be an edge of $G$ that intersects $f$. Now, we would like to consider some specific set of points on $d(\{u,v\})$ with respect to the frame $f$, that, roughly speaking, includes every point where the path $d(\{u,v\})$ ``goes'' from $\ir(f)$ to $\er(f)$, and where at least one of the containment is strict. In other words, the points included are the intersection points of $d(\{u,v\})$ and $f$ such that ``right after them'' or ``right before them'' there is no intersection between $d(\{u,v\})$ and $f$. Observe that these points belong to the set $\gi(d(\{u,v\}))\cap \gi(f)$. Remind that, for $a,b\in \mathbb{R}^2$ and $\epsilon > 0$, we denote $\ell(a,a_\epsilon)$ by $\mathsf{line}_\epsilon(a,b)$, where $a_\epsilon$ is the point on the line $\ell(a,b)$ at distance $\epsilon$ from $a$ if it exists. Formally, we have the following definition.

\begin{definition} [{\bf Turning Points of a Drawn Edge in a Frame}] \label{def:IntPoint}
Let $G=(V,E)$ be a graph and let $d$ be a polyline grid drawing of $G$. Let $f$ be a frame and let $\{u,v\}$ be an edge of $G$. First, $(d(u),\{u,v\})$ (and similarly $(d(v),\{u,v\})$) is a {\em turning point} in $f$ if $d(u)\in \gp(f)$ ($d(v)\in \gp(f))$ (see $(d(u_3),\{u_3,u_4\})$ in Figure~\ref{fig:TP}). Second, let $p\in \gi(d(\{u,v\}))\cap \gi(f)$ such that $p\notin \{d(u),d(v)\}$. Let $p_i$ and $p_j$ be the two vertices of the path $d(\{u,v\})=(p_1,p_2,\ldots,p_k)$ such that $j>i$, $p\notin \{p_i,p_j\}$, and $(p_i,\ldots,p_j)$ is the minimum size subpath of $d(\{u,v\})$ intersecting $p$. Then, $(p,\{u,v\})$ is a {\em turning point} in $f$ if there exists $\epsilon>0$ such that at least one of the following conditions is satisfied:
\begin{itemize}
	\item $\gi(\mathsf{line}_\epsilon(p,p_i))\cap \gi(f)=\{p\}$ (see $(p_3,\{u_1,u_2\})$ in Figure~\ref{fig:TP}). 
	\item $\gi(\mathsf{line}_\epsilon(p,p_j))\cap \gi(f)=\{p\}$ (see $(p_1,\{u_1,u_2\})$ in Figure~\ref{fig:TP}).
%\item $\gi(p^\epsilon_{p_i,p})\cap \gi(f)=\{p\}$ (see $(p_3,\{u_1,u_2\})$ in Figure~\ref{fig:TP}). 
%\item $\gi(p^\epsilon_{p,p_j})\cap \gi(f)=\{p\}$ (see $(p_1,\{u_1,u_2\})$ in Figure~\ref{fig:TP}).
\end{itemize}
\end{definition}

%Observe that, as $d(\{u,v\})$ is a straight-line path and $p\notin\{p_i,p_j\}$, either $j=i+1$ or $j=i+2$ and $p=p_{i+1}$. Therefore, $p^\epsilon_{p_i,p}$ and $p^\epsilon_{p,p_j}$ are well defined.
 
 Observe that, as $d(\{u,v\})$ is a straight-line path and $p\notin\{p_i,p_j\}$, either $j=i+1$ or $j=i+2$ and $p=p_{i+1}$. Therefore, $\mathsf{line}_\epsilon(p,p_i)$ and $\mathsf{line}_\epsilon(p,p_j)$ are well defined.

Let $G=(V,E)$ be a graph and let $d$ be a polyline grid drawing of $G$. Let $f$ be a frame. We denote by $\tp(f,d)$ the number of turning points of the edges of $d$ in $f$, that is, $\mathsf{\tp}(f,d)=|\{(p,\{u,v\})~|~\{u,v\}\in E, p\in \gi(d(\{u,v\}))\cap \gi(f), (p,\{u,v\})$ is a turning point in $f\}|$.
When $d$ is clear from the context, we refer to $\mathsf{\tp}(f,d)$ as $\mathsf{\tp}(f)$.

%\begin{observation} \label{obs:twoframes}
%Let $f=(p_1,\ldots,p_k=p_1)$ be a frame and let $p,p'\in \gp(f)$ be two points on the frame, such that. Then, there exist two axis-parallel paths $P^1(p_1,p_2)=(p_1,p_{a_1},\ldots p_{a_\ell},p_2)$ and $P^2(p_1,p_2)=(p_1,p_{b_1},\ldots p_{b_q},p_2)$ such that $\gp(P^1(p_1,p_2))\cap \gp(P^2(p_1,p_2))=\{p_1,p_2\}$ and $\gp(P^1(p_1,p_2))\cup \gp(P^2(p_1,p_2))=P(f)$.      
%\end{observation}

%When $p_1$ and $p_2$ are clear from the context, we we refer to $w^1(p_1,p_2)$ and $w^2(p_1,p_2)$ simply as $w^1$ and $w^2$.

%For a frame $F=(p_1,\ldots, p_k)$ and two points in the frame $p_1,p_2\in P(F)$, we denote by $W_1$ and $W_2$ the drawn walks $W^1_{\mathsf{min}}$ and $W^2_{\mathsf{min}}$ correspondingly, where $W^1$ and $W^2$ are the two drawn walks from Observation \ref{obs:twoframes}.     

 \medskip\noindent{\bf Cutters and Associated Vertices.} We now define an axis-parallel path $c$ that splits a frame $f$ into two smaller subframes. We call $c$ a {\em cutter of $f$}.

\begin{definition} [{\bf Cutter of a Frame}] \label{def:Cutter}
Let $f$ be a frame. An axis-parallel path $c=(c_1,\ldots,c_k)$ is a {\em cutter} of $f$ if $c$ is inside $f$ and $\pp(f)\cap\pp(c)=\{c_1,c_k\}$. 
\end{definition}

We say that two axis-parallel paths (or cycles) $P$ and $Q$ are {\em equal} if $\pp(P)=\pp(Q)$.
Let $P$ be an axis-parallel path (cycle). We denote by $P^{\mathsf{min}}$ the axis-parallel path (cycle) that is equal to $P$, and has the minimum number of vertices among the axis-parallel paths (cycles) that are equal to $P$. Observe that $P^{\mathsf{min}}$ is unique.

Observe that, due to the Definition \ref{def:Cutter}, given a frame $f=(p_1,\ldots,p_k)$ and a cutter $c=(c_1,\ldots,c_t)$ of $f$, either $c_1=p_i$ for some $1\leq i\leq k$, or there is no such an $i$ and $c_1$ is on the line segment $\ell(p_j,p_{j+1})$ for some $1\leq j< k$. For the latter case, observe that the frame $f'=(p_1,\ldots,p_j,c_1,p_{j+1},\ldots,p_k)$ is equal to $f$. A symmetric argument holds for $c_t$. Therefore, we denote by $\frc$ the frame equivalent to $f$ that contains exactly the vertices of $f$, $c_1$ and $c_t$.

\begin{definition} [{\bf Frames Associated with a Cutter}] \label{def:assFrameCutter}
Let $f$ be a frame and let $c=(c_1,\ldots,c_t)$ be a cutter of $f$. Let $\frc=(p_1\ldots,p_i=c_1,\ldots,p_j=c_t,\ldots, p_k)$. Let $f_1=(c_1,c_2,\ldots, c_t=p_{j},p_{j+1}\ldots,p_k=p_1,\ldots p_i=c_1)$ and $f_2=(c_1,\ldots,c_t=p_j,p_{j-1},\ldots,p_{i+1},$ $p_i=c_1)$ be the two frames obtained from $f$ and $c$. Let $p$ be the minimum grid point in the set $\gp(f)\setminus \{c_1,c_t\}$. Let $\tilde{f}$ be the frame in $\{f_1,f_2\}$ that contains $p$, and let $\hat{f}$ be the other frame in $\{f_1,f_2\}$. Then, the {\em two frames associated} with $f$ and $c$ are $\mathsf{AssoFr}_1(f,c)=\tilde{f}^{\mathsf{min}}$ and $\mathsf{AssoFr}_2(f,c)=\hat{f}^{\mathsf{min}}$.
\end{definition}

For simplicity, when no confusion arises, we denote $f_1(c)=\mathsf{AssoFr}_1(f,c)$ and $f_2(c)=\mathsf{AssoFr}_2(f,c)$.
Let $G=(V,E)$ be a graph and let $d$ be a polyline grid drawing of $G$. Let $f$ be a frame. For $\{u,v\}\in E$, we say that $f$ {\em separates} $\{u,v\}$ in $d$ (or $\{u,v\}$ is {\em separated} by $f$ in $d$) if one among $u$ and $v$ is drawn strictly outside $f$ in $d$, and the other one is drawn strictly inside $f$ in $d$. 
Next, we define the {\em set of vertices associated with a frame}:

\begin{definition} [{\bf Vertices Associated with a Frame}] \label{def:AssVer}
Let $G=(V,E)$ be a graph and let $d$ be a polyline grid drawing of $G$. Let $f$ be a frame. The set of {\em vertices associated} with $f$ in $d$, denoted by $\av(f)$, is the set of vertices of $G$ that are intersected by $f$ and the endpoints of edges that are separated by $f$ in $d$ (see Figure~\ref{fig:assFr}).   
\end{definition}

Let $G=(V,E)$ be a graph and let $d$ be a polyline grid drawing of $G$. Let $f$ be a frame. Let $c$ be a cutter of $f$ and let $\{u,v\}\in E$. We say that $(c,f)$ {\em separates} $\{u,v\}$ in $d$ (or $\{u,v\}$ is {\em separated} by $(c,f)$ in $d$) if $u$ and $v$ are drawn inside $f$ and there exists $i\in \{1,2\}$ such that one among $u$ and $v$ is drawn strictly inside $f_i(c)$ in $d$ and the other one is drawn strictly outside $f_i(c)$ in $d$. 

\begin{definition} [{\bf Vertices Associated with a Cutter}] \label{def:AssVerCutter}
Let $G=(V,E)$ be a graph and let $d$ be a polyline drawing of $G$. Let $f$ be a frame and let $c$ be a cutter of $f$. The {\em set of vertices associated} with $(c,f)$ in $d$, denoted by $\av(c,f)$, is the set of vertices of $G$ that are intersected by $c$ and of the endpoints of edges that are separated by $(c,f)$ in $d$ (see Figure~\ref{fig:assCu}).   
\end{definition}

\medskip\noindent{\bf Cost and Rectangular.} Now, we define the {\em cost} of a frame in a polyline grid drawing $d$ as the sum of the number of vertices of the frame, the number of vertices of $G$ on the frame, and the total number of turning points of any edge in the frame. This number will be in used to measure the quality of (or how ``complicated'' is) the decomposition defined later in this section.
 
\begin{definition} [{\bf Cost of a Frame}] \label{def:sizeFr}
Let $G=(V,E)$ be a graph, let $d$ be a polyline grid drawing of $G$ and let $f$ be a frame. The {\em cost} of $f$ in $d$, denoted by $\sizef(f)$, is the sum of the number of vertices of $f^{\mathsf{min}}$, the number of vertices of $G$ on $f$ with respect to $d$ and $\tp(f)$.   
\end{definition}

For later use, we would like define the {\em contribution of a vertex to $\sizef(f)$}.

\begin{definition} [{\bf Contribution of a Vertex to the Cost}] \label{def:Contr}
	Let $G=(V,E)$ be a graph, let $d$ be a polyline grid drawing of $G$, let $f$ be a frame and let $v\in V$. The {\em contribution of $v$ to $\sizef(f)$} is $a+\frac{1}{2}\cdot b$ such that:
	\begin{itemize}
		\item $a\in \{0,1\}$, where $a=1$ if and only if $v$ is drawn on $f$ in $d$.
		\item $b\in \mathbb{N}_0$ is the total number of turning points in $f$ of any edge having $v$ as one of its endpoints.
	\end{itemize} 
\end{definition}

Observe that $\sizef(f)$ can be expressed by the contributions of the every vertex in $G$: 

\begin{observation}\label{obs:Contr}
Let $G=(V,E)$ be a graph, let $d$ be a polyline grid drawing of $G$, and let $f$ be a frame. Then, 	$\sizef(f)$ is exactly the sum of contributions of the every vertex in $G$plus the number of vertices of $f^{\mathsf{min}}$.
\end{observation}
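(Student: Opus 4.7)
The plan is to verify the claimed equality by expanding the sum of contributions term by term and matching each resulting summand against the three summands in the definition of $\sizef(f)$. Since the contribution of a vertex $v$ is defined as $a_v + \tfrac{1}{2}b_v$, the sum of contributions over all $v \in V$ splits cleanly into $\sum_{v \in V} a_v + \tfrac{1}{2}\sum_{v \in V} b_v$, and I would identify each of these two pieces with one of the summands appearing in Definition~\ref{def:sizeFr}.

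First, I would argue that $\sum_{v \in V} a_v$ equals the number of vertices of $G$ drawn on $f$ in $d$. This is immediate from the definition of $a_v$ as an indicator for the event ``$v$ is drawn on $f$'': summing indicators over $V$ counts exactly the elements of $V$ that lie on $f$.

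The only non-trivial step is to show that $\tfrac{1}{2}\sum_{v \in V} b_v = \tp(f)$. For this, I would use a standard double-counting argument: consider the set of pairs $(v, (p, e))$ such that $v \in V$ is an endpoint of the edge $e$ and $(p, e)$ is a turning point of $e$ in $f$. Counting this set by first fixing $v$ gives precisely $\sum_{v \in V} b_v$, by the definition of $b_v$. Counting it by first fixing the turning point $(p, e)$ gives $2 \cdot \tp(f)$, since every edge $e = \{u_1, u_2\}$ in the graph has exactly two endpoints, and hence every turning point $(p, e)$ contributes exactly twice to the count. Equating the two counts yields $\sum_{v \in V} b_v = 2 \cdot \tp(f)$, so that $\tfrac{1}{2}\sum_{v \in V} b_v = \tp(f)$, as required.

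Combining the two identities, the sum of contributions equals (the number of vertices of $G$ on $f$) $+$ $\tp(f)$; adding the number of vertices of $f^{\mathsf{min}}$ recovers exactly $\sizef(f)$ as defined in Definition~\ref{def:sizeFr}. There is no real obstacle here: the main (and only) subtlety is noticing that the factor of $\tfrac{1}{2}$ in the definition of the contribution is precisely calibrated so that each turning point, being associated with a single edge having two endpoints, is counted with total weight $1$ across the sum over $V$. Once this is articulated, the observation follows.
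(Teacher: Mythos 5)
Your proof is correct and is exactly the intended justification for this observation, which the paper leaves unproved precisely because the double-counting of each turning point over the two endpoints of its (simple, loop-free) edge is immediate. The factor $\tfrac{1}{2}$ in Definition~\ref{def:Contr} is indeed calibrated so that $\tfrac{1}{2}\sum_{v\in V} b_v = \tp(f)$, and the rest follows by matching summands with Definition~\ref{def:sizeFr}.
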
  

Given a graph $G=(V,E)$ and a polyline grid drawing $d$ of $G$, we would like to construct a simple frame that strictly bounds $d$. For this purpose, we will consider the ``minimum size'' rectangle with this property. First, we denote the following:

\begin{itemize}
\item $r_{\mathrm{min}}=\mathrm{min}\{ r\in \mathbb{N}_0~|~$there exists $c\in \mathbb{N}_0$ such that $d^{-1}((r,c))\neq \emptyset \}$.
\item $c_{\mathrm{min}}=\mathrm{min}\{ c\in \mathbb{N}_0~|~$there exists $r\in \mathbb{N}_0$ such that $d^{-1}((r,c))\neq \emptyset \}$.
\item $r_{\mathrm{max}}=\mathrm{max}\{ r\in \mathbb{N}_0~|~$there exists $c\in \mathbb{N}_0$ such that $d^{-1}((r,c))\neq \emptyset \}$.
\item $c_{\mathrm{max}}=\mathrm{max}\{ c\in \mathbb{N}_0~|~$there exists $r\in \mathbb{N}_0$ such that $d^{-1}((r,c))\neq \emptyset \}$.
\end{itemize}

Without loss of generality, we assume that $r_{\mathrm{min}}=c_{\mathrm{min}}=1$, otherwise we can modify $d$ as follows. For every vertex $u\in V$, we replace $d(u)$ by $d(u)=(\fr(u)-{r}_{\mathrm{min}}+1,\fc(u)-{c}_{\mathrm{min}}+1)$, and change the drawing of the edges accordingly. Intuitively, this operation ``shifts'' the entire drawing $d$ of $G$ in parallel to the axes.
For a graph $G=(V,E)$ and a straight-line grid drawing $d$ of $G$, the {\em rectangular} of $G$ in $d$, denote by $R_d$ is the frame $((0,0),(0,c_{\mathrm{max}}+1),(r_{\mathrm{max}}+1,c_{\mathrm{max}}+1),(r_{\mathrm{max}}+1,0),(0,0))$. Observe that the set of vertices that are drawn strictly inside $R_d$ is $V$, and $\av(R_d)=\emptyset$ (e.g., see Figure~\ref{fig:Rd}). 

 \medskip\noindent{\bf Drawn Tree Decomposition.} We are now ready to define our decomposition of a polyline grid drawing $d$ of a graph $G$, called a {\em drawn tree decomposition}. The decomposition consists of a rooted binary tree ${\cal T}=(V_T,E_T)$ with root $v_r$, and a function $\alpha: V_T\rightarrow \mathsf{Frames}$ that associates each vertex of $\cal{T}$ with a frame, defined by induction. The frame associated with $v_r$ is $R_d$. In addition, every vertex $v\in V_T$ of $\cal{T}$ that is not a leaf, is associated with a cutter $c_v$ of $\alpha(v)$. Then, the frames associated with the two children of $v$ are $f_1(c_v)$ and $f_2(c_v)$, where $f=\alpha(v)$. Moreover, we include a tree decomposition $({\cal T},\beta)$ of $G$ such that : For internal vertex $v\in V_T$, $\beta(v)$ is the set of vertices associated with $\alpha(v)$ or with $(c_v,\alpha(v))$; for $v\in V_T$ that is a leaf, $\beta(v)$ is the set of vertices associated with $\alpha(v)$. Later, in Section \ref{sec:DrawnTreeFrame}, we will prove that the demand that $(\cal{T},\beta)$ is a tree decomposition can be dropped as it is implied by $(\cal{T},\alpha)$.

\begin{definition} [{\bf Drawn Tree Decomposition}] \label{def:DrawnDeco}
Let $G=(V,E)$ be a graph and let $d$ be a polyline grid drawing of $G$. A {\em drawn tree decomposition} of $d$ is a triple $({\cal T}=(V_T,E_T),\beta: V_T\rightarrow 2^V,\alpha: V_T\rightarrow \mathsf{Frames})$ where $({\cal T},\beta)$ is a tree decomposition of $G$, ${\cal T}$ is a binary rooted tree, and:
\begin{enumerate}
\item $\alpha(v_r)=R_d$, where $v_r$ is the root of ${\cal T}$.
%\item Every $v\in V_T$ that is not a leaf, has exactly two children $v_1$ and $v_2$.
\item For every internal vertex $v\in V_T$, there exists a cutter $c_v$ of $f$ such that $\alpha(v_1)=f_1(c_v)$ and $\alpha(v_2)=f_2(c_v)$, where $\alpha(v)=f$ and $v_1, v_2$ are the children of $v$ in ${\cal T}$. We say that $c_v$ is the {\em cutter associated with $v$}.   
\item A vertex $v\in V_T$ is a leaf if and only if there are no grid points in the interior of $\alpha(v)$. 
\item For every internal vertex $v\in V_T$, $\beta(v)=\av(\alpha(v))\cup \av(c_v,\alpha(v))$.  \label{def: DC Con:5} %, where $\alpha(v)=f$ and $c$ is the cutter associated with $v$. 
\item For every leaf $v\in V_T$, $\beta(v)=\av(\alpha(v))$.   \label{def: DC Con:6}%, where $\alpha(v)=f$.
\end{enumerate}  
The {\em width} of a drawn tree decomposition, denoted by $\mathsf{width}({\cal T},\beta,\alpha)$, is the maximum cost of a frame in $d$, i.e., $\mathsf{width}({\cal T},\beta,\alpha)=\max\{\sizef(\alpha(v))~|~v\in V_T\}$.
The {\em drawn treewidth} of $d$, denoted by $\mathsf{dtw}(d)$, is the minimum width of a drawn tree decomposition of $d$, i.e., $\mathsf{dtw}(d)=\min\{\mathsf{width}({\cal T},\beta,\alpha)~|~({\cal T},\beta,\alpha)$ is a drawn tree decomposition of $d\}$.    
\end{definition}

For every $v\in V_T$, we say that $\alpha(v)$ is the {\em frame associated with $v$}.  

Let $G=(V,E)$ be a graph and let $d$ be a polyline grid drawing of $G$, and let $f$ be a frame. We denote by $\mathsf{VerIn}(f)$ the set of vertices drawn strictly inside $f$ in $d$.

\begin{definition} [{\bf Balanced Drawn Tree Decomposition}] \label{def:BalaDrawnDeco}
Let $G=(V,E)$ be a graph and let $d$ be a polyline grid drawing of $G$. A {\em balanced drawn tree decomposition} of $G$ is a drawn tree decomposition $({\cal T},\beta,\alpha)$ where the following condition is satisfied:
For every non-root vertex $v\in V_T$ with parent $v'$, $|\mathsf{VerIn}(f)|\leq \frac{2}{3}|\mathsf{VerIn}(f')|$ where $\alpha(v)=f$ and $\alpha(v')=f'$.            
\end{definition}

%!TEX root =Main-Movement.tex

\section{Frame-Tree}\label{sec:DrawnTreeFrame}

We now present a seemingly simpler definition for our drawn tree decomposition, called {\em frame-tree}, that will be in use in this section. This definition includes the same structure of a binary tree and a frame associated with every vertex, but here we do not demand that the tree decomposition (specifically, $\beta$) is part of the drawn tree decomposition.

\begin{definition} [{\bf Frame-Tree}] \label{def:DrawnFrDeco}
Let $G=(V,E)$ be a graph and let $d$ be a polyline grid drawing of $G$. A {\em frame-tree} of $d$ is a pair $({\cal T}=(V_T,E_T),\alpha: V_T\rightarrow \mathsf{Frames})$ where $\cal T$ is a binary rooted tree, and:
\begin{enumerate}
\item $\alpha(v_r)=R_d$, where $v_r$ is the root of $\cal T$. \label{def:tfCon1}
\item For every internal vertex $v\in V_T$, there exists a cutter $c_v$ of $f$ such that $\alpha(v_1)=f_1(c_v)$ and $\alpha(v_2)=f_2(c_v)$, where $\alpha(v)=f$ and $v_1, v_2$ are the children of $v$ in ${\cal T}$. We say that $c_v$ is the {\em cutter associated with $v$}.   
\item A vertex $v\in V_T$ is a leaf if and only if there are no grid points in the interior of $\alpha(v)$. \label{def:tfCon3}            
\end{enumerate}   
\end{definition}

Observe that the conditions of a frame-tree (given in Definition \ref{def:DrawnFrDeco}), are identical to the first three conditions of a drawn tree decomposition (given in Definition \ref{def:DrawnDeco}). Moreover, the last two conditions of Definition \ref{def:DrawnDeco} concern the function $\beta$, which is not part of a frame-tree. So, it might seem like Definition \ref{def:DrawnDeco} is more restrictive than Definition \ref{def:DrawnFrDeco}. In particular, given a drawn tree decomposition of a polyline grid drawing $d$ of $G$, $({\cal T}=(V_T,E_T),\beta: V_T\rightarrow 2^V,\alpha: V_T\rightarrow \mathsf{Frames})$, it is easy to see that $({\cal T},\alpha)$ is a frame-tree of $d$:  

\begin{observation}\label{obs:DTDIsDTF}
	Let $G$ be a graph, let $d$ be a polyline grid drawing of $G$ and let $({\cal T},\beta,\alpha)$ be a drawn tree decomposition of $d$. Then, $({\cal T},\alpha)$ is a frame-tree of $d$.  
\end{observation}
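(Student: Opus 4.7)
The plan is to prove this by direct verification of the definitions: the three defining conditions of a frame-tree (Definition \ref{def:DrawnFrDeco}) are literally a subset of the five defining conditions of a drawn tree decomposition (Definition \ref{def:DrawnDeco}). So the proof is essentially an unpacking exercise rather than anything substantive.

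First, I would note that by hypothesis $({\cal T},\beta,\alpha)$ is a drawn tree decomposition of $d$, and therefore $\cal T$ is a binary rooted tree (this structural requirement is inherited directly from Definition~\ref{def:DrawnDeco}). Next, I would check the three numbered conditions of Definition~\ref{def:DrawnFrDeco} one by one:
\begin{enumerate}
\item Condition~\ref{def:tfCon1} (that $\alpha(v_r)=R_d$) is exactly condition~1 of Definition~\ref{def:DrawnDeco}.
\item The second condition (existence of a cutter $c_v$ of $f=\alpha(v)$ with $\alpha(v_1)=f_1(c_v)$ and $\alpha(v_2)=f_2(c_v)$ for every internal vertex $v$) is exactly condition~2 of Definition~\ref{def:DrawnDeco}.
\item Condition~\ref{def:tfCon3} (that $v\in V_T$ is a leaf if and only if there are no grid points in the interior of $\alpha(v)$) is exactly condition~3 of Definition~\ref{def:DrawnDeco}.
\end{enumerate}
Since $({\cal T},\beta,\alpha)$ satisfies all five conditions of Definition~\ref{def:DrawnDeco}, it in particular satisfies the first three, which are precisely the conditions required for $({\cal T},\alpha)$ to be a frame-tree according to Definition~\ref{def:DrawnFrDeco}.

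There is no genuine obstacle here; conditions~4 and~5 of Definition~\ref{def:DrawnDeco} (which describe the $\beta$-function via $\av(\alpha(v))$ and $\av(c_v,\alpha(v))$) are simply discarded, since $\beta$ plays no role in the definition of a frame-tree. The only thing worth remarking on is that this observation justifies the viewpoint that a frame-tree is a relaxation of a drawn tree decomposition: any result that can be formulated and proved at the level of frame-trees automatically transfers to drawn tree decompositions by forgetting $\beta$. Conversely, proving that every frame-tree extends to a drawn tree decomposition (i.e., that a suitable $\beta$ can always be defined so that $({\cal T},\beta)$ becomes a tree decomposition of $G$) would be the nontrivial direction, but that is outside the scope of the present observation.
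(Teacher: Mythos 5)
Your proof is correct and matches the paper's own (implicit) justification exactly: the paper states this as an observation precisely because the three conditions of Definition~\ref{def:DrawnFrDeco} coincide with the first three conditions of Definition~\ref{def:DrawnDeco}, and the remaining two conditions only concern $\beta$, which is discarded. Nothing further is needed.
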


 In what follows, we show the opposite direction of Observation \ref{obs:DTDIsDTF}. That is, for every frame-tree $({\cal T},\alpha)$ of $d$, let $({\cal T},\beta,\alpha)$ where $\beta(v)=\av(\alpha(v))\cup \av(c_v,\alpha(v))$ if $v$ is not a leaf, and $\beta(v)=\av(\alpha(v))$ if $v$ is a leaf. Then, $({\cal T},\beta)$ is a tree decomposition of $G$, and Conditions \ref{def: DC Con:5} and \ref{def: DC Con:6} of  Definition \ref{def:DrawnDeco} are satisfied. Therefore, $({\cal T},\beta,\alpha)$ is a drawn tree decomposition of $d$.

%For the sake of simplicity, given two drawn walks $w_1$ and $w_2$, we denote the sets of points, $P(w_1\cup w_2)$ and $P(w_1\cap w_2)$ by $w_1\cup w_2$ and $w_1\cap w_2$, respectively.  

In order to prove that $({\cal T},\beta)$ is indeed a tree decomposition of $G$, we first show that for every $\{u,u'\}\in E$, there exists $v\in V_T$ such that $u,u'\in \beta(v)$:

\begin{lemma}\label{lem:DtwIsTw1}
	Let $G=(V,E)$ be a graph and let $d$ be a polyline grid drawing of $G$. Let $({\cal T}=(V_T,E_T),\alpha: V_T\rightarrow \mathsf{Frames})$ be a frame-tree of $d$. Let $\beta: V_T\rightarrow 2^{V}$ be the function defined as follows: $\beta(v)=\av(\alpha(v))\cup \av(c_v,\alpha(v))$ for every $v\in V_T$ that is not a leaf, and $\beta(v)=\av(\alpha(v))$ for every $v\in V_T$ that is a leaf. Then, for every $\{u,u'\}\in E$ there exists $v\in V_T$ such that $u,u'\in \beta(v)$. In addition, for every $u\in V$ there exists $v\in V_T$ such that $u\in \beta(v)$.  
\end{lemma}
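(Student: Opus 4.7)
The plan is to prove both statements by a top-down descent on the frame-tree $\cal T$, starting from the root $v_r$. The key observation driving the descent is that $\alpha(v_r) = R_d$ and every image $d(u)$ lies strictly inside $R_d$; this base fact will propagate as an invariant that, at the current node $v$, the image $d(u)$ (or both images $d(u), d(u')$ in the edge case) lies in the closed interior $\ir(\alpha(v))$ of the associated frame.

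For the vertex-coverage statement, the descent begins with the leaf case: by Condition~\ref{def:tfCon3} of Definition~\ref{def:DrawnFrDeco}, a leaf's frame contains no grid point in its strict interior, so the grid point $d(u)\in\ir(\alpha(v))$ must lie on $\alpha(v)$, giving $u\in\av(\alpha(v))=\beta(v)$. For an internal node $v$, the argument is: if $u\notin\beta(v)$, then $d(u)$ is neither on $\alpha(v)$ (otherwise $u\in\av(\alpha(v))$) nor on $c_v$ (otherwise $u\in\av(c_v,\alpha(v))$), and no edge incident to $u$ is separated by $\alpha(v)$ or by $(c_v,\alpha(v))$; hence $d(u)$ lies in the open interior of exactly one of $f_1(c_v)$ and $f_2(c_v)$, and the descent proceeds to the corresponding child while preserving the invariant. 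Finiteness of $\cal T$ then forces termination at some $v$ with $u\in\beta(v)$.

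For the edge-coverage statement, the analogous joint descent handles both endpoints simultaneously. At a leaf $v$, both grid points $d(u), d(u')$ are forced onto $\alpha(v)$, so both lie in $\av(\alpha(v))=\beta(v)$. At an internal node $v$, the first check is whether the edge $\{u,u'\}$ is separated by $(c_v,\alpha(v))$; if so, both endpoints lie in $\av(c_v,\alpha(v))\subseteq\beta(v)$, completing the argument. Otherwise, using the geometric decomposition $\ir(\alpha(v)) = \ir(f_1(c_v))\cup\ir(f_2(c_v))$ together with $\ir(f_1(c_v))\cap\ir(f_2(c_v)) = c_v$, a short case analysis on the positions of $d(u), d(u')$ relative to the cutter, to the open interiors of the two sub-frames, and to the portions of $\alpha(v)$ belonging to each sub-frame shows that either both images lie in a common $\ir(f_i(c_v))$, permitting the descent to proceed, or both lie on $\alpha(v)$ and are therefore in $\av(\alpha(v))\subseteq\beta(v)$, or the edge is in fact separated by $(c_v,\alpha(v))$, contradicting the current case.

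The main technical obstacle will be this boundary case analysis: configurations in which an endpoint sits on $c_v$ or on the outer frame $\alpha(v)$ rather than strictly inside an open sub-frame. The key geometric fact to rely on is that the open interiors of $f_1(c_v)$ and $f_2(c_v)$, the relative interior of $c_v$, and the non-cutter portions of $\alpha(v)$ (each belonging to exactly one of $f_1(c_v), f_2(c_v)$) together partition the closed region $\ir(\alpha(v))$, so that any ``mismatch'' in the sides occupied by the two endpoints necessarily triggers the separation condition for either $\alpha(v)$ or $(c_v,\alpha(v))$, thereby placing both endpoints in $\beta(v)$ directly.
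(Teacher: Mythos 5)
Your proposal is correct and is essentially the paper's argument in top-down form: the paper selects the lowest node of $\cal T$ whose frame contains the relevant vertex (or both endpoints) and performs the same boundary case analysis there, which is exactly where your descent terminates. The geometric facts you invoke ($\ir(\alpha(v))=\ir(f_1(c_v))\cup\ir(f_2(c_v))$ with intersection $\pp(c_v)$, and the leaf frames containing no grid points strictly inside) are the same ones the paper uses, so no gap remains.
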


\begin{proof}
	First, we show that for every $\{u,u'\}\in E$ there exists $v\in V_T$ such that $u,u'\in \beta(v)$. Let $\{u,u'\}\in E$. Let $v\in V_T$ be a lowest vertex in ${\cal T}$ such that $u$ and $u'$ are drawn inside $\alpha(v)$, that is, for every descendant $\ell\in V_T$ of $v$ in ${\cal T}$, it follows that at least one among $u$ and $u'$ is not drawn inside $\alpha(\ell)$. Observe that $u$ and $u'$ are both drawn inside $R_d$. So, the set of vertices $x\in V_T$ in ${\cal T}$ such that $u$ and $u'$ are drawn inside $\alpha(x)$ is not empty, and hence the choice of $v$ is well defined. If both $u$ and $u'$ are drawn on $\alpha(v)$, then $u,u'\in \av(\alpha(v))\subseteq\beta(v)$. In this case, there exists $v\in V_T$ such that $u,u'\in \beta(v)$, and we are done. So, we next suppose that at least one of $u$ and $u'$ is drawn in the strict interior of $\alpha(v)$. Therefore, $v$ is not a leaf.
	
	 So, there exists a cutter $c_v$ of $\alpha(v)=f$ such that $\alpha(v_1)=f_1(c)$ and $\alpha(v_2)=f_2(c)$, where $v_1$ and $v_2$ are the children of $v$. We assume, without loss of generality, that $u$ is drawn in the strict interior of $\alpha(v)$. If the cutter $c_v$ intersects $u$ or $u'$, then both $u$ and $u'$ are drawn inside $\alpha(v_1)$ or $\alpha(v_2)$, a contradiction to $v$ being lowermost. Therefore, $c_v$ intersects neither $u$ nor $u'$. So, $u$ is drawn strictly inside $\alpha(v_1)$ or $\alpha(v_2)$; without loss of generality, assume that $u$ is drawn strictly inside $\alpha(v_1)$. In addition, due to $v$ being a lowest vertex in $\cal{T}$ such that $u$ and $u'$ are drawn inside $\alpha(v)$, $u'$ is not drawn inside $\alpha(v_1)$. Therefore, $u'$ is drawn inside $\alpha(v_2)$, but not inside $\alpha(v_1)$ (see Figure~\ref{fig:firstPart}). Hence, $(c,f)$ separates $\{u,u'\}$, and so $u,u'\in \av(c_v,\alpha(v))\subseteq \beta(v)$. Thus, we get that there exists $v\in V_T$ such that $u,u'\in \beta(v)$, and hence we are done in this case as well.
	 
	 Now, we show that for every $u\in V$ there exists $v\in V_T$ such that $u\in \beta(v)$. This part of the proof is similar to the first part of the proof. Let $u\in V$. Let $v\in V_T$ be a lowest vertex in ${\cal T}$ such that $u$ is drawn inside $\alpha(v)$, that is, for every descendant $\ell\in V_T$ of $v$ in ${\cal T}$, it follows that $u$ is not drawn inside $\alpha(\ell)$. Observe that $u$ is drawn inside $R_d$. So, the set of vertices $x\in V_T$ in ${\cal T}$ such that $u$ is drawn inside $\alpha(x)$ is not empty, and hence the choice of $v$ is well defined. We first assume that $u$ is drawn strictly inside $\alpha(v)$. In this case, $v$ is not a leaf. Then, there exists a cutter $c_v$ of $\alpha(v)=f$ such that $\alpha(v_1)=f_1(c)$ and $\alpha(v_2)=f_2(c)$, where $v_1$ and $v_2$ are the children of $v$. Observe that if $c$ intersects $u$, then $u$ is drawn inside both $\alpha(v_1)$ and $\alpha(v_2)$, a contradiction to $v$ being lowermost. Otherwise, $c$ does not intersect $u$, so $u$ is drawn inside one of $\alpha(v_1)$ and $\alpha(v_2)$; again, this is a contradiction to $v$ being lowermost. Either way, we get a contradiction when we assume that $u$ is drawn strictly inside $u$. Therefore, $u$ is drawn on $\alpha(v)$, and then $u\in \av(\alpha(v))\subseteq\beta(v)$. Therefore, we get that there exists $v\in V_T$ such that $u,\in \beta(v)$, and we are done. 
\end{proof}

\begin{figure}[t]
\centering
    \includegraphics[page=9, width=0.25\textwidth]{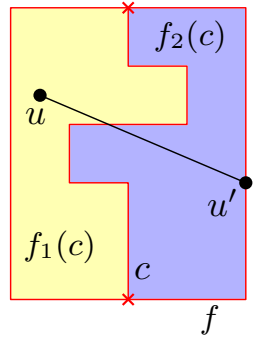}
    \caption{The figure shows a case where $u$ and $u'$ are drawn inside $f$, such that $u$ is strictly inside $f_1(c)$ and $u'$ is strictly outside $f_1(c)$. Therefore, $(c,f)$ separates $\{u,u'\}$.}
  \label{fig:firstPart}
\end{figure}         

\begin{figure}[!t]
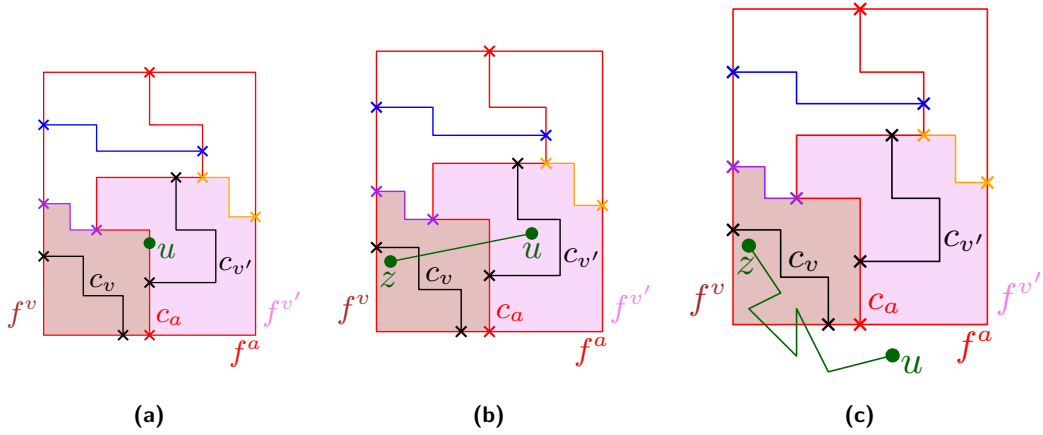

	\centering
	\begin{subfigure}{0.3\textwidth}
		\includegraphics[width = \textwidth, page = 15]{figures/drawnTreewidth}
		\subcaption{}
		\label{case1:a}
	\end{subfigure}
	\hfil
	\begin{subfigure}{0.32\textwidth}
		\includegraphics[width = \textwidth, page = 16]{figures/drawnTreewidth}
		\subcaption{}
		\label{case1:b}
	\end{subfigure}
	\begin{subfigure}{0.36\textwidth}
		\includegraphics[width = \textwidth, page = 17]{figures/drawnTreewidth}
		\subcaption{}
		\label{case1:c}
	\end{subfigure}

	\caption{Examples of some cases where $a$ is an ancestor of both $v$ and $v'$ in the frame-tree: (a) $u$ is drawn on $(f^v\cup c_v)\cap (f^{v'}\cup c_{v'})$; (b) $u$ is drawn strictly inside $f^a_2(c_a)$ and $z$ is drawn strictly inside $f^v$; (c) $u$ is drawn strictly outside $f^a$ and $z$ is drawn strictly inside $f^v$.} 
	\label{fi:Case1}
\end{figure}

%First, we show that for every $\{u,u'\}\in E$ there exists $v\in V_T$ such that $u,u'\in \beta(v)$. Let $\{u,u'\}\in E$. Let $v\in V_T$ be a minimal vertex in ${\cal T}$ such that $u$ and $u'$ are drawn inside $\alpha(v)$, that is, for every descendant $\ell\in V_T$, of $v$, it follows that $u$ or $u'$ is not drawn inside $\alpha(\ell)$ (observe that $u$ and $u'$ are drawn inside $R_f$, so the set of vertices such that $u$ and $u'$ are drawn inside the frame associated with them, is not empty). If $u$ and $v$ are drawn on $\alpha(v)$, then $u,u'\in S(\alpha(v))=\beta(v)$. If not, then at least one of $u$ and $u'$ is drawn inside the interior of $\alpha(v)$, and therefore, $v$ is not a leaf. So, there exists a cutter $c$ of $\alpha(v)=f$ such that $\alpha(v_1)=f_1(c)$ and $\alpha(v_2)=f_2(c)$, where $v_1$ and $v_2$ are the children of $v$. If the cutter $c$ intersects $u$ or $u'$, then $u$ and $u'$ are drawn inside $f_1(c)$ or $f_2(c)$, a contradiction to the minimality of $v$. Therefore, $c$ does not intersect $u$ or $u'$. Assume, without loss of generality, that $u$ is drawn strictly inside $f_1(c)$. In addition, due to the assumption of the minimality of $v$, $u$ and $u'$ are not drawn inside $f_1(c)$. Therefore, $u'$ is drawn inside $f_2(c)$, but not inside $f_1(c)$ (see Figure~\ref{fig:firstPart}). Therefore, $(c,f)$ separates $\{u,u'\}$, so $u,u'\in S(c_v)\subseteq \beta(v)$. Therefore, we get that there exists $v\in V_T$ such that $u,u'\in \beta(v)$.

Now, we will show that $\beta$ satisfies another property of a tree decomposition. Specifically, we will show that for every $u\in V$, the induced subgraph ${\cal T}[V_T^u]$, where $V_T^u=\{v\in V_T~|~u\in \beta(v)\}$, is connected. Observe that Lemma \ref{lem:DtwIsTw1} already asserts that ${\cal T}[V_T^u]$ is non-empty. 

\begin{lemma}\label{lem:DtwIsTw2} 
		Let $G=(V,E)$ be a graph and let $d$ be a polyline grid drawing of $G$. Let $({\cal T}=(V_T,E_T),\alpha: V_T\rightarrow \mathsf{Frames})$ be a frame-tree of $d$. Let $\beta: V_T\rightarrow 2^{V}$ be the function defined as follows: $\beta(v)=\av(\alpha(v))\cup \av(c_v)$ for every $v\in V_T$ that is not a leaf, and $\beta(v)=\av(\alpha(v))$ for every $v\in V_T$ that is a leaf. Then, for every $u\in V$, the induced subgraph ${\cal T}[V_T^u]$, where $V_T^u=\{v\in V_T~|~u\in \beta(v)\}$, is connected.  
\end{lemma}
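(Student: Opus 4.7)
The plan is to prove connectedness of $V_T^u = \{v \in V_T : u \in \beta(v)\}$ via two facts: (A) the least common ancestor $a$ of any two nodes $v_1, v_2 \in V_T^u$ in $\cal T$ itself lies in $V_T^u$; and (B) for any ancestor--descendant pair $(a, v)$ with $a, v \in V_T^u$, every node on the tree path from $a$ to $v$ lies in $V_T^u$. Together (A) and (B) imply that the tree path connecting any two nodes of $V_T^u$ stays inside $V_T^u$, so $V_T^u$ induces a connected subtree. The argument uses the following nesting property, immediate from the recursive construction of $\alpha$: if $v'$ is a descendant of $v$ in $\cal T$, then $\alpha(v') \subseteq \ir(\alpha(v))$, every point strictly inside $\alpha(v')$ is also strictly inside $\alpha(v)$, and, for internal $v$, the strict interiors of the two sibling frames $f_1(c_v)$ and $f_2(c_v)$ are disjoint.

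For (A), I case-split on the position of $p_u := d(u)$ relative to $\alpha(a)$. If $p_u$ is on $\alpha(a)$, or $p_u$ is strictly inside $\alpha(a)$ and on $c_a$, then directly $u \in \av(\alpha(a)) \cup \av(c_a, \alpha(a)) \subseteq \beta(a)$. If $p_u$ is strictly inside $\alpha(a)$ and not on $c_a$, then $p_u$ is strictly inside a unique child frame $f_i(c_a)$; since $v_1$ and $v_2$ lie in distinct subtrees of the two children of $a$, one of them, say $v_2$, is a descendant of the child whose frame is $f_{3-i}(c_a)$, and nesting yields that $p_u$ is strictly outside $\alpha(v_2)$. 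The alternatives ``$p_u$ on $\alpha(v_2) \cup c_{v_2}$'' and ``an edge is separated by $(c_{v_2}, \alpha(v_2))$'' are all incompatible with $p_u$ being strictly outside $\alpha(v_2)$, so $v_2 \in V_T^u$ forces an edge $\{u, u'\}$ separated by $\alpha(v_2)$ with $u'$ strictly inside $\alpha(v_2)$, hence, by nesting, strictly inside $f_{3-i}(c_a)$; combined with $u$ strictly inside $f_i(c_a)$ this gives that $\{u, u'\}$ is separated by $(c_a, \alpha(a))$, so $u \in \beta(a)$. The symmetric case where $p_u$ is strictly outside $\alpha(a)$ is handled analogously: we extract a neighbor $u'$ of $u$ strictly inside $\alpha(v_1)$, hence strictly inside $\alpha(a)$, so that $\{u, u'\}$ is separated by $\alpha(a)$.

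For (B), I induct on the length of the path from $a$ to $v$. It suffices to show that the child $w$ of $a$ that lies on the path satisfies $w \in V_T^u$, after which the induction hypothesis applied to $(w, v)$ within the subtree rooted at $w$ finishes the argument. To prove $w \in V_T^u$, I case-split on the reason that $u \in \beta(a)$. If $u$ is drawn on $c_a$, or on the portion of $\alpha(a)$ contained in $\alpha(w)$, then $p_u \in \alpha(w)$ and we are done. In all remaining subcases (namely, $u$ drawn on the portion of $\alpha(a)$ belonging to the sibling's frame of $w$; an edge $\{u, u'\}$ separated by $\alpha(a)$; or an edge separated by $(c_a, \alpha(a))$), I check using $\alpha(w) \subseteq \alpha(a) \cup c_a$ and nesting that either the same edge witness exhibits separation by $\alpha(w)$ directly, or else $p_u$ ends up strictly outside $\alpha(w)$. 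In the latter situation, $p_u$ is also strictly outside $\alpha(v)$ by nesting, so $v \in V_T^u$ must be witnessed by an edge $\{u, u''\}$ separated by $\alpha(v)$ with $u''$ strictly inside $\alpha(v)$; nesting then places $u''$ strictly inside $\alpha(w)$, giving the desired separation of $\{u, u''\}$ by $\alpha(w)$.

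The main obstacle lies in the case analysis of (B), in particular in verifying that in every subcase where no ``immediate'' witness works at $w$, the fallback extraction from the descendant $v \in V_T^u$ always produces a neighbor $u''$ of $u$ that is strictly inside $\alpha(w)$ rather than merely on its boundary or in a sibling region. This demands carefully ruling out the alternative reasons for $u \in \beta(v)$ (namely $p_u \in \alpha(v) \cup c_v$ or a $(c_v, \alpha(v))$-separation), which is achieved using the position of $p_u$ relative to $\alpha(w)$ together with the strict nesting of the open interiors of descendant frames.
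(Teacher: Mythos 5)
Your proposal is correct and follows essentially the same approach as the paper: both arguments reduce to showing that every node on the tree path between two members of $V_T^u$ lies in $V_T^u$, via a case analysis on the position of $d(u)$ relative to the relevant frames and cutters, using the nesting of descendant frames and extracting a separated edge $\{u,u'\}$ as the witness whenever $d(u)$ lies strictly outside the frame in question. The paper organizes this as a proof by contradiction whose Case 1 (the bad node is the LCA) and Case 2 (the bad node is a proper ancestor of only one endpoint) correspond exactly to your (A) and (B); your direct formulation with the induction in (B) is a cosmetic reorganization of the same argument.
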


\begin{proof}
Let $u\in V$. Assume, towards a contradiction, that ${\cal T}[V_T^u]$ is not connected. Let $v,v'\in V_T^u$ be two vertices belonging to different connected components of ${\cal T}[V_T^u]$. Let $a\in V_T$ be a vertex on the shortest path from $v$ to $v'$ in ${\cal T}$ such that $u\notin \beta(a)$. Then, $a$ is an ancestor of at least one among $v$ and $v'$. Next, we consider two cases.

\smallskip\noindent{\bf Case 1.} Assume first that $a$ is an ancestor of both $v$ and $v'$ (see Figure~\ref{fi:DTD}). Since $a$ is a vertex on the shortest path from $v$ to $v'$ in ${\cal T}$, then $a$ is the lowest common ancestor of $v$ and $v'$. Let $\alpha(a)=f^a$. Now, by the definition of frame-tree, $a$ has exactly two children, $a_1$ and $a_2$, and there exists a cutter $c_a$ of $f^a$ such that $\alpha(a_1)=f^a_1(c_a)$ and $\alpha(a_2)=f^a_2(c_a)$ (see Figure~\ref{d1}). Since $a$ is the lowest common ancestor of $v$ and $v'$, we get that one of $a_1$ or $a_2$ is an ancestor of $v$ (or equal to $v$), and the other one is an ancestor of $v'$ (or equal to $v'$). Without loss of generality, assume that $a_1$ is an ancestor of $v$ (or equal to $v$), and $a_2$ is an ancestor of $v'$ (or equal to $v'$) (see Figure~\ref{fi:DTD}). Observe that $\gp(\alpha(a_1))\cap \gp(\alpha(a_2))\subseteq \gp(\alpha(a))\cup \gp(c_a)$ (see Figure~\ref{d1}), and, in turn, also $(\gp(\alpha(v))\cup \gp(c_v))\cap (\gp(\alpha(v'))\cup \gp(c_{v'}))\subseteq \gp(\alpha(a))\cup \gp(c_a)$ (see Figure~\ref{d5}). Therefore, if $u$ is drawn on both $\alpha(v)\cup c_v$ and $\alpha(v')\cup c_{v'}$, we get that $u$ is drawn also $\alpha(a)\cup c_a$ (see Figure~\ref{case1:a}). Thus, by definition, $u\in \beta(a)$, a contradiction to the assumption that $u\notin \beta(a)$. Therefore, $u$ is not drawn on at least one among $\alpha(v)\cup c_v$ and $\alpha(v')\cup c_{v'}$. Assume, without loss of generality, that $u$ is not drawn $\alpha(v)\cup c_v$. Because $u\in \beta(v)$ and since $u$ is not drawn on $\alpha(v)$, there must be an edge $\{u,z\}\in E$ such that $\{u,z\}$ is separated by $\alpha(v)$ or by $(c_v,\alpha(v))$ in $d$. We have the following cases.
\begin{itemize}
	\item  If $u$ is drawn strictly outside $\alpha(a_1)$, then $\{u,z\}$ is not separated by $(c_v,\alpha(v))$; so, it is separated by $\alpha(v)$. Therefore, $z$ is drawn strictly inside $\alpha(v)$. We have the following two sub-cases.
	\begin{itemize}
		\item If $u$ is drawn inside $\alpha(a_2)$, then $(c_a,\alpha(a))$ separates $\{u,z\}$. So, by definition, $u\in \beta(a)$, a contradiction to the assumption that $u\notin \beta(a)$ (see Figure~\ref{case1:b}).
		\item Otherwise, $u$ is drawn strictly outside $\alpha(a_2)$. Then, it is drawn strictly outside $\alpha(a)$. So, $\alpha(a)$ separates $\{u,z\}$, and then, by definition, $u\in \beta(a)$. This is, again, a contradiction to the assumption that $u\notin \beta(a)$ (see Figure~\ref{case1:c}).
	\end{itemize}
\item Otherwise, $u$ is drawn inside $\alpha(a_1)$. Then, $u$ is drawn strictly inside $\alpha(a_1)$, since $\alpha(a_1)\subseteq \alpha(a)\cup c_a$, and we assumed that $u\notin \beta(a)$. Thus, $u$ is drawn strictly outside $\alpha(a_2)$. So, this case is symmetric to the previous case, where $u$ is drawn strictly outside $\alpha(a_1)$. Therefore, we get a contradiction in this case as well.
\end{itemize}

\begin{figure}[!t]
	\centering
	\includegraphics[page=19, width=0.31\textwidth]{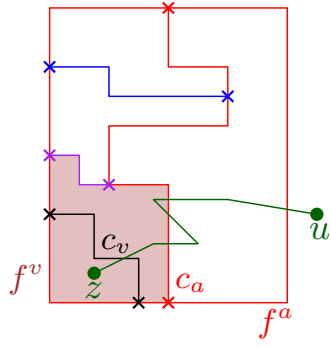}
	\caption{Example of a case where $a$ is an ancestor of $v$, $u$ is drawn strictly outside $f^a$ and $z$ is drawn strictly inside $f^v$. The blue and the purple cutters correspond to vertices on the path from $a$ to $v$ in the corresponding frame-tree.}
	\label{fig:case2}
\end{figure}

\smallskip\noindent{\bf Case 2.}  Now, assume that $a$ is an ancestor of $v$ but not an ancestor of $v'$ (the other case, where $a$ is an ancestor of $v'$ but not an ancestor of $v$, is symmetric). We have the following cases. 

\smallskip\noindent{\bf Case 2(i).} Assume that $u$ is drawn strictly outside $\alpha(a)$. Since $\alpha(v)$ and $c_v$ are bounded by $\alpha(a)$, then $u$ is drawn on neither $\alpha(v)$ nor $c_v$. Since $u\in \beta(v)$, there must be an edge $\{u,z\}\in E$ such that $\{u,z\}$ is separated by $\alpha(v)$ or by $(c_{v},\alpha(v))$ in $d$. Now, $u$ is not drawn inside $\alpha(v)$. So, $\{u,z\}$ is not separated by $(c_{v},\alpha(v))$, which means that $\{u,z\}$ is separated by $\alpha(v)$. Therefore, $z$ is drawn strictly inside $\alpha(v)$; so, $z$ is drawn strictly inside $\alpha(a)$. We get that $\{u,z\}$ is separated by $\alpha(a)$ (see Figure~\ref{fig:case2}), and then, by the definition of $\beta$, $u\in \beta(a)$, a contradiction to the assumption that $a\notin V_T^u$. 

\smallskip\noindent{\bf Case 2(ii).} Assume that $u$ is drawn inside $\alpha(a)$. If $u$ is drawn on $\alpha(a)$, then, by the definition of $\beta$, $u\in \beta(a)$, a contradiction to the assumption that $a\notin V_T^u$. So, $u$ is drawn strictly inside $\alpha(a)$. Now, observe that since $a$ is not an ancestor of $v'$, then either $v'$ is an ancestor of $a$, or $v'$ and $a$ are incomparable. 
\begin{itemize}
	\item If $v'$ is an ancestor of $a$, then, since $u$ is drawn strictly inside $\alpha(a)$, $u$ is drawn on neither $\alpha(v')$ nor $c_{v'}$.
	\item  If $v'$ and $a$ are incomparable, then, again, since $u$ is drawn strictly inside $\alpha(a)$, $u$ is drawn on neither $\alpha(v')$ nor $c_{v'}$. 
\end{itemize}
Therefore, either way, we get that $u$ is drawn on neither $\alpha(v')$ nor $c_{v'}$. 

 Now, since $u\in \beta(v')$, there must be an edge $\{u,z\}\in E$ such that $\{u,z\}$ is separated by $\alpha(v')$ or by $(c_{v'},\alpha(v'))$ in $d$.  We have the following cases.

 \begin{itemize}
 	\item If $z$ is drawn strictly outside $\alpha(a)$, then we get that $\{u,z\}$ is separated by $\alpha(a)$; so, by definition, $u\in \beta(a)$, a contradiction to the assumption that $a\notin V_T^u$ (see Figure~\ref{fi:Case3}).
 	\item Otherwise, $z$ is drawn inside $\alpha(a)$. We have the following two sub-cases.
 	
 	 \begin{figure}[t]
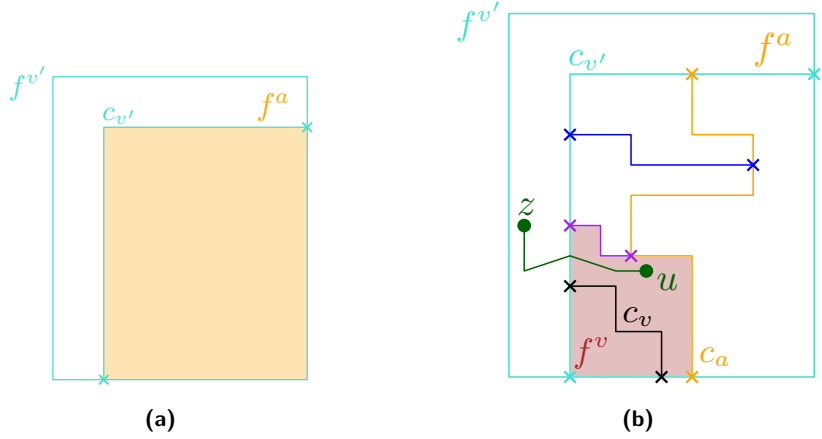

 		\centering
 		\begin{subfigure}{0.3\textwidth}
 			\includegraphics[width = \textwidth, page = 20]{figures/drawnTreewidth}
 			\subcaption{}
 			\label{a}
 		\end{subfigure}
 		\hfil
 		\begin{subfigure}{0.36\textwidth}
 			\includegraphics[width = \textwidth, page = 21]{figures/drawnTreewidth}
 			\subcaption{}
 			\label{c}
 		\end{subfigure}
 		
 		\caption{Example of a case where $a$ is an ancestor of $v$, $v'$ is an ancestor of $a$, $u$ is drawn inside $f^a$ and $z$ is drawn strictly outside $f^a$. The blue and purple cutters correspond to vertices on the path from $a$ to $v$ in the corresponding frame-tree.} 
 		\label{fi:Case3}
 	\end{figure}
 	
 	\begin{itemize}
 		\item If $v'$ and $a$ are incomparable, then, since $u$ and $z$ are drawn inside $\alpha(a)$, we get that $\{u,z\}$ is separated by neither $\alpha(v')$ nor $(c_{v'},\alpha(v'))$. This is a contradiction to the assumption that $\{u,z\}$ is separated by either $\alpha(v')$ or $(c_{v'},\alpha(v'))$. 
 		\item Otherwise, $v'$ is an ancestor of $a$. Let $v'_1$ and $v'_2$ be the two children of $v'$. Since $u$ and $z$ are drawn inside $\alpha(v)$, and since $\alpha(v)$ is bounded by $\alpha(v')$, $u$ and $z$ are drawn inside $\alpha(v')$. So, $\{u,z\}$ is not separated by $\alpha(v')$, which means that $\{u,z\}$ is separated by $(c_{v'},\alpha(v'))$. Therefore, by the definition of a cutter being separator of an edge, there exists $i\in\{1,2\}$ such that one of $u$ or $z$ is drawn strictly inside $\alpha(v'_i)$ (and hence it is not drawn on $\alpha(v'_j)$ for $j\in \{1,2\}, j\neq i$), and the other vertex is not drawn inside $\alpha(v'_i)$. So, one among $u$ and $z$ is not drawn inside $\alpha(v'_1)$, and one among $u$ and $z$ is not drawn inside $\alpha(v'_2)$. Now, since $v'$ is an ancestor of $a$, $\alpha(v)$ is bounded by either (and possibly equal to) $\alpha(v'_1)$ or $\alpha(v'_2)$. So, one among $u$ and $z$ is not drawn inside $\alpha(v)$, a contradiction to the assumption that $u$ and $z$ are both drawn inside $\alpha(v)$.                                                                                       
 	\end{itemize}
 \end{itemize}  
This completes the proof.
\end{proof}

Now, from Lemmas \ref{lem:DtwIsTw1} and \ref{lem:DtwIsTw2} we can conclude that $({\cal T},\beta)$ is a tree decomposition of $G$, as we state in the following corollary.

\begin{corollary} \label{col:TWBoundedBYdraTW}
	Let $G=(V,E)$ be a graph and let $d$ be a polyline grid drawing of $G$. Let $({\cal T}=(V_T,E_T),\alpha: V_T\rightarrow \mathsf{Frames})$ be a frame-tree of $d$. Let $\beta: V_T\rightarrow 2^{V}$ be the function defined as follows: $\beta(v)=\av(\alpha(v))\cup \av(c_v,\alpha(v))$ for every $v\in V_T$ that is not a leaf, and $\beta(v)=\av(\alpha(v))$ for every $v\in V_T$ that is a leaf. Then, $(\cal{T},\beta)$ is a tree decomposition of $G$. 
\end{corollary}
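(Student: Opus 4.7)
The plan is simply to verify the two defining conditions of a tree decomposition (Definition~\ref{def:treeDeco}) by directly invoking the two preceding lemmas, without any further case analysis.

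First, I would address Condition~\ref{def:treeDecocon2} of Definition~\ref{def:treeDeco}: for every edge $\{u,u'\}\in E$, there exists a bag $\beta(v)$ containing both endpoints. This is exactly the first statement of Lemma~\ref{lem:DtwIsTw1}, so no additional argument is needed.

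Next, I would verify Condition~\ref{def:treeDecocon1}: for every $u\in V$, the induced subgraph $\mathcal{T}[V_T^u]$ is non-empty and connected. Non-emptiness follows from the second statement of Lemma~\ref{lem:DtwIsTw1}, which guarantees the existence of some $v\in V_T$ with $u\in \beta(v)$. Connectivity is precisely the content of Lemma~\ref{lem:DtwIsTw2}. Combining these two facts yields Condition~\ref{def:treeDecocon1}.

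Since both conditions of Definition~\ref{def:treeDeco} are met, the pair $(\mathcal{T},\beta)$ is a tree decomposition of $G$, which establishes the corollary. There is no genuine obstacle in this step, as the conceptual and technical work has already been carried out in Lemmas~\ref{lem:DtwIsTw1} and~\ref{lem:DtwIsTw2}; the corollary merely packages those two results into the language of tree decompositions.
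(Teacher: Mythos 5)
Your proposal is correct and matches the paper exactly: the paper derives Corollary~\ref{col:TWBoundedBYdraTW} by observing that Lemma~\ref{lem:DtwIsTw1} supplies Condition~\ref{def:treeDecocon2} of Definition~\ref{def:treeDeco} together with non-emptiness, and Lemma~\ref{lem:DtwIsTw2} supplies connectivity, which is precisely your argument.
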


\section{Relation to Treewidth}\label{sec:relToTreewidth}

Equipped with Corollary \ref{col:TWBoundedBYdraTW}, we would like to prove a lower bound on the drawn treewidth of a drawing based on the treewidth of the graph. Let $G$ be a graph, let $d$ be a polyline grid drawing of $G$, and let $({\cal T}=(V_T,E_T),\alpha: V_T\rightarrow \mathsf{Frames})$ be a frame-tree of $d$. In Lemma \ref{lem:dtwIsLBoundedbyTw} ahead, we show that for every $v\in V_T$ that is not a leaf, it follows that $|\beta(v)|\leq 2\cdot(\sizef(\alpha(v))+\sizef(\alpha(v_1))+\sizef(\alpha(v_2)))$, where $v_1$ and $v_2$ are the two children of $v$ in $\cal{T}$. To this end, we prove that every $u\in\beta(v)$ contributes at least $\frac{1}{2}$ to the sum $\sizef(\alpha(v))+\sizef(\alpha(v_1))+\sizef(\alpha(v_2))$. In addition, for every $v\in V_T$ that is a leaf, we show that $|\beta(v)|\leq 2\cdot\sizef(\alpha(v))$. We are now ready to prove Lemma \ref{lem:dtwIsLBoundedbyTw}.

\begin{lemma}\label{lem:dtwIsLBoundedbyTw}
	Let $G=(V,E)$ be a graph and let $d$ be a polyline grid drawing of $G$. Let $({\cal T}=(V_T,E_T),\alpha: V_T\rightarrow \mathsf{Frames})$ be a frame-tree of $d$.	Let $(\cal{T},\beta)$ be the tree decomposition of $G$ defined in Corollary \ref{col:TWBoundedBYdraTW}. Let $v\in V_T$. If $v$ is an internal vertex, then $|\beta(v)|\leq 2\cdot(\sizef(\alpha(v))+\sizef(\alpha(v_1))+\sizef(\alpha(v_2)))$, where $v_1$ and $v_2$ are the two children of $v$ in $\cal{T}$. Otherwise, $v$ is a leaf, and then $|\beta(v)|\leq 2\cdot\sizef(\alpha(v))$.
\end{lemma}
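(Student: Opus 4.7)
\medskip

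The plan is to bound $|\beta(v)|$ by showing that each vertex $u \in \beta(v)$ contributes at least $\frac{1}{2}$ to the sum of the costs of $\alpha(v)$ and its children's frames (or just $\alpha(v)$ in the leaf case), and then applying Observation~\ref{obs:Contr}, which expresses $\sizef(f)$ as the sum of vertex contributions plus the number of vertices of $f^{\mathsf{min}}$. Concretely, since $\sizef(f)$ is at least the sum of the contributions of the vertices of $G$, if we prove $\sum_{u \in \beta(v)}(\text{contribution of } u \text{ to } \sizef(\alpha(v)) + \sizef(\alpha(v_1)) + \sizef(\alpha(v_2))) \geq \frac{1}{2} |\beta(v)|$, the bound follows immediately.

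So the main work is a case analysis on why $u \in \beta(v)$. For a leaf $v$, we have $\beta(v) = \av(\alpha(v))$, so either (i) $u$ is drawn on $\alpha(v)$, in which case $u$'s contribution is at least $1$, or (ii) $u$ is an endpoint of some edge $e = \{u,z\}$ separated by $\alpha(v)$. In case (ii), since $d(e)$ is a straight-line path having one endpoint strictly inside $\alpha(v)$ and the other strictly outside, it must cross $\alpha(v)$, and the first such crossing point $p$ encountered from either end yields a turning point $(p,e)$ of $f=\alpha(v)$ by Definition~\ref{def:IntPoint} (the $\epsilon$-segment toward the ``outside'' endpoint avoids $\gi(f)$ near $p$). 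This turning point gives $u$ a contribution of at least $\frac{1}{2}$ to $\sizef(\alpha(v))$.

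For an internal vertex $v$, $\beta(v) = \av(\alpha(v)) \cup \av(c_v, \alpha(v))$, so we additionally have to handle $u \in \av(c_v, \alpha(v))$. Either $u$ is drawn on $c_v$, in which case $u$ is drawn on both $\alpha(v_1)$ and $\alpha(v_2)$ (since $c_v$ is on the boundary of both), giving contribution $1$ to each of $\sizef(\alpha(v_1))$ and $\sizef(\alpha(v_2))$; or $u$ is an endpoint of an edge $e = \{u,z\}$ separated by $(c_v, \alpha(v))$, meaning the endpoints lie on opposite sides of $c_v$ inside $\alpha(v)$. The path $d(e)$ then crosses $c_v$, hence crosses both $\alpha(v_1)$ and $\alpha(v_2)$, producing a turning point of $e$ on (at least) one of them via the same argument as before; this gives contribution at least $\frac{1}{2}$ to $\sizef(\alpha(v_1))$ or $\sizef(\alpha(v_2))$. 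Combining the cases, every $u \in \beta(v)$ contributes at least $\frac{1}{2}$ to the sum $\sizef(\alpha(v)) + \sizef(\alpha(v_1)) + \sizef(\alpha(v_2))$, yielding the claimed bound.

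The main obstacle is the case (ii) argument that an edge crossing a frame $f$ actually produces a \emph{turning point} of $f$ (in the precise sense of Definition~\ref{def:IntPoint}), since mere intersection is not enough---one has to exhibit a crossing at which the $\epsilon$-neighborhood in one direction along $d(e)$ lies off $\gi(f)$. Picking the first crossing along $d(e)$ from the endpoint drawn strictly outside $f$ (or strictly inside, respectively) and using that the endpoint itself is not on $f$ handles this; a symmetric argument works for the cutter. Beyond this, care must be taken to avoid double-counting: although the same point $p$ may serve as a turning point for several incident edges at $u$, each turning point is counted separately in $b$ in Definition~\ref{def:Contr}, so the $\frac{1}{2}$ contributions accumulated for distinct vertices $u \in \beta(v)$ do not collide because contributions are attached to $u$ via edges incident to $u$.
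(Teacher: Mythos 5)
Your proposal is correct and follows essentially the same route as the paper's proof: a case analysis showing that every $u\in\beta(v)$ contributes at least $\frac{1}{2}$ to $\sizef(\alpha(v))+\sizef(\alpha(v_1))+\sizef(\alpha(v_2))$ (or to $\sizef(\alpha(v))$ in the leaf case), via being drawn on the frame/cutter or via a turning point of a separated incident edge. Your explicit justification that a separated edge must yield a turning point in the sense of Definition~\ref{def:IntPoint} is a detail the paper only asserts, so it is a welcome addition rather than a divergence.
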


\begin{proof}
Assume that $v$ is not a leaf, and let $v_1$ and $v_2$ be the two children of $v$ in $\cal{T}$ (the proof where $v$ is a leaf is similar). By the definition of the function $\beta$, $\beta(v)=\av(\alpha(v))\cup \av(c_v,\alpha(v))$. First, let $u\in \av(\alpha(v))$. Then, by the definition of the vertices associated with a frame (see Definition \ref{def:AssVer}), we have two cases.
	\begin{itemize}
		\item $u$ is drawn on $\alpha(v)$ in $d$. In this case, by the definition of the contribution of a vertex to $\sizef(\alpha(v))$ (see Definition \ref{def:Contr}), $u$ contributes at least $1$ to $\sizef(\alpha(v))$.
		\item There exists an edge $\{u,z\}$ that is separated by $\alpha(v)$ in $d$. Therefore, one among $u$ and $z$ is drawn strictly inside $\alpha(v)$ and the other is drawn strictly outside $\alpha(v)$, in $d$. Observe that, in this case, there must be at least one point $p\in d(\{u,z\})$ such that $(p,\{u,z\})$ is a turning point in $\alpha(v)$. So, in this case, $u$ contributes at least $\frac{1}{2}$ to $\sizef(\alpha(v))$. 
	\end{itemize}

Second, let $u\in \av(c_v,\alpha(v))$. Then, by the definition of the vertices associated with a cutter (see Definition \ref{def:AssVerCutter}), we have two cases. 
\begin{itemize}
		\item $u$ is drawn on $c_v$ in $d$. So, $u$ is drawn on both $\alpha(v_1)$ and $\alpha(v_2)$, in $d$. Therefore, in this case, by the definition contribution, $u$ contributes at least $1$ to $\sizef(\alpha(v_1))$ and at least $1$ to $\sizef(\alpha(v_2))$. 
		\item There exists an edge $\{u,z\}$ that is separated by $(c_v,\alpha(v))$ in $d$. Therefore, by the definition of the separation of an edge by a cutter, there exists $i\in \{1,2\}$ such that one among $u$ and $z$ is drawn strictly inside $\alpha(v_i)$ in $d$, and the other one is drawn strictly outside $\alpha(v_i)$ in $d$. Hence, by the definition of the separation of an edge by a frame, $\alpha(v_i)$ separates $\{u,z\}$ in $d$. In this case, notice that there must be at least one turning point of the edge $\{u,z\}$ in $\alpha(v_i)$. So, by the definition of contribution, in this case, $u$ contributes at least $\frac{1}{2}$ to $\sizef(\alpha(v_i))$. 
	\end{itemize}

In conclusion, we have proved that every $u\in \beta(v)$ that is an internal vertex contributes at least $\frac{1}{2}$ to the sum $\sizef(\alpha(v))+\sizef(\alpha(v_1))+\sizef(\alpha(v_2))$. So, $|\beta(v)|\leq 2\cdot(\sizef(\alpha(v))+\sizef(\alpha(v_1))+\sizef(\alpha(v_2)))$.
\end{proof}

Recall that the width of a frame-tree $({\cal T},\alpha)$ is $\mathsf{width}({\cal T},\alpha)=\max\{\sizef(\alpha(v))~|~v\in V_T\}$. In addition, the width of a tree decomposition $(\cal {T},\beta)$ is $\mathsf{width}({\cal T},\beta)=\max_{v\in V_T}|\beta(v)|-1$.  From Lemma \ref{lem:dtwIsLBoundedbyTw}, we conclude that, for every $v\in V_T$, $|\beta(v)|\leq 6\cdot \mathsf{width}({\cal T},\alpha)$. So, $\mathsf{width}({\cal T},\beta)\leq 6\cdot \mathsf{width}({\cal T},\alpha)$ (for any frame-tree $({\cal T},\alpha)$ and its corresponding tree decomposition $({\cal T},\beta)$). Thus, $\frac{1}{6}\cdot\mathsf{tw}(G)\leq \mathsf{dtw}(d)$. In particular, we have the following corollary.  

\begin{corollary}
	Let $G=(V,E)$ be a graph and let $d$ be a polyline grid drawing of $G$. Then, $\mathsf{tw}(G)\leq \OO(\mathsf{dtw}(d))$. 
\end{corollary}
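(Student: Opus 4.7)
My plan is to directly assemble the corollary from the two preceding results: Corollary \ref{col:TWBoundedBYdraTW} (any frame-tree yields a tree decomposition) and Lemma \ref{lem:dtwIsLBoundedbyTw} (each bag has size bounded by twice the sum of the costs of the frames at the node and its children). The entire content of the corollary has essentially been established already in the paragraph preceding the statement; my job is only to make the ``in particular'' extraction precise.

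First, I would fix an optimal drawn tree decomposition $({\cal T}=(V_T,E_T),\beta,\alpha)$ of $d$ whose width equals $\mathsf{dtw}(d)$. By Observation \ref{obs:DTDIsDTF}, $({\cal T},\alpha)$ is a frame-tree of $d$, and by construction (Conditions \ref{def: DC Con:5} and \ref{def: DC Con:6} of Definition \ref{def:DrawnDeco}), $\beta$ is exactly the function attached to this frame-tree in Corollary \ref{col:TWBoundedBYdraTW}. Hence $({\cal T},\beta)$ is a tree decomposition of $G$, and $\mathsf{tw}(G)\leq \max_{v\in V_T}|\beta(v)|-1$.

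Second, I would apply Lemma \ref{lem:dtwIsLBoundedbyTw} node by node. For an internal vertex $v$ with children $v_1,v_2$, the lemma gives $|\beta(v)|\leq 2\bigl(\sizef(\alpha(v))+\sizef(\alpha(v_1))+\sizef(\alpha(v_2))\bigr)\leq 6\cdot\mathsf{width}({\cal T},\beta,\alpha)$, since each of the three summands is at most $\mathsf{width}({\cal T},\beta,\alpha)=\mathsf{dtw}(d)$ by definition. For a leaf $v$, the lemma gives $|\beta(v)|\leq 2\cdot\sizef(\alpha(v))\leq 2\cdot\mathsf{dtw}(d)\leq 6\cdot\mathsf{dtw}(d)$. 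Taking the maximum over all nodes yields $\mathsf{tw}(G)\leq 6\cdot\mathsf{dtw}(d)-1=\OO(\mathsf{dtw}(d))$, concluding the proof.

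There is no real obstacle: the statement is the clean packaging of the discussion that immediately precedes it. The only thing to be careful about is using the optimal drawn tree decomposition (rather than an arbitrary frame-tree) so that $\mathsf{width}({\cal T},\beta,\alpha)$ on the right-hand side equals $\mathsf{dtw}(d)$, which is exactly the quantity appearing in the corollary.
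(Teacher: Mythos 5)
Your proposal is correct and follows exactly the paper's route: Corollary \ref{col:TWBoundedBYdraTW} gives the tree decomposition, and Lemma \ref{lem:dtwIsLBoundedbyTw} bounds each bag by $6$ times the maximum frame cost, yielding $\mathsf{tw}(G)\leq 6\cdot\mathsf{dtw}(d)$. The only cosmetic difference is that the paper states the bound for an arbitrary frame-tree and then minimizes, whereas you fix an optimal drawn tree decomposition up front; the two are equivalent.
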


%	Let $G=(V,E)$ be a graph and let $d$ be a segmented grid drawing of $G$. Let $({\cal T}=(V_T,E_T),\alpha: V_T\rightarrow \mathsf{Frames})$ be a drawn tree-frame of $d$. Let $\beta: V_T\rightarrow 2^{V}$ be the function defined as follows: $\beta(v)=\av(\alpha(v))\cup \av(c_v,\alpha(v))$ for every $v\in V_T$ that is not a leaf, and $\beta(v)=\av(\alpha(v))$ for every $v\in V_T$ that is a leaf. Then, for every $\{u,u'\}\in E$ there exists $v\in V_T$ such that $u,u'\in \beta(v)$. In addition, for every $u\in V$ there exists $v\in V_T$ such that $u\in \beta(v)$. We show that $\mathsf{width}({\cal T},\beta)\leq 4\cdot\mathsf{width}({\cal T},\alpha)$. Let $v\in V_T$. Assume that $v$ is not a leaf. Then, $\beta(v)=\av(\alpha(v))\cup \av(c_v,\alpha(v))$. Now, let $u\in \av(\alpha(v))$. Then, by the definition of the vertices associated with a frame, we have two cases.
%\begin{itemize}
%	\item $u$ is drawn on $\alpha(v)$. In this case, by the definition of the size of a frame, $u$ contributes $1$ to $\mathsf{size}(\alpha(v))$.
	%\item There exists an edge $\{u,z\}$ that is separated by $\alpha(v)$ in $d$. Therefore, one among $u$ and $z$ is drawn strictly inside $\alpha(v)$ and the other is drawn strictly outside $\alpha(v)$, in $d$. Observe, that in this case, there must be at least one turning point of the edge $\{u,z\}$ in $\alpha(v)$. Therefore, $u$ and $z$ contribute at least $1$ to $\mathsf{size}(\alpha(v))$.  
%\end{itemize}

\section{Problem Solving Scheme Using Drawn Tree Decompositions} \label{sec:ProbSche}

In this section, we present our scheme for solving problems, based on our notion of a drawn tree decomposition. This scheme will be useful for various graph drawing and recognition problems. %More preciously, we present a general algorithm, that gets from the user a problem specific ``information'', and solve the problem according to this ``information''. 

Briefly, let $G=(V,E)$ be a graph and let $f_{\mathsf{init}}$ be a frame. Assume that we would like to construct a polyline grid drawing $d$ of $G$ such that $d$ is strictly bounded by $f_{\mathsf{init}}$ and has some specific properties, or return ``no'' if no such a polyline grid drawing exists. 
Our scheme works as follows. Observe that if such a drawing $d$ exists, it has a frame-tree of width $\mathsf{dtw}(d)$. We iterate over the potential values $k$ for $\mathsf{dtw}(d)$, and aim to ``guess'' a drawing $d'$ of $G$ and a frame-tree $({\cal T},\alpha)$ of $d'$ with $\mathsf{width}(d',({\cal T},\alpha))\leq k$.

For a given $k$, we try to ``build'' the drawing $d'$, using dynamic programming. First, we begin with ``leaf frames'', that is, frames with no grid points in their interior. Then, we continue with frames in order by increasing ``size'', where ``size'' refers to the number of grid points in their interior.  We ``solve'' each one of these frames. That is, for every frame $f$, we guess some specific information: we guess exactly how the intersection of the frame $f$ with the drawing $d'$ ``looks like'' and what vertices and edges or ``parts'' of edges are drawn inside $f$. Our guess should provide that the width of $f$ in any frame-tree of any drawing with the same specific information be bounded by $k$. We would like to know, for each guess, if there exists a ``partial'' drawing with the specific information we had guessed. Assume that we guessed a frame $f$ that is not a leaf in any drawn frame decomposition of $d'$, that is, a frame that has a grid point in its interior. Then, we can guess the cutter of $f$, $c$, in the frame-tree, and the specific information regarding the two frames associated with $c$, $f_1(c)$ and $f_2(c)$. Then, because in the interior of each of $f_1(c)$ and $f_2(c)$ there are less points than in their interior of $f$, we have already ``solved'' $f_1(c)$ and $f_2(c)$ with the ``specific information'' guess. So, by looking up an already computed entry of the dynamic programming table, we try to ``glue'' the ``partial drawings'' for the specific guess of $f_1(c)$ and $f_2(c)$, in order to construct a partial drawing for our guess for $f$, or conclude that there is no such partial drawing.

%!TEX root =Main-Movement.tex

\subsection{$G^*$-Drawings}

In order to present and prove the correctness of our scheme, we introduce a few definitions. As we explained in the beginning of Section \ref{sec:ProbSche}, we aim to ``solve'' small parts of potential drawings. Assume that $d$ is a polyline grid drawing of a graph $G=(V,E)$, and let $f$ be a frame. Observe that the part of $d$ that is bounded by $f$, that is, the {\em sub-drawing} of $d$ that is bounded by $f$, might contain ``parts'' of edges. In order to define such sub-drawings as drawings, we define the term {\em $G^*$-drawing}. Intuitively, a $G^*$-drawing is a drawing that might contain ``parts'' of edges of $G$, represented by new ``dummy'' vertices, where the set of dummy vertices is denoted by $V^*$. In particular, for every edge $\{u,v\}$, vertices of the form $uv_i\in V^*$, $i\in \mathbb{N}$, will represent the turning points (see Definition \ref{def:IntPoint}) of the edge $\{u,v\}$ in $f$ in the sub-drawing. By ``making'' the turning points vertices, we will be able to define this kind of sub-drawings as drawings. 
Now, assume that the choices for $i\in \mathbb{N}$ represent the order of appearance of the turning points of the edge $\{u,v\}$, starting from $u$. Then, $uv_i$ and $uv_{i+1}$ are connected by a ``part'' of the edge $\{u,v\}$ in $d$; so, we will connect them with a dummy edge (e.g., see the edge $\{uv_2,uv_3\}$ in Figure~\ref{fig:GStarDrawing}). We will denote the set of these dummy edges by $E^*$. 

Let $G=(V,E)$ be a graph. We assume that the set of vertices $V$ is an ordered set, that is, for every $u,v\in V$, $v\neq u$, either $u<v$ or $v<u$. First, for an edge $\{u,v\}\in E$ such that $u<v$, we define the set of dummy vertices as $V^*_{\{u,v\}}=\{uv_i~|~ i\in \mathbb{N}\}$. In addition, we define the set of dummy edges of $\{u,v\}$ as $E^*_{\{u,v\}}=\{\{uv_i,uv_{i+1}\}~|~ i\in \mathbb{N}\}$. Now, we are ready to define the sets $V^*$ and $E^*$.

\begin{definition}[{\bf $V^*$ and $E^*$}] 
	Let $G=(V,E)$ be a graph. Then:
	\begin{itemize}
		\item $V^*=\bigcup_{\{u,v\}\in E, u<v} V^*_{\{u,v\}}$.
		\item $E^*=\bigcup_{\{u,v\}\in E, u<v} E^*_{\{u,v\}}$.
	\end{itemize} 
\end{definition}

Recall that our intention is to define ``parts'' of drawings as drawings. For this purpose, we need to define the sub-graph that is drawn in a sub-drawing. We denote such a sub-graph by a pair $(U,F)$, where $U$ is the set of vertices and $F$ is the set of edges. In what follows, we aim to define which $(U,F)$ is a {\em valid} pair, that is, which $(U,F)$ can represent a sub-graph that we may encounter. As the sub-drawings that we encounter contain vertices and edges from the sets $V\cup V^*$ and $E\cup E^*$, respectively, we demand that $U\subset V\cup V^*$, and $F\subset E\cup E^*$, where the endpoints of the edges in $F$ belong to $U$ (see Condition \ref{con:ValidPair1} in Definition \ref{def:ValidPair}). In Condition \ref{con:ValidPair2}, for every $u,v \in U$ such that $\{u,v\} \in E$, we consider two different cases. The first one is when the edge $\{u,v\}$ is not intersected by the frame $f$, except maybe at its endpoints (e.g., see the edge $\{x,w\}$ in Figure~\ref{fig:GStarDrawing}), or $\{u,v\}$ is drawn on $f$. In this case, $\{u,v\}$ has no turning points other than, possibly, $(d(u),\{u,v\})$ or $(d(v),\{u,v\})$; so, we do not need to use any of the dummy vertices of $\{u,v\}$, and hence $U\cap V^*_{\{u,v\}}=\emptyset$ (Condition \ref{con:ValidPair21}). In the other case, $\{u,v\}$ intersects $f$ in a point (or points) that is not its endpoints and $\{u,v\}$ is not drawn on $f$. In this case, $\{u,v\}$ has turning points other than $(d(u),\{u,v\})$ or $(d(v),\{u,v\})$; so, we have a dummy vertex of $\{u,v\}$ for every such turning point (e.g., see the edges $\{u,v\}$ and $\{s,t\}$ in Figure~\ref{fig:GStarDrawing}). Specifically, we have that $U\cap V^*_{\{u,v\}} = \{uv_1,uv_2,\ldots ,uv_{\mathsf{index}(u,v)}\}$ (Condition \ref{con:ValidPair221}). Now, we demand that the numbering of the dummy vertices corresponds the order of appearance of the turning points that they represent, from $u$ to $v$. Then, $u$ should be connected by an edge to $uv_1$, $v$ should be connected by an edge to $uv_{\mathsf{index}(u,v)}$, and for every $1\leq j\leq \mathsf{index}(u,v)-1$, $uv_j$ should be connected by an edge to $uv_{j+1}$(Condition \ref{con:ValidPair222}). Here, we have $\subseteq$ rather than $=$ since some of the aforementioned edges can be outside $f$ (e.g., see the edges $\{s,st_1\}$ and $\{st_1,st_2\}$ in Figure~\ref{fig:GStarDrawing}). In addition, in this case, the edge $\{u,v\}$ does not appear in the drawing (but only parts of this edge appear), and so $\{u,v\}\notin F$ (Condition \ref{con:ValidPair223}); e.g., see the edge $\{s,t\}$ in Figure~\ref{fig:GStarDrawing}.

% $F\cap E^*_{\{u,v\}}\subseteq \{\{u,uv_1\}\}\cup \{\{uv_j,uv_{j+1}\}~|~1\leq j\leq \mathsf{index}(u,v)-1\}\cup \{\{uv_{\mathsf{index}(u,v)},v\}\}$.   the turning points of $\{u,v\}$, is continuous.    . Let $d$ be a $G^*$- drawing. We denote by $V(d)$ and $E(d)$ the sets of the vertices and edges of $d$, respectively. In Conditions \ref{G*drawcon3} and \ref{G*drawcon4}, we demand that for every edge $\{u,v\}$, the numbering of the dummy vertices that represent the turning points of $\{u,v\}$, is continuous. That is, $uv_1,\ldots, uv_{\mathsf{index}(u,v)}$ appear in the drawing $d$, where ${\mathsf{index}(u,v)}\in \mathbb{N}$ is the number of the dummy vertices of $\{u,v\}$ in $d$. In addition, sometimes we denote $u$ by $uv_0$ and $v$ by $uv_{{\mathsf{index}(u,v)}+1}$. In Condition \ref{G*drawcon5}, we demand that if an edge $\{u,v\}$ appears in $d$, then we do not need any dummy vertices for this edge, so $uv_1\notin V(d)$ (e.g., see the edge $\{x,z\}$ in Figure~\ref{fig:InfFDr}). Similarly, in Condition \ref{G*drawcon6}, we demand that if we do have at least one dummy vertex for an edge $\{u,v\}$, then the edge cannot appear in $d$, but only ``parts'' of this edge can appear in $d$, and therefore $\{u,v\}\notin E(d)$

\begin{definition}[{\bf Valid $(U,F)$}] \label{def:ValidPair}
	Let $G=(V,E)$ be a graph. A pair $(U,F)$ is {\em valid} if the following conditions are satisfied.
	\begin{enumerate}
		\item $U\subset V\cup V^*$, and $F\subset E\cup E^*$ where the endpoints of the edges in $F$ belong to $U$. \label{con:ValidPair1}
		\item For every $u,v \in U$ such that $\{u,v\} \in E$, exactly one of the following conditions is satisfied:\label{con:ValidPair2}
		\begin{enumerate}
			\item $U\cap U^*_{\{u,v\}}=\emptyset$. \label{con:ValidPair21}
			\item There exists $\mathsf{index}(u,v)\in \mathbb{N}$ such that: \label{con:ValidPair22}
			\begin{enumerate}
				\item $U\cap V^*_{\{u,v\}} = \{uv_1,uv_2,\ldots ,uv_{\mathsf{index}(u,v)}\}$.  \label{con:ValidPair221}
				\item $F\cap E^*_{\{u,v\}}\subseteq \{\{u,uv_1\}\}\cup \{\{uv_j,uv_{j+1}\}~|~1\leq j\leq \mathsf{index}(u,v)-1\}\cup \{\{uv_{\mathsf{index}(u,v)},$ $v\}\}$.\label{con:ValidPair222} 
				\item $\{u,v\}\notin F$.\label{con:ValidPair223}
			\end{enumerate}   
		\end{enumerate}
	\end{enumerate} 
\end{definition}

Let $d$ be a drawing. We denote by $V(d)$ and $E(d)$ the vertex set and the edge set associated with $d$, respectively. Sometimes we denote $u$ by $uv_0$ and $v$ by $uv_{{\mathsf{index}(u,v)}+1}$. Let $f$ be a frame. We denote the set of all $P\in {\cal P}$ such that $P$ is strictly inside $f$ by  ${\cal P}(f)$ (e.g., see the blue path in Figure~\ref{fig:pathPandPStar}). We define an {\em almost straight-line path in $f$} as a plane path, where i) the endpoints of the path are mapped to points in $\gis(f)$, ii) the internal vertices are mapped to grid points strictly inside $f$, and iii) every edge is mapped to the line segment $s$ connecting the images of their endpoints, and there exist $a,b\in \gps(\fin)$ such that $s$ is on $\ell(a,b)$ (e.g., see the red path in Figure~\ref{fig:pathPandPStar}). We denote by ${\cal P}^*(f)$ the set of all almost straight line paths in $f$. 

\begin{figure}[!t]
	\centering
	\includegraphics[width = 0.4\textwidth, page = 40]{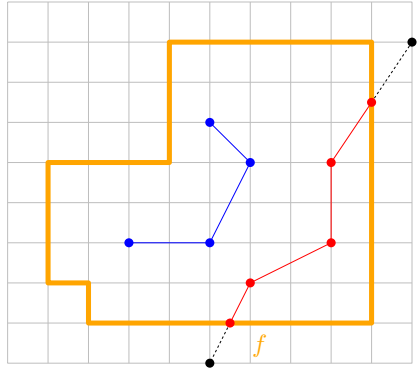}
	\caption{Examples of a path in ${\cal P}(f)$, drawn in blue, and a path in ${\cal P}^*(f)$, drawn in red.} 
	\label{fig:pathPandPStar}
\end{figure}

Now, we are ready to define a {\em $G^*$-drawing}. First, a $G^*$-drawing is a drawing of a graph that satisfies the conditions of Definition \ref{def:ValidPair}, that is, $(V(d),E(d))$ is valid (Condition \ref{G*drawcon10} in Definition \ref{def:gstdr}). Second, we present the conditions needed in order to verify that sub-drawings are sub-drawings of a potential polyline grid drawing (see Definition \ref{def:StrightD}). Every vertex from the set $V$ should be drawn on a grid point, as in a polyline grid drawing. On the other hand, a vertex $uv_i$ from the set $V^*$ represents a turning point of the edge $\{u,v\}$ in a frame $f$, so $uv_i$ is drawn on points from the set $\gis(\fin)$ (Condition \ref{G*drawcon2}). Similarly, every edge $\{u,v\}\in E$ should be drawn as a path from the set ${\cal P}(f_{\mathsf{init}})$, as in a polyline grid drawing. Now, consider an edge $\{uv_i,uv_{i+1}\}\in E(d)\cap E^*$, that is, $\{uv_i,uv_{i+1}\}$ represents a part of the edge $\{u,v\}$. Observe that, from Condition \ref{G*drawcon2}, vertices from $V^*$ can be drawn on a point in $\gis(\fin)$. Thus, the endpoints of the drawn of the edge  $\{uv_i,uv_{i+1}\}$ are drawn on points from the set $\gis(\fin)$. In addition, every part of a drawing of an edge, that is, a part of a $p\in {\cal P}(\fin)$, satisfies that every edge of $p$ is mapped to the line segment connecting the images of their endpoints, which are from the set $\gps(\fin)$. Thus, $d(\{uv_i,uv_{i+1}\})\in {\cal P}^*(f_{\mathsf{init}})$ (Condition \ref{G*drawcon3}). Now, recall that in the definition of a polyline grid drawing (Definition \ref{def:StrightD}), we demand that an edge in the drawing does not intersect itself. Therefore, in Conditions \ref{G*drawcon7}, \ref{G*drawcon8} and \ref{G*drawcon9} we demand that the ``parts'' of the edge $\{u,v\}$ are non-intersecting (e.g., see the edges $\{st_2,st_3\}$ and $\{st_3,t\}$ in Figure~\ref{fig:GStarDrawing}).

\begin{definition}[{\bf $G^*$-Drawing}] \label{def:gstdr}
Let $G=(V,E)$ be a graph. A drawing $d$ is a {\em $G^*$-drawing} if the following conditions are satisfied.
\begin{enumerate}
%\item $V(d)\subseteq V\cup V^*$. \label{G*drawcon1}
%\item $E(d)\subseteq E\cup E^*$.  \label{G*drawcon2}
%\item For every $uv_i\in V(d)$, there exists a unique $\mathsf{index}(u,v)\in \mathbb{N}$ such that for every $1\leq j\leq \mathsf{index}(u,v)$, $uv_j\in V(d)$, and for every $j>\mathsf{index}(u,v)$, $uv_j\notin V(d)$. \label{G*drawcon3}
%\item For every $uv_i\in V(d)$, if $\{uv_i,v\}\in d(V)$ then $i=\mathsf{index}(u,v)$. \label{G*drawcon4}
%\item For every $\{u,v\}\in E$, if $\{u,v\}\in E(d)$, then $uv_1\notin V(d)$. \label{G*drawcon5}
%\item For every $\{u,v\}\in E$, if $u,v\in V(d)$ and $uv_1\notin V(d)$, then $\{u,v\}\in E(d)$.  \label{G*drawcon6}
\item $(V(d),E(d))$ is valid. \label{G*drawcon10} 
\item For every $u\in V(d)\cap V$, $d(u)\in \gps(f_{\mathsf{init}})$, and for every $u\in V(d)\cap V^*$, $d(u)\in \gis(f_{\mathsf{init}})$. \label{G*drawcon2} 
\item For every $\{u,v\}\in E(d)\cap E$, $d(\{u,v\})\in {\cal P}(f_{\mathsf{init}})$, and for every $\{uv_i,uv_{i+1}\}\in E(d)\cap E^*$, $d(\{uv_i,uv_{i+1}\})\in {\cal P}^*(f_{\mathsf{init}})$. \label{G*drawcon3} 
\item For every $uv_i,uv_j\in V^*$ such that $i\neq j$, $d(uv_i)\neq d(uv_j)$. \label{G*drawcon7} 
\item For every $uv_j\in V(d)$ and $\{uv_i,uv_{i+1}\}\in E(d)$, $j\neq i,i+1$, $d$ does not draw $uv_j$ on $d(\{uv_i,uv_{i+1}\})$.  \label{G*drawcon8}  
\item Every two  edges $\{uv_i,uv_{i+1}\}, \{uv_j,uv_{j+1}\}\in E(d)$, $i\neq j$, do not intersect internally.\label{G*drawcon9}    
\end{enumerate}
\end{definition}  

In the rest of this subsection, we aim to extend some of the definitions in Section \ref{sec:drawnSep}, to be compatible with the definition of a $G^*$-drawing.

%\begin{figure}[!t]
%	\centering
%	\begin{subfigure}{0.496\textwidth}
%		\includegraphics[width = \textwidth, page = 38]{figures/drawnTreewidth}
%		\subcaption{}
%		\label{fig:GStarDrawing2}
%	\end{subfigure}
%	\hfil
%	\begin{subfigure}{0.496\textwidth}
%		\includegraphics[width = \textwidth, page = 39]{figures/drawnTreewidth}
%		\subcaption{}
%		\label{fig:GStarDrawing2}
%	\end{subfigure}
%	
%	\caption{An example of a $G^*$ drawing.} 
%	\label{fig:GStarDrawing}
%\end{figure}

\begin{figure}[!t]
	\centering
	\includegraphics[width = 0.5\textwidth, page = 39]{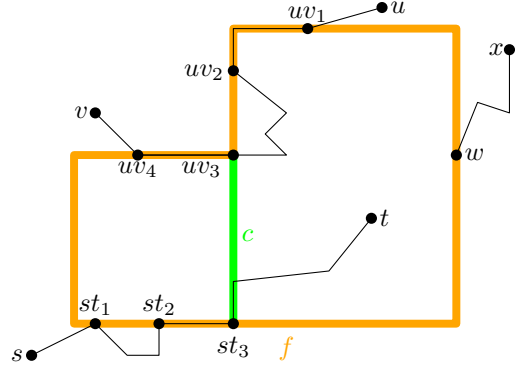}
	\caption{An example of a $G^*$-drawing. A frame $f$ is drawn in orange and a cutter $c$ of $f$ is drawn in green.} 
	\label{fig:GStarDrawing}
\end{figure}

First, we extend Definition \ref{def:IntPoint}, the definition of turning points. Recall that we use $V^*$ to define sub-drawings as drawings, but the vertices in $V^*$ do not appear in the original drawing. Therefore, we would not like to refer to a point $d(uv_i)$, where $uv_i\in V^*$, as a turning point in the partial drawing in a frame $f$, unless it is a turning point also in the original drawing. For example, see the point $d(st_3)$ in Figure~\ref{fig:GStarDrawing}. Observe that $(d(st_3),\{s,t\})$ is a turning point of $f$, but $(d(st_3),\{s,t\})$ is not a turning point of the left subframe created by the cutter $c$. Technically, we refer to a dummy edge $\{uv_i,uv_{i+1}\}\in E^*$ not as to a standalone edge, but as part of the edge $\{u,v\}$, and we refer to a dummy vertex $uv_i\in V^*$ as a ``point'' on the edge $\{u,v\}$.  Recall that a turning point of an edge $\{u,v\}$ in $f$ is an intersection point of $f$ and $d(\{u,v\})$, such that ``right after'' or ``right before'' $p$  there is no intersection between $d(\{u,v\})$ and $f$. We split the definition into three parts corresponding to three cases regarding the point $p\in \gi(d(\{uv_i,uv_{i+1}\}))\cap \gi(f)$. If $p\notin \{d(uv_i),d(uv_{i+1})\}$, the terms right after or right before $p$ are well defined on $d(\{uv_i,uv_{i+1}\})$. In this case, if $(p,\{u,v\})$ is indeed a turning point, we call it a {\em middle turning point} (see Definition \ref{def:IntPointG*3}), e.g. see the red point $p$ in Figure~\ref{fig:lrmTurningPoints}. Now, assume that $p\in \{d(uv_i),d(uv_{i+1})\}$, and, without loss of generality, that $p=d(uv_i)$. Then, the term right after $p$ is well defined on $d(\{uv_i,uv_{i+1}\})$, but the term right before is not, since $d(uv_i)$ is the ``left'' endpoint of $d(\{uv_i,uv_{i+1}\})$. So, we can conclude that the drawings of $\{u,v\}$ and $f$ do not intersect right before $p$ if one of two conditions satisfied:
\begin{itemize}
\item $\{uv_{i-1},uv_i\}\notin E(d)$.
\item $\{uv_{i-1},uv_i\}\in E(d)$ and right before $d(uv_i)$, $d(\{uv_{i-1},uv_i\})$ and $f$ do not intersect.
\end{itemize} 
In this case, if $(p,\{u,v\})$ is a turning point, we call it a {\em left turning point} (see Definition \ref{def:IntPointG*1}); e.g., see the red points $d(uv_2)$ and $d(st_2)$ in Figure~\ref{fig:lrmTurningPoints}. Observe that if $i=0$, that is, $\{uv_i,uv_{i+1}\}=\{u,uv_1\}$, then trivially $\{uv_{i-1},uv_i\}=\{uv_{-1},v\}\notin E(d)$, so $(p,\{u,v\})$ is a turning point as expected. The other case, where $p=d(uv_{i+1})$, is symmetric, and we call this kind of turning points {\em right turning points} (see Definition \ref{def:IntPointG*2}); e.g., see the red points $d(st_2)$ and $d(st_3)$ in Figure~\ref{fig:lrmTurningPoints}. At last, $(p,\{u,v\})$ is a turning point in $f$ if $(p,\{u,v\})$ is a left, right or middle turning point (see Definition \ref{def:IntPointG*}).

%\begin{itemize}
%\item If $p=d(uv_i)$, then $(p,\{u,v\})$ is a turning point if one of the following condition is satisfied:
%\begin{itemize}
%\item 
%\end{itemize}
%\end{itemize}
In the following definitions, we use the following notation: let $a$ and $b$ be two points in $\mathbb{R}^2$, and let $\epsilon > 0$; then, recall (from Section \ref{sec:drawnSep}) that we denote $\ell(a,a_\epsilon)$ by $\mathsf{line}_\epsilon(a,b)$, where $a_\epsilon$ is the point on the line $\ell(a,b)$ at distance $\epsilon$ from $a$ if it exists.

%Let $a$ and $b$ be two points in $\mathbb{R}^2$ and let $\epsilon > 0$. We denote $\ell(a,a_\epsilon)$ by $\mathsf{line}_\epsilon(a,b)$, where $a_\epsilon$ is the point on the line $\ell(a,b)$ at distance $\epsilon$ from $a$ if it exists.

%\begin{definition} [{\bf Turning Points of a Drawn Edge in a Frame}] \label{def:IntPoint}
%Let $G=(V,E)$ be a graph and let $d$ be a segmented grid drawing of $G$. Let $f$ be a frame and let $\{u,v\}$ be an edge of $G$. First, we say that $(d(u),\{u,v\})$ (and similarly $(d(v),\{u,v\})$) is a {\em turning point} in $f$ if $d(u)\in \gp(f)$ ($d(v)\in \gp(f))$ (see $(d(u_3),\{u_3,u_4\})$ in Figure~\ref{fig:TP}). In addition, let $p\in \gi(d(\{u,v\}))\cap \gi(f)$ be a point, such that $p\notin \{d(u),d(v)\}$. Let $p_i$ and $p_j$ be the two vertices of the path $d(\{u,v\})=(p_1,p_2,\ldots,p_k)$ such that $j>i$, $p\notin \{p_i,p_j\}$ and $(p_i,\ldots,p_j)$ is the minimum size subpath containing $p$. We say that $(p,\{u,v\})$ is a {\em turning point} in $f$, if there exists $\epsilon>0$ such that at least one of the following conditions is satisfied:
%\begin{itemize}
%	\item $\gi(p^\epsilon_{p_i,p})\cap \gi(f)=\{p\}$ (see point $(p_3,\{u_1,u_2\})$ in Figure~\ref{fig:TP}). 
%	\item $\gi(p^\epsilon_{p,p_j})\cap \gi(f)=\{p\}$ (see point $(p_1,\{u_1,u_2\})$ in Figure~\ref{fig:TP}).
%\end{itemize}
%\end{definition}

\begin{definition} [{\bf Left Turning Point in a $G^*$-Drawing}] \label{def:IntPointG*1}
Let $G=(V,E)$ be a graph and let $d$ be a $G^*$-drawing. Let $f$ be a frame and let $\{uv_i,uv_{i+1}\}\in E(d)$. Then, $(d(uv_i),\{u,v\})$ is a {\em left turning point} in $f$ if there exists $\epsilon>0$ such that at least one of the following conditions is satisfied:
\begin{itemize}
	\item $\gi(\mathsf{line}_\epsilon(p,p'))\cap \gi(f)=\{p\}$, where $d(\{uv_{i},uv_{i+1}\})=(p,p',\ldots,d(uv_{i+1}))$.
	\item $\{uv_{i-1},uv_{i}\}\notin E(d)$.
	\item $\{uv_{i-1},uv_{i}\}\in E(d)$ and $\gi(\mathsf{line}_\epsilon(p,p''))\cap \gi(f)=\{p\}$, where $d(\{uv_{i-1},uv_{i}\})=(d(uv_{i+1}),\ldots,p'',p)$.
%\item $\gi(p^\epsilon_{p,p'})\cap \gi(f)=\{p\}$, where $d(\{uv_{i},uv_{i+1}\})=(p,p',\ldots,d(uv_{i+1}))$.
%\item $\{uv_{i-1},uv_{i}\}\notin E(d)$.
%\item $\{uv_{i-1},uv_{i}\}\in E(d)$ and $\gi(p^\epsilon_{p'',p})\cap \gi(f)=\{p\}$, where $d(\{uv_{i-1},uv_{i}\})=(d(uv_{i+1}),\ldots,p'',p)$.
\end{itemize}
\end{definition}

\begin{figure}[!t]
	\centering
	\includegraphics[width = 0.5\textwidth, page = 43]{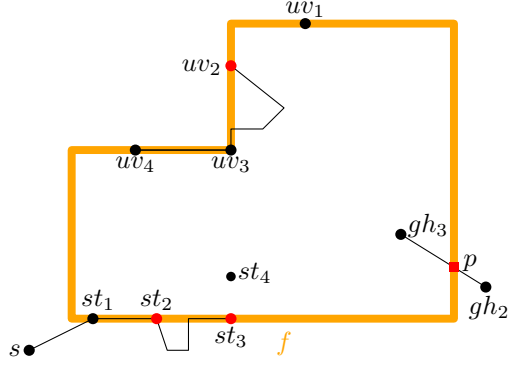}
	\caption{An illustration of left, right and middle turning points (drawn in red) in a $G^*$ drawing in a frame. Specifically, $(d(uv_2),\{u,v\})$ is a left turning point in $f$; $(d(st_2),\{s,t\})$ is both a left and a right turning point in $f$; $(d(st_3),\{s,t\})$ is a right turning point in $f$; $p$ is a middle turning point in $f$.} 
	\label{fig:lrmTurningPoints}
\end{figure}

\begin{definition} [{\bf Right Turning Point in a $G^*$-Drawing}] \label{def:IntPointG*2}
Let $G=(V,E)$ be a graph and let $d$ be a $G^*$-drawing. Let $f$ be a frame and let $\{uv_i,uv_{i+1}\}\in E(d)$. Then, $(d(uv_{i+1}),\{u,v\})$ is a {\em right turning point} in $f$ if there exists $\epsilon>0$ such that at least one of the following conditions is satisfied:
\begin{itemize}
\item $\gi(\mathsf{line}_\epsilon(p,p'))\cap \gi(f)=\{p\}$, where $d(\{uv_{i},uv_{i+1}\})=(d(uv_i)\ldots,p',p)$.
\item $\{uv_{i+1},uv_{i+2}\}\notin E(d)$.
\item $\{uv_{i+1},uv_{i+2}\}\in E(d)$ and $\gi(\mathsf{line}_\epsilon(p,p''))\cap \gi(f)=\{p\}$, where $d(\{uv_{i+1},uv_{i+2}\})=(p,p'',\ldots d(uv_{i+2}))$.
%\item $\gi(p^\epsilon_{p,p'})\cap \gi(f)=\{p\}$, where $d(\{uv_{i},uv_{i+1}\})=(d(uv_i)\ldots,p',p)$.
%\item $\{uv_{i+1},uv_{i+2}\}\notin E(d)$.
%\item $\{uv_{i+1},uv_{i+2}\}\in E(d)$ and $\gi(p^\epsilon_{p,p''})\cap \gi(f)=\{p\}$, where $d(\{uv_{i+1},uv_{i+2}\})=(p,p'',\ldots d(uv_{i+2}))$.
\end{itemize}
\end{definition}

\begin{definition} [{\bf Middle Turning Point in a $G^*$-Drawing}] \label{def:IntPointG*3}
Let $G=(V,E)$ be a graph and let $d$ be a $G^*$-drawing. Let $f$ be a frame and let $\{uv_i,uv_{i+1}\}\in E(d)$. Let $p\in (\gi$ $(d(\{uv_i,uv_{i+1}\}))\cap \gi(f))\setminus \{d(uv_i),d(uv_{i+1})\}$ be a point. Let $p_i$ and $p_j$ be the two vertices of the path $d(\{uv_i,uv_i+1\})=(p_1,p_2,\ldots,p_k)$ such that $j>i$, $p\notin \{p_i,p_j\}$ and $(p_i,\ldots,p_j)$ is the minimum size subpath containing $p$\footnote{Observe that $j\in \{i+1,i+2\}$.}. Then, $(p,\{u,v\})$ is a {\em middle turning point} in $f$ if there exists $\epsilon>0$ such that at least one of the following conditions is satisfied:
\begin{itemize}
\item $\gi(\mathsf{line}_\epsilon(p,p_i))\cap \gi(f)=\{p\}$. 
\item $\gi(\mathsf{line}_\epsilon(p,p_j))\cap \gi(f)=\{p\}$.
%\item $\gi(p^\epsilon_{p_i,p})\cap \gi(f)=\{p\}$. 
%\item $\gi(p^\epsilon_{p,p_j})\cap \gi(f)=\{p\}$.
\end{itemize}
\end{definition}

\begin{definition} [{\bf Turning Point in a $G^*$-Drawing}] \label{def:IntPointG*}
Let $G$ be a graph and let $d$ be a $G^*$-drawing. Let $f$ be a frame and let $\{uv_i,uv_{i+1}\}\in E(d)$. Let $p\in \gi(d(\{uv_i,$ $uv_{i+1}\}))\cap \gi(f)$ be a point. Then, $(p,\{u,v\})$ is a {\em turning point} if one of the following conditions is satisfied:
\begin{itemize}
\item $(p,\{u,v\})$ is a left turning point.
\item $(p,\{u,v\})$ is a right turning point.
\item $(p,\{u,v\})$ is a middle turning point.
\end{itemize}
\end{definition}

The definition for the number of turning points in $f$, $\mathsf{TurPoints}(f)$, remains unchanged, that is, $\mathsf{TurPoints}(f)=|\{(p,\{u,v\})~|~p\in P(f),\{u,v\}\in E, (p,\{u,v\})$ is a turning point in $f\}|$.

%\begin{observation}
%Let $G=(V,E)$ be a graph and let $d$ be a straight-line $G^*$ drawing. Let $f$ be a frame, let $\{u,v\}\in E$ be an edge and let $p\in P(f)$ be a point. Then, $(p,\{u,v\})$is a turning point if and only if there exists $\epsilon>0$ such that at least one of the following conditions is satisfied:
%\begin{itemize}
%\item $P([p,p+\epsilon])\cap (\bigcup_{i=0}^t P(d(\{uv_i,uv_{i+1}\}))=\{p\}$.
%\item $P([p-\epsilon,p])\cap (\bigcup_{i=0}^t P(d(\{uv_i,uv_{i+1}\}))=\{p\}$. 
%\end{itemize}   
%\end{observation}

Moreover, the definition for the cost of a frame (Definition \ref{def:sizeFr}) remains unchanged for $G^*$-drawings, as stated next: 

%\begin{definition} [{\bf Cost of a Frame}] \label{def:sizeFr}
%Let $G=(V,E)$ be a graph, let $d$ be a segmented grid drawing of $G$ and let $f$ be a frame. The {\em cost} of $f$ in $d$, denote by $\sizef(f)$, is the sum of the number of vertices of $f^{\mathsf{min}}$, the number of vertices of $G$ on $f$ and $\tp(f)$.   
%\end{definition}

\begin{definition} [{\bf Cost of a Frame in a $G^*$-Drawing}] \label{def:sizeFrG*}
Let $G=(V,E)$ be a graph, let $d$ be a $G^*$-drawing and let $f$ be a frame. The {\em cost} of $f$ in $d$, denote by $\sizef(f)$, is the sum of the number of vertices of $f^{\mathsf{min}}$, the number of vertices of $G$ on $f$ with respect to $d$ and $\tp(f)$.     
\end{definition}

%Let $G=(V,E)$ be a graph, let $d$ be a segmented grid drawing of $G$ and let $f$ be a frame. The {\em cost} of $f$ in $d$, denote by $\sizef(f)$, is the sum of the number of vertices of $f^{\mathsf{min}}$, the number of vertices of $G$ on $f$ with respect to $d$ and $\tp(f)$.   
%When $d$ is clear from the context, we refer to  $\mathsf{Width}(w,d)$ simply as $\mathsf{Width}(w)$.

%For a $G^*$-drawing $d$ and a frame $f$, we say that $d$ is {\em bounded} by $f$, or $f$ is a  {\em boundary frame} of $d$, if no vertex or edge are drawn in the outer face of $f$. We say that $d$ is {\em strictly bounded} by $f$ if $d$ is bounded by $f$ and also the intersection of the embedding of all the vertices and edges of $d$, and $f$ is empty. 

Lastly, we aim to extend Definition \ref{def:DrawnFrDeco}. Observe that in Condition \ref{def:tfCon1} of this definition, we demand that the frame associated with $v_r$ is $R_d$. As we will see later, for every other frame $f$ that strictly bounds $d$, we will get the same value for $\mathsf{dtw(d)}$ if we change Definition \ref{def:DrawnFrDeco} in a way that the frame that is associated with $v_r$, the ``initial frame'', is $f$. Nevertheless, we might get a different value for $\mathsf{dtw(d)}$ if we take two frames $f$ and $f'$ that bound (but not necessarily strictly bound) $d$ as the initial frames.  Moreover, recall that in our scheme, we aim to build (partial) drawing, with a bounded $\mathsf{dtw}$, by gluing two smaller parts of that (partial) drawing.  To control $\mathsf{dtw(d)}$, we would like to build a frame-tree of the glued drawing, based on the frame-trees we should already have for the smaller parts. So, we extend  Definition \ref{def:DrawnFrDeco} to be compatible with both a $G^*$-drawing and an initial frame other than $R_d$. 

% \begin{definition} [{\bf Drawn Tree-Frame}] \label{def:DrawnFrDeco}
%	Let $G=(V,E)$ be a graph and let $d$ be a segmented grid drawing of $G$. A {\em drawn tree-frame} of $d$ is a pair $({\cal T}=(V_T,E_T),\alpha: V_T\rightarrow \mathsf{Frames})$ where $\cal T$ is a rooted tree with root $v_r\in V_T$ and the following conditions are satisfied:
%	\begin{enumerate}
%	\item $\alpha(v_r)=R_d$. \label{def:tfCon1}
%\item Every $v\in V_T$ that is not a leaf, has exactly two children $v_1$ and $v_2$.
%\item For every $v\in V_T$ that is not a leaf, there exists a cutter $c_v$ of $f$, such that $\alpha(v_1)=f_1(c_v)$ and $\alpha(v_2)=f_2(c_v)$, where $\alpha(v)=f$. We say that $c_v$ is the cutter associated with $v$.   
%\item A vertex $v\in V_T$ is a leaf if and only if there are no grid points in the interior of $\alpha(v)$.             
%	\end{enumerate}   
%\end{definition}                  

\begin{definition} [{\bf $f$-Frame-Tree}] \label{def:DrawnDecof}
Let $G=(V,E)$ be a graph, let $d$ be a $G^*$-drawing and let $f$ be a boundary frame of $d$. An {\em $f$-frame-tree} of $d$ is a pair $({\cal T}=(V_T,E_T),\alpha: V_T\rightarrow \mathsf{Frames})$ where $\cal T$ is a binary rooted tree, and: 
\begin{enumerate}
	\item $\alpha(v_r)=f$, where $v_r$ is the root of $\cal T$.
%	\item Every $v\in V_T$ that is an internal vertex has exactly two children $v_1$ and $v_2$.
	\item For every internal vertex $v\in V_T$, there exists a cutter $c_v$ of $f$ such that $\alpha(v_1)=f_1(c_v)$ and $\alpha(v_2)=f_2(c_v)$, where $\alpha(v)=f$ and $v_1, v_2$ are the children of $v$ in ${\cal T}$. We say that $c_v$ is the {\em cutter associated with $v$}.    
	\item A vertex $v\in V_T$ is a leaf if and only if there are no grid points in the interior of $\alpha(v)$.             
\end{enumerate}   
\end{definition}

The {\em width} of an $f$-drawn tree frame, denoted by $\mathsf{width}({\cal T},\alpha)$, is the maximum cost of a frame in $d$, i.e., $\mathsf{width}({\cal T},\alpha)=\max\{\sizef(\alpha(v))~|~v\in V_T\}$.
The {\em $f$-drawn treewidth} of $d$, denoted by $\mathsf{dtw}(d,f)$, is minimum width of an $f$-drawn tree frame of $d$, i.e., $\mathsf{dtw}(d,f)=\min\{\mathsf{width}({\cal T},\alpha)~|~({\cal T},\alpha)$ is an $f$-drawn tree frame of $d\}$.    

%\begin{lemma}
%Let $G=(V,E)$ be a graph and let $f$ be a rectilinear drawing of $G$. Let $({\cal T}=(V_T,E_T),\alpha: V_T\rightarrow \mathsf{Frames})$ be a drawn tree-frame of $f$. Then, $({\cal T}=(V_T,E_T),\beta: V_T\rightarrow 2^{V})$, where $\beta(v)=S(\alpha(v))$ for every $v\in V_T$, is a tree decomposition of $G$.  
%\end{lemma}

%$G*(F)$-operation. Let $F=(U,U',I,f)$ be a frame. Then $G*=(V*,E*)$ is the graph defined by $V*=U\cup V(I)$ and $E*=\{\{u,v\}~|~\{u,v\}\in E$ and $u,v\in U\} \cup E(I)$.

%!TEX root =Main-Movement.tex

\subsection{Info-Frames}\label{sec:infoFra}

In this subsection we introduce the term {\em info-frame}. An info-frame is a way for us to describe a minimal amount of information we need to store for small parts of a drawing in order to glue two parts together. We begin with the first property of an info-frame. Assume that we have a part of a drawing $d$, that is, a $G^*$-drawing, bounded by a frame $f$. First, we would like to know exactly how the boundary of the drawing looks like, that is, the drawing that is ``on'' $f$, denoted by $d_f$. Second, we would like to know the set of vertices that are drawn strictly inside $f$, denoted by $U_f$. Also, we would like to know the set of edges with at least one endpoint in $V(d_f)$ that we have in the partial drawing, denoted by $E_f$.  Now, recall that in polyline grid drawings edges are drawn as paths from the set $\cal{P}$. So, for every $\{u,v\}\in E$, $d(\{u,v\})$ is a drawn straight line path, whose vertices are on grid points. In particular, bends in $d(\{u,v\})$ can be only on grid points. In light of this, consider a vertex $uv_i$ drawn on a point from the set $\gis(\fin)\setminus \gps(\fin)$. We need to store the ``direction'' of  $\{uv_i,uv_{i+1}\}\in E^*$ (or $\{uv_{i-1},uv_{i}\}\in E^*$), which represents a part of the edge $\{u,v\}$, in order to verify that the other part of the edge has the same direction, so there is no bending at the point $d(\{uv_i,uv_{i+1}\})$. 

In Definition \ref{def:infFr} ahead, we state the conditions that a tuple  $F=(f,d_f,E_f,U_f,\mathsf{V^*Dir}_f)$ should satisfy in order to have the first property an info-frame. The intuition behind these conditions is as follows.

\smallskip\noindent{\bf Conditions \ref{definfFraCon1}, \ref{definfFraCon2} and \ref{definfFraCon33}.} First, we state, as explained, that $f$ is a frame, $d_f$ is a $G^*$-drawing on $f$, $U_f\subseteq V$, and $E_f\subseteq \{\{u,v\}\in E^*\cup E~|~u\in V(d_f),v\in V(d_f)\cup U_f\}$. In addition, we expect that the vertices in $V^*$ represent the turning points in $f$ (e.g., see vertices $uv_1, uv_2, uv_3, uv_4$ in Figure~\ref{fig:infFandD}). 

\smallskip\noindent{\bf Condition \ref{definfFraCon4}.} As explained earlier, for every $uv_i\in V(d_f)\cap V^*$ such that $d_f(uv_i)\in \gis($ $\fin)\setminus \gps(\fin)$, we would like to store the direction of the edge corresponding to $uv_i$. For this purpose, we would like to store a point $a\in \gis(\fin)$, $a\neq d_f(uv_i)$, such that $\ell(\mathsf{V^*Dir}_f(uv_i),d_f(uv_i))$ is inside $f$. The line $\ell(\mathsf{V^*Dir}_f(uv_i),d_f(uv_i))$ represents the direction of the edge attached to $uv_i$ from the point $d_f(uv_i)$ (e.g., see the pink lines $\ell(c, d_f(uv_2)$, $\ell(c', d_f(gh_1)$ and $\ell(c'', d_f(mn_1)$ in Figure~\ref{fig:infFandD}). So, later, when we glue two ``sides'' of the edge corresponding to $uv_i$, we will verify that the directions of both sided of $uv_i$ are on the same line, and therefore there is no bend at the point $d_f(uv_i)$.

% straight line ``around'' $d_f(uv_i)$, such that i) one part of the line, from one endpoint of the line to $d_f(uv_i)$, is inside $f$ ii) the other part of the line is outside $f$ (e.g., see the red line passing through $uv_2$ and $gh_1$ in Figure~\ref{fig:InfF}). The side of the line that is inside $f$, is on the part of the edge, attached to $uv_i$, that is inside $f$ (e.g., see the red line passing through $uv_2$ (resp., $gh_1$) and the purple edge $\{uv_1,uv_2\}$ (resp., $\{g,gh_1\}$) in Figure~\ref{fig:InfFDr}). So, later, when we glue two sides of the edge attached to $uv_i$, we get that there is no bending at the point $d_f(uv_i)$.

%We now define the set $\giss(\fin)=\gis\cup \{$the intersection point of $\ell(a,a')$ and $\ell(b,b')~|~a,a'\in \gps(\fin), b,b'\in \gis(\fin)$ and $\ell(a,a')$ and $\ell(b,b')$ are not parallel$\}$ and we define the {\em direction of $uv_i$ in $F$}, denoted by $\mathsf{V^*Dir}_f(uv_i)$ to be $(a,b)$, $a,b\in \giss(\fin)$, such that i) $a,b\neq d_f(uv_i)$ ii) $d_f(uv_i)$ is on $\ell(a,b)$ iii) $\ell(a,d_f(uv_i))$ is inside $f$ and $\ell(b,d_f(uv_i))$ is outside $f$. The reason to take points from the set $\giss(\fin)$ will be clear later in Section \ref{sec:infoFra}.

%Let $a,b\in \gps(f_\mathsf{init})$, let $c\in \gp$ be a point on $\ell(a,b)$ and let $\epsilon\in \mathbb{R}$. We denote $c_{a,b}^{\epsilon,a} =c_{a,c}^{\epsilon}$ and $c_{a,b}^{\epsilon,b} =c_{c,b}^{\epsilon}$.

%Let $f$ be a frame. We denote $\epsilon(f)=\frac{1}{2}\mathsf{min}\{|\ell(a,b)|~|~a,b\in\gis(f)\}$.

\begin{definition}[{\bf Template for an Info-Frame}] \label{def:infFr}
	Let $G=(V,E)$ be a graph. A tuple  $F=(f,d_f,E_f,U_f,\mathsf{V^*Dir}_f)$ is an {\em \bf Info-Frame Template} if the following conditions are satisfied.
	\begin{enumerate}
		\item $f$ is a frame. \label{definfFraCon1}
		\item $d_f$ is a $G^*$-drawing on $f$ such that for every vertex in $V^*$, $uv_i\in V(d_f)$, it follows that $(d_f(uv_i),\{u,v\})$ is a turning point in $f$.  \label{definfFraCon2}
		\item $U_f\subseteq V$ and $E_f\subseteq \{\{u,v\}\in E^*\cup E~|~u\in V(d_f),v\in V(d_f)\cup U_f\}$. \label{definfFraCon33}
		\item For every $uv_i\in V(d_f)\cap V^*$ such that \\ $d_f(uv_i)\in \gis(\fin)\setminus\gps( \fin)$, \\ $\mathsf{V^*Dir}_f(uv_i)\in \gis(\fin)$ and the following conditions hold:\label{definfFraCon4}
		\begin{enumerate}
			\item $\mathsf{V^*Dir}_f(uv_i)\neq d_f(uv_i)$.
			\item $\ell(\mathsf{V^*Dir}_f(uv_i),d_f(uv_i))$ is inside $f$.
		\end{enumerate}
	\end{enumerate}
\end{definition}

%For a frame $f$ and a $G^*$-drawing, $d_f$, we say that {\em $d_f$ is on $f$} if the vertices of $d_f$ and the edges of $d_f$ are drawn into points in $P(f)$ (see Figure~\ref{fig:InfF}). 

We continue with the next property of an info-frame (Definition \ref{definfFraCon3}). Here, we give the intuition behind the conditions.  

\smallskip\noindent{\bf Conditions \ref{definfFraCon3a} and \ref {definfFraCon3b}.} In these conditions, we state additional requirements expected to be satisfied when $F$ is meant to describe a part of a drawing of the graph $G$ that is bounded by $f$. Recall that $U_f$ represents the set of vertices that are drawn strictly inside $f$, and $E_f$ represents the set of edges that are drawn strictly inside $f$, except at least one endpoint on $f$. Thus, we demand that $V(d_f)\cap U_f=\emptyset$ and $E_f\cap E(d_f)=\emptyset$ (e.g., see Figure~\ref{fig:infFandD}). 

\smallskip\noindent{\bf Condition \ref{definfFraCon3c}.} Now, assume that there is an edge $\{u,v\}\in E$ such that $u$ is drawn strictly inside $f$, and $v$ is drawn on $f$. There are exactly two cases we consider in Condition \ref{definfFraCon3c}. The first case, stated in Condition \ref{definfFraCon3c1},  is that the drawing of $\{u,v\}$ intersects $f$ only at $d(v)$ (e.g., see the edge $\{a,b\}$ in Figure~\ref{fig:InfFDr}). In this case, there are no turning points of $\{u,v\}$ in $f$ other than $(d(u),\{u,v\})$, so $U^*_{\{u,v\}}\cap V(d_f)=\emptyset$. In addition, $\{u,v\}$ is drawn strictly inside $f$ except at $d(u)$, therefore $\{u,v\}\in E_f$. The other case, stated in Condition \ref{definfFraCon3c2}, is where the drawing of $\{u,v\}$ intersects $f$ in points other than $d(u)$. Then, there must be at least one turning point other than $(d(u),\{u,v\})$, so $U^*_{\{u,v\}}\cap V(d_f)\neq \emptyset$. In addition, in this case, the edge $\{u,v\}$ is ``divided'' into the edges $\{u,uv_1\},\{uv_1,uv_2\},\ldots \{uv_{\mathsf{index}(u,v)},v\}$, so $\{u,v\}\notin E_f$. Further, $\{uv_{\mathsf{index}(u,v)},v\}$ is an edge that is strictly drawn inside $f$ except at the endpoint $d(uv_{\mathsf{index}(u,v)})$, so $\{uv_{\mathsf{index}(u,v)},v\}\in E_f$. 

\smallskip\noindent{\bf Condition \ref{definfFraCon3d}.} This condition is the symmetric version of Condition \ref{definfFraCon3c}.

\smallskip\noindent{\bf Condition \ref{definfFraCon3e}.} Now, assume that there is an edge $\{u,v\}$ such that $u$ is drawn strictly inside $f$, and $v$ is drawn strictly outside $f$, that is, $u\in U_f$ and $v\notin U_f\cup V(d_f)$. In this case, there must be a turning point of $\{u,v\}$ in $f$, therefore $uv_1\in V(d_f)$. In addition, $uv_1$ is connected with an edge to $u$ (that is strictly inside $f$), so $\{u,uv_1\}\in E_f$ (e.g., see the edge $\{u,v\}$ in Figure~\ref{fig:infFandD}).

\smallskip\noindent{\bf Condition \ref{definfFraCon3f}.} This condition is the symmetric version of Condition \ref{definfFraCon3e}.

\smallskip\noindent{\bf Condition \ref{definfFraCon3g}.} Now, assume that there is an edge $\{u,v\}$ such that $u$ and $v$ are drawn strictly inside $f$, that is, $u,v\in U_f$. If $\{u,v\}$ intersects $f$, then there are $\mathsf{index}(u,v)$ turning points of $\{u,v\}$ in $f$, so $uv_1,\ldots,uv_{\mathsf{index}(u,v)}\in V(d_f)$ (in particular $U^*_{\{u,v\}}\cap V(d_f)\neq \emptyset$). In addition, $uv_1$ and $uv_{\mathsf{index}(u,v)}$ are connected with an edge that is strictly inside $f$ to $u$ and $v$, respectively. So $\{u,uv_1\},\{uv_{\mathsf{index}(u,v)},v\}\in E_f$, and Condition \ref{definfFraCon3g1} is satisfied. Otherwise, $\{u,v\}$ does not intersect $f$, so there are no intersection points of $\{u,v\}$ in $f$. Thus $U^*_{\{u,v\}}\cap V(d_f)=\emptyset$, and Condition \ref{definfFraCon3g2} is satisfied (e.g., see the edge $\{x,w\}$ in Figure~\ref{fig:InfFDr}). 

\smallskip\noindent{\bf Condition \ref{definfFraCon3h}.} Observe that for every $\{u,v\}\in E$, $1\leq i\leq \mathsf{index}(u,v)$, the edge $\{uv_i,uv_{i+1}\}$ might be either drawn strictly inside $f$, except for necessarily at the endpoints, or drawn on $f$, but these two cases cannot hold simultaneously. Therefore, if $\{uv_i,uv_{i+1}\}\in E(d_f)$, then $\{uv_i,uv_{i+1}\}\notin E_f$, so Condition \ref{definfFraCon3h} is satisfied (e.g., see the edge $\{uv_3,uv_4\}$ in Figure~\ref{fig:infFandD}).

\begin{definition}[{\bf Edges and Vertices of an Info-Frame Property}] \label{definfFraCon3}
	Let $G=(V,E)$ be a graph. A tuple  $F=(f,d_f,E_f,U_f,\mathsf{V^*Dir}_f)$ exhibits the {\em \bf Edges and Vertices of an Info-Frame Property} if $F$ is an {\bf Info-Frame Template} and the following conditions are satisfied:
	
	\begin{enumerate}
		\item $U_f\cap V(d_f)=\emptyset$ and $E_f\cap E(d_f)=\emptyset$. \label{definfFraCon3a}  \label{definfFraCon3b}
	%	\item $E_f\cap E(d_f)=\emptyset$.  \label{definfFraCon3b} %r every $\{u,v\}\in E(d_f)$, $\{u,v\}\notin E_f$.  \label{definfFraCon1}  
		\item For every $\{u,v\}\in E$ such that $u\in U_f$ and $v\in V(d_f)$, exactly one of the following conditions holds. \label{definfFraCon3c} 
		\begin{enumerate}
			\item $U^*_{\{u,v\}}\cap V(d_f)=\emptyset$ and $\{u,v\}\in E_f$.  \label{definfFraCon3c1}
			\item  $U^*_{\{u,v\}}\cap V(d_f)\neq \emptyset$, $\{u,v\}\notin E_f$ and $\{uv_{\mathsf{index}(u,v)},v\}\in E_f$.  \label{definfFraCon3c2}
		\end{enumerate}
		\item For every $\{u,v\}\in E$ such that $v\in U_f$ and $u\in V(d_f)$, exactly one of the following condition holds.  \label{definfFraCon3d}
		\begin{enumerate}
			\item $uv_1\notin V(d_f)$ and $\{u,v\}\in E_f$.  \label{definfFraCon3d1}
			\item  $\{uv_{\mathsf{index}(u,v)},v\}\in E_f$, $\{u,v\}\notin E_f$ and $\{u,uv_1\}\in E_f$.  \label{definfFraCon3d2}
		\end{enumerate}
		\item For every $\{u,v\}\in E$ such that $u\in U_f$ and $v\notin U_f\cup V(d_f)$, $\{u,uv_1\}\in E_f$.  \label{definfFraCon3e}
		\item For every $\{u,v\}\in E$ such that $v\in U_f$ and $u\notin U_f\cup V(d_f)$, $\{uv_{\mathsf{index}(u,v)},v\}\in E_f$.  \label{definfFraCon3f} 
		\item For every $\{u,v\}\in E$ such that $u,v\in U_f$, exactly one of the following conditions holds.\label{definfFraCon3g} 
		\begin{enumerate}
			\item  $U^*_{\{u,v\}}\cap V(d_f)\neq \emptyset$ and $\{uv_{\mathsf{index}(u,v)},v\}\in E_f$.\label{definfFraCon3g1} 
			\item $U^*_{\{u,v\}}\cap V(d_f)=\emptyset$. \label{definfFraCon3g2} 
		\end{enumerate}
		%\item For every $\{u,v\}\in E$, $0\leq i\leq t$, if $\{uv_i,uv_{i+1}\}\in E(d_f)$, then $\{uv_i,uv_{i+1}\}\notin E_f$.   
		\item  For every $\{u,v\}\in E$ and $1\leq i\leq \mathsf{index}(u,v)$ such that $\{uv_i,uv_{i+1}\}\in E(d_f)$, $\{uv_i,uv_{i+1}\}$ $\notin E_f$.\label{definfFraCon3h}
	\end{enumerate} 
	\end{definition}

%\begin{observation}
	%Let $f\in \mathsf{Frames}(\fin)$, let $a,b,c\in \gis(\fin)$ such that $c$ is on $f$, $a$ is inside $f$ and $b$ is strictly outside $f$ and $c$ is on $\ell(a,b)$. Then, $\ell(c,c_{a,b}^{\epsilon(\fin),a})$ is strictly inside $f$ except for the point $c$.
%\end{observation} 
The next property concerns the vertices $uv_i\in V(d_f)\cap V^*$ such that\\ $d_f(uv_i)\in\gis(\fin)\setminus \mathsf{GridPointSet}(f_{\mathsf{init}})$. We show that for such a vertex there must be exactly one neighbor in $E_f$, and no edges on $f$. Recall that $uv_i\in V(d_f)\cap V^*$ represents a turning point of the edge $\{u,v\}$ in $f$ and $\{u,v\}$ is drawn as a path in $\cal{P}$. So, if $uv_i$ has an edge on $f$, then one among $d(\{u,v\})$ and $f$ bends at the point $d_f(uv_i)$, a contradiction (Condition \ref{def:validDircon1}). Now, since $(d_f(uv_i),\{u,v\})$ is a turning point in $f$, then there must be part of $\{u,v\}$ that intersects $f$ exactly at $d_f(uv_i)$. So, $uv_i$ has exactly one neighbor in $E_f$ (Condition \ref{def:validDircon2}). 

\begin{definition}[{\bf Vertices on Grid Intersection Property}]\label{def:validDir}
Let $G=(V,E)$ be a graph. A tuple  $F=(f,d_f,E_f,U_f,\mathsf{V^*Dir}_f)$ exhibits the {\em \bf Vertices on Grid Intersection Property} if $F$ is an {\bf Info-Frame Template} and for every $uv_i\in V(d_f)\cap V^*$ such that $d_f(uv_i)\in \gis(\fin)\setminus \mathsf{GridPointSet}(f_{\mathsf{init}})$ the following conditions are satisfied:
	\begin{enumerate}
	\item $\{uv_{i-1},uv_{i}\},\{uv_i,uv_{i+1}\}\notin E(d_f)$.\label{def:validDircon1}
	\item Exactly one among  of the following conditions is satisfied:\label{def:validDircon2}
	\begin{enumerate}
		\item $\{uv_{i-1},uv_{i}\}\in E_f$.
		\item $\{uv_{i},uv_{i+1}\}\in E_f$.
	\end{enumerate}
	\end{enumerate} 
	
\end{definition}

Now, we are ready to define the term info-frame:

\begin{definition}[{\bf Info-Frame}] \label{def:infFr3}
	Let $G=(V,E)$ be a graph. A tuple  $F=(f,d_f,E_f,U_f,$ $\mathsf{V^*Dir}_f)$ is an {\em info-frame} if $F$ is an {\bf Info-Frame Template} and $F$ exhibits the {\bf Edges and Vertices of an Info-Frame Property} and the  {\bf Vertices on Grid Intersection Property}.
\end{definition}

\begin{figure}[!t]
	\centering
	\begin{subfigure}{0.45\textwidth}
		\includegraphics[width = \textwidth, page = 44]{figures/drawnTreewidth}
		\subcaption{}
		\label{fig:InfF}
	\end{subfigure}
	\hfil
	\begin{subfigure}{0.45\textwidth}
		\includegraphics[width = \textwidth, page = 45]{figures/drawnTreewidth}
		\subcaption{}
		\label{fig:InfFDr}
	\end{subfigure}
	
	\caption{(a) An example of an info-frame $F=(f,d_f,E_f,U_f, \mathsf{V^*Dir}_f)$ of a graph $G$ with $\{\{a,b\},\{u,v\},\{s,t\},\{y,z\},\{x,w\}\, \{g,h\}, \{i,j\}\}\subseteq E$, where the frame $f$ is drawn in orange, $d_f$ is drawn in black, $E_f=\{\{u,uv_1\},\{uv_1,uv_2\},\{g,gh_1\}, \{mn_1, n\},  \{st_3,t\}\}$ and $U_f=\{a, g, n, t,u,w,x\}$. A cutter $c$ of $f$ is drawn in green. The vertices which are mapped to points in $\gis(f) \setminus \mathsf{GridPointSet}(f)$ are shown by hollow squares. The pairs in $\mathsf{V^*Dir}_f$ and the lines connecting these pairs (representing directions) are shown in pink. (b) A drawing of the info-frame $F$. The drawings of the edges $E_f$ and the vertices in $U_f$ are shown in purple. } 
	\label{fig:infFandD}
\end{figure}

Now, for later use, define a few additional notations.  Let $d$ and $d'$ be two $G^*$-drawings and let $P\subset \mathbb{R}\times \mathbb{R}$  be a set of points. We denote by $V(d,P)$ the set of vertices in $V(d)$ that are drawn on points in $P$, that is $V(d,P)=\{v\in V(d)~|~$ there exists $p\in P$ such that $d(v)=p\}$. Next, we define a notation for a case where $d$ and $d'$ ``agree'' on the points of $P$. That is, for every vertex $u\in V(d,P)$, $d(u)=d'(u)$ and vice versa. In addition, for every point $p\in P$ and an edge $\{u,v\}$, $p$ is on the drawing of $\{u,v\}$ in $d$ if and only if $p$ is on the drawn of $\{u,v\}$ in $d'$. In this case, we say that  $d$ and $d'$ are {\em equal in $P$}.

\begin{definition}[{\bf Drawings Equality in a Set of Points}]\label{def:DrawindEq}
	For two $G^*$-drawings, $d$ and $d'$, and a set of points $P\subset \mathbb{R}\times \mathbb{R}$, we say that $d$ and $d'$ are {\em equal in $P$}, denoted by $d(P)=d'(P)$, if the following conditions are satisfied:
	\begin{enumerate}
		\item For every $u\in V(d,P)$, $d(u)=d'(u)$.
		\item For every $u\in V(d',P)$, $d(u)=d'(u)$.
		\item For every $p\in P$ and $\{u,v\}\in E$, $p\in P(d(\{uv_i,uv_{i+1}\}))$ if and only if $p\in P(d'(\{uv_i,$ $uv_{i+1}\}))$.
	\end{enumerate}         
\end{definition}

%For a $G^*$-drawing $d$, and a frame $f$, we denote by $d[f]$, the $G^*$-drawing induced by $d$ that is on $f$, if the part of $d$ that is drawn on $f$ is a $G^*$ drawing; Otherwise $d[f]=\emptyset$.

Now, we define the concept of a {\em drawing of an info-frame}. Briefly, a drawing $d$ of an info-frame $F$ is a drawing exemplifying the ``information'' encoded by $F$. Since we would like that the drawing would be part of a drawing of the graph $G$, we would like that this drawing would ``make sense'' as in the following example. Let $\{u,v\}\in E$ be an edge, and assume that $u,v\in U_f$, that is, the vertices $u$ and $v$ are to be drawn strictly inside $f$. In addition, assume that $U^*_{\{u,v\}}\cap V(d_f)=\emptyset$, that is, the edge $\{u,v\}$ and the frame $f$ do not intersect. In this case, we expect that $\{u,v\}$ is drawn strictly inside $f$. Therefore, $\{u,v\}\in E(d)$, as stated in Condition \ref{infFramcon5} (e.g., see the edge $\{x,w\}$ in Figure~\ref{fig:InfFDr}). In addition, recall that in Condition \ref{definfFraCon4} of the definition of {\bf Template for an Info-Frame} (Definition \ref{def:infFr}), for every $uv_i$ drawn on a point from $\gis(\fin)\setminus \gps(\fin)$, we store a direction in $F$. In Definition \ref{def:validDir}, we verify that every such vertex corresponding to exactly one edge from the set $E_f$, and has no edges on $f$. Thus, the direction is for this unique edge $e$. In particular, we ask that $\ell(\mathsf{V^*Dir}_f(uv_i),d_f(uv_i))$ is on the first edge of $d(e)$ (e.g., see the direction line $\ell(c, d_f(uv_2))$ and the drawing of the edge $\{uv_2, uv_1\}$ in Figure~\ref{fig:InfFDr}), or the this first edge is on $\ell(\mathsf{V^*Dir}_f(uv_i),d_f(uv_i))$ (e.g., see the direction line $\ell(c', d_f(gh_1))$ (resp., $\ell(c'', d_f(mn_1))$), and the drawing of the edge $\{uv_2, uv_1\}$ (resp., $\{mn_1, j\}$) in Figure~\ref{fig:InfFDr}). This way, we verify that $d(e)$ indeed goes to the same direction as $\mathsf{V^*Dir}_f(uv_i)$, as we meant (see Condition \ref{infFramcon6}). We state these conditions in the next definition.

\begin{definition}[{\bf Drawing of an Info-Frame}] \label{def:infFrDr}
	Let $F=(f,d_f,E_f,U_f,\mathsf{V^*Dir}_f)$ be an info-frame and let $d$ be a $G^*$-drawing. We say that $d$ is a {\em drawing of $F$} if the following conditions are satisfied.
	\begin{enumerate}
		\item $d$ is bounded by $f$ and all the vertices in $V^*$ of $d$ are drawn on $f$. \label{infFramcon1}
		\item $d_f(\pp(f))=d(\pp(f))$. \label{infFramcon2}
		\item $U_f$ is the set of vertices that are drawn strictly inside $f$ in $d$. \label{infFramcon3}
		\item $E_f\subseteq E(d)$ is the set of each edge $e$ that is drawn strictly inside $f$, except maybe at the endpoints, and at least one endpoint of $e$ is drawn on $f$. \label{infFramcon4}
		%\item For every $\{u,v\}\in E$ such that $u<v$, $u\in U_f$ and $v\notin U_f$, $\{uv_1,u\}\in E(d)$.
		%\item For every $\{u,v\}\in E$ such that $u<v$, $u\notin U_f$ and $v\in U_f$, $\{uv_{\mathsf{index}(u,v)},v\}\in E(d)$.
		%\item For every $uv_1\in V(d_f)$, if $u\in U_f$ then $\{uv_1,u\}\in E(d)$.
		%\item For every $uv_1\in V(d_f)$, if $v\in U_f$ then $\{uv_{\mathsf{index}(u,v)},v\}\in E(d)$.
		\item For every $\{u,v\}\in E$ such that $u,v\in U_f$, if $U^*_{\{u,v\}}\cap V(d_f)=\emptyset$, then $\{u,v\}\in E(d)$. \label{infFramcon5}
		\item For every $uv_i\in V(d_f)\cap V^*$ such that $d_f(uv_i)\in \gis(\fin)\setminus \mathsf{GridPointSet}($ $f_{\mathsf{init}})$, $\ell(\mathsf{V^*Dir}_f(uv_i),d_f(uv_i))$ is on $\ell(d_f(uv_i),p_1)$, or $\ell(d_f(uv_i),p_1)$ is on $\ell(\mathsf{V^*Dir}_f(uv_i),$ $d_f(uv_i))$, where $z\in \{uv_{i-1},uv_{i+1}\}$ such that $\{uv_i,z\}\in E_f$ and $d(\{uv_i,z\})=(d(uv_i),p_1,$ $\ldots d(z))$.\label{infFramcon6}
		%\item For every $\{u,v\}\in E$ such that $u,v\in U_f$ and $u<v$, exactly one of the following condition is satisfied: \label{infFramcon5}
		%\begin{itemize}
		%\item $\{u,v\}\in E(d)$. \label{infFramcon51}
		%\item $\{uv_1,u\},\{uv_{\mathsf{index}(u,v)},v\}\in E_d$. \label{infFramcon52}
		%\end{itemize}   
	\end{enumerate}
\end{definition}

%\subsection{Information Function}
%In the previous subsection, we discussed on the ``information'' we want to store with respect to a ``part'' of a drawing, that is, the info-frame that associates with this part. Now, it is natural to define a function, that given a drawing $d$ of a graph $G=(V,E)$, and a frame $f$, the function returns the info-frame that ``describes'' the information of the part of the drawing, that is bounded by $f$.

%We call this function   

%!TEX root =Main-Movement.tex

\subsection{Info-Cutter of an Info-Frame}\label{sec:infoCut}

Recall that our algorithm will build parts of drawings from smaller parts of these drawings. Such parts of drawings are described by the info-frames defined in Section \ref{sec:infoFra}. Given an info-frame $F=(f,d_f,E_f,U_f,\mathsf{V^*Dir}_f)$, we would like to know if there exists a drawing of $F$, having already solved this question for every smaller info-frame, that is, every info-frame $F'=(f',d_{f'},E_{f'},U_{f'},\mathsf{V^*Dir}_{f'})$ where $f'$ has fewer grid points in its interior than $f$. Assume that there exists a drawing  $d$ of $F$. Observe that, if we take a cutter $c$ of $f$, we derive two smaller drawings of $d$ inside $f$: one inside $f_1(c)$ and the other inside $f_2(c)$ (e.g., see Figure~\ref{fig:infCutter2}). These two drawings can be described by two info-frames, $F_1$ and $F_2$, respectively. We say that these info-frames are {\em induced} by $d$ and $c$. We will show that every pair of drawings of these two info-frames can be glued together in order to get a drawing of $F$. In light of this, we would like to ``guess'' (i.e, iterate over) every (small enough) cutter $c$ of $f$ possible for every drawing that can be described by $F$ as well as the two info-frames induced by $d$ and $c$. We denote any such guess by $C=(c,F_1,F_2)$. We will say that $C=(c,F_1,F_2)$ is an {\em info-cutter} of $F$; this term is defined formally later in this subsection.

Now, we define some notations that will be useful later. First, we define a notation weaker than that of equality in Definition \ref{def:DrawindEq}. Towards that, consider a drawing $d$ of an info-frame $F$, and the two sub-drawings of $d$ inside $f_1(c)$ and $f_2(c)$, denoted by $d_1$ and $d_2$, respectively. Observe that the intersection of $f_1(c)$ and $f_2(c)$ is exactly $c$. Therefore, we expect that $d_1$ and $d_2$ ``agree'' on the part of the drawing that intersects both $f_1(c)$ and $f_2(c)$. Still, observe that there might be a point $p\in \gi(c)$ and an edge $\{u,v\}\in E$ such that $(p,\{u,v\})$ is a turning point in $f_1$, but not in $f_2$. In such a case, we will find a vertex in $V^*_{\{u,v\}}$ at the point $p$ in $d_1$, but not in $d_2$. That is, for $d_1$ there exists $i\in\mathbb{N}$ such that $d_1(uv_i)=p$, but for $d_2$ there is no such $i$. In addition, for $d_2$ there is a part of the edge $\{u,v\}$ whose drawing includes the point $p$ in $d_2$. That is, there exists an edge $\{uv_j,uv_{j+1}\}\in E(d_2)$ such that $p\in \gi(d_2(\{uv_j,uv_{j+1}\}))$. Keeping this in mind, we would like to be able to say that the drawings $d_1$ and $d_2$ are equal on the intersection between them, up to the aforementioned difference. Roughly speaking, we say that $d_1(c)=d_2(c)$ {\em up to renaming} (see Figure~\ref{fig:renaming}). This means that on $c$, the drawings are identical, except maybe with respect to the vertices in $V^*$. For the formal definition of this term, recall that we denote by $V(d,P)$ the set of vertices in $V(d)$ that are drawn on points in $P$, that is $V(d,P)=\{v\in V(d)~|~$ there exists $p\in P$ such that $d(v)=p\}$.

\begin{definition}[{\bf Equality Up To Renaming}]\label{def:DERN}
	For a set of points $P$, two $G^*$-drawings, $d$ and $d'$, are {\em equal up to renaming in $P$}, denoted as $d=_{\mathsf{rename}}^Pd'$, if the following conditions hold:
	\begin{enumerate}
		\item For every $u\in (V(d,P)\cup V(d',P))\cap V$, $d(u)=d'(u)$. \label{definition:DERNCon1}
		\item For every $p\in P$ and $\{u,v\}\in E$, there exists $i\in\mathbb{N}$ such that $d(uv_i)=p$ or $p\in \pp(d(\{uv_i,uv_{i+1}\}))$ if and only if there exists $j\in\mathbb{N}$ such that $d(uv_j)=p$ or $p\in \pp(d'(\{uv_j,uv_{j+1}\}))$.\label{def:DERN:Con3}
	\end{enumerate}         
\end{definition}

Now, assume that we have two $G^*$-drawings, $d$ and $d'$, such that $d=_{\mathsf{rename}}^Pd'$ for some set of points $P$. Observe that due to Condition \ref{def:DERN:Con3} of Definition \ref{def:DERN}, some of vertices in $V^*\cap V(d,P)$ might not appear in $d'$ as vertices from the set $V^*$, but as part of an edge, and vice versa (e.g., see vertex $st_1$ in $d_{f_2}$ in Figure~\ref{fig:infCutter2}). We would like to handle this issue by making the drawings more ``similar''. 
To this end, we present the operation $\mathsf{MakeVer}(d,\{u,v\},p)$. Given a $G^*$-drawing $d$, an edge $\{u,v\}\in E$, and a point $p$ such that there exists $i\in \mathbb{N}$ for which $p\in \gi(d(\{uv_i,uv_{i+1}\}))$, the operation ``turns'' this point $p$ into a vertex from the set $V^*$ (e.g., see vertex $st_3$ drawn in green in Figure~\ref{fig:renaming1}). Recall that $\ell(c_j,c_{j+1})$ is the line segment joining the points $c_j$ and $c_{j+1}$. Formally, we define $\mathsf{MakeVer}$ as follows. 

\begin{figure}[!t]
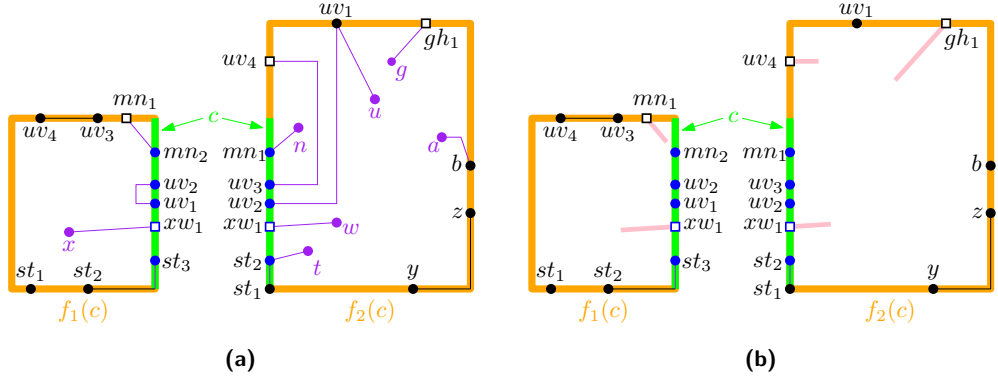

	\centering
	\begin{subfigure}{0.45\textwidth}
		\includegraphics[width = \textwidth, page = 47]{figures/drawnTreewidth}
		\subcaption{}
		\label{fig:infCutter1}
	\end{subfigure}
	\hfil
	\begin{subfigure}{0.45\textwidth}
		\includegraphics[width = \textwidth, page = 48]{figures/drawnTreewidth}
		\subcaption{}
		\label{fig:infCutter2}
	\end{subfigure}
	
	\caption{Consider the cutter $c$ of $f$, shown in green, and the drawing of $F$ described in Figure~\ref{fig:InfFDr}. Here, we illustrate an example of drawings of $f_1(c)$ and $f_2(c)$ (shown in (a)), and their corresponding info-frames (shown in (b)). The vertices mapped to points in $\gis(f_1) \setminus \mathsf{GridPointSet}(f_1)$ and $\gis(f_2) \setminus \mathsf{GridPointSet}(f_2)$ are marked by hollow squares.} 
	\label{fig:infCutter}
\end{figure}

\begin{definition}[{\bf $\mathsf{MakeVer}$}]\label{def:OpMakeVer}
	Let $d$ be a $G^*$-drawing, let $\{u,v\}\in E$, and \\ let $p\in \gi($ $d)$ such that there exists $i\in \mathbb{N}$ for which \\ $p\in \gi(d(\{uv_i,uv_{i+1}\}))$. Then, $\mathsf{MakeVer}(d,$ $\{u,v\},p)$ performs the following steps on $d$:
	\begin{itemize}
		\item For every $i<\ell\leq \mathsf{index}(u,v)$, rename $uv_\ell$ to $uv_{\ell+1}$. 
		\item Add the vertex $uv_{i+1}$ to $V(d)$, and let $d(uv_{i+1})=p$.
		\item Let $j\in \mathbb{N}$ such that $p\in \ell(c_j,c_{j+1})$ where $d(\{uv_i,uv_{i+1}\})=(c_1,\ldots,c_r)$. Then:
		\begin{itemize}
			\item  Replace the edge $\{uv_i,uv_{i+2}\}$ by the edges $\{uv_i,uv_{i+1}\}$ and $\{uv_{i+1},uv_{i+2}\}$ in $E(d)$.
			\item Update $d(\{uv_i,uv_{i+1}\})=(c_1,\ldots,c_j,p)$ and $d(\{uv_{i+1},uv_{i+2}\})=(p,c_{j+1}\ldots,c_r)$ in $d$.
		\end{itemize} 
	\end{itemize}
\end{definition}

It is easy to see that after using this operation, the resulting drawing $d$ satisfies the conditions of the definition of a $G^*$-drawing (Definition \ref{def:gstdr}): 

\begin{observation} \label{obs:makever}
	Let $d$ be a $G^*$-drawing, let $\{u,v\}\in E$, and let $p\in \gi(d)$ such that there exists $i\in \mathbb{N}$ for which $p\in \gi(d(\{uv_i,uv_{i+1}\}))$. Then, the drawing obtained by $\mathsf{MakeVer}(d,\{u,v\},p)$ is a $G^*$-drawing.
\end{observation}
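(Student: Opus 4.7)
The plan is to verify each of the six conditions of Definition~\ref{def:gstdr} for the drawing $d'$ produced by $\mathsf{MakeVer}(d,\{u,v\},p)$, relying on the fact that $d$ already satisfies these conditions and that the operation only modifies the structure locally near the single edge $\{uv_i,uv_{i+1}\}$.

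First, I would argue that Condition~\ref{G*drawcon10} (validity of $(V(d'),E(d'))$) is preserved. The only set touched is $V^*_{\{u,v\}}\cap V(d)$ and $E^*_{\{u,v\}}\cap E(d)$: after renaming $uv_\ell\mapsto uv_{\ell+1}$ for $\ell>i$ and inserting the new vertex $uv_{i+1}$, the set $V(d')\cap V^*_{\{u,v\}}$ becomes $\{uv_1,\ldots,uv_{\mathsf{index}(u,v)+1}\}$, so Condition~\ref{con:ValidPair221} of Definition~\ref{def:ValidPair} holds with updated $\mathsf{index}'(u,v)=\mathsf{index}(u,v)+1$. The edge replacement swaps $\{uv_i,uv_{i+2}\}$ (originally $\{uv_i,uv_{i+1}\}$ before renaming) for the two consecutive-link edges $\{uv_i,uv_{i+1}\}$ and $\{uv_{i+1},uv_{i+2}\}$, so Condition~\ref{con:ValidPair222} is maintained; Condition~\ref{con:ValidPair223} is inherited from $d$ since $\{u,v\}$ is still absent from $E(d')$. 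Condition~\ref{G*drawcon2} follows from the hypothesis $p\in\gi(d)\subseteq\gis(f_{\mathsf{init}})$, which places the new $V^*$-vertex in the correct grid set.

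Next I would handle Condition~\ref{G*drawcon3}. By hypothesis $d(\{uv_i,uv_{i+1}\})=(c_1,\ldots,c_r)\in\mathcal{P}^*(f_{\mathsf{init}})$, so the endpoints $c_1,c_r$ lie in $\gis(f_{\mathsf{init}})$, the internal bends $c_2,\ldots,c_{r-1}$ lie on strict-interior grid points, and every segment lies on a line through two grid points of $f_{\mathsf{init}}$. Since $p\in\ell(c_j,c_{j+1})$, both truncated paths $(c_1,\ldots,c_j,p)$ and $(p,c_{j+1},\ldots,c_r)$ inherit these properties: endpoints in $\gis(f_{\mathsf{init}})$, internal vertices on grid points, and every segment on a grid-point-to-grid-point line. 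Hence both lie in $\mathcal{P}^*(f_{\mathsf{init}})$, as required. Conditions~\ref{G*drawcon7}, \ref{G*drawcon8}, and~\ref{G*drawcon9} follow because the set of plane points covered by the drawing is unchanged by $\mathsf{MakeVer}$: if $p$ coincided with some $d(uv_k)$ for $k\neq i+1$, or with some $d(u'v'_m)$, or lay on another edge, then $d$ itself would have already violated the corresponding condition (using that $p$ is internal to $d(\{uv_i,uv_{i+1}\})$, and, when needed, that $p\neq d(uv_i),d(uv_{i+1})$, otherwise the operation is vacuous). The two new edges occupy the same curve as the removed one, so internal intersections with other edges cannot be created.

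I expect the main obstacle to lie in the bookkeeping for Condition~\ref{G*drawcon10}: the simultaneous reindexing of all $uv_\ell$ with $\ell>i$ must be shown to preserve, as a block, the consecutive-chain structure required by Condition~\ref{con:ValidPair222}. While each individual case is straightforward, this requires writing out the vertex and edge sets carefully before and after the operation to confirm that exactly the expected links (between old $uv_\ell$ and old $uv_{\ell+1}$, now renamed to $uv_{\ell+1}$ and $uv_{\ell+2}$) still appear in $E(d')\cap E^*_{\{u,v\}}$, with no spurious edges introduced and with the newly inserted pair slotting consistently into the chain at position $i$.
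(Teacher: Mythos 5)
Your verification is correct and is exactly the routine condition-by-condition check that the paper itself leaves implicit (the paper states this observation with no proof beyond ``it is easy to see''). The only point worth flagging is your parenthetical about $p\in\{d(uv_i),d(uv_{i+1})\}$: rather than being vacuous, that case is simply excluded by how $\mathsf{MakeVer}$ is invoked in the paper (always on points that are not already images of $V^*$-vertices of $\{u,v\}$), which is the cleaner way to dispose of it.
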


Recall that when we examine $d$ and $d'$ such that $d=_{\mathsf{rename}}^Pd'$, we would like to make them as similar as possible. To this end, we will use the operation $\mathsf{MakeVer}(d,\{u,v\},p)$ for every edge $\{u,v\}$ and a point $p\in P$ such that there exists $i\in \mathbb{N}$ for which $d'(uv_i)=p$, and there is no $j\in \mathbb{N}$ such that $d(uv_j)=p$, or vice versa. We refer to this overall operation as $\mathsf{MakeAllVer}$:

\begin{definition}[{\bf$\mathsf{MakeAllVer}$}]\label{def:OpMakeVerFA}
	Let $P\subset \mathbb{R}\times \mathbb{R}$ be a set of points and let $d$ and $d'$ be $G^*$-drawings such that $d=_{\mathsf{rename}}^Pd'$. Then, $\mathsf{MakeAllVer}(d,d',P)$ performs the following steps:
	\begin{itemize}
		\item For every edge $\{u,v\}\in E$ and a point $p\in P$, such that there exists $i\in \mathbb{N}$ such that $d'(uv_i)=p$, and there is no $j\in \mathbb{N}$ such that $d(uv_j)=p$, activate $\mathsf{MakeVer}(d,\{u,v\},p)$.
		\item For every edge $\{u,v\}\in E$ and a point $p\in P$, such that there exists $i\in \mathbb{N}$ such that $d(uv_i)=p$, and there is no $j\in \mathbb{N}$ such that $d'(uv_j)=p$, activate $\mathsf{MakeVer}(d',\{u,v\},p)$.
	\end{itemize}
	Let $\hat{d}$ and $\hat{d'}$ the result of the application of $\mathsf{MakeAllVer}(d,d',P)$.
\end{definition}

\begin{figure}[!t]
	\centering
	\begin{subfigure}{0.48\textwidth}
		\includegraphics[height = 0.55\textwidth, page = 49]{figures/drawnTreewidth}
		\subcaption{}
		\label{fig:renaming1}
	\end{subfigure}
	\hspace{-1.8cm}
	\begin{subfigure}{0.48\textwidth}
		\includegraphics[height = 0.55\textwidth, page = 50]{figures/drawnTreewidth}
		\subcaption{}
		\label{fig:renaming2}
	\end{subfigure}
	
	\caption{An example of renaming. The vertices mapped to points in $\gis(f) \setminus \mathsf{GridPointSet}(f)$, $\gis(f_1) \setminus \mathsf{GridPointSet}(f_1)$ and $\gis(f_2) \setminus \mathsf{GridPointSet}(f_2)$ are marked by hollow squares. (a) Observe that $d_{f_1}=d_{f_2}$ up to renaming, where $d_{f_1}$ and $d_{f_2}$ are the drawings in Figure~\ref{fig:infCutter2}. Moreover, $\mathsf{Identify}_{d_1,d_2}$ is the corresponding vertex identification function. Note that, the identification function is only shown for the vertices which are not mapped to $\mathsf{Null}$. Observe that in $d_{f_1}$, the vertex $st_3$ of $d_{f_1}$ in Figure~\ref{fig:infCutter2} is renamed $st_4$ (shown in green here) and a new vertex, shown here in green as well and labeled $st_3$, is added, by the $\mathsf{MakeVer}$ function. (b) Observe that $d_f=d_{f_2}$ up to renaming, where $d_f$ and $d_{f_2}$ are the drawings in Figures~\ref{fig:InfF} and~\ref{fig:infCutter2}, respectively. Moreover, $\mathsf{Identify}_{d,d_2}$ is the corresponding vertex identification function. Note that, the identification function is only shown for the vertices which are not mapped to $\mathsf{Null}$.} 
	\label{fig:renaming}
\end{figure}

%$P_{1,2},P_{1,3},\ldots P_{1,k},P_{2,3},P_{2,4},\ldots,P_{2,k},\ldot,P_{i,i+1},\ldots,P_{i,k},\ldots,P_{k-1,k}$ 

%Again, let $P\subset\mathbb{R}\times\mathbb{R}$ be a set of points, let $d$ and $d'$ be two $G^*$-drawings such that $d=_{\mathsf{rename}}^Pd'$. Assume that for every $uv_i\in U^*_{\{u,v\}}\cap V(d,P)$ there exists $uv_j\in U^*_{\{u,v\}}\cap V(d',P)$ such that $d(uv_i)=d'(uv_j)$, and vice versa. Then, we define a function $\mathsf{Identify}:V(d,P)\cap V^*\rightarrow V(d',P)\cap V^*$ that, given $uv_i\in V(d,P)\cap V^*$, returns $uv_j\in V(d,P)\cap V^*$ such that $d(uv_i)=d'(uv_j)$. We refer to this function as the {\em vertex identification function} of $d$ and $d'$ in $P$. Observe that if such a function exists, then it is unique. Formally, we define this function as follows.

Now, let $d$ and $d'$ be $G^*$-drawings. For later use, we would like to get for every $uv_i\in V(d)\cap V^*$, its corresponding vertex in $d'$, if such vertex exists. That is, a vertex $uv_j\in V(d')\cap V^*$ such that $d(uv_i)=d'(uv_j)$. To this end, we define the function $\mathsf{Identify}_{d,d'}:V(d)\cap V^*\rightarrow (V(d')\cap V^*)\cup \{\mathsf{Null}\}$ as follows. Given $uv_i\in V(d)\cap V^*$, the function returns $uv_j\in V(d)\cap V^*$ such that $d(uv_i)=d'(uv_j)$, if such a vertex exists; otherwise, it returns $\mathsf{Null}$. Observe that if such a vertex $uv_j$ exists, then it is unique. We refer to this function as the {\em vertex identification function} of $d$ and $d'$. Formally, we define this function as follows.

\begin{definition}[{\bf Vertex Identification Function}]\label{def:VNF}
	Let $d$ and $d'$ be $G^*$-drawings. The {\em vertex identification function} of $d$ and $d'$ is the function $\mathsf{Identify}_{d,d'}:V(d)\cap V^*\rightarrow (V(d')\cap V^*)\cup \{\mathsf{Null}\}$ such that for every $uv_i\in V(d)\cap V^*$, $\mathsf{Identify}_{d,d'}(uv_i)=uv_j$ for which $d(uv_i)=d'(uv_j)$, if such $uv_j$ exists; otherwise, $\mathsf{Identify}_{d,d'}(uv_i)=\mathsf{Null}$.
\end{definition}

Observe that the only modifications to the drawings $d$ and $d'$ made by $\mathsf{MakeAllVer}(d,d',P)$ are done by activating $\mathsf{MakeVer}$. So, due to Observation \ref{obs:makever}, $\hat{d}$ and $\hat{d'}$ (defined in Definition \ref{def:OpMakeVerFA}) are $G^*$-drawings. In addition, observe that for every $uv_i\in V(\hat{d},P)\cap V^*$ there exists $uv_j\in V(\hat{d'},P)\cap V^*$ such that $\hat{d}(uv_i)=\hat{d'}(uv_j)$, and vice versa. Thus, we have the following observation:

\begin{observation} \label{obs:makever2}
	Let $P\subset\mathbb{R}\times\mathbb{R}$ be a set of points, let $d$ and $d'$ be two $G^*$-drawings such that $d=_{\mathsf{rename}}^Pd'$. Then, $\hat{d}$ and $\hat{d'}$ are $G^*$-drawings. In addition, for every $uv_i\in V(\hat{d},P)\cap V^*$, $\mathsf{Identify}_{\hat{d},\hat{d'}}(uv_i)\neq \mathsf{Null}$ and for every $uv_i\in V(\hat{d'},P)\cap V^*$, $\mathsf{Identify}_{\hat{d'},\hat{d}}(uv_i)\neq \mathsf{Null}$ 
\end{observation}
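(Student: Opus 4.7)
\medskip\noindent\textbf{Proof proposal.} The plan is to treat the two assertions separately. For the first assertion, the operation $\mathsf{MakeAllVer}(d,d',P)$ is, by Definition~\ref{def:OpMakeVerFA}, nothing but a finite sequence of $\mathsf{MakeVer}$ applications applied alternately to $d$ and $d'$. So the idea is to induct on the number of $\mathsf{MakeVer}$ applications and invoke Observation~\ref{obs:makever} at each step. The one thing to check is that every successive invocation of $\mathsf{MakeVer}$ is still legal on the intermediate drawing, that is, that the point $p\in P$ used still lies on $\gi(d''(\{uv_i,uv_{i+1}\}))$ for some $i$ in the current intermediate drawing $d''$. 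This is immediate because $\mathsf{MakeVer}$ only \emph{subdivides} an edge of the form $\{uv_\ell,uv_{\ell+1}\}$ into two edges while preserving the image of its drawing, and renames later dummy vertices by shifting indices; it never deletes a point from the image of any edge, nor does it delete or move any other vertex.

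The second assertion is then a direct consequence of how $\mathsf{MakeAllVer}$ is set up relative to the hypothesis $d=_{\mathsf{rename}}^Pd'$. Let $p\in P$ be any point, and let $uv_i\in V(\hat d,P)\cap V^*$ with $\hat d(uv_i)=p$. I would split into two cases based on whether $uv_i$ was already present in $d$ or was introduced by a $\mathsf{MakeVer}$ call on $d$. In the first case, since $d=_{\mathsf{rename}}^Pd'$, Condition~\ref{def:DERN:Con3} of Definition~\ref{def:DERN} guarantees that in $d'$ there is either some $uv_j$ with $d'(uv_j)=p$, or there is an edge $\{uv_j,uv_{j+1}\}\in E(d')$ with $p\in\pp(d'(\{uv_j,uv_{j+1}\}))$. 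In the former subcase, nothing is done to $d'$ for this point and the corresponding $uv_j$ remains in $\hat d'$. In the latter subcase, $\mathsf{MakeAllVer}$ explicitly triggers $\mathsf{MakeVer}(d',\{u,v\},p)$ (first bullet of Definition~\ref{def:OpMakeVerFA}), so after this call a new dummy vertex of $\{u,v\}$ is placed at $p$ in $\hat d'$. In the second case (where $uv_i$ was added to $d$ by $\mathsf{MakeVer}$), the $\mathsf{MakeVer}$ call on $d$ was only triggered because there existed some $uv_j$ with $d'(uv_j)=p$, and that vertex persists (up to re-indexing) in $\hat d'$. Either way, there is a vertex in $V(\hat d',P)\cap V^*$ sitting on the same edge $\{u,v\}$ and on the point $p$, so $\mathsf{Identify}_{\hat d,\hat d'}(uv_i)\neq\mathsf{Null}$. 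The symmetric statement for $\hat d'$ is proved by the same argument with the roles of $d$ and $d'$ swapped.

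The main obstacle I anticipate is purely bookkeeping: one has to verify that the sequence in which the (finitely many) $\mathsf{MakeVer}$ calls are executed does not matter, and that once a new dummy vertex $uv_{i+1}$ is inserted on $\{uv_i,uv_{i+2}\}$, any subsequent $\mathsf{MakeVer}$ call meant to insert yet another vertex at a different point $p'\in P$ on the same original edge $\{u,v\}$ still finds $p'$ on one of the two resulting sub-edges, with the correct shifted index. This follows because the concatenation of the drawings of $\{uv_i,uv_{i+1}\}$ and $\{uv_{i+1},uv_{i+2}\}$ after subdivision equals the original drawing of $\{uv_i,uv_{i+2}\}$ as a point set, and the index-shifting rule in Definition~\ref{def:OpMakeVer} keeps the ordering consistent. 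Apart from this, the observation is a direct reading-off of Definitions~\ref{def:DERN},~\ref{def:OpMakeVer} and~\ref{def:OpMakeVerFA} combined with Observation~\ref{obs:makever}, and no new technique is needed.
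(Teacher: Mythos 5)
Your proposal is correct and follows essentially the same route the paper takes: the paper justifies this observation in the two sentences immediately preceding it, noting that $\mathsf{MakeAllVer}$ only ever modifies the drawings via $\mathsf{MakeVer}$ calls (so Observation~\ref{obs:makever} gives the first assertion) and that the construction, combined with $d=_{\mathsf{rename}}^Pd'$, forces matching dummy vertices at every point of $P$. Your write-up simply makes the induction and the case analysis explicit, which the paper leaves implicit.
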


We now generalize the function $\mathsf{MakeAllVer}$ (from Definition \ref{def:OpMakeVerFA}) to $k$ $G^*$-drawings, $d_1,d_2,$ $\ldots,d_k$ such that for every $1\leq i<j\leq k$, $d_i=_{\mathsf{rename}}^{P_{i,j}}d_{j}$ where $P_{i,j}\subset \mathbb{R}\times \mathbb{R}$. Our goal is to obtain $k$ $G^*$-drawings $d_1^*,\ldots d_k^*$, such that the following property holds: for every $1\leq i,j\leq k$ and $uv_\ell\in V(d_i^*,P_{i,j})\cap V^*$, it holds that $\mathsf{Identify}_{d_i^*,d_j^*}(uv_\ell)\neq \mathsf{Null}$. To this end, we define the function $\mathsf{MakeAllVerSet}$ (defined in Algorithm \ref{alg:MAVS}). For every $1\leq i<j\leq k$ iteratively, this function evokes $\mathsf{MakeAllVer}(d_i^*,d_j^*,P_{i,j})$. Initially, $d_i^*=d_i$ for each $i\in [k]$, and afterwards $d^*_i$ is updated $k-1$ times. By repeatedly updating $d_i^*$, for each $i\in[k]\setminus \{i\}$, when $\mathsf{MakeAllVerSet}$ terminates, we conclude that the aforementioned property holds.

\begin{definition}[{\bf$\mathsf{MakeAllVerSet}$}]\label{def:OpMakeVerFAS}
	Let $d_1,d_2,\ldots,d_k$ be $G^*$-drawings, and, for every $1\leq i<j\leq k$, let $P_{i,j}\subset \mathbb{R}\times \mathbb{R}$ such that $d_i=_{\mathsf{rename}}^{P_{i,j}}d_{j}$. Then, $\mathsf{MakeAllVerSet}(d_1,d_2,\ldots,d_k,P_{1,2},$ $P_{1,3},\ldots P_{1,k},$ $\ldots,P_{i,i+1},\ldots,P_{i,k},\ldots,P_{k-1,k})$ is described by Algorithm \ref{alg:MAVS}.
\end{definition}

\begin{algorithm}[!t]
	\SetKwInOut{Input}{Input}
	\SetKwInOut{Output}{Output}
	\medskip
	\For{$1\leq i\leq k$}
	{
		$d^*_i\gets d_i$\;
	}
	\For{$1\leq i<$k}
	{\For{$i<j\leq k$}{
			Let $\hat{d^*_i}$ and $\hat{d^*_j}$ the result of the application of $\mathsf{MakeAllVer}(d^*_i,d^*_j,P_{i,j})$\;
			$d^*_i\gets \hat{d^*_i}$\;
			$d^*_j\gets \hat{d^*_j}$\;
		}
	}
	\Return$(d_1^*,\ldots, d_k^*)$\;
	
	\caption{$\mathsf{MakeAllVerSet}(\langle d_1,d_2,\ldots,d_k,P_{1,2},P_{1,3},\ldots P_{1,k},\ldots,P_{i,i+1},\ldots,P_{i,k},\ldots,P_{k-1,k} \rangle)$}
	\label{alg:MAVS}
\end{algorithm}

From Observation \ref{obs:makever2}, after every iteration of Algorithm \ref{alg:MAVS}, for every $i\in[k]$, $d_i^*$ is a $G^*$-drawing. So, for every $i\in[k]$, $d_i^*$ returned by $\mathsf{MakeAllVerSet}$ is a $G^*$-drawing. In addition, from Observation \ref{obs:makever2}, for every $1\leq i,j\leq k$ and $uv_\ell\in V(d_i^*,P_{i,j})\cap V^*$,  $\mathsf{Identify}_{d_i^*,d_j^*}(uv_\ell)\neq \mathsf{Null}$, where $d_i^*$ and $d_j^*$ are the drawings returned by $\mathsf{MakeAllVerSet}$. for each $i\in [k]$. Thus, we have the following observation:

\begin{observation} \label{obs:makever3}
	Let $d_1,d_2,\ldots,d_k$ be $G^*$-drawings, and, for every $1\leq i<j\leq k$, let $P_{i,j}\subset \mathbb{R}\times \mathbb{R}$ such that $d_i=_{\mathsf{rename}}^{P_{i,j}}d_{j}$. Let $d_i^*$, for each $i\in [k]$, be the result of the application $\mathsf{MakeAllVerSet}(d_1,d_2,\ldots,d_k,P_{1,2},P_{1,3},\ldots P_{1,k},$ $\ldots,P_{i,i+1},\ldots,P_{i,k},\ldots,P_{k-1,k})$. Then, for each $i\in[k]$, $d_i^*$ is a $G^*$-drawing, and for every $1\leq i,j\leq k$ and $uv_\ell\in V(d_i^*,P_{i,j})\cap V^*$, it holds that $\mathsf{Identify}_{d_i^*,d_j^*}(uv_\ell)\neq \mathsf{Null}$.
\end{observation}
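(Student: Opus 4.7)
My plan is to prove Observation~\ref{obs:makever3} in three layers: preservation of the $G^*$-drawing property, preservation of an auxiliary renaming-equality invariant, and finally the identification property, which is the main obstacle.

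The first conclusion follows from a straightforward induction on the iterations of the double loop in Algorithm~\ref{alg:MAVS}. The base case $d_i^* = d_i$ holds by hypothesis, and each call $\mathsf{MakeAllVer}(d_i^*, d_j^*, P_{i,j})$ expands into a finite sequence of $\mathsf{MakeVer}$ operations, each of which preserves the $G^*$-drawing property by Observation~\ref{obs:makever}. As a supporting invariant, I will next prove that after every iteration, the renaming equality $d_i^* =_{\mathsf{rename}}^{P_{i',j'}} d_{j'}^*$ continues to hold for every pair $(i', j')$. The key point is that $\mathsf{MakeVer}(d,\{u,v\},p)$ is a \emph{pure renaming}: it converts an interior point $p$ of the drawn edge $\{uv_i, uv_{i+1}\}$ into a $V^*$-vertex $uv_{i+1}$ together with two new incident edges whose drawings cover exactly the same plane geometry as before. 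Hence the set of plane points covered by the drawing of $\{u,v\}$ is unchanged, which preserves condition~(\ref{def:DERN:Con3}) of Definition~\ref{def:DERN}; condition~(\ref{definition:DERNCon1}) is vacuous because $\mathsf{MakeVer}$ touches only vertices in $V^*$. Chaining this with Observation~\ref{obs:makever2} yields a \emph{propagation lemma}: immediately after the algorithm processes pair $(a,b)$, every point $p \in P_{a,b}$ carrying a $V^*$-vertex of some edge $\{u,v\}$ in either $d_a^*$ or $d_b^*$ carries such a vertex in \emph{both}; moreover, since $\mathsf{MakeVer}$ only adds vertices, any such vertex persists to termination.

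The main obstacle is that a later iteration processing pair $(i,c)$ with $c \neq j$ may introduce a new $V^*$-vertex into $d_i^*$ at a point $p \in P_{i,j}$ without re-visiting pair $(i,j)$. To handle this, fix $uv_\ell \in V(d_i^*, P_{i,j}) \cap V^*$ at termination, let $p = d_i^*(uv_\ell)$, and trace the history of the $V^*$-vertex at $p$: either some original drawing $d_s$ carried it, or it was first introduced by $\mathsf{MakeVer}$ from a drawing that did. The lexicographic processing order $(1,2), (1,3), \ldots, (k-1,k)$ guarantees that, after the first moment a vertex at $p$ appears in any drawing, every still-upcoming pair whose two indices link two drawings---one of which already carries a vertex at $p$---triggers the propagation lemma and transfers the vertex to the other. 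A short case analysis, splitting on whether $c < j$, $c = j$, or $c > j$, shows that in every case a still-unprocessed pair of the appropriate form (either $(i,j)$ itself, or the pair $(\min(c,j), \max(c,j))$) is available at the right moment, so that $d_j^*$ carries a $V^*$-vertex at $p$ by termination. Applying Definition~\ref{def:VNF} then yields $\mathsf{Identify}_{d_i^*, d_j^*}(uv_\ell) \neq \mathsf{Null}$, as desired.
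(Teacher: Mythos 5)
Your first two layers are sound, and in fact more careful than the paper's own treatment: the paper simply asserts that the $k$-drawing statement follows ``from Observation~\ref{obs:makever2}'', which only covers a single call to $\mathsf{MakeAllVer}$ on two drawings; it never addresses the interference between different pairs that you correctly isolate as the main obstacle, nor the need to maintain the precondition $d_i^*=_{\mathsf{rename}}^{P_{i,j}}d_j^*$ so that the later calls to $\mathsf{MakeAllVer}$ in Algorithm~\ref{alg:MAVS} are even well-defined. Your point that $\mathsf{MakeVer}$ preserves the set of plane points covered by the drawing of each edge, and hence preserves Condition~\ref{def:DERN:Con3} of Definition~\ref{def:DERN}, is exactly the invariant needed for that.

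The gap is in the third layer. When the vertex at $p$ enters $d_i^*$ while the algorithm processes the pair $(\min(i,c),\max(i,c))$, your plan is to push it on to $d_j^*$ via the still-unprocessed pair $(\min(c,j),\max(c,j))$. But $\mathsf{MakeAllVer}(d_c^*,d_j^*,P_{c,j})$ only synchronizes $V^*$-vertices at points of $P_{c,j}$, and nothing in the hypotheses places $p$ in $P_{c,j}$: you only know $p\in P_{i,j}\cap P_{i,c}$. Concretely, take $k=3$, $P_{1,2}=P_{1,3}=\{p\}$, $P_{2,3}=\emptyset$, with $d_3$ carrying a $V^*$-vertex of some edge $\{u,v\}$ at $p$ while $d_1$ and $d_2$ draw $p$ in the interior of a dummy sub-edge of $\{u,v\}$; all three pairwise renaming equalities hold, yet after the run $d_1^*$ has a $V^*$-vertex at $p\in P_{1,2}$ and $d_2^*$ does not, so $\mathsf{Identify}_{d_1^*,d_2^*}$ returns $\mathsf{Null}$ on it. So your case analysis cannot be completed from the stated hypotheses alone; one needs an additional triangle-type assumption such as $P_{i,j}\cap P_{i,c}\subseteq P_{c,j}$ for all triples, which does hold in the paper's only application, where $k=3$ and the three sets are the pairwise common parts of $f$, $f_1(c)$ and $f_2(c)$, whose triple intersection lies on the cutter $c$. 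A correct proof must either add and verify such an assumption or restrict the claim to that setting.
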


%Now, let $d$ and $d'$ be two $G^*$-drawings such that  $d=_{\mathsf{rename}}^Pd'$. We say that $d=_{\mathsf{rename}}^Pd'$ with the identification function $\mathsf{Identify}$, where $\mathsf{Identify}$ is the identification function of $\hat{d}$ and $\hat{d'}$ in $P$. Observe that due to Observation \ref{obs:makever2}, there exists such an identification function, so it is well defined.

%\begin{definition}[{\bf Drawings Equality in a Set of Points Up To Renaming $\mathsf{Identify}$}]\label{def:DERNR}
%Let $P\subset\mathbb{R}\times\mathbb{R}$ be a set of points, let $d$ and $d'$ be two $G^*$-drawings such that $d$ and $d'$ are equal up to renaming in $P$. We say that $d$ and $d'$ are {\em equal up to renaming $\mathsf{Identify}$ in $P$} or  {\em $d(P)=d'(P)$ up to renaming $\mathsf{Identify}$} if $d$ and $d'$ are equal up to renaming in $P$. Then, we preform $\mathsf{MakeAllVer}(d,d',P)$, and $\mathsf{Identify}$ is the vertex identification function of $d$ and $d'$ in $P$.
%\end{definition}

%!TEX root =Main-Movement.tex
\subsection{Properties of Info-Cutter} \label{subsec:prpInoCu}

Now, we proceed to define the term {\em info-cutter}. Towards that, we begin with the definition of its first property. Let $F=(f,d_f,E_f,U_f,\mathsf{V^*Dir}_f)$ be an info-frame, and let $C=(c,F_1=(f_1(c),d_{f_1},E_{f_1},U_{f_1},\mathsf{V^*Dir}_{f_1}),F_2=(f_2(c),d_{f_2},E_{f_2},U_{f_2},\mathsf{V^*Dir}_{f_2}))$. Recall (from the discussion at the beginning of Section \ref{sec:infoCut}) that $C$ should ``reflect'' a guess of a cutter $c$ of $f$, and the two info-frames induced by a drawing $d$ of $F$ and $c$. So, we first demand that $c$ is indeed a cutter of $f$, and $F_1$ and $F_2$ are info-frames: 

\begin{definition}[{\bf Template for an Info-Cutter}] \label{def:infCTurPoints0}
	Let $F=(f,d_f,E_f,U_f,\mathsf{V^*Dir}_f)$ be an info-frame. Then, $C=(c,F_1=(f_1(c),d_{f_1},E_{f_1},U_{f_1},\mathsf{V^*Dir}_{f_1}),F_2=(f_2(c),d_{f_2},E_{f_2},U_{f_2},$ $\mathsf{V^*Dir}_{f_2}))$ is an {\em \bf Info-Cutter Template} with respect to $F$ if $c$ is a cutter of $f$, and $F_1$ and $F_2$ are info-frames.
	%\label{def:infCutCon1}
	% \label{def:infCutCon2}
\end{definition}

Now, recall that since $c$ is a cutter of $f$, which splits a drawing of $F$ into two drawings, we expect that the ``common part'' of the drawings of $F_1$ and $F$ will be identical up to renaming. This yields Condition \ref{def:infCutCon3} in Definition \ref{def:infCTurPoints1} ahead. Similarly, we expect that the common part of the drawings of $F_2$ and $F$, and also of the drawings of $F_1$ and $F_2$, will be identical up to renaming (see the identify functions shown in Figures~\ref{fig:renaming2} and~\ref{fig:renaming1}, respectively). So, we have Conditions \ref{def:infCutCon4} and \ref{def:infCutCon5}.

\begin{definition}[{\bf Equality of Common Parts}] \label{def:infCTurPoints1}
	Let $F=(f,d_f,E_f,U_f,\mathsf{V^*Dir}_f)$ be an info-frame. Then, $C=(c,F_1=(f_1(c),d_{f_1},E_{f_1},U_{f_1},\mathsf{V^*Dir}_{f_1}),F_2=(f_2(c),d_{f_2},E_{f_2},U_{f_2},$ $\mathsf{V^*Dir}_{f_2}))$ exhibits {\em \bf Equality of Common Parts Property} with respect to $F$ if the following conditions are satisfied:
	\begin{enumerate}
		\item $d_f=_{\mathsf{rename}}^{\mathsf{Common}(f,f_1)}d_{f_1}$, where $\mathsf{Common}(f,f_1)=\gi(f)\cap \gi($ $f_1(c))$.\label{def:infCutCon3}
		\item $d_f=_{\mathsf{rename}}^{\mathsf{Common}(f,f_2)}d_{f_2}$, where $\mathsf{Common}(f,f_2)=\gi(f)\cap \gi($ $f_2(c))$. \label{def:infCutCon4}
		\item $d_{f_1}=_{\mathsf{rename}}^{\gi(c)}d_{f_2}$. \label{def:infCutCon5}
	\end{enumerate}
\end{definition}

In the following property, we consider the directions of vertices drawn on points from $\mathsf{GridPointSet}(f_{\mathsf{init}})\setminus \gp(\fin)$. First, consider such vertices that are drawn on $c$. Observe that these vertices are in the common part of $f_1(c)$ and $f_2(c)$. As explained in Section \ref{sec:infoFra}, we would like the directions of these vertices to be identical. That is, for a vertex $uv_i\in V(d_{f_1},\gi(c))\cap V^*$, the direction in $F_1$ is identical to that of the vertex $\mathsf{Identify}_{d_f,d_{f_1}}(uv_i)$ in $F_2$. Now, recall that the direction $\mathsf{V^*Dir}_{f_1}(uv_i)=a$ represents the line $\ell(a,d_{f_1}(uv_i))$, which is inside $f_1(c)$, and hence outside $f_2(c)$ (e.g., see $\mathsf{V^*Dir}_{f_1}(xw_1)=c_1$ in Figure~\ref{fig:renaming1}). Similarly, $\mathsf{V^*Dir}_{f_2}(\mathsf{Identify}_{d_{f_1},d_{f_2}}(uv_i))=b$ represents the line $\ell(b,d_{f_2}(\mathsf{Identify}_{d_{f_1},d_{f_2}}($ $uv_i)))$, which is inside $f_2(c)$, and hence outside $f_1(c)$ (e.g., see $\mathsf{V^*Dir}_{f_1}(xw_1)=c_2$ in Figure~\ref{fig:renaming1}). We expect that if we attach these two lines together, we get a straight line, so there is no bending at the point $d_{f_1}(uv_i)$ (e.g., $c_1$, $xw_1$ and $c_2$ should be collinear in Figure~\ref{fig:renaming1}). Equivalently, we verify that $d_{f_1}(uv_i)$ is on $\ell(a,b)$. Now, consider a vertex $uv_i$ drawn on a point from $\mathsf{GridPointSet}(f_{\mathsf{init}})\setminus \gp(\fin)$ and drawn on $f$. That is, a vertex representing a turning point of $\{u,v\}$ in $f$, drawn not on a grid point. Assume that $uv_i$ is drawn also on $f_1(c)$ (e.g., see the vertex $mn_1$ in Figure~\ref{fig:renaming1} and~\ref{fig:renaming2}). This vertex has directions in both $F$ and $F_1$, representing the direction of the edge attached to $uv_i$ from $d_f(uv_i)$ to the interior of $f$ (e.g., see $c$ in Figure~\ref{fig:renaming2}), and also to the interior of $f_1(c)$ (e.g., see $c_3$ in Figure~\ref{fig:renaming1}). Therefore, we expect these two directions to be equal. Nevertheless, since $f_1(c)$ is smaller and contained inside $f$, the direction of $uv_i$ in $F$ might be too large and it gets out of $f_1(c)$ (e.g., see the direction line $\ell(c, d_f(mn_1))$ in Figure~\ref{fig:renaming2}). Therefore, we ask the direction of $uv_i$ in $F_1$ to be on its direction in $F$ (e.g., the line $\ell(c_3, d_{f_1}(mn_1))$ shown in Figure~\ref{fig:renaming1} should be contained in the line $\ell(c, d_f(mn_1))$ shown in Figure~\ref{fig:renaming2}). Observe that, for our purpose it is enough, since we only aim to make sure that there is no bending of the edge $\{u,v\}$ at the point $d_f(uv_i)$.  

\begin{definition}[{\bf Equality of $V^*$ Directions}] \label{def:infCTurPoints2}
		Let $F=(f,d_f,E_f,U_f,\mathsf{V^*Dir}_f)$ be an info-frame. Then, $C=(c,F_1=(f_1(c),d_{f_1},E_{f_1},U_{f_1},\mathsf{V^*Dir}_{f_1}),F_2=(f_2(c),d_{f_2},E_{f_2},U_{f_2},\mathsf{V^*Dir}_{f_2}))$ exhibits {\em \bf Equality of $V^*$ Directions} with respect to $F$ if the following conditions are satisfied: 
	\begin{enumerate}
		\item $C$ exhibits {\bf Equality of Common Parts Property}.
		\item For every $uv_i\in V(d_{f_1},\gi(c))\cap V^*$ such that\\ $d_{f_1}(uv_i)\in \gis($ $\fin)\setminus \mathsf{GridPointSet}(f_{\mathsf{init}})$, $d_{f_1}(uv_i)$ is on $\ell(a,b)$, where $\mathsf{V^*Dir}_{f_1}(uv_i)=a$ and $\mathsf{V^*Dir}_{f_2}($ $\mathsf{Identify}_{d_{f_1},d_{f_2}}$ $(uv_i))=b$.\label{con2DefDir}
		\item For every $uv_i\in V(d_{f_1},\mathsf{Common}(f,f_1))\cap V^*$, where $\mathsf{Common}(f,f_1)=\gi($ $f)\cap \gi(f_1(c))$ such that\\ $d_{f_1}(uv_i)\in \gis(\fin)\setminus \mathsf{GridPointSet}$ $(f_{\mathsf{init}})$, $\ell(d_{f_1}(uv_i),\mathsf{V^*Dir}_{f_1}(uv_i))$ is on $\ell(d_f(\mathsf{Identify}_{d_{f_1},d_{f}}(uv_i)),\mathsf{V^*Dir}_{f}(\mathsf{Identify}_{d_{f_1},d_{f}}(uv_i)))$ or $\ell(d_f$ $(\mathsf{Identify}_{d_{f_1},d_{f}}$ $(uv_i)),\mathsf{V^*Dir}_{f}$ $(\mathsf{Identify}_{d_{f_1},d_{f}}(uv_i)))$ is on $\ell(d_{f_1}(uv_i),\mathsf{V^*Dir}_{f_1}(uv_i))$.
		\item For every $uv_i\in V(d_{f_2},\mathsf{Common}(f,f_2))\cap V^*$, where $\mathsf{Common}(f,f_2)=\gi($ $f)\cap \gi(f_1(c))$ such that\\ $d_{f_2}(uv_i)\in \gis(\fin)\setminus \mathsf{GridPointSet}$ $(f_{\mathsf{init}})$, $\ell(d_{f_2}(uv_i),\mathsf{V^*Dir}_{f_2}(uv_i))$ is on $\ell(d_f(\mathsf{Identify}_{d_{f_2},d_{f}}(uv_i)),\mathsf{V^*Dir}_{f}(\mathsf{Identify}_{d_{f_2},d_{f}}(uv_i)))$ or $\ell(d_f$ $(\mathsf{Identify}_{d_{f_2},d_{f}}(uv_i)),\mathsf{V^*Dir}_{f}($ $\mathsf{Identify}_{d_{f_2},d_{f}}(uv_i)))$ is on $\ell(d_{f_2}(uv_i),\mathsf{V^*Dir}_{f_2}(uv_i))$.
	\end{enumerate}
\end{definition}

Next, recall that $U_f$ is the set of vertices drawn strictly inside $f$ in every drawing of $F$. So, for every drawing $d$ of $F$ and for every $u\in U_f\subseteq V$, we expect that exactly one of the following cases holds:

\smallskip\noindent{\bf Case 1.} $u$ is drawn strictly inside $f_1(c)$. Then, as $U_{f_1}$ is the set of vertices of $V$ that are drawn strictly inside the sub-drawing of $d$ bounded by $f_1(c)$, $u\in U_{f_1}$.

\smallskip\noindent{\bf Case 2.} $u$ is drawn strictly inside $f_2(c)$. Then, as $U_{f_2}$ is the set of vertices of $V$ that are drawn strictly inside the sub-drawing of $d$ bounded by $f_2(c)$, $u\in U_{f_2}$.

\smallskip\noindent{\bf Case 3.} $u$ is drawn on $c$. Observe that, since the first and last vertices of $c$ are drawn on $f$, $\gi(c)\cap \gi(f)\neq \emptyset$. So, since $u$ is drawn strictly inside $f$ and also on $c$, $d(u)=p$ for some $p\in \gi(c)\setminus \gi(f)$. Therefore, since $d_{f_1}$ is the drawing on $f_1(c)$, $u\in V(d_{f_1},\gi(c)\setminus \gi(f))$. Symmetrically, $u\in V(d_{f_2},\gi(c)\setminus \gi(f))$. However, since $d_{f_1}=_{\mathsf{rename}}^{\gi(c)}d_{f_2}$ (Condition \ref{def:infCutCon5} of Definition \ref{def:infCTurPoints1}), $V(d_{f_1},\gi(c))\cap V= V(d_{f_2},\gi($ $c))\cap V$. 

So $u\in V(d_{f_1},\gi(c)\setminus \gi(f))$ implies \\ $u\in V(d_{f_2},\gi(c)\setminus \gi(f))$, and vice versa (Condition \ref{definition:DERNCon1} of Definition \ref{def:DERN}). 

Thus, we conclude that $U_f=(V(d_{f_1},\gi(c)\setminus \gi(f))) \cup U_{f_1}\cup U_{f_2}$ (Condition \ref{def:infCutCon6} of Definition \ref{def:infCTurPoints13}). Moreover, we expect that these three sets will be distinct. Observe that $V(d_{f_1},
\gi(c)\setminus \gi(f))\cap U_{f_1}=\emptyset$ and $V(d_{f_1},$ $\gi(c)\setminus \gi(f))\cap U_{f_2}=\emptyset$ trivially, since $F_1$ and $F_2$ are info-frames (Condition \ref{definfFraCon3a} of Definition \ref{def:infFr}). So, we only need to demand that $U_{f_1}\cap U_{f_2}=\emptyset$ (Condition \ref{def:infCutCon62} of Definition \ref{def:infCTurPoints13}).

\begin{definition}[{\bf Partition of $U_f$}] \label{def:infCTurPoints13}
	Let $F=(f,d_f,E_f,U_f,\mathsf{V^*Dir}_f)$ be an info-frame. Then, $C=(c,F_1=(f_1(c),d_{f_1},E_{f_1},U_{f_1},\mathsf{V^*Dir}_{f_1}),F_2=(f_2(c),d_{f_2},E_{f_2},U_{f_2},\mathsf{V^*Dir}_{f_2}))$ is {\em \bf $U_f$ - Partitioned} with respect to $F$ if: 
	\begin{enumerate}
		\item $U_f=V(d_{f_1},\gi(c)\setminus \gi(f)) \cup U_{f_1}\cup U_{f_2}$.\label{def:infCutCon6}
		\item $U_{f_1}\cap U_{f_2}=\emptyset$. \label{def:infCutCon62}
	\end{enumerate}
\end{definition}

% For the sake of simplicity, we use the renaming functions $R$, $R_1$ and $R_2$ as follows. The function $R_1$ ($R_2$), gets a $V^*$ vertex and returns

We continue with another property of info-cutters, called {\bf Partition of $E_f$}. Recall that an edge in $E_f$ is an edge drawn strictly inside $f$ except for at least one of its endpoints, which is drawn on $f$, in every drawing of $F$. Here, we consider the way we expect edges from $E_f$ to ``influence'' the info-frames $F_1$ and $F_2$. For the sake of readability, we split this  property into three similar properties. We begin with Part (a) of {\bf Partition of $E_f$}: the partition of edges in $E_f$ with both endpoints on $f$. Let $\{uv_i,uv_{i+1}\}\in E_f$ such that $uv_i,uv_{i+1}\in V(d_f)$, that is, $\{uv_i,uv_{i+1}\}$ is an edge drawn strictly inside $f$ except at the endpoints, which are drawn on $f$, in every drawing $d$ of $F$ (e.g., see the edge $\{uv_1,uv_2\}$ in Figure~\ref{fig:InfFDr}). Recall that for a drawing $d$, $\pp(d)$ is the set of points in the plane that intersect $d$. We denote by $\mathsf{Common}(f,f_1)$, $\mathsf{Common}(f,f_2)$ and $\mathsf{Common}(f_1,f_2)$ the sets $\gi(f)\cap \gi(f_1(c))$, $\gi(f)\cap \gi(f_2(c))$ and\\ $\gi(c)$, respectively. Let $d_f^*,d_{f_1}^*,d_{f_2}^*$ be the result of the application of $\mathsf{MakeAllVerSet}(d_f,d_{f_1},d_{f_2},\mathsf{Common}(d_f,d_{f_1}),$\\ $\mathsf{Common}(d_f,d_{f_2}),\mathsf{Common}(d_{f_1},d_{f_2}))$ (see Definition \ref{def:OpMakeVerFAS}). Observe that $d_f^*,d_{f_1}^*,d_{f_2}^*$ are well defined since we assume the satisfaction of the {\bf Equality of Common Parts} property (see Definition \ref{def:infCTurPoints1}).   Considering the two frames $f_1(c)$ and $f_2(c)$, we expect that exactly one of the following conditions is satisfied.

%In addition, due to Observation \ref{obs:makever3}, the corresponding identification functions are not empty sets.

\smallskip\noindent{\bf Condition \ref{infcutedgeCon712}.} The edge $\{uv_i,uv_{i+1}\}$ is drawn strictly inside $f_1(c)$ except for both its endpoints, and therefore $\{\mathsf{Identify}_{d_f,d_{f_1}}(uv_i),\mathsf{Identify}_{d_f,d_{f_1}}(uv_{i+1})\}\in E_{f_1}$. Observe that in this case, since the edge is drawn strictly inside $f_1(c)$ except for both its endpoints, then the endpoints are turning points of $\{u,v\}$ in $f_1(c)$. Thus, there are vertices from the set $V^*_{\{u,v\}}$ drawn in these points in $d_{f_1}$, so $\mathsf{Identify}_{d_f,d_{f_1}}(uv_i)\neq \mathsf{Null}$ and $\mathsf{Identify}_{d_f,d_{f_1}}(uv_{i+1})\neq \mathsf{Null}$.

%   and therefore  We give now intuition for why $\mathsf{Identify}_{d_f,d_{f_1}}(uv_i)\neq \mathsf{Null}$ (and similarly $\mathsf{Identify}_{d_f,d_{f_1}}(uv_{i+1})\neq \mathsf{Null}$). Observe that since Observe, that since we assume the {\bf Equality of Common Parts Property}, and also so one of the following conditions holds:
%\begin{itemize}
%\item There exists $uv_j\in V(d_{f_1})\cap V^*$ such that $uv_i=uv_j$, and therefore $\{\mathsf{Identify}_{d_f,d_{f_1}}(uv_i)=uv_j$, and we are done.
%\item The first condition does not hold, and there exists an edge $\{uv_j,uv_{j+1}\}\in E(d_{f_1})\cap E^*$ such that $f(uv_i)\in \pp(d_{f_1}(\{uv_j,uv_{j+1}\}))$

%\end{itemize}

\smallskip\noindent{\bf Condition \ref{infcutedgeCon713}.} The edge $\{uv_i,uv_{i+1}\}$ is drawn strictly inside $f_2(c)$ except for both its endpoints, and therefore $\{\mathsf{Identify}_{d_f,d_{f_2}}(uv_i),\mathsf{Identify}_{d_f,d_{f_2}}(uv_{i+1})\}\in E_{f_2}$. Observe that, it can be shown that $\mathsf{Identify}_{d_f,d_{f_2}}(uv_i)\neq \mathsf{Null}$ and $\mathsf{Identify}_{d_f,d_{f_2}}(uv_{i+1})\neq \mathsf{Null}$ similarly to the previous condition.

\smallskip\noindent{\bf Condition \ref{infcutedgeCon714}.} The edge $\{uv_i,uv_{i+1}\}$ is drawn on $c$, that is,  $\{\mathsf{Identify}_{d_f,d_{f_1}^*}(uv_i),\mathsf{Identify}_{d_f,d_{f_1}^*}($ $uv_{i+1})\} \in E(d_{f_1}^*,\pp(c))$. Observe that, from Observation \ref{obs:makever3}, for every $uv_q\in V(d_f^*,$ $\mathsf{Common}(f,$ $f_1))\cap V^*$, $\mathsf{Identify}_{d_f^*,d_{f_1}^*}(uv_{q})\neq \mathsf{Null}$. Now, since $\mathsf{Identify}_{d_f,d_f^*}(uv_{q})\neq \mathsf{Null}$ and $uv_i,uv_{i+1}\in V(f,\mathsf{Common}(d_f,d_{f_1}))\cap V^*$, we get that $\mathsf{Identify}_{d_f,d_{f_1}^*}(uv_{i})\neq \mathsf{Null}$ and $\mathsf{Identify}_{d_f,d_{f_1}^*}($ $uv_{i+1})\neq \mathsf{Null}$ 

\smallskip\noindent{\bf Condition \ref{infcutedgeCon715}.} The edge $\{uv_i,uv_{i+1}\}$ has $\ell\in \mathbb{N}$ turning points in $f_1(c)$ other than $(d(uv_i),\{u,v\})$ and $(d(uv_{i+1}),\{u,v\})$ (which may not be turning points in $f_1(c)$). We further discuss additional constraints satisfied in this case later.

%The vertex identification functions $\mathsf{Identify}$, $\mathsf{Identify}_1$ and $\mathsf{Identify}_2$ we use in the following definition, are defined in Conditions \ref{def:infCutCon3}, \ref{def:infCutCon4} and \ref{def:infCutCon5} of the definition of equality of common parts property (Definition \ref{def:infCTurPoints1}).

\begin{definition}[{\bf Partition of $E_f$ With Both Endpoints on $f$}] \label{def:infCTurPoints2a}
	Let $F=(f,d_f,E_f,U_f,$ $\mathsf{V^*Dir}_f)$ be an info-frame. Let $C=(c,F_1=(f_1(c),d_{f_1},E_{f_1},U_{f_1},\mathsf{V^*Dir}_{f_1}),F_2=(f_2(c),d_{f_2},$ $E_{f_2},U_{f_2},\mathsf{V^*Dir}_{f_2}))$ exhibits {\bf Equality of Common Parts} with respect to $F$. Let $d_f^*,d_{f_1}^*,d_{f_2}^*$ be the result of the application of $\mathsf{MakeAllVerSet}(d_f,d_{f_1},d_{f_2},\mathsf{Common}(d_f,d_{f_1}),\mathsf{Common}(d_f,$ $d_{f_2}),\mathsf{Common}($ $d_{f_1},d_{f_2}))$ (see Definition \ref{def:OpMakeVerFAS}). Then, $C$ exhibits {\em \bf Partition of $E_f$ With Both Endpoints on $f$} with respect to $F$ if for every $\{uv_i,uv_{i+1}\}\in E_f$ such that $uv_i,uv_{i+1}\in V(d_f)$, exactly one of the following four conditions holds: \label{infcutedgeCon7}
	\begin{enumerate}
		\item $\{\mathsf{Identify}_{d_f,d_{f_1}}(uv_i),\mathsf{Identify}_{d_f,d_{f_1}}(uv_{i+1})\}\in E_{f_1}$. \label{infcutedgeCon712}
		\item $\{\mathsf{Identify}_{d_f,d_{f_2}}(uv_i),\mathsf{Identify}_{d_f,d_{f_2}}(uv_{i+1})\}\in E_{f_2}$. \label{infcutedgeCon713}
		\item $\{\mathsf{Identify}_{d_f,d_{f_1}^*}(uv_i),\mathsf{Identify}_{d_f,d_{f_1}^*}(uv_{i+1})\}\in E(d_{f_1}^*,\pp(c))$. \label{infcutedgeCon714}
		\item $\{uv_i,uv_{i+1}\}$ partly intersects $c$ (see Definition \ref{def:partIntCBoth}). \label{infcutedgeCon715}
	\end{enumerate}
\end{definition}

We now consider the case of Condition \ref{infcutedgeCon715} of Definition \ref{def:infCTurPoints2a}.
The edge $\{uv_i,uv_{i+1}\}$ has $\ell\in \mathbb{N}$ turning points in $f_1(c)$ other than $(d(uv_i),\{u,v\})$ and $(d(uv_{i+1}),\{u,v\})$ (which may not be turning points in $f_1(c)$).
These $\ell$ turning points, $(p,\{u,v\})$, must be such that $p\in \gi(c)\setminus \gi(f)$, since the edge $\{uv_i,uv_{i+1}\}$ is drawn strictly inside $f$ except for both its endpoints (e.g., see the edge $\{uv_1,uv_2\}$ in Figure~\ref{fig:InfFDr}). Therefore, $\ell$ vertices from $V^*_{\{u,v\}}$ belong to $\gi(c)\setminus \gi(f)$, and they are connected by edges from $E^*_{\{u,v\}}$ in a ``path-like manner''. Further, the numbering of these vertices should match the order of appearance of the points corresponding to them from $uv_i$ to $uv_{i+1}$. Let $r\in [2]$ be such that $uv_i$ is drawn on $f_r(c)$. We expect that the vertex $uv_i$ is mapped to $uv_j$, for some $j\in \mathbb{N}$, in $d_r$, and the remaining $\ell$ vertices from $V^*$ on $c\setminus f$ are numbered $j+1,\ldots,j+\ell$ in $d_r$. Formally, $uv_{j+1},\ldots,uv_{j+\ell}\in V(d_r,\gi(c)\setminus \gi(f))$ (e.g., see the edge $\{uv_1,uv_2\}$ in Figure~\ref{fig:InfFDr} and the addition of its turning points and their labeling in Figure~\ref{fig:infCutter1}). Furthermore, for every $0\leq t<\ell$ we expect that the vertices $uv_{j+t}$ and $uv_{j+t+1}$ are connected with an edge. So, we demand that exactly one of the following condition is satisfied. 

% Here, since $r\in [2]$ is unknown, we use the vertex identification function $\mathsf{Identify}$ as follows. The notation $\mathsf{Identify}_1(\mathsf{Identify}_r(uv_j))$ (or similarly $\mathsf{Identify}_2(\mathsf{Identify}_r(uv_j))$) returns $uv_j$, if $r=1$ (if $r=2$), else, it returns $\mathsf{Identify}(\mathsf{Identify}_r(uv_j))$.

\smallskip\noindent{\bf Condition \ref{infcutedgeCon7151}.} The edge between the vertices $uv_{j+t}$ and $uv_{j+t+1}$ is drawn strictly inside $f_1(c)$ except for both its endpoints, and therefore this edge is in $E_{f_1}$. Now, if $r=2$, the labeling of the endpoints of this edge in $d_{f_1}$ might be other than $uv_{j+t}$ and $uv_{j+t+1}$. In this case, we need to use the vertex identification function of $d_{f_2}$ and $d_{f_1}$ on $uv_{j+t}$ and $uv_{j+t+1}$. Recall that the aforementioned $\ell$ vertices represent turning points of $\{u,v\}$ (in $(f_1(c)$ or $f_2(c)$). Since they are drawn in $\gi(c)\setminus \gi(f)$, it is easy to see that a point in this set is a turning point of $\{u,v\}$ in $f_1(c)$ if and only if it is a turning point of $\{u,v\}$ in $f_2(c)$. So, we get that there exists a vertex from $V^*_{\{u,v\}}$ drawn on a point $p\in \gi(c)\setminus \gi(f)$ of $d_{f_1}$ if and only if there exists a vertex from $V^*_{\{u,v\}}$ drawn on a point $p\in \gi(c)\setminus \gi(f)$ of $d_{f_2}$. Now, since $r\in [2]$ is unknown, we mention the two possible cases.
\begin{itemize}
	\item If $r=1$, then $\{uv_{j+t},uv_{j+t+1}\}\in E_{f_1}$.
	\item if $r=2$, then $\{\mathsf{Identify}_{d_{f_2},d_{f_1}}(uv_{j+t}),\mathsf{Identify}_{d_{f_2},d_{f_1}}(uv_{j+t+1})\}\in E_{f_1}$ (e.g., see the edge $\{uv_1,uv_2\}$ in $f_1(c)$ in Figure~\ref{fig:infCutter1}).
\end{itemize} 
%This issue is also relevant for the rest of the conditions of this property. 

\smallskip\noindent{\bf Condition \ref{infcutedgeCon7152}.} This condition is symmetric to the first condition. Here, the edge connecting the vertices $uv_{j+t}$ and $uv_{j+t+1}$ is drawn strictly inside $f_2(c)$ except for both its endpoints, and therefore this edge is in $E_{f_2}$. There are two cases:
\begin{itemize}
	\item If $r=2$, then $\{uv_{j+t},uv_{j+t+1}\}\in E_{f_2}$.
	\item if $r=1$, then $\{\mathsf{Identify}_{d_{f_1},d_{f_2}}(uv_{j+t}),\mathsf{Identify}_{d_{f_1},d_{f_2}}(uv_{j+t+1})\}\in E_{f_2}$ (e.g., see the edge $\{uv_3,uv_4\}$ in $f_2(c)$ in Figure~\ref{fig:infCutter1}).
\end{itemize}

\smallskip\noindent{\bf Condition \ref{infcutedgeCon7153}.} The edge between the vertices $uv_{j+t}$ and $uv_{j+t+1}$ is drawn on $c$, that is, $\{uv_{j+t},uv_{j+t+1}\}\in E(d_r,\pp(c))$ (e.g., see the edge $\{st_1,st_2\}$ in $f_2(c)$ in Figure~\ref{fig:infCutter1}).

\smallskip\noindent{\bf Conditions \ref{infcutedgeCon7155}, \ref{infcutedgeCon7156} and \ref{infcutedgeCon7157}.} Similarly, we have an edge between the vertices $uv_{j+\ell}$ and $uv_{i+1}$. We state the main differences between the cases corresponding to this edge. Observe that $uv_{i+1}$ might not be drawn on $f_r(c)$, so we cannot assume that $uv_{i+1}$ is labeled $uv_{j+\ell+1}$, as expected (since there is an edge between $uv_{j+\ell}$ and $uv_{i+1}$). So, assume without loss of generality that $r=1$ and $uv_{i+1}$ is not drawn on $f_1(c)$. Then, if the edge between $uv_{j+\ell}$ and $uv_{i+1}$ is drawn strictly inside $f_2(c)$, except at its endpoints, then the edge will appear in $E_{f_2}$. Now, we wish to get the labeling of the endpoints of this edge. For the vertex $uv_{j+\ell}$, which is drawn on $c$, we can get its labeling from the drawing $d_{f_1}$, that is, $\mathsf{Identify}_{d_{f_1},d_{f_2}}(uv_{j+\ell})$. As for the vertex $uv_{i+1}$, we can get its labeling from the drawing $f$, that is, $\mathsf{Identify}_{d_f,d_{f_2}}(uv_{i+1})$. Now, assume that $r=1$ and  the edge between $uv_{j+\ell}$ and $uv_{i+1}$ is drawn strictly inside $f_1(c)$. Then, we expect that $uv_{i+1}$ is labeled $uv_{j+\ell+1}$, as explained, and so $\{uv_{j+\ell},uv_{j+\ell+1}\}\in E_{f_1}$. 

%We state these conditions in Conditions \ref{infcutedgeCon7155}, \ref{infcutedgeCon7156} and \ref{infcutedgeCon7157}, respectively. 

% in  and \ref{infcutedgeCon71513} andConditions \ref{infcutedgeCon71511}, \ref{infcutedgeCon71512}

\begin{definition}[{\bf Partial Intersection of $\{uv_i,uv_{i+1}\}\in E_f$ and $c$}] \label{def:partIntCBoth}
	Let $F=(f,d_f,E_f,$ $U_f,\mathsf{V^*Dir}_f)$ be an info-frame. Let $C=(c,F_1=(f_1(c),d_{f_1},E_{f_1},U_{f_1},\mathsf{V^*Dir}_{f_1}),F_2=(f_2(c),$ $d_{f_2},E_{f_2},U_{f_2},\mathsf{V^*Dir}_{f_2}))$ exhibits {\bf Equality of Common Parts} with respect to $F$. Let $d_f^*,d_{f_1}^*,d_{f_2}^*$ be the result of the application of $\mathsf{MakeAllVerSet}(d_f,d_{f_1},d_{f_2},\mathsf{Common}(d_f,d_{f_1}),$ $\mathsf{Common}(d_f,d_{f_2}),\mathsf{Common}($ $d_{f_1},d_{f_2}))$ (see Definition \ref{def:OpMakeVerFAS}). Let $\{uv_i,uv_{i+1}\}\in E_f$ such that $uv_i,uv_{i+1}\in V(d_f)$. Then, $\{uv_i,uv_{i+1}\}$ {\em partly intersects} $c$ if the following condition holds.
	There exist $\ell \in \mathbb{N}$ and $r\in[2]$ such that
	 $\mathsf{Identify}_{d_f,d^*_{f_r}}(uv_i)=uv_j$, $uv_{j+1},\ldots,uv_{j+\ell}\in V(d_{f_r},\gi(c)\setminus\gi(f))$ and for every $0\leq t<\ell$, exactly one of Conditions \ref{infcutedgeCon7151}-\ref{infcutedgeCon7153} and exactly one of Conditions \ref{infcutedgeCon7155}-\ref{infcutedgeCon7157} are satisfied:
	\begin{enumerate}
		\item
		If $r=1$, then $\{uv_{j+t},uv_{j+t+1}\}\in E_{f_1}$; else, $r=2$, and then $\{\mathsf{Identify}_{d_{f_2},d_{f_1}}(uv_{j+t}),$ $\mathsf{Identify}_{d_{f_2},d_{f_1}}$ $(uv_{j+t+1})\}\in E_{f_1}$.\label{infcutedgeCon7151}
		
		\item If $r=2$, then $\{uv_{j+t},uv_{j+t+1}\}\in E_{f_2}$; else, $r=1$, and then $\{\mathsf{Identify}_{d_{f_1},d_{f_2}}(uv_{j+t}),$ $\mathsf{Identify}_{d_{f_1},d_{f_2}}$ $(uv_{j+t+1})\}\in E_{f_2}$. \label{infcutedgeCon7152} 
		
		\item $\{uv_{j+t},uv_{j+t+1}\}\in E(d_{f_r}^*,\pp(c))$. \label{infcutedgeCon7153} 
		%		\end{enumerate} 
	
	%In addition,  exactly one of the following condition holds: 
	%	\begin{enumerate}
		%	\item If $r=1$ then $\{uv_j,uv_{j+1})\}\in E_{f_1}$. Else, $r=2$, and then $\{\mathsf{Identify}_{d_{f_2},d_{f_1}}(uv_{j+t}),\mathsf{Identify}_{d_{f_2},d_{f_1}}$ $(uv_{j+t+1})\}\in E_{f_1}$.\label{infcutedgeCon71511}
		%	\item If $r=2$ then $\{\mathsf{Identify}_2(uv_i)),\mathsf{Identify}_2(uv_i)+1)\}\in E_{f_2}$. Else, $r=1$, and then $\{\mathsf{Identify}(\mathsf{Identify}_1(uv_i)),\mathsf{Identify}(\mathsf{Identify}_1(uv_i)+1)\}\in E_{f_2}$.\label{infcutedgeCon71512}
		%	\item $\{\mathsf{Identify}_r(uv_i),\mathsf{Identify}_r(uv_i)+1\}\in E(d_r,\pp(c))$. \label{infcutedgeCon71513}
		%	\end{enumerate}
	%	\begin{enumerate}
		\item If $r=1$, then $\mathsf{Identify}_{d_f,d_{f_1}}(uv_{i+1})=uv_{j+\ell+1}$ and $\{uv_{j+\ell},uv_{j+\ell+1}\}\in E_{f_1}$; else, $r=2$, and then $\{\mathsf{Identify}_{d_{f_2},d_{f_1}}(uv_{j+\ell}),\mathsf{Identify}_{d_f,d_{f_1}}$ $(uv_{i+1})\}\in E_{f_1}$. \label{infcutedgeCon7155}
		
		\item If $r=2$, then $\mathsf{Identify}_{d_f,d_{f_2}}(uv_{i+1})=uv_{j+\ell+1}$ and $\{uv_{j+\ell},uv_{j+\ell+1}\}\in E_{f_2}$; else, $r=1$, and then $\{\mathsf{Identify}_{d_{f_1},d_{f_2}}(uv_{j+\ell}),\mathsf{Identify}_{d_f,d_{f_2}}$ $(uv_{i+1})\}\in E_{f_2}$.\label{infcutedgeCon7156}
		
		\item $\mathsf{Identify}_{d_f,d_{f_r}}(uv_{i+1})=uv_{j+\ell+1}$ and $\{uv_{j+\ell},uv_{j+\ell+1}\}\in E(d_{f_r}^*,\pp(c))$. \label{infcutedgeCon7157}
	\end{enumerate}
	
\end{definition}

We proceed with other properties related to {\bf partition of $E_f$}. Specifically, the next two properties are similar to {\bf Partition of $E_f$ With Both Endpoints on $f$}, but with a few differences. Let $\{uv_i,uv_{i+1}\}\in E_f$ such that  $\{uv_i,uv_{i+1}\}$ is an edge that is drawn strictly inside $f$, except at exactly one endpoint. Recall that since $F$ is an info-frame, every vertex that is drawn strictly inside $f$ is a vertex from the set $V$. So, in this case, any vertex that is not drawn on $f$, but drawn strictly inside $f$, must be a vertex from $V$ and not from $V^*$. Therefore, we have two cases: either $\{uv_i,uv_{i+1}\}=\{u,uv_1\}$ (e.g., see the edge $\{u,uv_1\}$ in Figure~\ref{fig:InfFDr}), or $\{uv_i,uv_{i+1}\}=\{uv_{\mathsf{index}(u,v)},v\}$ (e.g., see the edge $\{st_3,t\}$ in Figure~\ref{fig:InfFDr}). We handle the first case in the {\bf Partition of $\{u,uv_1\}\in E_f$ With One Endpoint on $f$} property (see Definition \ref{def:infCTurPoints2b}), and the second case in the {\bf Partition of $\{uv_{\mathsf{index}(u,v)},v\}\in E_f$ With One Endpoint on $f$} property (see Definition \ref{def:infCTurPoints2c}). In what follows, we explain the differences between {\bf Partition of $E_f$ With Both Endpoints on $f$} and {\bf Partition of $\{u,uv_1\}\in E_f$ With One Endpoint on $f$}. (The differences between {\bf Partition of $E_f$ With Both Endpoints on $f$} and {\bf Partition of $\{uv_{\mathsf{index}(u,v)},v\}\in E_f$ With One Endpoint on $f$} are similar). We first compare Definition \ref{def:infCTurPoints2a} to Definition \ref{def:infCTurPoints2b}.

\smallskip\noindent{\bf Conditions \ref{infcutedgeCon2b1}, \ref{infcutedgeCon2b2} and \ref{infcutedgeCon2b3}.} These conditions are similar to Conditions \ref{infcutedgeCon712}, \ref{infcutedgeCon713} and \ref{infcutedgeCon714} of the definition of {\bf Partition of $E_f$ With Both Endpoints on $f$} (Definition \ref{def:infCTurPoints2a}). 

\smallskip\noindent{\bf Condition \ref{infcutedgeCon2b4}.} 
This condition is similar to Condition \ref{infcutedgeCon715} of Definition \ref{def:infCTurPoints2a}. Here, we consider the case where there is a turning point $(p,\{u,v\})$ for $p\in \gi(c)\setminus \gi(f)$, other than the endpoints of $\{u,uv_1\}$, in $f_1(c)$ (and $f_2(c)$). We further discuss additional constraints satisfied in this case later.

\begin{definition}[{\bf Partition of $\{u,uv_1\}\in E_f$ With One Endpoint on $f$}] \label{def:infCTurPoints2b}
	Let $F=(f,d_f,E_f,U_f,\mathsf{V^*Dir}_f)$ be an info-frame. Let $C=(c,F_1=(f_1(c),d_{f_1},E_{f_1},U_{f_1},\mathsf{V^*Dir}_{f_1}),F_2=(f_2(c),d_{f_2},E_{f_2},U_{f_2},\mathsf{V^*Dir}_{f_2}))$ exhibits {\bf Equality of Common Parts} with respect to $F$. Let $d_f^*,d_{f_1}^*,d_{f_2}^*$ be the result of the application of $\mathsf{MakeAllVerSet}(d_f,d_{f_1},d_{f_2},\mathsf{Common}$ $(d_f,d_{f_1}),$\\$\mathsf{Common}$ $(d_f,d_{f_2}),\mathsf{Common}(d_{f_1},d_{f_2}))$ (see Definition \ref{def:OpMakeVerFAS}). Then, $C$ exhibits {\em \bf Partition of $\{u,uv_1\}\in E_f$ With One Endpoint on $f$} with respect to $F$ if for every $\{u,uv_{1}\}\in E_f$ such that $u\in U_f$, exactly one of the following four conditions holds: 
	\begin{enumerate}
		\item $\{u,uv_1\}\in E_{f_1}$. \label{infcutedgeCon2b1}
		\item $\{u,uv_1\}\in E_{f_2}$. \label{infcutedgeCon2b2}
		\item $\{u,uv_1\}\in E(d_{f_1}^*,\pp(c))$. \label{infcutedgeCon2b3}
		\item $\{u,uv_{1}\}$ partly intersects $c$ (see Definition \ref{def:partIntCU}). \label{infcutedgeCon2b4}
	\end{enumerate}
\end{definition}

We now consider the case of Condition \ref{infcutedgeCon2b4} of Definition \ref{def:infCTurPoints2b}. Let $\ell\in \mathbb{N}$ such that there are $\ell$ turning points $(p,\{u,v\})$ for $p\in \gi(c)\setminus \gi(f)$, other than the endpoints of $\{u,uv_1\}$, in each of $f_1(c)$ and $f_2(c)$. Recall that the purpose of the numbering of vertices from the set $V^*_{\{u,v\}}$ is to state the order of the appearance of the vertices, from $u$ to $v$. Now, these vertices (that are from the set $V^*_{\{u,v\}}$) that are on $c$, appear before the intersection of the edge $\{u,v\}$ and $f$ (represented by the vertex $uv_1$ in $d_f$). Therefore, we expect that the numbering of these vertices will be $1$ to $\ell$, where $\ell$ is the number of vertices from $V^*_{\{u,v\}}$ in $f_1(c)$ (observe that $\ell$ is also the number of vertices from $V^*_{\{u,v\}}$ in $f_2(c)$). So, as opposed to Condition \ref{infcutedgeCon715} of the definition of {\bf Partition of $E_f$ With Both Endpoints on $f$} (Definition \ref{def:infCTurPoints2a}), the numbering of vertices from $V^*_{\{u,v\}}$ is known in advance.

\smallskip\noindent{\bf Conditions \ref{infcutedgeCon4ba1}, \ref{infcutedgeCon4ba2} and \ref{infcutedgeCon4ba3}.} First, we consider the edges $\{uv_t,uv_{t+1}\}$ for every $1\leq t\leq \ell$. Here, the cases are similar to Conditions \ref{infcutedgeCon712}, \ref{infcutedgeCon713} and \ref{infcutedgeCon714} of the definition of {\bf Partition of $E_f$ With Both Endpoints on $f$} (Definition \ref{def:infCTurPoints2a}). 

Next, we consider the edge between $uv_1$ and $u$. We have that exactly one of Conditions \ref{infcutedgeCon4ba}, \ref{infcutedgeCon4bb}, \ref{infcutedgeCon4bc}, \ref{infcutedgeCon4bd} and \ref{infcutedgeCon4be} is satisfied.

%\smallskip\noindent{\bf Conditions \ref{infcutedgeCon4ba1}, \ref{infcutedgeCon4ba2} and \ref{infcutedgeCon4ba3}.} Now, Conditions \ref{infcutedgeCon4ba}, \ref{infcutedgeCon4bb}, \ref{infcutedgeCon4bc}, \ref{infcutedgeCon4bd} and \ref{infcutedgeCon4be} deal with the edge between $u$ and $uv_1$ (observe that $uv_1$ here is a turning point of $\{u,v\}$ on $\gi(c)\setminus \gi(f)$). Here, we have that exactly one of the following conditions is satisfied. 

\smallskip\noindent{\bf Condition \ref{infcutedgeCon4ba}.} The vertex $u$ is drawn strictly inside $f_1(c)$, and therefore we have that $u\in U_{f_1}$. In addition, since $uv_1$ is the vertex from the set $V^*_{\{u,v\}}$ that is the closest to $u$, it follows that the edge $\{u,uv_1\}$ is drawn strictly inside $f_1(c)$ except at the endpoint $d(uv_1)$. So, $\{u,uv_1\}\in E_{f_1}$.

\smallskip\noindent{\bf Condition \ref{infcutedgeCon4bb}.} Similarly, the vertex $u$ is drawn strictly inside $f_2(c)$, and therefore we have that $u\in U_{f_2}$. In addition, since $uv_1$ is the vertex from the set $V^*_{\{u,v\}}$ that is the closest to $u$, then, it follows that the edge $\{u,uv_1\}$ is drawn strictly inside $f_2(c)$ except at the endpoint $d(uv_1)$. So, $\{u,uv_1\}\in E_{f_2}$ (e.g., see the vertex $t$ and the edge $\{st_3,t\}$ in Figure~\ref{fig:infCutter1}). 

\smallskip\noindent{\bf Condition \ref{infcutedgeCon4bc}.} The vertex $u$ is drawn on $c$, so $u\in V(d_{f_1}, \gi(c))$, and the edge $\{u,uv_1\}$ is drawn strictly inside $f_1(c)$ except at its endpoints. Then, $\{u,uv_1\}\in E_{f_1}$.

\smallskip\noindent{\bf Condition \ref{infcutedgeCon4bd}.} The vertex $u$ is drawn on $c$, so $u\in V(d_{f_1}, \gi(c))$, and the edge $\{u,uv_1\}$ is drawn strictly inside $f_2(c)$ except at its endpoints. Then, $\{u,uv_1\}\in E_{f_2}$.

\smallskip\noindent{\bf Condition \ref{infcutedgeCon4be}.} The vertex $u$ and the edge $\{u,uv_1\}$ are drawn on $c$. So, $\{u,uv_{1}\}\in E(d_{f_1}^*,$ $\pp(c))$. 

%	In addition, we have that exactly one of the following conditions is satisfied:
%	\begin{itemize}
	%	\item The edge $\{uv_\ell,\mathsf{Identify}_1(uv_1)\}$ is drawn strictly inside $f_1(c)$ except at the endpoints, so $\{uv_\ell,\mathsf{Identify}_1(uv_1)\}\in E_{f_1}$, as stated in Condition \ref{infcutedgeCon4bc}. 
	%\item The edge $\{uv_\ell,\mathsf{Identify}_2(uv_1)\}$ is drawn strictly inside $f_2(c)$ except at the endpoints, so $\{uv_\ell,\mathsf{Identify}_2(uv_1)\}\in E_{f_2}$, as stated in Condition \ref{infcutedgeCon4bd}. 
	%	\item The edge $\{uv_\ell,\mathsf{Identify}_1(uv_1)\}$ is drawn on $c$, so $\{uv_\ell,\mathsf{Identify}_1(uv_1)\}\in E(d_{f_1},\pp(c))$, as we state in Condition \ref{infcutedgeCon4be}. 
	%\end{itemize}
	%Conditions \ref{infcutedgeCon4ca}, \ref{infcutedgeCon4cb} and \ref{infcutedgeCon4cc} are similar to Conditions \ref{infcutedgeCon7155}, \ref{infcutedgeCon7156} and \ref{infcutedgeCon7157} of Definition \ref{def:infCTurPoints2a}. 

	\begin{definition}[{\bf Partial Intersection of $\{u,uv_{1}\}\in E_f$ and $c$}] \label{def:partIntCU}
		Let $F=(f,d_f,E_f,U_f,$ $\mathsf{V^*Dir}_f)$ be an info-frame. Let $C=(c,F_1=(f_1(c),d_{f_1},E_{f_1},U_{f_1},\mathsf{V^*Dir}_{f_1}),F_2=(f_2(c),d_{f_2},$ $E_{f_2},U_{f_2},\mathsf{V^*Dir}_{f_2}))$ exhibits {\bf Equality of Common Parts} with respect to $F$. Let $d_f^*,d_{f_1}^*,d_{f_2}^*$ be the result of the application of $\mathsf{MakeAllVerSet}(d_f,d_{f_1},d_{f_2},\mathsf{Common}(d_f,d_{f_1}),\mathsf{Common}(d_f,$ $d_{f_2}),\mathsf{Common}(d_{f_1},d_{f_2}))$ (see Definition \ref{def:OpMakeVerFAS}). Then, $\{u,uv_{1}\}\in E_f$ such that $u\in U_f$, {\em partly intersects} $c$ if the following condition holds.
		There exist $\ell \in \mathbb{N}$ such that $uv_{1},\ldots,uv_{\ell}\in V(d_{f_1},\gi(c)\setminus\gi(f))$, and for every $1\leq t\leq \ell$, exactly one of Conditions \ref{infcutedgeCon7151}-\ref{infcutedgeCon7153} and exactly one of Conditions \ref{infcutedgeCon7155}-\ref{infcutedgeCon7157} are satisfied:
		\begin{enumerate}
			\item $\{uv_t,uv_{t+1}\}\in E_{f_1}$. \label{infcutedgeCon4ba1} 
			
			\item $\{uv_t,uv_{t+1}\}\in E_{f_1}$. \label{infcutedgeCon4ba2} 
			
			\item $\{uv_{t},uv_{t+1}\}\in E(d_{f_1}^*,\pp(c))$. \label{infcutedgeCon4ba3} 
			
			\item $u\in U_{f_1}$ and $\{u,uv_1\}\in E_{f_1}$. \label{infcutedgeCon4ba} 
			
			\item $u\in U_{f_2}$ and $\{u,uv_1\}\in E_{f_2}$. \label{infcutedgeCon4bb} 
			
			\item $u\in V(d_{f_1},\gi(c))$ and $\{u,uv_1\}\in E_{f_1}$. \label{infcutedgeCon4bc} 
			
			\item $u\in V(d_{f_1},\gi(c))$ and$\{u,uv_1\}\in E_{f_2}$. \label{infcutedgeCon4bd} 
			
			\item $\{u,uv_{1}\}\in E(d_{f_1}^*,\pp(c))$. \label{infcutedgeCon4be} 
		\end{enumerate}
	\end{definition}

			We now present a similar case to Definitions \ref{def:infCTurPoints2b} and \ref{def:partIntCU}, where the edge we examine is $\{uv_{\mathsf{index}(u,v)},v\}$.

			\begin{definition}[{\bf Partition of $\{uv_{\mathsf{index}(u,v)},v\}\in E_f$ With One Endpoint on $f$}] \label{def:infCTurPoints2c}
				Let $F=(f,d_f,E_f,U_f,\mathsf{V^*Dir}_f)$ be an info-frame. Let $C=(c,F_1=(f_1(c),d_{f_1},E_{f_1},U_{f_1},\mathsf{V^*Dir}_{f_1}),F_2=(f_2(c),d_{f_2},E_{f_2},U_{f_2},\mathsf{V^*Dir}_{f_2}))$ exhibits {\bf Equality of Common Parts} with respect to $F$. Let $d_f^*,d_{f_1}^*,d_{f_2}^*$ be the result of the application of $\mathsf{MakeAllVerSet}(d_f,d_{f_1},d_{f_2},\mathsf{Common} (d_f,d_{f_1}),$\\ $\mathsf{Common}(d_f,d_{f_2}),\mathsf{Common}(d_{f_1},d_{f_2}))$ (see Definition \ref{def:OpMakeVerFAS}). Then, $C$ exhibits {\em \bf Partition of}\\ $\{uv_{\mathsf{index}(u,v)},$ $v\}\in E_f$ {\em \bf With One Endpoint on $f$} with respect to $F$ if for every $\{uv_{\mathsf{index}(u,v)},$ $v\}\in E_f$ such that $v\in U_f$, exactly one of the following four conditions holds: 
				\begin{enumerate}
					\item $\{\mathsf{Identify}_{d_f,d_{f_1}}(uv_{\mathsf{index}(u,v)}),v\}\in E_{f_1}$. \label{infcutedgeCon8712}
					\item $\{\mathsf{Identify}_{d_f,d_{f_2}}(uv_{\mathsf{index}(u,v)}),v\}\in E_{f_2}$. \label{infcutedgeCon8713}
					\item $\{\mathsf{Identify}_{d_f,d_{f_1}^*}(uv_{\mathsf{index}(u,v)}),v)\}\in E(d_{f_1}^*,\pp(c))$. \label{infcutedgeCon8714}
					\item $\{uv_{\mathsf{index}(u,v)},v\}$ partly intersects $c$ (see Definition \ref{def:partIntV}).
				\end{enumerate}
			\end{definition}
			
			\begin{definition}[{\bf Partial Intersection of $\{uv_{\mathsf{index}(u,v)},v\}\in E_f$ and $c$}] \label{def:partIntV}
				Let $F=(f,d_f,E_f,$ $U_f,\mathsf{V^*Dir}_f)$ be an info-frame. Let $C=(c,F_1=(f_1(c),d_{f_1},E_{f_1},U_{f_1},\mathsf{V^*Dir}_{f_1}),F_2=(f_2(c),d_{f_2},$ $E_{f_2},U_{f_2},\mathsf{V^*Dir}_{f_2}))$ exhibits {\bf Equality of Common Parts} with respect to $F$. Let $d_f^*,d_{f_1}^*,d_{f_2}^*$ be the result of the application of $\mathsf{MakeAllVerSet}(d_f,d_{f_1},d_{f_2},$ $\mathsf{Common}(d_f,d_{f_1}),$ $mathsf{Common}(d_f,d_{f_2}),\mathsf{Common}($ $d_{f_1},d_{f_2}))$ (see Definition \ref{def:OpMakeVerFAS}). Let $\{uv_{\mathsf{index}(u,v)},v\}\in E_f$ such that $v\in U_f$. Then, $\{uv_{\mathsf{index}(u,v)},v\}$ {\em partly intersects} $c$ if the following condition holds.
				There exist $\ell \in \mathbb{N}$ and $r\in[2]$ such that $\mathsf{Identify}_{d_f,d^*_{f_r}}(uv_{\mathsf{index}(u,v)})=uv_j, uv_{j+1},\ldots,uv_{j+\ell}\in V(d_{f_r},\gi(c)\setminus\gi(f))$, and for every $0\leq t<\ell$, exactly one of Conditions \ref{infcutedgeCon1}-\ref{infcutedgeCon3} and exactly one of Conditions \ref{infcutedgeCon4}-\ref{infcutedgeCon8} are satisfied:
				\begin{enumerate}
					\item
					If $r=1$, then $\{uv_{j+t},uv_{j+t+1}\}\in E_{f_1}$; else, $r=2$, and then $\{\mathsf{Identify}_{d_{f_2},d_{f_1}}(uv_{j+t}),$ $\mathsf{Identify}_{d_{f_2},d_{f_1}}$ $(uv_{j+t+1})\}\in E_{f_1}$.\label{infcutedgeCon1}
					
					\item If $r=2$, then $\{uv_{j+t},uv_{j+t+1}\}\in E_{f_2}$; else, $r=1$, and then $\{\mathsf{Identify}_{d_{f_1},d_{f_2}}(uv_{j+t}),$ $\mathsf{Identify}_{d_{f_1},d_{f_2}}$ $(uv_{j+t+1})\}\in E_{f_2}$. \label{infcutedgeCon2} 
					
					\item $\{uv_{j+t},uv_{j+t+1}\}\in E(d_{f_r}^*,\pp(c))$. \label{infcutedgeCon3} 
					%		\end{enumerate} 
				
				\item $v\in U_{f_1}$. If $r=1$, then $\{uv_{j+\ell},v\}\in E_{f_1}$; else, $r=2$, and then  $\{\mathsf{Identify}_{d_{f_2},d_{f_1}}(uv_{j+\ell}),v\}$ $\in E_{f_1}$. \label{infcutedgeCon4} 
				
				\item $v\in U_{f_2}$.If $r=2$, then $\{uv_{j+\ell},v\}\in E_{f_2}$; else, $r=1$, and then $\{\mathsf{Identify}_{d_{f_1},d_{f_2}}(uv_{j+\ell}),v\}\in E_{f_2}$. \label{infcutedgeCon5} 
				
				\item $v\in V(d_{f_1},\gi(c))$. If $r=1$, then $\{uv_{j+\ell},v\}\in E_{f_1}$; else, $r=2$, and then $\{\mathsf{Identify}_{d_{f_2},d_{f_1}}(uv_{j+\ell}),v\}\in E_{f_1}$. \label{infcutedgeCon6} 
				
				\item $v\in V(d_{f_1},\gi(c))$. If $r=2$, then $\{uv_{j+\ell},v\}\in E_{f_2}$; else, $r=1$, and then $\{\mathsf{Identify}_{d_{f_1},d_{f_2}}(uv_{j+\ell}),v\}\in E_{f_2}$. \label{infcutedgeCon77} 
				
				\item $\{uv_{j+\ell},v\}\in E(d_{f_r}^*,\pp(c))$. \label{infcutedgeCon8} 
			\end{enumerate}
			
		\end{definition}

				We continue with the last property. Here, we consider the edges drawn strictly inside $f$. That is, edges of the form $\{u,v\}\in E$ where $u,v\in U_f$ and $V(d_f)\cap V^*_{\{u,v\}}=\emptyset$ (e.g., see the edge $\{x,w\}$ in Figure~\ref{fig:InfFDr}). Recall, that $u,v\in U_f$ means that $u$ and $v$ are drawn strictly inside $f$, and considering that also $V(d_f)\cap V^*_{\{u,v\}}=\emptyset$, we have that the edge $\{u,v\}\in E$ does not intersect $f$; therefore, the edge $\{u,v\}$ is drawn strictly inside $f$.
				So, let $\{u,v\}\in E$ be such an edge. We expect that exactly one of the following conditions is satisfied.
				
				\smallskip\noindent{\bf Condition \ref{infCutcon101}.} The edge $\{u,v\}$ is drawn strictly inside $f_1(c)$. Therefore, the vertices $u$ and $v$ are drawn strictly inside $f_1(c)$, so $u,v\in U_{f_1}$. In addition, in this case, since $\{u,v\}$ does not intersect $f_1(c)$, and, in particular, it does not intersect $c$, it follows that $V(d_{f_1},\gi(c))\cap V^*_{\{u,v\}}=\emptyset$. 
				
				\smallskip\noindent{\bf Condition \ref{infCutcon102}.} Similarly to the previous condition, the edge $\{u,v\}$ is drawn strictly inside $f_2(c)$. Therefore, the vertices $u$ and $v$ are drawn strictly inside $f_2(c)$, so $u,v\in U_{f_2}$. In addition, in this case, since $\{u,v\}$ does not intersect $f_2(c)$, and, in particular, it does not intersect $c$, it follows that $V(d_{f_2},\gi(c))\cap V^*_{\{u,v\}}=\emptyset$. 
				
				\smallskip\noindent{\bf Condition \ref{infCutcon103}.} The edge $\{u,v\}$ is drawn on $c$. Observe that in this case, there are no turning points of $\{u,v\}$ and $f_1(c)$ except at the endpoints, so $V(d_{f_1},\gi(c))\cap V^*_{\{u,v\}}=\emptyset$. 
				
				\smallskip\noindent{\bf Condition \ref{infCutcon104}.} 
				This condition is similar to Condition \ref{infcutedgeCon715} of Definition \ref{def:infCTurPoints2a}. Here, we consider the case where there is a turning point $(p,\{u,v\})$ for $p\in \gi(c)\setminus \gi(f)$, other than the endpoints of $\{u,v\}$, in both $f_1(c)$ and $f_2(c)$. We further discuss additional constraints satisfied in this case later.

				\begin{definition}[{\bf Partition of $\{u,v\}\in E$ Drawn Strictly Inside $f$}] \label{def:infCTurPoints3}
					Let $F=(f,d_f,E_f,U_f,$ $\mathsf{V^*Dir}_f)$ be an info-frame. Then, $C=(c,F_1=(f_1(c),d_{f_1},E_{f_1},U_{f_1},\mathsf{V^*Dir}_{f_1}),$ $F_2=(f_2(c),d_{f_2},$ $E_{f_2},$ $U_{f_2},\mathsf{V^*Dir}_{f_2}))$ exhibits {\em \bf Partition of $\{u,v\}\in E$ Drawn Strictly Inside $f$} with respect to $F$ if for every $u,v\in U_f$ such that $\{u,v\}\in E$ and $V(d_f)\cap V^*_{\{u,v\}}=\emptyset$, exactly one of the following four conditions holds: 
					\begin{enumerate}
						\item $V(d_{f_1},\gi(c))\cap V^*_{\{u,v\}}=\emptyset$ and $u,v\in U_{f_1}$. \label{infCutcon101}
						\item $V(d_{f_1},\gi(c))\cap V^*_{\{u,v\}}=\emptyset$ and $u,v\in U_{f_2}$. \label{infCutcon102}
						\item  $V(d_{f_1},\gi(c))\cap V^*_{\{u,v\}}=\emptyset$ and $\{u,v\}\in E(d_{f_1},\pp(c))$. \label{infCutcon103}
						\item $\{u,v\}$ partly intersects $c$ (see Definition \ref{def:partIntUV}). \label{infCutcon104}
					\end{enumerate}
				\end{definition}

				We now consider the case of Condition \ref{infCutcon104} of Definition \ref{def:infCTurPoints3}.  Observe that the previous three conditions of Definition \ref{def:infCTurPoints3} deal with the case where no vertices from the set $V^*_{\{u,v\}}$ are created. That is, the edge $\{u,v\}$ has no turning points in $f_1(c)$ or $f_2(c)$, except maybe at the endpoints. Condition \ref{infCutcon104} deals with the other case, where we have at least one such a turning point. Observe that since $V(d_f)\cap V^*_{\{u,v\}}=\emptyset$, the vertices from $V^*_{\{u,v\}}$ (that represent turning points of $\{u,v\}$ in $f_1(c)$ or $f_2(c)$) belong to $V(d_{f_1},\gi(c)\setminus \gi(f))$ (e.g., see the edge $\{x,w\}$ in Figure~\ref{fig:infCutter1}). Therefore, there exist $uv_1,\ldots uv_\ell \in V(d_{f_1},\gi(c)\setminus \gi(f))$ such that, for every $1\leq t<\ell$, exactly one of the following conditions is satisfied:

				\smallskip\noindent{\bf Condition \ref{infcutedgeCon51}.} The edge between the vertices $uv_j$ and $uv_{j+1}$ is drawn strictly inside $f_1(c)$, except at the endpoints, and therefore, $\{uv_j,uv_{j+1}\}\in E_{f_1}$.
				
				\smallskip\noindent{\bf Condition \ref{infcutedgeCon52}.} The edge between the vertices $uv_j$ and $uv_{j+1}$ is drawn strictly inside $f_2(c)$, except at the endpoints, and therefore, $\{uv_j,uv_{j+1}\}\in E_{f_2}$.
				
				\smallskip\noindent{\bf Condition \ref{infcutedgeCon53}.} The edge between the vertices $uv_j$ and $uv_{j+1}$ is drawn on $c$, so $\{uv_j,uv_{j+1}\}\in E(d_{f_1},\pp(c))$.

				\smallskip\noindent{\bf Conditions \ref{infcutedgeCon1041}-\ref{infcutedgeCon1045}.}
				These conditions deal with the part of the edge that appears first, that is, $\{u,uv_1\}$, and are similar to Conditions \ref{infcutedgeCon4ba}-\ref{infcutedgeCon4be} of Definition \ref{def:infCTurPoints2b}.
				
				\smallskip\noindent{\bf Conditions \ref{infcutedgeCon1051}-\ref{infcutedgeCon1055}.}
				These conditions deal with the part of the edge that appears last, that is, $\{uv_\ell,v\}$, and are similar to Conditions \ref{infcutedgeCon1041}-\ref{infcutedgeCon1045} of this definition.

				\begin{definition}[{\bf Partial Intersection of $\{u,v\}\in E$ and $c$}] \label{def:partIntUV}
					Let $F=(f,d_f,E_f,U_f,$ $\mathsf{V^*Dir}_f)$ be an info-frame. Let $C=(c,F_1=(f_1(c),d_{f_1},E_{f_1},U_{f_1},\mathsf{V^*Dir}_{f_1}),F_2=(f_2(c),d_{f_2},$ $E_{f_2},U_{f_2},\mathsf{V^*Dir}_{f_2}))$. $u,v\in U_f$ such that $\{u,v\}\in E$ and $V(d_f)\cap V^*_{\{u,v\}}=\emptyset$. Then, $\{u,v\}$ {\em partly intersects} $c$ if the following condition holds.
					There exist $\ell \in \mathbb{N}$ such that $uv_{1},\ldots,uv_{\ell}\in V(d_{f_1},\gi(c)\setminus\gi(f))$, and for every $1\leq t<\ell$, exactly one of Conditions \ref{infcutedgeCon51}-\ref{infcutedgeCon53}, exactly one of conditions \ref{infcutedgeCon1041}-\ref{infcutedgeCon1045} and exactly one of Conditions \ref{infcutedgeCon1051}-\ref{infcutedgeCon1055} are satisfied:
					\begin{enumerate}
						\item $\{uv_t,uv_{t+1}\}\in E_{f_1}$. \label{infcutedgeCon51}
						\item $\{uv_t,uv_{t+1}\}\in E_{f_2}$. \label{infcutedgeCon52} 
						\item $\{uv_t,uv_{t+1}\}\in E(d_{f_1},\pp(c))$. \label{infcutedgeCon53} 
						
						\item $u\in U_{f_1}$ and $\{u,uv_{1}\}\in E_{f_1}$.  \label{infcutedgeCon1041}
						\item $u\in U_{f_2}$ and $\{u,uv_{1}\}\in E_{f_2}$. \label{infcutedgeCon1042}
						\item $u\in V(d_{f_1},\gi(c))$ and $\{u,uv_{1}\}\in E(d_{f_1},\pp(c))$. \label{infcutedgeCon1043}
						\item $u\in V(d_{f_1},\gi(c))$ and $\{u,uv_{1}\}\in E_{f_1}$. \label{infcutedgeCon1044}
						\item $u\in V(d_{f_1},\gi(c))$ and $\{u,uv_{1}\}\in E_{f_2}$. \label{infcutedgeCon1045}
						
						\item $v\in U_{f_1}$ and $\{uv_\ell,v\}\in E_{f_1}$. \label{infcutedgeCon1051}
						\item $v\in U_{f_2}$ and $\{uv_\ell,v\}\in E_{f_2}$. \label{infcutedgeCon1052}
						\item $v\in V(d_{f_1},\gi(c))$ and $\{uv_\ell,v\}\in E(d_{f_1},\pp(c))$.\label{infcutedgeCon1053}
						\item $v\in V(d_{f_1},\gi(c))$ and $\{uv_\ell,v\}\in E_{f_1}$. \label{infcutedgeCon1054}
						\item $v\in V(d_{f_1},\gi(c))$ and $\{uv_\ell,v\}\in E_{f_2}$. \label{infcutedgeCon1055}
					\end{enumerate}
					
				\end{definition}

						Now, we are ready to define the term {\em info-cutter} of an info frame. A triple $C=(c,F_1=(f_1(c),d_{f_1},E_{f_1},U_{f_1},\mathsf{V^*Dir}_{f_1}),F_2=(f_2(c),d_{f_2},E_{f_2},U_{f_2},\mathsf{V^*Dir}_{f_2}))$ is an info-cutter of an info-frame $F=(f,d_f,E_f,U_f,\mathsf{V^*Dir}_f)$ if $C$ exhibits all the properties defined in this subsection with respect to $F$. Formally, we have the following definition.
						
						\begin{definition}[{\bf Info-Cutter}] \label{def:infCt}
							Let $F=(f,d_f,E_f,U_f,\mathsf{V^*Dir}_f)$ be an info-frame. Then $C=(c,F_1=(f_1(c),d_{f_1},E_{f_1},U_{f_1},\mathsf{V^*Dir}_{f_1}),F_2=(f_2(c),d_{f_2},E_{f_2},U_{f_2},\mathsf{V^*Dir}_{f_2}))$ is an {\em info-cutter} of $F$ if $C$ exhibits the following properties with respect to $F$:
							{\bf Info-Cutter Template},
								  {\bf $U_f$-Partitioned}, 
							{\bf Equality of Common Parts}, 
							{\bf Equality of $V^*$ Directions},
							{\bf Partition of $E_f$ With Both Endpoints on $f$}, 
								 {\bf Partition of $\{u,uv_1\}\in E_f$ With One Endpoint on $f$}, 
							 {\bf Partition of $\{uv_{\mathsf{index}(u,v)},v\}\in E_f$ With One Endpoint on $f$} and 
								{\bf Partition of $\{u,v\}\in E$ Drawn Strictly Inside $f$}.
						\end{definition}
						
						%For an info-cutter $C=(c,F_1,F_2)$ of $F$, we denote by $F_1(C)$ and $F_2(C)$ the info-frames $F_1$ and  respectivelly

%!TEX root =Main-Movement.tex

\subsection{The function $\mathsf{Splitter}$}\label{sec:splitter}

In this subsection, we introduce the $\mathsf{Splitter}$ function. Let $F$ be an info-frame, let $d$ be a drawing of $F$ and let $c$ be a cutter of $f$. The function $\mathsf{Splitter}$ on input $(F,d,c)$ returns a triple $(C,d_1,d_2)$ where $C=(c,F_1,F_2)$ is an info-cutter of $F$, $d_1$ is a drawing of $F_1$ and $d_2$ is a drawing of $F_2$ (e.g., see Figures~\ref{fig:splitter4} and~\ref{fig:splitter5}). We denote by $\mathsf{Splitter}(F,d,c)_{C}$, $\mathsf{Splitter}(F,d,c)_{d_1}$ and $\mathsf{Splitter}(F,d,c)_{d_2}$, the output $C$, $d_1$ and $d_2$, respectively, of the $\mathsf{Splitter}$ function, on input $(F,d,c)$. Later, we show that when we ``glue'' $d_1$ and $d_2$ with a function called $\mathsf{Glue}$, we reconstruct the drawing $d$.

\begin{algorithm}[!t]
	\SetKwInOut{Input}{Input}
	\SetKwInOut{Output}{Output}
	\medskip
	{\textbf{function} $\mathsf{SplitterPart}1$}$(\langle F,d,c \rangle)$\;
	$d_1\gets d$\;
	\For{every $\{u,v\}\in E$	\label{Alg:SpiltPartMakeTurn1}}
	{
		
		\For{every $0\leq i\leq  \mathsf{index}(u,v)$ such that $\{uv_i,uv_{i+1}\}\in E(d)$ }
		{
				\For {every turning point $(p,\{u,v\})$ in $f_1(c)$ such that $p\in d(\{uv_i,uv_{i+1}\})\setminus \{d(u_i),d(uv_{i+1})\}$}
			{
				$\mathsf{MakeVer}(d_1,\{u,v\},p)$\;
			}
		}
	}\label{Alg:SpiltPartMakeTurn2}
	Delete every vertex drawn strictly outside $f_1(c)$ in $d_1$\;\label{algo1line39}
	Delete every edge drawn strictly outside $f_1(c)$ in $d_1$\;
	Delete every edge drawn strictly outside $f_1(c)$ except for at least one of its endpoints, which is drawn on $f_1(c)$, in $d_1$\; 	 \label{algo1line392}
	
		\For{every $\{u,v\}\in E$\label{Alg:FixingTurn1}}
		{
	$a_1\gets 1$, $i\gets 1$\;
	
			\While{$i\leq  \mathsf{index}(u,v)$}
			{
				\If{$(d_1(uv_i),\{u,v\})$ is a turning point in $f_1(c)$}
				{
					Rename $uv_i$ to $uv_{a_1}$ in $d_1$\;
					$a_1\gets a_1+1$,
					$i\gets i+1$;
				}
			}
			\Else{$\mathsf{DeleteVer}(d_1,uv_i)$ \label{Alg:FixingTurnDelete}\;} 
			}\label{Alg:FixingTurn2}
		$F_1\gets \mathsf{IndInfFra}(d_1,f_1(c))$\; \label{algoIndFram}
	\Return$(d_1, F_1)$\; \label{algo1line40}
	\caption{$\mathsf{SplitterPart}1$}
	\label{alg:SplitterPart}
\end{algorithm}

\begin{algorithm}[!t]
	\SetKwInOut{Input}{Input}
	\SetKwInOut{Output}{Output}
	\medskip
	{\textbf{function} $\mathsf{Splitter}$}$(\langle F,d,c \rangle)$\;
	$(d_1,F_1)\gets \mathsf{SplitterPart}1(F,d,c)$\;
	$(d_2,F_2)\gets \mathsf{SplitterPart}2(F,d,c)$\;
	$C\gets (c,F_1,F_2)$\; 
	\Return$(C, d_1, d_2)$\;
	
	\caption{$\mathsf{Splitter}$}
	\label{alg:splitter}
\end{algorithm}

\smallskip\noindent{\bf The Function $\mathsf{SplitterPart}1$.} In Algorithm \ref{alg:splitter}, we define the function $\mathsf{Splitter}$ and its output $C=(c,F_1,F_2)$, $d_1$ and $d_2$.
We show that $C$ is an info-cutter of $F$, $d_1$ is a drawing of $F_1$ and $d_2$ is a drawing of $F_2$.
For this purpose, we define another function, $\mathsf{SplitterPart}1(F,d,c)$ in Algorithm \ref{alg:SplitterPart} (used by the function $\mathsf{Splitter}$) that returns $(d'_1, F(d'_1))$, such that $F(d'_1)$ is an info-frame, and $d'_1$ is a drawing of $F(d'_1)$. The function $\mathsf{Splitter}$ also uses the function $\mathsf{SplitterPart}2$, which is identical to $\mathsf{SplitterPart}1$ up to the obvious changes. Later, we will explain the steps of Algorithm \ref{alg:SplitterPart}. First, we define the two functions used by $\mathsf{SplitterPart}1$.

The first function we define, called $\mathsf{DeleteVer}$, is the ``reverse'' operation of $\mathsf{MakeVer}$ (see Definition \ref{def:OpMakeVer} in Section \ref{sec:infoCut}). Intuitively, this function delete an ``unnecessary'' vertex from $V^*$ in a $G^*$-drawing $d$. That is, a vertex $uv_i\in V(d)\cap V^*$ such that $(d(uv_i),\{u,v\})$ is not a turning point in $f_1(c)$ (e.g., see the vertex $st_3$ in Figure~\ref{fig:splitter3}, which is deleted in Figure~\ref{fig:splitter4}). Observe that in this case, $\{uv_{i-1},uv_i\},\{uv_i,uv_{i+1}\}\in E(d)$. We now define formally the function $\mathsf{DeleteVer}$, whose input is a $G^*$-drawing $d$ and a vertex $uv_i\in V(d)\cap V^*$ such that $\{uv_{i-1},uv_i\},\{uv_i,uv_{i+1}\}\in E(d)$, and delete the vertex from $d$. 
For two sequences of points $P=(p_1,\ldots,p_\ell)$, and $Q=(q_1,\ldots,q_k)$, where $p_\ell=q_1$, we denote by $P_1\cdot P_2$ the sequence of points $(p_1,\ldots,p_\ell,q_2,\ldots,q_k)$.

\begin{definition}[{\bf $\mathsf{DeleteVer}$}]\label{def:OpDeleteVer}
	Let $d$ be a $G^*$-drawing and let $uv_i\in V(d)\cap V^*$ such that $\{uv_{i-1},uv_i\},\{uv_i,uv_{i+1}\}\in E(d)$. Then, $\mathsf{DeleteVer}(d,uv_i)$ performs the following steps on $d$:
	%	 let $\{u,v\}\in E$, and let $p\in \gi(d)$ such that there exists $i\in \mathbb{N}$ for which $p\in \gi(d(\{uv_i,uv_{i+1}\}))$. Then, $\mathsf{MakeVer}(d,\{u,v\},p)$ performs the following steps on $d$:
	\begin{itemize}
		\item Add the edge $\{uv_{i-1},uv_{i+1}\}$ to $E(d)$, and update $d(\{uv_{i-1},uv_{i+1}\})=d(\{uv_{i-1},uv_{i}\})\cdot d(\{uv_i,uv_{i+1}\})$.
		\item  Delete the edges $\{uv_{i-1},uv_{i}\}$ and $\{uv_{i},uv_{i+1}\}$ from $E(d)$.
		\item For every $i<\ell\leq \mathsf{index}(u,v)$, rename $uv_\ell$ to $uv_{\ell-1}$. 
	\end{itemize}
\end{definition}

\begin{figure}[!t]
	\centering
	\begin{subfigure}{0.48\textwidth}
		\includegraphics[width = \textwidth, page = 58]{figures/drawnTreewidth}
		\subcaption{}
		\label{fig:splitter1}
	\end{subfigure}
	\hfil
	\begin{subfigure}{0.48\textwidth}
		\includegraphics[width = \textwidth, page = 59]{figures/drawnTreewidth}
		\subcaption{}
		\label{fig:splitter2}
	\end{subfigure}
	\hfil
	\begin{subfigure}{0.48\textwidth}
		\includegraphics[width = \textwidth, page = 60]{figures/drawnTreewidth}
		\subcaption{}
		\label{fig:splitter3}
	\end{subfigure}
	\hfil
	\begin{subfigure}{0.48\textwidth}
		\includegraphics[width = \textwidth, page = 61]{figures/drawnTreewidth}
		\subcaption{}
		\label{fig:splitter4}
	\end{subfigure}
		\hfil
		\begin{subfigure}{0.48\textwidth}
			\includegraphics[width = \textwidth, page = 62]{figures/drawnTreewidth}
			\subcaption{}
			\label{fig:splitter5}
		\end{subfigure}
	\caption{Example of the $\mathsf{Splitter}$ function. The vertices mapped to points in $\gis(f) \setminus \mathsf{GridPointSet}(f)$, $\gis(f_1) \setminus \mathsf{GridPointSet}(f_1)$ and $\gis(f_2) \setminus \mathsf{GridPointSet}(f_2)$ are denoted by hollow squares. (a) A drawing $d$ of the info-frame $F$ described in Figure~\ref{fig:InfF}. A cutter $c$ of $f$ is colored green. (b) Adding turning points (colored red) by the function. (c) Deleting vertices outside $f_1(c)$ and $f_2(c)$ by the function. (d) Deleting unnecessary vertices and renaming of the remaining vertices (colored blue) by the function. (e) The info-frames for $f_1(c)$ and $f_2(c)$ returned by the function. Directions are shown by thick pink lines.}
	\label{fig:splitter}
\end{figure}

Next, we define the term {\em info-frame induced} by a $G^*$-drawing. Intuitively, given a $G^*$-drawing $d$ and a frame $f$, we would like to compute the info-frame $F=(f,d_f,E_f,U_f,\mathsf{V^*Dir}_f)$ such that $d$ is a drawing of $F$ (e.g., see Figures~\ref{fig:splitter4} and~\ref{fig:splitter5}). Observe that not for every $G^*$-drawing $d$ and a frame $f$ there exists such an info-frame, but if there exists, then it is unique. If $d$ is not bounded by $f$, or if there is a vertex $uv_i\in V(d)\cap V^*$ drawn strictly inside $f$, then there is no aforementioned info-frame. In addition, recall that we demand $d=^{\pp(f)}d_f$ (see Definition \ref{def:DrawindEq} in Section \ref{sec:infoFra}). Now, observe that for every drawing $d_f$ on $f$, and for every turning point $(p,\{u,v\})$ in $f$ in $d_f$, there exists $uv_i\in V(d_f)\cap V^*_{\{u,v\}}$ such that $d_f(uv_i)=p$. So, since $d=^{\pp(f)}d_f$, we get that $d(uv_i)=p$. We conclude that for every turning point $(p,\{u,v\})$ in $f$ in $d$, there exists $uv_i\in V(d)\cap V^*_{\{u,v\}}$ such that $d(uv_i)=p$. Now, recall that for every $uv_i\in V(d_{f_1})\cap V^*$ such that $d_{f_1}(uv_i)\in \gis(\fin)\setminus \mathsf{GridPointSet}(f_{\mathsf{init}})$ there are no edges on $f$ attached to it, and $uv_i$ has exactly one neighbor.
Formally, we define the term  info-frame induced by a $G^*$-drawing as follows.

\begin{definition}[{\bf Frame Induced By a $G^*$-Drawing}]\label{def:inducedInfoFrame}
Let $d$ be a $G^*$-drawing, and let $f$ be a frame such that the following conditions are satisfied:
\begin{enumerate}
	\item $d$ is bounded by $f$.
	\item For every turning point $(p,\{u,v\})$ in $f$ in $d$, there exists $uv_i\in V(d)\cap V^*_{\{u,v\}}$ such that $d(uv_i)=p$.
	\item For every $uv_i\in V(d)\cap V^*$, $d(uv_i)\in \gi(f)$.
	\item For every $uv_i\in V(d_{f_1})\cap V^*$ such that $d_{f_1}(uv_i)\in \gis(\fin)\setminus \mathsf{GridPointSet}($ $f_{\mathsf{init}})$ there exists exactly one $z\in \{uv_{i-1},uv_{i+1}\}$ such that $\{z,uv_i\}\in E(d)$. In addition, $\pp$ $(d(\{z,uv_i\}))\cap \pp(f)\subseteq \{d(uv_i),d(z)\}$.
\end{enumerate} 
Then, the {\em info-frame induced by $d$ and $f$} is $\mathsf{IndInfFra}(d,f)=(f,d_f,E_f,U_f,\mathsf{V^*Dir}_f)$, where:
\begin{enumerate}
	\item $d_f$ is defined as follows: $V(d_f)=V(d,\gi(f)$, $E(d_f)=E(d,\pp($ $f))$. For every $u\in V(d_f)$, $d_f(u)=d(u)$, and for every $e\in E(d_f)$, $d_f(e)=d(e)$. 
	\item $E_f\subseteq E(d)$ is the set of edges drawn strictly inside $f$ in $d$ except for at least one of their endpoints, which is drawn on $f$.
	\item $U_f$ is the set of vertices drawn strictly inside $f$ in $d$.
	\item For every $uv_i\in V(d_{f_1})\cap V^*$ such that $d_{f_1}(uv_i)\in \gis(\fin)\setminus \mathsf{GridPointSet}($ $f_{\mathsf{init}})$, let $z\in \{uv_{i-1},uv_{i+1}\}$ such that $\{z,uv_i\}\in E(d)$. Let $E(d)=(d(uv_i),p_1,\ldots,d(z))$. Then, $\mathsf{V^*Dir}_f(uv_i)=p_1$.\label{def:inducedInfoFrameC4}
\end{enumerate}
\end{definition}

It is easy to see that $\mathsf{IndInfFra}(d,f)$ satisfies the conditions of Definition \ref{def:infFr3}. Therefore, $\mathsf{IndInfFra}(d,f)$ is an info-frame:

\begin{observation}\label{obs:induceInfo}
Let $d$ be a $G^*$-drawing, and let $f$ be a frame such that the conditions of Definition \ref{def:inducedInfoFrame} are satisfied. Then, $\mathsf{IndInfFra}(d,f)$ is an info-frame.
\end{observation}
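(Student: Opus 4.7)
The plan is to verify, one at a time, the three properties required by Definition~\ref{def:infFr3}: the \textbf{Info-Frame Template}, the \textbf{Edges and Vertices of an Info-Frame Property}, and the \textbf{Vertices on Grid Intersection Property}. Since $\mathsf{IndInfFra}(d,f)$ is defined explicitly in Definition~\ref{def:inducedInfoFrame} and we are given the four structural hypotheses on the pair $(d,f)$, each check reduces to a direct unpacking of definitions; nevertheless some of the conditions in Definition~\ref{definfFraCon3} involve a small case analysis on which endpoints of an edge lie in $U_f$, on $f$, or outside $f$, and that is where the main bookkeeping lies.

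First I would handle the \textbf{Info-Frame Template} (Definition~\ref{def:infFr}). Condition~\ref{definfFraCon1} is immediate since $f$ is a frame by hypothesis. For Condition~\ref{definfFraCon2}, the construction of $d_f$ restricts $d$ to the vertices in $\gi(f)$ and the edges on $\pp(f)$, so $d_f$ inherits being a $G^*$-drawing from $d$; the turning-point requirement for every $uv_i\in V(d_f)\cap V^*$ follows from hypothesis~(2) of Definition~\ref{def:inducedInfoFrame}, which forces every $V^*$-vertex of $d_f$ to sit exactly on a turning point of $f$. Condition~\ref{definfFraCon33} is forced by the definitions of $U_f$ (vertices of $V$ strictly inside $f$) and $E_f$ (edges with at least one endpoint on $f$ and the rest strictly inside). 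Condition~\ref{definfFraCon4} is where Clause~\ref{def:inducedInfoFrameC4} of Definition~\ref{def:inducedInfoFrame} is used: the point $p_1$ chosen as $\mathsf{V^*Dir}_f(uv_i)$ is, by construction, a grid-intersection point of $f$ distinct from $d_f(uv_i)$, and the segment $\ell(p_1,d_f(uv_i))$ is contained in $d(\{z,uv_i\})$, hence lies inside $f$.

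Next I would verify the \textbf{Edges and Vertices of an Info-Frame Property} (Definition~\ref{definfFraCon3}). Condition~\ref{definfFraCon3a} is immediate: $V(d_f)$ consists of vertices of $d$ drawn on $f$, whereas $U_f$ consists of vertices of $d$ drawn strictly inside $f$; similarly $E_f$ consists of edges with at least one endpoint on $f$, disjoint from $E(d_f)$ which consists of edges drawn entirely on $f$. For Conditions~\ref{definfFraCon3c}--\ref{definfFraCon3g}, I would fix an edge $\{u,v\}\in E$ and distinguish cases based on where its endpoints lie (strictly inside $f$, on $f$, or strictly outside), then use hypothesis~(2) of Definition~\ref{def:inducedInfoFrame} to detect exactly which vertices of $V^*_{\{u,v\}}$ belong to $V(d_f)$: each turning point of $\{u,v\}$ in $f$ corresponds to such a $V^*$-vertex, and their numbering in the $G^*$-drawing $d$ matches the order of appearance along $d(\{u,v\})$. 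From this description, membership of the relevant edges (such as $\{u,uv_1\}$ and $\{uv_{\mathsf{index}(u,v)},v\}$) in $E_f$ follows from the definition of $E_f$ as edges strictly inside $f$ except for an endpoint on $f$. Condition~\ref{definfFraCon3h} follows from the partition of edges of $d$ into those drawn on $f$ (belonging to $E(d_f)$) and those not (candidates for $E_f$).

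Finally, I would verify the \textbf{Vertices on Grid Intersection Property} (Definition~\ref{def:validDir}). This is exactly the content of hypothesis~(4) of Definition~\ref{def:inducedInfoFrame}: a vertex $uv_i\in V(d_f)\cap V^*$ drawn at a point of $\gis(\fin)\setminus\gps(\fin)$ has precisely one neighbour $z\in\{uv_{i-1},uv_{i+1}\}$ in $d$, and $d(\{z,uv_i\})$ meets $\pp(f)$ only at its endpoints. The first property rules out any incident $V^*$-edge lying on $f$ (giving Condition~\ref{def:validDircon1}) and forces exactly one incident edge in $E_f$ (giving Condition~\ref{def:validDircon2}). The main obstacle, as noted above, is the clean bookkeeping in the middle step: making sure that in every configuration of endpoints relative to $f$, the correct clause of Conditions~\ref{definfFraCon3c}--\ref{definfFraCon3g} is triggered; but since each clause is a direct geometric statement about $d$ that follows from the existence/nonexistence of turning points of $\{u,v\}$ in $f$, the verification is routine.
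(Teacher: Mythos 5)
Your proposal is correct and follows the same route the paper takes: the paper itself records this as an observation with no written proof beyond "it is easy to see that the conditions of Definition~\ref{def:infFr3} are satisfied," and your systematic unpacking of the Template, the Edges-and-Vertices property, and the Vertices-on-Grid-Intersection property against the four hypotheses of Definition~\ref{def:inducedInfoFrame} is exactly the intended direct verification. Your write-up is in fact more detailed than the paper's, and the mapping you give (hypothesis~(2) $\to$ the turning-point clause of the template, clause~\ref{def:inducedInfoFrameC4} $\to$ the direction condition, hypothesis~(4) $\to$ Definition~\ref{def:validDir}) is the right one.
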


Later, we show that $d$ is a drawing of $F$.
Now we are ready to describe the steps of Algorithm~\ref{alg:SplitterPart}.

	 \smallskip\noindent{\bf Lines \ref{Alg:SpiltPartMakeTurn1}-\ref{Alg:SpiltPartMakeTurn2}: Turning Points in $f_1(c)$ Turn to Vertices From $V^*$.} In the first part of the algorithm, we turn every turning point of every edge $\{u,v\}$ in $f_1(c)$ to a vertex from $V^*_{\{u,v\}}$. In particular, for every $\{u,v\}\in E$ we iterate over $0\leq i \leq \mathsf{index}(u,v)$ such that $\{uv_i,uv_{i+1}\}\in E(d)$. Then, for every turning point $p$ drawn on this edge, except at its endpoints, we add a vertex from the set $V^*_{\{u,v\}}$ by activating $\mathsf{MakeVer}(d_1,\{u,v\},p)$ (see Definition \ref{def:OpMakeVer} in Section \ref{sec:infoCut}) (e.g., see the red vertices $st_4$, $xw_1$, $uv_2$, $uv_3$ and $mn_2$ in Figure~\ref{fig:splitter2}). Recall that after activating $\mathsf{MakeVer}$ $d_1$ remains $G^*$-drawing, due to 
	 Observation \ref{obs:makever}. 
	 
	  \smallskip\noindent{\bf Lines \ref{algo1line39}-\ref{algo1line392}: Deleting Edges and Vertices Drawn Strictly Outside $f_1(c)$.} In these steps, we aim to obtain from $d_1$ exactly the part of $d_1$ that is bounded by $f_1(c)$. Observe that, after preforming Lines \ref{Alg:SpiltPartMakeTurn1}-\ref{Alg:SpiltPartMakeTurn2}, for each $e\in E(d_1)$, exactly one of the following conditions holds:
	  \begin{itemize}
	  	\item $e$ is drawn inside $f_1(c)$.
	  	\item $e$ is drawn strictly outside $f_1(c)$.
	  	\item $e$ is drawn strictly outside $f_1(c)$ except at least one of its endpoints, which is drawn on $f_1(c)$.
	  \end{itemize}
  
  We delete every edge that satisfies the second or third conditions above. In addition, we delete every vertex drawn strictly outside $f_1(c)$ in $d_1$  (e.g., see Figures~\ref{fig:splitter2} and ~\ref{fig:splitter3}).
  
	  \smallskip\noindent{\bf Lines \ref{Alg:FixingTurn1}-\ref{Alg:FixingTurn2}: Deleting Unnecessary Vertices From $V^*$ and Fixing Labelings.} Observe that at this stage of the algorithm, we have two issues with $d_1$: First, there might be some vertices from $V^*_{\{u,v\}}$ that are not drawn on a turning point of $\{u,v\}$ in $f_1(c)$. Second, since we deleted vertices drawn strictly outside $f_1(c)$, the labeling of the vertices in $V^*_{\{u,v\}}$ might not be continuous, for some $\{u,v\}\in E$. We fix these two issues simultaneously: We iterate over $V^*_{\{u,v\}}$, for each $\{u,v\}\in E$. If $uv_i$ is indeed drawn on a turning point of $\{u,v\}$ in $f_1(c)$ , we fix its labeling (e.g., see the blue vertices in Figure~\ref{fig:splitter4}); otherwise, we delete it, using $\mathsf{DeleteVer}$ defined in Definition \ref{def:OpDeleteVer} (e.g., see the vertex $st_3$ in Figure~\ref{fig:splitter3}, which is deleted in Figure~\ref{fig:splitter4}). Observe that every vertex in $V(d_1)\cap V^*$ at the beginning of the algorithm is drawn on $f$, since $d_1=d$ and $d$ is a drawing of $F$. After Lines  \ref{Alg:SpiltPartMakeTurn1}-\ref{Alg:SpiltPartMakeTurn2}, we might add vertices to $V(d_1)\cap V^*$ drawn on $f_1(c)$. So, after these lines, each vertex in $V(d_1)\cap V^*$ is drawn on $f_1(c)$ or on $f$. After Lines \ref{algo1line39}-\ref{algo1line392}, we delete every vertex drawn strictly outside $f_1(c)$. So, at this stage, every vertex in $V(d_1)\cap V^*$ is drawn on $f_1(c)$. Thus, if for some $uv_i\in V(d_1)\cap V^*$, $(d_1(uv_i),\{u,v\})$ is not a turning point in $f_1(c)$, then $\{uv_{i-1},uv_i\},\{uv_{i-1},uv_i\}\in E(d_1)\cap E^*$. Therefore, $\mathsf{DeleteVer}(d_1,uv_i)$ in Line \ref{Alg:FixingTurnDelete} is well defined. 
	 
	\smallskip\noindent{\bf Line \ref{algoIndFram}: Computing the Induced Frame of $d_1$ and $f_1(c)$.} Observe that at this stage of the algorithm we have the following conditions satisfied (where for the last condition, we present a proof below):
	\begin{enumerate}
		\item $d_1$ is a $G^*$-drawing (see Lemma \ref{lem:dIsGStar}) bounded by $f_1(c)$.
		\item Every turning point of every edge $\{u,v\}$ in $f_1(c)$ is a vertex from $V^*_{\{u,v\}}$.
		\item Every vertex form $V(d)\cap V^*$ is drawn on $f_1(c)$.
		\item  For every $uv_i\in V(d_1)\cap V^*$ such that $d_1(uv_i)\in \gis(\fin)\setminus \mathsf{GridPointSet}($ $f_{\mathsf{init}})$ there exists exactly one $z\in \{uv_{i-1},uv_{i+1}\}$ such that $\{z,uv_i\}\in E(d_1)$. In addition, $\pp(d_1(\{z,uv_i\}))\cap \pp(f_1(c))\subseteq \{d_1(uv_i),d_1(z)\}$ (see Lemma \ref{lemmaCon4}).\label{cond4}
	\end{enumerate}

We now show that Condition \ref{cond4} is satisfied. 

 \begin{lemma}\label{lemmaCon4}
	For every $uv_i\in V(d_1)\cap V^*$ such that $d_1(uv_i)\in \gis(\fin)\setminus \mathsf{GridPointSet}(f_{\mathsf{init}})$ there exists exactly one $z\in \{uv_{i-1},uv_{i+1}\}$ such that $\{z,uv_i\}\in E(d_1)$. In addition, $\pp(d_1(\{z,uv_i\}))\cap \pp(f_1(c))\subseteq \{d_1(uv_i),d_1(z)\}$.
\end{lemma}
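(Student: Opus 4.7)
\medskip\noindent\textbf{Proof Plan for Lemma~\ref{lemmaCon4}.} The plan is to trace the life of a vertex $uv_i$ through the algorithm and show that its position in $\gis(\fin) \setminus \mathsf{GridPointSet}(f_{\mathsf{init}})$ forces the drawing of $\{u,v\}$ in $d$ to cross $f_1(c)$ transversally at $d(uv_i)$. First, I would classify any $uv_i \in V(d_1) \cap V^*$ surviving to the end of $\mathsf{SplitterPart}1$ according to its origin: either (\emph{Case A}) $uv_i$ was already a vertex of $V(d) \cap V^*$, and since $d$ is a drawing of $F$ (Condition~\ref{infFramcon1} of Definition~\ref{def:infFrDr}), $d(uv_i)$ lies on $f$; or (\emph{Case B}) $uv_i$ was introduced by $\mathsf{MakeVer}$ in Lines~\ref{Alg:SpiltPartMakeTurn1}--\ref{Alg:SpiltPartMakeTurn2}, so $d_1(uv_i)$ is a turning point of $\{u,v\}$ in $f_1(c)$ that lay strictly in the interior of an edge of $d$. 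In both cases, after the loop of Lines~\ref{Alg:FixingTurn1}--\ref{Alg:FixingTurn2}, the surviving $uv_i$ satisfies that $(d_1(uv_i),\{u,v\})$ is a turning point in $f_1(c)$.

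\medskip\noindent Next, I would exploit the hypothesis $d_1(uv_i) \in \gis(\fin)\setminus\mathsf{GridPointSet}(f_{\mathsf{init}})$. Because no vertex of $f^{\mathsf{min}}$ or $f_1(c)^{\mathsf{min}}$ can sit at such a point (corners of frames lie on grid points), $d_1(uv_i)$ is interior to a maximal axis-parallel segment of $f_1(c)$, and the portion of $d(\{u,v\})$ passing through $d_1(uv_i)$ is a straight-line segment that crosses this segment of $f_1(c)$ transversally (it cannot locally coincide with the axis-parallel segment, else $d_1(uv_i)$ would be a grid point). In Case~A, the {\bf Vertices on Grid Intersection Property} of $F$ (Definition~\ref{def:validDir}) gives exactly one neighbor $z^\star \in \{uv_{i-1}, uv_{i+1}\}$ of $uv_i$ with $\{uv_i, z^\star\} \in E_f$, and no edges of $d_f$ incident to $uv_i$; since the edge $\{uv_i, z^\star\}$ goes strictly into the interior of $f$ and crosses $f_1(c)$ transversally at $d(uv_i)$, exactly one of its two local sides is interior to $f_1(c)$. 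In Case~B, the same transversality statement follows directly from $d_1(uv_i)$ being a non-grid turning point, and the vertex $uv_i$ was obtained by a single invocation of $\mathsf{MakeVer}$ which left it with precisely two neighbors $uv_{i-1}, uv_{i+1}$ lying on opposite sides of $f_1(c)$ locally.

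\medskip\noindent Having fixed the local geometry, I would then verify that exactly one of $\{uv_{i-1}, uv_i\}, \{uv_i, uv_{i+1}\}$ is in $E(d_1)$ after all the deletions. The side of the edge that lies locally inside $f_1(c)$ survives Line~\ref{algo1line39}--\ref{algo1line392} (the portion next to $uv_i$ is strictly inside $f_1(c)$, so the edge is not deleted), while the side lying locally outside is deleted, being strictly outside $f_1(c)$ except at its endpoint $d(uv_i)$. The renaming step in Lines~\ref{Alg:FixingTurn1}--\ref{Alg:FixingTurn2} preserves this adjacency (only indices change), which yields the first conclusion of the lemma.

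\medskip\noindent Finally, for the second claim, I would argue that for the surviving edge $\{z, uv_i\} \in E(d_1)$, every point of $\pp(d_1(\{z,uv_i\})) \cap \pp(f_1(c))$ other than the endpoints would produce an additional turning point $(p, \{u,v\})$ in $f_1(c)$ strictly interior to this edge, so that Lines~\ref{Alg:SpiltPartMakeTurn1}--\ref{Alg:SpiltPartMakeTurn2} would already have inserted a $V^*$-vertex at $p$ via $\mathsf{MakeVer}$, subdividing the edge and contradicting that $\{z,uv_i\}$ is a single edge of $E(d_1)$. The main obstacle in executing this plan is the bookkeeping around the renaming in Lines~\ref{Alg:FixingTurn1}--\ref{Alg:FixingTurn2}, together with the delicate case where the first segment of the edge emanating from $uv_i$ could lie along an axis-parallel direction matching $c$; I would rule this out using the fact that $d(uv_i)$ is not a grid point, so the straight-line segment of $d(\{u,v\})$ through $d_1(uv_i)$ cannot coincide with any axis-parallel segment of $f_1(c)$ in a neighborhood of $d_1(uv_i)$.
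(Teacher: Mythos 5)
Your proposal is correct and follows essentially the same route as the paper's proof: the same two-case split (a $V^*$-vertex of $d$ lying on $f$, handled via the \textbf{Vertices on Grid Intersection Property} of $F$, versus a vertex newly created at a turning point on $c$, handled by the geometric observation that a non-grid intersection point cannot be an endpoint of a segment of $c$ or of the edge's path, so exactly one local side lies inside $f_1(c)$), combined with tracking which side survives the deletion and renaming steps. The containment claim is likewise established in the paper by taking the nearest turning point as the neighbor $z$, matching your argument that any further intersection would have been subdivided by $\mathsf{MakeVer}$.
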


\begin{proof}
	Let $uv_i\in V(d_1)\cap V^*$ such that $d_{1}(uv_i)\in \gis(\fin)\setminus \mathsf{GridPointSet}($ $f_{\mathsf{init}})$. First, assume that $d_{1}(uv_i)$ is on $f$. Observe that $d_{1}(uv_i)$ is not on $c$, so $d_{1}(uv_i)$ is not on $f_2(c)$. So, if there exists an edge with $uv_i$ as one of its endpoints, on $f_1(c)$, then there exists one also on $f$, a contradiction to $F$ being an info-frame (see Condition \ref {def:validDircon1} of Definition \ref{def:validDir}). Now, let $u_j=\mathsf{Identify}_{d_{1},d}(uv_i)$. Since $F$ is an info-frame, for exactly one $z'$ among $uv_{j-1}$ and $uv_{j+1}$ we get that $\{uv_j,z'\}\in d$ (see Condition \ref {def:validDircon1} of Definition \ref{def:validDir}). Now, if there are no turning points of $\{uv_j,z'\}$ in $f$ in $d$, then $z=\mathsf{Identify}_{d_{1},d_{f_2}}(z')$ is the only neighbor of $uv_i$ in $d_1$, and $z\in \{uv_i,uv_{i+1}\}$ since $d_1$ is a $G^*$-drawing. Otherwise, there exists at least one turning point of $\{uv_j,z'\}$ in $f$ in $d$. Let $p$ be the closest point to $uv_j$ in the path $d(\{uv_j,z'\})$ such that $(p,\{uv_j,z'\})$ is a turning point in $f_1(c)$ in $d$. So, there exists $z\in V(d_1)\cap U^*_{\{u,v\}}$ such that $d_1(z)=p$, $\{uv_i,z\}\in E(d_1)$ and $z\in \{uv_i,uv_{i+1}\}$, since $d_1$ is a $G^*$-drawing. Observe that $z$ is the only neighbor of $uv_i$ in $d_1$. In addition, observe that $\{uv_i,z\}$ is drawn strictly inside $f_1(c)$ except for its endpoints, which are drawn inside $f$, so  $\pp(d_1(\{z,uv_i\}))\cap \pp(f_1(c))\subseteq \{d_1(uv_i),d_1(z)\}$.
	
	Second, assume that $d_{f_1}(uv_i)$ is not on $f$. Then, it belongs to $\gi(c)\setminus \gi(f)$. So, there exists $\{uv_j,uv_{j+1}\}\in E(d)$ such that\\ $d_{f_1}(uv_i)\in \gi$ $(d(\{uv_j,uv_{j+1}\}))$ and $(d_{f_1}(uv_i),\{u,v\})$ is a turning point in $f_1(c)$ in $d$. Let $c=(c_1,\ldots,c_t)$ and let $d(\{uv_j,uv_{j+1}\})=(p_1,\ldots p_q)$. Let $a,b\in \mathbb{N}$ such that $d_{f_1}(uv_i)$ is on $\ell(c_a,c_{a+1})$ and on $\ell(p_b,p_{b+1})$. Since $d_{f_1}(uv_i)\in \gis(\fin)\setminus \mathsf{GridPointSet}(f_{\mathsf{init}})$, then $d_{f_1}(uv_i)\neq c_a,c_{a+1},p_b,p_{b+1}$. So, exactly one among $\ell(p_b,d_{f_1}(uv_i))$ and $\ell(d_{f_1}(uv_i),p_{b+1})$ is inside $f_1(c)$. Thus, $uv_i$ has exactly one neighbor in $d_1$, and no edges on $f_1(c)$ are attached to it. So, we get that there exists exactly one $z\in \{uv_{i-1},uv_{i+1}\}$ such that $\{z,uv_i\}\in E(d_1)$, and $\pp(d_1(\{z,uv_i\}))\cap \pp(f_1(c))\subseteq \{d_1(uv_i),d_1(z)\}$. This completes the proof.
\end{proof}

	 So, the conditions of Definition \ref{def:inducedInfoFrame} are satisfied, and hence $\mathsf{IndInfFra}(d_1,f_1(c))$ is well defined. Moreover, by Observation \ref{obs:induceInfo}, $\mathsf{IndInfFra}(d_1,f_1(c))$ is an info-frame (e.g., see Figure~\ref{fig:splitter5}).
	 
	The function $\mathsf{SplitterPart}_1(F,d,c)$ returns $d_1$ and $\mathsf{IndInfFra}(d_1,f_1(c))$.
	 
	 We now prove that $d_1$ returned by Algorithm \ref{alg:SplitterPart}, is indeed a drawing of $\mathsf{IndInfFra}(d_1,f_1(c))$. To this end, we first prove that $d_1$ is a $G^*$-drawing:

	 \begin{lemma} \label{lem:dIsGStar}
	 	Let $F=(f,d_f,E_f,U_f,\mathsf{V^*Dir}_f)$ be an info-frame, let $d$ be a drawing of $F$ and let $c$ be a cutter of $f$. Let $d_1$ and $F_1$ be the output of $\mathsf{SplitterPart}_1(F,d,c)$. Then, $d_1$ is a $G^*$-drawing.
	 \end{lemma}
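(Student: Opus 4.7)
The plan is to track the $G^*$-drawing property of $d_1$ through the four phases of Algorithm~\ref{alg:SplitterPart}. Since $d$ is a drawing of $F$, by Definition~\ref{def:infFrDr} it is a $G^*$-drawing, so the initialization $d_1 \gets d$ starts us in a valid state. In the first phase (Lines \ref{Alg:SpiltPartMakeTurn1}--\ref{Alg:SpiltPartMakeTurn2}), the algorithm repeatedly invokes $\mathsf{MakeVer}(d_1,\{u,v\},p)$; by Observation~\ref{obs:makever} each call preserves the $G^*$-drawing property, so a straightforward induction yields that $d_1$ is still a $G^*$-drawing after Phase~1. Moreover, after this phase every turning point of every edge $\{u,v\}$ in $f_1(c)$ is realized as a vertex of $V^*_{\{u,v\}}\cap V(d_1)$.

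In the second phase (Lines \ref{algo1line39}--\ref{algo1line392}) we delete the vertices and edges drawn strictly outside $f_1(c)$, plus those edges drawn strictly outside $f_1(c)$ except at one endpoint on $f_1(c)$. I will argue that Conditions \ref{G*drawcon2}--\ref{G*drawcon9} of Definition~\ref{def:gstdr} (the geometric conditions) are immediately inherited, since deletion cannot create new coincidences or intersections and we do not move any image. Condition~\ref{G*drawcon10} (validity of $(V(d_1),E(d_1))$) may temporarily fail because $V^*_{\{u,v\}}\cap V(d_1)$ may no longer be a consecutive prefix $\{uv_1,\ldots,uv_{\mathsf{index}(u,v)}\}$; this is exactly what Phase~3 repairs. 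I also note that after Phase~2 every remaining vertex of $V^*$ is drawn on $f_1(c)$ (vertices in the strict interior cannot arise, since $d$ is a drawing of $F$ and $\mathsf{MakeVer}$ in Phase~1 only adds vertices on $f_1(c)$).

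In the third phase (Lines \ref{Alg:FixingTurn1}--\ref{Alg:FixingTurn2}) we iterate through $V^*_{\{u,v\}}\cap V(d_1)$ for each $\{u,v\}\in E$, keeping only those $uv_i$ whose image is a turning point of $\{u,v\}$ in $f_1(c)$ and renaming them with a fresh consecutive index $uv_{a_1}$. The key sub-claim here is that every call $\mathsf{DeleteVer}(d_1,uv_i)$ at Line~\ref{Alg:FixingTurnDelete} is well defined, i.e.\ both $\{uv_{i-1},uv_i\}$ and $\{uv_i,uv_{i+1}\}$ lie in $E(d_1)$. I would verify this by a short case analysis: since $d_1(uv_i)$ lies on $f_1(c)$ but $(d_1(uv_i),\{u,v\})$ is not a turning point in $f_1(c)$, Definitions~\ref{def:IntPointG*1}--\ref{def:IntPointG*3} force $d_1(uv_i)$ to be an interior point of the portion of $d_1(\{u,v\})$ lying inside $f_1(c)$, so $uv_i$ must have both a predecessor and a successor in $V^*_{\{u,v\}}\cap V(d_1)$ connected to it by edges of $E^*$. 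A short check analogous to Observation~\ref{obs:makever} shows that $\mathsf{DeleteVer}$ preserves each $G^*$-drawing condition (it is the geometric inverse of $\mathsf{MakeVer}$: the two edges incident to the removed vertex are merged into a single straight-line path, and no new collisions are introduced). Finally, the sequential renaming turns $V^*_{\{u,v\}}\cap V(d_1)$ into exactly $\{uv_1,\ldots,uv_{a_1-1}\}$, restoring Condition~\ref{con:ValidPair2} of Definition~\ref{def:ValidPair} and hence validity.

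The main obstacle I expect is the case analysis in Phase~3 that justifies applying $\mathsf{DeleteVer}$, since one must reconcile the local geometry of the drawing at $d_1(uv_i)$ with the three flavors of turning points (left, right, middle). Once that is settled, combining Observation~\ref{obs:makever} for Phase~1, the purely combinatorial deletion argument for Phase~2, and the $\mathsf{DeleteVer}$/renaming argument for Phase~3, we conclude that the returned $d_1$ satisfies all conditions of Definition~\ref{def:gstdr} and is therefore a $G^*$-drawing.
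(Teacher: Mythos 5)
Your proposal is correct and follows essentially the same route as the paper's proof: the paper likewise invokes Observation~\ref{obs:makever} for the first phase, justifies the well-definedness of $\mathsf{DeleteVer}$ by the same turning-point case analysis (placed in the surrounding text rather than inside the lemma's proof), and restores the consecutive labeling of $V^*_{\{u,v\}}$ via the renaming loop. The only difference is organizational: the paper verifies the conditions of Definition~\ref{def:gstdr} one by one on the final drawing (with a slightly more explicit geometric argument that the new $V^*$ vertices land in $\gis(\fin)$ as intersections of line segments through grid points), whereas you track preservation phase by phase.
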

  \begin{proof}
 %In Observation \ref{obs:induceInfo} we saw that $F_1$ is an info-frame. 
 %We show that $d_1$ is a drawing of $F_1$. 
 %Observe that, every turning point $((r,k),\{uv_i,uv_{i+1}\})$ of $f_1(c)$ turns into a $V^*$ vertex during the course of the algorithm, therefore, there is no edge that is drawn into points on $f_1(c)$ and also into points strictly outside $f_1(c)$ (see Figure~\ref{fig:InfSplitt}). Thus, the operation of deleting edges that are drawn strictly outside $f_1(c)$ (maybe except at the endpoints), in Line \ref{algo1line39}, is well defined. 
 We prove that $d_1$ is a $G^*$-drawing by showing that the conditions of Definition \ref{def:gstdr} are satisfied.  We first show that Condition \ref{G*drawcon10}  is satisfied, that is, $(V(d_1),E(d_1))$ is valid, by showing that the Conditions of Definition \ref{def:ValidPair} are satisfied:  
 \begin{enumerate}
 	\item It is easy to see that $V(d_1)\subseteq V\cup V^*$ and $E(d_1)\subseteq E\cup E^*$, so Condition \ref{con:ValidPair1} is satisfied. 
 	\item Let $\{u,v\}\in E$, and assume that $V(d_1)\cap V^*_{\{u,v\}}\neq \emptyset$.
 	\begin{enumerate}
 		\item  In Lines \ref{Alg:FixingTurn1}-\ref{Alg:FixingTurn2} the algorithm labels the vertices in $V(d_1)\cap V^*_{\{u,v\}}$ from $1$ to $\mathsf{index}(u,v)=|V(d_1)\cap V^*_{\{u,v\}}|$, so Condition \ref{con:ValidPair221} is satisfied. 
 		\item We aim to prove that $E(d_1)\cap E^*_{\{u,v\}}\subseteq \{\{u,uv_1\}\}\cup \{\{uv_j,uv_{j+1}\}~|~1\leq j\leq \mathsf{index}(u,v)-1\}\cup \{\{uv_{\mathsf{index}(u,v)},v\}\}$. We saw that $d_1$ obtained by the end of Line \ref{Alg:SpiltPartMakeTurn2} is a $G^*$-drawing, so Condition \ref{con:ValidPair222} holds at this stage. In Lines \ref{algo1line39}-\ref{algo1line392}, we delete edges, so the condition still holds by the end of these steps. Now, in Lines \ref{Alg:FixingTurn1}-\ref{Alg:FixingTurn2} observe that we iterate over the vertices in $V^*_{\{u,v\}}$ by the order of their labeling at the beginning of this stage. In every iteration, we either rename or delete the current vertex from $V^*_{\{u,v\}}$. For every vertex we delete, we connect its two neighbors with an edge and we do not change the labeling of vertices we have already labeled. So, when we rename a vertex $v_{a_1}$, then the label of the previous vertex we renamed is $v_{a_1-1}$ and it is the only vertex that may be connected with an edge to $v_{a_1}$ among the vertices we have already labeled. Thus, by the end of the algorithm, Condition \ref{con:ValidPair222} is satisfied.
 		\item We aim to show that Condition \ref{con:ValidPair223} holds, that is, $\{u,v\}\notin E(d_1)$. Assume towards a contradiction that $\{u,v\}\in E(d_1)$. We saw that $d_1$ obtained by the end of Line \ref{Alg:SpiltPartMakeTurn2} is a $G^*$-drawing, so Condition \ref{con:ValidPair223} holds in this stage. So, $\{u,v\}\notin E(d_1)$ by the end of Line \ref{Alg:SpiltPartMakeTurn2}. Now, since $\{u,v\}\in E(d_1)$ by the end of the algorithm, this means that in Lines \ref{algo1line39}-\ref{algo1line392} we deleted every vertex in $V(d)\cap V_{\{u,v\}}^*$, a contradiction to the assumption that $V(d)\cap V_{\{u,v\}}^*\neq \emptyset$.  %in Lines  Then since $d$ is a $G^*$-drawing, then $\{u,uv_1\}\notin E(d)$. Now, since $d'_1$ is obtained from $d$ by renaming points on edges, then, it follows that if  $\{u,v\}\in E(d'_1)$ then, $\{u,uv_1\}\notin E(d'_1)$, and if $\{u,uv_1\}\notin E(d'_1)$, then $\{u,v\}\in E(d'_1)$. Assume that $\{u,v\}\notin E(d)$ and  $\{u,v\}\in E(d'_1)$. Again, since $d'_1$ is obtained from $d$ by renaming points on edges, it follows that $\{u,uv_1\},\{uv_1,uv_2\},\ldots,\{uv_{t},v\}\in E(d)$, and the vertices $uv_1,\ldots, uv_{\mathsf{index}(u,v)}$ were deleted by the $\mathsf{SplitterPart}1$ function, and therefore  $uv_1\notin V(d'_1)$. Assume that $\{u,v\}\notin E(d)$ and  $\{u,v\}\notin E(d'_1)$. Since $d$ is a $G^*$-drawing, then $uv_1\in V(d)$. Moreover, since $\{u,v\}\notin E(d'_1)$, and  $d'_1$ is obtained from $d$ by renaming points on edges, then $uv_1\in V(d'_1)$. Therefore, Conditions \ref{G*drawcon5} and 
 	\end{enumerate} 
  \end{enumerate}
 We saw that the conditions of Definition \ref{def:ValidPair} are satisfied, so $(V(d_1),E(d_1))$ is valid, and hence Condition \ref{G*drawcon10}  is satisfied.
 
 Now, we aim to prove that Condition \ref{G*drawcon2} is satisfied. Let $u\in V(d_1)\cap V$. Observe that $d_1(u)=d(u)$, and since $d$ is a $G^*$-drawing, $d(u)\in\gps(f_{\mathsf{init}})$. Let $uv_i\in V(d_1)\cap V^*$. If $uv_i$ is on $f$, then there exists $uv_j\in V(d)\cap V^*$ such that\\ $d_1(uv_i)=d(uv_j)$. So, since $d$ is a $G^*$-drawing, $d(uv_j)\in {\cal P}(f_{\mathsf{init}})$. If $d_1(uv_i)$ is not on $f$, then it belongs to $\gi(c)\setminus \gi(f)$. So, there exists $\{uv_j,uv_{j+1}\}\in E(d)$ such that $d_{f_1}(uv_i)\in \gi$ $(d(\{uv_j,uv_{j+1}\}))$ and $(d_{f_1}(uv_i),\{u,v\})$ is a turning point in $f_1(c)$ in $d$. Let $c=(c_1,\ldots,c_t)$ and let $d(\{uv_j,uv_{j+1}\})=(p_1,\ldots p_q)$. Let $a,b\in \mathbb{N}$ such that $d_{f_1}(uv_i)$ is on $\ell(c_a,c_{a+1})$ and on $\ell(p_b,p_{b+1})$. Now, observe that $d_1(uv_i)\neq p_b,p_{b+1}$, and since $d$ is a $G^*$-drawing, $d(\{uv_j,uv_{j+1}\})\in {\cal P}^*(f_{\mathsf{init}})$. Thus, there exist $x,y\in \gps(\fin)$ such that $\ell(p_b,p_{b+1})$ is on $\ell(x,y)$. Therefore, $d_1(uv_i)$ is the intersection point of $\ell(x,y)$ and $\ell(c_a,c_{a+1})$, so $d_1(uv_i)\in \gis(\fin)$. Thus, we get that Condition \ref{G*drawcon2}  is satisfied.
 
 We now show that Condition \ref{G*drawcon3} is satisfied.
 Let $\{u,v\}\in E(d_1)\cap E$. Observe that, in this case, there are no turning points of $\{u,v\}$ in $f_1(c)$ in $d$. Therefore, $\{u,v\}\in E(d)\cap E$, $d_1(\{u,v\})=d(\{u,v\})$, and since $d$ is a $G^*$-drawing, $d(\{u,v\})\in {\cal P}(f_{\mathsf{init}})$. Now, let $\{uv_i,uv_{i+1}\}\in E(d)\cap E^*$. There exists $\{uv_j,uv_{j+1}\}\in V(d)$, where $d(\{uv_j,uv_{j+1}\})=(p_1,\ldots p_q)$, and $1\leq a_1\leq a_2<a_3\leq a_4\leq q$ such that $d_1(uv_i)$ is on $\ell(p_{a_1},p_{a_2})$, $d_1(uv_{i+1})$ is on $\ell(p_{a_3},p_{a_4})$ and $d_1(\{uv_i,uv_{i+1}\})=(d_1(uv_i),p_{a_2},\ldots p_{a_3},d_1(uv_{i+1}))$ (if $d_1(uv_i)=p_{a_2}$ or $d_1(uv_{i+1})=p_{a_3}$, then omit $p_{a_2}$ or $p_{a_3}$ from $d_1(\{uv_i,uv_{i+1}\})$, respectively). Now, since $d$ is a $G^*$-drawing, $d(\{u,v\})\in {\cal P}^*(f_{\mathsf{init}})$, and it is easy to see that $d_1(uv_{i+1})\in {\cal P}^*(f_{\mathsf{init}})$. So, Condition \ref{G*drawcon3}  is satisfied.
 
 Since $d_1$ obtained by the end of Line \ref{Alg:SpiltPartMakeTurn2} is a $G^*$-drawing, and $d_1$ in the end of the algorithm is obtained from $d$ by deleting edges and vertices, and renaming points on edges, then it follows that Conditions \ref{G*drawcon7}, \ref{G*drawcon8} and \ref{G*drawcon9} are satisfied.

 In conclusion, we proved that the conditions of Definition \ref{def:gstdr} are satisfied, so $d_1$ is a $G^*$-drawing.
\end{proof}

 Next, we prove that $d_1$ returned by Algorithm \ref{alg:SplitterPart} is a drawing of $\mathsf{IndInfFra}(d_1,f_1(c))$.

 \begin{lemma} \label{lem:dIsDraOfF1}
 	Let $F=(f,d_f,E_f,U_f,\mathsf{V^*Dir}_f)$ be an info-frame, let $d$ be a drawing of $F$ and let $c$ be a cutter of $f$. Let $d_1$ and $F_1=(f_1(c),d_{f_1},E_{f_1},U_{f_1},\mathsf{V^*Dir}_{f_1})$ be the output of $\mathsf{SplitterPart}_1(F,d,$ $c)$. Then, $d_1$ is a drawing of $F_1$. 
 \end{lemma}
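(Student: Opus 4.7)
The plan is to establish Lemma~\ref{lem:dIsDraOfF1} in two phases. First, I would verify that $F_1=\mathsf{IndInfFra}(d_1,f_1(c))$ is actually well-defined, by checking the four preconditions of Definition~\ref{def:inducedInfoFrame}; this then combines with Observation~\ref{obs:induceInfo} to give that $F_1$ is an info-frame. Second, I would go through the six conditions of Definition~\ref{def:infFrDr} and confirm they hold for $d_1$ with respect to this $F_1$.

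For the first phase, containment of $d_1$ in $f_1(c)$ is exactly the outcome of Lines~\ref{algo1line39}--\ref{algo1line392}, which delete every vertex/edge drawn strictly outside $f_1(c)$. That every turning point $(p,\{u,v\})$ of $f_1(c)$ in $d_1$ is realized by a vertex from $V^*_{\{u,v\}}$ drawn on $p$ follows from Lines~\ref{Alg:SpiltPartMakeTurn1}--\ref{Alg:SpiltPartMakeTurn2}, where $\mathsf{MakeVer}$ is invoked at every such $p$, together with Lines~\ref{Alg:FixingTurn1}--\ref{Alg:FixingTurn2}, where $\mathsf{DeleteVer}$ is only applied to $V^*$-vertices that are \emph{not} turning points in $f_1(c)$. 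That every surviving $uv_i\in V(d_1)\cap V^*$ lies on $f_1(c)$ is immediate, since each such vertex either lay on $f$ originally (and was not deleted only because it also lies on $f_1(c)$) or was inserted by $\mathsf{MakeVer}$ at a turning point of $f_1(c)$. The fourth precondition is precisely the content of Lemma~\ref{lemmaCon4}.

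For the second phase, most conditions of Definition~\ref{def:infFrDr} are tautological given the construction of $\mathsf{IndInfFra}(d_1,f_1(c))$: Condition~\ref{infFramcon1} restates the first precondition verified above; Conditions~\ref{infFramcon2}, \ref{infFramcon3} and \ref{infFramcon4} are literally the definitions of $d_{f_1}$, $U_{f_1}$ and $E_{f_1}$ in Definition~\ref{def:inducedInfoFrame}; and Condition~\ref{infFramcon6} matches item~\ref{def:inducedInfoFrameC4} of Definition~\ref{def:inducedInfoFrame}, which assigns $\mathsf{V^*Dir}_{f_1}(uv_i)$ to the first internal point of the edge incident to $uv_i$ in $d_1$.

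The main obstacle is Condition~\ref{infFramcon5}: for each $\{u,v\}\in E$ with $u,v\in U_{f_1}$ and $U^*_{\{u,v\}}\cap V(d_{f_1})=\emptyset$, I must show $\{u,v\}\in E(d_1)$. The hypothesis forces both endpoints to lie strictly inside $f_1(c)$ in $d_1$, so in particular strictly inside $f$ in $d$, hence $u,v\in U_f$. Since $d$ is a drawing of $F$, the info-frame axioms (Definition~\ref{definfFraCon3}, Condition~\ref{definfFraCon3g}, together with Condition~\ref{infFramcon5} of Definition~\ref{def:infFrDr} applied to $F$ and $d$) imply $\{u,v\}\in E(d)$ whenever there are no turning points of $\{u,v\}$ on $f$. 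The remaining task is a case analysis on whether $\{u,v\}$ has turning points on $f$ (inducing subdivision vertices in $d$) or only possibly on $c\setminus f$; in either case, the portion of $d(\{u,v\})$ whose endpoints are both strictly inside $f_1(c)$ survives Lines~\ref{algo1line39}--\ref{algo1line392}, while Lines~\ref{Alg:SpiltPartMakeTurn1}--\ref{Alg:SpiltPartMakeTurn2} add no new $V^*$-vertices on this portion (by the assumption $U^*_{\{u,v\}}\cap V(d_{f_1})=\emptyset$), and Lines~\ref{Alg:FixingTurn1}--\ref{Alg:FixingTurn2} then glue any surviving dummy edges back together via $\mathsf{DeleteVer}$. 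The resulting single edge in $E(d_1)$ is exactly $\{u,v\}$, completing the verification.
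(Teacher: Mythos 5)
Your proposal is correct and follows essentially the same route as the paper: the paper's proof likewise observes that Conditions~\ref{infFramcon1}--\ref{infFramcon4} of Definition~\ref{def:infFrDr} are immediate from the construction of $\mathsf{IndInfFra}(d_1,f_1(c))$, treats Condition~\ref{infFramcon5} as the only substantive step (deferring to the argument for Condition~\ref{con:ValidPair223} in Lemma~\ref{lem:dIsGStar}, which you instead spell out via the observation that the hypotheses force $\{u,v\}\in E(d)$ and force its drawing to survive the deletion and relabeling phases intact), and verifies Condition~\ref{infFramcon6} directly from item~\ref{def:inducedInfoFrameC4} of Definition~\ref{def:inducedInfoFrame}. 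Your preliminary phase checking the preconditions of Definition~\ref{def:inducedInfoFrame} is handled by the paper in the discussion and Lemma~\ref{lemmaCon4} preceding the lemma rather than inside its proof, but the content is the same.
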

 
\begin{proof}
We need to show that the conditions of Definition \ref{def:infFrDr} are satisfied. Observe that Conditions \ref{infFramcon1}-\ref{infFramcon4} are trivially satisfied due to the definition of a frame induced by a $G^*$-drawing (Definition \ref{def:inducedInfoFrame}). We prove now that Condition \ref{infFramcon5} holds. Let $\{u,v\}\in E$ such that $u,v\in U_{f_1}$ and $V(d_{f_1})\cap V^*_{\{u,v\}}=\emptyset$. We aim to prove that $\{u,v\}\in E(d_1)$. This can be shown similarly to the proof of Condition \ref{con:ValidPair223} of Definition \ref{def:ValidPair} in Lemma \ref{lem:dIsGStar}. Now, we show that Condition \ref{infFramcon6} holds. Let $uv_i\in V(d_{f_1})\cap V^*$ such that $d_{f_1}(uv_i)\in \gis(\fin)\setminus \mathsf{GridPointSet}(f_{\mathsf{init}})$ and let $z\in \{uv_{i-1},uv_{i+1}\}$ such that $\{z,uv_i\}\in E(d_1)$. Observe that $\mathsf{V^*Dir}_{f_1}(uv_i)=p_1$, where $E(d_1)=(d(uv_i),p_1,\ldots,d(z))$, so $\ell(\mathsf{V^*Dir}_{f_1}(uv_i),d_f(uv_i))$ is on $d_1(\{uv_i,z\})$. Thus, Condition \ref{infFramcon6} holds. Therefore, the conditions of Definition \ref{def:infFrDr} are satisfied, so $d_1$ is a drawing of $F_1$.
\end{proof}

Now, we show that $C=(c,F_1,F_2)$ returned by the $\mathsf{Splitter}$ function is an info-cutter of $F$, by showing that the conditions of Definition \ref{def:infCt} are satisfied. We begin by proving that $C$ is an {\bf Info-Cutter Template}, {\bf $U_f$-Partitioned} and exhibits the {\bf Equality of Common Parts} property with respect to $F$ in Lemmas \ref{lem:CisInfoCut1} and \ref{lem:CisInfoCut2} and Observation \ref{lem:CisInfoCut3}, respectively.  

\begin{lemma}\label{lem:CisInfoCut1}
	Let $F=(f,d_f,E_f,U_f,\mathsf{V^*Dir}_f)$ be an info-frame, let $d$ be a drawing of $F$ and let $c$ be a cutter of $f$. Let $C=\mathsf{Splitter}(F,d,c)_C$. Then, $C$ is an {\bf Info-Cutter Template} with respect to $F$.  
\end{lemma}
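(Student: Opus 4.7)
The plan is to unpack Definition \ref{def:infCTurPoints0} and verify its two clauses. The first clause, that $c$ is a cutter of $f$, is immediate from the hypothesis of the lemma. So the bulk of the work is to show that $F_1$ and $F_2$, as produced by $\mathsf{SplitterPart}1$ and $\mathsf{SplitterPart}2$, are info-frames. By symmetry, I will concentrate on $F_1 = \mathsf{IndInfFra}(d_1, f_1(c))$; the argument for $F_2$ is identical up to swapping indices.

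By Observation \ref{obs:induceInfo}, it suffices to check the four preconditions in Definition \ref{def:inducedInfoFrame} for the pair $(d_1, f_1(c))$. First, Lemma \ref{lem:dIsGStar} already establishes that $d_1$ is a $G^*$-drawing, and the fact that $d_1$ is bounded by $f_1(c)$ follows immediately from Lines \ref{algo1line39}--\ref{algo1line392} of Algorithm \ref{alg:SplitterPart}, where every vertex and every edge drawn strictly outside $f_1(c)$ is deleted. For the second condition (every turning point of an edge in $f_1(c)$ is realized by a vertex of $V^*$ in $d_1$), I will argue as follows: the loop in Lines \ref{Alg:SpiltPartMakeTurn1}--\ref{Alg:SpiltPartMakeTurn2} invokes $\mathsf{MakeVer}$ at every turning point of every edge in $f_1(c)$ that is not already an endpoint of a $V^*$-vertex, and the subsequent loop in Lines \ref{Alg:FixingTurn1}--\ref{Alg:FixingTurn2} only deletes a vertex $uv_i$ when $(d_1(uv_i),\{u,v\})$ fails to be a turning point in $f_1(c)$, so the preservation of all turning-point vertices is exactly the while-loop invariant.

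For the third condition, namely $d_1(uv_i) \in \gi(f_1(c))$ for every $uv_i \in V(d_1) \cap V^*$, I will trace the invariant through the three phases of Algorithm \ref{alg:SplitterPart}. At the start, since $d$ is a drawing of $F$, Condition \ref{infFramcon1} of Definition \ref{def:infFrDr} says all $V^*$-vertices of $d$ lie on $f$. After Lines \ref{Alg:SpiltPartMakeTurn1}--\ref{Alg:SpiltPartMakeTurn2}, the newly introduced $V^*$-vertices lie at turning points of edges with respect to $f_1(c)$, hence in $\gi(f_1(c))$. After Lines \ref{algo1line39}--\ref{algo1line392}, any remaining vertex from the initial set that was not on $f_1(c)$ is deleted, because such a vertex would be strictly outside $f_1(c)$; hence every surviving $V^*$-vertex lies on $f_1(c)$. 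The renaming-and-deletion loop in Lines \ref{Alg:FixingTurn1}--\ref{Alg:FixingTurn2} preserves this invariant since it only removes vertices. Finally, the fourth condition is exactly Lemma \ref{lemmaCon4}, which has already been proved in the running text.

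Combining these four verifications with Observation \ref{obs:induceInfo} yields that $F_1$ is an info-frame; the symmetric argument gives the same for $F_2$. Hence both clauses of Definition \ref{def:infCTurPoints0} hold, and $C$ is an Info-Cutter Template with respect to $F$. I do not anticipate a real obstacle here: every step is essentially a bookkeeping check against the definitions and the already-established lemmas. The only mildly delicate point is tracking the third condition across the three edit phases of Algorithm \ref{alg:SplitterPart}, but the invariant-style argument sketched above handles it cleanly.
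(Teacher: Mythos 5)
Your proposal is correct and follows essentially the same route as the paper: the paper's own proof of this lemma is a two-line appeal to the hypothesis that $c$ is a cutter together with Observation \ref{obs:induceInfo}, with the verification of the four preconditions of Definition \ref{def:inducedInfoFrame} (including Lemma \ref{lemmaCon4}) carried out in the discussion of Line \ref{algoIndFram} of Algorithm \ref{alg:SplitterPart} immediately preceding the lemma. You have simply inlined that verification into the proof body, which is a faithful reconstruction of the intended argument.
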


\begin{proof}
	Let $F_1=(f_1(c),d_{f_1},E_{f_1},U_{f_1},\mathsf{V^*Dir}_{f_1})$ and $F_2=(f_2(c),d_{f_2},E_{f_2},U_{f_2},\mathsf{V^*Dir}_{f_2})$.
	First, $c$ is a cutter of $f$ by our assumption. In addition, in Observation \ref{obs:induceInfo} we saw that $F_1$ and $F_2$ are info-frames. So, by Definition \ref{def:infCTurPoints0}, $C$ is an {\bf Info-Cutter Template} with respect to $C$.
\end{proof}

\begin{lemma}\label{lem:CisInfoCut2}
	Let $F=(f,d_f,E_f,U_f,\mathsf{V^*Dir}_f)$ be an info-frame, let $d$ be a drawing of $F$ and let $c$ be a cutter of $f$. Let $C=\mathsf{Splitter}(F,d,c)_C$. Then, $C$ is {\bf $U_f$-Partitioned} with respect to $F$.  
\end{lemma}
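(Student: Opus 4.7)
Let $F_1 = (f_1(c), d_{f_1}, E_{f_1}, U_{f_1}, \mathsf{V^*Dir}_{f_1})$ and $F_2 = (f_2(c), d_{f_2}, E_{f_2}, U_{f_2}, \mathsf{V^*Dir}_{f_2})$ be the info-frames returned by $\mathsf{Splitter}(F,d,c)$. By Lemma~\ref{lem:dIsDraOfF1} (and its symmetric counterpart), $d_1$ is a drawing of $F_1$ and $d_2$ is a drawing of $F_2$; in particular, by Condition~\ref{infFramcon3} of Definition~\ref{def:infFrDr}, $U_{f_1}$ is exactly the set of vertices of $V$ drawn strictly inside $f_1(c)$ in $d_1$, and $U_{f_2}$ is exactly the set of vertices of $V$ drawn strictly inside $f_2(c)$ in $d_2$. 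Similarly, since $d$ is a drawing of $F$, $U_f$ is exactly the set of vertices of $V$ drawn strictly inside $f$ in $d$. The plan is to verify the two required equalities from Definition~\ref{def:infCTurPoints13} by analyzing where each vertex of $V$ is drawn relative to $f$, $f_1(c)$ and $f_2(c)$; the main (though routine) point is that the $\mathsf{MakeVer}$ and $\mathsf{DeleteVer}$ operations used inside $\mathsf{SplitterPart}i$ only modify vertices of $V^*$, so the drawing of every vertex of $V$ is identical in $d$, $d_1$ and $d_2$.

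For Condition~\ref{def:infCutCon6}, I will prove the two inclusions separately. First, let $u \in U_f$. Then $u \in V$ and $d(u)$ is strictly inside $f$. Since $c$ is a cutter of $f$, the interior region $\ir(f)$ is partitioned into three pairwise-disjoint pieces: the strict interior of $f_1(c)$, the strict interior of $f_2(c)$, and the relative interior of $c$ (i.e.\ $\pp(c) \setminus \pp(f)$). I distinguish cases accordingly. If $d(u)$ lies strictly inside $f_1(c)$, then since $d_1(u) = d(u)$, we get $u \in U_{f_1}$. Symmetrically, if $d(u)$ lies strictly inside $f_2(c)$, then $u \in U_{f_2}$. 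Otherwise, $d(u) \in \pp(c) \setminus \pp(f)$, and because $u \in V$ is drawn on a grid point, in fact $d(u) \in \gp(c) \setminus \gp(f) \subseteq \gi(c) \setminus \gi(f)$; since $d_1(u) = d(u)$, we conclude $u \in V(d_{f_1}, \gi(c) \setminus \gi(f))$. For the reverse inclusion, each of the three sets consists of vertices of $V$ drawn inside the corresponding sub-region of $\ir(f)$, all of which are strictly inside $f$; hence every such vertex lies in $U_f$.

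For Condition~\ref{def:infCutCon62}, suppose toward a contradiction that $u \in U_{f_1} \cap U_{f_2}$. Then $d_1(u) = d(u) = d_2(u)$ is simultaneously strictly inside $f_1(c)$ and strictly inside $f_2(c)$. But $\ir(f_1(c)) \cap \ir(f_2(c)) \subseteq \pp(c)$, so any point lying strictly inside both $f_1(c)$ and $f_2(c)$ would have to lie in $\pp(c)$, i.e.\ on the boundary of $f_1(c)$, contradicting the assumption that it is strictly inside $f_1(c)$. Hence $U_{f_1} \cap U_{f_2} = \emptyset$, and the two conditions of Definition~\ref{def:infCTurPoints13} are established.

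The only mildly subtle step is the invariance $d_1|_V = d_2|_V = d|_V$ under the splitter's modifications, and the observation that Line~\ref{algo1line39} of Algorithm~\ref{alg:SplitterPart} does not delete any vertex of $V$ that is drawn inside $f_1(c)$ (only vertices strictly outside $f_1(c)$ are deleted). Once this is in place, the geometric partition of $\ir(f)$ by $c$ yields both claims essentially immediately, so I expect no substantive obstacle beyond careful bookkeeping of where each vertex of $V$ is drawn.
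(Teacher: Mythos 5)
Your proof is correct and follows essentially the same approach as the paper's: observe that the splitter leaves the drawings of $V$-vertices unchanged, partition the strict interior of $f$ via the cutter $c$ into the strict interiors of $f_1(c)$, $f_2(c)$ and the relative interior of $c$, and case-analyze each $u \in U_f$ accordingly. You supply a bit more detail than the paper on the reverse inclusion and on the disjointness of $U_{f_1}$ and $U_{f_2}$ (which the paper dismisses as ``easy to see''), but the argument is the same; the only slight imprecision is calling the partitioned region $\ir(f)$ rather than the strict interior of $f$, which does not affect the substance.
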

\begin{proof}
	Let $F_1=(f_1(c),d_{f_1},E_{f_1},U_{f_1},\mathsf{V^*Dir}_{f_1})$ and $F_2=(f_2(c),d_{f_2},E_{f_2},U_{f_2},\mathsf{V^*Dir}_{f_2})$.
We show that $C$ is {\bf $U_f$-Partitioned} with respect to $F$ (see Definition \ref{def:infCTurPoints13}). Observe that during the course of $\mathsf{SplitterPart}1$ (see Algorithm \ref{alg:SplitterPart}) we do not change the drawing of vertices from the set $V$ in $d_1$ (and, respectively, $d_2$ with respect to $\mathsf{SplitterPart}2$). Therefore, for every vertex $u\in U_f$, that is, a vertex from the set $V$ drawn strictly inside $f$, we have that exactly one of the following conditions is satisfied:
\begin{itemize}
	\item $u$ is drawn strictly inside $f_1(c)$. Then, $u\in U_{f_1}$.
	\item $u$ is drawn strictly inside $f_2(c)$. Then, $u\in U_{f_2}$.
	\item $u$ is drawn on $c$. Observe that, since the first and last vertices of $c$ are drawn on $f$, $\gi(c)\cap \gi(f)\neq \emptyset$. So, since $u$ is drawn strictly inside $f$ and also on $c$, $d(u)=p$ for some $p\in \gi(c)\setminus \gi(f)$. Therefore, since $d_{f_1}$ is a drawing on $f_1(c)$, $u\in V(d_{f_1},\gi(c)\setminus \gi($ $f))$. 
\end{itemize}
Thus, we conclude that $U_f=(V(d_1,\gi(c)\setminus \gi(f))) \cup U_{f_1}\cup U_{f_2}$ (Condition \ref{def:infCutCon6} of Definition \ref{def:infCTurPoints13}). In addition, it is easy to see that $U_{f_1}\cap U_{f_2}=\emptyset$ (Condition \ref{def:infCutCon62} of Definition \ref{def:infCTurPoints13}). So, $C$ is {\bf $U_f$-Partitioned} with respect to $F$.
\end{proof}

Observe that, by the definition of $\mathsf{SplitterPart}1$ (see Algorithm \ref{alg:SplitterPart}), for every $u\in (V(d)\cap V) \cup (V(d_1)\cap V)$ drawn inside $f_1(c)$, $d(u)=d_1(u)$. Moreover, for every $\{uv_i,uv_{i+1}\}\in E(d)$ and $p\in d(\{uv_i,uv_{i+1}\})$ inside $d_1(c)$, there exists $\{uv_j,uv_{j+1}\}\in E(d_1)$ such that $p\in d_1(\{uv_j,uv_{j+1}\})$, and vice versa. In particular, $d_f=_{\mathsf{rename}}^{\mathsf{Common}}( f,f_1)d_{f_1}$, where\\ $\mathsf{Common}(f,f_1)=\gi(f)\cap \gi(f_1(c))$.

Similarly, $d_f=_{\mathsf{rename}}^{\mathsf{Common}(f,f_2)}d_{f_2}$, where\\ $\mathsf{Common}$ $(f,f_2)=\gi(f)\cap \gi(f_2(c))$, and\\ $d_{f_1}=_{\mathsf{rename}}^{\gi(c)}d_{f_2}$. Therefore, the conditions of Definition \ref{def:infCTurPoints1} are satisfied. So, we have the following observation:

\begin{observation}\label{lem:CisInfoCut3}
	Let $F=(f,d_f,E_f,U_f,\mathsf{V^*Dir}_f)$ be an info-frame, let $d$ be a drawing of $F$ and let $c$ be a cutter of $f$. Let $C=\mathsf{Splitter}(F,d,c)_C$. Then, $C$  exhibits {\bf Equality of Common Parts} with respect to $F$.  
\end{observation}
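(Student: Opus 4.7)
\smallskip\noindent\textbf{Proof proposal for Observation \ref{lem:CisInfoCut3}.}
The plan is to verify each of the three conditions of Definition~\ref{def:infCTurPoints1} by unpacking Definition~\ref{def:DERN} (Equality Up To Renaming) and checking its two conditions against the explicit steps of $\mathsf{SplitterPart}1$ (Algorithm~\ref{alg:SplitterPart}) and its mirror image $\mathsf{SplitterPart}2$. The overarching observation is that these procedures only: (i) insert vertices of $V^*$ on edge drawings via $\mathsf{MakeVer}$, (ii) delete vertices and edges drawn strictly outside the relevant sub-frame, and (iii) remove or relabel $V^*$-vertices that are not turning points in the sub-frame; none of these operations moves any vertex of $V$, and none changes the underlying set of plane points covered by edge drawings inside the sub-frame.

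First I would tackle $d_f =_{\mathsf{rename}}^{\mathsf{Common}(f,f_1)} d_{f_1}$. For Condition~\ref{definition:DERNCon1} of Definition~\ref{def:DERN}, note that $\mathsf{Common}(f,f_1) \subseteq \pp(f)\cap \pp(f_1(c))$; any vertex $u \in V$ with $d(u) \in \mathsf{Common}(f,f_1)$ lies inside $f_1(c)$, is therefore not deleted in Line~\ref{algo1line39}, and is never moved by any step of the algorithm, so $d_1(u) = d(u) = d_f(u)$, and hence $d_{f_1}(u) = d_f(u)$. The symmetric direction is identical. For Condition~\ref{def:DERN:Con3}, fix $p \in \mathsf{Common}(f,f_1)$ and $\{u,v\} \in E$; the existence of $i$ with $d_f(uv_i) = p$ or $p \in \pp(d_f(\{uv_i,uv_{i+1}\}))$ is precisely equivalent to $p$ lying on the drawing of $\{u,v\}$ in $d$. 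Since $p$ is on $f_1(c)$, the set of plane points on the drawing of $\{u,v\}$ that belong to $f_1(c)$ is preserved by Lines~\ref{Alg:SpiltPartMakeTurn1}--\ref{Alg:FixingTurn2}: $\mathsf{MakeVer}$ merely introduces new $V^*$-labels without altering the geometry, $\mathsf{DeleteVer}$ reunites two collinear segments into one, and the deletions in Lines~\ref{algo1line39}--\ref{algo1line392} remove only parts strictly outside $f_1(c)$. Hence $p$ lies on the drawing of $\{u,v\}$ in $d_1$ as well, which forces the required alternative in $d_{f_1}$.

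Second, the statement $d_f =_{\mathsf{rename}}^{\mathsf{Common}(f,f_2)} d_{f_2}$ follows verbatim by applying the previous argument to $\mathsf{SplitterPart}2$ in place of $\mathsf{SplitterPart}1$. Third, for $d_{f_1} =_{\mathsf{rename}}^{\gi(c)} d_{f_2}$, one uses that both $d_{f_1}$ and $d_{f_2}$ are derived from the same drawing $d$ and that $\gi(c) \subseteq \gi(f_1(c))\cap \gi(f_2(c))$. Any $u \in V$ drawn at a point of $\gi(c)$ lies inside both $f_1(c)$ and $f_2(c)$, is preserved in both $d_1$ and $d_2$, and maps to the same point, giving Condition~\ref{definition:DERNCon1}. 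For Condition~\ref{def:DERN:Con3}, a point $p \in \gi(c)$ lies on the drawing of $\{u,v\}$ in $d_{f_1}$ iff it lies on the drawing of $\{u,v\}$ in $d$ iff it lies on the drawing of $\{u,v\}$ in $d_{f_2}$, by the preservation property established above applied to both $f_1(c)$ and $f_2(c)$.

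The main (routine) obstacle is bookkeeping: one has to be sure that the combination of $\mathsf{MakeVer}$, the deletions in Lines~\ref{algo1line39}--\ref{algo1line392}, and the relabel-or-delete loop in Lines~\ref{Alg:FixingTurn1}--\ref{Alg:FixingTurn2} does not accidentally alter, on $\mathsf{Common}(f,f_1)$ or on $\gi(c)$, whether a point is occupied by a $V^*$-vertex or lies internally on a $V^*$-edge drawing. This is precisely the content of the case analysis already carried out in Lemma~\ref{lem:dIsGStar} and Lemma~\ref{lemmaCon4}, and invoking those lemmas lets one conclude without further computation.
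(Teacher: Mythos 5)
Your proposal is correct and follows essentially the same route as the paper, which justifies this statement as an observation by noting that $\mathsf{SplitterPart}1$ (and its mirror) never moves a vertex of $V$ drawn inside $f_1(c)$ and preserves, point-for-point, the portions of edge drawings inside $f_1(c)$, from which the three equality-up-to-renaming conditions follow. Your version merely spells out the check of the two conditions of Definition~\ref{def:DERN} in more detail.
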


Now, we prove that $C$  exhibits {\bf Equality of $V^*$ Directions} with respect to $F$ (Definition \ref{def:infCTurPoints2}):

\begin{lemma}\label{lem:CisInfoCut31}
	Let $F=(f,d_f,E_f,U_f,\mathsf{V^*Dir}_f)$ be an info-frame, let $d$ be a drawing of $F$ and let $c$ be a cutter of $f$. Let $C=\mathsf{Splitter}(F,d,c)_C$. Then, $C$  exhibits {\bf Equality of $V^*$ Directions} with respect to $F$.  
\end{lemma}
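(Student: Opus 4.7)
The plan is to verify the four conditions of Definition~\ref{def:infCTurPoints2} in turn. The first condition, that $C$ exhibits the {\bf Equality of Common Parts} property, has already been established in Observation~\ref{lem:CisInfoCut3}, so it only remains to verify Conditions~2--4.

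The driving principle throughout is that $d$ is a polyline grid drawing, so each edge $d(\{u,v\})$ is a straight-line path whose vertices (i.e.\ its bending points) lie only on grid points. Hence, whenever a point $p$ on $d(\{u,v\})$ satisfies $p\in \gis(\fin)\setminus \mathsf{GridPointSet}(f_{\mathsf{init}})$, the edge $d(\{u,v\})$ passes through $p$ along a single straight line segment (i.e.\ it does not bend at $p$). I will use this straightness repeatedly.

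For Condition~2, let $uv_i\in V(d_{f_1},\gi(c))\cap V^*$ with $d_{f_1}(uv_i)\in \gis(\fin)\setminus \mathsf{GridPointSet}(f_{\mathsf{init}})$, and let $p=d_{f_1}(uv_i)$. By Condition~\ref{def:infCutCon5} of Definition~\ref{def:infCTurPoints1} (which is part of {\bf Equality of Common Parts}), there exists $uv_j=\mathsf{Identify}_{d_{f_1},d_{f_2}}(uv_i)\in V(d_{f_2})$ with $d_{f_2}(uv_j)=p$. Since $F_1$ and $F_2$ are the induced info-frames $\mathsf{IndInfFra}(d_1,f_1(c))$ and $\mathsf{IndInfFra}(d_2,f_2(c))$, Definition~\ref{def:inducedInfoFrame}(\ref{def:inducedInfoFrameC4}) says that $\mathsf{V^*Dir}_{f_1}(uv_i)=a$ is the first bend-point of $d_1(\{uv_i,z_1\})$ after $p$, where $z_1$ is the unique neighbour of $uv_i$ in $d_1$; symmetrically, $\mathsf{V^*Dir}_{f_2}(uv_j)=b$ comes from the unique neighbour $z_2$ of $uv_j$ in $d_2$. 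The edges $d_1(\{uv_i,z_1\})$ and $d_2(\{uv_j,z_2\})$ are restrictions of the same edge $d(\{u,v\})$ on the two sides of $c$ at $p$. Since $p$ is not in $\mathsf{GridPointSet}(f_{\mathsf{init}})$, the drawing $d(\{u,v\})$ has no bend at $p$, so the two initial segments $\ell(p,a)$ and $\ell(p,b)$ lie on a common straight line through $p$ with $p$ strictly between $a$ and $b$; this is exactly the statement that $p$ lies on $\ell(a,b)$.

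For Condition~3, fix $uv_i\in V(d_{f_1},\mathsf{Common}(f,f_1))\cap V^*$ with $d_{f_1}(uv_i)\in \gis(\fin)\setminus \mathsf{GridPointSet}(f_{\mathsf{init}})$ and let $uv_j=\mathsf{Identify}_{d_{f_1},d_f}(uv_i)$ (well-defined because $d_f=^{\mathsf{Common}(f,f_1)}_{\mathsf{rename}} d_{f_1}$ and the point is in $\gi(f)\cap\gi(f_1(c))$). The direction $\mathsf{V^*Dir}_{f_1}(uv_i)$ is, by Definition~\ref{def:inducedInfoFrame}(\ref{def:inducedInfoFrameC4}), the first bend-point after $p=d_{f_1}(uv_i)$ along the unique edge of $d_1$ incident to $uv_i$; this edge is the portion of $d(\{u,v\})$ leaving $p$ into the interior of $f_1(c)$. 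On the other hand, since $d$ is a drawing of $F$, Condition~\ref{infFramcon6} of Definition~\ref{def:infFrDr} applied to $uv_j$ in $F$ tells us that $\ell(d_f(uv_j),\mathsf{V^*Dir}_f(uv_j))$ lies on the first segment of $d(\{u,v\})$ leaving $p$ (into the interior of $f$), or contains it. Because $p$ is a non-grid point and $d(\{u,v\})$ does not bend at $p$, these two directional segments lie on the same straight line emanating from $p$; one is a sub-segment of the other depending on which bend of $d(\{u,v\})$ is reached first from $p$ inside $f_1(c)$ versus inside $f$. This yields exactly the alternation required by Condition~3. Condition~4 is verified by an identical argument with $F_2$ in place of $F_1$ and $d_2$ in place of $d_1$.

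The main obstacle is simply keeping the notation for the three drawings $d,d_1,d_2$ and their identification functions straight; the geometric content is entirely routine and relies only on the ``no bending at non-grid points'' principle together with the fact that $\mathsf{SplitterPart}1$ never alters the geometric trace of $d(\{u,v\})$ inside $f_1(c)$ (it only relabels points via $\mathsf{MakeVer}$ and deletes parts outside $f_1(c)$).
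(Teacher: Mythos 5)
Your proposal is correct and follows essentially the same route as the paper's proof: both verify Conditions 2--4 of Definition~\ref{def:infCTurPoints2} by exploiting the fact that a polyline grid drawing cannot bend at a point of $\gis(\fin)\setminus\mathsf{GridPointSet}(f_{\mathsf{init}})$, identifying $\mathsf{V^*Dir}_{f_1}(uv_i)$ and $\mathsf{V^*Dir}_{f_2}(\cdot)$ as points on the two sides of the single straight segment of $d(\{u,v\})$ through $p$ for Condition~2, and combining Definition~\ref{def:inducedInfoFrame}(\ref{def:inducedInfoFrameC4}) with Condition~\ref{infFramcon6} of Definition~\ref{def:infFrDr} for Conditions~3--4.
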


\begin{proof}
We show that the conditions of Definition \ref{def:infCTurPoints2} are satisfied.\\
Let $uv_i\in V(d_{f_1},\gi(c))\cap V^*$ such that $d_{f_1}(uv_i)\in \gis(\fin)\setminus \mathsf{GridPointSet}(f_{\mathsf{init}})$. Observe that, similarly to the proof of Lemma \ref{lem:dIsGStar}, it can be shown that there exists $\{uv_j,uv_{j+1}\}\in E(d)$ such that $d_{f_1}(uv_i)\in \gi(d(\{uv_j,uv_{j+1}\}))$ and $(d_{f_1}(uv_i),\{u,v\})$ is a turning point in both $f_1(c)$ and $f_2(c)$ in $d$. Moreover, let $d(\{uv_j,uv_{j+1}\})=(p_1,\ldots p_q)$. Let $a,b\in \mathbb{N}$ such that $d_{f_1}(uv_i)$ is on $\ell(p_b,p_{b+1})$. \\
Since $d_{f_1}(uv_i)\in \gis$ $(\fin)\setminus \mathsf{GridPointSet}(f_{\mathsf{init}})$, then $d_{f_1}(uv_i)\neq p_b,p_{b+1}$. In addition, exactly one among $\ell(p_b,d_{f_1}(uv_i))$ and $\ell(d_{f_1}(uv_i),p_{b+1})$ is inside $f_1(c)$ and the other one is inside $f_2(c)$. Therefore, one among $\mathsf{V^*Dir}_{f_1})(uv_i)$ and $\mathsf{V^*Dir}_{f_2}(\mathsf{Identify}_{d_{f_1},d_{f_2}}(uv_i))$ is $p_b$ and the other one is $p_{b+1}$. It is easy to see that $d_{f_1}(uv_i)$ is on $\ell(p_b,p_{b+1})$. Now, let $uv_i\in V(d_{f_1},\mathsf{Common}(f,f_1))\cap V^*$, where $\mathsf{Common}(f,f_1)=\gi(f)\cap \gi($ $f_1(c))$ such that $d_{f_1}(uv_i)\in \gis(\fin)\setminus \mathsf{GridPointSet}(f_{\mathsf{init}})$. Observe that, $\ell(d_{f_1}($ $uv_i),\mathsf{V^*Dir}_{f_1}(uv_i)$ is on the first edge of $d_1(uv_i,z)$ or the first edge of $d_1(uv_i,z)$ is on $\ell(d_{f_1}(uv_i),$ $\mathsf{V^*Dir}_{f_1}(uv_i)$, where $z\in\{uv_{i-1},uv_{i+1}\}$ is the neighbor of $uv_i$ in $d_1$. Now, the first edge of $d_1(uv_i,z)$ is on the first edge of $d(\mathsf{Identify}_{d_{f_1},d_{f}}(uv_i)),z')$, where $z'$ is the neighbor of\\ $d(\mathsf{Identify}_{d_{f_1},d_{f}}(uv_i)$ in $d$. In addition, since $d$ is a drawing of $F$, $\ell(d_f(\mathsf{Identify}_{d_{f_1},d_{f}}(uv_i)),$ $\mathsf{V^*Dir}_{f}(\mathsf{Identify}_{d_{f_1},d_{f}}(uv_i)))$ is on the first edge of $d(\mathsf{Identify}_{d_{f_1},d_{f}}(uv_i)),z')$ or vice versa. Therefore, we get that $\ell(d_{f_1}(uv_i),\mathsf{V^*Dir}_{f_1}(uv_i))$ is on\\ $\ell(d_f(\mathsf{Identify}_{d_{f_1},d_{f}}(uv_i)),$  $\mathsf{V^*Dir}_{f}(\mathsf{Identify}_{d_{f_1},d_{f}}(uv_i)))$ or vice versa. Similarly, we can show that this condition holds also for every $uv_i\in V(d_{f_2},\mathsf{Common}(f,f_2))\cap V^*$, where $\mathsf{Common}(f,$ $f_2)=\gi(f)\cap \gi(f_2(c))$ such that\\ $d_{f_2}(uv_i)\in \gis(\fin)\setminus \mathsf{GridPointSet}(f_{\mathsf{init}})$. Thus, we get that $C$ exhibits {\bf Equality of $V^*$ Directions}.
\end{proof}

We continue with the proof that $C$ is an info-cutter of $F$. Now, we prove that $C$ exhibits {\bf Equality of Common Parts} with respect to $F$. 
 Towards this, we prove that $C$ exhibits {\bf Partition of $E_f$ With Both Endpoints on $f$} with respect to $F$.

% $C$ exhibits {\em \bf Partition of $\{u,uv_1\}\in E_f$ With One Endpoint on $f$} property with respect to $F$.
%$C$ exhibits {\em \bf Partition of $\{uv_{\mathsf{index}(u,v)},v\}\in E_f$ With One Endpoint on $f$} property

\begin{lemma}\label{lem:CisInfoCut4}
	Let $F=(f,d_f,E_f,U_f,\mathsf{V^*Dir}_f)$ be an info-frame, let $d$ be a drawing of $F$ and let $c$ be a cutter of $f$. Let $C=\mathsf{Splitter}(F,d,c)_C$. Then, $C$ exhibits the {\bf Partition of $E_f$ With Both Endpoints on $f$} property with respect to $F$. 
\end{lemma}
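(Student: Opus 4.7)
The plan is to fix an arbitrary edge $\{uv_i,uv_{i+1}\}\in E_f$ with $uv_i,uv_{i+1}\in V(d_f)$ and perform a case analysis on how the straight-line path $d(\{uv_i,uv_{i+1}\})$ sits relative to the cutter $c$. Since $d$ is a drawing of $F$, by Definition~\ref{def:infFrDr}(\ref{infFramcon4}), the curve $d(\{uv_i,uv_{i+1}\})$ is drawn strictly inside $f$ except at its two endpoints, which lie on $f$. The four conditions in Definition~\ref{def:infCTurPoints2a} correspond to four mutually exclusive geometric positions of this curve, which we verify one by one.

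First, I would handle the case where the (open) interior of $d(\{uv_i,uv_{i+1}\})$ does not meet $c$. Then the curve lies entirely inside $\ir(f_1(c))$, entirely inside $\ir(f_2(c))$, or entirely on $c$. In the first subcase, no point of the open interior is a turning point in $f_1(c)$, so the loop in Lines~\ref{Alg:SpiltPartMakeTurn1}--\ref{Alg:SpiltPartMakeTurn2} of $\mathsf{SplitterPart}_1$ adds no new $V^*$-vertex in the interior, and the edge survives the deletions in Lines~\ref{algo1line39}--\ref{algo1line392}; after the renaming in Lines~\ref{Alg:FixingTurn1}--\ref{Alg:FixingTurn2}, the edge appears in $E(d_1)$ between the images (under $\mathsf{Identify}_{d_f,d_{f_1}}$) of its endpoints, and by Definition~\ref{def:inducedInfoFrame} it lies in $E_{f_1}$, yielding Condition~\ref{infcutedgeCon712}. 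The second subcase is symmetric and yields Condition~\ref{infcutedgeCon713}. In the third subcase, the edge lies on $c$, hence on both $f_1(c)$ and $f_2(c)$; since $d_f=_{\mathsf{rename}}^{\gi(c)}d_{f_1}$ after applying $\mathsf{MakeAllVerSet}$ the endpoints identify correctly, and the edge appears in $E(d_{f_1}^*,\pp(c))$, yielding Condition~\ref{infcutedgeCon714}.

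Next, I would handle the remaining case in which the open interior of $d(\{uv_i,uv_{i+1}\})$ meets $c$ at one or more points. Let $p_1,\ldots,p_\ell$ be these intersection points, ordered along the path from $d(uv_i)$ to $d(uv_{i+1})$. Each such $p_t$ lies in $\gi(c)\setminus \gi(f)$ and is a turning point of $\{u,v\}$ in both $f_1(c)$ and $f_2(c)$ (this follows from the geometry of the straight-line path crossing the axis-parallel cutter away from $f$). Pick $r\in[2]$ so that the initial subsegment of the curve, emanating from $d(uv_i)$, enters $\ir(f_r(c))$; then the subsegments alternately lie strictly inside $f_1(c)$, on $c$, or strictly inside $f_2(c)$. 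Set $\mathsf{Identify}_{d_f,d_{f_r}^\ast}(uv_i)=uv_j$; by the way $\mathsf{MakeVer}$ renames (Definition~\ref{def:OpMakeVer}) and the relabeling loop in $\mathsf{SplitterPart}_r$, the newly introduced vertices at $p_1,\ldots,p_\ell$ receive the consecutive labels $uv_{j+1},\ldots,uv_{j+\ell}$ in $d_{f_r}$, and the $(\ell+1)$-th subsegment ends at $uv_{i+1}$ with the appropriate identification. For each of the $\ell$ intermediate subsegments (resp.\ the last one), the three positional subcases match exactly Conditions~\ref{infcutedgeCon7151}--\ref{infcutedgeCon7153} (resp.\ Conditions~\ref{infcutedgeCon7155}--\ref{infcutedgeCon7157}) of Definition~\ref{def:partIntCBoth}, so $\{uv_i,uv_{i+1}\}$ partly intersects $c$ and Condition~\ref{infcutedgeCon715} holds.

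The main obstacle is bookkeeping: tracking how the indices of vertices in $V^*_{\{u,v\}}$ evolve through the successive $\mathsf{MakeVer}$ insertions of Lines~\ref{Alg:SpiltPartMakeTurn1}--\ref{Alg:SpiltPartMakeTurn2}, the deletions in Lines~\ref{algo1line39}--\ref{algo1line392}, the cleanup-and-renaming in Lines~\ref{Alg:FixingTurn1}--\ref{Alg:FixingTurn2}, and finally the $\mathsf{MakeAllVerSet}$ normalization, so that the identifications $\mathsf{Identify}_{d_f,d_{f_r}}$ and $\mathsf{Identify}_{d_{f_1},d_{f_2}}$ referred to by the four conditions actually produce the edges claimed. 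The mutual exclusivity of the four conditions is easier: the open interior of $d(\{uv_i,uv_{i+1}\})$ either avoids $c$ (giving exactly one of Conditions \ref{infcutedgeCon712}--\ref{infcutedgeCon714}, distinguished by which of $\ir(f_1(c))$, $\ir(f_2(c))$, $\pp(c)$ contains it) or meets $c$ (giving Condition~\ref{infcutedgeCon715}), and these alternatives are disjoint.
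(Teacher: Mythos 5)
Your proposal is correct and follows essentially the same route as the paper's proof: a case analysis on whether the interior of $d(\{uv_i,uv_{i+1}\})$ lies strictly inside $f_1(c)$, strictly inside $f_2(c)$, on $c$, or meets $c$ in turning points lying in $\gi(c)\setminus\gi(f)$, with the last case traced through the labeling produced by $\mathsf{SplitterPart}_r$ to match Definition~\ref{def:partIntCBoth}. The only cosmetic difference is that you choose $r$ by which subframe the first subsegment enters rather than by which subframe $uv_i$ is drawn on, which amounts to the same thing here.
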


\begin{proof}
We prove that $C$ exhibits {\bf Partition of $E_f$ With Both Endpoints on $f$} property with respect to $F$ by showing that the conditions of Definition \ref{def:infCTurPoints2a} hold.
Let $\{uv_i,uv_{i+1}\}\in E_f$ such that $uv_i,uv_{i+1}\in V(d_f)$. Since $d$ is a drawing of $F$, $\{uv_i,uv_{i+1}\}$ is an edge that drawn strictly inside $d$ except at its endpoints. Then, exactly one of the following conditions is satisfied:
\begin{enumerate}
	\item The edge $\{uv_i,uv_{i+1}\}$ is drawn strictly inside $f_1(c)$ except at its endpoints (e.g., see $\{uv_3,uv_4\}$ in Figure~\ref{fig:splitter1}). Then, $\{\mathsf{Identify}_{d_f,d_{f_1}}(uv_i),$ $\mathsf{Identify}_{d_f,d_{f_1}}(uv_{i+1})\}\in E_1$, so Condition  \ref{infcutedgeCon712} is satisfied. Observe that $(d_{f_1}(uv_i),\{u,v\})$ and $(d_{f_1}(uv_{i+1}),\{u,v\})$ are turning points in $f_1(c)$. Thus, there exists $uv_j\in V(d_{f_1})\cap V^*$ such that $d_f(uv_i)=d_{f_1}(uv_j)$, so $\mathsf{Identify}_{d_f,d_{f_1}}(uv_i)\neq \mathsf{Null}$. Similarly, we have that $\mathsf{Identify}_{d_f,d_{f_1}}(uv_{i+1})\neq  \mathsf{Null}$.
	\item The edge $\{uv_i,uv_{i+1}\}$ is drawn strictly inside $f_2(c)$ except at its endpoints (e.g., see $\{a,b\}$ in Figure~\ref{fig:splitter1}). Then, $\{\mathsf{Identify}_{d_f,d_{f_2}(uv_i)},$ $\mathsf{Identify}_{d_f,d_{f_2}}(uv_{i+1})\}\in E_2$, so Condition  \ref{infcutedgeCon713} is satisfied. Similarly to the previous case, we have that $\mathsf{Identify}_{d_f,d_{f_2}}(uv_i)\neq \mathsf{Null}$ and $\mathsf{Identify}_{d_f,d_{f_2}}(uv_{i+1})\neq  \mathsf{Null}$.
	\item The edge $\{uv_i,uv_{i+1}\}$ is drawn on $c$, so $\{\mathsf{Identify}_{f,d_1^*}(uv_i),\mathsf{Identify}_{f,d_1^*}(uv_{i+1})\}\in E(d_1^*,$ $\pp(c))$, and hence Condition \ref{infcutedgeCon714} is satisfied.
	\item Otherwise, we show that $\{uv_i,uv_{i+1}\}$ partly intersects $c$ (see Definition \ref{def:partIntCBoth}) (e.g., see $\{uv_1,uv_2\}$ in Figure~\ref{fig:splitter1}). Let $r\in [2]$
	 such that $uv_i$ is drawn on $f_r(c)$ in $d$. Observe that there must be $\ell \in \mathbb{N}$ points $p_i\in \gi(c)\setminus \gi(f)$ such that $p_i$ is on $d(\{uv_i,uv_{i+1}\})$, and $(p_i,\{u,v\})$ is a turning point of $f_r(c)$ in $d$. Now, in $d_r$ (returned by $\mathsf{SplitterPart}_1$), on each of these points are drawn vertices from $V^*_{\{u,v\}}$, enumerated in an increasing order from $uv_i$ to $uv_{i+1}$. Let $\mathsf{Identify}_{d_f,d_{f_r}}(uv_i)=uv_j$. Now, for every $0\leq t<\ell$, since $(d_r(uv_{j+t}),\{u,v\})$ is a turning point in $f_r(c)$, exactly one of the following conditions is satisfied. \label{lem:inCuCon4}
	\begin{itemize}
		\item The part of the edge $\{uv_{i},uv_{i+1}\}$ from the point $d_r(uv_{j+t})$ to the point $d_r(uv_{j+t+1})$ is drawn strictly inside $f_1(c)$ except at the points $d_r(uv_{j+t})$ and $d_r(uv_{j+t+1})$.
		Now, if $r=1$, then $\{uv_{j+t},uv_{j+t+1}\}\in E_1$; else,  $r=2$, and then $\{\mathsf{Identify}_{f_{d_2},f_{d_1}}(uv_{j+t}),$ $\mathsf{Identify}_{f_{d_2},f_{d_1}}$ $(uv_{j+t+1})\}\in E_1$. So, Condition \ref{infcutedgeCon7151} is satisfied.
		
			\item Similarly, the part of the edge $\{uv_{i},uv_{i+1}\}$ from the point $d_r(uv_{j+t})$ to the point $d_r(uv_{j+t+1})$ is drawn strictly inside $f_2(c)$ except at the points $d_r(uv_{j+t})$ and\\ $d_r(uv_{j+t+1})$.
	If $r=2$, then $\{uv_{j+t},uv_{j+t+1}\}\in E_2$; else, $r=1$, and then $\{\mathsf{Identify}_{f_{d_1},f_{d_2}}$ $(uv_{j+t}),$ $\mathsf{Identify}_{f_{d_1},f_{d_2}}$ $(uv_{j+t+1})\}\in E_2$. So, Condition \ref{infcutedgeCon7152} is satisfied. 
		
		\item The part of the edge $\{uv_i,uv_{i+1}\}$ from the point $d_r(uv_{j+t})$ to the point $d_r(uv_{j+t+1})$ is drawn on $c$. In this case, it follows that $\{uv_{j+t},uv_{j+t+1}\}\in E(d_r^*,\pp(c))$. So, Condition \ref{infcutedgeCon7153} is satisfied.
	\end{itemize}
 
 Similarly, it can be shown that exactly one of Conditions \ref{infcutedgeCon7155}-\ref{infcutedgeCon7157} is satisfied.
Then, the conditions of Definition \ref{def:partIntCBoth} hold, so $\{uv_i,uv_{i+1}\}$ partly intersects $c$.
\end{enumerate}
In conclusion, we proved that all conditions of Definition \ref{def:infCTurPoints2a} hold, so $C$ exhibits the {\bf Partition of $E_f$ With Both Endpoints on $f$} property with respect to $F$. 
\end{proof}

Similarly to Lemma \ref{lem:CisInfoCut4}, it can be proved that $C$ exhibits the {\bf Partition of $\{u,uv_1\}\in E_f$ With One Endpoint on $f$} and {\bf Partition of $\{uv_{\mathsf{index}(u,v)},v\}\in E_f$ With One Endpoint on $f$} properties with respect to $F$. We state this in the next lemma (without providing a proof):

\begin{lemma}\label{lem:CisInfoCut5}
	Let $F=(f,d_f,E_f,U_f,\mathsf{V^*Dir}_f)$ be an info-frame, let $d$ be a drawing of $F$ and let $c$ be a cutter of $f$. Let $C=\mathsf{Splitter}(F,d,c)_C$. Then, $C$ exhibits the {\bf Partition of $\{u,uv_1\}\in E_f$ With One Endpoint on $f$} and {\bf Partition of $\{uv_{\mathsf{index}(u,v)},v\}\in E_f$ With One Endpoint on $f$} properties with respect to $F$.
\end{lemma}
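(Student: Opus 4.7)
The plan is to mirror the structure of the proof of Lemma~\ref{lem:CisInfoCut4}, adapting the case analysis to account for the asymmetry of having exactly one endpoint drawn on $f$ rather than both. I shall treat the two properties in sequence; since they are symmetric (one concerns the ``first'' piece $\{u,uv_1\}$ and the other the ``last'' piece $\{uv_{\mathsf{index}(u,v)},v\}$), essentially the same argument dispatches both.

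First, I would fix an arbitrary edge $\{u,uv_1\}\in E_f$ with $u\in U_f$ and verify that exactly one of the four conditions of Definition~\ref{def:infCTurPoints2b} holds. Since $d$ is a drawing of $F$, Condition~\ref{infFramcon4} of Definition~\ref{def:infFrDr} guarantees that $d(\{u,uv_1\})$ is drawn strictly inside $f$ except at $d(uv_1)$ on $f$, and $d(u)$ lies strictly inside $f$. The case split is driven by the intersection of $d(\{u,uv_1\})$ with $c$ inside $f$: if $d(\{u,uv_1\})$ has no turning points in $f_1(c)$ other than (possibly) at its endpoints, then the entire edge lies strictly inside $f_1(c)$, or strictly inside $f_2(c)$, or on $c$, producing Conditions~\ref{infcutedgeCon2b1}, \ref{infcutedgeCon2b2}, or~\ref{infcutedgeCon2b3} respectively. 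Otherwise, the edge must ``partly intersect'' $c$ in the sense of Definition~\ref{def:partIntCU}, and I would verify each of the enumerated conditions in turn.

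For the partial intersection case, the crucial observation is that $\mathsf{SplitterPart}_1$ inserts, via $\mathsf{MakeVer}$, a new $V^*_{\{u,v\}}$-vertex at every turning point of $\{u,v\}$ in $f_1(c)$ strictly inside $f$. Since the original $uv_1$ lies on $f$ and all new turning points lie on $c\setminus f$ between $u$ and $uv_1$ along $d(\{u,uv_1\})$, the relabeling step of Algorithm~\ref{alg:SplitterPart} numbers them $uv_1,\ldots,uv_{\ell}$ in the order encountered from $u$, so that $uv_1,\ldots,uv_\ell\in V(d_{f_r},\gi(c)\setminus\gi(f))$ for the appropriate $r\in\{1,2\}$. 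For each consecutive pair $\{uv_t,uv_{t+1}\}$ and for the terminal piece $\{u,uv_1\}$, I would argue by inspecting how $d$ draws that segment: strictly inside $f_1(c)$, strictly inside $f_2(c)$, or along $c$. This yields Conditions~\ref{infcutedgeCon4ba1}--\ref{infcutedgeCon4ba3} for intermediate segments, and exactly one of Conditions~\ref{infcutedgeCon4ba}--\ref{infcutedgeCon4be} for the segment incident to $u$, depending on whether $d(u)$ is strictly inside $f_1(c)$, strictly inside $f_2(c)$, or on $c$ (and in the last case, further on whether the incident segment lies inside $f_1(c)$, $f_2(c)$, or on $c$). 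This is where the proof differs from Lemma~\ref{lem:CisInfoCut4}: the interior endpoint $u\in V$ forces a finer sub-case analysis than the turning point case, but no genuinely new argument is required.

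The second property, concerning $\{uv_{\mathsf{index}(u,v)},v\}\in E_f$ with $v\in U_f$, is dispatched by the symmetric argument, traversing the edge from $v$ to $uv_{\mathsf{index}(u,v)}$ rather than from $u$ to $uv_1$; Definition~\ref{def:partIntV} is the mirror image of Definition~\ref{def:partIntCU} and the case analysis transfers verbatim under the obvious relabeling. I expect the main obstacle to be purely bookkeeping: carefully tracking that the $\mathsf{Identify}_{d_f,d_{f_r}}$ function maps the endpoint labels correctly after the renaming done in Lines~\ref{Alg:FixingTurn1}--\ref{Alg:FixingTurn2} of Algorithm~\ref{alg:SplitterPart}, and that the choice of $r\in\{1,2\}$ (i.e., which sub-frame contains the original endpoint) is consistent throughout. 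Aside from this notational delicacy, no new geometric or combinatorial idea beyond those already exploited in Lemma~\ref{lem:CisInfoCut4} is needed.
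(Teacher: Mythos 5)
Your proposal is correct and follows exactly the route the paper intends: the paper in fact omits the proof of this lemma entirely, stating only that it "can be proved similarly to Lemma~\ref{lem:CisInfoCut4}", and your case analysis (turning points inserted by $\mathsf{MakeVer}$ on $c\setminus f$, relabeled $uv_1,\ldots,uv_\ell$, with the finer sub-cases for the interior endpoint $u\in U_f$ matching Conditions~\ref{infcutedgeCon4ba}--\ref{infcutedgeCon4be}, and the symmetric treatment of $\{uv_{\mathsf{index}(u,v)},v\}$) is precisely the adaptation the authors have in mind, as corroborated by their discussion preceding Definitions~\ref{def:infCTurPoints2b} and~\ref{def:partIntCU}.
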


Now, we prove that $C$ exhibits the {\bf Partition of $\{u,v\}\in E$ Drawn Strictly Inside $f$}  property with respect to $F$.

\begin{lemma}\label{lem:CisInfoCut6}
	Let $F=(f,d_f,E_f,U_f,\mathsf{V^*Dir}_f)$ be an info-frame, let $d$ be a drawing of $F$ and let $c$ be a cutter of $f$. Let $C=\mathsf{Splitter}(F,d,c)_C$. Then, $C$ exhibits {\bf Partition of $\{u,v\}\in E$ Drawn Strictly Inside $f$} property with respect to $F$.
\end{lemma}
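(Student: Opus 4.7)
The plan is to mirror the case analysis used for Lemma \ref{lem:CisInfoCut4}, but carried out for edges $\{u,v\}\in E$ with both endpoints drawn strictly inside $f$. Fix $u,v\in U_f$ with $\{u,v\}\in E$ and $V(d_f)\cap V^*_{\{u,v\}}=\emptyset$. Since $d$ is a drawing of $F$, Condition \ref{infFramcon5} of Definition \ref{def:infFrDr} yields $\{u,v\}\in E(d)$. Hence the drawing $d(\{u,v\})$ is a simple straight-line path in $\cal P$ strictly inside $f$, except possibly at its endpoints.

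First I would split on whether $d(\{u,v\})$ has any turning point in $f_1(c)$ other than (possibly) at $d(u)$ or $d(v)$. If there is no such turning point then $d(\{u,v\})$ does not cross $c$ in its interior, and by the construction of $d_1$ in $\mathsf{SplitterPart}1$ (Lines \ref{Alg:SpiltPartMakeTurn1}--\ref{Alg:SpiltPartMakeTurn2}) no vertex of $V^*_{\{u,v\}}$ is ever created on $c\setminus f$. So $V(d_{f_1},\gi(c))\cap V^*_{\{u,v\}}=\emptyset$, and exactly one of the following three sub-cases occurs: either $d(\{u,v\})$ lies strictly inside $f_1(c)$ and then $u,v\in U_{f_1}$ (Condition \ref{infCutcon101}); or strictly inside $f_2(c)$ and then $u,v\in U_{f_2}$ (Condition \ref{infCutcon102}); or on $c$ and then $\{u,v\}\in E(d_{f_1},\pp(c))$ (Condition \ref{infCutcon103}). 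These three sub-cases are mutually exclusive because $d(\{u,v\})$ is a single straight-line path.

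The main case is when $d(\{u,v\})$ has $\ell\geq 1$ turning points at points $p_1,\ldots,p_\ell\in \gi(c)\setminus\gi(f)$, appearing in this order when traversing the edge from $u$ to $v$. I would argue that $\mathsf{SplitterPart}1$ creates, in $d_1$, exactly the vertices $uv_1,\ldots,uv_\ell\in V^*_{\{u,v\}}$ with $d_1(uv_t)=p_t$ for $1\le t\le\ell$: the calls to $\mathsf{MakeVer}$ insert one vertex per such turning point, and the subsequent relabelling/deletion step (Lines \ref{Alg:FixingTurn1}--\ref{Alg:FixingTurn2}) leaves precisely the turning-point vertices with consecutive indices $1,\ldots,\ell$, since $V(d_f)\cap V^*_{\{u,v\}}=\emptyset$ rules out any pre-existing vertex of $V^*_{\{u,v\}}$. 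So $uv_1,\ldots,uv_\ell\in V(d_{f_1},\gi(c)\setminus\gi(f))$ as required by Definition \ref{def:partIntUV}. It then remains to verify, for each consecutive pair $(uv_t,uv_{t+1})$ with $1\le t<\ell$, and for the two ``boundary'' pieces $(u,uv_1)$ and $(uv_\ell,v)$, that exactly one of the listed conditions in Definition \ref{def:partIntUV} holds.

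For the inner pieces, the argument is identical to the inner pieces of case \ref{lem:inCuCon4} in Lemma \ref{lem:CisInfoCut4}: the subcurve of $d(\{u,v\})$ between $p_t$ and $p_{t+1}$ is drawn either strictly inside $f_1(c)$ (giving Condition \ref{infcutedgeCon51}), strictly inside $f_2(c)$ (giving Condition \ref{infcutedgeCon52}), or on $c$ (giving Condition \ref{infcutedgeCon53}), and these three options are mutually exclusive. The only subtle part, which I expect to be the main obstacle, is the ``boundary'' pieces $(u,uv_1)$ and $(uv_\ell,v)$, because the endpoints $u,v$ may now lie strictly inside either of $f_1(c),f_2(c)$ or on $c$ (unlike the setting of Lemma \ref{lem:CisInfoCut4}, where both endpoints lived on $f$). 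This is why Definition \ref{def:partIntUV} branches into five sub-cases (Conditions \ref{infcutedgeCon1041}--\ref{infcutedgeCon1045} and \ref{infcutedgeCon1051}--\ref{infcutedgeCon1055}) rather than three. I would dispose of these by a straightforward case analysis on the position of $u$ (respectively, $v$) relative to $f_1(c),f_2(c),c$, combined with the position of the first (respectively, last) subcurve of $d(\{u,v\})$, using Lemma \ref{lem:CisInfoCut2} to conclude membership in $U_{f_1}$, $U_{f_2}$, or $V(d_{f_1},\gi(c))$. The four top-level conditions of Definition \ref{def:infCTurPoints3} are clearly mutually exclusive, since the first three explicitly demand $V(d_{f_1},\gi(c))\cap V^*_{\{u,v\}}=\emptyset$ while the fourth (through Definition \ref{def:partIntUV}) asserts at least one vertex of $V^*_{\{u,v\}}$ on $\gi(c)\setminus\gi(f)$.
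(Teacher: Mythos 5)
Your proposal is correct and follows essentially the same route as the paper's proof: a case analysis on whether $d(\{u,v\})$ lies strictly inside $f_1(c)$, strictly inside $f_2(c)$, on $c$, or partly intersects $c$, with the fourth case handled by the same argument as in Lemma~\ref{lem:CisInfoCut4}. The extra detail you give on the boundary pieces $(u,uv_1)$ and $(uv_\ell,v)$ is consistent with (and slightly more explicit than) what the paper writes.
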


\begin{proof}
	We show that the conditions of Definition \ref{def:infCTurPoints3} holds. Let $u,v\in U_f$ such that $\{u,v\}\in E$ and $V(d_f)\cap V^*_{\{u,v\}}=\emptyset$. Since $d$ is a drawing of $F$, it follows that $\{u,v\}$ is drawn strictly inside $f$ in $d$. Then, exactly one of the following conditions holds: 
	\begin{enumerate}
		\item $\{u,v\}$ is drawn strictly inside $f_1(c)$. Therefore, $V(f_{d_1},\gi(c))\cap V^*_{\{u,v\}}=\emptyset$ and $u,v\in U_1$; so, Condition \ref{infCutcon101} is satisfied.
			\item $\{u,v\}$ is drawn strictly inside $f_2(c)$. Therefore, $V(f_{d_2},\gi(c))\cap V^*_{\{u,v\}}=\emptyset$ and $u,v\in U_2$; so, Condition \ref{infCutcon102} is satisfied.
				\item $\{u,v\}$ is drawn on $c$. Therefore $V(f_{d_1},\gi(c))\cap V^*_{\{u,v\}}=\emptyset$ and $\{u,v\}\in E(f_{d_1},\pp(c))$; so, Condition \ref{infCutcon103} is satisfied.
				\item Otherwise, we aim to prove that $\{u,v\}$ partly intersects $c$ (see Definition \ref{def:partIntUV}). There are $\ell \in \mathbb{N}$ points $p_i\in \gi(c)\setminus \gi(f)$, such that $(p_i,\{u,v\})$ is a turning point in $f_1(c)$. It can be shown, similarly to Condition \ref{lem:inCuCon4} in the proof of Lemma \ref{lem:CisInfoCut4}, that for every $1\leq t<\ell$, exactly one of Conditions \ref{infcutedgeCon51}-\ref{infcutedgeCon53}, exactly one of Conditions \ref{infcutedgeCon1041}-\ref{infcutedgeCon1045} and exactly one of Conditions \ref{infcutedgeCon1051}-\ref{infcutedgeCon1055} are satisfied. So, Condition \ref{infCutcon104} is satisfied.
	\end{enumerate}
Therefore, $C$ exhibits the {\bf Partition of $\{u,v\}\in E$ Drawn Strictly Inside $f$} property with respect to $F$.
\end{proof}

In conclusion, in Lemmas \ref{lem:CisInfoCut1}-\ref{lem:CisInfoCut6} and in Observation \ref{lem:CisInfoCut3} we saw that $\mathsf{Splitter}(F,d,c)_C$ is an {\bf Info-Cutter Template} and {\bf $U_f$-Partitioned} with respect to $F$. In addition, $\mathsf{Splitter}(F,d,c)_C$ exhibits the {\bf Equality of $V^*$ Directions}, {\bf Equality of Common Parts}, {\bf Partition of $E_f$ With Both Endpoints on $f$}, {\bf Partition of $\{u,uv_1\}\in E_f$ With One Endpoint on $f$}, {\bf Partition of $\{uv_{\mathsf{index}(u,v)},v\}\in E_f$ With One Endpoint on $f$} and {\bf Partition of $\{u,v\}\in E$ Drawn Strictly Inside $f$} properties with respect to $F$. Therefore, by Definition \ref{def:infCt}, $\mathsf{Splitter}(F,d,c)_C$ is info-cutter of $F$. Moreover, in Lemma \ref{lem:dIsDraOfF1}, we saw that $\mathsf{Splitter}(F,d,c)_{d_1}$ and $\mathsf{Splitter}(F,d,c)_{d_2}$ are drawings of $F_1$ and $F_2$, respectively, where $\mathsf{Splitter}(F,d,c)_C=(c,F_1,F_2)$.
So, we have the following lemma:

\begin{lemma}\label{splitter}
	Let $F=(f,d_f,E_f,U_f,\mathsf{V^*Dir}_f)$ be an info-frame, let $d$ be a drawing of $F$ and let $c$ be a cutter of $f$. Then, $\mathsf{Splitter}(F,d,c)_C$ is info-cutter of $F$. In addition, $\mathsf{Splitter}(F,d,c)_{d_1}$ and $\mathsf{Splitter}(F,d,c)_{d_2}$ are drawings of $F_1$ and $F_2$, respectively, where $\mathsf{Splitter}(F,d,c)_C=(c,F_1,F_2)$.
\end{lemma}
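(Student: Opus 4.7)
The plan is to assemble the conclusion from the lemmas already established earlier in this subsection; the final statement is essentially a bookkeeping combination, since the work has been distributed across the preceding results. The proof proceeds by verifying, in order, the defining conditions of an info-cutter (Definition~\ref{def:infCt}) and then invoking Lemma~\ref{lem:dIsDraOfF1} (and its symmetric twin for $\mathsf{SplitterPart}2$) to conclude that $d_1$ and $d_2$ are drawings of $F_1$ and $F_2$, respectively.

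First, I would write $\mathsf{Splitter}(F,d,c)_C = (c,F_1,F_2)$ where $F_1$ and $F_2$ are the info-frames returned by $\mathsf{SplitterPart}1$ and $\mathsf{SplitterPart}2$, respectively. To see that $C$ is an info-cutter of $F$, I would verify the eight required properties by pointing to the lemmas that establish them: the \textbf{Info-Cutter Template} property follows from Lemma~\ref{lem:CisInfoCut1}; the \textbf{$U_f$-Partitioned} property from Lemma~\ref{lem:CisInfoCut2}; the \textbf{Equality of Common Parts} property from Observation~\ref{lem:CisInfoCut3}; the \textbf{Equality of $V^*$ Directions} property from Lemma~\ref{lem:CisInfoCut31}; the \textbf{Partition of $E_f$ With Both Endpoints on $f$} property from Lemma~\ref{lem:CisInfoCut4}; the two one-endpoint partition properties from Lemma~\ref{lem:CisInfoCut5}; and the \textbf{Partition of $\{u,v\} \in E$ Drawn Strictly Inside $f$} property from Lemma~\ref{lem:CisInfoCut6}. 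Since the eight properties listed in Definition~\ref{def:infCt} are exactly these, $C$ is indeed an info-cutter of $F$.

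Second, I would observe that by Lemma~\ref{lem:dIsDraOfF1}, the drawing $d_1 = \mathsf{Splitter}(F,d,c)_{d_1}$ returned by $\mathsf{SplitterPart}1$ is a drawing of $F_1$. The analogous statement for $d_2 = \mathsf{Splitter}(F,d,c)_{d_2}$ and $F_2$ follows from the symmetric version of Lemma~\ref{lem:dIsDraOfF1} applied to $\mathsf{SplitterPart}2$, whose proof is identical to that of Lemma~\ref{lem:dIsDraOfF1} with the obvious substitutions ($f_1(c) \mapsto f_2(c)$, $d_1 \mapsto d_2$, $F_1 \mapsto F_2$). Together these two statements complete the ``In addition'' clause of the lemma.

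No new obstacles arise here: every ingredient is already available, and the proof is a one-paragraph combination. The only subtlety worth flagging is ensuring that the symmetric counterparts of the cited lemmas really do apply to $\mathsf{SplitterPart}2$ without modification, which is immediate since the algorithm and all arguments are symmetric in the roles of $f_1(c)$ and $f_2(c)$.
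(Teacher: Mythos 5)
Your proposal is correct and follows exactly the same route as the paper: the paper's own justification for this lemma is precisely the assembly of Lemmas~\ref{lem:CisInfoCut1}--\ref{lem:CisInfoCut6}, Observation~\ref{lem:CisInfoCut3}, and Lemma~\ref{lem:dIsDraOfF1} (with its symmetric counterpart for $\mathsf{SplitterPart}2$) against Definition~\ref{def:infCt}. Nothing is missing.
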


Now, for a later use (in Section \ref{sec:proofScheme}), we present another observation. Let $f'$ be a frame that is bounded by $f_1(c)$. Observe that the part of the drawing $d$ intersected by $f'$ is equal to the part of the drawing $d_1$ intersected by $f'$, up to renaming. Thus, we have the following claims. The subset of vertices from $V$ that intersect $f'$ in $d$ equals the subset of vertices from the set $V$ that intersect $f'$ in $d_1$. Moreover, every turning point $(p,\{u,v\})$ in $f'$ in $d$ is a turning point in $d_1$, and vice versa. Therefore, we get that  $\sizef(f',d)=\sizef(f',d_1)$, as we state in the next observation:

\begin{observation}\label{obs:turningEq2}
	Let $F=(f,d_f,E_f,U_f,\mathsf{V^*Dir}_f)$ be an info-frame, let $C=(c,F_1,F_2)$ be an info-cutter of $F$, let $d$ be a drawing of $F$, and let $f'$ be a frame that is bounded by $f_1(c)$. Then, $\sizef(f',d)=\sizef(f',d_1)$, where $d_1=\mathsf{Splitter}(F,d,C)_{d_1}$.
\end{observation}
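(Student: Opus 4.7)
\medskip\noindent{\bf Proof Proposal for Observation \ref{obs:turningEq2}.}

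The plan is to show that each of the three summands in the definition of $\sizef(f')$ (Definition \ref{def:sizeFrG*}) takes the same value when computed with respect to $d$ and with respect to $d_1$. Since the first summand, the number of vertices of $(f')^{\mathsf{min}}$, depends only on the geometric object $f'$ and not on any drawing, this summand contributes equally to $\sizef(f',d)$ and $\sizef(f',d_1)$. So it remains to handle (i) the number of vertices of $G$ drawn on $f'$ and (ii) the number of turning points in $f'$.

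For (i), I would observe that $\mathsf{SplitterPart}1$ (Algorithm \ref{alg:SplitterPart}) never modifies the image of a vertex of $V$: the subroutine $\mathsf{MakeVer}$ only inserts dummy vertices of $V^*$, the renaming steps in Lines \ref{Alg:FixingTurn1}--\ref{Alg:FixingTurn2} act only on $V^*$, and $\mathsf{DeleteVer}$ is invoked only on vertices of $V^*$. The only possible source of difference is Line \ref{algo1line39}, where vertices drawn strictly outside $f_1(c)$ are deleted. However, every point of $f'$ lies inside $f_1(c)$ (since $f'$ is bounded by $f_1(c)$), so any vertex of $V$ drawn on $f'$ in $d$ is drawn inside $f_1(c)$ and is therefore not deleted. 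The converse direction is immediate because $d_1$ does not introduce new $V$-vertices. Hence the set of vertices of $V$ drawn on $f'$ in $d$ coincides with the corresponding set in $d_1$.

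For (ii), the key geometric fact is that whether a pair $(p,\{u,v\})$ is a turning point in $f'$ depends only on the image of the curve drawing the edge $\{u,v\}$ in a small neighborhood of $p$ relative to $f'$ (see the definitions of left/right/middle turning points in Definitions \ref{def:IntPointG*1}--\ref{def:IntPointG*}), and not on where the $V^*$-vertices of $\{u,v\}$ happen to be placed along that curve. I would argue that, inside $f_1(c)$, the union of the images of the edges of $d$ coincides with the union of the images of the edges of $d_1$: the $\mathsf{MakeVer}$ insertions in Lines \ref{Alg:SpiltPartMakeTurn1}--\ref{Alg:SpiltPartMakeTurn2} and the $\mathsf{DeleteVer}$ removals in Line \ref{Alg:FixingTurnDelete} do not alter the geometric image (only the combinatorial subdivision into subedges of $E^*$), and the deletions in Lines \ref{algo1line39}--\ref{algo1line392} remove only material strictly outside $f_1(c)$, which is disjoint from $f'$. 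Therefore the collection of pairs $(p,\{u,v\})$ with $p$ on $f'$ that are turning points in $f'$ is identical in $d$ and $d_1$, and (ii) follows.

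The main obstacle is a pedantic one: because the notion of a turning point in a $G^*$-drawing is formulated in terms of a specific subedge $\{uv_i,uv_{i+1}\}$ containing $p$ (rather than directly in terms of the underlying curve $d(\{u,v\})$), I will have to verify that the classification into left, right, or middle turning point matches up correctly when $\mathsf{MakeVer}$ or $\mathsf{DeleteVer}$ modifies which subedge a given geometric point belongs to. The natural way to handle this is to prove a small auxiliary lemma stating that, for any $G^*$-drawing $\tilde d$ and any point $p$ on the image of $\{u,v\}$ strictly inside a neighborhood disjoint from the endpoints $u,v$, the predicate ``$(p,\{u,v\})$ is a turning point in $f'$'' depends only on the local geometry of the image of $\{u,v\}$ near $p$ and of $f'$ near $p$. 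Once this invariance under $\mathsf{MakeVer}$ and $\mathsf{DeleteVer}$ is established, applying it to every point of $\gi(f') \cap \pp(d(\{u,v\}))$ for every edge $\{u,v\} \in E$ immediately yields equality of the turning-point counts, completing the proof.
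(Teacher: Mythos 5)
Your proposal is correct and follows essentially the same route as the paper, which justifies this observation only by the brief remark that the part of $d$ intersected by $f'$ equals the part of $d_1$ intersected by $f'$ up to renaming, so that both the set of $V$-vertices on $f'$ and the set of turning points in $f'$ coincide. Your additional care about the invariance of the turning-point classification under $\mathsf{MakeVer}$ and $\mathsf{DeleteVer}$ fills in exactly the bookkeeping the paper leaves implicit.
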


%!TEX root =Main-Movement.tex

\subsection{The Function $\mathsf{Glue}$} 

Now, we introduce the ``inverse function'' of the $\mathsf{Splitter}$ function from Section \ref{sec:splitter}, called the $\mathsf{Glue}$ function. This function gets as input an info-frame $F$, an info-cutter $C=(c,F_1,F_2)$ of $F$, a drawing $d_1$ of $F_1$ and a drawing $d_2$ of $F_2$. The purpose of this function is to ``glue'' the drawings $d_1$ and $d_2$, and return as output a drawing of $F$ (see Figure~\ref{fig:glue}). The way we glue the two drawings is very intuitive, due to the definition of an info-cutter explained in Section \ref{sec:infoCut}.

\begin{figure}[!t]
	\centering
	\begin{subfigure}{0.48\textwidth}
		\includegraphics[width = \textwidth, page = 63]{figures/drawnTreewidth}
		\subcaption{}
		\label{fig:glue1}
	\end{subfigure}
	\hfil
	\begin{subfigure}{0.48\textwidth}
		\includegraphics[width = \textwidth, page = 64]{figures/drawnTreewidth}
		\subcaption{}
		\label{fig:glue2}
	\end{subfigure}
	\hfil
	\begin{subfigure}{0.48\textwidth}
		\includegraphics[width = \textwidth, page = 65]{figures/drawnTreewidth}
		\subcaption{}
		\label{fig:glue3}
	\end{subfigure}
	\caption{An illustration of the $\mathsf{Glue}$ function. The vertices mapped to points in $\gis(f) \setminus \mathsf{GridPointSet}(f)$, $\gis(f_1) \setminus \mathsf{GridPointSet}(f_1)$ and $\gis(f_2) \setminus \mathsf{GridPointSet}(f_2)$ are denoted by hollow squares. (a) Drawings of the info-frames of $f_1(c)$ and $f_2(c)$, described in Figure~\ref{fig:splitter5}. A cutter $c$ of $f$ is colored green. (b) The figure showing the intuition behind the Glue function. The drawing shown is obtained by merging the common vertices of $f_1(c)$ and $f_2(c)$ together. (c) The drawing $d$ of $F$ obtained by $\mathsf{Glue}$ from the figures in (a).}
	\label{fig:glue}
\end{figure}

Let $F=(f,d_f,E_f,U_f,\mathsf{V^*Dir}_f)$ be an info-frame, let $C=(c,F_1=(f_1(c),d_{f_1},E_{f_1},U_{f_1},$ $\mathsf{V^*Dir}_{f_1}),$ $F_2=(f_2(c),d_{f_2},E_{f_2},U_{f_2},\mathsf{V^*Dir}_{f_2}))$ be an info-cutter of $F$, and let $d_1$ and $d_2$ be drawings of $F_1$ and $F_2$, respectively. $\mathsf{Glue}(F,C,d_1,d_2)$ will be a pair of functions $d=(d_{\mathsf{V}},d_{\mathsf{E}})$ (defined in Definition \ref{def:glue}). First, we define $d_{\mathsf{V}}$ by $\mathsf{GlueVer}(F,C,d_1,d_2)$ as follows:

\begin{definition}  [{\bf $\mathsf{GlueVer}$}] \label{def:glueVer}
	Let $F=(f,d_f,E_f,U_f,\mathsf{V^*Dir}_f)$ be an info-frame, let $C=(c,F_1=(f_1(c),d_{f_1},E_{f_1},U_{f_1},\mathsf{V^*Dir}_{f_1})$, $F_2=(f_2(c),d_{f_2},E_{f_2},U_{f_2},\mathsf{V^*Dir}_{f_2}))$ be an info-cutter of $F$, and let $d_1$ and $d_2$ be drawings of $F_1$ and $F_2$, respectively. Then, $\mathsf{GlueVer}(F,C,d_1,d_2)=d_{\mathsf{V}}$ is the function defined as follows:
	\begin{enumerate}
		\item For every $v\in V(d_f)$, $d_{\mathsf{V}}(v)=d_f(v)$. \label{def:glueVer1}
		\item For every $v\in U_f$, if $v\in V(d_1)$, then $d_{\mathsf{V}}(v)=d_1(v)$; else, $d_{\mathsf{V}}(v)=d_2(v)$. \label{def:glueVer2}
		\end{enumerate}
\end{definition}

 Next, we define the function $d_{\mathsf{E}}$ by $\mathsf{GlueEdg}(F,C,d_1,d_2)$. Here, the definitions for the drawings of the edges correlate with
 the properties {\bf Partition of $E_f$ With Both Endpoints on $f$}, {\bf Partition of $\{u,uv_1\}\in E_f$ With One Endpoint on $f$}, 
{\bf Partition of $\{uv_{\mathsf{index}(u,v)},v\}\in E_f$ With One Endpoint on $f$} and {\bf Partition of $\{u,v\}\in E$ Drawn Strictly Inside $f$} (see Definitions \ref{def:infCTurPoints2a}, \ref{def:infCTurPoints2b}, \ref{def:infCTurPoints2c} and \ref{def:infCTurPoints3}, respectively). So, for the sake of readability, we split the definition for the drawings of the edges into separate definitions, according to these properties. 

%in Definition \ref{def:glue}. 

%For the sake of readability, we define $d_{\mathsf{E}}$ separately in few mo .

\begin{definition}  [{\bf $\mathsf{GlueEdg}$}] \label{def:glueEdg}
Let $F=(f,d_f,E_f,U_f,\mathsf{V^*Dir}_f)$ be an info-frame, let $C=(c,F_1=(f_1(c),d_{f_1},E_{f_1},U_{f_1},\mathsf{V^*Dir}_{f_1})$ $F_2=(f_2(c),d_{f_2},E_{f_2},U_{f_2},\mathsf{V^*Dir}_{f_2}))$ be an info-cutter of $F$, and let $d_1$ and $d_2$ be drawings of $F_1$ and $F_2$, respectively. Then, $\mathsf{GlueEdg}(F,C,d_1,d_2)=d_{\mathsf{E}}$ is the function defined as follows:
\begin{enumerate}
	\item For every $e\in E(d_f)$, $d_{\mathsf{E}}(e)=d_f(e)$ (e.g., see the edges $\{uv_3, uv_4\}$, $\{st_2,st_3\}$ and $\{y,z\}$ in Figure~\ref{fig:glue3}). \label{def:gleEdgCon1}
 		\item For every $\{uv_i,uv_{i+1}\}\in E_f$, we define $d_{\mathsf{E}}(\{uv_i,uv_{i+1}\})$ as follows.
 		\begin{enumerate}
 		\item If both endpoints of $\{uv_i,uv_{i+1}\}$ are drawn on $f$, that is, $uv_i,uv_{i+1}\in V(d_f)$, then $d_{\mathsf{E}}(\{uv_i,uv_{i+1}\})$ is  {\bf$\mathsf{Glue}$ of $\{uv_i,uv_{i+1}\}$ With Both Endpoints on $f$} (Definition \ref{def:glueEf2}) (e.g., see the edge $\{uv_1, uv_2\}$ in Figure~\ref{fig:glue3}). 
 		\item If exactly one endpoint of $\{uv_i,uv_{i+1}\}$ is drawn on $f$ and $\{uv_i,uv_{i+1}\}=\{u,uv_{1}\}$, then $d_{\mathsf{E}}(\{uv_i,uv_{i+1}\})$ is {\bf$\mathsf{Glue}$ of $\{u,uv_{1}\}$ With One Endpoint on $f$} (e.g., see the edges $\{u, uv_1\}$, $\{a,b\}$ and $\{g,gh_1\}$ in Figure~\ref{fig:glue3}). \label{def:glue4b}
 		\item If exactly one endpoint of $\{uv_i,uv_{i+1}\}$ is drawn on $f$ and $\{uv_i,uv_{i+1}\}=\{uv_{\mathsf{index}(u,v)},$ $v\}$, then $d_{\mathsf{E}}(\{uv_i,uv_{i+1}\})$ is {\bf$\mathsf{Glue}$ of $\{uv_{\mathsf{index}(u,v)},v\}$ With One Endpoint on $f$} (e.g., see the edges $\{mn_1, n\}$ and $\{st_3, t\}$ in Figure~\ref{fig:glue3}). \label{def:glue4c}
 		\end{enumerate}	
 \item For every $u,v\in U_f$ such that $\{u,v\}\in E$ and $V(d_f)\cap V^*_{\{u,v\}}=\emptyset$, $d_{\mathsf{E}}(\{u,v\})$ is {\bf$\mathsf{Glue}$ of $\{u,v\}$ Drawn Strictly Inside $f$} (e.g., see the edge $\{x, w\}$ in Figure~\ref{fig:glue3}). \label{def:gluecon5}
 
	\end{enumerate}
	
\end{definition}

%We show how we define $d_{\mathsf{E}}(\{uv_i,uv_{i+1}\})$ according to the first case. The other cases are defined similarly.

%\bf Partition of $\{u,uv_1\}\in E_f$ With One Endpoint on $f$} property (Definition \ref{def:infCTurPoints2b}).
%\item Exactly one endpoint of $\{uv_i,uv_{i+1}\}$ is drawn on $f$, and $\{uv_i,uv_{i+1}\}=\{uv_{\mathsf{index}(u,v)},v\}$. This case correlates with {\bf Partition of $\{uv_{\mathsf{index}(u,v)},v\}\in E_f$ With One Endpoint on $f$} property (Definition \ref{def:infCTurPoints2c}).
Next, in Definition \ref{def:glueEf2}, we present the definition of {\bf $\mathsf{Glue}$ of $\{uv_i,uv_{i+1}\}$ With Both Endpoints on $f$}. In this definition, we follow the definition of the {\bf Partition of $E_f$ With Both Endpoints on $f$} property (see Definition \ref{def:infCTurPoints2a}) in order to define a drawing of $\{uv_i,uv_{i+1}\}$. Recall that this property describes four possible conditions regarding $\{uv_i,uv_{i+1}\}$, where exactly one of them is satisfied. We define $d_{\mathsf{E}}(\{uv_i,uv_{i+1}\})$ according to each condition of Definition \ref{def:infCTurPoints2a}:
 
 \smallskip\noindent{\bf Condition \ref{infcutedgeCon712}.}
 In this case, $\{\mathsf{Identify}_{d_f,d_{f_1}}(uv_i),\mathsf{Identify}_{d_f,d_{f_1}}(uv_{i+1})\}\in E_{f_1}$. This means that the edge $\{uv_i,uv_{i+1}\}$ (up to renaming its endpoints) is drawn strictly inside $f_1(c)$ except for both of its endpoints, which are drawn on $f$. Based on this drawing, we obtain a drawing of $\{uv_i,uv_{i+1}\}$ strictly inside $f_1(c)$, and therefore strictly inside $f$, except for both of its endpoints, which are drawn on $f$.
  
   \smallskip\noindent{\bf Condition \ref{infcutedgeCon713}.} In this case, $\{uv_i,uv_{i+1}\}$ (up to renaming its endpoints) is drawn strictly inside $f_2(c)$, and this is symmetric to Condition \ref{infcutedgeCon712}.
   
   \smallskip\noindent{\bf Condition \ref{infcutedgeCon714}.} In this case, $\{uv_i,uv_{i+1}\}$ is drawn on $c$, so we can obtain its drawing from both $d_1$ and $d_2$ (here, we arbitrary choose $d_1$). 
   
  \smallskip\noindent{\bf Condition \ref{infcutedgeCon715}.} In this case, some parts of $\{uv_i,uv_{i+1}\}$ can be contained in $f_1(c)$ and some parts of $\{uv_i,uv_{i+1}\}$ can be contained in $f_2(c)$. Here, we define the drawing of each part according to each condition of Definition \ref{def:partIntCBoth}. In particular, for every such part, we have three cases, similar to Conditions \ref{infcutedgeCon712}--\ref{infcutedgeCon714}:
  \begin{itemize}
  	\item The part is drawn strictly inside $f_1(c)$ except for both of its endpoints (Condition \ref{infcutedgeCon7151} of Definition \ref{def:partIntCBoth}).
  	\item The part is drawn strictly inside $f_2(c)$ except for both of its endpoints (Condition \ref{infcutedgeCon7152} of Definition \ref{def:partIntCBoth}).
  	\item  The part is drawn on $c$ (Condition \ref{infcutedgeCon7153} Definition \ref{def:partIntCBoth}).
  	\end{itemize} 
   We obtain a drawing for each part similarly to Conditions \ref{infcutedgeCon712}--\ref{infcutedgeCon714}  in Definition \ref{def:glueEf2}, and then connect these parts together, from $uv_i$ to $uv_{i+1}$. Conditions \ref{def:edggluconD}--\ref{def:edggluconF} in Definition \ref{def:glueEf2} consider the last part, which is connected to $uv_{i+1}$. These conditions are similar to Conditions \ref{def:edggluconA}--\ref{def:edggluconC} and are needed for technical reasons (see the discussion before Definition \ref{def:partIntCBoth} in Section \ref{subsec:prpInoCu}). Remind, that we denote by ${\cal P}^*(\fin)$ the set of all almost straight line paths \footnote{Remind, that an almost straight-line path in $\fin$, is plane path, where i) the endpoints of the path are mapped to points in $\gis(\fin)$, ii) the internal vertices are mapped to grid points strictly inside $\fin$, and iii) every edge is mapped to the line segment $s$ connecting the images of their endpoints, and there exist $a,b\in \gps(\fin)$ such that $s$ is on $\ell(a,b)$. } in $\fin$.  Lastly, we delete any point from $\gis(\fin)\setminus \gps(\fin)$, accept for the first and the last points, in the path we obtained, in order to get a path from the set ${\cal P}^*(\fin)$. Observe that, every point $p_i$ we deleted is on $c$, and since $C$ is an info-cutter of $F$, $C$ exhibits {\bf Equality of $V^*$ Directions} with respect to $F$. Therefore, Condition \ref{con2DefDir} of Definition \ref{def:infCTurPoints2} is satisfied, so $p_{i-1},p_i$ and $p_{i+1}$ are on a straight line, and hence the path did not change. That is, $\pp(P)=\pp(P')$, where $P$ and $P'$ are the paths we obtained before and after deleting the points from $\gis(\fin)\setminus \gps(\fin)$, respectively.  E.g, consider the edge $\{uv_1,uv_2\}$ in  Figure~\ref{fig:glue3}. The drawing of $\{uv_1,uv_2\}$ obtained from $d_1$ and $d_2$ in Figure~\ref{fig:glue1} by gluing: (i) the drawing of $\{uv_1,uv_2\}$ in $d_2$ (ii) the drawing of $\{uv_1,uv_2\}$ in $d_1$ (iii) and the drawing of $\{uv_3,uv_4\}$ in $d_2$ (see Figure~\ref{fig:glue2}). Then, we deleted the points from $\gis(\fin)\setminus \gps(\fin)$, accept for the first and the last points, in the path we obtained (see Figure~\ref{fig:glue3}). We remind that for two sequences of points $P=(p_1,\ldots,p_\ell)$ and $Q=(q_1,\ldots,q_k)$, where $p_\ell=q_1$, we denote by $P_1\cdot P_2$ the sequence of points $(p_1,\ldots,p_\ell,q_2,\ldots,q_k)$.

%Let $P$ be an axis-parallel path (cycle). We denote by $P^{\mathsf{min}}$ the axis-parallel path (cycle) that is equal to $P$, and has the minimum number of vertices among the axis-parallel paths (cycles) that are equal to $P$. Observe that $P^{\mathsf{min}}$ is unique.

%\ref{def:infCTurPoints2a}

\begin{definition} [{\bf $\mathsf{Glue}$ of $\{uv_i,uv_{i+1}\}$ With Both Endpoints on $f$}] \label{def:glueEf2}
	Let $F=(f,d_f,E_f,$ $U_f,\mathsf{V^*Dir}_f)$ be an info-frame, let $C=(c,F_1=(f_1(c),d_{f_1},E_{f_1},U_{f_1},\mathsf{V^*Dir}_{f_1})$ $F_2=(f_2(c),d_{f_2},$ $E_{f_2},$ $U_{f_2},\mathsf{V^*Dir}_{f_2}))$ be an info-cutter of $F$, and let $d_1$ and $d_2$ be drawings of $F_1$ and $F_2$, respectively. Let $\{uv_i,uv_{i+1}\}\in E_f$ such that $uv_i,uv_{i+1}\in V(d_f)$. Then, {\em \bf$\mathsf{Glue}$ of $\{uv_i,uv_{i+1}\}$}, denoted by $d_{\mathsf{E}}(\{uv_i,uv_{i+1}\})$, is defined according the following cases.
	
	\begin{enumerate}
		\item If $\{\mathsf{Identify}_{d_f,d_{f_1}}(uv_i),\mathsf{Identify}_{d_f,d_{f_1}}(uv_{i+1})\}\in E_{f_1}$, then\\ $d_{\mathsf{E}}(\{uv_i,uv_{i+1}\})=$ $d_1(\{\mathsf{Identify}_{d_f,d_{f_1}}(uv_i),\mathsf{Identify}_{d_f,d_{f_1}}(uv_{i+1})\})$. \label{con1defglueedge}
		\item If $\{\mathsf{Identify}_{d_f,d_{f_2}}(uv_i),\mathsf{Identify}_{d_f,d_{f_2}}(uv_{i+1})\}\in E_{f_2}$, then\\ $d_{\mathsf{E}}(\{uv_i,uv_{i+1}\})=d_2(\{\mathsf{Identify}_{d_f,d_{f_2}}$ $(uv_i),\mathsf{Identify}_{d_f,d_{f_2}}(uv_{i+1})\})$.
		\item If $\{\mathsf{Identify}_{d_f,d_{f_1}^*}(uv_i),\mathsf{Identify}_{d_f,d_{f_1}^*}(uv_{i+1})\}\in E(d_{f_1}^*,\pp(c))$, then\\ $d_{\mathsf{E}}(\{uv_i,uv_{i+1}\})$ $=d_{f_1}^*(\{\mathsf{Identify}_{d_f,d_{f_1}^*})$.\label{con3defglueedge}
		\item Otherwise, let $\ell,j \in \mathbb{N}$ and $r\in[2]$ be defined as in Definition \ref{def:partIntCBoth}. We define $\mathsf{PartEdge}_t$, for every $0\leq t<\ell$, according to the following cases:\label{con4defglueedge}
		\begin{enumerate}
			\item If $r=1$ and $\{uv_{j+t},uv_{j+t+1}\}\in E_{f_1}$, then $\mathsf{PartEdge}_t=d_1(\{uv_{j+t},uv_{j+t+1}\})$; if $r=2$ and $\{\mathsf{Identify}_{d_{f_2},d_{f_1}}(uv_{j+t}),\mathsf{Identify}_{d_{f_2},d_{f_1}}(uv_{j+t+1})\}\in E_1$, then $\mathsf{PartEdge}_t=d_1(\{\mathsf{Identify}_{d_{f_2},d_{f_1}}(uv_{j+t}),\mathsf{Identify}_{d_{f_2},d_{f_1}}(uv_{j+t+1})\})$.\label{def:edggluconA} 
			\item If $r=2$ and $\{uv_{j+t},uv_{j+t+1}\}\in E_{f_2}$, then $\mathsf{PartEdge}_t=d_2(\{uv_{j+t},uv_{j+t+1}\})$; if $r=1$ and $\{\mathsf{Identify}_{d_{f_1},d_{f_2}}(uv_{j+t}),\mathsf{Identify}_{d_{f_1},d_{f_2}}(uv_{j+t+1})\}\in E_2$, then $\mathsf{PartEdge}_t=d_2(\{\mathsf{Identify}_{d_{f_1},d_{f_2}}(uv_{j+t}),\mathsf{Identify}_{d_{f_1},d_{f_2}}(uv_{j+t+1})\})$.\label{def:edggluconB} 
			\item If $\{uv_{j+t},uv_{j+t+1}\}\in E(d_{f_r}^*,\pp(c))$, then $\mathsf{PartEdge}_t=d_{f_r}^*(\{uv_{j+t},uv_{j+t+1}\})$.\label{def:edggluconC} 
	
%	\indent
	
	\indent
		In addition, we define $\mathsf{PartEdge}_\ell$ according to the following cases:

			\item If $r=1$ and $\mathsf{Identify}_{d_f,d_{f_1}}(uv_{i+1})=uv_{j+\ell+1}$, then  $\mathsf{PartEdge}_\ell=d_1(\{uv_{j+\ell},uv_{j+\ell+1}\})$; else, $r=2$, and then $\mathsf{PartEdge}_\ell=d_1(\{\mathsf{Identify}_{d_{f_2},d_{f_1}}(uv_{j+\ell}),\mathsf{Identify}_{d_f,d_{f_1}}(uv_{i+1})\})$.\label{def:edggluconD} 
			
			\item If $r=2$ and $\mathsf{Identify}_{d_f,d_{f_2}}(uv_{i+1})=uv_{j+\ell+1}$, then $\mathsf{PartEdge}_\ell=d_2(\{uv_{j+\ell},uv_{j+\ell+1}\})$; else, $r=1$, and then $\mathsf{PartEdge}_\ell=d_2(\{\mathsf{Identify}_{d_{f_1},d_{f_2}}(uv_{j+\ell}),\mathsf{Identify}_{d_f,d_{f_2}}(uv_{i+1})\})$.\label{def:edggluconE}  
			
			\item Otherwise, $\mathsf{Identify}_{d_f,d_{f_r}}(uv_{i+1})=uv_{j+\ell+1}$, and then\\  $\mathsf{PartEdge}_\ell=d_{f_r}^*(\{uv_{j+\ell},uv_{j+\ell+1}\})$.\label{def:edggluconF} 
		\end{enumerate}
		
		Now, let $P=\mathsf{PartEdge}_0\cdot \mathsf{PartEdge}_1\cdot \ldots \cdot \mathsf{PartEdge}_\ell=(p_1,\ldots,p_q)$.
		Let $d_{\mathsf{E}}(\{uv_i,uv_{i+1}\})$ be the path obtained from $P$ by deleting every $p_i\in \gis(\fin)\setminus \gps$ $(\fin)$ where $i\neq 1,q$.
	\end{enumerate}	
	
\end{definition}

The other definitions, for the drawings of the edges in Conditions \ref{def:glue4b}, \ref{def:glue4c} and \ref{def:gluecon5} of Definition \ref{def:glueEdg}, are similar: each one of them naturally correlates with its parallel info-cutter property (see Section \ref{subsec:prpInoCu}). In particular, we refer to the discussions before Definitions \ref{def:infCTurPoints2b}, \ref{def:partIntCU} and \ref{def:infCTurPoints3}, where we consider cases regarding the drawing of an edge $\{uv_i,uv_{i+1}\}$, which fit to the cases corresponding to Conditions \ref{def:glue4b}, \ref{def:glue4c} and \ref{def:gluecon5}, respectively. So, we do not state these definitions explicitly.

Now, we are ready to define the drawing $\mathsf{Glue}$.

\begin{definition}  [{\bf $\mathsf{Glue}$}] \label{def:glue}
	Let $F=(f,d_f,E_f,U_f,\mathsf{V^*Dir}_f)$ be an info-frame, let $C=(c,F_1=(f_1(c),d_{f_1},E_{f_1},U_{f_1},\mathsf{V^*Dir}_{f_1})$, $F_2=(f_2(c),d_{f_2},E_{f_2},U_{f_2},\mathsf{V^*Dir}_{f_2}))$ be an info-cutter of $F$, and let $d_1$ and $d_2$ be drawings of $F_1$ and $F_2$, respectively. Then, $\mathsf{Glue}(F,C,d_1,d_2)=(\mathsf{GlueVer}(F,C,d_1,$ $d_2),\mathsf{GlueEdg}(F,C,d_1,d_2))$. 
\end{definition}

Recall that our goal is to define a drawing $d$ of $F$. But first, observe that we use $d_1$ and $d_2$ in order to define $d=(d_{\mathsf{V}},d_{\mathsf{E}})$ (see Definitions \ref{def:glueVer} and \ref{def:glueEdg}). Each time we use $d_1$ or $d_2$ for a certain vertex or edge, we assume that this vertex or edge belongs to the domain of $d_1$ or $d_2$, respectively. These assumptions are valid due to the info-cutter properties (see Section \ref{subsec:prpInoCu}), as we will immediately show. Specifically, we say that definitions are {\em well defined} if the aforementioned assumptions are valid, and we begin by proving that $d_{\mathsf{V}}$ is well defined: 

\begin{lemma}\label{lem:welldefV}
	Let $F=(f,d_f,E_f,U_f,\mathsf{V^*Dir}_f)$ be an info-frame, let $C=(c,F_1=(f_1(c),d_{f_1},$ $E_{f_1},$ $U_{f_1},\mathsf{V^*Dir}_{f_1})$ $F_2=(f_2(c),d_{f_2},E_{f_2},U_{f_2},\mathsf{V^*Dir}_{f_2}))$ be an info-cutter of $F$, and let $d_1$ and $d_2$ be drawings of $F_1$ and $F_2$, respectively. Then, $\mathsf{GlueVer}(F,C,d_1,d_2)=d_{\mathsf{V}}$ (see Definition \ref{def:glueVer}) is well defined.
\end{lemma}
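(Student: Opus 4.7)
The plan is to verify the two places where $d_{\mathsf{V}}$ is specified can both be evaluated unambiguously and are mutually consistent. Condition \ref{def:glueVer1} of Definition \ref{def:glueVer} only references $d_f$, which is an explicit component of the info-frame $F$, so it is immediately well defined. The work therefore focuses on Condition \ref{def:glueVer2}: for every $v\in U_f$, we need $v\in V(d_1)\cup V(d_2)$ so that at least one of the two assignments makes sense, and we need to check there is no conflict with Condition \ref{def:glueVer1}.

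First, I would appeal to the \textbf{$U_f$-Partitioned} property (Definition \ref{def:infCTurPoints13}) satisfied by $C$, which gives $U_f=V(d_{f_1},\gi(c)\setminus\gi(f))\cup U_{f_1}\cup U_{f_2}$. I would then argue case by case:
if $v\in U_{f_1}$ then Condition \ref{infFramcon3} of Definition \ref{def:infFrDr} applied to the drawing $d_1$ of $F_1$ gives $v\in V(d_1)$;
if $v\in U_{f_2}$, symmetrically $v\in V(d_2)$;
if $v\in V(d_{f_1},\gi(c)\setminus\gi(f))$, then Condition \ref{infFramcon2} of Definition \ref{def:infFrDr} applied to $d_1$ yields $d_1(\pp(f_1(c)))=d_{f_1}(\pp(f_1(c)))$, hence $v\in V(d_1)$. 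So Condition \ref{def:glueVer2} evaluates in every case.

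Next, I would rule out conflicts. For a potential overlap between Conditions \ref{def:glueVer1} and \ref{def:glueVer2}, note that $F$ being an info-frame implies $U_f\cap V(d_f)=\emptyset$ (Condition \ref{definfFraCon3a} of Definition \ref{definfFraCon3}), so no vertex is assigned by both clauses. For the only remaining ambiguity inside Condition \ref{def:glueVer2}, namely a vertex $v\in U_f$ lying in $V(d_1)\cap V(d_2)$, the {\bf $U_f$-Partitioned} property forces $v\in V(d_{f_1},\gi(c)\setminus\gi(f))$ (since $U_{f_1}\cap U_{f_2}=\emptyset$ by Condition \ref{def:infCutCon62}, and $U_{f_1},U_{f_2}$ are disjoint from the points on the corresponding frames by Condition \ref{definfFraCon3a} applied to $F_1$ and $F_2$). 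I would then invoke the \textbf{Equality of Common Parts} property (Condition \ref{def:infCutCon5} of Definition \ref{def:infCTurPoints1}), namely $d_{f_1}=_{\mathsf{rename}}^{\gi(c)}d_{f_2}$, together with Conditions \ref{infFramcon2}--\ref{infFramcon3} of Definition \ref{def:infFrDr} applied to $d_1$ and $d_2$, to conclude $d_1(v)=d_{f_1}(v)=d_{f_2}(v)=d_2(v)$ because Condition \ref{definition:DERNCon1} of Definition \ref{def:DERN} asserts equality of drawings of vertices in $V$. Thus the two possible definitions agree, so $d_{\mathsf{V}}$ is a well defined function.

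I do not anticipate any serious obstacle here; the statement is essentially a bookkeeping check that unpacks the info-cutter axioms introduced in Section \ref{subsec:prpInoCu}. The only subtle point is the overlap case involving vertices drawn on $c$, which is precisely the scenario the notion of equality up to renaming was designed to handle, so the proof amounts to quoting the right conditions in the right order.
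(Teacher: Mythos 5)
Your proposal is correct and follows essentially the same route as the paper: the only substantive point is that a vertex $v\in U_f$ with $v\notin V(d_1)$ must lie in $V(d_2)$, which both you and the paper derive from the \textbf{$U_f$-Partitioned} property together with Condition \ref{infFramcon3} of Definition \ref{def:infFrDr}. Your additional consistency checks (disjointness of $V(d_f)$ and $U_f$, and agreement of $d_1$ and $d_2$ on $c$) are valid but not strictly needed, since Definition \ref{def:glueVer} is phrased as an if--then--else and so assigns each vertex a unique clause.
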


\begin{proof}
In Condition \ref{def:glueVer1}, for every $u\in V(d_f)$, $d_{\mathsf{V}}(u)=d_f(u)$, so $u$ is clearly in the domain of $d_f$.

In Condition \ref{def:glueVer2}, for every $u\in U_f$, if $u\in V(d_1)$, then $d_{\mathsf{V}}(u)=d_1(u)$, so $d_{\mathsf{V}}(u)$ is well defined in this case.

 If $u\notin V(d_1)$, then $d_{\mathsf{V}}(u)=d_2(u)$. We show that $d_{\mathsf{V}}(u)$ is well defined in this case as well.
 Observe that since $C$ is an info-cutter of $F$, then $C$ is {\bf $U_f$-Partitioned} with respect to $F$. Therefore, by Condition \ref{def:infCutCon6} of Definition \ref{def:infCTurPoints13}, $U_f=V(d_1,\gi(c)\setminus \gi(f)) \cup U_1\cup U_2$. Now, since $d_1$ is a drawing of $F_1$, then, by Condition \ref{infFramcon3} of Definition \ref{def:infFrDr}, $U_1$ is the set of vertices drawn strictly inside $f_1(c)$. So, $u\notin V(d_1)$ implies $u\notin U_1$. In addition, $u\notin V(d_1)$ implies $u\notin V(d_1,\gi(c)\setminus \gi(f))$, and then, $u\in U_2$. Now, since $d_2$ is a drawing of $F_2$, again by Condition \ref{infFramcon3} of Definition \ref{def:infFrDr}, $U_2$ is the set of vertices drawn strictly inside $f_1(c)$ in $d_2$, so $u\in V(d_2)$.  

Therefore, $d_{\mathsf{V}}$ is well defined.
\end{proof}

Now, we aim to prove that $d_{\mathsf{E}}$ is well defined. In particular, we show that the drawings of the edges in Definition \ref{def:glueEf2} are well defined; the proof for the other edges (corresponding to Conditions \ref{def:glue4b}, \ref{def:glue4c} and \ref{def:gluecon5} in Definition \ref{def:glueEdg}) is similar. We also prove (in the next lemma) three additional properties that will come in handy when we prove that $d$ is a drawing of $F$.

\begin{lemma}\label{lem:welldef1}
		Let $F=(f,d_f,E_f,U_f,\mathsf{V^*Dir}_f)$ be an info-frame, let $C=(c,F_1=(f_1(c),d_{f_1},$ $E_{f_1},$ $U_{f_1},\mathsf{V^*Dir}_{f_1})$, $F_2=(f_2(c),d_{f_2},E_{f_2},U_{f_2},\mathsf{V^*Dir}_{f_2}))$ be an info-cutter of $F$, and let $d_1$ and $d_2$ drawings of $F_1$ and $F_2$, respectively. Let $\{uv_i,uv_{i+1}\}\in E_f$ such that $uv_i,uv_{i+1}\in V(d_f)$. Then, the following conditions are satisfied:
		\begin{enumerate}
			\item {\bf$\mathsf{Glue}$ of $\{uv_i,uv_{i+1}\}$} is well defined. \label{gluedecon1}
				\item $d_{\mathsf{E}}(uv_i,uv_{i+1})\in {\cal P}^*(\fin)$. \label{gluedecon4}
			\item $d_{\mathsf{E}}(\{uv_i,uv_{i+1}\})$ is drawn strictly inside $f$ except for its endpoints. \label{gluedecon3}
			\item The endpoints of $d_{\mathsf{E}}(\{uv_i,uv_{i+1}\})$ are $d_{\mathsf{V}}(uv_i)$ and $d_{\mathsf{V}}(uv_{i+1})$ defined in Definition \ref{def:glueVer}. \label{gluedecon2}
		
		\end{enumerate} 
\end{lemma}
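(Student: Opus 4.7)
\medskip\noindent\textbf{Proof plan.} The strategy is to argue by cases on which of the four mutually exclusive conditions of Definition \ref{def:infCTurPoints2a} is satisfied for $\{uv_i,uv_{i+1}\}$, since $C$ exhibits the \textbf{Partition of $E_f$ With Both Endpoints on $f$} property. In the first three cases the conclusion is essentially immediate from the fact that $d_1$, $d_2$, and $d_{f_1}^*$ are (or are derived from) $G^*$-drawings, while the fourth case (partial intersection with $c$) is the technical heart of the lemma.

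\medskip\noindent\textbf{Well-definedness (Item \ref{gluedecon1}).} I will verify each of Conditions \ref{con1defglueedge}--\ref{con4defglueedge} of Definition \ref{def:glueEf2}. For Conditions \ref{con1defglueedge} and \ref{con3defglueedge}, I will invoke Condition \ref{infFramcon4} of Definition \ref{def:infFrDr} applied to $d_1$ (and the fact that $d_{f_1}^*$ is obtained from $d_{f_1}$ via $\mathsf{MakeAllVerSet}$), to conclude the referenced edge lies in $E(d_1)$ (resp.\ $E(d_{f_1}^*)$) so the evaluation $d_1(\cdot)$ (resp.\ $d_{f_1}^*(\cdot)$) makes sense; symmetrically for Condition \ref{con1defglueedge}$'$ with $d_2$. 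For Condition \ref{con4defglueedge}, I will use Definition \ref{def:partIntCBoth} to fix $\ell,j,r$, and then, for each $0\le t<\ell$, note that exactly one of Conditions \ref{infcutedgeCon7151}--\ref{infcutedgeCon7153} holds; a case analysis analogous to the one above shows $\mathsf{PartEdge}_t$ is well defined, and the same argument applied to Conditions \ref{infcutedgeCon7155}--\ref{infcutedgeCon7157} takes care of $\mathsf{PartEdge}_\ell$.

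\medskip\noindent\textbf{Items \ref{gluedecon3} and \ref{gluedecon2} (geometry of $d_{\mathsf{E}}$).} In Cases \ref{con1defglueedge} and \ref{con1defglueedge}$'$, the path comes from $d_1$ or $d_2$; since $d_1$ is bounded by $f_1(c)\subseteq f$ and the edge belongs to $E_{f_1}$, Condition \ref{infFramcon4} of Definition \ref{def:infFrDr} guarantees the path lies strictly inside $f_1(c)$ except at endpoints, hence strictly inside $f$ except at endpoints (symmetrically for Case \ref{con1defglueedge}$'$). In Case \ref{con3defglueedge}, the path lies on $c$, and $c$ lies inside $f$ with $\pp(c)\cap \pp(f)$ consisting of the two cutter endpoints; since the interior of the path lies strictly inside $f$ by Definition \ref{def:Cutter} of a cutter, we conclude. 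For the endpoint-matching (Item \ref{gluedecon2}), I will use the equality $d_f=_{\mathsf{rename}}^{\mathsf{Common}(f,f_1)}d_{f_1}$ from the \textbf{Equality of Common Parts} property to conclude that $d_1(\mathsf{Identify}_{d_f,d_{f_1}}(uv_i))=d_f(uv_i)=d_{\mathsf{V}}(uv_i)$, combined with Definition \ref{def:glueVer}. The analogous argument, using $\mathsf{Identify}_{d_f,d_{f_2}}$ and $\mathsf{Identify}_{d_f,d_{f_1}^*}$, handles the other two sub-cases.

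\medskip\noindent\textbf{Case \ref{con4defglueedge}: the main obstacle.} Here I need to show that the concatenation $P=\mathsf{PartEdge}_0\cdot\ldots\cdot\mathsf{PartEdge}_\ell$ is actually a well-defined simple curve, that its endpoints are $d_{\mathsf{V}}(uv_i)$ and $d_{\mathsf{V}}(uv_{i+1})$, and that after deleting intermediate points in $\gis(\fin)\setminus\gps(\fin)$ the result is an almost straight-line path. For the endpoint identification, I will trace the indices $j,j+1,\ldots,j+\ell,j+\ell+1$ through the identify functions, invoking \textbf{Equality of Common Parts} on the overlaps $\mathsf{Common}(f,f_r)$ and $\gi(c)$ so that the endpoint of $\mathsf{PartEdge}_t$ agrees with the starting point of $\mathsf{PartEdge}_{t+1}$; the first point of $\mathsf{PartEdge}_0$ is $d_{f_r}(uv_j)=d_f(uv_i)=d_{\mathsf{V}}(uv_i)$ and the last point of $\mathsf{PartEdge}_\ell$ is $d_{\mathsf{V}}(uv_{i+1})$ by the same reasoning. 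For Item \ref{gluedecon3}, each $\mathsf{PartEdge}_t$ lies strictly inside $f_1(c)$, strictly inside $f_2(c)$, or on $c$, so its interior lies strictly inside $f$, and the joining points between consecutive $\mathsf{PartEdge}$'s lie on $c\setminus f$, which is strictly inside $f$. The hardest sub-step is Item \ref{gluedecon4}: after deletion of intermediate points in $\gis(\fin)\setminus\gps(\fin)$, I need that the resulting sequence is an almost straight-line path. The key tool is Condition \ref{con2DefDir} of Definition \ref{def:infCTurPoints2} (from \textbf{Equality of $V^*$ Directions}), which asserts that at every such intermediate vertex $uv_q$ on $c$, the two segments incident to it in $d_{f_1}$ and $d_{f_2}$ are collinear; hence deleting $uv_q$ does not change $\pp(P)$, and the remaining polyline has its bends at grid points, satisfying the definition of $\mathcal{P}^*(\fin)$.
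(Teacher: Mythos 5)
Your proposal is correct and follows essentially the same route as the paper: a case analysis over the four mutually exclusive conditions of the \textbf{Partition of $E_f$ With Both Endpoints on $f$} property, using the fact that $d_1$ and $d_2$ are drawings of $F_1$ and $F_2$ for the first three cases, and in the partial-intersection case verifying each $\mathsf{PartEdge}_t$ separately, matching endpoints via the identify functions, and invoking Condition \ref{con2DefDir} of the \textbf{Equality of $V^*$ Directions} property to justify that deleting intermediate points in $\gis(\fin)\setminus\gps(\fin)$ preserves the path. If anything, you are slightly more explicit than the paper about the ${\cal P}^*(\fin)$ membership in the concatenation case, which the paper delegates to its discussion preceding Definition \ref{def:glueEf2}.
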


%	\item If $\{\mathsf{Identify}_{d_f,d_{f_1}}(uv_i),\mathsf{Identify}_{d_f,d_{f_1}}(uv_{i+1})\}\in E_{f_1}$, then $d_{\mathsf{E}}(\{uv_i,uv_{i+1}\})=d_1(\{\mathsf{Identify}_{d_f,d_{f_1}}$ $(uv_i),\mathsf{Identify}_{d_f,d_{f_1}}(uv_{i+1})\})$.
%\item If $\{\mathsf{Identify}_{d_f,d_{f_2}}(uv_i),\mathsf{Identify}_{d_f,d_{f_2}}(uv_{i+1})\}\in E_{f_2}$, then $d_{\mathsf{E}}(\{uv_i,uv_{i+1}\})=d_2(\{\mathsf{Identify}_{d_f,d_{f_2}}$ $(uv_i),\mathsf{Identify}_{d_f,d_{f_2}}(uv_{i+1})\})$.
%\item If $\{\mathsf{Identify}_{d_f,d_{f_1}^*}(uv_i),\mathsf{Identify}_{d_f,d_{f_1}^*}(uv_{i+1})\}\in E(d_{f_1}^*,\pp(c))$, then $d_{\mathsf{E}}(\{uv_i,uv_{i+1}\})=d_{f_1}^*(\{\mathsf{Identify}_{d_f,d_{f_1}^*})$.
%\item Otherwise, let $\ell,j \in \mathbb{N}$ and $r\in[2]$ defined in Definition \ref{def:partIntCBoth}. We define $\mathsf{PartEdge}_t$, for every $0\leq t<\ell$ according to the following cases:

\begin{proof}
	We prove the satisfaction of the first three conditions. It easy to see, in each of the following cases, that Condition \ref{gluedecon2} is satisfied as well, due to the definition of $d_{\mathsf{V}}$ in Definition \ref{def:glueVer}. Now, since $C$ is an info-cutter of $F$, $C$ exhibits the {\bf Partition of $E_f$ With Both Endpoints on $f$} property with respect to $F$. Therefore, exactly one of the following conditions is satisfied.
		\begin{enumerate}
		\item $\{\mathsf{Identify}_{d_f,d_{f_1}}(uv_i),\mathsf{Identify}_{d_f,d_{f_1}}(uv_{i+1})\}\in E_{f_1}$. In this case, since $d_1$ is a drawing of $F_1$, $\{\mathsf{Identify}_{d_f,d_{f_1}}(uv_i),\mathsf{Identify}_{d_f,d_{f_1}}(uv_{i+1})\}\in E(d_1)$. Thus, $d_{\mathsf{E}}(\{uv_i,uv_{i+1}\})$ is well defined, and $d_{\mathsf{E}}(uv_i,uv_{i+1})\in {\cal P}^*(\fin)$, so Conditions \ref{gluedecon1} and \ref{gluedecon4} are satisfied. In addition, again since $d_1$ is a drawing of $F_1$ and $\{\mathsf{Identify}_{d_f,d_{f_1}}(uv_i),$ $\mathsf{Identify}_{d_f,d_{f_1}}(uv_{i+1})\}\in E_{f_1}$, we have that $\{\mathsf{Identify}_{d_f,d_{f_1}}(uv_i),\mathsf{Identify}_{d_f,d_{f_1}}$ $(uv_{i+1})\}$ is drawn strictly inside $f_1$ except for its endpoints, which are drawn on $\gi(f)\cap \gi(f_1(c))$ in $d_1$. So, $\{uv_i,uv_{i+1}\}$ is drawn strictly inside $f$ except for its endpoints, which are drawn on $f$. Thus, Condition \ref{gluedecon3} is satisfied. \label{proofGlu1}
			\item $\{\mathsf{Identify}_{d_f,d_{f_2}}(uv_i),\mathsf{Identify}_{d_f,d_{f_2}}(uv_{i+1})\}\in E_{f_2}$. In this case, since $d_2$ is a drawing of $F_2$, $\{\mathsf{Identify}_{d_f,d_{f_2}}(uv_i),\mathsf{Identify}_{d_f,d_{f_2}}(uv_{i+1})\}\in E(d_2)$. Thus, $d_{\mathsf{E}}(\{uv_i,uv_{i+1}\})$ is well defined, and $d_{\mathsf{E}}(uv_i,uv_{i+1})\in {\cal P}^*(\fin)$, so Conditions \ref{gluedecon1} and \ref{gluedecon4} are satisfied. In addition, again since $d_2$ is a drawing of $F_2$ and $\{\mathsf{Identify}_{d_f,d_{f_2}}(uv_i),$ $\mathsf{Identify}_{d_f,d_{f_2}}(uv_{i+1})\}\in E_{f_2}$, we have that $\{\mathsf{Identify}_{d_f,d_{f_2}}(uv_i),\mathsf{Identify}_{d_f,d_{f_2}}$ $(uv_{i+1})\}$ is drawn strictly inside $f_2$ except for its endpoints, which are drawn on $\gi(f)\cap \gi(f_2(c))$ in $d_2$. So, $\{uv_i,uv_{i+1}\}$ is drawn strictly inside $f$ except for its endpoints, which are drawn on $f$. Thus, Condition \ref{gluedecon3} is satisfied.
	\label{proofGlu2}
		\item Since $\{\mathsf{Identify}_{f,d_{f_1}^*}(uv_i),\mathsf{Identify}_{f,d_{f_1}^*}(uv_{i+1})\}\in E(d_{f_1}^*,\pp(c))$, $d_{\mathsf{E}}(\{uv_i,uv_{i+1}\})$ $=d_{f_1}^*(\{\mathsf{Identify}_{d_f,d_{f_1}^*})$ is well defined and $d_{\mathsf{E}}(uv_i,uv_{i+1})\in {\cal P}^*(\fin)$. In addition,\\ $\{\mathsf{Identify}_{f,d_{f_1}^*}(uv_i),$ $\mathsf{Identify}_{f,d_{f_1}^*}(uv_{i+1})\}\in E(d_{f_1}^*,\pp(c))$ implies that $d_{\mathsf{E}}(\{uv_i,$ $uv_{i+1}\})$ is drawn strictly inside $f$ except for its endpoints. Thus, Conditions \ref{gluedecon1}, \ref{gluedecon4} and \ref{gluedecon3} are satisfied. \label{proofGlu3}
		\item Otherwise, $\{uv_i,uv_{i+1}\}$ partly intersects $c$. Then, let $\ell,j \in \mathbb{N}$ and $r\in[2]$ be as defined in Definition \ref{def:partIntCBoth}. We show that $\mathsf{PartEdge}_t$ is well defined, for every $0\leq t<\ell$, according to the following different cases:
		\begin{enumerate}
			\item If $r=1$ and $\{uv_{j+t},uv_{j+t+1}\}\in E_{f_1}$, then $\mathsf{PartEdge}_t=d_1(\{uv_{j+t},uv_{j+t+1}\})$; if $r=2$ and $\{\mathsf{Identify}_{d_{f_2},d_{f_1}}(uv_{j+t}),\mathsf{Identify}_{d_{f_2},d_{f_1}}(uv_{j+t+1})\}\in E_1$, then $\mathsf{PartEdge}_t=d_1(\{\mathsf{Identify}_{d_{f_2},d_{f_1}}(uv_{j+t}),\mathsf{Identify}_{d_{f_2},d_{f_1}}(uv_{j+t+1})\})$. In both cases, similarly to the case shown in Condition \ref{proofGlu1}, it follows that $\mathsf{PartEdge}_t$ is well defined.
			\item If $r=2$ and $\{uv_{j+t},uv_{j+t+1}\}\in E_{f_2}$, then $\mathsf{PartEdge}_t=d_2(\{uv_{j+t},uv_{j+t+1}\})$; if $r=1$ and $\{\mathsf{Identify}_{d_{f_1},d_{f_2}}(uv_{j+t}),\mathsf{Identify}_{d_{f_1},d_{f_2}}(uv_{j+t+1})\}\in E_2$, then $\mathsf{PartEdge}_t=d_2(\{\mathsf{Identify}_{d_{f_1},d_{f_2}}(uv_{j+t}),\mathsf{Identify}_{d_{f_1},d_{f_2}}(uv_{j+t+1})\})$. In both cases, similarly to the case shown in Condition \ref{proofGlu2}, it follows that $\mathsf{PartEdge}_t$ is well defined.
			\item If $\{uv_{j+t},uv_{j+t+1}\}\in E(d_{f_r}^*,\pp(c))$, then $\mathsf{PartEdge}_t=d_{f_r}^*(\{uv_{j+t},uv_{j+t+1}\})$, so it is easy to see that $\mathsf{PartEdge}_t$ is well defined.
			
			%	\indent
			
			\indent
			Now, we examine $\mathsf{PartEdge}_\ell$ according to the following different cases:
			
			\item If $r=1$ and $\mathsf{Identify}_{d_f,d_{f_1}}(uv_{i+1})=uv_{j+\ell+1}$, then $\mathsf{PartEdge}_\ell=d_1(\{uv_{j+\ell},uv_{j+\ell+1}\})$; else, $r=2$ and then $\mathsf{PartEdge}_\ell=d_1(\{\mathsf{Identify}_{d_{f_2},d_{f_1}}(uv_{j+\ell}),\mathsf{Identify}_{d_f,d_{f_1}}(uv_{i+1})\})$. In both cases, similarly to the case shown in Condition \ref{proofGlu1}, it follows that $\mathsf{PartEdge}_\ell$ is well defined.
			
			\item If $r=2$ and $\mathsf{Identify}_{d_f,d_{f_2}}(uv_{i+1})=uv_{j+\ell+1}$, then $\mathsf{PartEdge}_\ell=d_2(\{uv_{j+\ell},uv_{j+\ell+1}\})$; else, $r=1$ and then $\mathsf{PartEdge}_\ell=d_2(\{\mathsf{Identify}_{d_{f_1},d_{f_2}}(uv_{j+\ell}),\mathsf{Identify}_{d_f,d_{f_2}}(uv_{i+1})\})$. In both cases, similarly to the case shown in Condition \ref{proofGlu2}, it follows that $\mathsf{PartEdge}_\ell$ is well defined.
			
			\item Otherwise, $\mathsf{Identify}_{d_f,d_{f_r}}(uv_{i+1})=uv_{j+\ell+1}$ and then  $\mathsf{PartEdge}_\ell=d_{f_r}^*(\{uv_{j+\ell},uv_{j+\ell+1}\}$ $)$, so it is easy to see that $\mathsf{PartEdge}_\ell$ is well defined.
		\end{enumerate}
	
	Now, it is easy to see that for every for every $0\leq t<\ell$, if $\mathsf{PartEdge}_t=(p_1,\ldots p_k)$ and $\mathsf{PartEdge}_{t+1}=(q_1,\ldots q_s)$, then $p_k=q_1$. So, $P=\mathsf{PartEdge}_0\cdot \mathsf{PartEdge}_1\cdot \ldots \cdot \mathsf{PartEdge}_\ell$ is well defined, and thus $d_{\mathsf{E}}(\{uv_i,uv_{i+1}\})$ is well defined. In conclusion, we get that Condition \ref{gluedecon1} is satisfied.
	
	Moreover, recall that $d_{\mathsf{E}}(\{uv_i,uv_{i+1}\})$ is the path obtained from $P=\mathsf{PartEdge}_0\cdot \mathsf{PartEdge}_1\cdot \ldots \cdot \mathsf{PartEdge}_\ell=(p_1,\ldots,p_q)$ by deleting every\\ $p_i\in \gis(\fin)\setminus \gps(\fin)$ where $i\neq 1,q$. Therefore, it is easy to see that $d_{\mathsf{E}}(uv_i,uv_{i+1})\in {\cal P}^*(\fin)$, so Condition \ref{gluedecon4} is satisfied.

	 In addition, consider the following observations:
	 \begin{itemize}
	 	\item $\mathsf{PartEdge}_0$, similarly to Conditions \ref{proofGlu1}, \ref{proofGlu2} and \ref{proofGlu3}, is drawn strictly inside $f$, except at one of the endpoints: one of the endpoints is drawn on\\ $\gi(f_r)\cap \gi(f)$ (and therefore drawn on $f$), and the other is drawn on $\gi(c)\setminus \gi(f)$ (and therefore strictly inside $f$).
	 	\item Similarly, for every $0< t<\ell$,  $\mathsf{PartEdge}_t$ is drawn strictly inside $f$, as both of its endpoints are drawn on $\gi(c)\setminus \gi(f)$ (and therefore strictly inside $f$).
	 	\item $\mathsf{PartEdge}_\ell$, similarly to $\mathsf{PartEdge}_0$, is drawn strictly inside $f$, except at one of the endpoints: one of the endpoints is drawn on is drawn on $\gi(c)\setminus \gi(f)$ (and therefore strictly inside $f$), and the other is drawn on $\gi(f_r)\cap \gi(f)$ (and therefore drawn on $f$).
	 \end{itemize}   
		
		Finally, we get that $P=\mathsf{PartEdge}_0\cdot \mathsf{PartEdge}_1\cdot \ldots \cdot \mathsf{PartEdge}_\ell$ is strictly inside $f$, except for both of its endpoints, and therefore also $d_{\mathsf{E}}(\{uv_i,uv_{i+1}\})$, so Condition \ref{gluedecon3} is satisfied. 
	\end{enumerate}
This completes the proof of the lemma.
\end{proof}

The following lemma, Lemma \ref{lem:welldef2}, is the lemma analogous to Lemma \ref{lem:welldef1}, corresponding to the drawings of edges defined in Conditions  \ref{def:glue4b} and \ref{def:glue4c} of Definition \ref{def:glueEdg}. Observe that, in Condition \ref{gluedecon23} of Lemma \ref{lem:welldef2}, only one endpoint of the edge is drawn on $d_f$, as opposed to Condition \ref{gluedecon3} of Lemma \ref{lem:welldef1}. Since Lemma \ref{lem:welldef2} can be proved similarly to Lemma \ref{lem:welldef1}, we do not provide a proof.

\begin{lemma} \label{lem:welldef2}
	Let $F=(f,d_f,E_f,U_f,\mathsf{V^*Dir}_f)$ be an info-frame, let $C=(c,F_1=(f_1(c),d_{f_1},$ $E_{f_1},$ $U_{f_1},$ $\mathsf{V^*Dir}_{f_1}), F_2=(f_2(c),d_{f_2},E_{f_2},U_{f_2},\mathsf{V^*Dir}_{f_2}))$ be an info-cutter of $F$, and let $d_1$ and $d_2$ be drawings of $F_1$ and $F_2$, respectively. Let $\{uv_i,uv_{i+1}\}\in E_f$ such that exactly one among $uv_i$ and $uv_{i+1}$ is drawn on $d_f$. Then, the following conditions are satisfied:
	\begin{enumerate}
		\item {\bf$\mathsf{Glue}$ of $\{uv_i,uv_{i+1}\}$} is well defined. \label{gluedecon21}
		\item $d_{\mathsf{E}}(uv_i,uv_{i+1})\in {\cal P}^*(\fin)$. \label{gluedecon24}
		\item $d_{\mathsf{E}}(\{uv_i,uv_{i+1}\})$ is drawn strictly inside $f$ except for exactly one of its endpoints. \label{gluedecon23}
		\item The endpoints of $d_{\mathsf{E}}(\{uv_i,uv_{i+1}\})$ are $d_{\mathsf{V}}(uv_i)$ and $d_{\mathsf{V}}(uv_{i+1})$, defined in Definition \ref{def:glueVer}. \label{gluedecon22}
	\end{enumerate} 
\end{lemma}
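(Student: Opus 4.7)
The plan is to follow the same case-by-case structure as in the proof of Lemma \ref{lem:welldef1}, adapted to the two sub-types of edges that fall under Conditions \ref{def:glue4b} and \ref{def:glue4c} of Definition \ref{def:glueEdg}. By symmetry, it suffices to handle $\{uv_i,uv_{i+1}\}=\{u,uv_1\}$ with $u\in U_f$ (the case $\{uv_{\mathsf{index}(u,v)},v\}$ is identical up to reversal of the path). Since $C$ is an info-cutter of $F$, it exhibits {\bf Partition of $\{u,uv_1\}\in E_f$ With One Endpoint on $f$} (Definition \ref{def:infCTurPoints2b}), so exactly one of four conditions holds, and I would branch the proof accordingly. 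As in Lemma \ref{lem:welldef1}, the strategy for each branch will be: (i) cite the relevant part of the info-cutter property to guarantee that each piece invoked in the definition of $d_{\mathsf{E}}$ actually lies in the domain of $d_1$, $d_2$, or $d_{f_1}^*$, which gives well-definedness (Condition \ref{gluedecon21}); (ii) observe that the drawing returned is taken directly from a drawing of an info-frame of $F_1$, $F_2$, or from $d_{f_1}^*$, which establishes membership in ${\cal P}^*(\fin)$ (Condition \ref{gluedecon24}); and (iii) verify geometric containment (Condition \ref{gluedecon23}).

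For Conditions \ref{infcutedgeCon2b1}--\ref{infcutedgeCon2b3} the argument is essentially verbatim from Conditions \ref{proofGlu1}--\ref{proofGlu3} of Lemma \ref{lem:welldef1}, except that here the endpoint $uv_1$ is on $f$ while the endpoint $u$ is strictly inside $f$; so the containment property becomes ``strictly inside $f$ except for exactly one endpoint''. The location of $u$ is pinned down by the $U_f$-partition property: $u \in U_{f_1}$, $u \in U_{f_2}$, or $u \in V(d_{f_1},\gi(c))$, and in each case the corresponding edge belongs to $E_{f_1}$, $E_{f_2}$, or $E(d_{f_1}^*,\pp(c))$, respectively, so the piece we pick is drawn either strictly inside $f_1(c)$, strictly inside $f_2(c)$, or on $c$ (always strictly inside $f$ except at $d(uv_1)$ which sits on $f$).

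The main obstacle, and the only part where genuinely new bookkeeping is required, is Condition \ref{infcutedgeCon2b4}, i.e.\ when $\{u,uv_1\}$ partly intersects $c$ (Definition \ref{def:partIntCU}). Here I would proceed as in Case \ref{con4defglueedge} of Lemma \ref{lem:welldef1}: for every $1\le t<\ell$ each piece $\mathsf{PartEdge}_t$ joining $uv_t$ and $uv_{t+1}$ falls into one of the three sub-cases \ref{infcutedgeCon4ba1}--\ref{infcutedgeCon4ba3}, which place it in $E_{f_1}$, $E_{f_2}$, or $E(d_{f_1}^*,\pp(c))$; each such piece is well defined, lies in ${\cal P}^*(\fin)$, and sits strictly inside $f$ because both of its endpoints lie on $\gi(c)\setminus\gi(f)$. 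For the final piece attaching $u$ to $uv_1$, the five sub-cases \ref{infcutedgeCon4ba}--\ref{infcutedgeCon4be} distinguish whether $u$ is in $U_{f_1}$, $U_{f_2}$, or $V(d_{f_1},\gi(c))$; again the info-cutter property guarantees that the appropriate $d_1$, $d_2$, or $d_{f_1}^*$ entry exists.

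Finally, concatenating the pieces with the ``$\cdot$'' operation yields a legitimate sequence of points, because consecutive pieces share their common endpoint (this uses the chaining built into Definition \ref{def:partIntCU}), and the final cleanup step---deleting points in $\gis(\fin)\setminus\gps(\fin)$ interior to the path---preserves ${\cal P}^*(\fin)$ membership by the same straight-line argument used in Lemma \ref{lem:welldef1} (appealing to Condition \ref{con2DefDir} of {\bf Equality of $V^*$ Directions}). The endpoint condition (Condition \ref{gluedecon22}) then follows directly from Definition \ref{def:glueVer}, since the first and last points of the constructed path are exactly $d_{\mathsf{V}}(u)$ and $d_{\mathsf{V}}(uv_1)$ by construction. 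The case $\{uv_{\mathsf{index}(u,v)},v\}$ is handled identically, using Definitions \ref{def:infCTurPoints2c} and \ref{def:partIntV} in place of \ref{def:infCTurPoints2b} and \ref{def:partIntCU}.
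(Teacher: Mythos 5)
Your proposal is correct and follows exactly the route the paper intends: the paper explicitly omits the proof of this lemma, stating only that it "can be proved similarly to Lemma \ref{lem:welldef1}", and your case analysis via the {\bf Partition of $\{u,uv_1\}\in E_f$ With One Endpoint on $f$} property (and its symmetric counterpart for $\{uv_{\mathsf{index}(u,v)},v\}$) is precisely the intended adaptation. The one substantive adjustment you identify---that the piece containing $uv_1$ is the only one meeting $f$, while the piece containing $u$ stays strictly inside, yielding "except for exactly one endpoint"---is the right place where the argument genuinely differs from Lemma \ref{lem:welldef1}.
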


Similarly, the following lemma is analogous to Lemma \ref{lem:welldef1}, corresponding to the drawings of edges defined in Condition \ref{def:gluecon5} of Definition \ref{def:glueEdg}. In Condition \ref{gluedecon33} of this lemma, the edge $\{u,v\}$ is drawn strictly inside $f$.

\begin{lemma} \label{lem:welldef3}
	Let $F=(f,d_f,E_f,U_f,\mathsf{V^*Dir}_f)$ be an info-frame, let $C=(c,F_1=(f_1(c),d_{f_1},$ $E_{f_1},$ $U_{f_1},$ $\mathsf{V^*Dir}_{f_1}), F_2=(f_2(c),d_{f_2},E_{f_2},U_{f_2},\mathsf{V^*Dir}_{f_2}))$ be an info-cutter of $F$, and let $d_1$ and $d_2$ be drawings of $F_1$ and $F_2$, respectively. Let $u,v\in U_f$ such that $\{u,v\}\in E$ and $V(d_f)\cap V^*_{\{u,v\}}=\emptyset$. Then, the following conditions are satisfied:
	\begin{enumerate}
		\item{\bf$\mathsf{Glue}$ of $\{u,v\}$} is well defined. \label{gluedecon31}
			\item $d_{\mathsf{E}}(uv_i,uv_{i+1})\in {\cal P}^*(\fin)$. \label{gluedecon34}
	\item $d_{\mathsf{E}}(\{u,v\})$ is drawn strictly inside $f$. \label{gluedecon33}
		\item The endpoints of $d_{\mathsf{E}}(\{u,v\})$ are $d_{\mathsf{V}}(u)$ and $d_{\mathsf{V}}(v)$, defined in Definition \ref{def:glueVer}. \label{gluedecon32}
	\end{enumerate} 
\end{lemma}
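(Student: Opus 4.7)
\medskip\noindent\textbf{Proof Plan for Lemma \ref{lem:welldef3}.} The approach is to mirror the case analysis of Lemma \ref{lem:welldef1}, but now appealing to the {\bf Partition of $\{u,v\}\in E$ Drawn Strictly Inside $f$} property (Definition \ref{def:infCTurPoints3}) instead of {\bf Partition of $E_f$ With Both Endpoints on $f$}. Since $C$ is an info-cutter of $F$, exactly one of Conditions \ref{infCutcon101}--\ref{infCutcon104} of Definition \ref{def:infCTurPoints3} is satisfied for $\{u,v\}$, and I will treat each case separately, verifying all four requirements (well-definedness, membership in $\mathcal{P}^*(\fin)$, being drawn strictly inside $f$, and having the right endpoints) in each case. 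Condition \ref{gluedecon32} (endpoints agree with $d_{\mathsf{V}}(u)$ and $d_{\mathsf{V}}(v)$) will follow throughout from the definition of $d_{\mathsf{V}}$ in Definition \ref{def:glueVer}, because whichever of $d_1, d_2$ we use to define $d_{\mathsf{E}}(\{u,v\})$, the same drawing maps $u$ and $v$ to the same points used by $d_{\mathsf{V}}$.

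In Case \ref{infCutcon101}, we have $u,v\in U_{f_1}$ and no turning point of $\{u,v\}$ on $c$, so by Condition \ref{infFramcon5} of Definition \ref{def:infFrDr} applied to $d_1$, $\{u,v\}\in E(d_1)$ and $d_{\mathsf{E}}(\{u,v\})=d_1(\{u,v\})$ is well defined, lies in $\mathcal{P}(\fin)\subseteq \mathcal{P}^*(\fin)$, and is drawn strictly inside $f_1(c)\subseteq f$. Case \ref{infCutcon102} is symmetric using $d_2$ (noting that the hypothesis $V(d_{f_1},\gi(c))\cap V^*_{\{u,v\}}=\emptyset$ together with the {\bf Equality of Common Parts} property yields $V(d_{f_2},\gi(c))\cap V^*_{\{u,v\}}=\emptyset$, which lets us invoke Condition \ref{infFramcon5} for $d_2$). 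In Case \ref{infCutcon103}, the edge is drawn on $c$, so $d_{\mathsf{E}}(\{u,v\})$ equals the subpath of $d_{f_1}$ on $c$, which is strictly inside $f$ (as $c$ minus its endpoints is strictly inside $f$).

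The main obstacle is Case \ref{infCutcon104}, where $\{u,v\}$ partly intersects $c$ as in Definition \ref{def:partIntUV}. Here I will follow exactly the template of Case \ref{con4defglueedge} in the proof of Lemma \ref{lem:welldef1}: define the pieces $\mathsf{PartEdge}_1,\ldots,\mathsf{PartEdge}_{\ell-1}$ for the middle edges $\{uv_t,uv_{t+1}\}$ exactly as there (Conditions \ref{infcutedgeCon51}--\ref{infcutedgeCon53}), and additionally define $\mathsf{PartEdge}_0$ (the $u$-to-$uv_1$ piece) and $\mathsf{PartEdge}_\ell$ (the $uv_\ell$-to-$v$ piece) from the corresponding cases (Conditions \ref{infcutedgeCon1041}--\ref{infcutedgeCon1045} and \ref{infcutedgeCon1051}--\ref{infcutedgeCon1055}). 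For each case, well-definedness follows because the named edge lies in $E_{f_1}\subseteq E(d_1)$, in $E_{f_2}\subseteq E(d_2)$, or in $E(d_{f_1},\pp(c))$, via Condition \ref{infFramcon4} applied to $d_1$ or $d_2$. The key extra verification compared with Lemma \ref{lem:welldef1} is at the boundary pieces $\mathsf{PartEdge}_0$ and $\mathsf{PartEdge}_\ell$: one of the endpoints is now a vertex of $V$ strictly inside $f$ (not on $f$), and I must check that $u$ (resp.\ $v$) belongs to the same $U_{f_r}$ as the piece is assigned to or, if it is on $c$, to $V(d_{f_1},\gi(c))$—this is exactly guaranteed by the three‐way alternative in the definition of partial intersection. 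After concatenation $P=\mathsf{PartEdge}_0\cdot\mathsf{PartEdge}_1\cdots\mathsf{PartEdge}_\ell$, endpoints match consecutively, so the concatenation is a well-defined sequence of points; middle pieces are drawn strictly inside $f$ (their endpoints lie in $\gi(c)\setminus\gi(f)$, hence strictly inside $f$), and the two boundary pieces are drawn strictly inside $f$ (including their endpoints $d_{\mathsf{V}}(u)$ and $d_{\mathsf{V}}(v)$, since $u,v\in U_f$ are drawn strictly inside $f$). Finally, deleting the intermediate points in $\gis(\fin)\setminus\gps(\fin)$ (which, by the {\bf Equality of $V^*$ Directions} property, lie on straight subsegments of $P$) yields an element of $\mathcal{P}^*(\fin)$ without altering the pointset, establishing all four conditions.
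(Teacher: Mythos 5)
Your proposal is correct and follows exactly the route the paper intends: the paper omits the proof of this lemma, stating only that it is analogous to Lemma~\ref{lem:welldef1}, and your argument is precisely that analogue, substituting the {\bf Partition of $\{u,v\}\in E$ Drawn Strictly Inside $f$} property and Definition~\ref{def:partIntUV} for their counterparts and correctly handling the new boundary pieces whose endpoints are vertices of $V$ in $U_f$ rather than points on $f$.
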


Now, towards achieving our goal proving that $d$, defined in Definition \ref{def:glue}, is a drawing of $F$, we first prove that $d$ is a $G^*$-drawing: 

\begin{lemma}
	Let $F=(f,d_f,E_f,U_f,\mathsf{V^*Dir}_f)$ be an info-frame, let $C=(c,F_1=(f_1(c),d_{f_1},$ $E_{f_1},$ $U_{f_1},\mathsf{V^*Dir}_{f_1}), F_2=(f_2(c),d_{f_2},E_{f_2},U_{f_2},\mathsf{V^*Dir}_{f_2}))$ be an info-cutter of $F$, and let $d_1$ and $d_2$ be drawings of $F_1$ and $F_2$, respectively. Then, $\mathsf{Glue}(F,C,d_1,d_2)=d$ is a $G^*$-drawing.
\end{lemma}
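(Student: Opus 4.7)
The plan is to verify in turn each of the six conditions of Definition \ref{def:gstdr} for the glued drawing $d = (d_{\mathsf{V}}, d_{\mathsf{E}})$. Throughout, I would rely on the fact that $d_f$ is a $G^*$-drawing (since $F$ is an info-frame), that $d_1$ and $d_2$ are $G^*$-drawings (since they are drawings of $F_1$ and $F_2$), and on the partition properties exhibited by the info-cutter $C$ with respect to $F$. Lemma \ref{lem:welldefV} already guarantees that $d_{\mathsf{V}}$ is well defined, and Lemmas \ref{lem:welldef1}, \ref{lem:welldef2}, \ref{lem:welldef3} guarantee that each edge image is well defined, lies in ${\cal P}^*(\fin)$ (or ${\cal P}(\fin)$), has the correct endpoints $d_{\mathsf{V}}(\cdot)$, and is drawn inside $f$ with the prescribed behavior at its endpoints.

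For Condition \ref{G*drawcon10}, I would check validity of $(V(d), E(d))$ against Definition \ref{def:ValidPair}. Membership in $V \cup V^*$ and the endpoint structure of edges follow directly from the construction of $\mathsf{GlueVer}$ and $\mathsf{GlueEdg}$. The dichotomy in Condition \ref{con:ValidPair2} is handled by case analysis on $\{u,v\} \in E$, using the four edge-partition properties of $C$: if some vertex of $V^*_{\{u,v\}}$ appears in $V(d_f)$ or in $V(d_{f_1}, \gi(c)\setminus \gi(f))$, then $\{u,v\} \notin E(d)$ and the labels $uv_1, \ldots, uv_{\mathsf{index}(u,v)}$ are produced contiguously and correctly by the ordering conventions of Definitions \ref{def:partIntCBoth}, \ref{def:partIntCU}, \ref{def:partIntV}, and \ref{def:partIntUV}; otherwise $\{u,v\} \in E(d)$ by Condition \ref{infFramcon5} of Definition \ref{def:infFrDr} combined with the {\bf Partition of $\{u,v\}\in E$ Drawn Strictly Inside $f$} property of $C$. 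Conditions \ref{G*drawcon2} and \ref{G*drawcon3} then follow immediately: every vertex image is inherited from $d_f$, $d_1$, or $d_2$, hence lies in $\gps(\fin)$ or $\gis(\fin)$ as required, and every edge image is shown to lie in ${\cal P}(\fin)$ or ${\cal P}^*(\fin)$ by Lemmas \ref{lem:welldef1}--\ref{lem:welldef3} (in the $\mathsf{PartEdge}$ case, the intermediate deletion of points in $\gis(\fin)\setminus\gps(\fin)$ preserves the straight-line structure by Condition \ref{con2DefDir} of Definition \ref{def:infCTurPoints2}).

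The main obstacles are Conditions \ref{G*drawcon7}, \ref{G*drawcon8}, and \ref{G*drawcon9} — the non-collision requirements — because they cannot be read off from any single one of $d_f$, $d_1$, $d_2$: one must rule out collisions created by gluing. My approach would be to localize each potential collision to the closed interior of $f_1(c)$ or of $f_2(c)$. Since Lemmas \ref{lem:welldef1}--\ref{lem:welldef3} place every image of an edge of $d$ inside $f$ and, modulo the cutter $c$, inside exactly one of the two sub-frames, any hypothetical bad configuration (two $V^*$-vertices at the same point; a vertex on an edge; two edges crossing internally) either lies entirely in $f_r(c)$ for some $r \in [2]$, or else touches $c$. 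In the first case, after running $\mathsf{MakeAllVerSet}$ to reconcile the labels on the shared boundary via Observation \ref{obs:makever3}, the configuration would already be present in the corresponding $G^*$-drawing $d_1$ or $d_2$, contradicting its $G^*$-drawing property; in the case of a configuration contained in $\pp(f)$, it would already be present in $d_f$.

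The principal subtlety I expect is precisely at the cutter $c$, where $d_1$ and $d_2$ are glued: pieces of the same edge arriving at $c$ from opposite sides must concatenate without duplicating $V^*$-vertices or creating spurious crossings. This is exactly what Conditions \ref{def:infCutCon3}--\ref{def:infCutCon5} of Definition \ref{def:infCTurPoints1} (equality up to renaming across $c$), Condition \ref{con2DefDir} of Definition \ref{def:infCTurPoints2} (collinearity of $V^*$-directions on $c$), and the matching cases of Definitions \ref{def:partIntCBoth}, \ref{def:partIntCU}, \ref{def:partIntV}, \ref{def:partIntUV} (consistent choice of side, edge, and labeling for each $\mathsf{PartEdge}_t$) are tailored to enforce. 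Using these, the concatenation $\mathsf{PartEdge}_0 \cdot \mathsf{PartEdge}_1 \cdots \mathsf{PartEdge}_\ell$ of Definition \ref{def:glueEf2} matches endpoints by construction, and the image of each edge of $d$ is uniquely reconstructed from pieces belonging to $d_1$, $d_2$, or $d_f$, so every non-collision property in $d$ reduces to the corresponding non-collision property in one of $d_f$, $d_1$, $d_2$, completing the verification.
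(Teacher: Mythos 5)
Your proposal is correct and follows essentially the same route as the paper's proof: both verify the conditions of Definition \ref{def:gstdr} one at a time, lean on Lemmas \ref{lem:welldefV} and \ref{lem:welldef1}--\ref{lem:welldef3} for well-definedness and the path/endpoint properties, and reduce the non-collision conditions to those of $d_f$, $d_1$, $d_2$ by noting that the $V^*$-vertices of $d$ coincide with those of $d_f$ and that $\pp(d_1)\cap\pp(d_2)\subseteq\pp(c)$ with $d_1=^{\gi(c)}_{\mathsf{rename}}d_2$. The only cosmetic difference is that the paper handles validity via the direct identification $V^*_{\{u,v\}}\cap V(d)=V^*_{\{u,v\}}\cap V(d_f)$ rather than your case analysis on the partition properties, but both arguments land in the same place.
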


\begin{proof}
	First, in Lemma \ref{lem:welldefV}, we proved that $d_{\mathsf{V}}$ is well defined. In addition, due to Lemmas \ref{lem:welldef1}--\ref{lem:welldef3}, $d_{\mathsf{E}}$ is well defined. Thus, $d=(d_{\mathsf{V}},d_{\mathsf{E}})$ is well defined. Moreover, due to the forth conditions of Lemmas \ref{lem:welldef1}--\ref{lem:welldef3}, the drawing defined by $d_{\mathsf{E}}$, for each edge $e$, connects the two endpoints of $e$; so, $d$ is a drawing. Now, we show that $d$ is a $G^*$-drawing (according to Definition \ref{def:gstdr}). First, we show that $(V(d),E(d))$ is valid by showing that the conditions of Definition \ref{def:ValidPair} hold:
	\begin{enumerate}
		\item It is easy to see that $V(d)\subset V\cup V^*$ and $E(d)\subset E\cup E^*$ where the endpoints of the edges in $E(d)$ belong to $V(d)$, so Condition \ref{con:ValidPair1} is satisfied.
		\item Let $u,v \in V(d)$ such that $\{u,v\} \in E$, and assume that $U\cap V^*_{\{u,v\}}\neq\emptyset$. We aim to prove that Conditions \ref{con:ValidPair221}, \ref{con:ValidPair222} and \ref{con:ValidPair223} are satisfied.
		\begin{enumerate}
			\item Observe that $V^*_{\{u,v\}}\cap V(d)=V^*_{\{u,v\}}\cap V(d_f)$. So, since $d_f$ is a $G^*$-drawing, $V(d_f)\cap V^*_{\{u,v\}} = \{uv_1,uv_2,\ldots ,uv_{\mathsf{index}(u,v)}\}$ for some $\mathsf{index}(u,v)\in \mathbb{N}$, and hence $V(d)\cap V^*_{\{u,v\}} = \{uv_1,uv_2,\ldots ,uv_{\mathsf{index}(u,v)}\}$. Thus, Condition \ref{con:ValidPair221} holds.
			\item Similarly,  $E(d)\cap E^*_{\{u,v\}}=E(d_f)\cap E^*_{\{u,v\}}$. Therefore, since $d_f$ is a $G^*$-drawing, $E(d_f)\cap E^*_{\{u,v\}}\subseteq \{\{u,uv_1\}\}\cup \{\{uv_j,uv_{j+1}\}~|~1\leq j\leq \mathsf{index}(u,v)-1\}\cup \{\{uv_{\mathsf{index}(u,v)},v\}\}$. So, $E(d)\cap E^*_{\{u,v\}}\subseteq \{\{u,uv_1\}\}\cup \{\{uv_j,uv_{j+1}\}~|~1\leq j\leq \mathsf{index}(u,v)-1\}\cup \{\{uv_{\mathsf{index}(u,v)},v\}\}$. Thus, Condition \ref{con:ValidPair222} holds.
			\item Since $d_f$ is a $G^*$-drawing, $\{u,v\}\notin E(d_f)$. Since $F$ is an info-frame, by Definition \ref{def:infFr}, $\{u,v\}\notin E_f$. Now, since we assume that $U\cap V^*_{\{u,v\}}\neq\emptyset$, we do not add $\{u,v\}$ to $d$ by the definition of $\mathsf{Glue}$ (Definition \ref{def:glue}) in Condition \ref{def:gluecon5}. Therefore, $\{u,v\}\notin E(d)$, so Condition \ref{con:ValidPair223} holds.
		\end{enumerate}

	\end{enumerate}
	Thus, $(V(d),E(d))$ is valid, so Condition \ref{G*drawcon10}  of Definition \ref{def:glue} is satisfied.
	Now, observe that for every $u\in V(d)\cap V$, $d(u)=d_1(d)$ or $d(u)=d_1(d)$. So, since $d_1$ and $d_2$ are $G^*$-drawings, $d(u)\in \gps(f_{\mathsf{init}})$. Similarly, for every $u\in V(d)\cap V^*$, $d(u)\in \gis(f_{\mathsf{init}})$, so Condition \ref{G*drawcon2} is satisfied. In addition, due to the second conditions of Lemmas \ref{lem:welldef1}--\ref{lem:welldef3}, for every $\{uv_i,uv_{i+1}\}\in E(d)\cap E^*$, $d(\{uv_i,uv_{i+1}\})\in {\cal P}^*(f_{\mathsf{init}})$. Now, let $\{u,v\}\in E(d)\cap E$. Again, $d(\{u,v\})\in {\cal P}^*(f_{\mathsf{init}})$, due to the second conditions of Lemmas \ref{lem:welldef1}--\ref{lem:welldef3}. Furthermore, due to the forth conditions of Lemmas \ref{lem:welldef1}--\ref{lem:welldef3}, the first and the last points of $E(\{u,v\})$ are $d(u)$ and $d(v)$, and as we saw, $d(u),d(v)\in \gps(\fin)$. Thus, $d(\{u,v\})\in {\cal P}(f_{\mathsf{init}})$, and so Condition \ref{G*drawcon3}  is satisfied. 
	
	Now, notice that $V^*_{\{u,v\}}\cap V(d)=V^*_{\{u,v\}}\cap V(d_f)$, $d_f$ is a $G^*$-drawing, and for every $u\in V^*_{\{u,v\}}\cap V(d)$, $d(u)=d_f(u)$. Therefore, for every $uv_i,uv_j\in V^*$ such that $i\neq j$, $d(uv_i)\neq d(uv_j)$, so Condition \ref{G*drawcon7}  is satisfied. Similarly, observe that $d=^{\pp(f)}d_f$, and in addition, $d_f$ is a $G^*$-drawing and $V^*_{\{u,v\}}\cap V(d)=V^*_{\{u,v\}}\cap V(d_f)$. Therefore, for every $uv_j\in V(d)$ and $\{uv_i,uv_{i+1}\}\in E(d)$ such that $j\neq i,i+1$, $uv_j$ is not drawn on $d(\{uv_i,uv_{i+1}\})$, so Condition \ref{G*drawcon8}  is satisfied. Now, observe that the drawing of an edge $\{uv_i,uv_{i+1}\}\in E^*$, defined by the $\mathsf{Glue}$ function, is a concatenation of drawings of different edges form $E^*_{\{u,v\}}$, drawn in $d_1$ or $d_2$. In addition, $d_1$ and $d_2$ are $G^*$-drawings, $\pp(d_1)\cap \pp(d_2)\subseteq \pp(c)$ and $d_1=^{\gi(c)}_\mathsf{rename} d_2$. Thus, every two different edges $\{uv_i,uv_{i+1}\}, \{uv_j,uv_{j+1}\}$,  are non-intersecting in $d$, so Condition \ref{G*drawcon9} is satisfied.  
	
	We have proved that the conditions of Definition \ref{def:gstdr} are satisfied, so $d$ is a $G^*$-drawing.
\end{proof}

Now, we are ready to prove that $d$ is a drawing of $F$:

\begin{lemma}\label{lem:glue}
	Let $F=(f,d_f,E_f,U_f,\mathsf{V^*Dir}_f)$ be an info-frame, let $C=(c,F_1=(f_1(c),d_{f_1},$ $E_{f_1},$ $U_{f_1},\mathsf{V^*Dir}_{f_1}), F_2=(f_2(c),d_{f_2},E_{f_2},U_{f_2},\mathsf{V^*Dir}_{f_2}))$ be an info-cutter of $F$, and let $d_1$ and $d_2$ be drawings of $F_1$ and $F_2$, respectively. Then, $\mathsf{Glue}(F,C,d_1,d_2)=d$ is a drawing of $F$.
\end{lemma}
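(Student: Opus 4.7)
\medskip\noindent\textbf{Proof Proposal.} The plan is to verify, one by one, the six conditions of Definition~\ref{def:infFrDr} for $d=\mathsf{Glue}(F,C,d_1,d_2)$. Since we have already established (in the previous lemma) that $d$ is a $G^*$-drawing and that the per-edge drawings produced by $\mathsf{GlueEdg}$ are well defined, lie in ${\cal P}^{*}(\fin)$, and have the ``correct'' endpoints (via Lemmas~\ref{lem:welldef1}--\ref{lem:welldef3}), most of the conditions will follow by organizing the information already supplied by the definitions of $d_{\mathsf V}$ and $d_{\mathsf E}$ together with the info-cutter properties of Section~\ref{subsec:prpInoCu}.

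First I would handle the ``skeletal'' Conditions \ref{infFramcon1} and \ref{infFramcon2}. For Condition~\ref{infFramcon1}, every vertex of $d$ comes either from $V(d_f)$ (hence is drawn on $f$ by construction) or from $U_f$ via $d_1$ or $d_2$; in the latter case $d_{\mathsf V}(v)=d_i(v)$ is strictly inside $f_i(c)$ and hence inside $f$, so in particular $V(d)\cap V^*\subseteq V(d_f)$ lies on $f$. Every edge drawing is either the drawing given by $d_f$, or is built from subcurves of $d_1,d_2,d_{f_1}^*,d_{f_2}^*$, each of which lies inside $f_1(c)$ or $f_2(c)$ and hence inside $f$; this uses the third claim of each of Lemmas~\ref{lem:welldef1}--\ref{lem:welldef3}. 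Condition~\ref{infFramcon2} follows immediately from Condition~\ref{def:gleEdgCon1} of Definition~\ref{def:glueEdg} and Condition~\ref{def:glueVer1} of Definition~\ref{def:glueVer}, which copy $d_f$ verbatim on $\pp(f)$.

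Conditions \ref{infFramcon3} and \ref{infFramcon4} are the core technical content. For Condition~\ref{infFramcon3}, I would use the {\bf $U_f$-Partitioned} property together with the fact that $d_1$, $d_2$ draw exactly $U_{f_1}$, $U_{f_2}$ strictly inside $f_1(c), f_2(c)$, and that $V(d_1,\gi(c)\setminus\gi(f))\cap V$ are drawn strictly inside $f$. A short inclusion-in-both-directions argument then gives that the set of vertices strictly inside $f$ in $d$ equals $U_f$. For Condition~\ref{infFramcon4} I would split along the four clauses of Definition~\ref{def:glueEdg}: each $e\in E_f$ is placed into $E(d)$ by construction, and the third conclusion of Lemmas~\ref{lem:welldef1}--\ref{lem:welldef3} says the resulting drawing lies strictly inside $f$ except at the appropriate endpoints; conversely, any edge of $d$ that is drawn strictly inside $f$ except at a point on $f$ must come from $E_f$, because the only other sources of edges in $d$ are $E(d_f)$ (drawn on $f$) and edges between interior vertices coming from Condition~\ref{def:gluecon5} (drawn strictly inside $f$).

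Finally, Condition~\ref{infFramcon5} is immediate from Condition~\ref{def:gluecon5} of Definition~\ref{def:glueEdg}: whenever $u,v\in U_f$ with $V(d_f)\cap V^*_{\{u,v\}}=\emptyset$, we place $\{u,v\}$ into $E(d)$ via the ``drawn strictly inside $f$'' clause. Condition~\ref{infFramcon6} is where I expect the real work. The only vertices $uv_i$ of $d$ with $d(uv_i)\in\gis(\fin)\setminus\gps(\fin)$ are on $f$ and come from $d_f$, so $\mathsf{V^*Dir}_f(uv_i)$ is the one inherited from $F$. The unique neighbour $z$ of $uv_i$ in $E_f$ (guaranteed by Definition~\ref{def:validDir}) is drawn via one of the cases of Definition~\ref{def:glueEdg}, each of which ultimately reads off the first segment of the drawing of $\{uv_i,z\}$ from either $d_1$ or $d_2$; by Condition~\ref{infFramcon6} applied to $F_1$ or $F_2$ the first segment of this drawing is collinear with the direction stored in $F_1$ or $F_2$ at the vertex $\mathsf{Identify}_{d_f,d_{f_r}}(uv_i)$, and the {\bf Equality of $V^{*}$ Directions} property (specifically Conditions~3 and 4 of Definition~\ref{def:infCTurPoints2}) pins this direction to $\ell(d_f(uv_i),\mathsf{V^*Dir}_f(uv_i))$. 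The main obstacle will be the bookkeeping across the many cases of Definition~\ref{def:glueEdg} --- in particular, in the ``partly intersects $c$'' clause the first subpath $\mathsf{PartEdge}_0$ may come from either $d_1$ or $d_2$, and one must verify that after deleting the non-grid points on $c$ (the last step of Condition~\ref{con4defglueedge}) the first segment is still the desired one. This deletion is justified exactly by Condition~\ref{con2DefDir} of Definition~\ref{def:infCTurPoints2}, which ensures collinearity across $c$ and hence that the trimmed path coincides (as a set of plane points) with the untrimmed one.
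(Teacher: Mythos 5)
Your proposal is correct and follows essentially the same route as the paper's proof: verify the six conditions of Definition~\ref{def:infFrDr} one at a time, pulling Conditions \ref{infFramcon1}, \ref{infFramcon2} from the construction of $d_{\mathsf V}$ and $d_{\mathsf E}$ plus the location statements in Lemmas~\ref{lem:welldef1}--\ref{lem:welldef3}, Conditions~\ref{infFramcon3}--\ref{infFramcon5} from the same lemmas together with Definitions~\ref{def:glueVer}--\ref{def:glueEdg} and the info-cutter properties, and Condition~\ref{infFramcon6} from the {\bf Equality of $V^*$ Directions} property applied to the first segment read off from $d_1$ or $d_2$ with the collinearity-across-$c$ step justifying the deletion of non-grid points. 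One small caution: for Condition~\ref{infFramcon2} ``follows immediately'' undersells the work --- the verbatim copying gives $\pp(f)$-containment in one direction, but you still need the observation (which you do supply earlier when discussing Condition~\ref{infFramcon1}) that nothing new is drawn on $f$ by the glued edges, i.e.\ the third claims of Lemmas~\ref{lem:welldef1}--\ref{lem:welldef3}; the paper makes this second half explicit.
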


\begin{proof}
	We show that the conditions of Definition \ref{def:infFrDr} hold:
	\begin{enumerate}
		\item Since $d_1$ and $d_2$ are drawings of $F_1$ and $F_2$, respectively, they are bounded by $f_1(c)$ and $f_2(c)$, respectively, so they are bounded by $f$. In addition, $d_f$ is bounded by $f$ as well. Therefore, by the construction of $\mathsf{Glue}$ (see Definition \ref{def:glue}), $d$ is bounded by $f$. Furthermore,  $V^*_{\{u,v\}}\cap V(d)=V^*_{\{u,v\}}\cap V(d_f)$, $d_f$ is on $f$, and for every $u\in V^*_{\{u,v\}}\cap V(d)$, $d(u)=d_f(u)$. Therefore, all the vertices in $V^*$ of $d$ are drawn on $f$. So, Condition \ref{infFramcon1} holds.
		\item From Condition \ref{def:glueVer1} of Definition \ref{def:glueVer} and Condition \ref{def:gleEdgCon1} of Definition \ref{def:glueEdg}, it is clear that every element drawn in $d_f$ has the same drawing in $d$. Now, due to the third conditions of Lemmas \ref{lem:welldef1}--\ref{lem:welldef3}, no additional elements are drawn on $f$ in $d$, so $d=^{\pp(f)}d_f$. Thus, Condition \ref{infFramcon2} holds.
		\item Due to the third conditions of Lemmas \ref{lem:welldef1}--\ref{lem:welldef3}, $U_f$ is the set of vertices of $d$ drawn strictly inside $f$, so Condition \ref{infFramcon3} holds.
		\item From the second conditions of Lemmas \ref{lem:welldef1} and \ref{lem:welldef2}, every $e\in E_f$ is drawn strictly inside $f$, except maybe at the endpoints, and at least one endpoint of $e$ is drawn on $f$. In addition, every edge in $E(d)\setminus E_f$ is drawn either on $f$ (defined in Condition \ref{def:gleEdgCon1} of Definition \ref{def:glueEdg}), or strictly inside $f$ (proved in Condition \ref{gluedecon33} of Lemma \ref{lem:welldef3}).
		Therefore, $E_f\subseteq E(d)$ is the set of each edge $e$ that is drawn strictly inside $f$, except maybe at the endpoints, and at least one endpoint of $e$ is drawn on $f$. So, Condition \ref{infFramcon4} is satisfied.
		\item For every $u,v\in U_f$ such that $\{u,v\}\in E$ and $V(d_f)\cap V^*_{\{u,v\}}=\emptyset$, from Condition \ref{def:gluecon5} of Definition \ref{def:glue}, $\{u,v\}\in E(d)$, so Condition \ref{infFramcon5} holds.
		\item Let $uv_i\in V(d_f)\cap V^*$ such that $d_f(uv_i)\in \gis(\fin)\setminus \mathsf{GridPointSet}(f_{\mathsf{init}})$. Assume, without loss of generality, that $d(uv_i)\in f_1(c)$. In addition, assume that $d(uv_i)$ was defined in Condition \ref{con1defglueedge} of Definition \ref{def:glueEf2} (the other cases are similar). Since $C$ is an info-cutter of $F$, we get that $\ell(\mathsf{V^*Dir}_f(uv_i),d_f(uv_i))$ is on $\ell(\mathsf{V^*Dir}_{f_1}( \mathsf{Identify}_{d_{f},d_{f_1}}(uv_i)),d_f($ $uv_i))$ or vice versa. Since $d_1$ is a drawing of $F_1$, $\ell(\mathsf{V^*Dir}_{f_1}(\mathsf{Identify}_{d_{f},d_{f_1}}(uv_i)),d_f(uv_i))$ is on $\ell(d_f(uv_i),p_1)$, or vice versa, where $p_1$ is the second point of $d_1(\mathsf{Identify}_{d_{f},d_{f_1}}(uv_i)),$ $\mathsf{Identify}_{d_{f},d_{f_1}}$ $(uv_{i+1}))$. Now, since $d_{\mathsf{E}}(\{uv_i,uv_{i+1}\})=d_1(\{\mathsf{Identify}_{d_f,d_{f_1}}(uv_i),\mathsf{Identify}_{d_f,d_{f_1}}$ $(uv_{i+1})\})$, we get that $\ell(\mathsf{V^*Dir}_f(uv_i),d_f(uv_i))$ is on $\ell(d_f(uv_i),p_1)$ or vice versa, where $p_1$ is the second point of $d(uv_i,uv_{i+1})$. Therefore, Condition \ref{infFramcon6} is satisfied.
		\end{enumerate}
		We have thus proved that the conditions of Definition \ref{def:infFrDr} are satisfied, so $d$ is a drawing of $F$.
\end{proof}

%	\item For every $uv_i\in V(d_f)\cap V^*$ such that $d_f(uv_i)\in \gis(\fin)\setminus \mathsf{GridPointSet}(f_{\mathsf{init}})$, $\ell(\mathsf{V^*Dir}_f(uv_i),d_f(uv_i))$ is on $\ell(d_f(uv_i),p_1)$, or $\ell(d_f(uv_i),p_1)$ is on $\ell(\mathsf{V^*Dir}_f(uv_i),d_f(uv_i))$, where $z\in \{uv_{i-1},uv_{i+1}\}$ such that $\{uv_i,z\}\in E_f$ and $d(\{uv_i,z\})=(d(uv_i),p_1\ldots,d(z))$.\label{infFramcon6}

Now, for a later use, we have the following observation. Let $f'$ be a frame bounded by $f_1(c)$. Observe that the part of the drawing $d$ that is intersected by $f'$ is equal to the part of the drawing $d_1$ that is intersected by $f'$, up to renaming. Thus, the subset of vertices from $V$ that are drawn on $f'$ in $d$ equals the subset of vertices from $V$ that are drawn on $f'$ in $d_1$. Moreover, every turning point $(p,\{u,v\})$ of $f'$ in $d$ is a turning point of $f'$ in $d_1$, and vice versa. Therefore, we get that $\sizef(f',d)=\sizef(f',d_1)$, as we state in the next observation:

\begin{observation}\label{obs:turningEq1}
	Let $F=(f,d_f,E_f,U_f,\mathsf{V^*Dir}_f)$ be an info-frame, let $C=(c,F_1,F_2)$ be an info-cutter of $F$, let $d_1$ and $d_2$ be drawings of $F_1$ and $F_2$, respectively, and let $f'$ be a frame that is bounded by $f_1(c)$. Then, $\sizef(f',d)=\sizef(f',d_1)$, where $d=\mathsf{Glue}(F,C,d_1,d_2)$.
\end{observation}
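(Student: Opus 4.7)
By Definition~\ref{def:sizeFrG*}, $\sizef(f', \cdot)$ splits into three parts: the number of vertices of $f'^{\mathsf{min}}$ (which depends only on $f'$, hence trivially agrees between $d$ and $d_1$), the number of $V$-vertices drawn on $f'$, and $\tp(f', \cdot)$. Both of the latter two quantities are determined by the restriction of the drawing to any small neighbourhood of $f'$, and since $f'$ is bounded by $f_1(c)$, such a neighbourhood sits inside $f_1(c)$. The plan is therefore to show that $d$ and $d_1$ agree, as drawings of $G$, inside the closed region bounded by $f_1(c)$, up to relabelling of $V^*$-vertices, and to deduce the two remaining equalities from this.

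For the vertex count, I would argue directly from $\mathsf{GlueVer}$ (Definition~\ref{def:glueVer}). If $v \in V$ satisfies $d(v) \in \pp(f')$, then $d(v) \in \pp(f_1(c))$, so either $v \in V(d_f)$ with $d(v)$ on $\gi(f) \cap \gi(f_1(c))$, in which case the \textbf{Equality of Common Parts} property (Condition~\ref{def:infCutCon3} of Definition~\ref{def:infCTurPoints1}) together with Condition~\ref{infFramcon2} of Definition~\ref{def:infFrDr} applied to $d_1$ gives $d(v) = d_{f_1}(v) = d_1(v)$; or $v \in U_f \cap V(d_1)$, in which case $\mathsf{GlueVer}$ sets $d(v) = d_1(v)$ directly. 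The converse direction, that every $v \in V$ with $d_1(v) \in \pp(f')$ also lies in $V(d)$ with $d(v) = d_1(v)$, follows from the \textbf{Partition of $U_f$} property (Definition~\ref{def:infCTurPoints13}) combined with the same identifications. Hence the two vertex counts match.

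For the turning-point count, the key claim is that for each edge $\{u, v\} \in E$, the concatenated curve obtained from $d$ by joining the drawings of all sub-edges of the form $\{uv_i, uv_{i+1}\}$, and the analogous curve from $d_1$, coincide as point sets inside the closed region bounded by $f_1(c)$. Granting this, Definitions~\ref{def:IntPointG*1}--\ref{def:IntPointG*} give turning-point equality for $f'$, since those definitions are local in a neighbourhood of the candidate turning point, a neighbourhood fully contained in $f_1(c)$. I would establish the claim by a case analysis of $\mathsf{GlueEdg}$ (Definition~\ref{def:glueEdg}, unfolded via Definition~\ref{def:glueEf2} and its unstated analogs for Conditions~\ref{def:glue4b}, \ref{def:glue4c}, and~\ref{def:gluecon5}): each sub-curve contributing to the $d$-drawing of $\{u, v\}$ that lies in $f_1(c)$ is either taken verbatim from $d_1$, or taken from $d_{f_1}^*$ on $c$ (which coincides with $d_1$ on $c$ by the Equality of Common Parts), whereas any sub-curve taken from $d_2$ lies inside $f_2(c)$ and is separated from the interior of $f_1(c)$ by $c$, so it does not enter the region under consideration.

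\textbf{Main obstacle.} The main technical hurdle is precisely the case analysis of $\mathsf{GlueEdg}$, which has four top-level branches (matching the four \textbf{Partition} properties of an info-cutter) with several sub-cases each. The argument is conceptually symmetric to, and effectively inverts, Observation~\ref{obs:turningEq2}. A potentially cleaner organization is to first verify that $\mathsf{Splitter}(F, \mathsf{Glue}(F, C, d_1, d_2), c)_{d_1}$ equals $d_1$ up to relabelling of $V^*$-vertices; Observation~\ref{obs:turningEq2} would then immediately yield the desired equality $\sizef(f', d) = \sizef(f', d_1)$ as a black-box corollary.
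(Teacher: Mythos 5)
Your proposal is correct and follows essentially the same route as the paper: the paper justifies this observation with a short informal remark that the part of $d$ intersected by $f'$ equals the part of $d_1$ intersected by $f'$ up to renaming (since $f'$ is bounded by $f_1(c)$), from which the equality of the on-frame vertex counts and of the turning points, and hence of the costs, follows. Your proposal simply fills in more of the case analysis of $\mathsf{GlueVer}$/$\mathsf{GlueEdg}$ than the paper bothers to, and your suggested shortcut via $\mathsf{Splitter}\circ\mathsf{Glue}$ and Observation~\ref{obs:turningEq2} is a reasonable alternative packaging of the same idea.
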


In the following lemma, we show that when we glue $\mathsf{Splitter}(F,d,c)_{d_1}$ and $\mathsf{Splitter}(F,d,c)_{d_2}$, we construct the drawing $d$. 

\begin{lemma} \label{lem:Glueto}
	Let $F$ be an info-frame, let $d$ be a drawing of $F$, and let $c$ be a cutter of $f$. Let $C$, $d_1$ and $d_2$ be $\mathsf{Splitter}(F,d,c)_{C}$, $\mathsf{Splitter}(F,d,c)_{d_1}$ and $\mathsf{Splitter}(F,d,c)_{d_2}$, respectively. Then, $\mathsf{Glue}(F,C,d_1,d_2)=d$.
\end{lemma}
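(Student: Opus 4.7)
The plan is to verify the equality $\mathsf{Glue}(F,C,d_1,d_2)=d$ by separately checking that the two drawings agree on their vertex sets and edge sets, using the structure of the $\mathsf{Splitter}$ output as our guide. By Lemma~\ref{splitter}, $C=(c,F_1,F_2)$ is an info-cutter of $F$ and $d_1,d_2$ are drawings of $F_1,F_2$, respectively, so $\mathsf{Glue}(F,C,d_1,d_2)$ is well defined (Lemmas~\ref{lem:welldefV}, \ref{lem:welldef1}, \ref{lem:welldef2}, \ref{lem:welldef3}) and Lemma~\ref{lem:glue} shows it is a drawing of $F$. It therefore remains to match the drawings pointwise.

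First I would handle the vertices. By the definition of $\mathsf{SplitterPart}1$ (Algorithm~\ref{alg:SplitterPart}) and the symmetric $\mathsf{SplitterPart}2$, vertices of $V$ in $d$ are never re-positioned: any vertex on $f_1(c)$ survives in $d_1$ with its original image, and symmetrically for $d_2$. Moreover, the induced info-frame $F_1=\mathsf{IndInfFra}(d_1,f_1(c))$ records $U_{f_1}$ as exactly the set of vertices of $V$ drawn strictly inside $f_1(c)$ in $d$, and analogously for $U_{f_2}$. Combined with the description of $F_1,F_2$ on $c$, this matches Conditions~\ref{def:glueVer1} and \ref{def:glueVer2} of Definition~\ref{def:glueVer}, so $d_\mathsf{V}(v)=d(v)$ for every $v\in V\cap V(d)$; the vertices of $V^*$ appear exactly where $(d(uv_i),\{u,v\})$ is a turning point in $f$ and both $d$ and $d_\mathsf{V}$ place them on the same points via $d_f$.

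Next I would handle the edges by a case analysis matching the structure of Definition~\ref{def:glueEdg}. Edges in $E(d_f)$ are copied verbatim from $d_f$, and since $d$ is a drawing of $F$ we have $d(e)=d_f(e)$ for such $e$ (Condition~\ref{infFramcon2} of Definition~\ref{def:infFrDr}). Edges in $E_f$ with both endpoints on $f$ fall into the four cases of Definition~\ref{def:infCTurPoints2a} (selected by the $\mathsf{Splitter}$ according to whether the edge lies strictly inside $f_1(c)$, strictly inside $f_2(c)$, on $c$, or partly intersects $c$). In the first three cases, the corresponding drawing stored in $d_1$, $d_2$, or $d_{f_1}^*$ is, by construction of $\mathsf{Splitter}$ (Lines~\ref{algo1line39}--\ref{algo1line392} of Algorithm~\ref{alg:SplitterPart}), the restriction of $d(e)$ to the respective region, and so $d_\mathsf{E}(e)=d(e)$ directly. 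The remaining cases for edges in $E_f$ with one endpoint on $f$ and for edges $\{u,v\}$ drawn strictly inside $f$ are entirely analogous.

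The main obstacle, and where I expect to have to work hardest, is the ``partly intersects $c$'' case (Condition~\ref{infcutedgeCon715} of Definition~\ref{def:infCTurPoints2a}, realised in Condition~\ref{con4defglueedge} of Definition~\ref{def:glueEf2}, and its analogues). Here I would argue as follows: the turning points of $\{u,v\}$ in $f_1(c)$ that were introduced by $\mathsf{SplitterPart}1$ at Lines~\ref{Alg:SpiltPartMakeTurn1}--\ref{Alg:SpiltPartMakeTurn2} are, by Definition~\ref{def:IntPoint}, exactly the points where $d(\{u,v\})$ crosses $c$ strictly inside $f$. After the relabeling in Lines~\ref{Alg:FixingTurn1}--\ref{Alg:FixingTurn2}, these produce the vertices $uv_{j+1},\ldots,uv_{j+\ell}$ of Definition~\ref{def:partIntCBoth} in the correct order along $d(\{u,v\})$. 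Each $\mathsf{PartEdge}_t$ obtained by the $\mathsf{Glue}$ function is therefore the maximal subcurve of $d(\{u,v\})$ between two consecutive crossings (lying strictly inside $f_1(c)$, strictly inside $f_2(c)$, or on $c$), and hence
$\mathsf{PartEdge}_0\cdot\mathsf{PartEdge}_1\cdot\ldots\cdot\mathsf{PartEdge}_\ell$
coincides, as a sequence of grid-aligned segments, with $d(\{u,v\})$ up to the introduction of intermediate points in $\gis(\fin)\setminus\gps(\fin)$ at the crossings with $c$. The final deletion step in Definition~\ref{def:glueEf2} removes exactly these extra points; by Condition~\ref{con2DefDir} of Definition~\ref{def:infCTurPoints2} (which holds because $C$ is an info-cutter of $F$, and which follows from $d$ being a single straight-line path through each such point), the three collinear points at each such deletion leave the underlying curve unchanged. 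Therefore $d_\mathsf{E}(\{uv_i,uv_{i+1}\})=d(\{uv_i,uv_{i+1}\})$, completing the case analysis. The analogous argument applies to the ``partly intersects $c$'' sub-cases of Definitions~\ref{def:partIntCU}, \ref{def:partIntV}, and \ref{def:partIntUV}, which finishes the proof.
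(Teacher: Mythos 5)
Your proposal is correct and follows essentially the same route as the paper's own proof: agreement on $f$ and on vertices via the fact that $\mathsf{Splitter}$ never repositions vertices of $V$, followed by a case analysis on edges (on $f$, inside $f_1(c)$, inside $f_2(c)$, on $c$, or partly intersecting $c$), with the crux being that the concatenation of the $\mathsf{PartEdge}_t$ pieces reproduces $d(\{u,v\})$ and the final deletion of points in $\gis(\fin)\setminus\gps(\fin)$ leaves the curve unchanged by collinearity. No substantive differences from the paper's argument.
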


\begin{proof}
	We denote $\mathsf{Glue}(F,C,d_1,d_2)$ by $d'$, and aim to show that $d=d'$.  Since $d$ and $d'$ are drawings of $F$, we get that $d$ and $d'$ are bounded by $f$ and also $d(f)=d(f')$. 
	
	Now, let $u\in V(d)$ be a vertex drawn strictly inside $d$. If $u$ is drawn on $f_1(c)$ in $d$, then $d(u)=d_1(u)=d'(u)$. Similarly, if $u$ is drawn on $f_2(c)$ in $d$, then $d(u)=d_2(u)=d'(u)$. Since $d$ and $d'$ are drawings of $F$, the set of vertices that are strictly drawn inside $d$ is equal to the set of vertices that are drawn strictly inside $d'$. Therefore, we get that $d_{\mathsf{V}}=d'_{\mathsf{V}}$.
	
	Now, let $\{u,v\}\in E(d)$ be an edge. We have thee following cases:
	\begin{itemize}
		\item $\{u,v\}$ is drawn on $f$.
		\item $\{u,v\}$ is drawn strictly inside $f$ except at the endpoints, which are drawn on $f$.
		\item $\{u,v\}$ is drawn strictly inside $f$ except at exactly one endpoint, which is drawn on $f$.
		\item $\{u,v\}$ is drawn strictly inside $f$.
	\end{itemize}
	As for the first case, since $d=^{\pp(f)}d'$, it follows that $d(\{u,v\})=d'(\{u,v\})$. The rest of the cases are similar, so we show here only the proof for the last case.
	Thus, assume that $\{u,v\}$ is drawn strictly inside $d$. If $\{u,v\}$ is drawn strictly inside $f_1(c)$ in $d$, then $d(\{u,v\})=d_1(\{u,v\})=d'(\{u,v\})$. Similarly, if $\{u,v\}$ is drawn strictly inside $f_2(c)$ in $d$, then $d(\{u,v\})=d_2(\{u,v\})=d'(\{u,v\})$. If $\{u,v\}$ is drawn on $c$ in $d$, then $d(\{u,v\})=d_1(\{u,v\})=d'(\{u,v\})$.  Otherwise, assume that $u$ is drawn strictly inside $f_1(c)$ and $v$ is drawn strictly inside $f_2(c)$ in $d$ (the other cases are similar). Let $uv_1,\ldots uv_\ell$ the vertices created by the function $\mathsf{Splitter}$ in $d_1$. Observe that in this case, $uv_1,\ldots uv_\ell$ are the vertices created by the function $\mathsf{Splitter}$, with respect to the edge $\{u,v\}$, in $d_2$ as well, such that $d_1(uv_i)=d_2(uv_i)$ for every $1\leq i\leq \ell$. We get that $\{u,uv_1\}\in E(d_1)$, $\{uv_\ell,v\}\in E(d_2)$, and for every $1\leq i\leq \ell-1$, exactly one of the following conditions holds:
	\begin{itemize}
		\item $\{uv_i,uv_{i+1}\}\in E(d_1)$.
		\item $\{uv_i,uv_{i+1}\}\in E(d_2)$.
		\item $\{uv_i,uv_{i+1}\}\in E(d_1(c))$.
	\end{itemize} 
For every $1\leq i< \ell$, let $t_i\in[2]$ be such that the edge connecting $uv_i$ and $uv_{i+1}$ is drawn in $d_{t_i}$.
Observe that $d_{t_i}(\{uv_i,uv_{i+1}\})$ is the part of the drawing of the edge $\{u,v\}$ in $d$, from the point $d_{t_i}(uv_i)$ to the point $d_{t_i}(uv_{i+1})$. Observe that $P=d_1(\{u,uv_1\})\cdot d_{t_1}(\{uv_1,uv_2\})\cdot\ldots\cdot d_{t_{\ell-1}}(uv_{\ell-1},uv_\ell)\cdot d_2(\{uv_\ell,v\})=d(\{u,v\})$. Now, by the $\mathsf{Glue}$ function, we get that $d'(\{u,v\})$ is the path obtained from $P=(p_1,\ldots,p_q)$ by deleting every $p_i\in \gis(\fin)\setminus \gps(\fin)$ where $i\neq 1,q$. As we saw, $P$ is the same path as $d'(\{u,v\})$ (see the discussion before Definition \ref{def:glueEf2}). Therefore, we get that $d'(\{u,v\})=d(\{u,v\})$. Since $d$ and $d'$ are drawings of $F$, the set of edges that are drawn strictly inside $d$ is equal to the set of edges that are drawn strictly inside $d'$. Similarly, the set of edges in $d$ that are drawn strictly inside $f$, except maybe at the endpoints, is equal to the set of edges in $d'$ that are drawn strictly inside $f$, except maybe at the endpoints,. Therefore, we get that $d_{\mathsf{E}}=d'_{\mathsf{E}}$.
	
In conclusion, we get that $d=d'$.
\end{proof}

\subsection{Problem Information}

Up until this subsection, we focused on the ``information'' we need to store in order to get any polyline grid drawing of $G$. However, we are often interested in a specific kind of drawing and not just in any polyline grid drawing. To this end, our algorithm also gets from the user the specific ``information'' we need for her or his use. Observe that when we use the $\mathsf{Glue}$ function in order to glue two sub-drawings, we do not distinguish between the possible drawings that can be the sub-drawings. That is, given an info-frame $F$ and an info-cutter $C=(c,F_1,F_2)$ of $F$, the $\mathsf{Glue}$ function returns a drawing $d$ of $F$ with the input of any drawings $d_1$ and $d_2$ of $F_1$ and $F_2$, respectively. Therefore, given an info-frame $F$, we would like to have a partition (corresponding to an equivalence relation) of the set of all drawings of $F$ based on the user's problem. 

In particular, the user provides a function, denoted by $\mathsf{Classifier}$, that given an info-frame $F$ and a drawing $d$ of $F$, returns a value $I'\in \mathsf{INF}$, where $\mathsf{INF}$ is a universe chosen by the user.

We now define formally the term {\em classifier}:

\begin{definition}[{\bf $\mathsf{INF}$-Classifier}] \label{def:probInf}
	Let $\mathsf{INF}$ be a universe. An {\em $\mathsf{INF}$-classifier} is a function $\mathsf{Classifier}$ that given an info-frame $F$ and a drawing $d$ of $F$, returns a value $I\in \mathsf{INF}$.
\end{definition}

When $\mathsf{INF}$ is clear from the context, we refer to $\mathsf{INF}$-classifier as classifier. 

 Now, for an info-frame $F$ and two drawings $d$ and $d'$ of $F$, we say that $d$ and $d'$ are {\em equivalent} if and only if $\mathsf{Classifier}(F,d)=\mathsf{Classifier}(F,d')$. We thus obtain an equivalent relation on the set of drawings of $F$, and, in turn, also a partition of this set.  

%Formally, we  define $\mathsf{Classifier}$ for every info-frame $F$, denoted by $\mathsf{Classifier}_F$. Each $\mathsf{Classifier}_F$ is a function $\mathsf{Classifier}_F:\mathsf{Drawing}(F)\rightarrow I$, where $\mathsf{Drawing}(F)$ is the set of all drawings of $F$. We denote $\mathsf{Classifier}_F(d)$ by $\mathsf{Classifier}(F,d)$ for every $F$, and a drawing $d$ of $F$. That is, $\mathsf{Classifier}$ is a function that gets an info-frame $F$ and a drawing $d$ of $F$, and returns a value $I'\in I$. We call these functions, {\em problem information} functions. 

In addition to the $\mathsf{Classifier}$ function, the user needs to provide an algorithm that computes the following value. Given an info-frame $F$, an info-cutter $C=(c,F_1,F_2)$, of $F$, and two values $I_1,I_2\in \mathsf{INF}$, the algorithm returns a value $I'\in \mathsf{INF}$ such that the following condition is satisfied: For every two drawings $d_1$ and $d_2$ of $F_1$ and $F_2$, respectively, such that $\mathsf{Classifier}(F_1,d_1)=I_1$ and $\mathsf{Classifier}(F_2,d_2)=I_2$, it holds that $\mathsf{Classifier}(F,d)=I'$ where $d=\mathsf{Glue}(F,C=(c,F_1,F_2),d_1,d_2)$. Observe that $\mathsf{Classifier}(F,d)$ is well defined, since by Lemma \ref{lem:glue}, $\mathsf{Glue}(F,C,d_1,d_2)$ is a drawing of $F$. This way, in our scheme, we do not need to ``distinguish'' between any two drawings in the same equivalence class. We refer to the user's algorithm as {\em $\mathsf{classifierAlg}$}, defined as follows:      

\begin{definition}[{\bf $\mathsf{Classifier}$-Algorithm}] \label{def:probsolv}
	Let $\mathsf{INF}$ be a universe and let $\mathsf{Classifier}$ be a classifier. An algorithm $A$ is a {\em $\mathsf{Classifier}$-algorithm} if the following conditions are satisfied.
	\begin{enumerate}
		\item $A$ gets as input $(F,C,I_1,I_2)$ where $F$ is an info-frame, $C$ is an info-cutter of $F$, and $I_1,I_2\in \mathsf{INF}$. \label{def:probSolvCon1}
		\item $A$ outputs $I'\in \mathsf{INF}$ such that for every two drawings of $d_1$ and $d_2$ of $F_1$ and $F_2$, respectively, where $\mathsf{Classifier}(F_1,d_1)=I_1$ and $\mathsf{Classifier}(F_2,d_2)=I_2$, we have that $\mathsf{Classifier}(F,\mathsf{Glue}(F,C,$ $d_1,d_2))=I'$. \label{def:probSolvCon2}
	\end{enumerate} 
\end{definition}

%!TEX root =Main-Movement.tex

\subsection{Proof of the Scheme}\label{sec:proofScheme}

Now, we are ready to prove the main lemma of our scheme, which is the proof of the inductive step of the algorithm that is presented later. 

For an info-frame $F=(f,d_f,E_f,U_f,\mathsf{V^*Dir}_f)$, we say that $F$ is a {\em leaf} if there are no grid points strictly inside $f$. Recall Condition \ref{def:tfCon3} of the definition of a frame-tree (Definition \ref{def:DrawnFrDeco}): a vertex $v$ is a leaf if and only if the frame associated with $v$ has no grid points strictly inside. Therefore, drawings of leave info-frames will be the parts of the drawing bounded by a frame which is associated with a leaf vertex in a tree decomposition.

We now present a few definitions that will be in use later.

\begin{definition}[{\bf $\mathsf{InfoFrame}(G,k,h,w)$}] \label{def:FAtMoK}
	Let $k,h,w\in \mathbb{N}$. We denote by $\mathsf{InfoFrame}(G,k,h,$ $w)$ the set of info-frames $F=(f,d_f,E_f,U_f,\mathsf{V^*Dir}_f)$ of $G$ such that $f$ is bounded by $R_{h,w}$, and $\sizef(d_f,f)\leq k$. For an info-frame $F\in \mathsf{InfoFrame}(G,k,h,w)$, we say that {\em $F$ costs at most $k$}.
\end{definition}

\begin{definition}[{\bf Info-Cutter Costs At Most $k$}] \label{def:CAtMostK}
Let $C=(c,F_1,F_2)$ be an info-cutter of an info-frame $F=(f,d_f,E_f,U_f,\mathsf{V^*Dir}_f)$ and let $k\in \mathbb{N}$. We say that {\em $C$ costs at most $k$} if $F_1,F_2\in \mathsf{InfoFrame}(G,k,h,w)$. 
\end{definition}

%Let $k,h,w\in \mathbb{N}$. We denote by $\mathsf{InfoFrame}(G,k,h,w)$ the set of info-frames $F=(f,d_f,E_f,U_f)$ of $G$, such that $f$ is bounded by $R_{h,w}$, and $\sizef(d_f,f)\leq k$.

%Let $C=(c,F_1,F_2)$ be an info-cutter of an info-frame $F=(f,d_f,E_f,U_f)$ and let $k\in \mathbb{N}$. We say that $C$ costs at most $k$, if $F_1,F_2\in \mathsf{InfoFrame}(G,k,h,w)$. 

%Let $\mathsf{Classifier}$ be an information function. We denote by $\mathsf{Info}(G,k,h,w)$ the set of $I'\in \mathsf{INF}$ such that there exists $F\in \mathsf{InfoFrame}(G,k,h,w)$ and a drawing $d$ of $F$ such that $\mathsf{Classifier}(F,d)=I'$.  

\begin{lemma}\label{lem:inductiveAl}
	Let $G=(V,E)$ be a graph, let $k,h,w\in \mathbb{N}$ and let $\mathsf{INF}$ be a universe. Let $\mathsf{Classifier}$ be a classifier, and let $A$ be a $\mathsf{Classifier}$-algorithm. Let $F\in \mathsf{InfoFrame}(G,k,h,w)$ that is not a leaf, and let $I\in \mathsf{INF}$. There exists a drawing $d$ of $F$ for which $\mathsf{dtw}(d,f)\leq k$ and $\mathsf{Classifier}(F,d)=I$, if and only if there exist an info-cutter $C=(c,F_1,F_2)$ of $F$ that costs at most $k$ and $I_1,I_2\in \mathsf{INF}$ such that the following conditions are satisfied:
	\begin{enumerate}
		\item There exists a drawing $d_1$ of $F_1$ such that $\mathsf{Classifier}(F_1,d_1)=I_1$ and $\mathsf{dtw}(d_1,f_1(c))\leq k$.
		\item There exists a drawing $d_2$ of $F_2$ such that $\mathsf{Classifier}(F_2,d_2))=I_2$ and $\mathsf{dtw}(d_2,f_2(c))\leq k$.
		\item $A(F,C,I_1,I_2)=I$.
	\end{enumerate}
\end{lemma}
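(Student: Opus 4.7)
The plan is to prove the two directions separately, using the $\mathsf{Splitter}$/$\mathsf{Glue}$ machinery developed earlier together with the key cost-preservation observations \ref{obs:turningEq1} and \ref{obs:turningEq2}.

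\textbf{Forward direction.} Suppose $d$ is a drawing of $F$ with $\mathsf{dtw}(d,f)\leq k$ and $\mathsf{Classifier}(F,d)=I$. Fix an $f$-frame-tree $({\cal T},\alpha)$ of $d$ with $\mathsf{width}({\cal T},\alpha)\leq k$. Since $F$ is not a leaf, $f$ contains grid points in its strict interior, so the root $v_r$ of ${\cal T}$ is an internal vertex, with children $v_1,v_2$ and an associated cutter $c=c_{v_r}$ of $f$ satisfying $\alpha(v_1)=f_1(c)$ and $\alpha(v_2)=f_2(c)$. Set $(C,d_1,d_2)=\mathsf{Splitter}(F,d,c)$; by Lemma \ref{splitter}, $C=(c,F_1,F_2)$ is an info-cutter of $F$, and $d_i$ is a drawing of $F_i$. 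Define $I_i:=\mathsf{Classifier}(F_i,d_i)$. Lemma \ref{lem:Glueto} yields $\mathsf{Glue}(F,C,d_1,d_2)=d$, and then Condition \ref{def:probSolvCon2} of Definition \ref{def:probsolv} gives $A(F,C,I_1,I_2)=\mathsf{Classifier}(F,d)=I$. It remains to verify that $C$ costs at most $k$ and that $\mathsf{dtw}(d_i,f_i(c))\leq k$ for $i\in\{1,2\}$. For each $i$, the subtree ${\cal T}_i$ of ${\cal T}$ rooted at $v_i$, together with the restriction of $\alpha$, is an $f_i(c)$-frame-tree of $d$; replacing every frame's cost computed with respect to $d$ by its cost with respect to $d_i$ uses Observation \ref{obs:turningEq2} applied to each frame appearing in ${\cal T}_i$ (all such frames are bounded by $f_i(c)$). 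This gives an $f_i(c)$-frame-tree of $d_i$ of width $\leq k$, so $\mathsf{dtw}(d_i,f_i(c))\leq k$. Applying Observation \ref{obs:turningEq2} to $f_i(c)$ itself shows $\sizef(f_i(c),d_i)=\sizef(f_i(c),d)\leq k$, so $F_i\in \mathsf{InfoFrame}(G,k,h,w)$, i.e., $C$ costs at most $k$.

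\textbf{Backward direction.} Suppose $C=(c,F_1,F_2)$, $I_1,I_2$, $d_1,d_2$ are as in the statement. Set $d:=\mathsf{Glue}(F,C,d_1,d_2)$; by Lemma \ref{lem:glue}, $d$ is a drawing of $F$. The classifier value is handled immediately by Definition \ref{def:probsolv}: $\mathsf{Classifier}(F,d)=A(F,C,I_1,I_2)=I$. For the width bound, for $i\in\{1,2\}$ let $({\cal T}_i,\alpha_i)$ be an $f_i(c)$-frame-tree of $d_i$ of width at most $k$. Build an $f$-frame-tree $({\cal T},\alpha)$ of $d$ by taking a new root $v_r$ with $\alpha(v_r)=f$ and associated cutter $c$, and attaching ${\cal T}_1,{\cal T}_2$ as its two subtrees. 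The definitional conditions of Definition \ref{def:DrawnDecof} are immediate from the construction. For the width, the root frame has cost $\sizef(f,d)=\sizef(f,d_f)\leq k$ because $F\in \mathsf{InfoFrame}(G,k,h,w)$ and $d$ agrees with $d_f$ on $\pp(f)$. For every other vertex $v$ of ${\cal T}_i$, the frame $\alpha_i(v)$ is bounded by $f_i(c)$, so by Observation \ref{obs:turningEq1} we have $\sizef(\alpha_i(v),d)=\sizef(\alpha_i(v),d_i)\leq k$. Hence $\mathsf{width}({\cal T},\alpha)\leq k$, which shows $\mathsf{dtw}(d,f)\leq k$, completing the proof.

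\textbf{Main obstacle.} The technical heart is the switch of reference drawing when computing frame costs: when moving from $d$ to its restriction $d_i$ (forward direction), or from $d_i$ to the glued drawing $d$ (backward direction), the cost of every frame strictly inside $f_i(c)$ must be unchanged. This is exactly what Observations \ref{obs:turningEq2} and \ref{obs:turningEq1} guarantee, and these observations rely on the fact that turning points and drawn vertices of $G$ inside such a frame are identical (up to the $V^\ast$-renaming arising from $\mathsf{MakeVer}$ operations). Once these cost equalities are in place, the rest of the argument is a clean bookkeeping exercise combining the correctness statements of $\mathsf{Splitter}$, $\mathsf{Glue}$, and the $\mathsf{Classifier}$-algorithm.
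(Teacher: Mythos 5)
Your proof is correct and follows essentially the same route as the paper: fixing a frame-tree and peeling off the root cutter via $\mathsf{Splitter}$ (forward), gluing and building the frame-tree from a new root (backward), with Observations \ref{obs:turningEq2} and \ref{obs:turningEq1} supplying the cost-invariance in each direction. The only addition is that you explicitly verify $C$ costs at most $k$, which the paper leaves implicit; everything else matches.
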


\begin{proof}
	Let $F\in \mathsf{InfoFrame}(G,k,h,w)$ that is not a leaf and let $I\in \mathsf{INF}$. Let $d$ be a drawing of $F$ such that $\mathsf{dtw}(d,f)\leq k$ and $\mathsf{Classifier}(F,d)=I$. Let $({\cal T}=(V_T,E_T),\alpha: V_T\rightarrow \mathsf{Frames})$ be an $f$-frame-tree of $d$ with $\mathsf{dw}({\cal T},\alpha,d,f)\leq k$, where $\cal T$ is a rooted tree with root $v_r\in V_T$. Observe that $\alpha(v_r)=f$, and since $F$ is not a leaf, there exists a cutter $c$ of $f$ such that $\alpha(v_1)=f_1(c)$ and $\alpha(v_2)=f_1(c)$, where $v_1$ and $v_2$ are the children of $v_r$ in ${\cal T}$. Then, by Lemma \ref{splitter}, $\mathsf{splitter}(F,d,c)_c=(c,F_1,F_2)$ is an info-cutter of $F$, $d_1=\mathsf{splitter}(F,d,c)_{d_1}$ is a drawing of $F_1$, and $d_2=\mathsf{splitter}(F,d,c)_{d_2}$ is a drawing of $F_2$. Let $\mathsf{Classifier}(F_1,d_1)=I_1$ and $\mathsf{Classifier}(F_2,d_2)=I_2$. By Lemma \ref{lem:Glueto}, $\mathsf{Glue}(F,C,d_1,d_2)=d$, and since $A$ is a $\mathsf{Classifier}$-algorithm, and $\mathsf{Classifier}(F,d)=I$, then $A(F,C,I_1,I_2)=I$.
	
	Now, let $({\cal T}_1=(V_{T_1},E_{T_1}),\alpha_1: V_{T_1}\rightarrow \mathsf{Frames})$ where ${\cal T}_1$ is the subtree of ${\cal T}$ with root $v_1$, and for every $u\in V_{T_1}$, $\alpha_1(u)=\alpha(u)$. We claim that $({\cal T}_1,\alpha_1)$ is an $f_1(c)$-frame-tree of $d_1$ with $\mathsf{dw}({\cal T}_1,\alpha_1,d_1,f_1(c))\leq k$. It is easy to see that the conditions of Definition \ref{def:DrawnDecof} are satisfied, therefore $({\cal T}_1,\alpha)$ is an $f_1(c)$-frame-tree of $d_1$. Now, we argue that $\mathsf{dw}({\cal T}_1,\alpha_1,d_1,f_1(c))\leq k$. Let $v\in V_{T_1}$, we aim to prove that $\sizef(\alpha(v),d_1)\leq k$. Since $\mathsf{dw}({\cal T},\alpha,d,f)\leq k$, we get that $\sizef(\alpha(v),d)\leq k$. Then, by Observation \ref{obs:turningEq2}, we get that  $\sizef(\alpha(v),d_1)=\sizef(\alpha(v),d)\leq k$, and hence $\mathsf{dw}({\cal T}_1,\alpha_1,d_1,f_1(c))\leq k$, so $\mathsf{dtw}(d_1,f_1(c))\leq k$. Similarly,  let $({\cal T}_2=(V_{T_2},E_{T_2}),\alpha_2: V_{T_2}\rightarrow \mathsf{Frames})$ where ${\cal T}_2$ is the subtree of ${\cal T}$ with root $v_2$, and for every $u\in V_{T_2}$, $\alpha_2(u)=\alpha(u)$.  It can be show similarly to ${\cal T}_1$, that $\mathsf{dw}({\cal T}_2,\alpha_2,d_2,f_2(c))\leq k$. Therefore, $\mathsf{dtw}(d_2,f_2(c))\leq k$.

	%Then, by Lemma, $({\cal T}_1,\alpha_1\rightarrow \mathsf{Frames})$ is a drawn tree-frame decomposition of $(G*(F_1(C)),(F_1(C))$ with $dw({\cal T}_1,\alpha_1)\leq k$, and so $\mathsf{dtw}(d_1,f_1(C))\leq k$. Similarly, let $({\cal T_2}=(V_{T_2},E_{T_2}),\alpha_2: V_{T_1}\rightarrow \mathsf{Frames})$ where ${\cal T_2}$ is the subtree of ${\cal T}$ with root $v_2$, and for every $u\in V_{T_2}$, $\alpha_1(u)=\alpha(u)$. Then, by Lemma, $({\cal T_2},\alpha_2\rightarrow \mathsf{Frames})$ is a drawn tree-frame decomposition of $(G*(F_2(C)),(F_2(C))$ with $dw({\cal T_2},\alpha_1)\leq k$, and so $\mathsf{dtw}(d_2,f_2(C))\leq k$. 
	
	Now, we prove the opposite direction of the lemma. Let $C$ be an info-cutter of $F$ and let $I_1,I_2\in \mathsf{INF}$ such that the conditions of the lemma are satisfied. Let $d=\mathsf{Glue}(F,C,d_1,d_2)$. Notice that by Lemma \ref{lem:glue}, $d$ is a drawing of $F$. Now, since $\mathsf{Classifier}(F_1,d_1)=I_1$, $\mathsf{Classifier}(F_2,d_2)$ $=I_2, A(F,C,I_1,I_2)=I$ and $A$ is a $\mathsf{Classifier}$-algorithm, it follows that $\mathsf{Classifier}(F,d)=I$. Let $({\cal T}_1=(V_{T_1},E_{T_1}),\alpha_1: V_{T_1}\rightarrow \mathsf{Frames})$, where $v_1$ is the root of ${\cal T}_1$, be an $f_1(c)$-frame-tree of $d_1$ for which $\mathsf{dw}({\cal T}_1,\alpha_1,d_1,f_1(c))\leq k$. In addition, let $({\cal T}_2=(V_{T_2},E_{T_2}),\alpha_2: V_{T_2}\rightarrow \mathsf{Frames})$, where $v_2$ is the root of ${\cal T}_2$, be an $f_2(c)$-frame-tree of $d_2$ for which $\mathsf{dw}({\cal T}_2,\alpha_2,d_2,f_2(c))\leq k$. Let ${\cal T}=(V_T,E_T)$ be a tree where $V_T=V_{T_1}\cup V_{T_2}\cup \{v_r\}$ and $E_T=E_{T_1}\cup E_{T_2}\cup \{ \{v_r,v_1\},\{v_r,v_2\}\}$. Furthermore, we define $\alpha: V_T\rightarrow \mathsf{Frames}$ as follows. First, we set $\alpha(v_r)=f$. Then, for every $v\in V_{T_1}$ we set $\alpha(v)=\alpha_1(v)$, and for every $v\in V_{T_2}$ we set $\alpha(v)=\alpha_2(v)$. We claim that $({\cal T}=(V_T,E_T),\alpha: V_T\rightarrow \mathsf{Frames})$ is an $f$-frame-tree of $(d,f)$ with $\mathsf{dw}({\cal T},\alpha,d,f)\leq k$. It is easy to see that $({\cal T},\alpha)$ is an $f$-frame-tree of $d$. Now, we argue that $\mathsf{dw}({\cal T},\alpha,d,f)\leq k$. Let $v\in V_T$. 
	If $v=v_r$, then since $\alpha(v_r)=f$ and $F$ is a frame that costs at most $k$, we get that $\sizef(f,d)\leq k$.
	Now, assume that $v\in V_{T_1}$. Since $\mathsf{dw}({\cal T}_1,\alpha_1,d_1,f_1(c))\leq k$, we get that $\sizef(\alpha_1(v),d_1)\leq k$. By Observation \ref{obs:turningEq1}, we get that $\sizef(\alpha(v),d)=\sizef(\alpha(v),d_1)$. Therefore, $\sizef(\alpha(v),d)\leq k$. If $v\neq v_r$ and $v\notin V_{T_1}$, then $v\in V_{T_2}$. Since $\mathsf{dw}({\cal T}_2,\alpha_2,d_2,f_2(c))\leq k$, we get that $\sizef(\alpha_2(v),d_2)\leq k$. Now, by Observation \ref{obs:turningEq1}, we get that $\sizef(\alpha(v),d)=\sizef(\alpha(v),d_2)$. Therefore, $\sizef(\alpha(v),d)\leq k$. Therefore, we get that $\mathsf{dw}({\cal T},\alpha,d,f)\leq k$, so $\mathsf{dtw}(d,f)\leq k$, and we are done.
\end{proof}

%!TEX root =Main-Movement.tex

\subsection{The Algorithm}

In this subsection we present the algorithm of our scheme. First, we describe the type of problems we aim to solve. To this end, we present some terms and definitions.
Let $G$ be a graph, and let $h,w,k\in\mathbb{N}$. We denote by $\mathsf{InitF}_{h,w}$ the info-frame $(R_{h,w},\emptyset,\emptyset,\emptyset,\emptyset) $. Observe that, $d$ is a polyline grid drawing of $\mathsf{InitF}_{h,w}$ if and only if $d$ is a polyline grid drawing of the graph $G$ strictly bounded by $R_{h,w}$. Now, we define the the problems our algorithm is able to solve.

\begin{definition}[{\bf$(\mathsf{Classifier},h,w,k,{\cal I}_{\mathsf{yes}})$-problem}] \label{def:probSc}
Let $\Pi$ be a decision problem, let $\mathsf{INF}$ be a universe, let $\mathsf{Classifier}$ be a classifier, let ${\cal I}_{\mathsf{yes}}\subseteq \mathsf{INF}$, and let $h,w,k\in\mathbb{N}$. We say that $\Pi$ is a {\em $(\mathsf{Classifier},h,w,k,{\cal I}_{\mathsf{yes}})$-problem}, if (i) for every instance $G=(V,E)$ of $\Pi$, $G$ is a connected graph, and (ii) $G$ is a yes-instance of $\Pi$, if and only if there exists a drawing $d$ of $\mathsf{InitF}_{h,w}$ with $\mathsf{dtw}(d)\leq k$ and $\mathsf{Classifier}(\mathsf{InitF}_{h,w},d)=I'$ for some $I'\in {\cal I}_{\mathsf{yes}}$.
\end{definition}

%Let $\Pi$ be a decision problem, let $\mathsf{Classifier}$ be a classifier with a universe $\mathsf{INF}$, let ${\cal I}_{\mathsf{yes}}\subseteq \mathsf{INF}$, and let $h,w,k\in\mathbb{N}$. We say that $\Pi$ is a {\em $(\mathsf{Classifier},h,w,k,{\cal I}_{\mathsf{yes}})$-drawing decision problem}, if for every instance $G=(V,E)$ of $\Pi$, $G$ is a yes-instance, if and only if, there exists a drawing $d$ of $\mathsf{InitF}_{h,w}$, with $\mathsf{dtw}(d)\leq k$ and $\mathsf{Classifier}(\mathsf{InitF}_{h,w},d)=I'$, for an $I'\in {\cal I}_{\mathsf{yes}}$, where 

In addition to the $\mathsf{Classifier}$-algorithm (see Definition \ref{def:probInf}), we ask from the user also to provide us an algorithm that ``solves'' all the info-frames that are leaves:

%That is, an algorithm that for every info-frame $F$ that is a leaf, and for every $I'\in \mathsf{INF}$, the algorithm returns ``yes'', if and only if, there exists a drawing $d$ of $F$, such that $\mathsf{Classifier}(F,d)=I'$. We call these kind of algorithm, a {\em $\mathsf{LeafSolver}}. 

\begin{definition}[{\bf$\mathsf{Classifier}$-Leaf Solver}] \label{def:probLeGen}
%	Let $\Pi$ be a $(\mathsf{Classifier},h,w,k,{\cal I}_{\mathsf{yes}})$-drawing decision problem for a classifier $\mathsf{Classifier}$ with a universe $\mathsf{INF}$, ${\cal I}_{\mathsf{yes}}\subseteq \mathsf{INF}$, and $h,w,k\in\mathbb{N}$. 

Let $\mathsf{INF}$ be a universe and let $\mathsf{Classifier}$ be a classifier. An algorithm $L$ is a {\em $\mathsf{Classifier}$-leaf solver}, if for every info-frame $F$ that is a leaf, and for every $I'\in \mathsf{INF}$, $L(F,I')$ returns ``yes'' if and only if there exists a drawing $d$ of $F$ such that $\mathsf{Classifier}(F,d)=I'$.
\end{definition}

%When $\mathsf{Classifier}$ is clear from the context, we refer to $\mathsf{Classifier}$-leaf solver as leaf solver.

Next, we define the term {\em area} of an info-frame:

% Given an info-frame $F=(f,d_f,E_f,U_f,\mathsf{V^*Dir}_f)$, the area of $F$ is the area of $f$. We measure the area of $f$ by the number of unite length squares inside $f$:

\begin{definition}[{\bf Area of an Info-Frame}] \label{def:area}
	Let $F=(f,d_f,E_f,U_f,\mathsf{V^*Dir}_f)$ be an info-frame. The {\em area} of $F$ is the number of unite length squares inside $f$. 
\end{definition}

Now, we would like to give an upper bound on the number of info-frames. This bound will be useful for the time analysis of the algorithm we present in this section.  

\begin{lemma}\label{lem:numFr}
Let $G$ be a connected graph and let $k,h,w\in \mathbb{N}$. Then, $|\mathsf{InfoFrame}(G,k,h,w)|$ $\leq \OO((k\cdot h\cdot w \cdot n)^{\OO(k)}\cdot 2^{\OO(\Delta\cdot k)})$, where $\Delta$ is the maximum degree of $G$ and $|V|=n$.
\end{lemma}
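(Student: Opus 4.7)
The plan is to bound the number of choices independently for each of the five components of an info-frame $F=(f,d_f,E_f,U_f,\mathsf{V^*Dir}_f)\in \mathsf{InfoFrame}(G,k,h,w)$ and then multiply. The cost bound $\sizef(f)\leq k$ will directly control the combinatorial size of $f$, the number $|V(d_f)\cap V|$ of real vertices on $f$, and the number of turning points on $f$ (and hence $|V(d_f)\cap V^\ast|$). In particular $|V(d_f)|\leq 2k$, which makes $d_f$ easy to count. The subtle component, and the genuine obstacle of the proof, will be bounding the number of choices for $U_f$, since a priori $U_f$ can be any subset of $V$.

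First, I would bound the number of frames $f$ bounded by $R_{h,w}$ and satisfying $|f^{\min}|\leq k$: each of the at most $k$ corners of $f^{\min}$ is a grid point in $R_{h,w}$, so there are at most $(hw)^{\OO(k)}$ frames. Given $f$, the drawing $d_f$ is determined by (i) the images of the at most $k$ vertices of $V$ drawn on $f$ (choose the labels from $V$ and the positions among the $\OO(hw)$ grid points of $f$), (ii) the at most $k$ turning points (each a pair consisting of an edge of $G$ and a position in $\gi(f)$, whose size is polynomial in $hw$), together with their cyclic labels $uv_1,uv_2,\ldots$ and the side of $f$ on which each subcurve lies. Counting edges of $G$ by $n\Delta$ and using $\Delta^k\leq 2^{\Delta k}$, this contributes $(knhw)^{\OO(k)}\cdot 2^{\OO(\Delta k)}$ choices. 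The function $\mathsf{V^*Dir}_f$ adds one grid-related point per dummy vertex in $\gis(f_{\mathsf{init}})\setminus \gps(f_{\mathsf{init}})$ — at most $k$ such vertices, each with polynomially-many admissible positions — contributing another $(hw)^{\OO(k)}$ factor.

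The harder step is bounding the number of choices for $U_f$ (together with $E_f$) given $d_f$. The key observation is the following reduction. Let $G'$ be the subgraph of $G$ obtained by (a) deleting all vertices of $V(d_f)\cap V$, and (b) deleting every edge $\{u,v\}\in E$ that has at least one turning point in $d_f$. I claim that $U_f$ must be a union of connected components of $G'$. Indeed, if $u\in U_f$ and $\{u,v\}\in E(G')$, then $\{u,v\}$ is not incident to a vertex drawn on $f$ and has no turning point in $d_f$, so the drawing of $\{u,v\}$ does not cross $f$; therefore $v$ lies on the same side of $f$ as $u$, namely the strict interior, so $v\in U_f$. Since $G$ is connected, removing $|V(d_f)\cap V|\leq k$ vertices of degree at most $\Delta$ and at most $k$ edges creates at most $1+k(\Delta-1)+k = \OO(k\Delta)$ connected components of $G'$, and $U_f$ is specified by deciding which components are ``inside''. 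This yields $2^{\OO(\Delta k)}$ possibilities. Finally, the set $E_f$ lies inside the set of edges of $E\cup E^\ast$ incident to $V(d_f)$, which has size at most $\Delta\cdot |V(d_f)\cap V|+2\cdot |V(d_f)\cap V^\ast|=\OO(\Delta k)$, so $E_f$ also has $2^{\OO(\Delta k)}$ choices. Multiplying all five factors yields the claimed bound $\OO((khwn)^{\OO(k)}\cdot 2^{\OO(\Delta k)})$. The main obstacle — the connectivity argument for $U_f$ — is exactly what forces the exponential $2^{\OO(\Delta k)}$ and uses in an essential way the assumed connectedness of $G$.
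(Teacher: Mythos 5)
Your proposal is correct and follows essentially the same route as the paper's proof: enumerate frames via their at most $k$ corners, enumerate $d_f$ via the at most $k$ vertices and turning points placed on the polynomially many candidate positions, enumerate directions, and obtain the $2^{\OO(\Delta\cdot k)}$ factor for $U_f$ by observing that the inside/outside decision is constant on connected components after removing the boundary data. One small remark in your favour: the paper justifies the $U_f$ count by saying only that $V(d_f)\cap V$ is a separator and taking components of $G\setminus(V(d_f)\cap V)$, which as stated overlooks edges that cross $f$ without either endpoint being drawn on $f$; your version, which additionally deletes the at most $k$ edges having a turning point in $d_f$ before taking components, closes that gap while still yielding $\OO(\Delta\cdot k)$ components and hence the same bound.
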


\begin{proof}
Consider an info-frame $F=(f,d_f,E_f,U_f,\mathsf{V^*Dir}_f)\in \mathsf{InfoFrame}(G,k,h,w)$. Every frame $f$ has at most $k$ vertices, therefore the number of frames is bounded by $\sum_{i=1}^{k}{h\cdot w \choose i}\leq k\cdot {h\cdot w \choose k}\leq k\cdot (h\cdot w)^{k}$. Now we give an upper bound for the number of different $d_f$ and $E_f$. First, we choose at most $k$ vertices from $V$ and from $V^*$. Observe that, since the labeling of vertices from the set $V^*$ starts by $1$ and is continuous, we choose edges from $E$, which correspond to vertices in $V^*$. We can choose an edge multiple times, where the $i$-th time we choose an edge $\{u,v\}$ corresponds to choosing the vertex $uv_i$. We have $n$ vertices and at most $n^2$ edges, where edges can be chosen multiple times. So we have at most $\sum_{i=1}^{k}{(n+n^2)^i}\leq k\cdot (2n^2)^{k}$ different options. For each vertex we chose, we choose a point in $f$ to place the vertex. Vertices from $V$ are placed on points in $\gp(f)$ and vertices from $V^*$ are placed on points in $\gi(f)$. Observe that $\gp(f)\subseteq \gi(f)\subseteq \gis(\fin)$, and $|\gis(\fin)|\leq (h\cdot w)^2$. As chose at most $k$ vertices we have at most ${(h\cdot w)^2 \choose k}\leq ((h\cdot w)^2)^{k}=(h\cdot w)^{2k}$ different options to place the vertices. 

Now, at the worst case there is an edge in $G$ between every two vertices we chose. We need to guess, for every two vertices, if they have an edge between them on $f$, outside $f$ or inside $f$ (and therefore in $E_f$). Observe that, if they do have an edge on the frame, then we have at most two options to draw this edge, since $f$ is a simple cycle. Therefore, for every two vertices, we have at most four options: two options for placing the edge on the frame if they have an edge on $f$, and if they do not, we have two options: the edge is inside or outside $f$. The edges we chose to be inside $f$ are added to $E_f$. So, we have at most $4^{k^2}$ options for the drawing of the edges. Thus, we have at most $(h\cdot w)^{2k}4^{k^2}$ different options for $d_f$, after we chose the set of vertices on $f$. In addition, recall that $E_f$ is the set of edges drawn strictly inside $f$, except for at least one of their endpoints. Notice that, this boundary covers also the choice of edges in $E_f$ drawn strictly inside $f$, except for both of their endpoint. 

Now, we choose the set of vertices drawn strictly inside $f$, that is, the set $U_f$. Observe that $V(d_f)\cap V$ is a separator in $G$. Therefore, for every connected component $\mathsf{cc}$ in $G\setminus (V(d_f)\cap V)$ we have exactly two options: either all of the vertices of $\mathsf{cc}$ are drawn strictly inside $f$, all of the vertices of $\mathsf{cc}$ are drawn strictly outside $f$. Since the maximum degree is $\Delta$, and we chose at most $k$ vertices from $V$ to place on $f$, we have at most $\Delta\cdot k$ connected components in $G\setminus (V(d_f)\cap V)$, so we have at most $2^{\Delta\cdot k}$ different options for $U_f$. Observe that, once we choose $U_f$, the set of edges drawn strictly inside $f$ except for exactly one endpoint, is fixed, thus, we guessed $E_f$ as well. 

Now, we guess the directions of vertices from $V^*$ we placed on points in\\ $\gis($ $\fin)\setminus \gps(\fin)$. We have at most $k$ such vertices, and since every direction consists with a choice of two grid points, we have at most $(h\cdot w)^2$ options for each vertex, so at most $(h\cdot w)^{2k}$ different guesses for $\mathsf{V^*Dir}_f$. 

In conclusion, we have at most $ k\cdot (h\cdot w)^{k}k\cdot (2n^2)^{k}(h\cdot w)^{2k}4^{k^2}2^{\Delta\cdot k}(h\cdot w)^2=\OO((k\cdot h\cdot w \cdot n)^{\OO(k)}\cdot 2^{\OO(\Delta\cdot k)})$ different info-frames in $\mathsf{InfoFrame}(G,k,h,w)$.
\end{proof}

We are now ready to describe our algorithm defined in Algorithm \ref{alg:schemeiter}. The algorithm is able to solve any $(\mathsf{Classifier},h,w,k,{\cal I}_{\mathsf{yes}})$-problem $\Pi$ (see Definition \ref{def:probSc}), for a given classifier $\mathsf{Classifier}$ with a universe $\mathsf{INF}$ (see Definition \ref{def:probInf}), a subset ${\cal I}_{\mathsf{yes}}\subseteq \mathsf{INF}$, and $h,w,k\in\mathbb{N}$. Let $\Pi$ be such a problem. The input of the algorithm is:

\begin{itemize}
	\item $G,\mathsf{INF},{\cal I}_{\mathsf{yes}},h,w,k$.
	\item A $\mathsf{Classifier}$-leaf solver $L$ (see Definition \ref{def:probLeGen}).
	\item A $\mathsf{Classifier}$-algorithm $A$ (see Definition \ref{def:probsolv}).
\end{itemize}

Now, we describe the steps of Algorithm \ref{alg:schemeiter}.

\smallskip\noindent{\bf Line \ref{alg3:Line2}: Creating the $\mathsf{FramesTable}$.} The $\mathsf{FramesTable}$ is where we intend to store for every $F\in\mathsf{InfoFrame}(G,k,h,w)$ and for every $I\in \mathsf{INF}$, $\mathsf{True}$ if there exists a drawing $d$ of $F$ with $\mathsf{dtw.}(d,f)\leq k$ such that $\mathsf{Classifier}(F,d)=I$; otherwise, we store $\mathsf{False}$.

\smallskip\noindent{\bf Line \ref{alg3:Line33}: Solving the Leaves Frames with $L$.} We use $L$ to solve $(F,I)$ for every $F\in\mathsf{InfoFrame}(G,k,h,w)$ that is a leaf, and for every $I\in \mathsf{INF}$.

\smallskip\noindent{\bf Lines \ref{alg3:Line4}-\ref{alg3:Line7}: Iterating Over Frames By Increasing Order of Their Area.} We solve every $F\in\mathsf{InfoFrame}(G,k,h,w)$, such that $F$ is not a leaf, with area $\mathsf{Area}$, and with every $I\in \mathsf{INF}$, starting from $\mathsf{Area}=1$ to $\mathsf{Area}=h\cdot w$. Observe that, this covers every info-frame in $\mathsf{InfoFrame}(G,k,h,w)$. At every iteration of the while loop at Line \ref{alg3:Line5}, we use only info-frames we already solved, that is, info-frames with strictly less area than $\mathsf{Area}$, in order to compute the value of $(F,I)$.   

\smallskip\noindent{\bf Lines \ref{alg3:Line8}-\ref{alg3:Line14}: Invoking Lemma \ref{lem:inductiveAl} to Solve $(F,I)$ By Using Values of Info-Frames With Less Area.} We iterate over every info-cutter $C=(c,F_1,F_2)$ of $F$ and every $I_1,I_2\in \mathsf{INF}$. Observe that, every frame has at least size one, and for every frame $f$ and a cutter $c$ of $f$, the size of $f$ is exactly the sum of the sizes of $f_1(c)$ and $f_2(c)$. Therefore, $F_1$ and $F_2$ are both with size strictly smaller than the size of $F$, so we already solved $(F_1,I_1)$ and $(F_2,I_2)$. Now, if the values we stored in $\mathsf{FramesTable}$ for $(F_1,I_1)$ and $(F_2,I_2)$ are both $\mathsf{True}$, and $A(F,C,I_1,I_2)=I$, then we store $\mathsf{True}$ for $(F,I)$ in $\mathsf{FramesTable}$. If, these conditions are not satisfied for any guess of $C$, $I_1$ and $I_2$, we store $\mathsf{False}$ for $(F,I)$ in $\mathsf{FramesTable}$. The correctness of the value we store is valid due to the correctness of the values we have allready stored in $\mathsf{FramesTable}$, and due to Lemma \ref{lem:inductiveAl}.

\smallskip\noindent{\bf Lines \ref{alg33:Line5}-\ref{alg3:Line20}: Returning Answer For $G$.} We iterate over every $I\in {\cal I}_{\mathsf{yes}}$, and return ``yes-instance'' if there exists such $I$ where we stored $\mathsf{True}$ in $\mathsf{FramesTable}$ for $(F_{\mathsf{Init}},I)$; otherwise, we return ``no-instance''. The correctness of the value the algorithm returns is valid due to the correctness of values we stored in $\mathsf{FramesTable}$ and since $\Pi$ is $(\mathsf{Classifier},h,w,k,{\cal I}_{\mathsf{yes}})$-problem.

Now, we aim to prove the correctness of Algorithm \ref{alg:schemeiter}. First, we show that in the end of every iteration $i$ of the while loop at Line \ref{alg3:Line5}, we compute the values of info-frames, with area $i$ or less, correctly. That is, at the end of the $i$-th iteration, the value of every pair of $(F,I)$ where $F$ is an info-frame with area $i$ or less and $I\in \mathsf{INF}$, is $\mathsf{True}$ if and only if there exists a drawing $d$ of $F$ with $\mathsf{dtw.}(d,f)\leq k$ such that $\mathsf{Classifier}(F,d)=I$. For this purpose, for every $0\leq i\leq h\cdot w$, we denote $\mathsf{FramesTable}$ obtained at the end of the $i$-th iteration, by $\mathsf{FramesTable}_i$.

\begin{lemma}\label{lem:algIter}
Let $\Pi$ be a $(\mathsf{Classifier},h,w,k,{\cal I}_{\mathsf{yes}})$-problem, for a classifier $\mathsf{Classifier}$ with a universe $\mathsf{INF}$, a subset ${\cal I}_{\mathsf{yes}}\subseteq \mathsf{INF}$ and $h,w,k\in\mathbb{N}$. Let $A$ be a $\mathsf{Classifier}$-algorithm and let $\mathsf{LeavesGen}$ be a $\mathsf{LeafSolver}$ with. Let $0\leq i\leq h\cdot w$, $F\in \mathsf{InfoFrame}(G,k,h,w)$ with area $i$ or less, and let $I\in \mathsf{INF}$. Then $(F,I,\mathsf{True}) \in \mathsf{FramesTable}_i$ if and only if there exists a drawing $d$ of $F$ with $\mathsf{dtw.}(d,f)\leq k$ such that $\mathsf{Classifier}(F,d)=I$.
\end{lemma}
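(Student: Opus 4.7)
The proof proceeds by strong induction on $i$, the area bound on the info-frame $F$. The high-level strategy is: the leaves of the recursion are handled correctly before the main loop begins (by the leaf solver $L$), and each successive iteration of the main loop processes non-leaf info-frames by combining already-computed entries via Lemma~\ref{lem:inductiveAl}.

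\textbf{Base case.} Before the main loop of Algorithm~\ref{alg:schemeiter} begins, Line~\ref{alg3:Line33} invokes the $\mathsf{Classifier}$-leaf solver $L$ on every leaf info-frame $F\in \mathsf{InfoFrame}(G,k,h,w)$ and every $I\in \mathsf{INF}$. By Definition~\ref{def:probLeGen}, the value $\mathsf{True}$ is written precisely when there exists a drawing $d$ of $F$ with $\mathsf{Classifier}(F,d)=I$. For a leaf $F$ the unique $f$-frame-tree (forced by Condition~3 of Definition~\ref{def:DrawnDecof}) consists of the single root vertex mapped to $f$, so $\mathsf{dtw}(d,f)=\sizef(f,d)$; since $F\in \mathsf{InfoFrame}(G,k,h,w)$ implies $\sizef(f,d)\leq k$ for every drawing $d$ of $F$, the constraint $\mathsf{dtw}(d,f)\leq k$ is automatic. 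Hence after Line~\ref{alg3:Line33} the lemma already holds for every leaf $F$, and in particular for every info-frame whose area is small enough to force it to be a leaf. This gives the base of the induction.

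\textbf{Inductive step.} Assume the lemma holds at the state $\mathsf{FramesTable}_{i-1}$ for all info-frames of area at most $i-1$. In iteration $i$ of the main loop (Lines~\ref{alg3:Line4}--\ref{alg3:Line14}) the algorithm processes every non-leaf $F\in \mathsf{InfoFrame}(G,k,h,w)$ of area exactly $i$ (leaves of area $i$ are already correctly handled and are not touched again). For each such $F$ and each target $I\in \mathsf{INF}$, the algorithm iterates over all info-cutters $C=(c,F_1,F_2)$ of $F$ costing at most $k$ and all $I_1,I_2\in \mathsf{INF}$, and enters $(F,I,\mathsf{True})$ in the table iff for some such choice the entries $(F_1,I_1,\mathsf{True})$ and $(F_2,I_2,\mathsf{True})$ are present and $A(F,C,I_1,I_2)=I$. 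A cutter $c$ of $f$ has its interior strictly inside $f$ and splits the interior of $f$ into two non-degenerate pieces, so $\mathsf{area}(f_1(c))$ and $\mathsf{area}(f_2(c))$ are positive and sum to $\mathsf{area}(f)=i$; thus both are at most $i-1$, and by the induction hypothesis the entries of $F_1$ and $F_2$ in $\mathsf{FramesTable}_{i-1}$ are correct. Applying Lemma~\ref{lem:inductiveAl} (whose conditions are exactly what the inner loop tests, via the correctness of $A$ guaranteed by Definition~\ref{def:probsolv}) yields the claimed equivalence for $F$: a drawing $d$ of $F$ with $\mathsf{dtw}(d,f)\leq k$ and $\mathsf{Classifier}(F,d)=I$ exists if and only if the algorithm records $\mathsf{True}$. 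Since entries for frames of area $<i$ are not modified during iteration $i$, correctness is preserved on the whole table at $\mathsf{FramesTable}_i$.

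\textbf{Main obstacle.} The principal care point is to ensure that, whenever the algorithm inspects $(F_1,I_1)$ and $(F_2,I_2)$ while processing $F$, these entries have already been correctly filled. This reduces to the geometric fact that every cutter of $f$ strictly decreases the area on both sides; once that is observed, the proof is a straightforward bookkeeping combination of Lemma~\ref{lem:inductiveAl}, the correctness of $L$ (Definition~\ref{def:probLeGen}), and the correctness of $A$ (Definition~\ref{def:probsolv}).
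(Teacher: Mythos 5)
Your proposal is correct and follows essentially the same route as the paper's proof: induction on the area, with the base case discharged by the correctness of the leaf solver and the inductive step discharged by Lemma~\ref{lem:inductiveAl} together with the observation that the two frames produced by any cutter each have strictly smaller (indeed positive, summing to $i$) area, so their table entries are already correct. Your base case is in fact slightly more careful than the paper's, since you explicitly justify why the condition $\mathsf{dtw}(d,f)\leq k$ is automatic for leaf info-frames in $\mathsf{InfoFrame}(G,k,h,w)$, a point the paper leaves implicit.
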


\begin{proof}
We prove this claim by induction on $i$. For $i=0$, that is, before the algorithm enters the while loop at Line \ref{alg3:Line5}, for every $F\in \mathsf{InfoFrame}(G,k,h,w)$ that is a leaf, and for every $I\in \mathsf{INF}$, there exists a drawing $d$ of $F$ with $\mathsf{dtw.}(d,f)\leq k$ and $\mathsf{Classifier}(F,d)=I$, if and only if $(F,I,\mathsf{True})\in\mathsf{FramesTable}_0$. Observe that, every info-frame $F$ with area $0$, $f$ contains $0$ inner points, is a leaf, and therefore the basic case is valid. Now, let $0<i\leq h\cdot w$. By the inductive hypothesis, for every $F\in \mathsf{InfoFrame}(G,k,h,w)$ with area at most $i'<i$, and for every $I\in \mathsf{INF}$, there exists a drawing $d$ of $F$ with $\mathsf{dtw.}(d,f)\leq k$ and $\mathsf{Classifier}(F,d)=I$, if and only if $(F,I,\mathsf{True})\in \mathsf{FramesTable}_{i-1}\subseteq \mathsf{FramesTable}_{i}$. Let $F\in \mathsf{InfoFrame}(G,k,h,w)$ with area $i$ and let $I\in \mathsf{INF}$. If $F$ is a leaf, $(F,I)$ was computed by $L$, then the claim is valid due to the correctness of $L$. Assume that $F$ is not a leaf. By Lemma \ref{lem:inductiveAl}, there exists a drawing $d$ of $F$ with $\mathsf{dtw.}(d,f)\leq k$ and $\mathsf{Classifier}(F,d)=I$, if and only if there exist an info-cutter $C=(c,F_1,F_2)$ of $F$ of size at most $k$, and $I_1,I_2\in \mathsf{INF}$, such that the following conditions are satisfied:
\begin{enumerate}
	\item There exists a drawing $d_1$ of $F_1$ such that $\mathsf{Classifier}(d_1,f_1(c))=I_1$ and $\mathsf{dtw}(d_1,f_1)\leq k$.
	\item There exists a drawing $d_2$ of $F_2$ such that $\mathsf{Classifier}(d_2,f_2(c))=I_2$ and $\mathsf{dtw}(d_2,f_2)\leq k$.
	\item $A(F,C,I_1,I_2)=I$.
\end{enumerate} 

Since each of $f_1(c)$ and $f_2(c)$ has strictly less area than $i$, by the inductive hypothesis, $(F_1,I_1,\mathsf{True}),$ $ (F_2,I_2,\mathsf{True})\in \mathsf{FramesTable}_{i-1}$. Therefore, the conditions of Lemma \ref{lem:inductiveAl}, are satisfied, if and only if, there exist an info-cutter $C=(c,F_1,F_2)$ of $F$, of size at most $k$ and $I_1,I_2\in \mathsf{INF}$, such that $(F_1,I_1,\mathsf{True}),$ $ (F_2,I_2,\mathsf{True})\in \mathsf{FramesTable}_{i-1}$, and $A(F,C,I_1,I_2)=I$. These conditions are satisfied if and only if in the iteration of the loop at Line \ref{alg3:Line6} where $F$ and $I$ are chosen, there is an iteration at Line \ref{alg3:Line8} where the condition in Line \ref{alg3:Line9} is satisfied, if and only if $(F,I,\mathsf{True})\in \mathsf{FramesTable}_{i}$. This ends the proof of the inductive claim. 
\end{proof}

Next, by Invoking Lemma \ref{lem:algIter}, we prove the correctness of Algorithm \ref{alg:schemeiter}.

\begin{lemma}\label{lem:AlgCorr}
	Let $\Pi$ be a $(\mathsf{Classifier},h,w,k,{\cal I}_{\mathsf{yes}})$-problem, for a classifier $\mathsf{Classifier}$ with a universe $\mathsf{INF}$, a subset ${\cal I}_{\mathsf{yes}}\subseteq \mathsf{INF}$ and $h,w,k\in\mathbb{N}$. Let $A$ be a $\mathsf{Classifier}$-algorithm, and let $L$ be a $\mathsf{Classifier}$-leaf solver. Then, Algorithm \ref{alg:schemeiter}, given as input an instance $G=(V,E)$ of $\Pi$, returns ``yes'' if and only if $G$ is a yes-instance of $\Pi$.
\end{lemma}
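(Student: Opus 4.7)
The plan is to deduce the correctness of Algorithm \ref{alg:schemeiter} almost directly from Lemma \ref{lem:algIter}, which has already established the central invariant of the dynamic programming table. Concretely, I would first observe that every info-frame $F \in \mathsf{InfoFrame}(G,k,h,w)$ has area at most $h\cdot w$, since by definition the frame component $f$ of $F$ is bounded by $R_{h,w}$. In particular, $\mathsf{InitF}_{h,w} = (R_{h,w},\emptyset,\emptyset,\emptyset,\emptyset)$ itself lies in $\mathsf{InfoFrame}(G,k,h,w)$: one checks easily that it satisfies Definitions \ref{def:infFr}, \ref{definfFraCon3} and \ref{def:validDir} (all of the set-valued components are empty, so the required conditions hold vacuously), and its cost is $4 \leq k$ for any $k \geq 4$ (and for small $k$ the problem is trivial since no meaningful drawing exists).

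Applying Lemma \ref{lem:algIter} with $i = h\cdot w$, the final table $\mathsf{FramesTable} = \mathsf{FramesTable}_{h\cdot w}$ satisfies: for every $F \in \mathsf{InfoFrame}(G,k,h,w)$ and every $I \in \mathsf{INF}$, the entry $(F,I,\mathsf{True}) \in \mathsf{FramesTable}$ if and only if there exists a drawing $d$ of $F$ with $\mathsf{dtw}(d,f) \leq k$ and $\mathsf{Classifier}(F,d) = I$. Instantiating this with $F = \mathsf{InitF}_{h,w}$, we obtain that $(\mathsf{InitF}_{h,w},I,\mathsf{True})$ belongs to $\mathsf{FramesTable}$ iff there exists a drawing $d$ of $\mathsf{InitF}_{h,w}$ with $\mathsf{dtw}(d) \leq k$ and $\mathsf{Classifier}(\mathsf{InitF}_{h,w},d) = I$.

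Inspection of Algorithm \ref{alg:schemeiter} shows that in Lines \ref{alg33:Line5}--\ref{alg3:Line20} the algorithm returns ``yes-instance'' precisely when there exists some $I \in {\cal I}_{\mathsf{yes}}$ with $(\mathsf{InitF}_{h,w},I,\mathsf{True}) \in \mathsf{FramesTable}$. Combining this with the equivalence derived from Lemma \ref{lem:algIter}, the algorithm outputs ``yes'' iff there exist $I \in {\cal I}_{\mathsf{yes}}$ and a drawing $d$ of $\mathsf{InitF}_{h,w}$ with $\mathsf{dtw}(d) \leq k$ and $\mathsf{Classifier}(\mathsf{InitF}_{h,w},d) = I$. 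By Definition \ref{def:probSc}, this condition is exactly the characterization of $G$ being a yes-instance of the $(\mathsf{Classifier},h,w,k,{\cal I}_{\mathsf{yes}})$-problem $\Pi$. Hence Algorithm \ref{alg:schemeiter} returns ``yes'' if and only if $G$ is a yes-instance of $\Pi$, completing the proof.

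The main obstacle in the overall development has already been discharged inside Lemma \ref{lem:algIter} (which relies crucially on Lemma \ref{lem:inductiveAl}); the remaining work here is just to bridge ``the DP table is correct at the top-level entry'' with ``the problem-level yes/no answer.'' The only mildly subtle verification is that $\mathsf{InitF}_{h,w}$ genuinely is a member of $\mathsf{InfoFrame}(G,k,h,w)$ (so that Lemma \ref{lem:algIter} applies to it), and that its area equals $h\cdot w$, ensuring that by the final iteration of the outer loop its entries have indeed been populated.
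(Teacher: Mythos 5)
Your proof is correct and follows essentially the same route as the paper's: apply Lemma \ref{lem:algIter} at the final iteration $i = h\cdot w$ to $\mathsf{InitF}_{h,w}$, note that its area is $h\cdot w$, and then translate the table entry condition into the membership condition of Definition \ref{def:probSc}. Your additional side-remark verifying that $\mathsf{InitF}_{h,w}$ actually belongs to $\mathsf{InfoFrame}(G,k,h,w)$ (by checking that its cost is $4$, so one needs $k\geq 4$, with the small-$k$ case being degenerate) is a detail the paper silently skips; it does no harm and slightly tightens the argument, but does not constitute a different approach.
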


\begin{proof}
Let $G=(V,E)\in \Pi$, and assume that $G$ is a yes-instance. Then, since $\Pi$ is a $(\mathsf{Classifier},h,$ $w,k,{\cal I}_{\mathsf{yes}})$-problem, there exist a drawing $d$ of $\mathsf{InitF}_{h,w}$ with $\mathsf{dtw}(d)\leq k$ and $I\in {\cal I}_{\mathsf{yes}}$ such that $\mathsf{Classifier}(F_{\mathsf{Initi}(h,w)},d)=I$. Notice that $\mathsf{InitF}_{h,w}$ is with area $h\cdot w$. So, by Lemma \ref{lem:algIter}, by the end of the last iteration of the while loop at Line \ref{alg3:Line5}, we get that $(F_{\mathsf{Initi}(h,w)},I,\mathsf{True})\in \mathsf{FramesTable}_{h\cdot w}$. Therefore, the condition in Line \ref{alg33:Line5} is satisfied, so the algorithm returns ``yes-instance''. If $G$ is a no-instance, since $\Pi$ is a $(\mathsf{Classifier},h,w,k,{\cal I}_{\mathsf{yes}})$-problem, for every $I\in {\cal I}_{\mathsf{yes}}$, there is no drawing $d$ of $\mathsf{InitF}_{h,w}$ with $\mathsf{dtw}(d)\leq k$ such that $\mathsf{Classifier}(F_{\mathsf{Initi}(h,w)},d)=I$.  Again, by Lemma \ref{lem:algIter}, by the end of the last iteration of the while loop at Line \ref{alg3:Line5}, we get that $(F_{\mathsf{Initi}(h,w)},I,\mathsf{False})\in \mathsf{FramesTable}_{h\cdot w}$ for every $I\in {\cal I}_{\mathsf{yes}}$. Therefore, the condition in Line \ref{alg33:Line5} is not satisfied. Thus, the algorithm returns ``no-instance''.
\end{proof}

In the next lemma, we analyze the runtime of Algorithm \ref{alg:schemeiter}.

\begin{lemma}\label{lem:AlgTime}
	Let $\Pi$ be a $(\mathsf{Classifier},h,w,k,{\cal I}_{\mathsf{yes}})$-problem, for a classifier $\mathsf{Classifier}$ with a universe $\mathsf{INF}$, a subset ${\cal I}_{\mathsf{yes}}\subseteq \mathsf{INF}$ and $h,w,k\in\mathbb{N}$. Let $A$ be a $\mathsf{Classifier}$-algorithm, with runtime $\mathsf{Time}A(G,k)$, and let $L$ be a $\mathsf{Classifier}$-leaf solver with runtime $\mathsf{Time}L(G,k))$. Then, Algorithm \ref{alg:schemeiter} given an input $G$, runs in time $\OO ((k\cdot h\cdot w\cdot n)^{\OO (k)}\cdot 2^{\OO (\Delta\cdot k)}\cdot |\mathsf{INF}|^{\OO(1)}\cdot \mathsf{Time}A(G,k)) +\OO ((k\cdot h\cdot w\cdot n)^{\OO (k)}\cdot |\mathsf{INF}|\cdot \mathsf{Time}L(G,k))$, where $|V|=n$ and $\Delta$ is the maximum degree of $G$.   
\end{lemma}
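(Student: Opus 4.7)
The plan is to bound the runtime contributions of the two phases of Algorithm \ref{alg:schemeiter} separately and sum them. The two phases are: (i) the leaf preprocessing in Line \ref{alg3:Line33}, in which $L$ is invoked for every leaf info-frame paired with every $I\in\mathsf{INF}$, and (ii) the main bottom-up dynamic programming in Lines \ref{alg3:Line4}--\ref{alg3:Line14}, in which each non-leaf info-frame is processed using already-computed table entries and a call to $A$. Since by Lemma \ref{lem:numFr} we have $|\mathsf{InfoFrame}(G,k,h,w)|=\OO((k\cdot h\cdot w\cdot n)^{\OO(k)}\cdot 2^{\OO(\Delta\cdot k)})$, the only nontrivial combinatorial counting still needed is an upper bound on the number of info-cutters examined per info-frame.

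For the leaf phase, I will first observe that any leaf info-frame $F=(f,d_f,E_f,U_f,\mathsf{V^*Dir}_f)$ necessarily satisfies $U_f=\emptyset$, because there are no grid points strictly inside $f$ and so no vertex of $V$ can be drawn there. Inspecting the proof of Lemma \ref{lem:numFr}, the $2^{\OO(\Delta\cdot k)}$ factor in the info-frame count stems exactly from enumerating $U_f$ across the $\OO(\Delta\cdot k)$ connected components of $G\setminus (V(d_f)\cap V)$; this factor disappears for leaves. Hence the number of leaf info-frames is at most $\OO((k\cdot h\cdot w\cdot n)^{\OO(k)})$, and calling $L$ once for every leaf info-frame and every $I\in\mathsf{INF}$ yields the claimed $\OO((k\cdot h\cdot w\cdot n)^{\OO(k)}\cdot|\mathsf{INF}|\cdot\mathsf{Time}L(G,k))$ bound.

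For the main dynamic programming phase, I will enumerate the tuples $(F,I,C,I_1,I_2)$ processed across all iterations of the while-loop in Line \ref{alg3:Line5}. The key geometric observation is that for an info-cutter $C=(c,F_1,F_2)$ of $F$, the cutter $c$ is recoverable from $F_1$ and $F_2$ alone (since $c$ is the common portion of the frames of $F_1$ and $F_2$), and by Definition \ref{def:CAtMostK} both $F_1,F_2$ lie in $\mathsf{InfoFrame}(G,k,h,w)$. Therefore the total number of info-cutters of $F$ of cost at most $k$ is at most $|\mathsf{InfoFrame}(G,k,h,w)|^2$, and the overall tuple count is bounded by $|\mathsf{InfoFrame}(G,k,h,w)|^3\cdot|\mathsf{INF}|^3$. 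Cubing a quantity of the form $(k\cdot h\cdot w\cdot n)^{\OO(k)}\cdot 2^{\OO(\Delta\cdot k)}$ keeps it in the same form with larger hidden constants, so the tuple count simplifies to $\OO((k\cdot h\cdot w\cdot n)^{\OO(k)}\cdot 2^{\OO(\Delta\cdot k)})\cdot|\mathsf{INF}|^{\OO(1)}$. For each tuple, the per-tuple work consists of a constant number of $\mathsf{FramesTable}$ lookups (polynomial in the description length of an info-frame, which is absorbed into the leading factor) and one call to $A$ costing $\mathsf{Time}A(G,k)$; the outer loop over $\mathsf{Area}$ adds only an $\OO(h\cdot w)$ factor which is likewise absorbed. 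Adding the two phases yields the stated bound.

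The main obstacle I anticipate is making the info-cutter counting rigorous, in particular justifying that $(F_1,F_2)$ determine $c$ uniquely so that enumerating over $\mathsf{InfoFrame}(G,k,h,w)^2$ does not undercount or overcount info-cutters. I will handle this by appealing to the geometric fact that the two frames associated with a cutter, $f_1(c)$ and $f_2(c)$, intersect precisely along $c$ together with two endpoints on $f$ (Definition \ref{def:assFrameCutter}), so that $c$ is forced once $f_1(c)$ and $f_2(c)$ are fixed; consistency of $(F_1,F_2)$ with some ambient $F$ is either automatic or else that $(F_1,F_2)$ simply contributes zero valid info-cutters in the enumeration, which is harmless for an upper bound on runtime.
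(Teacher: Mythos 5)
Your proposal is correct and follows essentially the same route as the paper's proof: bound the leaf phase by (number of leaf info-frames)$\cdot|\mathsf{INF}|\cdot\mathsf{Time}L$, bound the dynamic-programming phase by enumerating $(F,I,C,I_1,I_2)$ with the number of info-cutters per info-frame bounded by $|\mathsf{InfoFrame}(G,k,h,w)|^2$, and absorb the cubing and polynomial overhead into the $\OO(\cdot)^{\OO(k)}$ form. You are in fact slightly more careful than the paper, which silently drops the $2^{\OO(\Delta\cdot k)}$ factor in the leaf term without your $U_f=\emptyset$ observation and does not spell out why $(F_1,F_2)$ determine $c$.
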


\begin{proof}
	Recall, that from Lemma \ref{lem:numFr}, we have at most $\OO((k\cdot h\cdot w \cdot n)^{\OO(k)}\cdot 2^{\OO(\Delta\cdot k)})$ info-frames in $\mathsf{InfoFrame}(G,k,h,w)$. Therefore, we have at most that number of info-frames which are leaves, so the runtime of Line \ref{alg3:Line33} is $\OO ((k\cdot h\cdot w\cdot n)^{\OO (k)}\cdot |\mathsf{INF}|\cdot \mathsf{Time}L(G,k))$. We continue with the runtime of the rest of the algorithm. Observe that, every info-frame is chosen in exactly one iteration of the iterations the while loop at Line \ref{alg3:Line5}. For any choice of an info-frame and an $I\in \mathsf{INF}$, we have at most $|\mathsf{InfoFrame}(G,k,h,w)|^2|\mathsf{INF}|^2$ choices for an info-cutter and $I_1,I_2\in \mathsf{INF}$. For any of these choices, the runtime of the $\mathsf{Classifier}$-solver $A$ is $\mathsf{Time}A(G,k)$. The rest of the operations in the while loop at Line \ref{alg3:Line5} are done in polynomial time. Therefore, the runtime of the while loop at Line \ref{alg3:Line5} is $\OO((k\cdot h\cdot w \cdot n)^{\OO(k)}\cdot 2^{\OO(\Delta\cdot k)})\cdot |\mathsf{INF}|\cdot (\OO((k\cdot h\cdot w \cdot n)^{\OO(k)}\cdot 2^{\OO(\Delta\cdot k)}))^2\cdot |\mathsf{INF}|^2\cdot \mathsf{Time}A(G,k)=\OO((k\cdot h\cdot w \cdot n)^{\OO(k)}\cdot |\mathsf{INF}|^{\OO(1)}\cdot \mathsf{Time}A(G,k))$. Therefore, the runtime of Algorithm \ref{alg:schemeiter} is $\OO ((k\cdot h\cdot w\cdot n)^{\OO (k)}\cdot 2^{\OO (\Delta\cdot k)}\cdot |\mathsf{INF}|^{\OO(1)}\cdot \mathsf{Time}A(G,k)) +\OO ((k\cdot h\cdot w\cdot n)^{\OO (k)}\cdot |\mathsf{INF}|\cdot \mathsf{Time}L(G,k))$. This ends the proof.
\end{proof}

Now, by invoking Lemmas \ref{lem:AlgCorr} and \ref{lem:AlgTime}, we prove the following theorem.

\begin{theorem}\label{the:AlgSch}
	Let $\Pi$ be a $(\mathsf{Classifier},h,w,k,{\cal I}_{\mathsf{yes}})$-drawing decision problem, for a classifier $\mathsf{Classifier}$ with a universe $\mathsf{INF}$, a subset ${\cal I}_{\mathsf{yes}}\subseteq \mathsf{INF}$ and $h,w,k\in\mathbb{N}$. Let $A$ be a $\mathsf{Classifier}$-algorithm, with runtime $\mathsf{Time}A(G,k)$, and let $L$ be a $\mathsf{Classifier}$-leaf solver with runtime $\mathsf{Time}L(G,$ $k))$. Then, there exists an algorithm that gets as input an instance $G=(V,E)$ of $\Pi$, runs in time $\OO ((k\cdot h\cdot w\cdot n)^{\OO (k)}\cdot 2^{\OO (\Delta\cdot k)}\cdot |\mathsf{INF}|^{\OO(1)}\cdot \mathsf{Time}A(G,k)) +\OO ((k\cdot h\cdot w\cdot n)^{\OO (k)}\cdot |\mathsf{INF}|\cdot \mathsf{Time}L(G,k))$ and returns ``yes'' if and only if $G$ is a yes-instance of $\Pi$, where $|V|=n$ and $\Delta$ is the maximum degree of $G$.   
\end{theorem}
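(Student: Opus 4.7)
The plan is to argue that Theorem~\ref{the:AlgSch} is an immediate consequence of the two lemmas already established for Algorithm~\ref{alg:schemeiter}, namely Lemma~\ref{lem:AlgCorr} (correctness) and Lemma~\ref{lem:AlgTime} (runtime). Since a $(\mathsf{Classifier},h,w,k,{\cal I}_{\mathsf{yes}})$-problem is specified entirely by the data $(\mathsf{Classifier}, \mathsf{INF}, {\cal I}_{\mathsf{yes}}, h, w, k)$ and the input graph $G$, the proposed algorithm is exactly Algorithm~\ref{alg:schemeiter}, run on $G$ using the user-supplied $A$ and $L$.

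For the correctness half, I would directly invoke Lemma~\ref{lem:AlgCorr}, which asserts that on input $G$, Algorithm~\ref{alg:schemeiter} returns ``yes'' if and only if $G$ is a yes-instance of $\Pi$. Recall that this lemma is itself proved by showing, via Lemma~\ref{lem:algIter} (inductive correctness of the dynamic programming table $\mathsf{FramesTable}$ over areas $i=0,1,\dots,h\cdot w$), that for the top-level info-frame $\mathsf{InitF}_{h,w}$ and each $I\in \mathsf{INF}$, the entry $(F_{\mathsf{Init}(h,w)},I,\mathsf{True})$ is stored if and only if there exists a drawing $d$ of $\mathsf{InitF}_{h,w}$ with $\mathsf{dtw}(d)\le k$ and $\mathsf{Classifier}(\mathsf{InitF}_{h,w},d)=I$; then by Definition~\ref{def:probSc}, $G$ is a yes-instance iff such an $I$ exists in ${\cal I}_{\mathsf{yes}}$, which is precisely the condition tested in Lines \ref{alg33:Line5}--\ref{alg3:Line20}.

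For the runtime half, I would invoke Lemma~\ref{lem:AlgTime}, which already provides the stated bound
\[
\OO ((k\cdot h\cdot w\cdot n)^{\OO (k)}\cdot 2^{\OO (\Delta\cdot k)}\cdot |\mathsf{INF}|^{\OO(1)}\cdot \mathsf{Time}A(G,k)) +\OO ((k\cdot h\cdot w\cdot n)^{\OO (k)}\cdot |\mathsf{INF}|\cdot \mathsf{Time}L(G,k)).
\]
That lemma is in turn proved by combining the bound $|\mathsf{InfoFrame}(G,k,h,w)|\le \OO((k\cdot h\cdot w\cdot n)^{\OO(k)}\cdot 2^{\OO(\Delta\cdot k)})$ from Lemma~\ref{lem:numFr}, with the observation that the algorithm iterates over one info-frame and one $I\in\mathsf{INF}$ per table entry, and over a pair of info-cutter candidates and values $(I_1,I_2)$ per update; each update calls $A$ once, and the leaf initialization calls $L$ once per (leaf info-frame, $I$) pair.

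Since both component lemmas are already in hand, there is no genuine obstacle; the only thing to be careful about is to verify that the quantities $h$, $w$, $k$, and the universe $\mathsf{INF}$ appearing in the theorem statement are exactly those used in Algorithm~\ref{alg:schemeiter} and that the oracles $A$ and $L$ supplied by the user match Definitions~\ref{def:probsolv} and~\ref{def:probLeGen} on which the lemmas depend. Once this bookkeeping is observed, the theorem follows by concatenating the statements of Lemmas~\ref{lem:AlgCorr} and~\ref{lem:AlgTime}.
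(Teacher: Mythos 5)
Your proposal is correct and is exactly the paper's own argument: the paper derives Theorem~\ref{the:AlgSch} by directly invoking Lemma~\ref{lem:AlgCorr} for correctness and Lemma~\ref{lem:AlgTime} for the runtime bound, with the algorithm being Algorithm~\ref{alg:schemeiter} run on the user-supplied $A$ and $L$. Your additional remarks tracing these lemmas back to Lemma~\ref{lem:algIter} and Lemma~\ref{lem:numFr} accurately reflect how they are established in the paper.
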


%\begin{algorithm}[!t]
	%\SetKwInOut{Input}{Input}
	%\SetKwInOut{Output}{Output}
	%\medskip
	%{\textbf{function} $\mathsf{ProbSolverScheme}$}$(\langle A,\mathsf{LeavesGenerator},I,{\cal I}_{\mathsf{yes}},h,w,k\rangle)$\;
	%$k\gets 4$\; 
	%\While{$k\leq k$}
	%{\label{alg3:Line3}
	%	$\mathsf{FramesTable}\gets \mathsf{ProbSolverSchemeIter}(A,\mathsf{LeavesGenerator},I,h,w,k)$\;
	%	\If{There exists $I\in {\cal I}_{\mathsf{yes}}$ such that $\mathsf{FramesTable}(F_\mathsf{Init},I)=\mathsf{True}$}
	%	{\label{alg33:Line5}
		%	\Return ``yes-instance''\;
	%	}
	%	$k\gets k+1$\;
%	}
%	\Return ``no-instance''\;
%	\caption{$\mathsf{ProbSolverScheme}$}
%	\label{alg:scheme}
%\end{algorithm}

\begin{algorithm}[!t]
	\SetKwInOut{Input}{Input}
	\SetKwInOut{Output}{Output}
	\medskip
	{\textbf{function} $\mathsf{ProbSolverScheme}$}$(\langle G,\mathsf{INF},{\cal I}_{\mathsf{yes}},h,w,k,L,A\rangle)$\;
	$\mathsf{FramesTable}\gets \emptyset$\; \label{alg3:Line2}
	$\mathsf{FramesTable}\gets L(G,k,h,w,\mathsf{INF})$\; \label{alg3:Line33}
	$\mathsf{Area}\gets 1$\; \label{alg3:Line4}
	\While{$\mathsf{Area}\leq h\cdot w$}
	{\label{alg3:Line5}
		\For{Every $F\in \mathsf{InfoFrame}(G,k,h,w)$ that is not a leaf, with area $\mathsf{Area}$ and for every $I'\in \mathsf{INF}$}
		{\label{alg3:Line6}
			$\mathsf{Found}=\mathsf{False}$\; \label{alg3:Line7}
			\For{Every info-cutter $C$ of $F$ and $I_1,I_2\in \mathsf{INF}$ such that $\mathsf{FramesTable}(F_1,I_1)=\mathsf{True}$ and $\mathsf{FramesTable}(F_1,I_1)=\mathsf{True}$} 
			{\label{alg3:Line8}
				\If{$A(F,C,I_1,I_2)=I'$}
				{\label{alg3:Line9}
					$\mathsf{Found}=\mathsf{True}$\;
				}
			}
			$\mathsf{FramesTable}\gets \mathsf{FramesTable}\cup \{((F,I'),\mathsf{Found})\}$\;
		} \label{alg3:Line14}

$\mathsf{Area}\gets \mathsf{Area}+1$\;
}
	
	\If{There exists $I\in {\cal I}_{\mathsf{yes}}$ such that $\mathsf{FramesTable}(F_\mathsf{Init},I)=\mathsf{True}$}
{\label{alg33:Line5}
	\Return ``yes-instance''\;
}

\Return ``no-instance''\; \label{alg3:Line20}
	\caption{$\mathsf{ProbSolverScheme}$}
	\label{alg:schemeiter}
\end{algorithm}

%!TEX root =Main-Movement.tex

\section{Examples of Using The Scheme}\label{sec:ExampleScheme3}

In this section, we present several examples of applications of our scheme, developing algorithms for different graph drawing problems. First, remind that, one of the things we should provide in order to use the scheme, is a leaf solver. That is, an algorithm $L$, which solves the cases of info-frames which are leaves. Recall, that an info-frame $F=(f,d_f,E_f,U_f,\mathsf{V^*Dir}_f)$ is a leaf if there are no grid points strictly inside $f$. Therefore, if $U_f\neq \emptyset$ we can conclude that there are no drawings of $F$, since every vertex in $U_f$ must be drawn on a grid point strictly inside $f$. Similarly, for every $\{uv_i,uv_{i+1}\}\in E_f$ there is at most one possible way to draw, since there are no grid points inside $f$, and edges might bend only at grid points. Thus, every $\{uv_i,uv_{i+1}\}\in E_f$ must be drawn as a straight line between $uv_i$ and $uv_{i+1}$. Therefore, for $F$ we have at most one drawing. Observe that, we can get this drawing in $\OO(k)$ runtime. This observation is very useful to construct a leaf solver, as we will see in the examples introduced in this section. 

\begin{observation} \label{obs:oneDre}
Let $G$ be a graph, let $F$ be an info-frame which is a leaf. Then, there is at most one drawing of $F$. Moreover, it is possible in runtime $\OO(k)$, to construct the drawing of $F$, if such exists, or conclude that there is no such a drawing. 
\end{observation}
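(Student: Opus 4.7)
The plan is to show that a leaf info-frame $F=(f,d_f,E_f,U_f,\mathsf{V^*Dir}_f)$ forces almost every component of any drawing of $F$, and that the only remaining ``freedom''---the routing of each edge in $E_f$---is in fact uniquely determined. First I would observe that if $U_f\neq\emptyset$, then no drawing of $F$ exists: by Condition~\ref{infFramcon3} of Definition~\ref{def:infFrDr}, each $u\in U_f\subseteq V$ must be placed strictly inside $f$, but any such vertex must lie on a grid point (Condition~\ref{G*drawcon2} of Definition~\ref{def:gstdr}), and $F$ being a leaf means $\gps(\fin)\cap\ir(f)$ contains no strict-interior grid point. So we can immediately output ``no'' in this case.

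Assume now $U_f=\emptyset$. Then the placements of all vertices in any drawing $d$ of $F$ are forced by $d_f$ (Conditions~\ref{infFramcon1}--\ref{infFramcon3} of Definition~\ref{def:infFrDr}), and by Condition~\ref{infFramcon5} no edge of the form $\{u,v\}\in E$ with $u,v\in U_f$ and $V^*_{\{u,v\}}\cap V(d_f)=\emptyset$ occurs. The only remaining objects to draw are the edges in $E_f$. I would argue that each such edge $e=\{uv_i,uv_{i+1}\}\in E_f$ is drawn strictly inside $f$ except at its endpoints (Condition~\ref{infFramcon4} of Definition~\ref{def:infFrDr}); since $d(e)\in{\cal P}(\fin)$ is a straight-line path whose internal bend-vertices must be mapped to grid points strictly inside $f$ (Definition~\ref{def:StrightD} combined with the fact that its interior lies in $\ir(f)$), and since no such grid points exist, the path cannot contain any internal vertex. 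Hence $d(e)$ is forced to be the straight segment $\ell(d_f(uv_i),d_f(uv_{i+1}))$. Thus the candidate drawing $d$ is uniquely determined, proving the uniqueness part.

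For the constructive part, I would produce the candidate $d$ by taking $d_f$ and adding, for every $\{uv_i,uv_{i+1}\}\in E_f$, the straight segment between the (already fixed) images of its endpoints, and then verify that $d$ is indeed a drawing of $F$. The checks needed are: (i) each added segment lies inside $f$ and meets $f$ only at its endpoints; (ii) the segments pairwise obey the $G^*$-drawing non-crossing conditions \ref{G*drawcon7}--\ref{G*drawcon9} of Definition~\ref{def:gstdr} with each other and with $d_f$; and (iii) for every $uv_i\in V(d_f)\cap V^*$ mapped to a point in $\gis(\fin)\setminus\gps(\fin)$, the direction Condition~\ref{infFramcon6} of Definition~\ref{def:infFrDr} is respected by the unique adjacent segment of $d$. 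If all checks succeed, output $d$; otherwise output ``no''.

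For the runtime, since $F\in\mathsf{InfoFrame}(G,k,h,w)$ we have $\sizef(f,d_f)\leq k$, so $|V(f^{\mathsf{min}})|$, $|V(d_f)|$, $|E(d_f)|$, and $|E_f|$ are all $\OO(k)$, and $|U_f|\leq\OO(k)$ as well (any $u\in U_f$ contributes at least $\tfrac{1}{2}$ to $\sizef$ via an incident edge crossing $f$, by the argument used in Lemma~\ref{lem:dtwIsLBoundedbyTw}). So the candidate drawing has $\OO(k)$ components to generate and $\OO(k)$ local validity checks to perform; each is $\OO(1)$, yielding a total runtime of $\OO(k)$. The mildly subtle point---the only real obstacle---is verifying cleanly that no internal grid-point bend can occur in $d(e)$ for $e\in E_f$, but this is a direct consequence of $f$ being a leaf together with the definition of ${\cal P}(\fin)$.
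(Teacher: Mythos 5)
Your proposal is correct and takes essentially the same approach as the paper: the paper's own ``proof'' is the short discussion preceding the observation, which, exactly as you argue, notes that a leaf frame has no grid points strictly inside, so any vertex of $U_f$ would have nowhere to go, and any edge (or dummy sub-edge) in $E_f$ cannot bend and must therefore be a single line segment between its already-placed endpoints. Your write-up elaborates the same reasoning and adds the explicit verification step, which is the right thing to make the ``or conclude that there is no such a drawing'' clause rigorous.

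Two small imprecisions that are worth flagging but do not affect correctness. First, your claim that $|U_f|\le\OO(k)$ because every $u\in U_f$ contributes at least $\tfrac12$ to $\sizef(f)$ is not true in general: a vertex drawn strictly inside $f$ whose neighbors are also all inside $f$ contributes nothing to the cost of $f$, so $\sizef(f)\le k$ does not by itself bound $|U_f|$. This is moot here, though, since the only thing the algorithm needs is whether $U_f=\emptyset$, which is a constant-time test. Second, the validity checks are not literally ``$\OO(k)$ local checks each $\OO(1)$'': verifying the pairwise non-intersection conditions (Conditions~\ref{G*drawcon7}--\ref{G*drawcon9} of Definition~\ref{def:gstdr}) naively takes $\OO(k^2)$ time on the $\OO(k)$ drawn segments. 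The paper asserts $\OO(k)$ without justification as well, so this does not distinguish your proposal from the paper's argument, but if one wanted to be precise one would either argue for a linear-time check or state the bound as $\mathsf{poly}(k)$.
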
 

\subsection{Algorithm for The Grid Recognition Problem}

%In this subsection we show an example of using our scheme.
In this subsection, we use Theorem \ref{the:AlgSch} to design an algorithm for the {\sc Grid Recognition} problem (see Definition~\ref{def:gridRec}) parameterized by $k$ with runtime $n^{\OO (k)}$ where $k$, given as input, bounds the drawn treewidth of the sought realization (if one exists). In turn, this will also yield a runtime of $n^{\OO(\sqrt{n)}}$. Recall that in the {\sc Grid Recognition} problem parameterized by  $k$, given a graph $G=(V,E)$, the goal is to determine whether $G$ has a grid drawing of drawn treewidth at most $k$. We assume that $G$ is connected; otherwise, we apply the algorithm on each of the different connected components separately. 

%Additionally, recall that a drawing $d$ is a grid drawing if for every $u\in V(d)$, $d(u)\in \mathbb{N}_0\times \mathbb{N}_0$ and for every $\{u,v\} \in E(d)$, $|\fr(u)-\fr(v)|+|\fc(u)-\fc(v)|=1$. 

Observe that for every grid drawing $d$ of a graph $G$, a frame $f$ and an edge $\{u,v\}$ of $G$, $\{u,v\}$ does not have no turning points in $f$ with respect to $d$ except for, possibly $d(u)$ or $d(v)$. Therefore, we conclude that any info-frame with vertices from $V^*$ is not useful for us in order to construct a grid drawing. In particular, we will only consider info-frames $F=(f,d_f,E_f,U_f,\mathsf{V^*Dir}_f)$ where $d_f(V)\cap V^*=\emptyset$.   
Let $\mathsf{INF}=\{0,1\}$.

We define a classifier, $\mathsf{GridClassifier}$, as follows:

\begin{definition}[{\bf $\mathsf{GridClassifier}$}]
Let $G$ be a graph, let $F$ be an info-frame and let $d$ be a drawing of $F$. Then $\mathsf{GridClassifier}(F,d)=1$ if $d$ is a grid drawing; otherwise, $\mathsf{GridClassifier}(F,d)=0$.
\end{definition}

 %We define the function $\mathsf{Classifier}$ as follows. For every info-frame $F$, and every drawing $d$ of $F$, $\mathsf{Classifier}(F,d)=1$ if $d$ is a grid drawing; Otherwise $\mathsf{Classifier}(F,d)=0$. 
 
Next, we define an algorithm called $A_{\mathsf{Grid}}$: 

\begin{definition}[{\bf $A_{\mathsf{Grid}}$}]
	Let $G$ be a graph, let $F$ be an info-frame, let $C=(c,F_1,F_2)$ be an info-cutter of $F$ and let $I_1,I_2\in \mathsf{INF}$. Then, $A_{\mathsf{Grid}}(F,C,I_1,I_2)=1$ if $I_1=I_2=1$; otherwise, $A_{\mathsf{Grid}}(F,C,I_1,I_2)=0$. 
\end{definition}

We show that $A_{\mathsf{Grid}}$ is a $\mathsf{GridClassifier}$-algorithm: by showing that the conditions of Definition \ref{def:probsolv} are satisfied:

\begin{lemma}\label{lem:A}
$A_{\mathsf{Grid}}$ is a $\mathsf{GridClassifier}$-algorithm.
\end{lemma}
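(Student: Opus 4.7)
The plan is to verify directly the two conditions required by Definition~\ref{def:probsolv}. Condition~1 is immediate from the signature of $A_{\mathsf{Grid}}$: it takes an info-frame $F$, an info-cutter $C$ of $F$, and two values in $\mathsf{INF}=\{0,1\}$, and it outputs a value in $\mathsf{INF}$. So the real content is in Condition~2.

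For Condition~2, I fix an info-frame $F$, an info-cutter $C=(c,F_1,F_2)$ of $F$, and values $I_1,I_2\in\mathsf{INF}$. Let $d_1,d_2$ be arbitrary drawings of $F_1,F_2$ with $\mathsf{GridClassifier}(F_i,d_i)=I_i$, and set $d=\mathsf{Glue}(F,C,d_1,d_2)$. By Lemma~\ref{lem:glue}, $d$ is a drawing of $F$. It suffices to show that $d$ is a grid drawing if and only if $I_1=I_2=1$, since $A_{\mathsf{Grid}}(F,C,I_1,I_2)$ is defined to output $1$ precisely in that case.

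For the ``if'' direction, I assume $d_1,d_2$ are both grid drawings and verify each requirement of Definition~\ref{def:Grid graph} for $d$. Every vertex of $d$ is placed by $\mathsf{GlueVer}$ (Definition~\ref{def:glueVer}) using either $d_f$, $d_1$, or $d_2$; since $d_1,d_2$ are grid drawings and $d_f$ coincides with them on $f$ by the {\bf Equality of Common Parts} property, every vertex of $V$ lies on a grid point. For edges, I inspect the cases of $\mathsf{GlueEdg}$ (Definition~\ref{def:glueEdg}) and observe that in the grid-recognition setting we have restricted ourselves to info-frames satisfying $V(d_f)\cap V^*=\emptyset$, so an edge of $E$ never gets split across $c$ into multiple pieces requiring concatenation; each edge of $d$ is therefore copied directly from $d_f$, $d_1$, or $d_2$ and, being a unit-length axis-parallel segment in its source, remains one in $d$.

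For the ``only if'' direction, I argue the contrapositive: if $I_1=0$ (the case $I_2=0$ is symmetric), then $d_1$ is not a grid drawing, so it contains either a vertex of $V$ off the grid or an edge of $E$ that is not a unit-length axis-parallel segment. By the structure of $\mathsf{GlueVer}$ and $\mathsf{GlueEdg}$, the restriction of $d$ to the interior of $f_1(c)$ agrees with $d_1$ on all these items (this is essentially the reverse direction of Lemma~\ref{lem:Glueto}), so the same defect is present in $d$, and $d$ is not a grid drawing. The main obstacle in writing out this proof cleanly will be the edge-concatenation case of $\mathsf{GlueEdg}$: it requires checking that for the specific class of info-frames relevant to grid recognition, the ``partly intersects $c$'' branch never triggers on an edge of $E$, so that every edge truly comes from one source drawing intact.
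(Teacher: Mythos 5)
Your proposal is correct and follows essentially the same route as the paper's proof: a case split on whether $I_1=I_2=1$, using the standing restriction to info-frames with $V(d_f)\cap V^*=\emptyset$ to ensure every edge lies entirely inside $f_1(c)$ or $f_2(c)$ (so the concatenation branch of $\mathsf{GlueEdg}$ never fires), and using equality up to renaming on $\pp(f_i(c))$ to transfer the grid property, or its failure, between $d$ and $d_1,d_2$. The "obstacle" you flag at the end is resolved exactly as you anticipate, so no gap remains.
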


\begin{proof}
We show that the conditions of Definition \ref{def:probsolv} are satisfied. Condition \ref{def:probSolvCon1} is trivially satisfied by the definition of the algorithm. Now, let $(F,C=(c,F_1,F_2),I_1,I_2)$, where $F_1=(f_1(c),d_{f_1},E_{f_1},U_{f_1}, \mathsf{V^*Dir}_{f_1})$ and $F_2=(f_2(c),d_{f_2},E_{f_2},U_{f_2},\mathsf{V^*Dir}_{f_2}))$, be an input for $A_{\mathsf{Grid}}$. We have the following cases:
\begin{itemize}
	\item First, assume that at least one of $I_1$ and $I_2$ equals $0$. Without loss of generality, assume that $I_1=0$. Therefore, for every drawing $d_1$ of $F_1$, $d_1$ is not a grid graph. Now, let $d_1$ and $d_2$ be drawings of $F_1$ and $F_2$, respectively, and let $d=\mathsf{Glue}(F,C,d_1,d_2)$. Now, since we assume that $d_{f_1}(V)\cap V^*=\emptyset$, observe that for every $u\in V(d)$, $d(u)\in \gps(\fin)$. Since $d_1$ is not a grid drawing, there exists $\{u,v\} \in E(d_1)$ such that 
	$|d_{1\mathsf{x}}(u)-d_{1\mathsf{y}}(v)|+|d_{1\mathsf{x}}(u)-d_{1\mathsf{y}}(v)|\neq 1$. Observe that $d=^{\pp(f_1(c))}_{\mathsf{rename}} d_1$. Therefore, $u,v\in V(d)$, $d(u)=d_1(u)$ and $d(v)=d_1(v)$; so,  $|\fr(u)-\fr(v)|+|\fc(u)-\fc(v)|\neq 1$. Thus, $d$ is not a grid drawing, and hence $\mathsf{GridClassifier}(F,d)=0$. 
	\item Second, assume that $I_1=I_2=1$. Let $d_1$ and $d_2$ be drawings of $F_1$ and $F_2$, respectively, such that $\mathsf{GridClassifier}(F_1,d_1)=1$ and $\mathsf{GridClassifier}(F_2,d_2)=1$. Let $d=\mathsf{Glue}(F,C,d_1,d_2)$. We show that $d$ is a grid drawing. Let $u\in V(d)$. Assume, without loss of generality, that $d(u)$ is inside $f_1(c)$. Then, since $u\in V$ and $d=^{\pp(f_1(c))}_{\mathsf{rename}} d_1$, we get that $d(u)=d_1(u)$. Since $d_1$ is a grid drawing, we get that $d_1(u)\in \gps(\fin)$; so, $d(u)\in \gps(\fin)$. Now, let $\{u,v\}\in E(d)$. Since there are no vertices from $V^*$ in $d_1$ or $d_2$, it follows that $\{u,v\}$ is inside $f_1(c)$ or $f_2(c)$. Assume, without loss of generality, that $\{u,v\}$ is inside $f_1(c)$. Then, since $u,v\in V$ and $d=^{\pp(f_1(c))}_{\mathsf{rename}} d_1$, we get that $d(\{u,v\})=d_1(\{u,v\})$. Now, $d_1$ is a grid drawing, so $|d_{1\mathsf{x}}(u)-d_{1\mathsf{y}}(v)|+|d_{1\mathsf{x}}(u)-d_{1\mathsf{y}}(v)|=1$, and hence $|d_{\mathsf{x}}(u)-d_{\mathsf{y}}(v)|+|d_{\mathsf{x}}(u)-d_{\mathsf{y}}(v)|= 1$. Therefore, $d$ is a grid drawing, and thus $\mathsf{GridClassifier}(F,d)=1$.
\end{itemize}
We conclude that Condition \ref{def:probSolvCon2} is satisfied, so $A_{\mathsf{Grid}}$ is a $\mathsf{GridClassifier}$-algorithm.
	\end{proof}

Observe that $A$ runs in $\OO(1)$ runtime, so $\mathsf{Time}A_{\mathsf{Grid}}(G,k)=\OO(1)$.

Now, we define the algorithm $L_{\mathsf{Grid}}$, which solves info-frames that are leaves. 

%Recall, that an info-frame $F=(f,d_f,E_f,U_f,\mathsf{V^*Dir}_f)$ is a leaf if there are no grid points strictly inside $f$. Therefore, if $U_f\neq \emptyset$ we can conclude that there are no drawings of $F$, since every vertex in $U_f$ must be drawn on a grid point strictly inside $f$. Similarly, for every $\{uv_i,uv_{i+1}\}\in E_f$ there is at most one possible way to draw, since there are no grid points inside $f$, and edges might bend only at grid points. Thus, every $\{uv_i,uv_{i+1}\}\in E_f$ must be drawn as a straight line between $uv_i$ and $uv_{i+1}$. Therefore, for $F$ we have at most one possible drawing. Observe that, we can get this drawing in $\OO(k)$ runtime.

\begin{definition}[{\bf$L_{\mathsf{Grid}}$}]
	Let $G$ be a graph and let $F=(f,d_f,E_f,U_f,\mathsf{V^*Dir}_f)$ be an info-frame that is a leaf and let $I\in \{0,1\}$. Let $d$ be the drawing of $F$ (see Observation \ref{obs:oneDre}), if such exists; otherwise, $L_{\mathsf{Grid}}(F,I)=\mathsf{False}$.
If $d$ is a grid drawing and $I=1$, then $L_{\mathsf{Grid}}(F,I)=\mathsf{True}$. If $d$ is a grid drawing and $I=0$, then $L_{\mathsf{Grid}}(F,I)=\mathsf{False}$. If $d$ is not a grid drawing and $I=1$, then $L_{\mathsf{Grid}}(F,I)=\mathsf{False}$. If $d$ is not a grid drawing and $I=0$, then $L_{\mathsf{Grid}}(F,I)=\mathsf{True}$. 
\end{definition}

It is easy to see that $L_{\mathsf{Grid}}$ is a $\mathsf{GridClassifier}$-leaf solver (see Definition~\ref{def:probLeGen}):

\begin{observation}\label{obs:LGRid}
	$L_{\mathsf{Grid}}$ is a $\mathsf{GridClassifier}$-leaf solver. In addition, $L_{\mathsf{Grid}}$ runs in $\OO(k)$ runtime, so  $\mathsf{Time}L_{\mathsf{Grid}}(G,k)=\OO(k)$.
\end{observation}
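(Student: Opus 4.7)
The plan is to verify the two claims of the observation in turn by invoking Observation \ref{obs:oneDre} and then unpacking the definition of $L_{\mathsf{Grid}}$.

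For the leaf-solver property, I would start by noting that Definition \ref{def:probLeGen} requires $L_{\mathsf{Grid}}(F,I')$ to return \textsf{True} exactly when there exists a drawing $d$ of $F$ with $\mathsf{GridClassifier}(F,d)=I'$. By Observation~\ref{obs:oneDre}, any leaf info-frame $F$ admits at most one drawing $d$, and this drawing (or its non-existence) can be determined in $\OO(k)$ time. I would then split into three cases. First, if no drawing of $F$ exists, then $L_{\mathsf{Grid}}(F,I)=\mathsf{False}$ for both values of $I$, matching the fact that no $d$ with $\mathsf{GridClassifier}(F,d)=I$ can exist. Second, if the unique drawing $d$ is a grid drawing, then $\mathsf{GridClassifier}(F,d)=1$; the definition of $L_{\mathsf{Grid}}$ sets the answer to \textsf{True} for $I=1$ and \textsf{False} for $I=0$, which matches. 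Third, if $d$ exists but is not a grid drawing, then $\mathsf{GridClassifier}(F,d)=0$; the definition returns \textsf{True} for $I=0$ and \textsf{False} for $I=1$, again in agreement. Since these cases are exhaustive, $L_{\mathsf{Grid}}$ meets the specification of Definition \ref{def:probLeGen}.

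For the runtime bound, I would observe that the complexity of $L_{\mathsf{Grid}}(F,I)$ is dominated by two operations: (i) constructing the candidate drawing $d$ (or detecting its non-existence), which by Observation \ref{obs:oneDre} costs $\OO(k)$; and (ii) deciding whether $d$ is a grid drawing. Since $F$ is an info-frame costing at most $k$, the total number of vertices and edges appearing in $d$ is $\OO(k)$, so verifying the grid-drawing predicate amounts to checking the unit-distance condition $|\fr(u)-\fr(v)|+|\fc(u)-\fc(v)|=1$ on each of these edges, which runs in $\OO(k)$ time. Summing the two contributions gives $\mathsf{Time}L_{\mathsf{Grid}}(G,k)=\OO(k)$ as claimed.

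The only mildly subtle point -- not really an obstacle -- is that one must remember that under our working convention for {\sc Grid Recognition} we only consider info-frames with $d_f(V)\cap V^* = \emptyset$, so the ``unique drawing'' given by Observation \ref{obs:oneDre} really is a candidate grid drawing (edges are forced to be straight segments between grid-point endpoints, with no turning points to worry about); without this restriction one would have to argue separately that the presence of $V^*$-vertices immediately rules out a grid drawing. With this in mind, the proof of the observation reduces to the routine case analysis outlined above.
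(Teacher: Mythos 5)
Your proposal is correct and matches the paper's intent: the paper states this as an observation without an explicit proof (``it is easy to see''), and your case analysis via Observation~\ref{obs:oneDre} together with the $\OO(k)$ accounting is exactly the routine verification the authors have in mind. Your closing remark about the convention $d_f(V)\cap V^*=\emptyset$ for the {\sc Grid Recognition} instantiation is a fair point and consistent with the paper's setup.
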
 

Next, we show that {\sc Grid Recognition} problem parameterized by $k$ is a $(\mathsf{Classifier},n+2,n+2,k,{\cal I}_{\mathsf{yes}}=\{1\})$-problem on connected graphs (see Definition \ref{def:probSc}).

\begin{lemma}\label{gridisprob}
{\sc Grid Recognition} parameterized by $k$ on connected graphs is $(\mathsf{GridClassifier},$ $n+2,n+2,k,{\cal I}_{\mathsf{yes}}=\{1\})$-problem.
\end{lemma}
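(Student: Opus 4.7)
The plan is to verify the two conditions of Definition~\ref{def:probSc}. Condition (i), that every instance is a connected graph, holds by assumption on the input class. For condition (ii), it suffices to establish the equivalence
\[
G \text{ is a yes-instance} \iff \exists \text{ a drawing } d \text{ of } \mathsf{InitF}_{n+2,n+2} \text{ with } \mathsf{dtw}(d)\le k \text{ and } \mathsf{GridClassifier}(\mathsf{InitF}_{n+2,n+2},d)=1.
\]

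First, I would verify that $\mathsf{InitF}_{n+2,n+2}=(R_{n+2,n+2},\emptyset,\emptyset,\emptyset,\emptyset)$ is a legitimate info-frame: all conditions of Definitions~\ref{def:infFr}, \ref{definfFraCon3}, and \ref{def:validDir} are vacuously satisfied because the components other than $f$ are empty. Then, by the observation attached to the definition of $\mathsf{InitF}_{h,w}$, the drawings of $\mathsf{InitF}_{n+2,n+2}$ are exactly the polyline grid drawings of $G$ strictly bounded by $R_{n+2,n+2}$. The choice $h=w=n+2$ is dictated by the fact that any grid drawing of $G$ uses only $n$ distinct grid points, so after the standard shift (assumed in Section~\ref{sec:drawnSep}) that forces $r_{\min}=c_{\min}=1$, all coordinates lie in $[1,n]$; the drawing therefore fits strictly inside $R_{n+2,n+2}$.

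Forward direction: suppose $G$ admits a grid drawing $d^*$ with $\mathsf{dtw}(d^*)\le k$. After shifting so that $r_{\min}=c_{\min}=1$, the resulting drawing $d$ is a grid drawing of $G$ strictly bounded by $R_{n+2,n+2}$, so by the observation above, $d$ is a drawing of $\mathsf{InitF}_{n+2,n+2}$. Its drawn treewidth is still at most $k$, and $\mathsf{GridClassifier}(\mathsf{InitF}_{n+2,n+2},d)=1$ by the definition of the classifier. Conversely, if $d$ is a drawing of $\mathsf{InitF}_{n+2,n+2}$ with $\mathsf{GridClassifier}=1$ and $\mathsf{dtw}(d)\le k$, then the observation gives that $d$ is a polyline grid drawing of $G$, and the classifier being $1$ upgrades this to a grid drawing; hence $G$ is a yes-instance.

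The main subtlety will be reconciling the drawn treewidth used in Definition~\ref{def:probSc} (the standard $\mathsf{dtw}(d)$ rooted at $R_d$) with the $R_{n+2,n+2}$-drawn treewidth $\mathsf{dtw}(d,R_{n+2,n+2})$ that the scheme actually manipulates (via Definition~\ref{def:DrawnDecof}). Since $R_{n+2,n+2}$ strictly contains $R_d$ and both are axis-parallel rectangles of cost $4$, one can transform an $R_d$-frame-tree into an $R_{n+2,n+2}$-frame-tree by prepending a constant number of ``peeling'' cuts whose frames have cost at most $\mathsf{cost}(R_d)+\mathsf{cost}(R_{n+2,n+2})=8$; this is the only place where the width might need to be adjusted by an additive constant. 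I would either absorb this constant into $k$ or, more cleanly, argue directly that the equivalence in condition (ii) can be stated in terms of $\mathsf{dtw}(d,R_{n+2,n+2})$, which is what the scheme bounds. Once this point is handled, the rest of the verification is routine.
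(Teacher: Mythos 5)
Your proposal is correct and follows essentially the same approach as the paper, whose own proof is a single sentence observing that every grid drawing is bounded by $R_{n+2,n+2}$; your additional verifications (that $\mathsf{InitF}_{n+2,n+2}$ is a legitimate info-frame, and the reconciliation of $\mathsf{dtw}(d)$ with $\mathsf{dtw}(d,R_{n+2,n+2})$) only make explicit details the paper leaves implicit, the latter being addressed informally in the paper's remark preceding Definition~\ref{def:DrawnDecof} that the initial frame can be replaced by any frame strictly bounding $d$ without changing the value.
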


\begin{proof}
	It easy to see that every grid graph drawing is bounded by the frame $R_{(n+2)\times (n+2)}$.  Therefore, we get that the {\sc Grid Recognition} problem on connected graphs is a\\ $(\mathsf{GridClassifier}, n+2,n+2,k,{\cal I}_{\mathsf{yes}}=\{1\})$-problem.
\end{proof}

%Moreover, in Corollary \ref{cor:uppBouGr}, we saw that for every grid graph drawing $d$, $\mathsf{dtw}(d)\leq \sqrt{n}$.

In conclusion, Lemma \ref{gridisprob} states that {\sc Grid Recognition} parameterized by $k$ on connected graphs is $(\mathsf{GridClassifier},n+2,n+2,k,{\cal I}_{\mathsf{yes}}=\{1\})$-problem. In addition, $\mathsf{GridClassifier}$ is a classifier with universe $\mathsf{INF}=\{0,1\}$. Further, $A_{\mathsf{Grid}}$ is a $\mathsf{GridClassifier}$-algorithm with $\mathsf{Time}A_{\mathsf{Grid}}(G,k)=\OO(1)$, (by Lemma \ref{lem:A}), and $L_{\mathsf{Grid}}$ is a $\mathsf{GridClassifier}$-leaf solver with $\mathsf{Time}L_{\mathsf{Grid}}($ $G,k)=\OO(k)$ (by Observation \ref{obs:LGRid}). Lastly, the maximum degree of a graph in any grid drawing is $4$. 
Therefore, by Theorem \ref{the:AlgSch}, we get that there exists an algorithm that solves the {\sc Grid Recognition} problem on connected graphs (and hence on general graphs) in runtime $\OO ((k\cdot h\cdot w\cdot n)^{\OO (k)}\cdot 2^{\OO (\Delta\cdot k)}\cdot |\mathsf{INF}|^{\OO(1)}\cdot \mathsf{Time}A_{\mathsf{Grid}}(G,k)) +\OO((k\cdot h\cdot w\cdot n)^{\OO (k)}\cdot |\mathsf{INF}|\cdot \mathsf{Time}L_{\mathsf{Grid}}(G,k))=\OO((k\cdot (n+2)\cdot (n+2)\cdot n)^{\OO (k)}\cdot 2^{\OO (4\cdot k)}\cdot 2^{\OO(1)}\cdot \OO(1) +\OO((k\cdot (n+2)\cdot (n+2)\cdot n)^{\OO (k)}\cdot 2\cdot \OO(k))=n^{\OO(k)}$. Thus, Theorem~\ref{the:gridRecRunTime} follows.
%
%\begin{theorem}
%	There exists an algorithm for the {\sc Grid Recognition} problem parameterized $k$ with runtime $n^{\OO(k)}$.
%\end{theorem}
%
%% problem information $\mathsf{Classifier}$ that we defined, with the universe $I=\{0,1\}$, the $\mathsf{Classifier}$-solver $A$, with runtime $A(k)=\OO(k)$, and the $\mathsf{Classifier}$-leaves generator, with runtime $L(k)=\OO(k)$, we defined. 
%
%
%Now, in Corollary \ref{cor:uppBouGr}, we saw that for every grid graph drawing $d$, $\mathsf{dtw}(d)\leq \OO(\sqrt{n})$.
%Therefore, we have the following corollary:
%
%\begin{corollary}
%	There exists an algorithm for the {\sc Grid Recognition} problem with runtime $n^{\OO(\sqrt{n})}$.
%\end{corollary}  

\subsection{Crossing Minimization Problem on Straight-Line Grid Drawings}
In this subsection, we aim to develop an algorithm for {\sc Straight-line Grid Crossing Minimization} problem (see Definition~\ref{def:crossMin}), using our scheme. In a straight-line grid drawing, vertices are mapped to grid points and edges are mapped to line segments, connecting the images of their endpoints. Here, notice that every two edges can intersect in at most one point (else, we have infinitely many crossing). The goal is, given a graph $G$, to construct a straight-line grid drawing of $G$ with minimum number of pairs of crossing edges. As we solve the parameterized version, we only seek drawings of drawn treewidth at most $k$ (where $k$ is given as input).

%\begin{definition}[{\bf Straight-Line Grid Drawing}]
%Let $G$ be a graph. A {\em straight-line grid} drawing $d$ of $G$ is a straight-line drawing $d$ of $G$ such that (i) for every $u\in V$, $d(u)$ is a grid point (ii) For every $\{u,v\},\{u',v'\}\in E$, $d(\{u,v\})$ and $d(\{u',v'\})$ are intersected in at most one point.
%\end{definition}

%\begin{definition}[{\bf Crossing Minimization Problem on Straight-Line Grid Drawings}]
%	Let $G$ be a graph. The {\sc Crossing Minimization} problem on straight-line grid drawings is, given a graph $G$ and , to construct a straight-line grid drawing $d$ of $G$ (if one exists) such that: (i) $d$ is strictly bounded by $R_{h,w}$,(ii) $d$ has minimum number of crossings out of all the straight-line grid drawings of $G$ which are strictly bounded by $R_{h,w}$. If such a drawing does not exists, return ``no-instance''.
%\end{definition}

In order to develop an algorithm for this problem, we need to store the following additional information in $\mathsf{INF}$, for every info-frame. Observe that, here we do not allow edges to bend. Thus, as we do for vertices from $V^*$, which are mapped to points from the set $\gis(\fin)\setminus \gps(\fin)$ (see the discussion before Definition \ref{def:infFr} regarding Condition  \ref{definfFraCon4}), we would like to store the direction of every edge in $E_f$. Therefore, we have a function, $\mathsf{VerDir}_f$, that assigns a direction for every endpoint drawn on $f$ of every edge in $E_f$. In particular, for an edge $\{uv_i,uv_{i+1}\}\in E_f$ with both endpoints on $f$, we store two directions, one for each endpoint. For $\{uv_i,uv_{i+1}\}\in E_f$ with exactly one endpoint on $f$, we store a direction for that endpoint. Formally, $\mathsf{VerDir}$ gets a pair $(\{uv_i,uv_{i+1}\},z)$, where $\{uv_i,uv_{i+1}\}\in E_f$ and $z\in \{uv_i,uv_{i+1}\}$ such that $z$ is drawn on $f$, and $\mathsf{VerDir}$ returns a point, satisfying conditions similar to Condition  \ref{definfFraCon4} of Definition \ref{def:infFr}. We also add to $\mathsf{INF}$ an indicator stating whether $d$ is ``part of'' a straight-line grid drawing. That is, given a drawing $d$ of $F$ we need to verify the following. Recall that, as $d$ is a $G^*$-drawing (see Definition \ref{def:gstdr}), $d$ maps edges in $E$ to paths in ${\cal P}$ and  edges in $E^*$ to paths in ${\cal P}^*$. In our case, we restrict the paths to be straight lines. We also verify that there are no two edges intersecting in more than one point. We say that such a drawing is a {\em $G^*$-straight-line grid drawing}. So, the universe $\mathsf{INF}$ is defined as follows: $\mathsf{INF}=\{(\mathsf{dir},\mathsf{Indicator})~|~\mathsf{dir}\in \mathsf{DirSet},\mathsf{Indicator}\in\{0,1\}\}$, where $\mathsf{DirSet}$ is the set of all valid $\mathsf{VerDir}$. We remark that, we do not distinguish between different values for $\mathsf{dir}$ that refer the same direction. That is, for every $\{uv_i,uv_{i+1}\}$ and an endpoint drawn on $f$ $z\in \{uv_i,uv_{i+1}\}$, the directions $p$ and $p'$ are equal if $\ell(d_f(z),p)$ is on $\ell(d_f(z),p')$, or vice versa. Now, given a drawing $d$ of $F$, we say that $\mathsf{VerDir}$ is the {\em direction induced by $F$ and $d$}, if for every $\{uv_i,uv_{i+1}\}\in E_f$ and for every endpoint $z\in\{uv_i,uv_{i+1}\}$ drawn on $f$, $\mathsf{VerDir}(\{uv_i,uv_{i+1}\},z)$ is the direction defined similarly to \ref{def:inducedInfoFrameC4} of Definition \ref{def:inducedInfoFrame}. We now define a classifier for this problem:

\begin{definition}[{\bf $\mathsf{CroClassifier}$}] \label{def:crosclas}
	Let $G$ be a graph, let $F$ be an info-frame and let $d$ be a drawing of $F$. Let $\mathsf{dir}$ be the direction induced by $F$ and $d$. Let $\mathsf{Indicator}=1$ if $d$ is a $G^*$-straight-line grid drawing; otherwise, $\mathsf{Indicator}=0$. Then, $\mathsf{CroClassifier}(F,d)=(\mathsf{dir},\mathsf{Indicator})$. 
\end{definition}

%We define the function $\mathsf{Classifier}$ as follows. For every info-frame $F$, and every drawing $d$ of $F$, $\mathsf{Classifier}(F,d)=1$ if $d$ is a grid drawing; Otherwise $\mathsf{Classifier}(F,d)=0$. 

Next, we define the algorithm $A_{\mathsf{Cro}}$. 
For this purpose, we first describe a function called $\mathsf{GetDirections}$. Let $F$ be an info-frame, let $C=(c,F_1=(f_1(c),d_{f_1},E_{f_1},U_{f_1},\mathsf{V^*Dir}_{f_1})$, $F_2=(f_2(c),d_{f_2},E_{f_2},U_{f_2},\mathsf{V^*Dir}_{f_2}))$ be an info-cutter of $F$, and let $I_1=(\mathsf{dir}_1,\mathsf{Indicator}_1),I_2=(\mathsf{dir}_2,$ $\mathsf{Indicator}_2)\in \mathsf{INF}$. We will later define the algorithm $A_{\mathsf{Cro}}$ so that it should return the directions of the endpoints of edges in $E_f$. Here, we take the direction for each such an endpoint from $F_1$ and $F_2$. In particular, we take the direction according to the cases described when we glue two drawings of $F_1$ and $F_2$ (see Definition \ref{def:glueEdg}). For example, consider a vertex $uv_i\in V(d_f)$ such that $d_f(uv_i)\in \gp(d_{f_1},f_1(c))$, such that there exists $\{uv_i,z\},\in E_{f}$, and $\{\mathsf{Identify}_{d_{f},d_{f_1}}(uv_i),\mathsf{Identify}_{d_{f},d_{f_1}}(z)\}\in E_{f_1}$. That is, since $\{uv_i,z\}$ is contained in $f_1(c)$, we take the direction of $(\{uv_i,z\},uv_i)$ from $\mathsf{dir}_1$. So, we define  $\mathsf{dir}(\{uv_i,z\},uv_i)=\mathsf{dir}_1(\{\mathsf{Identify}_{d_{f},d_{f_1}}(uv_i),\mathsf{Identify}_{d_{f},d_{f_1}}(z)\},\mathsf{Identify}_{d_{f},d_{f_1}}(uv_i))$. If $\{uv_i,z\}$ is constructed by gluing several edges together (for example, see Condition \ref{con4defglueedge} of Definition \ref{def:glueEf2}), then we take the direction of the first edge. If $\{uv_i,z\}$ is on $c$ (for example, see Condition \ref{con3defglueedge} of Definition \ref{def:glueEf2}), then we can take the direction directly from the edge drawn on $c$ in $d_{f_1}$ or $d_{f_2}$. Accordingly, $\mathsf{GetDirections}(F,C,I_1,I_2)$ returns the function $\mathsf{dir}$ defined as follows; for every $\{uv_i,uv_{i+1}\}\in E_{f}$ and an endpoint $z\in \{uv_i,uv_{i+1}\}$ drawn on $f$, $\mathsf{dir}(\{uv_i,z\},uv_i)$ is define as described.

We now define the algorithm $A_{\mathsf{Cro}}$, which we will later prove to be $\mathsf{CroClassifier}$-algorithm. Let $G$ be a graph, let $F=(f,d_f,E_f,U_f,\mathsf{V^*Dir}_f)$ be an info-frame, let $C=(c,F_1=(f_1(c),d_{f_1},E_{f_1},$ $U_{f_1},\mathsf{V^*Dir}_{f_1}),F_2=(f_2(c),d_{f_2},E_{f_2},U_{f_2},\mathsf{V^*Dir}_{f_2}))$ be an info-cutter of $F$. The algorithm $A_{\mathsf{Cro}}$ should decide if by gluing any two drawings $d_1$ and $d_2$ of $F_1$ and $F_2$, respectively, we get a $G^*$-straight-line grid drawing or not. So, we need to verify the following conditions. Obviously, $d_f$, $d_1$ and $d_2$ should be $G^*$-straight-line grid drawings. In addition, notice that every vertex from $V^*$ drawn on a point in $\gp(c)\setminus \gp(f)$, does not appear in $d$, as we glue its two edges to construct one edge (or part of an edge; see, for example, Condition \ref{con4defglueedge} of Definition \ref{def:glueEf2})). So, if we have such a vertex, which is connected to one edge in $f_1(c)$ and the other edge in $f_2(c)$, we should verify that they have the same direction, to get an edge or part of an edge in $d$ as a straight line. We have the following definition:

\begin{definition}[{\bf The Function $A_{\mathsf{Cro}}$}] \label{AstCro}
	Let $G$ be a graph, let $F=(f,d_f,E_f,U_f,\mathsf{V^*Dir}_f)$ be an info-frame, let $C=(c,F_1=(f_1(c),d_{f_1},E_{f_1},U_{f_1},\mathsf{V^*Dir}_{f_1})$, $F_2=(f_2(c),d_{f_2},E_{f_2},U_{f_2},$ $\mathsf{V^*Dir}_{f_2}))$ be an info-cutter of $F$, and let $I_1=(\mathsf{dir}_1,\mathsf{Indicator}_1),I_2=(\mathsf{dir}_2,\mathsf{Indicator}_2)\in \mathsf{INF}$. Then, $\mathsf{Indicator}=1$ if and only if the following conditions are satisfied:
	\begin{enumerate}
	
		\item $d_f$ is a $G^*$-straight-line grid drawing. \label{AstCro1} 
		\item $\mathsf{Indicator}_1=\mathsf{Indicator}_2=1$. \label{AstCro2}
		\item For every $uv_i\in V(d_{f_1},\gp(c)\setminus \gp(f))\cap V^*$ such that there exists $\{uv_i,z\},\in E_{f_1}$ and $\{\mathsf{Identify}_{d_{f_1},d_{f_2}}(uv_i),z'\},\in E_{f_2}$, $\ell(\mathsf{dir}_1(\{uv_i,z\},uv_i),d_{f_1}(uv_i))$ is on\\ $\ell(\mathsf{dir}_2(\{\mathsf{Identify}_{d_{f_1},d_{f_2}}(uv_i),z'\}$ $,uv_i),d_{f_2}(\mathsf{Identify}_{d_{f_1},d_{f_2}}(uv_i)))$, or vice versa.\label{AstCro3}
		\end{enumerate} 
	In addition, $\mathsf{dir}=\mathsf{GetDirections}(F,C,I_1,I_2)$.
	Then, $A_{\mathsf{Cro}}(F,C,I_1,I_2)=(\mathsf{Indicator},\mathsf{dir})$.
\end{definition}

We show that $A_{\mathsf{Cro}}$ is a $\mathsf{CroClassifier}$-algorithm, by showing that the conditions of Definition \ref{def:probsolv} are satisfied. First, we show that the drawing of $F$, obtained by gluing drawings the of $F_1$ and $F_2$ is a $G^*$-straight-line grid drawing if and only if Conditions \ref{AstCro1}--\ref{AstCro3} of Definition \ref{AstCro} are satisfied.

\begin{lemma}\label{lemCros}
	Let $F$ be an info-frame, let $C=(c,F_1=(f_1(c),d_{f_1},E_{f_1},U_{f_1},\mathsf{V^*Dir}_{f_1})$, $F_2=(f_2(c),d_{f_2},E_{f_2},U_{f_2},\mathsf{V^*Dir}_{f_2}))$ be an info-cutter of $F$ and let $I_1=(\mathsf{dir}_1,\mathsf{Indicator}_1),I_2=(\mathsf{dir}_2,$ $\mathsf{Indicator}_2)\in \mathsf{INF}$. Let $d_1$ and $d_2$ be drawings of $F_1$ and $F_2$, respectively, such that $\mathsf{CroClassifier}(F_1,$ $d_1)=I_1$ and $\mathsf{CroClassifier}(F_2,d_2)=I_2$. Then, $d=\mathsf{Glue}(F,C,d_1,d_2))$ is a $G^*$-straight-line grid drawing if and only if Conditions \ref{AstCro1}--\ref{AstCro3} of Definition \ref{AstCro} are satisfied.
\end{lemma}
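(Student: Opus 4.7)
The plan is to prove both directions of the equivalence by relating the construction in $\mathsf{Glue}$ (Definition~\ref{def:glue}) to the info-cutter properties assembled in Section~\ref{subsec:prpInoCu}.

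For the forward direction, suppose $d = \mathsf{Glue}(F, C, d_1, d_2)$ is a $G^*$-straight-line grid drawing. Condition~\ref{AstCro1} follows because, by Definition~\ref{def:infFrDr}, the restriction of $d$ to $\pp(f)$ coincides with $d_f$ up to renaming, so $d_f$ inherits being straight-line with integer-grid endpoints from $d$. For condition~\ref{AstCro2}, I will show $d_1$ is a $G^*$-straight-line grid drawing (the argument for $d_2$ is symmetric): each edge of $d_1$ is either copied verbatim into $d$ by some clause of Definition~\ref{def:glueEdg}, in which case it is a line segment, or it forms a strict sub-piece of a longer edge of $d$ produced by Condition~\ref{con4defglueedge} of Definition~\ref{def:glueEf2} (or one of its analogues for the other clauses of Definition~\ref{def:glueEdg}), in which case it is a sub-segment of a line segment and hence again a line segment. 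In either case vertices of $d_1$ retain their locations from $d$ and so lie on grid or grid-intersection points, and any double crossing within $d_1$ would lift to a double crossing in $d$. For condition~\ref{AstCro3}, observe that a vertex $uv_i \in V(d_{f_1}, \gp(c) \setminus \gp(f)) \cap V^*$ with edges on both sides of $c$ is absorbed after gluing, since the two incident pieces merge into a single edge of $d$ via Condition~\ref{con4defglueedge}; for that merged edge to be one straight segment, the two incident pieces must be collinear at $d_{f_1}(uv_i)$, which is precisely condition~\ref{AstCro3}.

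For the backward direction, assume conditions~\ref{AstCro1}--\ref{AstCro3} hold. By Lemma~\ref{lem:glue}, $d$ is a drawing of $F$, hence a $G^*$-drawing, so every vertex of $d$ already lies on a grid or grid-intersection point. It remains to show that each edge of $d$ is a line segment and that no two edges of $d$ intersect in more than one point. I case-split on the construction in $\mathsf{GlueEdg}$: edges inherited from $d_f$ are line segments by condition~\ref{AstCro1}; edges inherited entirely from $d_1$ or $d_2$ are line segments by condition~\ref{AstCro2}. The remaining case is the multi-piece construction in Condition~\ref{con4defglueedge} of Definition~\ref{def:glueEf2} and its three analogues. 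Each piece is a line segment by conditions~\ref{AstCro1} and~\ref{AstCro2}, so the concatenation is a line segment exactly when consecutive pieces are collinear at their join points. These join points lie on $c$ and are of two types: non-grid intersection points, where collinearity is enforced by the \textbf{Equality of $V^*$ Directions} property of~$C$ (Definition~\ref{def:infCTurPoints2}); and grid points on $\gp(c) \setminus \gp(f)$ carrying a $V^*$-vertex with pieces on both sides, where collinearity is enforced by condition~\ref{AstCro3}. The pruning step in Definition~\ref{def:glueEf2} then removes interior grid-intersection points that now lie strictly inside a single line segment, producing the required straight-line realization. For the single-crossing requirement, two edges both realized inside $f_1(c)$ (resp.\ $f_2(c)$) inherit the property from condition~\ref{AstCro2}, while an edge realized in $f_1(c)$ and one realized in $f_2(c)$ can only meet on $c$; a multiple intersection would force a common sub-segment of $c$, but then the \textbf{Equality of Common Parts} property (Definition~\ref{def:infCTurPoints1}) forces the overlap to exist already within $d_1$, contradicting condition~\ref{AstCro2}.

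The main technical obstacle is the exhaustive case analysis for the multi-piece construction. Condition~\ref{con4defglueedge} of Definition~\ref{def:glueEf2} already iterates over $r \in \{1,2\}$ and chooses among Conditions~\ref{def:edggluconA}--\ref{def:edggluconC} for each middle piece and among Conditions~\ref{def:edggluconD}--\ref{def:edggluconF} for the last piece, and the three analogous definitions (for edges of $E_f$ with exactly one endpoint on $f$ of the two shapes, and for edges strictly inside $f$) contribute their own sub-cases. For every sub-case one must check that the correct labeling convention among $\mathsf{Identify}_{d_f, d_{f_r}}$, $\mathsf{Identify}_{d_{f_1}, d_{f_2}}$, or no renaming is applied on both sides of each join point, and that the resulting collinearity statement follows either from condition~\ref{AstCro3} or from the appropriate sub-condition of the \textbf{Equality of $V^*$ Directions} property. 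Keeping the label identifications consistent across sub-cases, and establishing the non-crossing argument without double-counting edges drawn on $c$, will be the most error-prone aspect of the proof.
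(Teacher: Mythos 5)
Your proposal is correct and follows essentially the same route as the paper's proof: the forward direction argues each of Conditions~\ref{AstCro1}--\ref{AstCro3} is necessary because sub-pieces of a straight-line drawing are straight and a $V^*$-vertex on $c\setminus f$ with incident pieces on both sides is absorbed into a single segment of $d$, and the backward direction case-splits on how $\mathsf{GlueEdg}$ builds each edge, using collinearity at the join points (Condition~\ref{AstCro3} for grid points on $c$, \textbf{Equality of $V^*$ Directions} for the points of $\gis(\fin)\setminus\gps(\fin)$). Your treatment of the no-double-crossing claim and of the two types of join points is somewhat more explicit than the paper's (which dismisses these as ``easy to see''), but it is an elaboration of the same argument rather than a different one.
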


\begin{proof}
	 If $d$ is a $G^*$-straight-line grid drawing, it is easy to see that Conditions \ref{AstCro1} and \ref{AstCro2} are satisfied. Now, consider a vertex $uv_i\in V(d_{f_1},\gp(c)\setminus \gp(f))\cap V^*$ such that there exists $\{uv_i,z\}\in E_{f_1}$ and $\{\mathsf{Identify}_{d_{f_1},d_{f_2}}(uv_i),z'\},\in E_{f_2}$. Notice that, every vertex $uv_i\in V^*$ that is on $\pp(c)\setminus \pp(f)$ in $d_1$, does not appear in $d$, as $d$ is a $G^*$-drawing of $F$. So, $\{uv_i,z\}$ and $\{\mathsf{Identify}_{d_{f_1},d_{f_2}}(uv_i),z'\}$ must be glued. Since $d$ is a $G^*$-straight-line grid drawing, there is no bending at the point $d_1(uv_i)$, so the directions of $\mathsf{dir}_1(\{uv_i,z\},uv_i)$ and $\mathsf{dir}_2(\{\mathsf{Identify}_{d_{f_1},d_{f_2}}(uv_i),z'\},\mathsf{Identify}_{d_{f_1},d_{f_2}}(uv_i))$ must be on the same line. That is, $\ell(\mathsf{dir}_1(\{uv_i,z\},uv_i),d_{f_1}(uv_i))$ is on $\ell(\mathsf{dir}_2(\{\mathsf{Identify}_{d_{f_1},d_{f_2}}(uv_i),z'\},uv_i),$ $d_{f_2}(\mathsf{Identify}_{d_{f_1},d_{f_2}}$ $(uv_i)))$ or vice versa. So, Condition \ref{AstCro3} is satisfied. We show now the other direction. Assume that Conditions \ref{AstCro1}--\ref{AstCro3} of Definition \ref{AstCro} are satisfied. It is easy to see that due to Conditions \ref{AstCro1} and \ref{AstCro2}, there are no two edges with more than one intersection point. Now, for every edge $\{uv_i,uv_{i+1}\}\in E(d)$, it is easy to see that if $d(\{uv_i,uv_{i+1}\})$ is contained inside $f_1(c)$ or $f_2(c)$ then $d(\{uv_i,uv_{i+1}\})$ is a straight-line. Otherwise, $d(\{uv_i,uv_{i+1}\})$ must be the result of gluing several edges together (see for example, Condition \ref{con4defglueedge} of Definition \ref{def:glueEf2}). So, observe that due to Condition \ref{AstCro3}, we get that any of these edges glued to construct $d(\{uv_i,uv_{i+1}\})$ are on the same line, thus $d(\{uv_i,uv_{i+1}\})$ is a straight-line. Thus, $d$ is $G^*$-straight-line grid drawing. This completes the proof.
\end{proof}

We are now ready to prove that $A_{\mathsf{Cro}}$ is a $\mathsf{CroClassifier}$-algorithm.

\begin{lemma}\label{lem:A1}
	$A_{\mathsf{Cro}}$ is a $\mathsf{CroClassifier}$-algorithm.
\end{lemma}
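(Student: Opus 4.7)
The plan is to verify the two conditions of Definition~\ref{def:probsolv}. Condition~\ref{def:probSolvCon1} is immediate from the signature of $A_{\mathsf{Cro}}$ given in Definition~\ref{AstCro}: it takes as input an info-frame $F$, an info-cutter $C=(c,F_1,F_2)$ of $F$, and a pair $I_1,I_2\in\mathsf{INF}$. So the substance lies entirely in Condition~\ref{def:probSolvCon2}. Fix any drawings $d_1,d_2$ of $F_1,F_2$ with $\mathsf{CroClassifier}(F_j,d_j)=I_j$ for $j\in\{1,2\}$, and let $d=\mathsf{Glue}(F,C,d_1,d_2)$, which by Lemma~\ref{lem:glue} is a drawing of $F$. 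Writing $\mathsf{CroClassifier}(F,d)=(\mathsf{dir}',\mathsf{Indicator}')$ and $A_{\mathsf{Cro}}(F,C,I_1,I_2)=(\mathsf{dir},\mathsf{Indicator})$, I would prove the two coordinates separately: $\mathsf{Indicator}=\mathsf{Indicator}'$ and $\mathsf{dir}=\mathsf{dir}'$ (the latter up to the natural equivalence on directions that identifies points giving the same line through $d_f(z)$).

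For the indicator coordinate, the equivalence $\mathsf{Indicator}=\mathsf{Indicator}'$ is precisely Lemma~\ref{lemCros}: the glued drawing $d$ is a $G^*$-straight-line grid drawing if and only if the three Conditions~\ref{AstCro1}--\ref{AstCro3} of Definition~\ref{AstCro} all hold, and these are exactly the conditions tested by $A_{\mathsf{Cro}}$ to set $\mathsf{Indicator}=1$. So this half reduces to a result already established.

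For the direction coordinate, I would fix an arbitrary pair $(\{uv_i,uv_{i+1}\},z)$ with $\{uv_i,uv_{i+1}\}\in E_f$ and $z$ drawn on $f$, and show $\mathsf{dir}(\{uv_i,uv_{i+1}\},z)=\mathsf{dir}'(\{uv_i,uv_{i+1}\},z)$. The proof is a case analysis following the structure of Definition~\ref{def:glueEdg}, with subcases mirroring Definitions~\ref{def:glueEf2}, \ref{def:partIntCBoth}, etc. In the simplest subcase, $\{uv_i,uv_{i+1}\}$ (after renaming via $\mathsf{Identify}$) lies entirely in $E_{f_1}$; then by the definition of $\mathsf{Glue}$ the path $d(\{uv_i,uv_{i+1}\})$ equals $d_1(\{\mathsf{Identify}_{d_f,d_{f_1}}(uv_i),\mathsf{Identify}_{d_f,d_{f_1}}(uv_{i+1})\})$, so the first segment at $z$ in $d$ coincides with the first segment at the identified vertex in $d_1$, and therefore the direction induced by $d$ at $z$ equals $\mathsf{dir}_1$ at that vertex, which is exactly what $\mathsf{GetDirections}$ stores. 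The symmetric case of $E_{f_2}$ and the case of an edge drawn on $c$ are analogous. The remaining subcases are the ``partial intersection'' ones, where $d(\{uv_i,uv_{i+1}\})$ is obtained by concatenating several sub-paths $\mathsf{PartEdge}_0\cdot\mathsf{PartEdge}_1\cdots\mathsf{PartEdge}_\ell$; here the direction at $z$ is determined by $\mathsf{PartEdge}_0$ (the sub-path containing $z$), and $\mathsf{GetDirections}$ is defined precisely to read this direction off of whichever of $\mathsf{dir}_1,\mathsf{dir}_2$ that first sub-path came from.

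The main obstacle is purely bureaucratic: the case analysis has to mirror every branch of Definitions~\ref{def:glueEf2}, \ref{def:partIntCBoth}, \ref{def:partIntCU}, \ref{def:partIntV}, and~\ref{def:partIntUV}, and in each branch one has to chase the renaming functions $\mathsf{Identify}_{d_f,d_{f_j}}$ and $\mathsf{Identify}_{d_{f_1},d_{f_2}}$ carefully. However, since $\mathsf{GetDirections}$ was defined by direct parallel to these cases, the verification in each branch reduces to reading off the first segment of the corresponding $\mathsf{PartEdge}$. Combining the two coordinate-wise equalities yields $\mathsf{CroClassifier}(F,d)=A_{\mathsf{Cro}}(F,C,I_1,I_2)$, establishing Condition~\ref{def:probSolvCon2} and completing the proof that $A_{\mathsf{Cro}}$ is a $\mathsf{CroClassifier}$-algorithm.
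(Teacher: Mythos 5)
Your proposal is correct and mirrors the paper's own proof: Condition~1 by inspection, Condition~2 split into the indicator coordinate (handled exactly by Lemma~\ref{lemCros}) and the direction coordinate (handled by the observation that $\mathsf{GetDirections}$ was defined to read off precisely the direction induced by $d$ at each boundary endpoint). The only difference is that you sketch the branch-by-branch verification of the direction coordinate, while the paper compresses it to ``easy to see, due to the definitions of $\mathsf{CroClassifier}$ and $\mathsf{GetDirections}$''; this is an elaboration of, not a deviation from, the paper's argument.
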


\begin{proof}
	We show that the conditions of Definition \ref{def:probsolv} are satisfied. Condition \ref{def:probSolvCon1} is satisfied, by the definition of the function. Let $F=(f,d_f,E_f,U_f,\mathsf{V^*Dir}_f)$ be an info-frame, let $C=(c,F_1=(f_1(c),d_{f_1},E_{f_1},U_{f_1},\mathsf{V^*Dir}_{f_1})$ $F_2=(f_2(c),d_{f_2},E_{f_2},U_{f_2},\mathsf{V^*Dir}_{f_2}))$ be an info-cutter of $F$ and let $I_1=(\mathsf{dir}_1,\mathsf{Indicator}_1),I_2=(\mathsf{dir}_2,\mathsf{Indicator}_2)\in \mathsf{INF}$. Let $d_1$ and $d_2$ be drawings of $F_1$ and $F_2$, respectively, such that $\mathsf{CroClassifier}(F_1,d_1)=I_1$ and $\mathsf{CroClassifier}(F_2,d_2)=I_2$. 
It is easy to see, due to the definitions of $\mathsf{CroClassifier}$ (Definition \ref{def:crosclas}) and $\mathsf{GetDirections}$, that $\mathsf{dir}$ defined in Definition \ref{AstCro} is the direction induced by $F$ and $d$.
From this and Lemma \ref{lemCros}, we get that Condition \ref{def:probSolvCon2} is satisfied. So, $A_{\mathsf{Cro}}$ is a $\mathsf{CroClassifier}$-algorithm.
\end{proof}

Observe that, up until now, we have only made sure that we ``build'' a straight-line grid drawing of $G$. Recall that we are interested in a drawing where we have the minimum number of crossings. For this purpose, we add a variable $\mathsf{MinCross}(F,I)$, for every info-frame $F=(f,d_f,E_f,U_f,\mathsf{V^*Dir}_f)$ and $I\in \mathsf{INF}$. We define $\mathsf{MinCross}(F,I)$ to be the minimum number of crossings strictly inside $f$ of any drawing $d$ of $F$ such that $\mathsf{CroClassifier}(F,d)=I$ or $\mathsf{MinCross}(F,I)=\infty$ if there is no such a drawing. We proceed to define the function $L_{\mathsf{Cro}}$, and how to compute $\mathsf{MinCross}(F,I)$ for every $I\in\mathsf{INF}$ and for every info-frame $F$ that is a leaf.

\begin{definition}[{\bf The Function $L_{\mathsf{Cro}}$}]
	Let $G$ be a graph, let $F=(f,d_f,E_f,U_f,\mathsf{V^*Dir}_f)$ be an info-frame that is a leaf, and let $I=(\mathsf{dir},\mathsf{Indicator})\in \mathsf{INF}$. Let $d$ be the drawing of $F$ (see Observation \ref{obs:oneDre}). If $d$ is a $G^*$-straight-line grid drawing and $\mathsf{dir}$ is the direction induced by $F$ and $d$, then (i) $L_{\mathsf{Cro}}(F,I)=\mathsf{True}$ and (ii) $\mathsf{MinCross}(F,I)$ is the number crossings strictly inside $f$; otherwise, (i) $L_{\mathsf{Cro}}(F,I)=\mathsf{False}$ and (ii) $\mathsf{MinCross}(F,I)=\infty$. 
\end{definition}

It is easy to see that $L_{\mathsf{Cro}}$ is a leaf solver and $\mathsf{MinCross}(F,I)$ is indeed the value we aim for.

Now, we aim to give a general formula for $\mathsf{MinCross}(F,I)$. To this end, for a $G^*$-straight-line grid drawing $d$ and a set of points $P$, we denote by $\mathsf{Cross}(d,P)$ the number of crossing edges on $P$ in $d$. Let $F=(f,d_f,E_f,U_f,\mathsf{V^*Dir}_f)$ be an info-frame, let $C=(c,F_1=(f_1(c),d_{f_1},E_{f_1},U_{f_1},\mathsf{V^*Dir}_{f_1})$, $F_2=(f_2(c),d_{f_2},E_{f_2},U_{f_2},\mathsf{V^*Dir}_{f_2}))$ be an info-cutter of $F$, let $d_1$ and $d_2$ be drawings of $F_1$ and $F_2$, respectively, and let $d=\mathsf{Glue}(F,C,d_1,d_2)$. Observe that the number of crossings strictly inside $f$ in $d$ is exactly the sum of (i) the number of crossings strictly inside $f_1(c)$ in $d_1$, (ii) the number of crossings strictly inside $f_2(c)$ in $d_2$, and (iii) the number of crossings on $\pp(c)\setminus \pp(f)$. For every info-frame $F$ that is not a leaf and for every $I\in \mathsf{INF}$, denote $F_1=(f_1(c),d_{f_1},E_{f_1},U_{f_1},\mathsf{V^*Dir}_{f_1})$ and $F_2=(f_2(c),d_{f_2},E_{f_2},U_{f_2},\mathsf{V^*Dir}_{f_2})$. We have the following formula:  $\mathsf{MinCross}(F,I)=\min\{\mathsf{MinCross}(F_1,I_1)+\mathsf{MinCross}(F_2,I_2)+\mathsf{Cross}(d_{f_1},\pp(c)\setminus \pp(f))~|~C=(c,F_1,F_2)$ is an info-cutter of $F$ of size at most $k$, $\mathsf{CroClassifier}(F_1,d_1)=I_1$,$\mathsf{CroClassifier}(F_2,d_2)$ $=I_2$ and $A_{\mathsf{Cro}}(C,I_1,I_2)=I\}$. The proof for this formula is similar to the proof of Lemma \ref{lem:inductiveAl}. In addition, observe that, given an info-cutter $C$ of $F$, $\mathsf{MinCross}(F,I)=\min\{\mathsf{MinCross}(F_1,I_1)+\mathsf{MinCross}(F_2,I_2)+\mathsf{Cross}(d_{f_1},\pp(c)\setminus \pp(f))$ is computed in $\OO(k)$ runtime. We add this computation to Algorithm \ref{alg:schemeiter} after Line \ref{alg3:Line9}, and we update an additional variable for each $I\in \mathsf{INF}$ and info-frame $F$, to store the minimum value. Notice that this does not change the given time complexity bound of the algorithm. We remark that one can use our scheme without adding the computation of $\mathsf{MinCross}(F,I)$, by adding to $\mathsf{INF}$ also the number of crossings. Notice that this number is bounded by $n^4$, where $n=|V|$, as the number of edges is bounded by $n^2$, and every two edges can intersect in one point at most. 

Now, it is easy to see that {\sc Straight-line Grid Crossing Minimization} is a\\ $(\mathsf{CroClassifier},$ $h,w,k,I_{\mathsf{yes-In}}=\{(\mathsf{True},\emptyset)\})$-problem (see Definition \ref{def:probSc}). In addition, observe that, for a given info-frame $F=(f,d_f,E_f,U_f,\mathsf{V^*Dir}_f)$, we have at most $k$ vertices for which we need to guess directions, so there are at most $n^{\OO(k)}$ different choices. Then, by Theorem \ref{the:AlgSch}, Theorem~\ref{the:crossingMinTime} follows.
We remark that, the value returned by the algorithm is $\mathsf{MinCross}(F_{\mathsf{Init}},(\mathsf{True},\emptyset))$.

\subsection{Algorithm for The Orthogonal Compaction Problem}

In this subsection, we aim to develop an algorithm for the {\sc Orthogonal Compaction} problem (see Definition \ref{def:OrtCom}), using our scheme. In this problem, we get a connected graph $G$. We assume to have an order on the vertices, that is, for every $u,v\in V$ such that $u\neq v$, either $u>v$ or $v<u$. In addition to $G$, we have, for every $\{u,v\}\in E$ where $u>v$, the relative position of $v$ compered to $u$, that is, the {\em direction} of the $\{u,v\}$ from $u$ to $v$. We denote these directions by $\mathsf{U},\mathsf{D},\mathsf{L}$ and $\mathsf{R}$; this stands for ``up'', ``down'', ``left'' and ``right'', respectively. We assume that there exists a planar rectilinear grid drawing of $G$ such that for every $\{u,v\}\in E$, the relative position of $v$ compered to $u$ is as given as input. Our goal is to find such a drawing of minimum area. Similar to the two problems we discussed in the previous subsections, we only seek drawings of drawn treewidth at most $k$ (where $k$ is given as input).

We first assume that $h$ and $w$ are given as input. So, we need to find a planar rectilinear grid drawing $d$ of $G$ such that $d$ respects the directions of the edges and it is strictly bounded by $R_{h,w}$.
Now, similarly to the concept of a $G^*$-drawing, we would like to define the way parts of such a drawing look like. First, since we look for a planar rectilinear grid drawing, we expect that every vertex is drawn on a grid point and every edge or a part of an edge is parallel to the axis, in every part of the drawing. In addition, every part of the drawing is obviously planar too. Thus, every part of the drawing should be a planar rectilinear grid drawing. Now, recall that we wish that the drawing respects every direction of every edge. To this end, we introduce an alternative version of Definition~\ref{def:dir} for $G^*$-drawings. First, for every $\{u,v\}\in E$, we want that all vertices from $V^*_{\{u,v\}}$ will be on the same row or column, depending on the direction of $\{u,v\}$ that is given as input. In addition, if $\{u,uv_1\}\in E(d)$ (or $\{uv_{\mathsf{index}(u,v)},v\}\in E(d)$), then we would like $u$ (or $v$) is drawn in the direction relatively to $uv_1$ (or $uv_{\mathsf{index}}$). Thus, we have the following definition:

%Recall that, the labeling of vertices from the set $V_{\{u,v\}}$ are ordered by the order of appearance of the vertices from $u$ to $v$. Therefore, if for example $\mathsf{dir}_{\{u,v\}}=\mathsf{U}$, then $v$ should appear above $u$. So, every $uv_{i+1}$, represents a point of the edge $\{u,v\}$, should be above $uv_{i}$, represents a point of the edge $\{u,v\}$ that is closer to $u$ than $uv_{i+1}$ is. Thus, for every $\{uv_i,uv_{i+1}\}$ in a part of $d$, it should respect the same direction as $\mathsf{dir}_{\{u,v\}}$. In this case, we say that the $G^*$ drawing respects $\mathsf{dir}_{\{u,v\}}$. We call such a drawing a {\em $G^*$-directed rectilinear drawing}:

\begin{definition}[{\bf $G^*$-Drawing Respects an Edge Direction}]
	Let $G$ be a connected graph, let $\{u,v\}\in E$, such that $u>v$, let $\mathsf{dir}_{\{u,v\}}\in \{\mathsf{U},\mathsf{D},\mathsf{L},\mathsf{R}\}$, and let $d$ be $G^*$-drawing. Then, $d$ {\em respects $\mathsf{dir}_{\{u,v\}}$} if for every $uv_i,uv_j\in V(d)$ such that $i\neq j$ the following conditions are satisfied.
	\begin{enumerate}
		\item If $\mathsf{dir}_{\{u,v\}}=\mathsf{U}$ or $\mathsf{dir}_{\{u,v\}}=\mathsf{D}$, then $\fr(uv_i)=\fr(uv_j)$.
		\item If $\mathsf{dir}_{\{u,v\}}=\mathsf{L}$ or $\mathsf{dir}_{\{u,v\}}=\mathsf{R}$, then $\fc(uv_i)=\fc(uv_j)$.
	\end{enumerate}  
In addition, if $\{u,uv_1\}\in E(d)$, then the following conditions are satisfied.
\begin{enumerate}
	\item If $\mathsf{dir}_{\{u,v\}}=\mathsf{U}$, then $\fc(uv_1)>\fc(u)$.
	\item If $\mathsf{dir}_{\{u,v\}}=\mathsf{D}$, then $\fc(uv_1)<\fc(u)$.
	\item If $\mathsf{dir}_{\{u,v\}}=\mathsf{L}$, then $\fr(uv_1)<\fr(u)$.
	\item If $\mathsf{dir}_{\{u,v\}}=\mathsf{R}$, then $\fr(uv_1)>\fr(u)$.
\end{enumerate}  
 In addition, if $\{uv_{\mathsf{index}(u,v)},v\}\in E(d)$, then the following conditions are satisfied.
 \begin{enumerate}
 	\item If $\mathsf{dir}_{\{u,v\}}=\mathsf{U}$, then $\fc(uv_{\mathsf{index}(u,v)})<\fc(v)$.
 	\item If $\mathsf{dir}_{\{u,v\}}=\mathsf{D}$, then $\fc(uv_{\mathsf{index}(u,v)})>\fc(v)$.
 	\item If $\mathsf{dir}_{\{u,v\}}=\mathsf{L}$, then $\fr(uv_{\mathsf{index}(u,v)})>\fr(v)$.
 	\item If $\mathsf{dir}_{\{u,v\}}=\mathsf{R}$, then $\fr(uv_{\mathsf{index}(u,v)})<\fr(v)$.
 \end{enumerate}  
\end{definition}

Now, we give the definition for parts of drawings with the properties discussed so far, called {\em $G^*$-directed rectilinear drawings}:

\begin{definition}[{\bf $G^*$-Directed Rectilinear Drawing}]
		Let $G$ be a connected graph, let $\mathsf{dir}_{\{u,v\}}$ $\in \{\mathsf{U},\mathsf{D},\mathsf{L},\mathsf{R}\}$, for every $\{u,v\}\in E$, and let $d$ be a $G^*$-drawing. Then, $d$ is a {\em $G^*$-directed rectilinear drawing} if (i) $d$ is a planar rectilinear drawing, and (ii) for every $\{uv_i,uv_{i+1}\}$, $d$ respects $\mathsf{dir}_{\{u,v\}}$.
	\end{definition}

Now, here, the only information we need to store for every info-frame $F$ is whether there exists a drawing $d$ of $F$ that is a $G^*$-directed rectilinear drawing. So, we define $\mathsf{INF}=\{0,1\}$, where $1$ indicates that there exists such a drawing of $F$. Accordingly, we define the classifier for the problem: 

\begin{definition}[{\bf $\mathsf{OClassifier}$}] \label{def:cosclas}
	Let $G$ be a connected graph, let $\mathsf{dir}_{\{u,v\}}\in \{\mathsf{U},\mathsf{D},\mathsf{L},\mathsf{R}\}$ for every $\{u,v\}\in E$, let $F$ be an info-frame and let $d$ be a drawing of $F$. Then, $\mathsf{OClassifier}(F,d)=1$ if $d$ is a $G^*$-directed rectilinear drawing; otherwise, $\mathsf{OClassifier}(F,d)=0$.  
\end{definition}

Next, we define the algorithm $A_{\mathsf{OC}}$.

\begin{definition}[{\bf The Function $A_{\mathsf{OC}}$}] \label{AOC}
		Let $G$ be a connected graph, let $\mathsf{dir}_{\{u,v\}}\in \{\mathsf{U},\mathsf{D},\mathsf{L},\mathsf{R}\}$ for every $\{u,v\}\in E$, let $F=(f,d_f,E_f,U_f,\mathsf{V^*Dir}_f)$ be an info-frame, let $C=(c,F_1,F_2)$ be an info-cutter of $F$ and let $I_1,I_2\in \mathsf{INF}$. Then, if (i) $I_1=I_2=1$ and (ii) $d_f$ is a $G^*$-directed rectilinear drawing, then $A_{\mathsf{OC}}(F,C,I_1,I_2)=1$; otherwise, $A_{\mathsf{OC}}(F,C,I_1,I_2)=0$.
\end{definition}

We now show that $A_{\mathsf{OC}}$ is an $\mathsf{OCClassifier}$-algorithm:

\begin{lemma}\label{lem:AC1}
	$A_{\mathsf{OC}}$ is an $\mathsf{OCClassifier}$-algorithm.
\end{lemma}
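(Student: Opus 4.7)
The plan is to verify the two conditions of Definition~\ref{def:probsolv}. Condition~\ref{def:probSolvCon1} holds trivially from the signature of $A_{\mathsf{OC}}$. For Condition~\ref{def:probSolvCon2}, fix an info-frame $F = (f, d_f, E_f, U_f, \mathsf{V^*Dir}_f)$, an info-cutter $C = (c, F_1, F_2)$ of $F$, and drawings $d_1, d_2$ of $F_1, F_2$ with $\mathsf{OClassifier}(F_i, d_i) = I_i$ for $i \in \{1,2\}$. Let $d = \mathsf{Glue}(F, C, d_1, d_2)$, which by Lemma~\ref{lem:glue} is a drawing of $F$. We must show that $d$ is a $G^*$-directed rectilinear drawing if and only if $d_f$ is a $G^*$-directed rectilinear drawing and $I_1 = I_2 = 1$, matching the output rule in Definition~\ref{AOC}.

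For the forward direction, assume $d$ is a $G^*$-directed rectilinear drawing. Since $d$ is a drawing of $F$, by Condition~\ref{infFramcon2} of Definition~\ref{def:infFrDr} we have $d(\pp(f)) = d_f(\pp(f))$, so $d_f$ is the restriction of $d$ to $f$ (up to renaming of $V^*$-vertices). Thus $d_f$ inherits planarity, axis-parallelism of every edge, and the direction constraint for every partial edge crossing $f$, so $d_f$ is a $G^*$-directed rectilinear drawing. Analogously, since $d_1$ is a drawing of $F_1$, one can show $d_1 =_{\mathsf{rename}}^{\pp(f_1(c))} d$, so $d_1$ inherits the same three properties and $I_1 = 1$; the argument for $I_2 = 1$ is symmetric.

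For the backward direction, assume $d_f$, $d_1$, $d_2$ are all $G^*$-directed rectilinear drawings. I will verify that $d$ is planar, rectilinear, and respects every $\mathsf{dir}_{\{u,v\}}$. Planarity: by construction (Definition~\ref{def:glue}) every edge (or partial edge) of $d$ is drawn entirely inside one of $f_1(c)$, $f_2(c)$, or on $f$; since $d_1$, $d_2$, $d_f$ are individually planar and their only common points lie on $\pp(f) \cup \pp(c)$ (where they agree up to renaming by the {\bf Equality of Common Parts} property of $C$), no two edges can cross internally in $d$. Rectilinearity: each $d_{\mathsf{E}}(e)$ in Definition~\ref{def:glueEdg} is a concatenation of axis-parallel subpaths obtained from $d_1$, $d_2$, $d_f$, and consecutive such subpaths share an endpoint on $\pp(c)$, so the concatenation is still an axis-parallel path; the optional deletion of points from $\gis(\fin)\setminus\gps(\fin)$ that appears in Definition~\ref{def:glueEf2} only drops collinear intermediate vertices. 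Direction respect: for each $\{u,v\} \in E$, every part of its drawing in $d$ comes from $d_1$, $d_2$, or $d_f$, each of which respects $\mathsf{dir}_{\{u,v\}}$ on its piece; hence all $V^*_{\{u,v\}}$-vertices inherited from the parts share the required common coordinate, and the endpoint inequalities (e.g.\ $\fc(uv_1) > \fc(u)$ for $\mathsf{dir}_{\{u,v\}} = \mathsf{U}$) are inherited from whichever sub-drawing contains the edge $\{u, uv_1\}$ or $\{uv_{\mathsf{index}(u,v)}, v\}$.

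The main obstacle is the direction-respect check when a single edge $\{u,v\}$ is split by $c$: the pieces live in different sub-drawings, and I must argue that all of the $V^*_{\{u,v\}}$-vertices (some coming from $d_1$, some from $d_2$, some from $d_f$) still share the common row or column demanded by $\mathsf{dir}_{\{u,v\}}$. This follows because (i) each sub-drawing individually respects $\mathsf{dir}_{\{u,v\}}$, so within each sub-drawing the relevant vertices already share the common coordinate, and (ii) on the overlap sets $\mathsf{Common}(f, f_1)$, $\mathsf{Common}(f, f_2)$, and $\gi(c)$, the drawings agree up to renaming by the {\bf Equality of Common Parts} property, which forces the per-sub-drawing coordinates to coincide globally. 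Combined with an analogous check of the endpoint inequalities via the partial-intersection cases in Definitions~\ref{def:partIntCBoth}--\ref{def:partIntUV}, this yields that $d$ is a $G^*$-directed rectilinear drawing and completes the proof of Condition~\ref{def:probSolvCon2}.
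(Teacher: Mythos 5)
Your proof is correct and follows essentially the same route as the paper's: both reduce Condition 2 of Definition~\ref{def:probsolv} to showing that $\mathsf{Glue}(F,C,d_1,d_2)$ is a $G^*$-directed rectilinear drawing exactly when $d_f$ is one and $I_1=I_2=1$, with one implication obtained by restricting (equivalently, propagating a defect from) $d_f$, $d_1$, $d_2$ and the other by checking planarity, rectilinearity and direction-respect of the glued pieces. One small point of precision: rectilinearity requires each glued edge to be a single line segment, not merely an axis-parallel path, so your intermediate claim should be closed off by the shared-coordinate observation in your final paragraph (as the paper does via the direction constraint on each $\mathsf{PartEdge}_t$).
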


\begin{proof}
Let $F=(f,d_f,E_f,U_f,\mathsf{V^*Dir}_f)$ be an info-frame, let $C=(c,F_1,F_2)$ be an info-cutter of $F$ and let $I_1,I_2\in \mathsf{INF}$.  Let $d_1$ and $d_2$ be drawings of $F_1$ and $F_2$, respectively, such that $\mathsf{OClassifier}(F_1,d_1)=I_1$ and $\mathsf{OClassifier}(F_2,d_2)=I_2$. Let $d=\mathsf{Glue}(F,C,d_1,d_2))$. First, observe that if $d_f$ is not a $G^*$-directed rectilinear drawing, then neither is $d$, so $\mathsf{OClassifier}(F,d)=0$ and $A_{\mathsf{OC}}(F,C,I_1,I_2)=0$. So, assume that $d_f$ is a $G^*$-directed rectilinear drawing. Now, assume that at least one among $I_1$ and $I_2$ is $0$; without loss of generality, assume that $I_1=0$. Then, $d_1$ is not a $G^*$-directed rectilinear drawing. It is easy to see that if $d_1$ is not planar or not a rectilinear drawing, then neither is $d$, and then, $d$ is not a $G^*$-directed rectilinear drawing. So, assume that $d_1$ is a planar rectilinear drawing. Since $d_1$ is not a $G^*$-directed rectilinear drawing, there exists $\{u,v\}\in E$ such that $d_1$ does not respect $\mathsf{dir}_{\{u,v\}}$; without loss of generality, assume that $\mathsf{dir}_{\{u,v\}}=\mathsf{U}$. Now, we have few cases. Assume, that there exists $uv_i,uv_j\in V(d_1)$ such that $d_{1\mathsf{y}}(uv_i)\neq d_{1\mathsf{y}}(uv_j)$. If $d_1(uv_i)$ and $d_1(uv_j)$ are on $f$, then we get that $d_f$ is not a $G^*$-directed rectilinear drawing, a contradiction to the assumption that it is a $G^*$-directed rectilinear drawing. Otherwise, at least one among $uv_i$ and $uv_j$ is drawn on a point in $\gp(f)\setminus \gps(c)$; assume that $uv_i$ is such. Therefore, there must be $\{uv_t,uv_{t+1}\}\in E_f$ such that $d_1(uv_i)\in d(\{uv_t,uv_{t+1}\})$. So, either $d(\{uv_t,uv_{t+1}\})$ is not a straight line, and then $d$ is not a rectilinear drawing, or $\fc(uv_t)\neq\fc(uv_{t+1})$, and then $d$ does not respect $\mathsf{dir}_{\{u,v\}}$. Either way, we get that $d$ is not a $G^*$-directed rectilinear drawing, so $\mathsf{OClassifier}(F,d)=0$, and $A_{\mathsf{OC}}(F,C,I_1,I_2)=0$. 
Now, assume that $\{u,uv_1\}\in E(d_1)$ and $d_{1\mathsf{y}}(uv_1)\leq d_{1\mathsf{y}}(u)$. So, observe that $\{u,uv_1)\}\in E(d)$ and either $\fc(uv_1)\leq\fc(u)$, or there exists $d(\{u,uv_2\})$ that is not a straight line. Either way, we get that $d$ is not a $G^*$-directed rectilinear drawing, so $\mathsf{OClassifier}(F,d)=0$ and $A_{\mathsf{OC}}(F,C,I_1,I_2)=0$. The other cases are similar.

Now, assume that $d_f$ is a $G^*$-directed rectilinear drawing and $I_1=I_2=1$. Since $d_1$ and $d_2$ are planar, it is easy to see that $d$ is planar too. In addition, since $d_1$ and $d_2$ are rectilinear drawings, we get that every $v\in V(d)$ is drawn on a grid point. Now, let $\{uv_i,uv_{i+1}\}\in E(d)$, and assume, without loss of generality, that $\mathsf{dir}_{\{u,v\}}=\mathsf{U}$. We have a few cases corresponding to the different cases of Definition \ref{def:glueEdg}. Assume that $d(\{uv_i,uv_{i+1}\})$ is constructed by Condition \ref{con4defglueedge} of Definition \ref{def:glueEf2}. So, since $d_1$ and $d_2$ are rectilinear drawings and respect $\mathsf{dir}_{\{u,v\}}$, it can be proved by induction, that for every $0\leq t\leq \ell$, $\mathsf{PartEdge}_t$ is a straight line, parallel to the $y$-axis. Therefore, $d(\{uv_i,uv_{i+1}\})$ is a straight line, thus $d$ is a rectilinear drawing. Now, we show that $d$ respects $\mathsf{dir}_{\{u,v\}}$. Assume that $\{u,uv_1\}\in E(d)$. Again, we have a few cases. Assume that $\{u,uv_1\}$ is drawn inside $f_1(c)$ or $f_2(c)$. Then, since $d_1$ and $d_2$ respect $\mathsf{dir}_{\{u,v\}}$, we get that $\fc(uv_1)>\fc(u)$. Otherwise, $\{u,uv_1\}$ is drawn inside neither $f_1(c)$ nor $f_2(c)$. Assume, without loss of generality, that $u$ is drawn inside $f_1(c)$. Since $\{u,uv_1\}$  is not drawn inside $d_1$, there must be a a turning point of $\{u,uv_1\}$ in $f_1(c)$. Let $p$ be the closest one to $u$ among the aforementioned points. So, since $d_1$ is a $G^*$-drawing, $\{u,uv_1\}\in E(d_1)$ where $d_1(uv_1)=p$, and since $d_1$ respects $\mathsf{dir}_{\{u,v\}}$, $d_{1\mathsf{y}}(uv_1)>d_{1\mathsf{y}}(u)$. Now, as $d$ is a rectilinear drawing, we get that $\fc(uv_1)>\fc(u)$ (recall that $uv_1$ in $d_1$ and $uv_1$ in $d$ might be drawn on different points). The rest of the cases are similar. This completes the proof.
%Assume that $u$ is drawn strictly inside $f_1(c)$ or $f_2(c)$, without loss of generality, assume that $u$ is drawn strictly inside $f_1(c)$. T
 %If $uv_i$ is drawn inside $f_1(c)$, then $\fr(uv_i)>\fr(u)$ since $d_1$ respects $\mathsf{dir}_{\{u,v\}}$. If $uv_i$ is drawn outside $f_1(c)$, then, since $d_1$ is a $G^*$-drawing, there must be $uv_j\in V(d_1)$ drawn on $f_1(c)$. If $uv_j$ is drawn on $f$, then ther
\end{proof}

Observe that, the runtime of $A_{\mathsf{OC}}$ is $\OO(k)$. 
Now, we define the leaf solver $L_{\mathsf{OC}}$:

 \begin{definition}[{\bf The Function $L_{\mathsf{OC}}$}]
	Let $G$ be a connected graph, let $\mathsf{dir}_{\{u,v\}}\in \{\mathsf{U},\mathsf{D},\mathsf{L},\mathsf{R}\}$, for every $\{u,v\}$, let $F=(f,d_f,E_f,U_f,\mathsf{V^*Dir}_f)$ be an info-frame that is a leaf and let $I\in \mathsf{INF}$. Let $d$ be the drawing of $F$ (see Observation \ref{obs:oneDre}). Then:
	\begin{itemize}
		\item If $d$ is a $G^*$-directed rectilinear drawing and $I=1$, then $L_{\mathsf{OC}}(F,d)=\mathsf{True}$.
		\item If $d$ is a $G^*$-directed rectilinear drawing and $I=0$, then $L_{\mathsf{OC}}(F,d)=\mathsf{False}$.
		\item If $d$ is not a $G^*$-directed rectilinear drawing and $I=1$, then $L_{\mathsf{OC}}(F,d)=\mathsf{False}$.
		\item If $d$ is not  a $G^*$-directed rectilinear drawing and $I=0$, then $L_{\mathsf{OC}}(F,d)=\mathsf{True}$.
	\end{itemize} 
\end{definition}

It is easy to see that $L_{\mathsf{OC}}$  is indeed a leaf solver. Notice that $L_{\mathsf{OC}}$  runs in $\OO(k)$ runtime.
In addition, observe that if in a rectilinear grid drawing $d$ of $G$ we have more than $n+1$ rows (columns), then there is a row (column) with no vertex drawn on it. We can ``delete" this row (column) and get a rectilinear grid drawing $d$ of $G$ with less rows (columns). Observe that the relative position of every two vertices is preserved in the modified drawing. Therefore, we can assume that $h\leq n+1$ and $w\leq n+1$. 

Thus, by Theorem \ref{lem:AlgTime} and since $\Delta\leq 4$, we have that there exists an algorithm that solves the {\sc Orthogonal Compaction} problem in time $\OO((k\cdot h\cdot w\cdot n)^{\OO (k)}\cdot 2^{\OO (4\cdot k)}\cdot 2^{\OO(1)}\cdot \OO(k)) +\OO((k\cdot h\cdot w\cdot n)^{\OO (k)}\cdot 2\cdot \OO(k))=\OO((k\cdot n\cdot n\cdot n)^{\OO (k)}=n^{\OO(k)}$. Thus, Theorem~\ref{the:orthoComTime} follows.

\section{Additional Upper and Lower Bounds}\label{sec:uppBound}

\subsection{Upper Bounds for Orthogonal Grid Drawings}

In this subsection, we aim to give upper bounds for the drawn treewidth of some classes of grid drawings. In particular, we give an upper bound for orthogonal grid drawings. Observe that, in orthogonal grid drawings, vertices are drawn on grid points, and edges are mapped to straight-line paths, which might bend only on grid points. So, the ``changes'' in the drawing, such as turning points or drawing of vertices, might happen only on grid points.  
Therefore, a cost of a frame is an order of the sum ``elements'' intersected, that is, vertices and grid points contained in the drawing of an edge (i.e., for example, if two edges intersect the frame at the same grid point, we count it as two elements). We first begin with defining the number of {\em elements intersected by a cutter} in an orthogonal grid drawing:

\begin{definition}[{\bf Cutter and Elements Intersection}]
Let $G=(V,E)$ be a graph, let $d$ be an orthogonal grid drawing of $G$, let $f$ be a frame and let $c$ be a cutter $f$. The number of {\em elements intersected by $c$} in $d$, is the sum of (i) the number of vertices drawn on $c$ and (ii) for every $\{u,v\}\in E$, the number of grid points intersected by $d(\{u,v\})$ and $c$. 
\end{definition}

We say that a cutter $c$ of a frame $f$ is with {\em $s$ element intersections} in $d$, if the number of elements intersected by $c$ in $d$ is $s$.

Let $G=(V,E)$ be a graph, let $h,w\in\mathbb{N}$, let $d$ be an orthogonal grid drawing of $G$ and let $\{u,v\}\in E$. The {\em length} of $\{u,v\}$ in $d$, is the number of grid points intersected by $d(\{u,v\})$.

Now, we show that in an orthogonal grid drawings, for any frame that is a rectangles (with grid points strictly inside it), we can always find a straight cutter that is ``small'' enough. 

\begin{lemma}\label{lem:smallCut}
Let $G=(V,E)$ be a graph, let $h,w\in\mathbb{N}$  and let $d$ be an orthogonal grid drawing of $G$, strictly bounded by $R_{h,w}$.  Then, for every $R_{h',w'}$ which is inside $R_{h,w}$ and has grid points strictly inside, there exists a straight cutter of $R_{h',w'}$ with $\OO (\sqrt{\Delta\cdot \ell\cdot n}\cdot \mathsf{maxInt})$ element intersections in $d$, where (i) $\Delta$ is the maximum degree in $G$, (ii) $\ell$ is the average length of the edges of $G$ in $d$, and (iii) $\mathsf{maxInt}$ is the maximum number of edges and vertices intersected in a grid point in $d$.
\end{lemma}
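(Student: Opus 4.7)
\medskip\noindent\textbf{Proof plan for Lemma \ref{lem:smallCut}.}
The plan is to find the desired cutter by a double-counting (averaging) argument over the set of all candidate straight cutters of $R_{h',w'}$, combined with a trivial per-cutter upper bound, and then balance the two estimates. Write $R_{h',w'}$ as an $h' \times w'$ rectangle (so its strict interior contains grid points precisely when both $h', w' \geq 3$). The set of candidate straight cutters of $R_{h',w'}$ is then exactly the set of $(h'-2)$ horizontal segments connecting the left and right sides at an interior row, together with the $(w'-2)$ vertical segments connecting top and bottom at an interior column. Call this total count $N \;=\; (h'-2)+(w'-2) \;=\; \Theta(h'+w')$.

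For the averaging bound, I would sum the number of element intersections over all $N$ candidates. Each vertex $v \in V$ drawn strictly inside $R_{h',w'}$ lies on at most one horizontal row and at most one vertical column of $R_{h',w'}$, so it contributes at most $2$ to the total, yielding at most $2n$ overall. For the edge contribution, note that for each $\{u,v\} \in E$ and each grid point $p$ on $d(\{u,v\})$ that lies strictly inside $R_{h',w'}$, the pair $(p,\{u,v\})$ contributes exactly $1$ to the horizontal cutter at row $y(p)$ and exactly $1$ to the vertical cutter at column $x(p)$, i.e.\ at most $2$ to the total sum. Summing over all edges, the edge contribution is bounded by $2\sum_{e \in E} (\text{length of } e) \,=\, 2\ell|E| \,\leq\, \ell \Delta n$, using $|E|\leq \Delta n/2$ and the definition of $\ell$ as the average edge length. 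Hence the sum of element intersections over the $N$ candidate straight cutters is $\OO(\ell \Delta n)$, and therefore some straight cutter $c_{\mathsf{avg}}$ has at most $\OO\!\left(\ell \Delta n/(h'+w')\right)$ element intersections.

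For the worst-case per-cutter bound, observe that any horizontal straight cutter passes through at most $w'$ grid points of $R_{h',w'}$; each such grid point contributes at most $\mathsf{maxInt}$ to the number of edges (and the single vertex, if any) meeting the cutter at that point. Hence any horizontal straight cutter has element intersection at most $\OO(w'\cdot \mathsf{maxInt})$, and symmetrically any vertical straight cutter has element intersection at most $\OO(h'\cdot \mathsf{maxInt})$. Choosing the orientation that minimizes the dimension gives a straight cutter $c_{\mathsf{triv}}$ with at most $\OO(\min(h',w') \cdot \mathsf{maxInt})$ element intersections.

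To finish, I would balance these two estimates. Set $s := \min(h',w')$. If $s \leq \sqrt{\Delta \ell n}$, then the trivial bound gives a straight cutter with $\OO(\sqrt{\Delta \ell n}\cdot \mathsf{maxInt})$ element intersections. Otherwise $h'+w' \geq 2s > 2\sqrt{\Delta \ell n}$, and the averaging bound gives a straight cutter with $\OO(\ell \Delta n / \sqrt{\Delta \ell n}) = \OO(\sqrt{\Delta \ell n})$ element intersections, which is even stronger than the claimed bound (since $\mathsf{maxInt}\geq 1$). In either case we produce a straight cutter of $R_{h',w'}$ with $\OO(\sqrt{\Delta \ell n}\cdot \mathsf{maxInt})$ element intersections. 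The main technical care needed lies in the double-counting: being precise that each unit contribution of a vertex or of an edge-grid-point to the total sum really is at most $2$, and that the trivial per-cutter bound correctly accounts for overlapping edges through the same grid point via the factor $\mathsf{maxInt}$; everything else is a straightforward balancing of the two resulting estimates.
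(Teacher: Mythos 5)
Your proposal is correct and follows essentially the same strategy as the paper's proof: a trivial per-cutter bound of $\OO(\min(h',w')\cdot\mathsf{maxInt})$ when one dimension is small, an averaging/pigeonhole argument over the candidate rows (or columns) using the bound $n+n\Delta\ell$ on the total number of element incidences when both dimensions exceed roughly $\sqrt{\Delta\ell n}$, and a balancing of the two at the threshold $\sqrt{\Delta\ell n}$. Your write-up is in fact somewhat more careful than the paper's (making the factor-of-$2$ double count and the $|E|\leq\Delta n/2$ step explicit), but the underlying argument is the same.
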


\begin{proof}
	Observe that, the sum of the vertices in $G$ and the sum of lengths of all the edges in $d$ is bounded by $n+n\cdot \Delta \cdot \ell$. Now, if $R_{h',w'}$ has at most $\sqrt{(\Delta\cdot \ell+1)n}$ rows (columns) strictly inside it, then we take as a cutter $c$, a column (a row). It is easy to see that $c$ is indeed a cutter of $R_{h',w'}$, that intersects at most $\sqrt{(\Delta\cdot \ell+1)n}$ grid points, and therefore with at most $\sqrt{(\Delta\cdot \ell+1)n}\cdot \mathsf{maxInt}=\OO (\sqrt{\Delta\cdot \ell\cdot n}\cdot \mathsf{maxInt})$ element intersections in $d$. Otherwise, $R_{h',w'}$ has more than $\sqrt{(\Delta\cdot \ell+1)n}$ rows (columns). Therefore, there must be a row (column) with at most $\sqrt{(\Delta\cdot \ell+1)n}$ element intersections in $d$. We take this row (column) as a cutter $c$. It is easy to see that $c$ is indeed a cutter of $R_{h',w'}$, and we are done.
\end{proof}

Now, consider an orthogonal grid drawing $d$ of a graph $G$, with maximum degree $\Delta$ and a frame $f$. Observe that, an element might contribute at most $\OO(\Delta)$ to the cost of the frame. For example, a vertex can be also a turning point in $f$ for every one of its edges, so it contributes at most $\Delta+1$ to the cost. An element which is a part of an edge that is not an endpoint, contributes at most $1$ to the cost of the frame, if it is a turning points of this edge in $f$. Therefore, if we have a frame that intersects $s$ elements, its cost is bounded by $\OO(\Delta\cdot s)$. Now, assume that we have a drawn frame-tree of $d$, where every frame is a rectangle, which every one of its edges are contained inside a cutter with $s$ element intersections. We get that, every frame intersects $\OO(s)$ elements, so the cost of every such a frame is $\OO(\Delta\cdot s)$, and so is the width of the drawing frame-tree:

\begin{observation}\label{obs:costF}
Let $G=(V,E)$ be a graph with maximum degree $\Delta$, let $h,w\in\mathbb{N}$  and let $d$ be an orthogonal grid drawing of $G$, strictly bounded by $R_{h,w}$. Let $({\cal T},\alpha)$ be a drawn frame-tree of $d$, such that, for every $v\in V_T$, $\alpha(v)=R_{h',w'}$ for some $h',w'\in \mathbb{N}$, and every edge of $R_{h',w'}$ is contained in some cutter with $\OO(s)$ element intersections, or in $R_{h,w}$. Then, $\mathsf{width}({\cal T},\alpha)\leq \OO(\Delta\cdot s)$. 
\end{observation}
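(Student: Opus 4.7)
The plan is to establish $\sizef(\alpha(v)) = \OO(\Delta \cdot s)$ uniformly over $v \in V_T$; taking the maximum over $V_T$ then yields the desired bound on $\mathsf{width}({\cal T}, \alpha)$. Fix such a $v$ and write $f = \alpha(v) = R_{h',w'}$. By Definition~\ref{def:sizeFr}, $\sizef(f)$ is the sum of $|f^{\mathsf{min}}|$, the number of vertices of $G$ drawn on $f$, and $\tp(f)$. Since $f$ is a rectangle, $|f^{\mathsf{min}}| = 4 = \OO(1)$, so it remains to bound the other two summands.

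I intend to account for contributions side-by-side. By hypothesis, each of the four sides of $f$ is either contained in some cutter $c$ with $\OO(s)$ element intersections in $d$, or contained in $R_{h,w}$. In the first subcase, every element of $d$ lying on the side also lies on $c$, so the side inherits at most $\OO(s)$ element intersections. In the second subcase, since $d$ is strictly bounded by $R_{h,w}$, no vertex and no point of any edge-drawing of $d$ lies on $R_{h,w}$, so the side contributes nothing.

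The core of the argument is a per-element cost accounting: each element intersection on the frame contributes at most $\OO(\Delta)$ to $\sizef(f)$. Specifically, a vertex $u$ of $G$ drawn on $f$ contributes $1$ to the count of vertices on $f$ and, by Definition~\ref{def:IntPoint}, yields a turning point $(d(u), \{u, w\})$ in $f$ for every incident edge $\{u, w\}$; hence $u$ contributes at most $1 + \deg(u) \le \Delta + 1 = \OO(\Delta)$. A non-vertex edge-grid-point incidence $(p, e')$ on $f$ contributes at most $1$ to $\tp(f)$ (the single potential turning point $(p, e')$). Summing this $\OO(\Delta)$ per-element bound over the at most $\OO(s)$ element intersections on each of the four sides yields $\OO(\Delta \cdot s)$ for these summands.

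Combining with the corner contribution gives $\sizef(f) \le 4 + \OO(\Delta \cdot s) = \OO(\Delta \cdot s)$, and since this bound is uniform over $v \in V_T$, we conclude $\mathsf{width}({\cal T}, \alpha) = \OO(\Delta \cdot s)$. The main subtlety will be the per-element accounting—in particular verifying, via Definition~\ref{def:IntPoint}, that each vertex on the frame genuinely produces a turning point for each of its incident edges, and making sure that a side's element-intersection budget is correctly inherited from the containing cutter in the case that a side only partially coincides with the cutter's support.
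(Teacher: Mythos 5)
Your proposal is correct and follows essentially the same argument the paper gives in the prose immediately preceding the observation: each side of the rectangular frame inherits at most $\OO(s)$ element intersections from its containing cutter (or none, if it lies on $R_{h,w}$, since $d$ is strictly bounded by it), each element contributes at most $\OO(\Delta)$ to the cost (a vertex contributes $1$ plus up to $\Delta$ turning points of its incident edges, a per-edge grid point at most one turning point), and the rectangle itself contributes only its $4$ corners. No gaps.
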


Next, we assume that for every rectangle in an orthogonal grid drawing $d$, we can find a ``small'' cutter. Then, we use Observation \ref{obs:costF} in order to construct a drawn decomposition of $d$ with small width.

\begin{lemma}\label{lem:smalldt}
Let $G=(V,E)$ be a graph, let $h,w\in\mathbb{N}$  and let $d$ be an orthogonal grid drawing of $G$, strictly bounded by $R_{h,w}$.  If that, for every $R_{h',w'}$ which is inside $R_{h,w}$ and has grid points strictly inside, there exists a straight cutter of $R_{h',w'}$ with $\OO(s)$ element intersections in $d$, then $\mathsf{dtw}(d)\leq \OO(\Delta\cdot s)$.
	\end{lemma}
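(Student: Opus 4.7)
The plan is to build a frame-tree $({\cal T},\alpha)$ of $d$ whose frames are all rectangles, in which every internal node uses a straight cutter provided by the hypothesis. We then bound its width via Observation~\ref{obs:costF} and pass to a drawn tree decomposition via Corollary~\ref{col:TWBoundedBYdraTW}.

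Concretely, I would build $({\cal T},\alpha)$ top-down by recursion. At the root $v_r$, set $\alpha(v_r)=R_d$; since $d$ is strictly bounded by $R_{h,w}$, $R_d$ is a rectangle contained in $R_{h,w}$ and no element of $d$ lies on $R_d$. For a node $v$ with $\alpha(v)=R_{h',w'}$, if $R_{h',w'}$ contains no grid point in its strict interior, declare $v$ a leaf; otherwise, invoke the hypothesis to obtain a straight cutter $c_v$ of $R_{h',w'}$ with $\OO(s)$ element intersections. Because $c_v$ is straight (horizontal or vertical), its two endpoints lie on opposite sides of $R_{h',w'}$, so the two frames $f_1(c_v)$ and $f_2(c_v)$ produced by cutting are themselves rectangles; attach them as the two children of $v$ and recurse. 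The recursion terminates because each cutter partitions the interior grid points into two nonempty parts, so the number of interior grid points strictly decreases along every root-to-leaf path.

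The resulting pair $({\cal T},\alpha)$ satisfies Definition~\ref{def:DrawnFrDeco} by construction, so it is a frame-tree of $d$. The key invariant to verify along the recursion is that for every node $v$, each of the four sides of the rectangle $\alpha(v)$ is contained either in $R_d$ (equivalently, in $R_{h,w}$) or in some previously chosen cutter with $\OO(s)$ element intersections. This holds at the root trivially, and is preserved by one cutting step: the children inherit three sides from the parent (for which the invariant holds by induction) and gain a fourth side that is exactly $c_v$, which has $\OO(s)$ element intersections by the hypothesis. Consequently, the assumptions of Observation~\ref{obs:costF} are met, giving $\mathsf{width}({\cal T},\alpha)\leq\OO(\Delta\cdot s)$. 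Finally, Corollary~\ref{col:TWBoundedBYdraTW} upgrades $({\cal T},\alpha)$ to a drawn tree decomposition $({\cal T},\beta,\alpha)$ of $d$ of the same width, hence $\mathsf{dtw}(d)\leq\OO(\Delta\cdot s)$.

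I do not expect any step to be genuinely hard; the only delicate point is the invariant that all frames stay rectangular and that the four sides always trace back to $R_d$ or to a cutter of bounded intersection count — this is what allows Observation~\ref{obs:costF} to fire uniformly at every node. Everything else (termination, the frame-tree axioms, and the final dtw bound) is an immediate consequence.
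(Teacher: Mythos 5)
Your proposal is correct and follows essentially the same route as the paper: recursively cut rectangles using the hypothesized straight cutters, maintain the invariant that every side of every frame lies on a previously chosen cutter or on the bounding rectangle, and conclude via Observation~\ref{obs:costF}. The only cosmetic difference is that you argue termination and the passage to a drawn tree decomposition (via Corollary~\ref{col:TWBoundedBYdraTW}) slightly more explicitly than the paper does.
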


\begin{proof}
	We give a drawn frame-tree $({\cal T}=(V_T,E_T),\alpha: V_T\rightarrow \mathsf{Frames})$ where $\mathsf{width}({\cal T},\alpha)\leq \OO(s)$. We construct ${\cal T}$ by induction, starting from the root $v_r$, and $\alpha(v_r)=R_{h,w}$. The inductive hypothesis is as follows. At every stage of the construction, the following conditions are satisfied:
	\begin{enumerate}
		\item ${\cal T}$ is a tree. \label{Con1}
			\item For every $v\in V_T$ that is not a leaf, there exists a straight cutter $c_v$ of $f$, with $\OO(s)$ element intersections in $d$, such that $\alpha(v_1)=f_1(c_v)$ and $\alpha(v_2)=f_2(c_v)$, where $\alpha(v)=f$ and $v_1, v_2$ are the children of $v$ in ${\cal T}$. \label{Con2}
		\item For every vertex $v\in V_T$, $\alpha(v)=R_{h',w'}$ for some $h',w'\in \mathbb{N}$, such that $R_{h',w'}$ is inside $R_{h,w}$. Moreover, every edge of $R_{h',w'}$ is contained in some cutter previously chosen or in $R_{h,w}$. \label{Con3}
	\end{enumerate} 
We denote by $({\cal T},\alpha)_i$ the pair obtained after the $i$-th step. We show that for every step $i$, $({\cal T},\alpha)_i$ satisfies Conditions \ref{Con1}--\ref{Con3}.
It is easy to see that for the basic case, $({\cal T},\alpha)_1$ satisfies Conditions \ref{Con1}--\ref{Con3}. Let $1\leq i$, and let $({\cal T},\alpha)_i$ be the pair obtained after the $i$-th step. If for every $v\in V_T$ that is a leaf, there are no grid points strictly inside $\alpha(v)$, then we are done. Otherwise, let $v\in V_T$ be a leaf, where there is a grid point strictly inside $\alpha(v)$. By the inductive hypothesis, we get that $\alpha(v)=R_{h',w'}$ for some $h',w'\in \mathbb{N}$, such that $R_{h',w'}$ is inside $R_{h,w}$. From the assumption of the lemma, there exists a straight cutter $v_c$ with $\OO(s)$ element intersections. We obtain $({\cal T},\alpha)_{i+1}$ from $({\cal T},\alpha)_{i}$ as follows. We add two vertices $v_1$ and $v_2$ to $V_T$ and $\{v,v_1\},\{v,v_2\}$ to $E_T$. Denote $R_{h',w'}$ by $f$, we define $\alpha(v_1)=f_1(c_v)$ and $\alpha(v_1)=f_2(c_v)$. We show that $({\cal T},\alpha)_{i+1}$ satisfies Conditions \ref{Con1}--\ref{Con3}. It is easy to see that ${\cal T}$ is a tree, so Condition \ref{Con1} is satisfied. By the inductive hypothesis, and by the construction, it is also easy to see that Condition \ref{Con2} is satisfied. In addition, since $c_v$ is a straight line, then $\alpha(v_1)=f_1(c_v)$ and $\alpha(v_2)=f_2(c_v)$ are rectangles. Now, from the inductive hypothesis, every edge of $R_{h',w'}$ is contained in a cutter previously chosen or in $R_{h,w}$. So, one edge of $f_1(c_v)$ (and similarly, $f_2(c_v)$) is $c_v$ and each one of the other three edges, is contained in an edge of $R_{h',w'}$, and thus, it is contained in a cutter previously chosen or in $R_{h,w}$. Therefore, we get that $({\cal T},\alpha)_{i+1}$ satisfies Conditions \ref{Con1}--\ref{Con3}, so the induction is proved. Observe that, this process is finite. In addition, at the end of the process, we get that, a vertex $v\in V_T$ is a leaf if and only if there are no grid points in the interior of $\alpha(v)$. Therefore, the conditions of Definition \ref{def:DrawnFrDeco} are satisfied, so $({\cal T},\alpha)$ is a drawn frame-tree of $d$. Moreover, by Conditions \ref{Con2} and \ref{Con3}, for every $v\in V_T$, $\alpha(v)=R_{h',w'}$ for some $h',w'\in \mathbb{N}$, such that every edge of $R_{h',w'}$ is contained in some cutter with $\OO(s)$ element intersections, or in $R_{h,w}$. Thus, from Observation \ref{obs:costF}, we get that $\mathsf{width}({\cal T},\alpha)\leq \OO(\Delta\cdot s)$. This completes the proof.
\end{proof}

From Lemmas \ref{lem:smallCut} and \ref{lem:smalldt}, we get the following corollary:

\begin{corollary}\label{cor:uppBou}
Let $G=(V,E)$ be a graph and let $d$ be an orthogonal grid drawing of $G$. Then, $\mathsf{dtw}(d)\leq \OO (\Delta \cdot \sqrt{\Delta\cdot \ell\cdot n}\cdot \mathsf{maxInt})$, where (i) $\Delta$ is the maximum degree in $G$, (ii) $\ell$ is the average length of the edges of $G$ in $d$, and (iii) $\mathsf{maxInt}$ is the maximum number of edges and vertices intersected in a grid point in $d$.
\end{corollary}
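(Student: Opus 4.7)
The plan is to derive Corollary~\ref{cor:uppBou} as a direct composition of Lemma~\ref{lem:smallCut} and Lemma~\ref{lem:smalldt}, which already do the heavy lifting. First, I would normalize the setup: given an arbitrary orthogonal grid drawing $d$ of $G$, pick $h, w \in \mathbb{N}$ so that $d$ is strictly bounded by $R_{h,w}$ (for concreteness, take $R_{h,w}$ to be the rectangular $R_d$ introduced in Section~\ref{sec:drawnSep}, whose strict interior contains all of $d$). The value $\mathsf{dtw}(d)$ is independent of this choice, so fixing such a bounding rectangle costs nothing.

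Next, I would set $s := c \cdot \sqrt{\Delta \cdot \ell \cdot n} \cdot \mathsf{maxInt}$ where $c$ is the hidden constant from Lemma~\ref{lem:smallCut}. That lemma then certifies the hypothesis required by Lemma~\ref{lem:smalldt}: for every axis-aligned rectangle $R_{h',w'}$ inside $R_{h,w}$ whose strict interior contains at least one grid point, there exists a straight cutter of $R_{h',w'}$ with $\OO(s)$ element intersections in $d$. Notably, the bound produced by Lemma~\ref{lem:smallCut} is uniform in the rectangle, which is exactly what Lemma~\ref{lem:smalldt} asks for.

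Finally, I would feed this uniform bound into Lemma~\ref{lem:smalldt}, which constructs an explicit drawn frame-tree whose frames are rectangles bounded by previously chosen straight cutters, and whose width is $\OO(\Delta \cdot s)$. Substituting the value of $s$ yields $\mathsf{dtw}(d) \leq \OO(\Delta \cdot \sqrt{\Delta \cdot \ell \cdot n} \cdot \mathsf{maxInt})$, which is the claimed bound. Since both ingredient lemmas are already established, there is no genuine technical obstacle here; the only thing to check carefully is that the hypotheses match up, namely that Lemma~\ref{lem:smallCut} supplies cutters of the right kind (straight cutters of rectangular sub-frames) in the exact form consumed by Lemma~\ref{lem:smalldt}.
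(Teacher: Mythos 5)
Your proposal is correct and matches the paper's own derivation: the paper obtains Corollary~\ref{cor:uppBou} exactly by combining Lemma~\ref{lem:smallCut} (which supplies, uniformly for every rectangle with a grid point strictly inside, a straight cutter with $\OO(\sqrt{\Delta\cdot \ell\cdot n}\cdot \mathsf{maxInt})$ element intersections) with Lemma~\ref{lem:smalldt} (which turns such a uniform bound $s$ into a drawn frame-tree of width $\OO(\Delta\cdot s)$). Nothing further is needed.
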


In addition, from Corollary \ref{cor:uppBou}, we get the following corollary regarding grid drawings:

\begin{corollary}\label{cor:uppBouGr}
	Let $G=(V,E)$ be a graph and let $d$ be a grid drawing of $G$. Then, $\mathsf{dtw}(d)\leq \OO \sqrt{n}$. Moreover, there exists a drawn tree decomposition $({\cal T},\beta,\alpha)$ of $d$, where every cutter is a straight line, and $\mathsf{width}({\cal T},\beta,\alpha)\leq\OO(\sqrt{n})$.
\end{corollary}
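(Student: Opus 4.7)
The plan is to derive this corollary by substituting the parameters of a grid drawing into the bound of Corollary~\ref{cor:uppBou}, and then verifying that the construction underlying that corollary only uses straight cutters.

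First, I would observe that every grid drawing is, in particular, an orthogonal grid drawing in the sense of Definition~\ref{def:orthDraw}, so Corollary~\ref{cor:uppBou} applies. I would then bound each of the three parameters appearing in the statement of Corollary~\ref{cor:uppBou} by a constant. By Definition~\ref{def:Grid graph}, each edge is a unit-length axis-parallel segment between two adjacent grid points, so: (i)~the maximum degree satisfies $\Delta\leq 4$, since each grid point has only four axis-parallel neighbours at distance $1$; (ii)~the average edge length is $\ell=1$, since every edge of $d$ has length exactly $1$; and (iii)~the quantity $\mathsf{maxInt}$, bounding the number of edges and vertices intersected at any single grid point, is bounded by a constant (at most one vertex plus at most four incident unit edges per grid point). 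Substituting these constants into the bound of Corollary~\ref{cor:uppBou} gives $\mathsf{dtw}(d)\leq \OO\bigl(\Delta\cdot\sqrt{\Delta\cdot\ell\cdot n}\cdot\mathsf{maxInt}\bigr)=\OO(\sqrt{n})$, which is the first assertion.

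For the ``moreover'' part, I would revisit the construction that witnesses the bound of Corollary~\ref{cor:uppBou}, namely the inductive construction of a frame-tree in the proof of Lemma~\ref{lem:smalldt}. That construction starts with the rectangle $R_d$ (which is axis-parallel, hence a rectangle in the required sense) and, whenever the current leaf $v$ has a rectangle $\alpha(v)$ with grid points strictly inside, it invokes Lemma~\ref{lem:smallCut} to obtain a \emph{straight} cutter $c_v$ of $\alpha(v)$ with $\OO(\sqrt{\Delta\cdot\ell\cdot n}\cdot\mathsf{maxInt})$ element intersections, and splits $\alpha(v)$ along $c_v$. Because every cutter produced is straight and every frame encountered stays a rectangle (the invariant explicitly maintained in Conditions 1--3 of that inductive proof), the resulting frame-tree $({\cal T},\alpha)$ uses only straight cutters. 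Setting $\beta$ as in Corollary~\ref{col:TWBoundedBYdraTW} turns $({\cal T},\alpha)$ into a drawn tree decomposition $({\cal T},\beta,\alpha)$ of $d$ whose every cutter is straight, and whose width is bounded by the cost bound from Observation~\ref{obs:costF}, which by the parameter substitution above is $\OO(\sqrt{n})$.

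I do not anticipate a substantial obstacle here: both claims are immediate consequences of Corollary~\ref{cor:uppBou} and of the structure of the construction behind Lemmas~\ref{lem:smallCut} and~\ref{lem:smalldt}. The only nontrivial step is the bound $\mathsf{maxInt}=\OO(1)$, which follows directly from the unit-length axis-parallel nature of the edges of a grid drawing together with the injectivity of $d$ on vertices.
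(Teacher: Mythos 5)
Your proposal is correct and matches the paper's (implicit) derivation exactly: the paper states Corollary~\ref{cor:uppBouGr} as an immediate consequence of Corollary~\ref{cor:uppBou}, with the intended justification being precisely the constant bounds $\Delta\leq 4$, $\ell=\OO(1)$, $\mathsf{maxInt}=\OO(1)$ for grid drawings, and the straight-cutter claim following from the construction in Lemmas~\ref{lem:smallCut} and~\ref{lem:smalldt}. The only cosmetic point is that the paper defines the length of an edge as the number of grid points it intersects (so a unit edge has length $2$ rather than $1$), which does not affect the asymptotic bound.
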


Next, we give another upper bound for orthogonal grid drawings $d$, with respect to $\mathsf{min}\{h,w\}$, where $h,w\in \mathbb{N}$ and $d$ is strictly inside $R_{h,w}$. Assume, without loss of generality, that $h\leq w$. Observe that, every column we take as a cutter, is with at most $\OO (h\cdot\mathsf{maxInt})$ element intersections in $d$, where $\mathsf{maxInt}$ is the maximum number of edges and vertices intersected in a grid point in $d$. Therefore, similarly to the drawn frame-tree we build in Lemma \ref{lem:smalldt}, we can inductively choose columns as cutters, and obtain a drawn frame-tree $({\cal T},\alpha)$. Notice that, every frame in $({\cal T},\alpha)$ is a rectangle, where at most two of its edges are cutters, and the other contained in $R_{h,w}$. Therefore, by Observation \ref{obs:costF}, $\mathsf{width}({\cal T},\alpha)\leq \OO(\Delta\cdot h\cdot\mathsf{maxInt})$. Thus, we have the following lemma (without a proof):

\begin{lemma}\label{lem:ubhw}
	Let $G=(V,E)$ be a graph and let $d$ be an orthogonal grid drawing of $G$. Then, $\mathsf{dtw}(d)\leq \OO(\Delta\cdot \mathsf{min}\{h,w\}\cdot\mathsf{maxInt})$, where (i) $\Delta$ is the maximum degree in $G$, (ii) $d$ is strictly bounded by $R_{h,w}$, and (iii) $\mathsf{maxInt}$ is the maximum number of edges and vertices intersected in a grid point in $d$.
\end{lemma}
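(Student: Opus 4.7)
\medskip

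The plan is to mimic the inductive construction used in Lemma \ref{lem:smalldt}, only replacing the ``small straight cutter'' provided by Lemma \ref{lem:smallCut} with a trivial column-based choice. Assume without loss of generality that $h \leq w$. I will construct a drawn frame-tree $({\cal T},\alpha)$ of $d$ with root $v_r$ satisfying $\alpha(v_r) = R_{h,w}$, maintaining the invariant that for every vertex $v \in V_T$ the frame $\alpha(v)$ is a rectangle $R_{h',w'}$ inside $R_{h,w}$ whose top and bottom edges are contained in the top/bottom edges of $R_{h,w}$, so that $h' = h$, and whose left and right edges are either contained in an edge of $R_{h,w}$ or in a previously chosen column-cutter.

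At each inductive step, for every current leaf $v$ of ${\cal T}$ for which $\alpha(v)$ still contains grid points strictly inside, I pick a column (an axis-parallel vertical segment $c_v$ strictly inside $\alpha(v)$ connecting the top and bottom edges of $\alpha(v)$); such a column exists because the rectangle has grid points in its interior and has height $h$. I set the children of $v$ to be associated with the two sub-rectangles $f_1(c_v)$ and $f_2(c_v)$; both sub-rectangles have height $h$, their top and bottom edges are sub-segments of the top/bottom edges of $\alpha(v)$ (hence of $R_{h,w}$), and their newly introduced vertical edge is exactly $c_v$. The invariant is preserved, and the process terminates because each step strictly reduces the number of interior grid points in the chosen leaf, yielding in the end a valid frame-tree by Definition~\ref{def:DrawnFrDeco}.

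Next, I would bound the number of element intersections of every chosen cutter. Since $c_v$ is a vertical segment of length $h$, it meets at most $h$ grid points of the plane; each such grid point is intersected by at most $\mathsf{maxInt}$ vertices/edges of $d$, so $c_v$ has at most $O(h \cdot \mathsf{maxInt})$ element intersections in $d$. Observing that every edge of every frame $\alpha(v)$ in the constructed tree is contained either in an edge of $R_{h,w}$ (which lies outside $d$ since $d$ is strictly bounded by $R_{h,w}$, contributing no element intersections) or in one of the chosen column cutters (each with $O(h \cdot \mathsf{maxInt})$ intersections), Observation~\ref{obs:costF} applies with $s = O(h \cdot \mathsf{maxInt})$ and gives $\mathsf{width}({\cal T},\alpha) \leq O(\Delta \cdot h \cdot \mathsf{maxInt}) = O(\Delta \cdot \min\{h,w\} \cdot \mathsf{maxInt})$.

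The main obstacle is really only a bookkeeping one: verifying that the invariant on the shape of the frames is preserved under the operation ``cut by a vertical column'', and that Observation~\ref{obs:costF} is applicable, i.e., that every edge of every sub-rectangle in the tree is a sub-segment of either $R_{h,w}$ or a previously selected cutter. No new geometric idea beyond Lemma~\ref{lem:smalldt} is needed; the trivial observation $\sqrt{(\Delta \ell + 1) n} \geq h$ may fail, which is precisely why we do not invoke Lemma~\ref{lem:smallCut} here and instead appeal directly to the easy ``take the whole column'' cutter bound.
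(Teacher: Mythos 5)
Your proposal is correct and follows essentially the same route as the paper: the paper also assumes $h\le w$, observes that any column cutter has at most $\OO(h\cdot\mathsf{maxInt})$ element intersections, repeats the inductive construction of Lemma~\ref{lem:smalldt} with columns as cutters so that every frame is a rectangle whose edges lie on $R_{h,w}$ or on previously chosen cutters, and concludes via Observation~\ref{obs:costF}. Your write-up merely makes the invariant-preservation bookkeeping more explicit than the paper's (deliberately proof-free) sketch.
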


From Lemma \ref{lem:ubhw}, we have the following corollary regarding planar orthogonal grid drawings:

 \begin{corollary}\label{lem:ubhwP}
 	Let $G=(V,E)$ be a graph and let $d$ be a planar orthogonal grid drawing of $G$. Then, $\mathsf{dtw}(d)\leq \OO(\mathsf{min}\{h,w\})$, where $d$ is strictly bounded by $R_{h,w}$.
 \end{corollary}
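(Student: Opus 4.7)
The plan is to derive this corollary directly from Lemma~\ref{lem:ubhw}, which already provides the bound $\mathsf{dtw}(d)\leq \OO(\Delta\cdot \mathsf{min}\{h,w\}\cdot\mathsf{maxInt})$ for any orthogonal grid drawing strictly bounded by $R_{h,w}$. Thus, it suffices to show that both $\Delta$ and $\mathsf{maxInt}$ are $\OO(1)$ when $d$ is \emph{planar}.

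First, I would observe that in any orthogonal grid drawing, every vertex is mapped to a grid point and every incident edge starts with an axis-parallel line segment from that grid point, i.e., in one of the four directions (up, down, left, right). In a planar orthogonal grid drawing, two distinct edges incident to the same vertex cannot start along the same direction without overlapping (and therefore crossing or coinciding), so each vertex has at most four incident edges. Hence $\Delta\leq 4$.

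Next, I would bound $\mathsf{maxInt}$, the maximum number of edges and vertices intersected at any single grid point. Consider an arbitrary grid point $p$. If $p$ is the image of a vertex $v$, then $p$ is intersected by $v$ together with the (at most $4$) edges incident to $v$. No other vertex maps to $p$ (by injectivity of the drawing), and no other edge can pass through $p$: such an edge would have to enter and leave $p$ in one of the four axis-parallel directions already occupied by an incident edge of $v$, producing an edge-edge overlap, contradicting planarity. If $p$ is not the image of any vertex, then planarity forbids two distinct edges from passing through $p$: two such edges would have to enter/leave $p$ in axis-parallel directions, and since $p$ is not a vertex, any two distinct edges meeting there would cross. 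Thus $p$ is intersected by at most one edge. In either case, $\mathsf{maxInt}\leq 5 = \OO(1)$.

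Plugging $\Delta\leq 4$ and $\mathsf{maxInt}\leq 5$ into Lemma~\ref{lem:ubhw} yields $\mathsf{dtw}(d)\leq \OO(\mathsf{min}\{h,w\})$, as required. The only step requiring minor care is the planarity-based bound on $\mathsf{maxInt}$, but it is essentially a direct consequence of axis-parallelism combined with the no-crossing condition, so I do not anticipate any serious obstacle.
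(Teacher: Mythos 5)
Your proposal is correct and matches the paper's (implicit) reasoning exactly: the corollary is stated as an immediate consequence of Lemma~\ref{lem:ubhw}, with the understanding that planarity forces $\Delta\le 4$ and $\mathsf{maxInt}=\OO(1)$. One tiny refinement: when ruling out a non-incident edge passing through a vertex image $p=d(v)$, the cleanest appeal is to the edge--vertex non-crossing condition in the definition of planarity (rather than to an edge--edge overlap, which only arises when all four directions at $p$ are already occupied), but this does not affect the validity of the argument.
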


\subsection{Lower Bound for Balanced Drawing Treewidth in Rectilinear Grid Drawings}

\begin{figure}[t]
	\centering
	\includegraphics[page=74, width=0.31\textwidth]{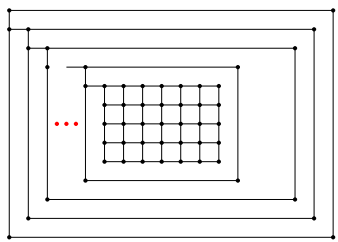}
	\caption{Example for lower bound.}
	\label{fig:lowerBound}
\end{figure}

In this subsection, we give a lower bound for balanced drawing treewidth in rectilinear grid drawings. Consider the drawing given in Figure~\ref{fig:lowerBound}. Let $({\cal T},\beta,\alpha)$ be a balanced drawn tree decomposition of $d$ and let $v_r\in V_T$ be the root. Let $f$ be $R_{h,w}$, such that $d$ is strictly bounded by $R_{h,w}$, and let $c_r$ be the cutter associated with $v_r$. By Definition \ref{def:BalaDrawnDeco}, $|\mathsf{VerIn}(f_1(c_r))|\leq \frac{2}{3}|\mathsf{VerIn}(f)|$ and $|\mathsf{VerIn}(f_2(c_r))|\leq \frac{2}{3}|\mathsf{VerIn}(f)|$. Now, observe that, there are $\frac{1}{100}\cdot n$ cycles, each consists of five vertices, so there are $\frac{5}{100}\cdot n$ vertices in all the square cycles around the full grid square inside. So, $c_r$ must get inside the full grid square, and therefore $c_r$ intersects all of the square cycles. Thus, $\mathsf{width}({\cal T},\beta,\alpha)\geq \frac{1}{100}\cdot n= \Omega (n)$. Therefore, we have the following corollary:

 \begin{corollary}\label{lem:LbR}
There exists a connected graph $G$ and a planar rectilinear grid drawing $d$ of $G$, such that, for every balanced drawn tree decomposition $({\cal T},\beta,\alpha)$ of $d$, $\mathsf{width}({\cal T},\beta,\alpha)\geq \Omega (n)$.
\end{corollary}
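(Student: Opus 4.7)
The plan is to realise the example depicted in Figure~\ref{fig:lowerBound}: take a connected graph $G$ on $n$ vertices whose planar rectilinear grid drawing $d$ consists of $m := \lfloor n/100 \rfloor$ pairwise nested closed rectilinear polygonal cycles $C_1,\ldots,C_m$ (each on $5$ vertices, accounting for $5m$ vertices in total), drawn concentrically around a ``full grid square'' region $S$ filled by a grid-like subgraph on the remaining $n - 5m \geq \tfrac{19n}{20}$ vertices drawn strictly inside the innermost cycle $C_m$; a handful of extra axis-parallel edges link consecutive cycles and link $C_m$ to $S$ so that $G$ is connected. Write $R_{h,w}$ for the rectangular of $d$ and let $({\cal T},\beta,\alpha)$ be any balanced drawn tree decomposition of $d$, with root $v_r$, associated cutter $c_r$, and children $v_1,v_2$ such that $\alpha(v_i) = f_i(c_r)$.

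I would first prove that $c_r$ necessarily enters the strict interior of the innermost cycle $C_m$. Definition~\ref{def:BalaDrawnDeco} gives $|\mathsf{VerIn}(f_i(c_r))|\leq \tfrac{2}{3}|\mathsf{VerIn}(R_{h,w})| = \tfrac{2n}{3}$ for $i\in\{1,2\}$. If $c_r$ avoided the strict interior of $C_m$, then, since $C_m$ is a Jordan curve and $c_r$ is a connected arc disjoint from the interior of $C_m$, the entire filled region $S$ would lie in the interior of one of $f_1(c_r)$ or $f_2(c_r)$; that sub-frame would then strictly contain at least the $\tfrac{19n}{20}>\tfrac{2n}{3}$ vertices of $S$, contradicting the balance bound. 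Hence $c_r$ reaches some point $q$ in the strict interior of $C_m$.

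Next, since $c_r$ starts and ends on $R_{h,w}$ by the definition of a cutter (Definition~\ref{def:Cutter}), and the cycles $C_1,\ldots,C_m$ are Jordan curves disjoint from $R_{h,w}$, the path $c_r$ must cross each $C_i$ at least twice: once to leave the region outside $C_i$ on the way to $q$, and once to return to $R_{h,w}$. By placing the cycles at strictly increasing distances from the centre so that no two of them share a grid line and padding each cycle with at least two units of empty grid around it, every such crossing occurs at a distinct point $p$ lying in the interior of an edge $e$ of the corresponding $C_i$, with the cutter passing transversally through $e$. Applying Definition~\ref{def:IntPoint}, for each such $(p,e)$ the drawing of $e$ is strictly on one side of $f_1(c_r)$ on a short line segment emanating from $p$, so $(p,e)$ is a turning point of $e$ in $f_1(c_r)$. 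These turning points are distinct and number at least $2m$, so $\sizef(\alpha(v_1)) = \sizef(f_1(c_r)) \geq \tp(f_1(c_r)) \geq 2m = \Omega(n)$, whence $\mathsf{width}({\cal T},\beta,\alpha) \geq \Omega(n)$.

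The main obstacle I anticipate is the low-level geometric bookkeeping: ensuring that the construction rules out the axis-parallel cutter ``sliding along'' an edge of some $C_i$, in which case the crossing would not be transversal and the turning-point count would need the second clause of Definition~\ref{def:IntPoint} (or the vertex-drawn-on-frame term of Definition~\ref{def:sizeFr}). I would address this by prescribing the exact coordinates of the cycles so that no candidate straight sub-segment of a cutter can coincide with any edge of any $C_i$; under such a generic placement every passage of $c_r$ across a cycle contributes at least one unit to $\sizef(f_1(c_r))$, and the $\Omega(n)$ bound follows with no further case analysis.
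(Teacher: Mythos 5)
Your proposal is correct and follows essentially the same route as the paper: the same nested-cycles-around-a-dense-grid construction from Figure~\ref{fig:lowerBound}, the same use of the balance condition to force the root cutter into the innermost region, and the same conclusion that the cutter must cross all $\Omega(n)$ cycles, each crossing contributing to the cost of a child frame. Your write-up is in fact more careful than the paper's (the double-crossing count and the degenerate ``cutter slides along a cycle edge'' case are glossed over there), but the underlying argument is identical.
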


Now, we give an intuition for the existence of a balanced tree decomposition for grid drawings, with width bounded by $\OO(\sqrt{n})$. In particular, we show how to find the first cutter.
Let $G$ be a graph, let $h,w\in \mathbb{N}$ and let $d$ be a grid drawing of $G$ strictly bounded by $R_{h,w}$. First, assume that there are less than $\sqrt{n}$ rows (columns). Then, in every column (row) there are at most $\sqrt {n}$ vertices, so we can easily find a row (column) as a cutter, which is balanced (assuming $n$ is large enough). Otherwise, there are more than $\sqrt{n}$ columns, so there is a column with at most $\sqrt {n}$ vertices. We say that such a column is {\em nice}.

\smallskip\noindent{\bf Case 1.}  Assume that for every nice column there are more than $\frac{1}{3}\cdot n$ vertices to its right and more than $\frac{1}{3}\cdot n$ vertices to its left. Then, it is easy to see that every nice column is balanced. 

\smallskip\noindent{\bf Case 2.} Assume, that there is no nice column, which there are at most $\frac{1}{3}\cdot n$ vertices to its left (right), and there is a nice column which there are at most $\frac{1}{3}\cdot n$ vertices to its right (left). Let $c_1$ be the leftmost (rightmost) such a column. 

\smallskip\noindent{\bf Case 2.a} If there are more than $\sqrt{n}$ columns between $c_1$ and the left (right) edge of $R_{h,w}$, then there must be a nice column $c_2$ left (right) to $c_1$. Now, recall that we assumed that there are no nice columns where there are $\frac{1}{3}\cdot n$ vertices to its left (right), and $c_1$ is the leftmost (rightmost) nice column where there at most $\frac{1}{3}\cdot n$ vertices to its right (left). So, we get that there are more than $\frac{1}{3}\cdot n$ vertices to its right and more than $\frac{1}{3}\cdot n$ vertices to its left, so $c_2$ is balanced. 

\smallskip\noindent{\bf Case 2.b} Now, assume that there are less than $\sqrt{n}$ columns between $c_1$ and left (right) edge of $R_{h,w}$. Then, in every row between $c_1$ to left (right) edge of $R_{h,w}$ there are at most $\sqrt{n}$ vertices. Therefore, there must be a row $r$, such that a cutter that starts from $c_1$ and goes left to the left edge of $R_{h,w}$ is balanced. Observe that, this cutter is also with width bounded by $\sqrt{n}$.

\smallskip\noindent{\bf Case 3.} Now, assume that there is a nice column which there are at most $\frac{1}{3}\cdot n$ vertices to its right, and there exists one which there are at most $\frac{1}{3}\cdot n$ vertices to its left. Let $c_1$ be the rightmost nice column which there are at most $\frac{1}{3}\cdot n$ vertices to its left, and let $c_2$ be the leftmost nice column which there are at most $\frac{1}{3}\cdot n$ vertices to its right. Observe that, $c_1$ is left to $c_2$. 

\smallskip\noindent{\bf Case 3.a} Now, if there are more than $\sqrt{n}$ columns between $c_1$ and $c_2$, then there must be a nice column $c$ between them. Therefore, by our assumption regarding the choices of $c_1$ and $c_2$, we get that there are more than $\sqrt{n}$ vertices left to $c$, and more than $\sqrt{n}$ right to $c$. So, $c$ is balanced. 

\smallskip\noindent{\bf Case 3.b} Otherwise, there are at most $\sqrt{n}$ columns between $c_1$ and $c_2$, and this case is similar to Case 2.b.

\section{Conclusion}\label{sec:conclusion}
In this paper, we introduced the concept of \emph{drawn tree decomposition}, a new form of tree decomposition based on the drawing of the graph, and provided a general scheme to apply it to several graph drawing problems. Due to its nature, our work also opens up various research directions, some of which we list below.
\begin{itemize}
	\item Our concept of drawn tree decomposition takes into account only polyline grid drawings of a graph, which are in particular, drawings in which the vertices are placed on grid points. An immediate question is to further extend the definition of drawn tree decomposition so as to more general drawings of the graph, e.g., by removing the restriction of placing vertices to grid points.
	\item It would be interesting to explore whether drawn tree decompositions have some of the nice properties of tree decompositions such as duality theorems, which can be exploited in contexts of Bidimensionality, logic and linear-time recognition for graphs of bounded width.
	\item Our scheme provides running times which are XP with respect to the drawn treewidth of the graph. So, a natural research direction is to either prove that such running times are optimal for some of the problems, or to modify or improve upon our scheme so as to get FPT algorithms with respect to the drawn treewidth of the graph for some of the problems.
	\item We gave an upper bound and a lower bound for drawn treewidth for orthogonal grid drawing and rectilinear grid drawing, respectively. So, another interesting research direction is to study upper bounds and lower bounds for other types of drawings.
\end{itemize}

\newpage

\bibliography{RefsSurvey,RefsSid}
	
\end{document}